\newtcolorbox[use counter=algorithm, number within=section]{BreakableAlgorithm}[2][]{%
  enhanced,
  breakable,
  colback=white,
  colframe=black,
  boxrule=0.8pt,
  arc=2pt,
  left=8pt,right=8pt,top=8pt,bottom=8pt,
  attach title to upper,
  title={Algorithm~\thetcbcounter: #2},
  title after break={Algorithm~\thetcbcounter\ (continued): #2},
  fonttitle=\bfseries,
  colbacktitle=white,
  coltitle=black,
  titlerule=0.8pt,
  borderline north={1pt}{0pt}{black},
  borderline south={1pt}{0pt}{black},
  #1
}
\algrenewcommand\alglinenumber[1]{\scriptsize\color{gray}{#1}}
\algrenewcommand{\algorithmiccomment}[1]{\hfill\(\triangleright\)\ #1}
\newcommand{\algsep}{\Statex \rule{\linewidth}{0.6pt}}
\definecolor{colorx1}{rgb}{1.00, 0.00, 0.00} 
\definecolor{colorx2}{rgb}{1.00, 0.75, 0.00} 
\definecolor{colorx3}{rgb}{0.50, 1.00, 0.00} 
\definecolor{colorx4}{rgb}{0.00, 1.00, 0.50} 
\definecolor{colorx5}{rgb}{0.00, 1.00, 1.00} 
\definecolor{colorx6}{rgb}{0.00, 0.50, 1.00} 
\definecolor{colorx7}{rgb}{0.50, 0.00, 1.00} 
\definecolor{colorx8}{rgb}{1.00, 0.00, 1.00} 
\definecolor{codegray}{rgb}{0.5,0.5,0.5}
\definecolor{codebg}{rgb}{0.95,0.95,0.95}
\definecolor{keywordcolor}{rgb}{0.0, 0.0, 0.6}     
\definecolor{identifiercolor}{rgb}{0.5, 0.0, 0.5}  
\definecolor{stringcolor}{rgb}{0.58,0,0.82}        
\definecolor{bgcolor}{rgb}{0.95,0.95,0.95}         
\definecolor{codebg}{rgb}{0.98, 0.98, 0.95}
\definecolor{codebase}{rgb}{0.25, 0.25, 0.28}
\definecolor{codecomment}{rgb}{0.45, 0.50, 0.53}
\definecolor{codekeyword}{rgb}{0.82, 0.37, 0.37}
\definecolor{codestring}{rgb}{0.37, 0.55, 0.29}
\definecolor{codefunc}{rgb}{0.48, 0.42, 0.93}
\definecolor{codetype}{rgb}{0.76, 0.47, 0.96}
\definecolor{codenumber}{rgb}{0.92, 0.47, 0.39}
\definecolor{codegray}{rgb}{0.60, 0.60, 0.60}
\definecolor{darkpastelpurple}{rgb}{0.59, 0.44, 0.84}
\definecolor{darkerpastelpurple}{rgb}{0.47, 0.34, 0.66}
\footnotesize\color{codebase},
\itshape\color{codecomment},
\bfseries\color{keywordcolor},
\tiny\color{codegray},
\lstdefinelanguage{cutedsl}{
    language=Python,
    morekeywords={},                     
    morekeywords=[2]{},  
    keywordstyle=\color{keywordcolor}\bfseries,
    morekeywords=[2]{cute, tract}
    keywordstyle=[2]\color{red},
    morestring=[b]',                        
    morestring=[b]",                        
    stringstyle=\color{stringcolor},
    showstringspaces=false,
    basicstyle=\ttfamily\footnotesize,
    backgroundcolor=\color{bgcolor},
    frame=single,
    breaklines=true,
    columns=flexible
}
\newtheorem{theorem}{Theorem}[subsection]
\newtheorem{lemma}[theorem]{Lemma}
\newtheorem{corollary}[theorem]{Corollary}
\newtheorem{proposition}[theorem]{Proposition}
\newtheorem{mainthm}{Theorem} 
\theoremstyle{definition}
\newtheorem{definition}[theorem]{Definition}
\newtheorem{example}[theorem]{Example}
\newtheorem{construction}[theorem]{Construction}
\newtheorem{observation}[theorem]{Observation}
\newtheorem{notation}[theorem]{Notation}
\theoremstyle{remark}
\newtheorem{remark}[theorem]{Remark}
\newtheorem{aside}[theorem]{Aside}
\newcommand{\m}{\langle m \rangle}
\newcommand{\n}{\langle n \rangle}
\newcommand{\mathterm}[1]{\mathsf{#1}} 
\newcommand{\sort}{\mathterm{sort}}
\newcommand{\coalesce}{\mathterm{coal}}
\newcommand{\coal}{\coalesce}
\renewcommand{\complement}{\mathterm{complement}}
\newcommand{\comp}{\mathterm{comp}}
\newcommand{\flatten}{\mathterm{flat}}
\newcommand{\cosize}{\mathterm{cosize}}
\newcommand{\size}{\mathterm{size}}
\newcommand{\len}{\mathterm{len}}
\newcommand{\rank}{\mathterm{rank}}
\newcommand{\depth}{\mathterm{depth}}
\newcommand{\mode}{\mathterm{mode}}
\newcommand{\entry}{\mathterm{entry}}
\newcommand{\sub}{\mathterm{sub}}
\newcommand{\shape}{\mathterm{shape}}
\newcommand{\stride}{\mathterm{stride}}
\newcommand{\colex}{\mathterm{colex}}
\newcommand{\squeeze}{\mathterm{squeeze}}
\newcommand{\Image}{\mathterm{Image}}
\newcommand{\codomain}{\mathterm{codomain}}
\newcommand{\domain}{\mathterm{domain}}
\newcommand{\complexity}{\mathterm{complexity}}
\newcommand{\id}{\mathterm{id}}
\newcommand{\incl}{\mathterm{incl}}
\newcommand{\Hom}{\mathterm{Hom}}
\renewcommand{\max}{\mathterm{max}}
\newcommand{\op}{\mathterm{op}}
\newcommand{\ob}{\mathterm{ob}}
\newcommand{\filter}{\mathterm{filter}}
\newcommand{\profile}{\mathterm{prof}}
\newcommand{\mor}{\mathterm{mor}}
\newcommand{\row}{\mathterm{row}}
\newcommand{\col}{\mathterm{col}}
\newcommand{\tiled}{\mathterm{tiled}}
\newcommand{\catstyle}[1]{{\boldsymbol{\mathsf{#1}}}}
\newcommand{\Fin}{\catstyle{Fin}}
\newcommand{\FinSet}{\catstyle{FinSet}}
\newcommand{\Tuple}{\catstyle{Tuple}}
\definecolor{amethyst}{rgb}{0.6, 0.4, 0.8}
\title{ {\Huge Categorical Foundations for CuTe Layouts} }
\author{Jack Carlisle \and Jay Shah \and Reuben Stern \and Paul VanKoughnett }
\date{%
  \begin{tabular}{c}
    Colfax Research\\[0.5em]
    \texttt{research@colfax-intl.com}\\[0.5em]
    January 2026
  \end{tabular}%
}
\begin{document}

\maketitle

\begin{abstract}
    NVIDIA's CUTLASS library provides a robust and expressive set of methods for describing and manipulating multi-dimensional tensor data on the GPU. These methods are conceptually grounded in the abstract notion of a CuTe layout and a rich algebra of such layouts, including operations such as composition, logical product, and logical division. In this paper, we present a categorical framework for understanding this layout algebra by focusing on a naturally occurring class of {\it tractable layouts}. To this end, we define two categories $\catstyle{Tuple}$ and $\catstyle{Nest}$ whose morphisms give rise to layouts. We define a suite of operations on morphisms in these categories and prove their compatibility with the corresponding layout operations. Moreover, we give a complete characterization of the layouts which arise from our construction. Finally, we provide a Python implementation of our categorical constructions, along with tests that demonstrate alignment with CUTLASS behavior. This implementation can be found at our git repository \url{https://github.com/ColfaxResearch/layout-categories}.
\end{abstract}

\newpage

\setcounter{tocdepth}{2}
\tableofcontents

\newpage 

\chapter{Introduction} 
In modern computing, particularly in GPU programming, performance depends critically on how multi-dimensional data is stored and accessed in memory. While most data that we care about—such as images, videos, and tensors in machine learning—are inherently multi-dimensional, a computer's memory is fundamentally one-dimensional. This means that when we want to load, store, or otherwise manipulate data, we need to map its multi-dimensional logical coordinates to one-dimensional physical coordinates. This mapping, known as a {\bf layout}, is essential for reading from and writing to memory correctly and efficiently. Moreover, with respect to the GPU's SIMT execution model, layouts are used to describe and manipulate partitionings of threads over data. This is important to ensure optimized memory access patterns and correct invocation of specialized hardware instructions such as those used to target tensor cores. 

As a motivating example, suppose we want to store the $4 \times 8$ matrix
\[
A = 
\begin{bmatrix}
12.47 & 87.21 & 34.08 & 56.93 & 45.65 &  9.17 & 73.02 & 21.39 \\
64.88 & 30.41 &  1.72 & 88.04 & 92.55 & 17.06 & 50.91 & 68.77 \\
 3.33 & 77.19 & 61.58 & 29.46 & 15.82 & 80.75 & 44.62 & 39.28 \\
91.40 & 26.12 &  6.97 & 53.03 & 58.66 & 33.79 & 11.20 & 70.55
\end{bmatrix}
\]
in memory. In order to do so, we need to specify a memory address for each entry of $A$. We do this by choosing some address for the $(0,0)$th entry of $A$, and specifying an {\bf offset} for each other entry of $A$. One common choice is the {\bf row-major} layout\\ 

\begin{centering}

\begin{tikzpicture}[x={(0cm,-1cm)},y={(1cm,0cm)},every node/.style={minimum size=1cm, outer sep=0pt}]

\node[fill=gray!20] at (0,0) {0};
\node[fill=gray!20] at (0,1) {1};
\node[fill=gray!20] at (0,2) {2};
\node[fill=gray!20] at (0,3) {3};
\node[fill=gray!20] at (0,4) {4};
\node[fill=gray!20] at (0,5) {5};
\node[fill=gray!20] at (0,6) {6};
\node[fill=gray!20] at (0,7) {7};
\node[fill=gray!20] at (1,0) {8};
\node[fill=gray!20] at (1,1) {9};
\node[fill=gray!20] at (1,2) {10};
\node[fill=gray!20] at (1,3) {11};
\node[fill=gray!20] at (1,4) {12};
\node[fill=gray!20] at (1,5) {13};
\node[fill=gray!20] at (1,6) {14};
\node[fill=gray!20] at (1,7) {15};
\node[fill=gray!20] at (2,0) {16};
\node[fill=gray!20] at (2,1) {17};
\node[fill=gray!20] at (2,2) {18};
\node[fill=gray!20] at (2,3) {19};
\node[fill=gray!20] at (2,4) {20};
\node[fill=gray!20] at (2,5) {21};
\node[fill=gray!20] at (2,6) {22};
\node[fill=gray!20] at (2,7) {23};
\node[fill=gray!20] at (3,0) {24};
\node[fill=gray!20] at (3,1) {25};
\node[fill=gray!20] at (3,2) {26};
\node[fill=gray!20] at (3,3) {27};
\node[fill=gray!20] at (3,4) {28};
\node[fill=gray!20] at (3,5) {29};
\node[fill=gray!20] at (3,6) {30};
\node[fill=gray!20] at (3,7) {31};
\draw[color=black,thick,shift={(-0.5,-0.5)}] (0,0) grid (4,8);

\node[anchor=east] at (1.5,-1) {$L^\row = (4,8):(8,1) =$};
\end{tikzpicture}

\end{centering}

\bigskip

\noindent The notation $L^\row = (4,8):(8,1)$ indicates that the offset of the $(i,j)$th entry of our matrix is 
\[
(i,j) \cdot (8,1) = 8i + j. 
\]

\noindent Another common choice is the {\bf column-major} layout \\

\begin{centering}

\begin{tikzpicture}[x={(0cm,-1cm)},y={(1cm,0cm)},every node/.style={minimum size=1cm, outer sep=0pt}]

\node[fill=gray!20] at (0,0) {0};
\node[fill=gray!20] at (0,1) {4};
\node[fill=gray!20] at (0,2) {8};
\node[fill=gray!20] at (0,3) {12};
\node[fill=gray!20] at (0,4) {16};
\node[fill=gray!20] at (0,5) {20};
\node[fill=gray!20] at (0,6) {24};
\node[fill=gray!20] at (0,7) {28};
\node[fill=gray!20] at (1,0) {1};
\node[fill=gray!20] at (1,1) {5};
\node[fill=gray!20] at (1,2) {9};
\node[fill=gray!20] at (1,3) {13};
\node[fill=gray!20] at (1,4) {17};
\node[fill=gray!20] at (1,5) {21};
\node[fill=gray!20] at (1,6) {25};
\node[fill=gray!20] at (1,7) {29};
\node[fill=gray!20] at (2,0) {2};
\node[fill=gray!20] at (2,1) {6};
\node[fill=gray!20] at (2,2) {10};
\node[fill=gray!20] at (2,3) {14};
\node[fill=gray!20] at (2,4) {18};
\node[fill=gray!20] at (2,5) {22};
\node[fill=gray!20] at (2,6) {26};
\node[fill=gray!20] at (2,7) {30};
\node[fill=gray!20] at (3,0) {3};
\node[fill=gray!20] at (3,1) {7};
\node[fill=gray!20] at (3,2) {11};
\node[fill=gray!20] at (3,3) {15};
\node[fill=gray!20] at (3,4) {19};
\node[fill=gray!20] at (3,5) {23};
\node[fill=gray!20] at (3,6) {27};
\node[fill=gray!20] at (3,7) {31};
\draw[color=black,thick,shift={(-0.5,-0.5)}] (0,0) grid (4,8);

\node[anchor=east] at (1.5,-1) {$L^\col = (4,8):(1,4) = $};
\end{tikzpicture}

\end{centering} 
\bigskip

\noindent Again, the notation $L^\col = (4,8):(1,4)$ indicates that the offset of the $(i,j)$th entry of our matrix is 
\[
(i,j) \cdot (1,4) = i + 4j.
\]

\noindent These layouts are extremely useful, but do not suffice for all purposes. For example, in high-performance computing, one often computes matrix products $AB$ by
\begin{enumerate}
    \item dividing the operand matrices $A$ and $B$ into tiles,
    \item computing matrix products of the various tiles, and 
    \item combining these partial results to obtain the full result $AB$.
\end{enumerate}
For instance, we could divide our $4 \times 8$ matrix $A$ into $2 \times 2$ tiles, as depicted below.
\[
A = \left[\begin{array}{cccc}
\begin{bmatrix}
12.47 & 87.21 \\
64.88 & 30.41 
\end{bmatrix}
&
\begin{bmatrix} 
34.08 & 56.93 \\
1.72 & 88.04 
\end{bmatrix}
&
\begin{bmatrix}
45.65 &  9.17 \\
92.55 & 17.06
\end{bmatrix}
&
\begin{bmatrix} 
73.02 & 21.39 \\
50.91 & 68.77
\end{bmatrix} 
\\[2ex]
\begin{bmatrix}
 3.33 & 77.19 \\
 91.40 & 26.12
\end{bmatrix} 
&
\begin{bmatrix} 
 61.58 & 29.46 \\
6.97 & 53.03 
\end{bmatrix}
&
\begin{bmatrix} 
 15.82 & 80.75 \\
58.66 & 33.79
\end{bmatrix} 
&
\begin{bmatrix} 
 44.62 & 39.28 \\
11.20 & 70.55
\end{bmatrix} 
\end{array}\right]
\]

\noindent Suppose now that we wanted to slice out individual tiles of $A$, which we assume is laid out in column-major format in memory. To do this, one could manually compute offsets as follows: for the $(i,j)$th tile, the offset to index into the top-left entry of the tile is given by $2i + 8j$. On the other hand, to better organize this computation, we could use the {\bf interleaved} layout of tiles
\\

\begin{centering}

\begin{tikzpicture}[x={(0cm,-1cm)},y={(1cm,0cm)},every node/.style={minimum size=1cm, outer sep=0pt}]

\node[fill=gray!20]       at (0,0) {0};
\node[fill=gray!20]       at (0,1) {2};
\node[fill=gray!20]      at (0,2) {8};
\node[fill=gray!20]      at (0,3) {10};
\node[fill=gray!20]  at (0,4) {16};
\node[fill=gray!20]  at (0,5) {18};
\node[fill=gray!20]        at (0,6) {24};
\node[fill=gray!20]        at (0,7) {26};

\node[fill=gray!20]       at (1,0) {1};
\node[fill=gray!20]       at (1,1) {3};
\node[fill=gray!20]      at (1,2) {9};
\node[fill=gray!20]      at (1,3) {11};
\node[fill=gray!20]  at (1,4) {17};
\node[fill=gray!20]  at (1,5) {19};
\node[fill=gray!20]        at (1,6) {25};
\node[fill=gray!20]        at (1,7) {27};

\node[fill=gray!20]       at (2,0) {4};
\node[fill=gray!20]       at (2,1) {6};
\node[fill=gray!20]      at (2,2) {12};
\node[fill=gray!20]      at (2,3) {14};
\node[fill=gray!20]  at (2,4) {20};
\node[fill=gray!20]  at (2,5) {22};
\node[fill=gray!20]        at (2,6) {28};
\node[fill=gray!20]        at (2,7) {30};

\node[fill=gray!20]       at (3,0) {5};
\node[fill=gray!20]       at (3,1) {7};
\node[fill=gray!20]      at (3,2) {13};
\node[fill=gray!20]      at (3,3) {15};
\node[fill=gray!20]  at (3,4) {21};
\node[fill=gray!20]  at (3,5) {23};
\node[fill=gray!20]        at (3,6) {29};
\node[fill=gray!20]        at (3,7) {31};
\draw[color=black,thick,shift={(-0.5,-0.5)}] (0,0) grid (4,8);

\node[anchor=east] at (1.5,-1) {$L^\tiled =$};

\end{tikzpicture}

\end{centering}

\bigskip

\noindent where the columns are given by tiles of $A$ and the rows are given by coordinates within the tile shape. Here, we use {\bf colexicographic ordering} to linearly enumerate tiles and coordinates within tiles, hence the top-level shape $(4, 8)$ of the layout $L^\tiled$.

However, note that the interleaving pattern shown for $L^\tiled$ means that it can't be expressed as a layout $(4,8):(a,b)$ for any strides $a,b$. Instead, we can factor the modes of the shape $(4,8)$ and define 
\[ L^\tiled = ((2,2),(2,4)):((1,4),(2,8)). \]

\noindent The prior offset calculation $2i + 8j$ then appears through evaluating $L^\tiled$ on the coordinate $(0, (i,j))$, and the tile layout itself is given by the first mode. Thus, after endowing $A$ with the layout $L^\tiled$ to form $A^\tiled$, we can obtain the $(i,j)$th tile of $A$ as the slice

\[ A_{i,j} = A^\tiled(\; \rule{.7em}{0.4pt} \; , (i,j)). \]

A key idea developed in CUTLASS is that useful but more complex auxiliary layouts such as $L^\tiled$ may be systematically deduced from simpler layouts via certain fundamental operations. In the case of $L^\tiled$, the operation in question is called {\bf logical division}. If we write
\\

\begin{centering}

\begin{tikzpicture}[x={(0cm,-1cm)},y={(1cm,0cm)},every node/.style={minimum size=1cm, outer sep=0pt}]
\node[fill=gray!20]       at (0,0) {0};
\node[fill=gray!20]       at (1,0) {1};
\node[fill=gray!20]      at (0,1) {4};
\node[fill=gray!20]      at (1,1) {5};
\draw[color=black,thick,shift={(-0.5,-0.5)}] (0,0) grid (2,2);

\node[anchor=east] at (0.5,-1) {$T = (2,2):(1,4) =$};
\end{tikzpicture}

\end{centering}

\bigskip

\noindent for the tile layout, then $L^\tiled$ is the logical division

\[
L^\tiled = L^\col \oslash T
\]
as depicted below.\\

\begin{centering}
\begin{tikzpicture}[x={(0cm,-1cm)},y={(1cm,0cm)},every node/.style={minimum size=1cm, outer sep=0pt}]

\node[fill=colorx1!60] at (0,0) {0};
\node[fill=colorx1!60] at (0,1) {4};
\node[fill=colorx3!60] at (0,2) {8};
\node[fill=colorx3!60] at (0,3) {12};
\node[fill=colorx5!60] at (0,4) {16};
\node[fill=colorx5!60] at (0,5) {20};
\node[fill=colorx7!60] at (0,6) {24};
\node[fill=colorx7!60] at (0,7) {28};
\node[fill=colorx1!60] at (1,0) {1};
\node[fill=colorx1!60] at (1,1) {5};
\node[fill=colorx3!60] at (1,2) {9};
\node[fill=colorx3!60] at (1,3) {13};
\node[fill=colorx5!60] at (1,4) {17};
\node[fill=colorx5!60] at (1,5) {21};
\node[fill=colorx7!60] at (1,6) {25};
\node[fill=colorx7!60] at (1,7) {29};
\node[fill=colorx2!60] at (2,0) {2};
\node[fill=colorx2!60] at (2,1) {6};
\node[fill=colorx4!60] at (2,2) {10};
\node[fill=colorx4!60] at (2,3) {14};
\node[fill=colorx6!60] at (2,4) {18};
\node[fill=colorx6!60] at (2,5) {22};
\node[fill=colorx8!60] at (2,6) {26};
\node[fill=colorx8!60] at (2,7) {30};
\node[fill=colorx2!60] at (3,0) {3};
\node[fill=colorx2!60] at (3,1) {7};
\node[fill=colorx4!60] at (3,2) {11};
\node[fill=colorx4!60] at (3,3) {15};
\node[fill=colorx6!60] at (3,4) {19};
\node[fill=colorx6!60] at (3,5) {23};
\node[fill=colorx8!60] at (3,6) {27};
\node[fill=colorx8!60] at (3,7) {31};
\draw[color=black,thick,shift={(-0.5,-0.5)}] (0,0) grid (4,8);

\node[anchor = east] at (1.5,-1) {$L^\col = $ };
\node[anchor = east] at (1.5,9.5) {$T = $};

\node[fill=gray!5] at (1,10) {0};
\node[fill=gray!20] at (2,10) {1};
\node[fill=gray!40] at (1,11) {4};
\node[fill=gray!60] at (2,11) {5};
\draw[color=black,thick,shift={(-0.5,-0.5)}] (1,10) grid (3,12);
\end{tikzpicture}

\end{centering}

\vspace{0.2in}

\begin{centering}
\begin{tikzpicture}[x={(0cm,-1cm)},y={(1cm,0cm)},every node/.style={minimum size=1cm, outer sep=0pt}]

\node[fill=colorx1!10]       at (0,0) {0};
\node[fill=colorx2!10]       at (0,1) {2};
\node[fill=colorx3!10]      at (0,2) {8};
\node[fill=colorx4!10]      at (0,3) {10};
\node[fill=colorx5!10]  at (0,4) {16};
\node[fill=colorx6!10]  at (0,5) {18};
\node[fill=colorx7!10]        at (0,6) {24};
\node[fill=colorx8!10]        at (0,7) {26};

\node[fill=colorx1!25]       at (1,0) {1};
\node[fill=colorx2!25]       at (1,1) {3};
\node[fill=colorx3!25]      at (1,2) {9};
\node[fill=colorx4!25]      at (1,3) {11};
\node[fill=colorx5!25]  at (1,4) {17};
\node[fill=colorx6!25]  at (1,5) {19};
\node[fill=colorx7!25]        at (1,6) {25};
\node[fill=colorx8!25]        at (1,7) {27};

\node[fill=colorx1!40]       at (2,0) {4};
\node[fill=colorx2!40]       at (2,1) {6};
\node[fill=colorx3!40]      at (2,2) {12};
\node[fill=colorx4!40]      at (2,3) {14};
\node[fill=colorx5!40]  at (2,4) {20};
\node[fill=colorx6!40]  at (2,5) {22};
\node[fill=colorx7!40]        at (2,6) {28};
\node[fill=colorx8!40]        at (2,7) {30};

\node[fill=colorx1!60]       at (3,0) {5};
\node[fill=colorx2!60]       at (3,1) {7};
\node[fill=colorx3!60]      at (3,2) {13};
\node[fill=colorx4!60]      at (3,3) {15};
\node[fill=colorx5!60]  at (3,4) {21};
\node[fill=colorx6!60]  at (3,5) {23};
\node[fill=colorx7!60]       at (3,6) {29};
\node[fill=colorx8!60]       at (3,7) {31};
\draw[color=black,thick,shift={(-0.5,-0.5)}] (0,0) grid (4,8);

\node[anchor=east] at (1.5,-1) {$ L^\col \oslash T = $};

\end{tikzpicture}

\end{centering}

\bigskip

In addition to logical division, other fundamental layout operations include {\bf logical products}, {\bf complements}, and most importantly, {\bf composition}. These layout operations are the backbone of CUTLASS, and a deep understanding of their behavior is helpful for writing correct and highly performant code. However, the definitions and constructions of these operations are fairly subtle. For example, the composition $B \circ A$ of layouts $A$ and $B$ is well-defined only if $A$ and $B$ satisfy certain divisibility constraints, which CUTLASS checks under the hood. In particular, it is not always obvious when two layouts are composable, or how to interpret their composition.




\section{Summary of main results}

The main idea of this work is that we can develop an intuitive and powerful mathematical framework for working with layouts by restricting our attention to {\bf tractable layouts}, whose entries satisfy a simple divisibility condition (see Definition \ref{definitionoftractablelayouts}). Tractable layouts include almost all layouts one encounters in practice, such as
\begin{itemize}
\item {\it row-major} and {\it column-major} layouts, which are ubiquitous,
\item {\it compact} layouts, which store data in consecutive memory addresses,
\item {\it projections}, which broadcast multiple copies of data, and
\item {\it dilations}, which enable padded loads and stores.
\end{itemize}
If $L$ is a tractable layout, then we can represent $L$ with a {\bf diagram}. For example, the layouts $L^\row$, $L^\col$, and $L^\tiled$ are represented by the following diagrams.

\[\begin{tikzcd}[row sep = 1, column sep = 12]
  & & & &  & & 8 \ar[drr,mapsto] & & 4 \\
 (4,8):(8,1) & & \leftrightsquigarrow & &   & & 4 \ar[urr,mapsto] & & 8 \\
  & & & &  & &   & &  \\
 & & & &   & &   & &   \\[2em]
 & & & &   & &   & &  \\
 & & & &   & & 8 \ar[rr,mapsto] & & 8  \\
(4,8):(1,4)& & \leftrightsquigarrow& &   & & 4 \ar[rr,mapsto] & & 4  \\
  & & & &  & &   & &    \\
 & & & &   & &   & &   \\[2em]
 & & & &   & &   & &    \\
 & & & &  & & 4 \ar[rr,mapsto] & & 4  \\
& & & & 8 \ar[rr,-] \ar[urr,-] & & 2 \ar[drr,mapsto] & & 2 & & & & \\
 ((2,2),(2,4)):((1,4),(2,8)) & & \leftrightsquigarrow & &   & & 2 \ar[urr,mapsto] & & 2 \\
& & & & 4 \ar[urr,-] \ar[rr,-]  & & 2 \ar[rr,mapsto] & & 2  \\
& & & &   & &   & &   \\
\end{tikzcd}\]

\noindent These diagrams may be interpreted as morphisms in a {\bf category}. This allows us to leverage the power of {\bf category theory} to describe layouts and their operations.\footnote{We provide a primer on category theory in Appendix \ref{categorytheoryappendix} for those unfamiliar with the subject.}

More precisely, we define a category $\catstyle{Nest}$ whose objects are nested tuples of positive integers, and whose morphisms $f:S \to T$ correspond to diagrams such as those above (see Definition \ref{definitionofTuple} and Definition \ref{definitionofNest} for details). If $L$ is a {\bf non-degenerate} tractable layout (see Definition \ref{definitionofnondegeneratelayout}), then there is an essentially unique $\catstyle{Nest}$-morphism $f$ which encodes $L$, as illustrated by the following correspondence theorem. 

\begin{mainthm}(see \ref{nestedonetoonecorrespondence})
There is a one-to-one correspondence
    \[ \begin{tikzcd} 
    \begin{Bmatrix}
    \text{Non-degenerate}\\
        \text{tractable layouts}
    \end{Bmatrix} 
    \ar[r] & \ar[l] 
        \begin{Bmatrix}
        \text{Non-degenerate}\\
        \catstyle{Nest}\text{-morphisms}\\
        \text{ of standard form}
    \end{Bmatrix}
    \end{tikzcd} 
    \]
\end{mainthm}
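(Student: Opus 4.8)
The plan is to construct the two maps of the claimed bijection and check that they are mutually inverse. In one direction, given a non-degenerate tractable layout $L$ — say with shape $S_0$ a tuple of positive integers and stride tuple matching it — I would produce a $\catstyle{Nest}$-morphism $f_L$ by reading off the diagram associated to $L$, exactly as in the worked examples: the tractability hypothesis guarantees that the entries (the nonzero products of shape modes against strides) are totally ordered by divisibility along each branch, which is precisely what lets the arrows of the diagram be assembled into a well-defined morphism in $\catstyle{Nest}$ rather than just a set map. I would then argue that after normalizing (sorting modes by their strides, coalescing adjacent modes with compatible strides, handling the "broadcast"/zero-stride modes as the degenerate targets permitted in the standard form), the resulting morphism is in \emph{standard form} and is non-degenerate in the sense of the earlier definition. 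In the other direction, given a non-degenerate $\catstyle{Nest}$-morphism $f$ of standard form, I would define $L_f$ by evaluating $f$ on coordinates: the source nested tuple gives the shape, and the images of the "atomic" generators of the source under $f$ give the strides, reconstructing the colexicographic offset formula $(i_1,\dots,i_k)\cdot(\text{strides})$.

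The main steps, in order, are: (1) make precise the passage $L \mapsto \text{diagram} \mapsto f_L$ and verify it lands in $\catstyle{Nest}$ (this is where tractability is used essentially); (2) verify $f_L$ is of standard form and non-degenerate, using the normalization conventions ($\sort$, $\coal$, etc.) already set up in the paper; (3) define $f \mapsto L_f$ and check $L_f$ is tractable and non-degenerate — here non-degeneracy of $f$ should translate directly into non-degeneracy of $L_f$, and standard form should force tractability of the entries; (4) check $L_{f_L} = L$ by evaluating both sides on an arbitrary coordinate and comparing offsets; (5) check $f_{L_f} = f$ (or rather, equals $f$ up to the "essentially unique" wiggle room — i.e. equality on the nose once standard form pins down the representative). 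Steps (4) and (5) are the two "mutually inverse" verifications and should be mostly bookkeeping once the two constructions are cleanly stated.

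I expect the main obstacle to be step (2), and more generally nailing down what "standard form" must mean so that the correspondence is genuinely one-to-one rather than one-to-finitely-many. A raw tractable layout can have its modes listed in many orders and can be refined or coalesced in several ways, all representing the same function; dually, many $\catstyle{Nest}$-morphisms have the same "underlying layout." The standard-form conditions (presumably: modes sorted by stride, no coalescible adjacent pair, a fixed convention for placing and grouping zero-stride/degenerate modes, and the minimal nesting depth) are exactly what collapse each equivalence class to a unique representative, and the crux of the proof is showing (a) the normalization procedure applied to $f_L$ always terminates at such a representative, and (b) two standard-form morphisms with equal underlying layout are literally equal. Establishing (b) is the heart of the injectivity half and is where I would spend the most care: I would argue by induction on the depth of the nested tuple, peeling off the lowest-stride mode (which standard form makes canonical) and appealing to the inductive hypothesis on the remainder. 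The non-degeneracy hypothesis is what rules out the pathological layouts (e.g. with repeated strides or with strides that don't interact properly with the shape) for which no standard form — or no unique one — exists, so I would keep track throughout of exactly where it is invoked.
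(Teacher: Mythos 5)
Your overall architecture --- two explicit constructions, two round-trip identities, with the crux being that standard form pins down a unique representative --- matches the paper's. But your conception of what ``standard form'' normalizes is wrong in a way that would break the bijection. In the paper (Definitions \ref{definitionofstandardform} and \ref{definitionofnestedstandardform}), standard form is a condition on the \emph{codomain} of the morphism and the pointed map $\alpha$ (the top codomain entry must be hit, and any unhit entry $t_j$ with $j<n$ must satisfy $t_j \neq 1$ and be immediately followed by a hit entry; for $\catstyle{Nest}$ the codomain must additionally be flat); it places \emph{no} constraint on the domain, which must remain literally $\shape(L)$. You propose to ``normalize'' by sorting modes by stride and coalescing adjacent compatible modes, and you describe the theorem as collapsing an equivalence class of layouts (``listed in many orders \ldots coalesced in several ways'') to a representative. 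That is a different theorem: the statement is a bijection with layouts as syntactic objects $S{:}D$, not with layout functions, so $(7,7){:}(1,7)$ and $(49){:}(1)$ must correspond to \emph{different} morphisms. If your $f_L$ sorts or coalesces the domain, then $L_{f_L}$ recovers $\sort(L)$ or $\coalesce(L)$ rather than $L$, and your step (4) fails. In the paper's Construction \ref{tuplemorphismfromlayout} the sorting permutation $\sigma$ is used only to build the codomain; the domain stays equal to $\shape(L)$ and $\sigma^{-1}$ is absorbed into $\alpha$.

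Relatedly, your injectivity argument (step (5)/(b)) is aimed at the wrong object. What must be shown unique, given the stride data $(d_1,\dots,d_m)$, is the codomain tuple $T$ and the map $\alpha$; ``peeling off the lowest-stride mode'' by induction on the depth of the nested domain does not address this, and the domain need not be sorted in any case. The paper's Lemma \ref{nondegeneratestandardformtuplemorphisms} proves uniqueness by showing that the set of prefix products of the codomain equals $\{d_i,\ s_id_i \mid d_i \neq 0\}$ --- this is exactly where standard form is used, since it guarantees every unhit codomain index is isolated and flanked by hit ones --- and that non-degeneracy plus standard form forces every codomain entry to exceed $1$, so the prefix products are strictly increasing and determine $T$ on the nose; $\alpha$ is then forced by matching each $d_i$ to a prefix product. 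You would need an argument of this shape; the induction you sketch does not obviously supply it.
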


Layout operations such as composition, logical division, and logical products may be interpreted naturally in the category $\catstyle{Nest}$. If 
\[
\begin{tikzcd}
    S \ar[r,"f"]  & T \ar[r,"g"]  & U
\end{tikzcd}
\]
are $\catstyle{Nest}$-morphisms, then we may form the composite 
\[\begin{tikzcd}
S \ar[r,"g \circ f"] &  U
\end{tikzcd} \]
by pasting the associated diagrams together. For example,
\[ \begin{tikzcd}[row sep = 1, column sep = 8]
 & &   & & 2 & & & & & & 2\\
2 \ar[drr,mapsto]  & & 2 \ar[urr,mapsto] & & 2 & & \rightsquigarrow & & 2 \ar[rr,mapsto]  & & 2\\
2 \ar[urr,mapsto] & & 2 \ar[urr,mapsto] & & 5 & & & & 2 \ar[uurr,mapsto] & & 5\\
 & f & & g & & & & & & \mathclap{g \circ f} & 
\end{tikzcd}\]
We prove that composition in $\catstyle{Nest}$ is compatible with layout composition.
\begin{mainthm}(see \ref{compatibilityofcompositioninD}) If $f$ and $g$ are non-degenerate composable $\catstyle{Nest}$-morphisms, then
\[
L_{g \circ f} = L_g \circ L_f.
\]
\end{mainthm}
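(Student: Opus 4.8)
The plan is to reduce the identity of layouts to an identity of the functions they induce on coordinates. By the rigidity underlying Theorem~\ref{nestedonetoonecorrespondence} --- a non-degenerate layout is determined by its shape together with the function it induces on coordinates --- it is enough to check that $L_{g\circ f}$ and $L_g\circ L_f$ have the same shape and the same induced function. Matching shapes is immediate: the source of $g\circ f$ in $\catstyle{Nest}$ is the source $S$ of $f$, so $\shape(L_{g\circ f}) = S$, and CuTe composition carries the shape of its inner argument, so $\shape(L_g\circ L_f) = S$ once the composite is known to be defined. Establishing that $L_g\circ L_f$ really is a well-defined CuTe composition is the first genuine task, and I would deduce it from the structural fact that a $\catstyle{Nest}$-morphism sends each leaf of its source to a leaf of its target of equal size: this forces the strides of $L_f$ to be prefix-products (colexicographic weights) of the leaves of $T = \codomain(f)$, so the image of $L_f$ is a set of legitimate coordinates for $L_g$ and the by-mode divisibility that CuTe checks holds. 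I would also record at this point that $g\circ f$ is again a non-degenerate $\catstyle{Nest}$-morphism, so that $L_{g\circ f}$ is defined and is the rigid object against which the comparison is made.

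Next I would write out both sides on a coordinate. Let $w_T(t) = \prod_{t' <_{\colex} t}\size(t')$ be the colex weight of a leaf $t$ of $T$, and $w_U$ the analogous weight for leaves of $U$. Unwinding the construction $f\mapsto L_f$, the stride assigned to the leaf $s$ of $S$ is $w_T(f(s))$, and similarly for $g$ and $g\circ f$; hence for a coordinate $x$ with leaf components $x_s\in\{0,\dots,\size(s)-1\}$,
\[
L_f(x) = \sum_{s} x_s\, w_T(f(s)), \qquad L_{g\circ f}(x) = \sum_{s} x_s\, w_U\bigl(g(f(s))\bigr),
\]
whereas $(L_g\circ L_f)(x) = L_g\bigl(L_f(x)\bigr)$ is computed by taking the colex digit expansion $m = \sum_{t} c_t\, w_T(t)$ of the integer $m := L_f(x)$ along the leaves of $T$ and then forming $\sum_{t} c_t\, w_U(g(t))$.

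The heart of the argument --- and the step I expect to be the main obstacle --- is to show that this digit expansion of $m = \sum_s x_s\, w_T(f(s))$ is \emph{carry-free}: $c_{f(s)} = x_s$ for every leaf $s$ of $S$, and $c_t = 0$ for every leaf $t$ of $T$ outside the image of $f$. This is where the non-degeneracy hypothesis (Definition~\ref{definitionofnondegeneratelayout}) does its work: together with the structural properties of $\catstyle{Nest}$-morphisms it forces $f$ to be injective on leaves and to place distinct source leaves at distinct colex positions of $T$, so the terms $x_s\, w_T(f(s))$ behave exactly like the digits of a mixed-radix numeral with radix $\size(f(s)) = \size(s)$ in position $f(s)$, and no carrying can occur --- a short induction along the colex order of the leaves of $T$, using $0\le x_s<\size(s)$, makes this precise. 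Granting the claim,
\[
(L_g\circ L_f)(x) = \sum_{t} c_t\, w_U(g(t)) = \sum_{s} x_s\, w_U\bigl(g(f(s))\bigr) = L_{g\circ f}(x),
\]
and since $x$ was arbitrary and $L_{g\circ f}$ is non-degenerate with the same shape as $L_g\circ L_f$, the two layouts coincide.

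Finally, it is worth noting that the computation is cleanest after flattening. Both the assignment $f\mapsto L_f$ and CuTe composition are insensitive to nesting at the level of induced functions, and flattening $\catstyle{Nest}\to\catstyle{Tuple}$ is compatible with composition, so one may assume $f$ and $g$ are $\catstyle{Tuple}$-morphisms; then the colex weight of a leaf is literally a product of a prefix of the flattened shape tuple, and the carry-free expansion above becomes the familiar uniqueness of digits in a mixed-radix system. Morally, the whole proof amounts to the observation that reading the layout off the pasted diagram for $g\circ f$ produces exactly the composite $L_g\circ L_f$.
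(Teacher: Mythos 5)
Your central computation is sound and is in essence the paper's own argument, unpacked: the paper proves $\Phi_{L_{g\circ f}}=\Phi_{L_g}\circ\Phi_{L_f}^{\size(T)}$ by conjugating the coordinate-level functor $F$ (which is functorial for trivial reasons) by the colexicographic isomorphisms and invoking Lemma \ref{layoutfunctionlemma}; your ``carry-free mixed-radix expansion'' is exactly the statement that $\colex_T^{-1}\bigl(\colex_T(Ff(x))\bigr)=Ff(x)$, so the two routes coincide. One misattribution, though: carry-freeness does not come from non-degeneracy. It comes from tractability of the underlying pointed map (distinct leaves land in distinct positions of $T$, by Definition \ref{definitionofessentiallyinjective}) together with $0\le x_s<\size(s)=\size(f(s))$; a degenerate morphism would still give a carry-free expansion.

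The genuine gap is in the shape comparison. You assert that ``CuTe composition carries the shape of its inner argument, so $\shape(L_g\circ L_f)=S$.'' Under the paper's Definition \ref{definitionoflayoutcomposition} this is false in general: the composite is only required to have shape \emph{refining} $\shape(L_f)$ and to be \emph{coalesced over} $\shape(L_f)$, and the refinement can be proper (Example \ref{compositionexample2}: $A=(4){:}(2)$, $B=(2,2,6){:}(12,6,1)$, $B\circ A=((2,2)){:}((6,1))$ has shape $((2,2))\neq(4)$). So your rigidity argument --- two non-degenerate layouts with the same shape and layout function are equal --- cannot be applied until you know the two shapes agree, and that is precisely the content you skipped. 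The repair is the paper's condition (2): one must check that $L_{g\circ f}$, whose shape is exactly $S$, is coalesced over $S$, which holds because $g\circ f$ is a non-degenerate $\catstyle{Nest}$-morphism (here, and only here, is where the non-degeneracy hypothesis on $f$ and $g$ is actually consumed: $s_i=1$ forces $\alpha(i)=*$, hence $(\beta\circ\alpha)(i)=*$). With that check in place, $L_{g\circ f}$ satisfies all three defining conditions of $L_g\circ L_f$ and uniqueness (Proposition \ref{relativecoalescedlayoutscharacterizedbylayoutfunction}) finishes the proof; your separate worries about divisibility checks and well-definedness of the CuTe composite then become moot, since exhibiting $L_{g\circ f}$ as a witness is what establishes existence.
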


We can coalesce a $\catstyle{Nest}$-morphism $f$ by collapsing adjacent arrows. For example,
\[\begin{tikzcd}[row sep = 1, column sep = 8]
 & & 10 & & & & & & \\
 10 \ar[urr,mapsto] & & 10 & & & & & & \\
 10 \ar[urr,mapsto] & & 2 & & & & & & 100\\
 2 \ar[rr,mapsto] & & 2 & & \rightsquigarrow & & 100 \ar[urr,mapsto] & & 2\\
 2 \ar[rr,mapsto] & & 2 & & & & 4 \ar[rr,mapsto] & & 4\\
    & f & & & & & & \mathclap{\coalesce(f)} & 
\end{tikzcd}\]
We prove that this operation is compatible with layout coalesce.
\begin{mainthm}(see \ref{coalesceagreement}) If $f$ is a $\catstyle{Nest}$-morphism, then
\[
L_{\coalesce(f)} = \coalesce(L_f).
\]
\end{mainthm}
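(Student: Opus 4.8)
The plan is to localize the claim to a single elementary step and then iterate. Recall that $\coalesce(f)$ is built from $f$ by repeatedly performing an \emph{elementary collapse}: delete an arrow whose source has shape $1$, or fuse two consecutive arrows whose shapes and induced strides satisfy the compatibility condition that lets them be drawn as one arrow (as in the $10,10\rightsquigarrow100$ and $2,2\rightsquigarrow4$ fusions pictured above). On the layout side, $\coalesce(L)$ is computed by flattening $L$ and then performing the analogous moves on consecutive modes: delete a mode of extent $1$, and fuse $(s_i,s_{i+1}):(d_i,d_{i+1})$ into $(s_is_{i+1}):(d_i)$ whenever $d_{i+1}=s_id_i$. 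So it is enough to prove (i) each elementary collapse on the $\catstyle{Nest}$ side corresponds under $f\mapsto L_f$ to one elementary coalesce move on $L_f$ (or, in a degenerate case, leaves $L_f$ literally unchanged), and (ii) $\coalesce(f)$ is collapse-minimal, so that $L_{\coalesce(f)}$ is a layout on which no further coalesce move is possible.

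For (i), flatten $f$ so that $L_f$ reads off as a layout with shape $(s_0,\dots,s_k)$ and strides $(d_0,\dots,d_k)$, one entry per atomic arrow of $f$ in colex order. By the way $L_f$ is read from the diagram, deleting a shape-$1$ arrow of $f$ produces the morphism whose flattened layout is $L_f$ with that extent-$1$ mode removed, and fusing an adjacent compatible pair of arrows produces the morphism whose flattened layout is $L_f$ with the corresponding two modes merged; this is a bookkeeping check that collapses occurring ``across'' nesting brackets still act on the expected pair of flat modes. One also records that neither move changes the size (product of the shape) or the underlying function $\{0,1,\dots,N-1\}\to\mathbb{Z}$ (where $N$ is the size) of the layout, since evaluating the layout on a colex coordinate is a dot product against the strides and both moves leave every such dot product fixed.

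For (ii), $\coalesce(f)$ admits no elementary collapse by construction, and unwinding the dictionary of (i) this says exactly that the flattened $L_{\coalesce(f)}$ has no extent-$1$ mode (outside the unavoidable single trivial mode) and no consecutive pair meeting the fusion condition; that is, $L_{\coalesce(f)}$ is a fixed point of the layout-level $\coalesce$. Now iterate (i) along the collapse sequence that defines $\coalesce(f)$: this realizes $L_{\coalesce(f)}$ as the result of a sequence of coalesce moves applied to $L_f$, so $L_{\coalesce(f)}$ has the same size and underlying function as $L_f$; since layout-$\coalesce$ is idempotent and confluent (any maximal sequence of coalesce moves reaches the same normal form), and $L_{\coalesce(f)}$ is already such a normal form, we conclude $\coalesce(L_f)=\coalesce\bigl(L_{\coalesce(f)}\bigr)=L_{\coalesce(f)}$.

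I expect the main obstacle to be the well-definedness and normal-form property underlying (ii), rather than (i): one must know that the $\catstyle{Nest}$-level collapse procedure terminates at an output independent of the order in which collapses are applied, and that this $\catstyle{Nest}$-normal form matches the layout-level normal form mode for mode. Because nesting makes ``adjacency'' of arrows subtler than adjacency of flat modes — arrows can be separated by brackets, or sit inside a nested block — the delicate point is checking that every fusion opportunity visible on the flattened layout is also visible as a collapse opportunity on $f$, and conversely, so that the two normal-form notions genuinely coincide. The size- and function-preservation parts of (i) are routine dot-product computations.
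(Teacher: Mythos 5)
Your proposal is correct and follows essentially the same route as the paper: both reduce to the flattened morphism, check that each elementary move (dropping a shape-$1$ arrow, fusing adjacent arrows with $\alpha(i+1)=\alpha(i)+1$ or with $\alpha(i)=\alpha(i+1)=*$) leaves the layout function unchanged, observe that $\coalesce(f)$ is a coalesced morphism and hence encodes a coalesced layout, and conclude from the fact that a coalesced layout is uniquely determined by its layout function (Proposition \ref{coalesceproposition}) --- which is precisely the confluence/normal-form property you flag as the main obstacle, and which the paper supplies via that characterization rather than a rewriting argument. The only piece you omit is the bookkeeping in the length-$\le 1$ cases, where the paper's $\coalesce$ on $\catstyle{Nest}$-morphisms returns a depth-$0$ domain ($s_1$ or $1$) so that $L_{\coalesce(f)}$ matches the convention $\coalesce(L)=s_1{:}d_1$ or $1{:}0$; the paper handles this with two extra cases but no new ideas.
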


The complement of a  $\catstyle{Nest}$-morphism $f$ is the inclusion of the entries not hit by $f$. For example, 
\[\begin{tikzcd}[row sep = 1, column sep = 8]
 & & 5 & & & & & & 5\\
 & & 2 & & & & & & 2\\
 2 \ar[urr,mapsto] & & 5 & & \rightsquigarrow &  & 5 \ar[uurr,mapsto] & & 5\\
 2 \ar[rr,mapsto] & & 2 & & & &  5 \ar[urr,mapsto]  & & 2\\
    & f & & & & & &  f^c & 
\end{tikzcd}\]
We prove that complements in $\catstyle{Nest}$ are compatible with layout complements. 
\begin{mainthm}(see \ref{coalescedlayoutoftuplemorphismcomplement}) If $f:S \to T$ is an injective $\catstyle{Nest}$-morphism and $N = \size(T)$, then
\[
\coalesce(L_{f^c}) = \comp(L_f,N).
\]
\end{mainthm}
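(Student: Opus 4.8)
The plan is to unwind both sides into explicit shape--stride data indexed by the leaves of $T$ and compare. Flatten $T$ into its sequence of leaves $t_1,\dots,t_m$ in colexicographic order, and write $e_i := \prod_{i'<i} t_{i'}$ for the colex stride of the $i$-th leaf, so that $N=\size(T)=\prod_{i=1}^m t_i = e_m t_m$. I would first extract from the definitions of $\catstyle{Nest}$, of $f^c$, and of the assignment $f\mapsto L_f$ the following picture: after possibly refining $T$ to a nested tuple $\widetilde T$ with the same flattened leaf multiset (up to splitting) so that every arrow of $f$ is leaf-to-leaf of equal size --- a reduction I return to below --- the injective morphism $f$ determines an injection from the leaves of $S$ onto a set of colex positions $i_1<i_2<\dots<i_k$ of $\widetilde T$, and $L_f$ (unchanged by the refinement) is, up to reordering its modes, the layout with one mode of shape $t_{i_j}$ and stride $e_{i_j}$ for each $j$. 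By definition $f^c$ is the inclusion of the remaining leaves, so $L_{f^c}$ is the layout whose modes are exactly the $t_i$ with $i\notin\{i_1,\dots,i_k\}$, each carrying its colex stride $e_i$, listed in increasing order of stride.

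Next I would compute the two sides. Coalescing $L_{f^c}$ merges precisely the maximal runs of consecutive positions lying outside $\{i_1,\dots,i_k\}$ --- the run before $i_1$, the run strictly between $i_j$ and $i_{j+1}$ for each $j$, and the run after $i_k$ --- since within such a run each leaf's stride is the previous leaf's stride times its size; a run $\{p,\dots,q\}$ collapses to one mode of shape $\prod_{p\le i\le q}t_i$ and stride $e_p$. On the other side, recall that $\comp(L_f,N)$ is obtained by sorting the modes of $L_f$ by stride and reading off the ``gaps'': since $e_{i_j}$ is monotone in $i_j$, the sort orders the selected leaves as $i_1<\dots<i_k$, and the standard formula for the complement then gives shape $\bigl(\prod_{i<i_1}t_i,\ \prod_{i_1<i<i_2}t_i,\ \dots,\ \prod_{i_k<i}t_i\bigr)$ and strides $\bigl(1,\ \prod_{i\le i_1}t_i,\ \dots,\ \prod_{i\le i_k}t_i\bigr)$, with the size-$1$ modes coming from empty products deleted. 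Comparing term by term: the leading run $\{1,\dots,i_1-1\}$ gives shape $\prod_{i<i_1}t_i$ and stride $e_1=1$; the run between $i_j$ and $i_{j+1}$ gives shape $\prod_{i_j<i<i_{j+1}}t_i$ and stride $e_{i_j+1}=\prod_{i\le i_j}t_i$; the trailing run gives shape $\prod_{i>i_k}t_i$ and stride $\prod_{i\le i_k}t_i$ --- exactly the data of $\comp(L_f,N)$, with empty runs matching the deleted size-$1$ modes. This yields $\coalesce(L_{f^c})=\comp(L_f,N)$. As a cross-check one can instead verify that the layout just described satisfies the property characterizing the CuTe complement (sorted ascending, cosize $N$, size $N/\size(L_f)$, jointly injective with $L_f$ onto $\langle N\rangle$); and the coalesce-compatibility theorem~\ref{coalesceagreement} supplies the parallel route $\coalesce(L_{f^c})=L_{\coalesce(f^c)}$, in which one computes $\coalesce(f^c)$ directly inside $\catstyle{Nest}$.

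The main obstacle is not the arithmetic but matching this combinatorial dictionary to the paper's precise definitions. The points needing care are: (i) the \emph{partial-hit} case, where an arrow of $f$ runs from $a$ to a strictly larger $b$ with $a\mid b$ (and, similarly, where several leaves of $S$ land inside one leaf of $T$) --- I must check from the definition that $f^c$ then contributes the appropriate cofactor at the right position, and that passing to the refinement $\widetilde T$ changes neither $L_f$ nor $\coalesce(L_{f^c})$, so that the reduction above is legitimate; (ii) leaves of size $1$, and more importantly whether the paper's $\comp(L_f,N)$ is already coalesced or only up to coalesce --- the statement applies $\coalesce$ only on the left, so I must confirm the right-hand side as written is genuinely coalesced, which holds because consecutive merged runs are separated by at least one selected leaf of size $\ge 2$, so their strides do not compose; and (iii) lining up the exact colex-stride convention and the internal sort in $\comp$. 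Finally I should record explicitly that injectivity of $f$ is what makes $\comp(L_f,N)$ well defined at all: the sorted strides form the divisibility chain $\prod_{i<i_1}t_i\mid\prod_{i<i_2}t_i\mid\dots\mid N$, which is precisely the hypothesis required by the complement construction.
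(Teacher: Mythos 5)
Your proof is essentially correct, but it takes a genuinely different route from the paper's. The paper first reduces to the flat case by observing that $f^c$ is $(f^\flat)^c$ post-composed with a reparenthesization isomorphism, so $L_{f^c} = L_{(f^\flat)^c}$; it then avoids all index bookkeeping by an abstract uniqueness argument: $L_{(f^\flat)^c}$ is a $\size(T)$-complement of $L_{f^\flat}$ (Proposition \ref{complementofmorphisminCproposition}, which follows from $f^\flat \star (f^\flat)^c$ being an isomorphism and the realization functor), it is sorted because $(f^\flat)^c$ is, and Proposition \ref{characterizationofflatNcomplement} says a sorted, coalesced $N$-complement is unique and equals $\comp^\flat(L_f,N)$. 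Your term-by-term computation of both sides is more explicit and self-contained, at the cost of the combinatorial matching you correctly identify as the delicate part; the paper's route buys brevity by outsourcing exactly that matching to the characterization lemma.

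Two of your worries deserve comment. First, the ``partial-hit'' case (i) cannot occur: a $\catstyle{Tuple}$-morphism by definition satisfies $s_i = t_{\alpha(i)}$ whenever $\alpha(i)\neq *$, so every arrow is already leaf-to-leaf of equal entry and the auxiliary refinement $\widetilde T$ is unnecessary. Second, in (ii) your claim that consecutive merged runs ``are separated by at least one selected leaf of size $\ge 2$'' fails when $f$ is degenerate (some $t_{i_j}=1$); in that case further merging does occur, but it occurs consistently on both sides, since $\comp^\flat(L_f,N)$ is by definition $\coalesce^\flat(C)$ for the layout $C$ of Construction \ref{Ncomplementconstruction}. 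The cleanest patch is to stop the run-by-run comparison one step earlier: your computation shows $\Phi_{L_{f^c}} = \Phi_C$, and then Proposition \ref{coalesceproposition} gives $\coalesce^\flat(L_{f^c}) = \coalesce^\flat(C) = \comp^\flat(L_f,N)$ with no need to certify that either side is already fully coalesced. With that adjustment, and a final application of $\coalesce$ to pass from the flat identity to the nested one, your argument goes through.
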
 

We define divisibility of $\catstyle{Nest}$-morphisms, and a logical division operation 
\[
f,g \mapsto f \oslash g
\]
when $g$ divides $f$. For example, 
\[\begin{tikzcd}[row sep = 1, column sep = 8]
  & & 8 \ar[rr,mapsto] & & 8 & & & & & &  8 \ar[rr,mapsto]  & & 8\\
  & & 4 \ar[rr,mapsto] & & 4 & & & & 64 \ar[rr,-] \ar[urr,-]& &  8 \ar[drr,mapsto] & & 4\\
4 \ar[urr,mapsto] & & 8 \ar[rr,mapsto] & & 8 & & \rightsquigarrow & &  & &  4 \ar[urr,mapsto] & & 8\\
4 \ar[rr,mapsto] & & 4 \ar[rr,mapsto] & & 4 & & & & 16 \ar[rr,-] \ar[urr,-]& & 4 \ar[rr,mapsto] & & 4\\
  & g &  & f & & & & & & & & \mathclap{f \oslash g} & 
\end{tikzcd}\]

We prove that logical division in $\catstyle{Nest}$ is compatible with logical division of layouts.
\begin{mainthm}(see \ref{coalesceoflogicaldivision}) If $f$ and $g$ are non-degenerate $\catstyle{Nest}$-morphisms and $g$ divides $f$, then 
\[
\coalesce(L_{f \oslash g}) = \coalesce(L_f \oslash L_g).
\]
\end{mainthm}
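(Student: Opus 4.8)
The plan is to reduce the statement to the already-established compatibility results for composition, complement, and coalesce. The key structural fact I would use is that logical division of layouts is built out of these simpler operations: recall that $L \oslash T$ is obtained by first forming the complement $T^c$ of $T$ (inside the appropriate ambient size), concatenating $T$ with $T^c$ to get a layout whose blocks refine the source of $L$, composing $L$ with this concatenation, and finally reorganizing the result so that the $T$-part and the $T^c$-part are separated into two modes. On the categorical side, I expect $f \oslash g$ to admit an entirely parallel description: take $g^c$, juxtapose $g$ and $g^c$ to obtain a $\catstyle{Nest}$-morphism into the source of $f$, precompose $f$ with it, and then regroup the nested tuple. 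So the first step is to write down both of these decompositions explicitly and verify that the categorical one is well-defined precisely when $g$ divides $f$ (this should follow from the divisibility definition for $\catstyle{Nest}$-morphisms).

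Next I would push the functor $f \mapsto L_f$ through this decomposition one operation at a time. Precomposition of $\catstyle{Nest}$-morphisms maps to layout composition by Theorem~\ref{compatibilityofcompositioninD} (modulo the non-degeneracy hypotheses, which are in force). The complement step is handled by Theorem~\ref{coalescedlayoutoftuplemorphismcomplement}, which gives $\coalesce(L_{g^c}) = \comp(L_g, N)$ for $N = \size(\text{target of }g)$; note that this already involves a coalesce, which is one reason the final identity is only stated up to coalesce. The concatenation/juxtaposition step should be tautological: placing two diagrams side by side corresponds to the layout concatenation that CUTLASS uses, and this requires only unwinding definitions. The regrouping step — collecting modes into the $g$-part and the $g^c$-part — is a relabeling of the nested tuple that does not change the underlying function, hence is invisible to $L_{(-)}$ up to the kind of reassociation that coalesce absorbs.

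The main obstacle, and the step I would spend the most care on, is the bookkeeping around \emph{where the coalesces sit}. The theorems I am chaining each produce equalities only after applying $\coalesce$, and $\coalesce$ does not commute on the nose with composition or with juxtaposition — it only does so after further coalescing. So the honest argument is: show that $\coalesce(L_{f \oslash g})$ and $\coalesce(L_f \oslash L_g)$ both coalesce to the same normal form. Here I would lean on Theorem~\ref{coalesceagreement} ($L_{\coalesce(f)} = \coalesce(L_f)$) together with idempotence of $\coalesce$ and the fact — which I would either cite from the CuTe layout algebra section or prove as a small lemma — that coalesce is a congruence for the operations involved, i.e. $\coalesce(X \circ Y) = \coalesce(\coalesce(X) \circ \coalesce(Y))$ and similarly for concatenation. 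Granting that, the two sides agree after a single outer coalesce, which is exactly the claim.

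A secondary subtlety is checking that all intermediate $\catstyle{Nest}$-morphisms remain non-degenerate so that Theorems~\ref{compatibilityofcompositioninD} and the complement theorem apply; I expect non-degeneracy of $f$ and $g$ to propagate through complement and juxtaposition, but this should be verified explicitly, since a degenerate intermediate would break the chain. If that propagation fails in some edge case, the fallback is to prove the identity directly on entries: evaluate both sides on an arbitrary coordinate of the source nested tuple, using the explicit formula for logical division of layouts on one side and the diagram-pasting description of $f \oslash g$ on the other, and check they produce the same offset — tedious, but purely mechanical given the correspondence theorem~\ref{nestedonetoonecorrespondence}.
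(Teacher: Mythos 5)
Your proposal is correct and follows essentially the same route as the paper: unfold $L_f \oslash L_g = L_f \circ (L_g, \comp(L_g,\size(L_f)))$, replace $\comp(L_g,\size(L_f))$ by $L_{g^c}$ using Proposition \ref{coalescedlayoutoftuplemorphismcomplement} (valid under an outer coalesce since composition and concatenation of layouts depend only on layout functions, by Propositions \ref{layoutfunctionofconcatenatedlayout} and \ref{nestedcoalesceproposition}), and then push $L_{(-)}$ through concatenation and composition via Propositions \ref{compatibilityofconcatenateinD} and \ref{compatibilityofcompositioninD}. The ``congruence'' property of coalesce you flag as the main obstacle is exactly the implicit justification the paper uses for its second equality, so your extra care there is warranted but does not change the argument.
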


We define product admissibility of $\catstyle{Nest}$-morphisms, and a logical product operation 
\[
f,g \mapsto f \otimes g
\]
when $f$ and $g$ are product admissible. For example, 
\[\begin{tikzcd}[row sep = 1, column sep = 8]
  & & 5 & &   & &  & &  & & & & 5 \ar[drr,mapsto] & & 5\\
  & & 5 & &   & & & & & & 25 \ar[rr,-] \ar[urr,-] & & 5 \ar[urr,mapsto] & & 5\\
2  \ar[rr,mapsto] & & 2 & & 5  \ar[drr,mapsto] & & 5 & & \rightsquigarrow & & &  & 2 \ar[rr,mapsto] & & 2\\
2 \ar[rr,mapsto] & & 2 & & 5 \ar[urr,mapsto]  & & 5 & & & & 4 \ar[rr,-] \ar[urr,-] & & 2 \ar[rr,mapsto] & & 2\\
  & f & & &  & g & & & & & & & & \mathclap{f \otimes g} & 
\end{tikzcd}\]

We prove that the logical products in $\catstyle{Nest}$ are compatible with logical products of layouts.
\begin{mainthm}(see \ref{logicalproductcompatibility}) If $f$ and $g$ are non-degenerate $\catstyle{Nest}$-morphisms and $f$ and $g$ are product admissible, then
\[
L_{f \otimes g} = L_f \otimes L_g.
\]
\end{mainthm}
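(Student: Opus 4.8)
The plan is to unwind the definition of the logical product on both sides and reduce the claim to the previously established compatibilities for concatenation, composition (Theorem~\ref{compatibilityofcompositioninD}), and complement (Theorem~\ref{coalescedlayoutoftuplemorphismcomplement}). Recall that CuTe builds the logical product as a concatenation: $L_f \otimes L_g = \bigl(L_f,\ \comp(L_f, N)\circ L_g\bigr)$, where $N = \size(L_f)\cdot\cosize(L_g)$ and the composition is legal precisely because, under product admissibility, $\comp(L_f,N)$ has enough size and the required divisibility holds. The definition of $f\otimes g$ in $\catstyle{Nest}$ is the evident analogue: it is the concatenation $(f,\ f^{c}\circ g)$, where $f^{c}$ denotes the complement of $f$ taken relative to an ambient codomain of size $N$, product admissibility being exactly the hypothesis that makes $f^{c}$ and $g$ composable and the concatenation well-typed. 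So the proof reduces to a diagram chase comparing these two parallel constructions.

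First I would isolate the one genuinely new ingredient, an essentially formal lemma: the assignment $f\mapsto L_f$ sends a concatenation of $\catstyle{Nest}$-morphisms to the concatenation of the corresponding layouts, i.e.\ $L_{(f,h)} = (L_f, L_h)$. This is immediate once one checks how the nested shape and stride of a concatenated morphism are read off its diagram. Applying this with $h = f^{c}\circ g$ reduces the theorem to the identity $L_{f^{c}\circ g} = \comp(L_f, N)\circ L_g$. Theorem~\ref{compatibilityofcompositioninD} applies to $f^{c}$ and $g$ — its hypotheses (non-degeneracy and composability) follow from non-degeneracy of $f$, $g$ and product admissibility — and gives $L_{f^{c}\circ g} = L_{f^{c}}\circ L_g$. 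Thus everything comes down to $L_{f^{c}}\circ L_g = \comp(L_f, N)\circ L_g$, and by Theorem~\ref{coalescedlayoutoftuplemorphismcomplement} this is the same as $L_{f^{c}}\circ L_g = \coalesce(L_{f^{c}})\circ L_g$.

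The main obstacle is precisely this last equality, and it is the reason the present statement — unlike the coalesce, complement, and division compatibilities — is an equality of layouts on the nose rather than only after coalescing: Theorem~\ref{coalescedlayoutoftuplemorphismcomplement} identifies $L_{f^{c}}$ with $\comp(L_f,N)$ only up to coalescing, so I must show that the discrepancy is annihilated by post-composition with $L_g$. Concretely, I would prove (or extract from the composition machinery behind Theorem~\ref{compatibilityofcompositioninD}, using the divisibility guaranteed by product admissibility) that CuTe composition is invariant under coalescing its left argument: $X\circ Y = \coalesce(X)\circ Y$ as presented layouts whenever $X\circ Y$ is defined. Intuitively, composing with $Y$ re-splits exactly the modes of $X$ that coalescing had merged, so nothing is lost; making this precise is the delicate point. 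A secondary technical wrinkle is that the complement appearing in the product is taken relative to a codomain of size $N=\size(L_f)\cdot\cosize(L_g)$, which may exceed the size of the codomain of $f$, whereas Theorem~\ref{coalescedlayoutoftuplemorphismcomplement} is stated for complements within the full codomain; this is handled by first appending to $f$ a mode of the appropriate size so that the cited theorem applies verbatim. Chaining the concatenation lemma, Theorem~\ref{compatibilityofcompositioninD}, the left-coalesce-invariance of composition, and Theorem~\ref{coalescedlayoutoftuplemorphismcomplement} then yields $L_{f\otimes g} = L_f\otimes L_g$.
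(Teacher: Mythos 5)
Your proposal follows essentially the same route as the paper's proof: reduce via $L_{(f,h)}=(L_f,L_h)$ and Theorem~\ref{compatibilityofcompositioninD} to the identity $\comp(L_f,N)\circ L_g = L_{f^c}\circ L_g$, which holds because these two left factors have the same layout function and layout composition depends on its left factor only through that function (Proposition~\ref{postcomposewithextension} and Corollary~\ref{postcomposewithcoalesce} already supply the ``coalesce-invariance'' you flag as the delicate point). One small correction: product admissibility gives $N=\size(L_f)\cdot\cosize(L_g)\leq\size(T)$ rather than the reverse, so the size mismatch is resolved by restriction (Lemma~\ref{compatibilityofNcomplements} with Proposition~\ref{postcomposewithextension}) and no auxiliary mode need be appended to $f$.
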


In Chapter \ref{computationschapter}, we illustrate how our new framework may be used to compute important layout operations such as composition, logical division, and logical products. In particular, we present an algorithm (Algorithm \ref{tractablelayoutcompositionalgorithm}) for computing the composition $B \circ A$ of tractable layouts $A$ and $B$. Eliding details, the basic idea of our algorithm is that if we want to compute the composition $B \circ A$, we can represent $A$ and $B$ by suitably chosen $\catstyle{Nest}$-morphisms $f$ and $g$, compose these morphisms to form $g \circ f$, then take the encoded layout to obtain 
\[
B \circ A = L_{g \circ f}.
\]
We illustrate this algorithm with many examples.

\section{Organization}

The current work is organized as follows. 

In section \ref{implementationsection}, we provide details regarding the $\texttt{cute}$ implementation of layouts. We provide a Python implementation of the category $\catstyle{Nest}$ in the form of a module $\texttt{tract}$, and illustrate the compatibility of $\texttt{tract}$ with $\texttt{cute}$. Our Python implementation may be found at our git repository \url{https://github.com/ColfaxResearch/layout-categories}.

Chapter \ref{layoutschapter} serves as a comprehensive reference for layouts and their algebra. It provides rigorous definitions of layouts and the operations they support, and establishes the fundamental properties of these operations. This chapter is replete with examples, and may be of use to the working programmer. 

In Chapter \ref{categorieschapter}, we present a new mathematical framework for working with tractable layouts. In particular, we connect layouts and their algebra to the theory of {\bf categories} and {\bf operads}. The content of this chapter is of independent mathematical interest. It is also of practical value, as it provides a new framework for visualizing layouts and computing their various operations.  

In Chapter \ref{computationschapter}, we provide an algorithm for computing the composite of tractable layouts $A$ and $B$ using the framework developed in Chapter \ref{categorieschapter}. We illustrate the composition algorithm with many examples.

\section{Related work}

While the current work is theoretical in nature, it is motivated by practical applications in GPU programming, most notably CUTLASS. We emphasize that the theory developed here is implementation-agnostic: it is independent of the particular programming language or runtime system used to realize layouts in practice. Nevertheless, certain practical considerations arise when working with concrete implementations. For instance, CUTLASS distinguishes between compile-time constants (static variables) and runtime values (dynamic variables). This information enables compiler optimizations during code generation. Such implementation-specific details, while important for performance, lie outside the scope of our mathematical framework. Further discussion of this can be found in the CuTe documentation \cite{cutedocumentation}.

The mathematical framework we develop for layouts draws connections to several areas of computer science and mathematics. We briefly review relevant work on GPU programming and adjacent areas to provide a greater context for our contributions.

\begin{itemize}
    \item {\bf Applications of CUTLASS.} State-of-the-art applications of CUTLASS include FlashAttention \cite{dao2023flashattention2, shah2024flashattention3}, EVT \cite{chen2024evt}, and SonicMoE \cite{sonic}. For readers seeking a deeper understanding of CUTLASS and CuTe in practice, we recommend the comprehensive tutorial series from NVIDIA~\cite{cecka2024cutlass, sun2025cutlass, thakkar2025cutlass} and Colfax Research~\cite{shah2024cutlass_streamk, shah2025cutlass_blackwell_tensor_memory, shah2025cutlass_blackwell_clusters, shah2025cutlass_blackwell_subbyte} on GPU programming with these libraries.
    \item {\bf Data layout optimization} 
    Data layout optimization techniques seek to improve cache locality and memory access patterns by carefully considering how tensors are stored in memory \cite{xu2023alt, bacon1994compiler, raman2007structure, ju1992reduction}, \cite{sharma2015data}. Choosing efficient memory storage and access patterns is crucial for GPU performance, where memory bandwidth is often a bottleneck.
    \item {\bf Modern layout systems} Layout systems such as CuTe \cite{cutedocumentation, cutedsldocumentation, shah2024layout} and Triton Linear Layouts \cite{tritonlinearlayouts, zhou2026linear} have become industry standards for managing memory storage and access in tensor computations. Triton linear layouts are based on $\mathbb{F}_2$-linear algebra, and inheret compositional structure from the composition of $\mathbb{F}_2$-linear operators. These are also naturally compatible with layout swizzles, which can generally not be represented as a CuTe layout. On the other hand, these layouts are not as expressive as CuTe layouts since they are required to have size and cosize equal to a power of $2$, and can not express transformations such as scaling by a non power-of-two integer. Recently, it was shown that both of these layout systems may be expressed in terms of integer set relations \cite{bhaskaracharya2025}. This provides a common ground for working with CuTe and Triton linear layouts, as well as more general layouts, such as those with non-rectangular shapes.
    
    \item {\bf Polyhedral compilation} 
    The polyhedral model \cite{verdoolaege2010isl}, \cite{verdoolaege2021presburger}, \cite{thangamani2024survey} provides a mathematical framework for analyzing and transforming loop nests with affine bounds and array accesses. The primary abstraction of this model is the representation of an iteration space as the collection of integer points in some polyhedron. This formalism allows for complex loop transformations that preserve program semantics while optimizing for locality and parallelism. Tools such as Pluto \cite{bondhugula2008pluto}, Polly \cite{grosser2012polly}, and Tensor Comprehensions \cite{vasilache2018tensor} leverage polyhedral techniques to automatically generate optimized code.
    \item {\bf Tensor contraction/decomposition} Tensor contractions \cite{ShiNiranjanAnandkumarCecka2016, zhao2022polyhedral, KoldaBader2009} generalize matrix multiplication to higher-rank tensors, and are ubiquitous in machine learning and scientific computing. The efficient implementation of tensor contractions relies on optimal choices of contraction order and intermediate tensor layouts.  
\end{itemize}

\clearpage

\section{Implementation} \label{implementationsection}
In this section, we illustrate how to work with layouts in NVIDIA's CuTe DSL, which we denote as $\texttt{cute}$. We provide an implementation of our categorical framework in the form of a $\texttt{Python}$ module $\texttt{tract}$ in our git repository \url{https://github.com/ColfaxResearch/layout-categories}. Here, we show the compatibility of $\texttt{cute}$ and $\texttt{tract}$.

\begin{enumerate}[left=0pt]
\item {\bf Constructing tuples and nested tuples:} We construct tuples and nested tuples in \texttt{Python} as follows. 
\begin{lstlisting}[language=cutedsl]
S = (2,2,2)
T = ((2,2),(5,5))
U = ((2,2),4,(9,(3,3)))
\end{lstlisting}
Note that if we want to construct a tuple of length $1$, we must include a comma following the tuple's entry. For example,
\begin{lstlisting}[language=cutedsl]
S = (10,)
T = (10)
\end{lstlisting}
returns 
\begin{lstlisting}[language=cutedsl]
S = (10,)
T = 10
\end{lstlisting}
\item {\bf Constucting layouts and morphisms:} We construct a layout 
\[
L = S:D
\]
in $\texttt{cute}$ as follows.
\begin{lstlisting}[language=cutedsl]
L = cute.make_layout(shape=S, stride=D)
\end{lstlisting}
For example,
\begin{lstlisting}[language=cutedsl]
A = cute.make_layout(shape=((4,4),4), stride=((16,1),4))
B = cute.make_layout(shape=(8,64), stride=(64,1))
C = cute.make_layout(shape=100, stride=2)
\end{lstlisting}
returns
\begin{lstlisting}[language=cutedsl]
A = ((4,4),4):((16,1),4)
B = (8,64):(64,1)
C = 100:2
\end{lstlisting}
We construct a nested tuple morphism 
\[\begin{tikzcd} 
S \ar[r,"f"] \ar[r,swap,"\alpha"] & T
\end{tikzcd} \]
in $\texttt{tract}$ as follows.
\begin{lstlisting}[language=cutedsl]
f = tract.make_morphism(domain=S, codomain=T, map_=alpha)
\end{lstlisting}
For example,
\begin{lstlisting}[language=cutedsl]
f = tract.make_morphism(domain=(4,4), codomain=(4,2,4), map_=(1,3))
g = tract.make_morphism(domain=(2,2,2,2), codomain=(2,2,2,2), map_=(1,0,4,2))
h = tract.make_morphism(domain=(16,(4,4),(4,4)), codomain=(16,4,4), map_=(1,2,0,3,0))
\end{lstlisting}
returns 
\begin{lstlisting}[language=cutedsl]
f = (4,4)--(1,3)-->(4,2,4)
g = (2,2,2,2)--(1,0,4,2)-->(2,2,2,2)
h = (16,(4,4),(4,4))--(1,2,0,3,0)-->(16,4,4)
\end{lstlisting}
Note that we use the symbol \texttt{0} rather than $\ast$ when specifying maps in \texttt{tract}.

\item {\bf Translating between tractable layouts and morphisms:}
If $L$ is a layout, we can check if $L$ is tractable with
\begin{lstlisting}[language=cutedsl]
tract.is_tractable(L)
\end{lstlisting}
For example, 
\begin{lstlisting}[language=cutedsl]
A = cute.make_layout(shape=(2,2,2), stride=(1,2,4))
B = cute.make_layout(shape=(2,2,2), stride=(1,7,4))
A_is_tractable = tract.is_tractable(A)
B_is_tractable = tract.is_tractable(B)
\end{lstlisting}
returns 
\begin{lstlisting}[language=cutedsl]
A = (2,2,2):(1,2,4)
B = (2,2,2):(1,7,4)
A_is_tractable = True
B_is_tractable = False
\end{lstlisting}
If $L$ is a tractable layout, then we can construct the standard representation $f_L$ with
\begin{lstlisting}[language=cutedsl]
tract.compute_morphism(L)
\end{lstlisting}
For example,
\begin{lstlisting}[language=cutedsl]
L = cute.make_layout(shape=(2,2,2), stride=(1,2,4))
f_L = tract.compute_morphism(L)
\end{lstlisting}
returns
\begin{lstlisting}[language=cutedsl]
L = (2,2,2):(1,2,4)
f_L = (2,2,2)--(1,2,3)-->(2,2,2)
\end{lstlisting}
If $f$ is a nested tuple morphism, we can construct the layout $L_f$ encoded by $f$ with
\begin{lstlisting}[language=cutedsl]
tract.compute_layout(f)
\end{lstlisting}
For example,
\begin{lstlisting}[language=cutedsl]
f = tract.make_morphism(domain=((5,5),8), codomain=(5,8,5), map_=(1,3,2))
L_f = tract.compute_layout(f)
\end{lstlisting}
returns 
\begin{lstlisting}[language=cutedsl]
f = ((5,5),8)--(1,3,2)-->(5,8,5)
L_f = ((5,5),8):((1,40),5)
\end{lstlisting}

\item {\bf Composition}: When defined, this operation produces a layout $B \circ A$ from a pair of layouts $A$ and $B$. See Definition \ref{definitionoflayoutcomposition} for a precise definition.  We can compute the composition $B \circ A$ in $\texttt{cute}$ with
\begin{lstlisting}[language=cutedsl]
cute.composition(B,A)
\end{lstlisting}
For example, running 
\begin{lstlisting}[language=cutedsl]
A = cute.make_layout(shape=((4,4),4), stride=((16,1),4))
B = cute.make_layout(shape=(8,64), stride=(64,1))
B_o_A = cute.composition(B,A)
\end{lstlisting}
returns 
\begin{lstlisting}[language=cutedsl]
A = ((4,4),4):((16,1),4)
B = (8,64):(64,1)
B_o_A = ((4,4),(2,2)):((2,64),(256,1))
\end{lstlisting}
If $f$ and $g$ are composable nested tuple morphisms, we can compute the composition $g \circ f$ in $\texttt{tract}$ with
\begin{lstlisting}[language=cutedsl]
tract.compose(f,g)
\end{lstlisting}
For example,
\begin{lstlisting}[language=cutedsl]
f = tract.make_morphism(domain=((2,2),(2,2)), codomain=((2,2,2),(2,2,2)), map_=(3,2,6,5))
g = tract.make_morphism(domain=((2,2,2),(2,2,2)), codomain=(2,2,2,2), map_=(1,0,2,0,3,4))
g_o_f = tract.compose(f,g)
\end{lstlisting}
returns 
\begin{lstlisting}[language=cutedsl]
f = ((2,2),(2,2))--(3,2,6,5)-->((2,2,2),(2,2,2))
g = ((2,2,2),(2,2,2))--(1,0,2,0,3,4)-->(2,2,2,2)
g_o_f = ((2,2),(2,2))--(2,0,4,3)-->(2,2,2,2)
\end{lstlisting}

\item {\bf Coalesce}: This operation produces a layout $\coalesce(A)$ from a layout $A$. See Definition \ref{definitionofnestedlayoutcoalesce} for details. We can compute $\coalesce(A)$ in $\texttt{cute}$ with 
\begin{lstlisting}[language=cutedsl]
cute.coalesce(A)
\end{lstlisting}
For example, 
\begin{lstlisting}[language=cutedsl]
A = cute.make_layout(shape = ((2,2),(2,2),(5,5)), stride = ((1,2),(16,32),(64,640)))
coal_A = cute.coalesce(A)
\end{lstlisting}
returns 
\begin{lstlisting}[language=cutedsl]
A = ((2,2),(2,2),(5,5)):((1,2),(16,32),(64,640))
coal_A = (4,20,5):(1,16,640)
\end{lstlisting}
There is also a {\it relative coalesce} operation $A \mapsto \coalesce(A,S)$, which receives as input an additional nested tuple $S$ which is {\it refined} by the shape of $A$. See Definition \ref{definitionofrelativecoalesce} for details. We can compute $\coalesce(A,S)$ in $\texttt{cute}$ with 
\begin{lstlisting}[language=cutedsl]
A = cute.make_layout(shape = ((2,2),(3,3),(5,5)), stride = ((1,2),(4,12),(36,180)))
S = ((2,2),9,25)
coal_A_over_S = cute.coalesce(A,target_profile=S)
\end{lstlisting}
returns 
\begin{lstlisting}[language=cutedsl]
A = ((2,2),(3,3),(5,5)):((1,2),(4,12),(36,180))
S = ((2,2),9,25)
coal_A_over_S = ((2,2),9,25):((1,2),4,36)
\end{lstlisting}
If $f$ is a nested tuple morphism, we may form $\coalesce(f)$. See Definition \ref{definitionofcoalesceofnestedtuplemorphism} for details. We compute $\coalesce(f)$ in $\texttt{tract}$ with 
\begin{lstlisting}[language=cutedsl]
tract.coalesce(f)
\end{lstlisting}
For example, 
\begin{lstlisting}[language=cutedsl]
f = tract.make_morphism(domain=(2,2,10,10), codomain = (2,2,2,10,10), map_=(1,2,4,5))
coal_f = tract.coalesce(f)
\end{lstlisting}
returns
\begin{lstlisting}[language=cutedsl]
f = (2,2,10,10)--(1,2,4,5)-->(2,2,2,10,10)
coal_f = (4,100)--(1,3)-->(4,2,100)
\end{lstlisting}

\item {\bf Complement}: When defined, this operation produces a layout $\comp(A,N)$ from a layout $A$ and positive integer $N$. See Definition \ref{definitionoflayoutcomplements} for details. We can compute $\comp(A,N)$ in $\texttt{cute}$ with 
\begin{lstlisting}[language=cutedsl]
cute.complement(A,N)
\end{lstlisting}
For example,
\begin{lstlisting}[language=cutedsl]
A = cute.make_layout(shape = ((2,2),(2,2)), stride = ((8,2),(64,256)))
comp_A = cute.complement(A,4096)
\end{lstlisting}
returns 
\begin{lstlisting}[language=cutedsl]
A = ((2,2),(2,2)):((8,2),(64,256))
comp_A = (2,2,4,2,8):(1,4,16,128,512)
\end{lstlisting}
If $f$ is a nested tuple morphism, then we may form the complement $f^c$ of $f$. See Definition \ref{complementofmorphisminD} for details. We compute $f^c$ in $\texttt{tract}$ with 
\begin{lstlisting}[language=cutedsl]
tract.complement(f)
\end{lstlisting}
For example,
\begin{lstlisting}[language=cutedsl]
f = tract.make_morphism(domain=(2,2), codomain=(2,5,2,5), map_=(1,3))
comp_f = tract.complement(f)
\end{lstlisting}
returns 
\begin{lstlisting}[language=cutedsl]
f = (2,2)--(1,3)-->(2,5,2,5)
comp_A = (5,5)--(2,4)-->(2,5,2,5)
\end{lstlisting}
\item {\bf Logical Division}: When defined, this operation produces a layout $A \oslash B$ from a pair of layouts $A$ and $B$. See Definition \ref{definitionoflogicaldivide} for details. We compute $A \oslash B$ in $\texttt{cute}$ with 
\begin{lstlisting}[language=cutedsl]
cute.logical_divide(A,B)
\end{lstlisting}
For example,
\begin{lstlisting}[language=cutedsl]
A = cute.make_layout((64,32), stride = (32,1))
B = cute.make_layout((4,4), stride = (1,64))
quotient = cute.logical_divide(A,B)
\end{lstlisting}
returns 
\begin{lstlisting}[language=cutedsl]
A = (64,32):(32,1)
B = (4,4):(1,64)
quotient = ((4,4),(16,8)):((32,1),(128,4))
\end{lstlisting}
If $f$ and $g$ are nested tuple morphisms and $g$ divides $f$, then we may form the logical division $f \oslash g$. See Definition \ref{definitionoflogicaldivisionofnestedtuplemorphisms} for details. We compute $f \oslash g$ in $\texttt{tract}$ with 
\begin{lstlisting}[language=cutedsl]
tract.logical_divide(f,g)
\end{lstlisting}
For example,
\begin{lstlisting}[language=cutedsl]
f = tract.make_morphism(domain=(4,8,4,8), codomain=(4,8,4,8), map_=(1,2,3,4))
g = tract.make_morphism(domain=(4,4), codomain=(4,8,4,8), map_=(1,3))
quotient = tract.logical_divide(f,g)
\end{lstlisting}
returns 
\begin{lstlisting}[language=cutedsl]
f = (4,8,4,8)--(1,2,3,4)-->(4,8,4,8)
g = (4,4)--(1,3)-->(4,8,4,8)
quotient = ((4,4),(8,8))--(1,3,2,4)-->(4,8,4,8)
\end{lstlisting}
\item {\bf Logical Product}: When defined, this operation produces a layout $A \otimes B$ from a pair of layouts $A$ and $B$. See Definition \ref{definitionoflogicalproduct} for details. We compute $A \otimes B$ in $\texttt{cute}$ with 
\begin{lstlisting}[language=cutedsl]
cute.logical_product(A,B)
\end{lstlisting}
For example, running 
\begin{lstlisting}[language=cutedsl]
A = cute.make_layout((3,10,10), stride = (200,1,20))
B = cute.make_layout((2,2), stride = (1,2))
product = cute.logical_product(A,B)
\end{lstlisting}
returns 
\begin{lstlisting}[language=cutedsl]
A = (3,10,10):(200,1,20)
B = (2,2):(1,2)
product = ((3,10,10),(2,2)):((200,1,20),(10,600))
\end{lstlisting}

\end{enumerate}

If $f$ and $g$ are nested tuple morphisms and $f$ and $g$ are product admissible, then we may form the logical product $f \otimes g$. See Definition \ref{definitionoflogicalproductofnestedtuplemorphisms} for details. We compute $f \otimes g$ in $\texttt{tract}$ with 
\begin{lstlisting}[language=cutedsl]
tract.logical_product(f,g)
\end{lstlisting}
For example,
\begin{lstlisting}[language=cutedsl]
f = tract.make_morphism(domain=(2,2), codomain=(2,2,5,5), map_=(1,2))
g = tract.make_morphism(domain=(5,5), codomain=(5,5), map_=(2,1))
product = tract.logical_product(f,g)
\end{lstlisting}
returns 
\begin{lstlisting}[language=cutedsl]
f = (2,2)--(1,2)-->(2,2,5,5)
g = (5,5)--(2,1)-->(5,5)
product = ((2,2),(5,5))--(1,2,4,3)-->(2,2,5,5)
\end{lstlisting}

\clearpage

\section{Notation}

\begin{align*}
\mathbb{Z} & = \{\dots,-1,0,1,2,\dots\}\\
\mathbb{N} & = \{ 0 , 1, 2 ,\dots\}\\
\mathbb{Z}_{>0} & = \{1,2,\dots \}\\
\mathbb{F}_2 & = \{0,1\}\text{, the finite field of order } 2.\\
[0,n) & = \{0,\dots,n-1\}\text{, and }[0,0) = \varnothing. \\
\langle n \rangle & = \{1,2,\dots,n\}\text{, and }\langle 0 \rangle = \varnothing. \\
\langle n \rangle_* & = \{*,1,2,\dots,n\}\\
\delta_i^m & = (0,\dots,1,\dots,0)\text{, the tuple of length } m \text{ with }\\
& \hspace{0.2in} i\text{th entry }1\text{ and all other entries } 0. \\
\Sigma_n & = \text{ the symmetric group on } \langle n \rangle.\\
X^\sigma & = (x_{\sigma(1)},\dots,x_{\sigma(m)}) \text{ for a tuple }X = (x_1,\dots,x_m)\\
& \hspace{0.2in} \text{ and a permutation } \sigma \in \Sigma_m.\\
X \star Y & = \text{ the flat concatenation of }X \text{ and }Y.\\
X^\flat & = \text{ the flattening of a nested tuple }X.\\
\profile(X) & = \text{ the profile of a nested tuple }X.\\
(X_1,\dots,X_k) & = \text{ the (nested) concatenation of }X_1,\dots,X_k. \\
(X_1,\dots,X_k)_Q & = \text{ the }Q\text{-substitution of }X_1,\dots,X_k\text{ for a profile }Q.\\
\mathterm{Tuple}(V) & = \text{ the set of tuples with entries in a set }V.\\
\mathterm{Nest}(V) & = \text{ the set of nested tuples with entries in a set }V.\\
\mathterm{Profile} & = \text{ the set of profiles.}\\
\mathterm{FlatLayout} & = \text{ the set of flat layouts.}\\
\mathterm{Layout} & = \text{ the set of layouts.}\\
B \circ A & = \text{ the composition of }A \text{ and }B. \\
A \oslash B & = \text{ the logical division of }A \text{ by }B. \\
A \otimes B & = \text{ the logical product of }A \text{ and }B. \\
\catstyle{Set} & = \text{ the category of sets.} \\
\catstyle{FinSet} & = \text{ the category of finite sets.}\\
\catstyle{Fin} & = \text{ the full subcategory of }\catstyle{FinSet} \text{ spanned by }\n \text{ for } n \geq 0.\\
\catstyle{FinSet}_* & = \text{ the category of pointed finite sets.}\\
\catstyle{Fin}_*&  = \text{ the full subcategory of }\catstyle{FinSet}_*\text{ spanned by } \n_* \text{ for } n \geq 0. \\
\catstyle{Tuple}& = \text{ the category of tuples and tuple morphisms.}\\
\catstyle{Nest} & = \text{ the category of nested tuples and nested tuple morphisms.}\\
\catstyle{Ref} & = \text{ the category of nested tuples and refinements.}\\
\catstyle{Cat} & = \text{ the category of (small) categories and functors.}
\end{align*}

\newpage

\newpage

\chapter{Layouts and their algebra}\label{layoutschapter}

The goal of this chapter is to provide a comprehensive and mathematically grounded theory of layouts. We begin by developing a theory of flat layouts in section \ref{flatlayoutssection}. We introduce the necessary background on nested tuples in section \ref{nestedtuplessection}, so that we may cover layouts in full generality in section \ref{layoutssection}.

\section{Flat Layouts}\label{flatlayoutssection}

In this section, we examine {\bf flat layouts}, an important subclass of layouts in which both shape and stride are tuples, rather than more general nested tuples. To formalize our discussion, we begin by fixing notation related to tuples.

\subsection{Tuples}\label{tuplessection}

\begin{definition}
    If $V$ is a set, then a {\it tuple} with entries in $V$ is a finite ordered list
    \[
    X = (x_1,\dots,x_m)
    \]
    of elements $x_i \in V$ for each $1 \le i \le m$. The {\it length} of such a tuple $X=(x_1,\dots,x_m)$ is 
    \[\len(X) = m.\]
    We write $\mathterm{Tuple}(V)$ for the collection of all tuples with entries in $V$. We are especially interested in the case $V = \mathbb{Z}$, in which case we refer to $X \in \mathterm{Tuple}(\mathbb{Z})$ as a {\it tuple of integers. } If $X$ is a tuple of integers, then the {\it size} of $X$ is the product 
    \[
        \size(X) = x_1 \cdots x_m.
        \]
\end{definition}

\begin{example}
Here are some examples of tuples, together with their length and size:
\[
\begin{aligned}
X &= (3, 128, 128), & \quad \len(X) &= 3, \quad \size(X) = 49152 \\
X &= (512),         & \quad \len(X) &= 1, \quad \size(X) = 512 \\
X &= (),            & \quad \len(X) &= 0, \quad \size(X) = 1
\end{aligned}
\]
\end{example}

\begin{definition}\label{definitionofconcatenationoftuples}
If $ X = (x_1,\dots,x_m)$ and $Y = (y_1,\dots,y_n)$ are tuples, then we write 
\[
X \star Y = (x_1,\dots,x_m,y_1,\dots,y_n)
\]
for the {\it concatenation} of $X$ and $Y$.
\end{definition}

\begin{example}
    If $X = (64,32)$ and $Y = (8,8,8)$, then 
    \[
    X \star Y = (64,32,8,8,8).
    \]
\end{example}

\begin{remark}
    If $V$ is a set, then the collection
    \[\mathterm{Tuple}(V) = \coprod_{m \geq 0} V^{\times m}\]
    of all tuples with entries in $V$ is the free associative monoid on $V$. The monoidal product is concatenation, and the unit is the empty tuple $()$. 
\end{remark}
 
\begin{definition}\label{definitionofdivisibilityoftuples}
If $X$ and $X'$ are tuples, we say $X'$ {\it divides} $X$ if there exists a tuple $X''$ with 
\[
X' \star X'' = X.
\]
\end{definition}

\begin{example}
    If $X' = (81,9)$ and $X = (81,9,64,8)$, then $X'$ divides $X$, since the tuple $X'' = (64,8)$ satisfies 
    \[
    X' \star X'' = X.
    \]
\end{example}
\begin{definition} If $X=(x_1,\dots,x_m)$ is a tuple and $\sigma \in \Sigma_m$ is a permutation, then we write
\[
X^\sigma = (x_{\sigma(1)},\dots,x_{\sigma(m)})
\]
for the {\it permutation of} $X$ {\it by} $\sigma$. This specifies a right action of $\Sigma_m$ on $\mathbb{Z}^{\times m}$. 
\end{definition}

\begin{example}
    If $X = (8,16,32,64)$ and $\sigma = (1 \; 2)(3 \; 4)$, then
    \[
    X^\sigma = (16,8,64,32).
    \]
\end{example}

\begin{notation}
If $n$ is a positive integer, we write 
\[
[0,n) = \{ 0,1,\dots,n-1\},
\]
and if $S = (s_1,\dots,s_m)$ is a tuple of positive integers, we write 
\begin{align*}
[0,S) &  = [0,s_1) \times \cdots \times [0,s_m)
\end{align*}
for the collection of tuples $(x_1,\dots,x_m)$ with $0 \leq x_i < s_i$. 
\end{notation}

\begin{example}
    If $S = (3,2)$, then 
    \[
    [0,S) = \{(0,0),(1,0),(2,0),(0,1),(1,1),(2,1) \}
    \]
\end{example}

\subsection{Basic definitions}

Having fixed notation, we are ready to define flat layouts.

\begin{definition}\label{definitionofflatlayout}
    A {\it flat layout} is a pair 
    \begin{align*}
        L & = S:D \\
        & = (s_1,\dots,s_m):(d_1,\dots,d_m)
    \end{align*}
    consisting of a tuple of positive integers 
    \begin{align*} 
    \shape(L) & = S\\
    & =(s_1,\dots,s_m)
    \end{align*}
    called the {\it shape} of $L$, and a tuple of non-negative integers 
    \begin{align*} 
    \stride(L) & = D\\
    & = (d_1,\dots,d_m)
    \end{align*}
    called the {\it stride} of $L$.
\end{definition}

\begin{remark}
    If $L$ is a flat layout, then by definition, $\shape(L)$ and $\stride(L)$ have the same length.
\end{remark}

\begin{remark}
    A flat layout is an example of the more general {\it layout} of Definition \ref{definitionofnestedlayout}, so we sometimes refer to a flat layout $L$ as a {\it layout}.
\end{remark}

\begin{example}
    Here are some examples of flat layouts:
    \begin{align*} 
    L_1 & = (2,2,2):(1,2,4),\\
    L_2 & = (128):(5),\\
    L_3 & = (16,12,512,512):(0,0,1,512),\\
    L_4 & = (6,1,12,2,2):(2,0,12,144,1),\\
    L_5 & = () : ().
    \end{align*}
\end{example}

\begin{example} We can depict the layout $L = (8):(5)$ as

\bigskip 

\begin{centering}

\begin{tikzpicture}[x={(0cm,-1cm)},y={(1cm,0cm)},every node/.style={minimum size=1cm, outer sep=0pt}]

\node[fill=gray!20] at (0,0) {0};
\node[fill=gray!20] at (0,1) {5};
\node[fill=gray!20] at (0,2) {10};
\node[fill=gray!20] at (0,3) {15};
\node[fill=gray!20] at (0,4) {20};
\node[fill=gray!20] at (0,5) {25};
\node[fill=gray!20] at (0,6) {30};
\node[fill=gray!20] at (0,7) {35};
\draw[color=black,thick,shift={(-0.5,-0.5)}] (0,0) grid (1,8);

\node[anchor = east] at (0,-1) {$L = $};
\end{tikzpicture}

\end{centering}

\bigskip

\noindent and we can depict the layout $L = (3,5):(2,10)$ as

\bigskip 

\begin{centering}

\begin{tikzpicture}[x={(0cm,-1cm)},y={(1cm,0cm)},every node/.style={minimum size=1cm, outer sep=0pt}]

\node[fill=gray!20] at (0,0) {0};
\node[fill=gray!20] at (0,1) {10};
\node[fill=gray!20] at (0,2) {20};
\node[fill=gray!20] at (0,3) {30};
\node[fill=gray!20] at (0,4) {40};
\node[fill=gray!20] at (1,0) {2};
\node[fill=gray!20] at (1,1) {12};
\node[fill=gray!20] at (1,2) {22};
\node[fill=gray!20] at (1,3) {32};
\node[fill=gray!20] at (1,4) {42};
\node[fill=gray!20] at (2,0) {4};
\node[fill=gray!20] at (2,1) {14};
\node[fill=gray!20] at (2,2) {24};
\node[fill=gray!20] at (2,3) {34};
\node[fill=gray!20] at (2,4) {44};
\draw[color=black,thick,shift={(-0.5,-0.5)}] (0,0) grid (3,5);

\node[anchor = east] at (1,-1) {$L = $};
\end{tikzpicture}

\end{centering}

\bigskip 

\noindent We make precise the sense in which these pictures represent the associated layout in Remark \ref{picturesoflayouts}.

\end{example}

\noindent Perhaps the most important examples of flat layouts are the column-major and row-major layouts, which we define below.

\begin{definition}\label{definitionofcolumnmajor} Suppose 
\[
L = (s_1,\dots,s_m):(d_1,\dots,d_m)
\]
is a flat layout. We say $L$ is {\it column-major} if 
    \[
    d_i = s_1 \cdots s_{i-1}
    \]
for each $1 \leq i \leq m$. We say $L$ is {\it row-major} if  
\[
d_i = s_{i+1} \cdots s_m.
\]
for each $1 \leq i \leq m$.
\end{definition}

\begin{example} The layout

\bigskip

\begin{centering} 

\begin{tikzpicture}[x={(0cm,-1cm)},y={(1cm,0cm)},every node/.style={minimum size=1cm, outer sep=0pt}]

\node[fill=gray!20] at (0,0) {0};
\node[fill=gray!20] at (0,1) {3};
\node[fill=gray!20] at (0,2) {6};
\node[fill=gray!20] at (0,3) {9};
\node[fill=gray!20] at (1,0) {1};
\node[fill=gray!20] at (1,1) {4};
\node[fill=gray!20] at (1,2) {7};
\node[fill=gray!20] at (1,3) {10};
\node[fill=gray!20] at (2,0) {2};
\node[fill=gray!20] at (2,1) {5};
\node[fill=gray!20] at (2,2) {8};
\node[fill=gray!20] at (2,3) {11};
\draw[color=black,thick,shift={(-0.5,-0.5)}] (0,0) grid (3,4);

\node[anchor=east] at (1,-1) {$L = (3,4):(1,3) = $};

\end{tikzpicture}

\end{centering}

\bigskip

\noindent is column-major, while the layout 

\bigskip

\begin{centering} 

\begin{tikzpicture}[x={(0cm,-1cm)},y={(1cm,0cm)},every node/.style={minimum size=1cm, outer sep=0pt}]

\node[fill=gray!20] at (0,0) {0};
\node[fill=gray!20] at (0,1) {1};
\node[fill=gray!20] at (0,2) {2};
\node[fill=gray!20] at (0,3) {3};
\node[fill=gray!20] at (1,0) {4};
\node[fill=gray!20] at (1,1) {5};
\node[fill=gray!20] at (1,2) {6};
\node[fill=gray!20] at (1,3) {7};
\node[fill=gray!20] at (2,0) {8};
\node[fill=gray!20] at (2,1) {9};
\node[fill=gray!20] at (2,2) {10};
\node[fill=gray!20] at (2,3) {11};
\draw[color=black,thick,shift={(-0.5,-0.5)}] (0,0) grid (3,4);

\node[anchor=east] at (1,-1) {$L = (3,4):(4,1) = $};

\end{tikzpicture}

\end{centering}

\bigskip

\noindent is row-major. These pictures make clear the reason for the terminology: If $L$ is a column-major layout of rank $2$, then the columns of $L$ are contiguous, and if $L$ is a row-major layout of  rank $2$, then the rows of $L$ are contiguous.

\end{example} 

\begin{example}
The layouts
    \begin{align*}
    L_1 & = (2,2,2,2,2):(1,2,4,8,16)\\
    L_2 & = (3,128,128):(1,3,384)\\
    L_3 & = (64):(1)
    \end{align*}
    are column-major, while the layouts
    \begin{align*}
    L_4 & = (2,2,2,2,2):(16,8,4,2,1)\\
    L_5 & = (3,128,128):(16384,128,1)\\
    L_6 & = (64):(1)
    \end{align*}
    are row-major. 
\end{example}

Now that we've seen a few examples, lets define some important attributes of flat layouts.

\begin{definition}\label{definitionofcosize} Suppose $L = (s_1,\dots,s_m):(d_1,\dots,d_m)$ is a flat layout.
\begin{itemize} 
\item The {\it rank} of $L$ is
\[\rank(L) = m.\]
\item The {\it size} of $L$ is 
    \[
    \size(L) = \prod_{i=1}^m s_i.
    \]
\item The {\it cosize} of $L$ is 
    \[
    \cosize(L) = 1 + \sum_{i=1}^m (s_i-1) \cdot d_i.
    \]
    \item For any $1 \leq i \leq \rank(L)$, the $i${\it th mode} of $L$ is the pair 
    \[
    \mode_i(L) = s_i:d_i.
    \]
\end{itemize}
\end{definition}

\begin{example}
    The layout 
    \[L = (64,32):(1,128)\]
    has $\rank(L) = 2$, $\size(L) = 2048$, and $\cosize(L) = 4032$. The modes of $L$ are 
    \begin{align*} \mode_1(L) & = 64:1\\
    \mode_2(L) & =  32:128.
    \end{align*}
\end{example}

\begin{example}
The layout
\[
L = (3,8,8,8) : (1,3,24,192).
\]
has $\rank(L) = 4$, $\size(L) = 1536$, and $\cosize(L) = 1536$. The layout $L$ has four modes, for example $\mode_3(L) = 8:24$. 
\end{example}

\begin{example}
The layout 
\[
L = (2,2,2,2,2) : (160,80,40,20,10).
\]
has $\rank(L) = 5$, $\size(L) = 32$, and $\cosize(L) = 311$. The layout $L$ has $5$ modes, for example $\mode_5(L) = 2:10$. 
\end{example}

If $L$ is a flat layout, then $L$ encodes a {\it coordinate function} $\varphi_L$. The coordinate function of $L$ is a multi-dimensional to one-dimensional transformation given by taking a dot product with $\stride(L)$. Recall that if $S = (s_1,\dots,s_m)$ is a tuple of positive integers, then 
\[
[0,S) = [0,s_1) \times \cdots \times [0,s_m)
\]
is the set of all tuples $(x_1,\dots,x_m)$ such that $0 \leq x_i < s_i$. In particular, if $S = ()$ is the empty tuple, then $[0,S) = \{()\}$. 

\begin{construction}[Coordinate functions]\label{constructionofcoordinatefunctions}
If
\[
L = (s_1,\dots,s_m):(d_1,\dots,d_m)
\]
is a flat layout, then the {\it coordinate function} of $L$ is the function 
\[\begin{tikzcd}
{[}0,\shape(L){)} \ar[r,"\varphi_L"] &  \mathbb{Z}
\end{tikzcd} \]
given by 
\begin{align*}
\varphi_L(x_1,\dots,x_m) & = (x_1,\dots,x_m) \cdot (d_1,\dots,d_m)\\
& = x_1 d_1 + \cdots + x_m d_m.
\end{align*}
    The coordinate function $\varphi_L$ factors through the inclusion $[0,\cosize(L)) \subset \mathbb{Z}$, and we write 
    \[\begin{tikzcd} {[}0,\shape(L){)} \ar[rr,"\varphi_L^{\cosize(L)}"] & & {[}0,\cosize(L){)}\subset \mathbb{Z}
    \end{tikzcd} \]
    for the factored map. More generally, for any $N \geq \cosize(L)$, we write $\varphi_L^N$ for the factorization of $\varphi_L$ through $[0,N) \subset \mathbb{Z}$, and by a mild abuse of terminology, we refer to such a map $\varphi_L^N$ as the {\it coordinate function} of $L$.
\end{construction}

\begin{example}\label{layoutfunctionexample1}
    If $L = (2,3):(1,5)$, then the coordinate function
    \[
    \varphi_L:[0,2) \times [0,3) \to \mathbb{Z}
    \]
    is given by 
    \begin{align*}
    \varphi_L(0,0) & = (0,0) \cdot (1,5) = 0,\\
    \varphi_L(1,0) & = (1,0) \cdot (1,5) = 1,\\
    \varphi_L(0,1) & = (0,1) \cdot (1,5) = 5,\\
    \varphi_L(1,1) & = (1,1) \cdot (1,5) = 6,\\
    \varphi_L(0,2) & = (0,2) \cdot (1,5) = 10,\\
    \varphi_L(1,2) & = (1,2) \cdot (1,5) = 11.
    \end{align*}
\end{example}

\begin{example}\label{layoutfunctionexample1}
    If $L = (2,2):(64,2)$, then the coordinate function
    \[
    \varphi_L:[0,2) \times [0,2) \to \mathbb{Z}
    \]
    is given by 
    \begin{align*}
    \varphi_L(0,0) & = (0,0) \cdot (64,2) = 0,\\
    \varphi_L(1,0) & = (1,0) \cdot (64,2) = 64,\\
    \varphi_L(0,1) & = (0,1) \cdot (64,2) = 2,\\
    \varphi_L(1,1) & = (1,1) \cdot (64,2) = 66.\\
    \end{align*}
\end{example}

\begin{example}
    If $E = ():()$ is the empty layout, then the coordinate function of $E$ is the map 
    \[
    \varphi_E: \{()\} \to \mathbb{Z}
    \]
    given by 
    \[
    \varphi(()) = 0.
    \]
\end{example}

\begin{remark}\label{picturesoflayouts} We can now, for example, give a precise description of the sense in which the image 
\bigskip

\begin{centering} 

\begin{tikzpicture}[x={(0cm,-1cm)},y={(1cm,0cm)},every node/.style={minimum size=1cm, outer sep=0pt}]

\node[fill=gray!20] at (0,0) {0};
\node[fill=gray!20] at (0,1) {10};
\node[fill=gray!20] at (0,2) {20};
\node[fill=gray!20] at (0,3) {30};
\node[fill=gray!20] at (0,4) {40};
\node[fill=gray!20] at (1,0) {2};
\node[fill=gray!20] at (1,1) {12};
\node[fill=gray!20] at (1,2) {22};
\node[fill=gray!20] at (1,3) {32};
\node[fill=gray!20] at (1,4) {42};
\node[fill=gray!20] at (2,0) {4};
\node[fill=gray!20] at (2,1) {14};
\node[fill=gray!20] at (2,2) {24};
\node[fill=gray!20] at (2,3) {34};
\node[fill=gray!20] at (2,4) {44};
\draw[color=black,thick,shift={(-0.5,-0.5)}] (0,0) grid (3,5);

\end{tikzpicture}

\end{centering}

\bigskip

\noindent depicts the layout $L = (3,5):(2,10)$: The $(i,j)$th cell of the grid is labeled by the value 
\[\varphi_L(i,j) = (i,j) \cdot (2,10) = 2i + 10j\]
of the coordinate function of $L$. 
\end{remark}

In practice, the most important invariant of a flat layout $L$ is its {\it layout function} $\Phi_L$, which is obtained by precomposing the coordinate function 
\[
\varphi_L:[0,S) \to \mathbb{Z}
\]
with the inverse of the {\it colexicographic isomorphism}
\[
\colex_S:[0,S) \to [0,\size(S)).
\]

\begin{definition}\label{definitionofcolex}
    Suppose $S = (s_1,\dots,s_m)$ is a tuple of positive integers and recall that 
    \[
    [0,S) = [0,s_1) \times \cdots \times [0,s_m).
    \]
    The {\it colexicographic isorphism} is the map 
    \[ \begin{tikzcd}
    {[}0,S{)} \ar[rr,"\colex_{S}"] & & \text{[}0,\size(S)\text{)}\\
    (x_1,\dots,x_m) \ar[rr,mapsto] & & \sum_{i=1}^m s_1 \cdots s_{i-1} x_i.
    \end{tikzcd} \]
    We sometimes write $\colex = \colex_{S}$ when the tuple $S$ is clear from context. The inverse of the colexicographic isomorphism is the map 
    \[ \begin{tikzcd}
    \text{[}0,\size(S)\text{)} \ar[rr,"\colex_{S}^{-1}"] & & {[}0,S{)}
    \end{tikzcd} \]
    given by 
    \[
    \colex_S^{-1}(x) = (x_1,\dots,x_m)
    \]
    where 
    \[
    x_i = \left\lfloor \dfrac{x}{s_1 \cdots s_{i-1}} \right\rfloor \mod s_i.
    \]
    Note that if $S = ()$ is the empty tuple, then
    \[
    \colex_{()}:\{()\} \to \{0\}
    \]
    and 
    \[
    \colex_{()}^{-1}:\{0\} \to \{()\}
    \]
    are the canonical isomorphisms.
\end{definition}

\begin{construction}[Layout functions]\label{constructionoflayoutfunctions}
    If
    \[L = (s_1,\dots,s_m):(d_1,\dots,d_m),\]
    is a flat layout, then the {\it layout function of }$L$ is the composite 
    \[ \begin{tikzcd} 
    {[}0,\size(L){)} \ar[rr,"\Phi_L"] \ar[dr,swap,"\colex^{-1}_{\shape(L)}"] & & \mathbb{Z}. \\
     & {[}0,\shape(L){)} \ar[ur,swap,"\varphi_L"] & 
    \end{tikzcd} \]
    Explicitly, $\Phi_L$ is given by 
    \[
    \Phi_L(x) = x_1 d_1 + \cdots + x_m d_m
    \]
    where 
    \[
    x_i = \left \lfloor \dfrac{x}{s_1 \cdots s_{i-1}} \right \rfloor \mod s_i.
    \]
    The layout function $\Phi_L$ factors through the inclusion $[0,\cosize(L)) \subset \mathbb{Z}$, and we write 
    \[\begin{tikzcd} {[}0,\size(L){)} \ar[rr,"\Phi_L^{\cosize(L)}"] & & {[}0,\cosize(L){)}\subset \mathbb{Z}
    \end{tikzcd} \]
    for the factored map. More generally, for any $N \geq \cosize(L)$, we write $\Phi_L^N$ for the factorization of $\Phi_L$ through $[0,N) \subset \mathbb{Z}$, and by a mild abuse of terminology, we refer to such a map $\varphi_L^N$ as the {\it layout function} of $L$.
\end{construction}

\begin{example}\label{layoutfunctionexample1}
    If $L = (2,3):(1,5)$, then the layout function
    \[
    \Phi_L:[0,6) \to \mathbb{Z}
    \]
    is given by 
    \begin{align*}
    \Phi_L(0) & = (0,0) \cdot (1,5) = 0,\\
    \Phi_L(1) & = (1,0) \cdot (1,5) = 1,\\
    \Phi_L(2) & = (0,1) \cdot (1,5) = 5,\\
    \Phi_L(3) & = (1,1) \cdot (1,5) = 6,\\
    \Phi_L(4) & = (0,2) \cdot (1,5) = 10,\\
    \Phi_L(5) & = (1,2) \cdot (1,5) = 11.
    \end{align*}
\end{example}

\begin{example}\label{layoutfunctionexample1}
    If $L = (2,2):(64,2)$, then the layout function
    \[
    \Phi_L:[0,4) \to \mathbb{Z}
    \]
    is given by 
    \begin{align*}
    \Phi_L(0) & = (0,0) \cdot (64,2) = 0,\\
    \Phi_L(1) & = (1,0) \cdot (64,2) = 64,\\
    \Phi_L(2) & = (0,1) \cdot (64,2) = 2,\\
    \Phi_L(3) & = (1,1) \cdot (64,2) = 66.\\
    \end{align*}
\end{example}

\begin{example}\label{layoutfunctionexample2}
    If $L = (4,2,2):(3,3,100)$, then for example, the layout function of $L$ satisfies
    \begin{align*}
    \Phi_L(7) & = (3,1,0) \cdot (3,3,100) = 12\text{,}\\
    \Phi_L(9) & = (1,0,1) \cdot (3,3,100) = 103.
    \end{align*}
\end{example}
\begin{example}
    If $E = ():()$ is the empty layout, then 
    \[
    \Phi_E:\{0\} \to \mathbb{Z}
    \]
    is given by 
    \[
    \Phi_E(0) = 0.
    \]
\end{example}

\begin{example}
    If $L$ is any flat layout, then the layout function $\Phi_L$ of $L$ satisfies
    \[
    \Phi_L(0) = 0.
    \]
\end{example}

\begin{remark}
    If $S = (s_1,\dots,s_m)$ is a tuple of positive integers, then the colexicographic isomorphism
    \[\begin{tikzcd}
    {[}0,S{)} \ar[rr,"\colex_S"] & & {[}0,\size(S){)}
    \end{tikzcd} \]
    is equal to the coordinate function $\varphi_{L}^{\cosize(L)}$ of the column major layout 
    \[
    L = (s_1,s_2,\dots,s_m):(1,s_1,\dots,s_1 \cdots s_{m-1}). 
    \]
    This implies that if a flat layout $L$ is column-major, then 
    \begin{align*}
        \Phi_L^{\cosize(L)} & = \varphi_L^{\cosize(L)} \circ \colex_{\shape(L)}^{-1} \\
        & =  \varphi_L^{\cosize(L)} \circ \left(\varphi_L^{\cosize(L)}\right)^{-1}\\
        & = \id_{{[}0,\size(L){)}}
    \end{align*}
    is the identity map on $[0,\size(L))$.
\end{remark}


\begin{remark}
    There exist distinct layouts $A \neq B$ with $\Phi_A = \Phi_{B}$. For example, the layouts 
    \begin{align*}
        A & = (7,7) : (1,7)\\
        B & = (49) : (1)
    \end{align*}
    are not equal, yet $\Phi_A = \Phi_{B}$. Later, we will characterize precisely when two flat layouts $A$ and $B$ have the same layout function (see Proposition \ref{coalesceproposition}).
\end{remark}

Before moving on to our discussion of layout operations, we need to define the notion of non-degeneracy.

\begin{definition}\label{definitionofnondegenerateflatlayout}
    Suppose 
    \[
    L = (s_1,\dots,s_m):(d_1,\dots,d_m)
    \]
    is a flat layout. We say $L$ is {\it non-degenerate} if for any $1 \leq i \leq m$, we have 
    \[
    s_i = 1 \quad \Rightarrow \quad d_i = 0.
    \]
\end{definition}

\begin{example}
    The layouts 
    \begin{align*}
        L_1 & = (4,1):(1,0)\\
        L_2 & = (8,1,8,1):(2,0,16,0)
    \end{align*}
    are non-degenerate, while the layouts 
    \begin{align*}
    L_3 & = (4,1):(1,4)\\
    L_4 & = (8,1,8,1):(2,16,16,256)
    \end{align*}
    are degenerate.
\end{example}

\begin{observation}
There is no real loss of generality in assuming that a layout $L$ is non-degenerate. More precisely, if 
\begin{align*}
    L & = (s_1,\dots,s_m):(d_1,\dots,d_m)\\
    L' & = (s_1,\dots,s_m):(d_1',\dots,d_m')
\end{align*}
are flat layouts with the same shape, and $d_i = d_i'$ whenever $s_i >1$, then $\varphi_L = \varphi_{L'}$, and $\Phi_{L} = \Phi_{L'}$. In particular, we are free to set $d_i = 0$ whenever $s_i = 1$ without altering the coordinate function or layout function of $L$. 
\end{observation}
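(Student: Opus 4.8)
The plan is to prove the two asserted equalities $\varphi_L = \varphi_{L'}$ and $\Phi_L = \Phi_{L'}$ directly from the definitions, with the statement about layout functions following formally from the statement about coordinate functions, and with the closing "in particular" clause being an immediate special case.

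First I would check $\varphi_L = \varphi_{L'}$. Since $\shape(L) = \shape(L') = S$, both coordinate functions are maps out of the same domain $[0,S)$, so it suffices to compare their values. Fix $(x_1,\dots,x_m) \in [0,S)$, so that $0 \le x_i < s_i$ for each $i$, and compare the summands of $\varphi_L(x_1,\dots,x_m) = \sum_{i=1}^m x_i d_i$ and $\varphi_{L'}(x_1,\dots,x_m) = \sum_{i=1}^m x_i d_i'$ mode by mode. For an index $i$ with $s_i > 1$ we have $d_i = d_i'$ by hypothesis, so $x_i d_i = x_i d_i'$. For an index $i$ with $s_i = 1$, the constraint $0 \le x_i < s_i = 1$ forces $x_i = 0$, so $x_i d_i = 0 = x_i d_i'$ no matter what $d_i$ and $d_i'$ are. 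Summing over $i$ gives $\varphi_L(x_1,\dots,x_m) = \varphi_{L'}(x_1,\dots,x_m)$, hence $\varphi_L = \varphi_{L'}$ as functions $[0,S) \to \mathbb{Z}$ (and likewise their factorizations $\varphi_L^N = \varphi_{L'}^N$ through $[0,N)$ agree for any common $N$).

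Next I would deduce $\Phi_L = \Phi_{L'}$. By Construction \ref{constructionoflayoutfunctions}, $\Phi_L = \varphi_L \circ \colex_{\shape(L)}^{-1}$ and $\Phi_{L'} = \varphi_{L'} \circ \colex_{\shape(L')}^{-1}$; since $\shape(L) = \shape(L') = S$, both composites use the same colexicographic isomorphism $\colex_S^{-1} \colon [0,\size(S)) \to [0,S)$, so the already-established equality $\varphi_L = \varphi_{L'}$ gives $\Phi_L = \Phi_{L'}$. Finally, the "in particular" assertion is the special case in which $L'$ is obtained from $L$ by replacing $d_i$ with $0$ for every $i$ with $s_i = 1$: this $L'$ trivially satisfies $d_i = d_i'$ whenever $s_i > 1$, so the coordinate and layout functions are unchanged, and $L'$ is non-degenerate in the sense of Definition \ref{definitionofnondegenerateflatlayout}. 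There is no real obstacle in this argument; the only point worth stating carefully is that a coordinate $x_i$ ranging over $[0,s_i)$ with $s_i = 1$ is pinned to $0$, which is precisely why the stride in a size-one mode is invisible to $\varphi_L$ and $\Phi_L$.
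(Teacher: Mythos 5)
Your proof is correct, and since the paper leaves this observation unproved, your direct verification is exactly the intended argument: the key point that $x_i \in [0,s_i) = [0,1)$ forces $x_i = 0$ is the same one the paper uses in its proof of Lemma~\ref{squeezelemma}. The deduction of $\Phi_L = \Phi_{L'}$ by precomposing the common $\colex_S^{-1}$ is also handled correctly.
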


\subsection{Basic operations}

Having established the basic vocabulary for flat layouts, we turn to the operations they support. In this section, we define basic operations that will be needed to construct more sophisticated operations such as {\it coalesce}, {\it complement}, and {\it composition}.

\subsubsection{Restriction}

If $L$ is a flat layout, it is often useful to restrict to a subset of the modes of $L$. Recall that for a non-negative integer $m$, we write 
\[
\langle m \rangle = \{1,\dots,m\}.
\]

\begin{definition}
    Suppose 
    \[L = (s_1,\dots,s_m) : (d_1,\dots,d_m)\]
    is a flat layout, and suppose 
    \[
    I = \{i_1<\dots<i_k\} \subset \langle m \rangle
    \]
    is a subset. We define the {\it restriction of} $L$ {\it to} $I$ to be the flat layout 
    \[
    L \mid_I = (s_{i_1},\dots,s_{i_k}):(d_{i_1},\dots,d_{i_k}).
    \]
\end{definition}

\begin{example}
    If 

\[
\begin{tikzpicture}[x={(0cm,-1cm)},y={(1cm,0cm)},every node/.style={minimum size=1cm, outer sep=0pt}]

\node[fill=gray!20] at (0,0) {0};
\node[fill=gray!20] at (0,1) {5};
\node[fill=gray!20] at (0,2) {10};
\node[fill=gray!20] at (0,3) {15};
\node[fill=gray!20] at (0,4) {20};
\node[fill=gray!20] at (0,5) {25};
\node[fill=gray!20] at (1,0) {10};
\node[fill=gray!20] at (1,1) {15};
\node[fill=gray!20] at (1,2) {20};
\node[fill=gray!20] at (1,3) {25};
\node[fill=gray!20] at (1,4) {30};
\node[fill=gray!20] at (1,5) {35};
\node[fill=gray!20] at (2,0) {20};
\node[fill=gray!20] at (2,1) {25};
\node[fill=gray!20] at (2,2) {30};
\node[fill=gray!20] at (2,3) {35};
\node[fill=gray!20] at (2,4) {40};
\node[fill=gray!20] at (2,5) {45};
\draw[color=black,thick,shift={(-0.5,-0.5)}] (0,0) grid (3,6);

\node[anchor = east] at (1,-1) {$L = (3,6):(10,5) = $};
\end{tikzpicture}
\]
and $I = \{2\}$, then 

\[
\begin{tikzpicture}[x={(0cm,-1cm)},y={(1cm,0cm)},every node/.style={minimum size=1cm, outer sep=0pt}]
\node[fill=gray!20] at (0,0) {0};
\node[fill=gray!20] at (0,1) {5};
\node[fill=gray!20] at (0,2) {10};
\node[fill=gray!20] at (0,3) {15};
\node[fill=gray!20] at (0,4) {20};
\node[fill=gray!20] at (0,5) {25};
\draw[color=black,thick,shift={(-0.5,-0.5)}] (0,0) grid (1,6);

\node[anchor = east] at (0,-1) {$L\mid_I  = (6):(5) = $};
\end{tikzpicture}
\]
    
\end{example}


\begin{example}
    If 
    \[L = (3,8,8,8):(1,3,24,192)\]
    and $I = \{1,2,3\}$, then 
    \[
    L\mid_I = (3,8,8):(1,3,24).
    \]
\end{example}

\begin{example}
    If 
    \[L = (s_1,\dots,s_m):(d_1,\dots,d_m)\]
    is a flat layout and $I = \langle m \rangle$, then
    \[
    L \mid_I = L.
    \]
\end{example}

\begin{example} 
If 
\[L = (s_1,\dots,s_m):(d_1,\dots,d_m)\]
is a flat layout and $I = \varnothing$ is the empty set, then 
\[
L \mid_I = ():()
\]
is the empty layout. 
\end{example}

\subsubsection{Squeeze}

If $L$ is a flat layout, then the operation $L \mapsto \squeeze(L)$ removes all modes $s_i:d_i$ of $L$ where $s_i = 1$. 

\begin{construction}\label{Constructionofsqueeze}
    Suppose 
    \[L = (s_1,\dots,s_m):(d_1,\dots,d_m)\]
    is a flat layout, and let
    \[
    I = \{ i \in \langle m \rangle \mid s_i > 1\}
    \]
    be the collection of indices whose corresponding shape entry is not equal to $1$. We define
    \[
    \squeeze(L) = L \mid_I.
    \]
\end{construction}

\begin{example}
    If 
    \[L = (64,64,1):(1,64,0),\]
    then 
    \[
    \squeeze(L) = (64,64):(1,64).
    \]
\end{example}

\begin{example}
    If 
    \[
    L = (64,64,1,32,1):(2048,32,0,1,0)
    \]
    then 
    \[
    \squeeze(L) = (64,64,32):(2048,32,1).
    \]
\end{example}

\begin{example}
    If $L$ is a flat layout, then
    \[
    \squeeze(L) = L
    \]
    if and only if $\shape(L)$ contains no entries equal to $1$.
\end{example}

\begin{example}
    If $L$ is a flat layout, then
    \[
    \squeeze(L) = ():()
    \]
    is the empty layout if and only if all entries of $\shape(L)$ are equal to $1$.
\end{example}

\noindent An essential property of this construction is that $L \mapsto \squeeze(L)$ leaves the layout function of $L$ unchanged. 

\begin{lemma}\label{squeezelemma}
    If $L$ is a flat layout, then 
    \begin{enumerate}
    \item $\size(\squeeze(L)) = \size(L)$,
    \item $\cosize(\squeeze(L)) = \cosize(L)$, and 
    \item $\Phi_{\squeeze(L)} = \Phi_L$.
    \end{enumerate}
\end{lemma}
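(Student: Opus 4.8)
The plan is to treat the three claims in turn, writing $L = (s_1,\dots,s_m):(d_1,\dots,d_m)$ and $I = \{i \in \langle m \rangle \mid s_i > 1\} = \{i_1 < \cdots < i_k\}$ for the index set of Construction \ref{Constructionofsqueeze}, so that $\squeeze(L) = L\mid_I = (s_{i_1},\dots,s_{i_k}):(d_{i_1},\dots,d_{i_k})$ and $s_i = 1$ for every $i \notin I$.

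Claims (1) and (2) follow directly from the definitions of $\size$ and $\cosize$: each mode discarded by $\squeeze$ has shape entry $1$, hence contributes a factor $1$ to the size and a summand $(s_i - 1)d_i = 0$ to the cosize. So $\size(\squeeze(L)) = \prod_{i \in I} s_i = \prod_{i=1}^{m} s_i = \size(L)$ and $\cosize(\squeeze(L)) = 1 + \sum_{i \in I}(s_i - 1)d_i = 1 + \sum_{i=1}^{m}(s_i - 1)d_i = \cosize(L)$.

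For claim (3), I would use the explicit mixed-radix formula for $\Phi_L$ from Construction \ref{constructionoflayoutfunctions}. By (1) the two functions have the same domain $[0,\size(L))$, so fix $x$ there and let $x_i = \lfloor x/(s_1\cdots s_{i-1})\rfloor \bmod s_i$ be the digits of $L$. Two elementary observations do the job. First, if $s_i = 1$ then $x_i = 0$, so those terms vanish and $\Phi_L(x) = \sum_{i \in I} x_i d_i$. Second, for each $j$ the partial products that define the $i_j$-th digit of $L$ and the $j$-th digit of $\squeeze(L)$ agree, namely $s_1 \cdots s_{i_j - 1} = s_{i_1}\cdots s_{i_{j-1}}$, since they differ only by factors $s_i = 1$ with $i < i_j$ and $i \notin I$; hence the $j$-th digit of $\squeeze(L)$ at $x$ equals $x_{i_j}$. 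Combining, $\Phi_{\squeeze(L)}(x) = \sum_{j=1}^{k} x_{i_j} d_{i_j} = \sum_{i \in I} x_i d_i = \Phi_L(x)$, and since $x$ was arbitrary, $\Phi_{\squeeze(L)} = \Phi_L$. Equivalently, one could observe that dropping singleton factors is compatible with both the coordinate function $\varphi$ and the colexicographic isomorphism, and conclude via the factorization $\Phi_L = \varphi_L \circ \colex_{\shape(L)}^{-1}$.

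I do not anticipate any real obstacle here: the only care needed is the bookkeeping translating between $I = \{i_1 < \cdots < i_k\}$ and $\langle k \rangle$, together with the partial-product identity above. As a final remark, combined with (1) and (2) this also shows that the factored layout functions satisfy $\Phi_{\squeeze(L)}^{N} = \Phi_L^{N}$ for every $N \geq \cosize(L)$.
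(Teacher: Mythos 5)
Your proof is correct and follows essentially the same approach as the paper: parts (1) and (2) are the same direct computations, and for part (3) both arguments rest on the observations that digits with $s_i = 1$ are forced to be $0$ and that the remaining digits are unchanged because the omitted partial-product factors all equal $1$. The only (cosmetic) difference is that the paper removes one trivial mode at a time and implicitly iterates, whereas you handle all of them in a single digit-matching computation.
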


\begin{proof} Let 
\[
I = \{i_1<\dots<i_{k}\} \subset \langle m \rangle
\] denote the collection of indices with $s_{i_j} > 1$, so that 
\[
    \squeeze(L) = (s_{i_1},\dots,s_{i_{k}}):(d_{i_1},\dots,d_{i_{k}}).
\]
For the first assertion, we compute
\begin{align*} \size(\squeeze(L))  = \prod_{j=1}^{k} s_{i_j} = \left( \prod_{j=1}^{k} s_{i_j} \right)\cdot\left( \prod_{\langle m \rangle \setminus I} 1 \right) = \prod_{i=1}^m s_i = \size(L).
\end{align*}
For the second assertion, we compute
\begin{align*} \cosize(\squeeze(L))  = 1 + \sum_{j=1}^{k} (s_{i_j} - 1)\cdot d_{i_j}
& = 1 + \sum_{j=1}^{k} (s_{i_j}-1)\cdot d_{i_j} + \left( \sum_{\langle m \rangle \setminus I} 0 \right)\\
& = 1 + \sum_{i=1}^m (s_i-1)\cdot d_i\\
& = \cosize(L).
\end{align*}
For the third assertion, it suffices to show that removing a mode of the form $1:d_i$ from a flat layout leaves the layout function unchanged. Suppose $L=(s_1,\dots,s_m):(d_1,\dots,d_m)$, and suppose that some $s_i=1$. Let 
    \begin{align*} L' & = (s_1',\dots,s_{m-1}'):(d_1',\dots,d_{m-1}')
    \end{align*}
    denote the flat layout obtained from $L$ by removing its $i$th mode, so that 
    \[
    s_j' = \begin{cases} s_j & j < i\\
    s_{j+1} & i \leq j < m,
    \end{cases}
    \hspace{0.2in}\text{ and } \hspace{0.2in} d_j'  = \begin{cases} d_j & j < i\\
    d_{j+1} & i \leq j < m.
    \end{cases}
    \]
    
    \noindent The layout function for $L$ is given by 
    \[
    \Phi_L(x) = x_1 d_1 + \cdots + x_m d_m
    \]
    where $x_j = \left\lfloor \dfrac{x}{s_1\cdots s_{j-1}} \right \rfloor \mod s_j$, and the layout function for $L'$ is given by 
    \[
    \Phi_{L'}(x) = x_1' d_1' + \cdots + x_{m-1}'d_{m-1}'
    \]
    where  $x_j' = \left\lfloor \dfrac{x}{s_1'\cdots s_{j-1}'} \right \rfloor \mod s_j'$. We observe that 
    \[
    x_j' = \begin{cases} x_j & j < i\\
    x_{j+1} & i \leq j < m,
    \end{cases}
    \]
    and since $x_i \in {[}0,1{)}$ is necessarily $0$, we have 
    \begin{align*}
        \Phi_{L}(x) & = x_1 d_1 + \cdots + x_m d_m \\
        & = x_1 d_1 + \cdots + x_{i-1} d_{i-1} + x_{i+1} d_{i+1} + \cdots + x_m d_m \\
        & = x_1' d_1' + \cdots + x_{m-1}'d_{m-1}'\\
        & = \Phi_{L'}(x).
    \end{align*}
\end{proof}

\subsubsection{Filter zeros}
If $L$ is a flat layout, then the operation $L \mapsto \filter(L)$ removes all modes $s_i:d_i$ with $d_i = 0$.
\begin{definition}\label{filterzerosdefinition}
    Suppose 
    \[L = (s_1,\dots,s_m):(d_1,\dots,d_m)\]
    is a flat layout, and let 
    \[I = \{i \in \langle m \rangle \mid d_i >0 \}\]
    be the collection of indices whose corresponding stride entry is not equal to $0$. We define 
    \[
    \filter(L) = L \mid_I. 
    \]
\end{definition}

\begin{example}
    If 
    \[L = (64,8,8,128):(8,1,0,512)\]
    then 
    \[
    \filter(L) = (64,8,128):(8,1,512).
    \]
\end{example}

\begin{example}
    If 
\[
\begin{tikzpicture}[x={(0cm,-1cm)},y={(1cm,0cm)},every node/.style={minimum size=1cm, outer sep=0pt}]

\node[fill=gray!20] at (0,0) {0};
\node[fill=gray!20] at (0,1) {0};
\node[fill=gray!20] at (1,0) {12};
\node[fill=gray!20] at (1,1) {12};
\node[fill=gray!20] at (2,0) {24};
\node[fill=gray!20] at (2,1) {24};
\draw[color=black,thick,shift={(-0.5,-0.5)}] (0,0) grid (3,2);

\node[anchor=east] at (1,-1) {$L = (3,2):(12,0) =$ };
\end{tikzpicture}
\]
then 
\[
\begin{tikzpicture}[x={(0cm,-1cm)},y={(1cm,0cm)},every node/.style={minimum size=1cm, outer sep=0pt}]

\node[fill=gray!20] at (0,0) {0};
\node[fill=gray!20] at (1,0) {12};
\node[fill=gray!20] at (2,0) {24};

\draw[color=black,thick,shift={(-0.5,-0.5)}] (0,0) grid (3,1);

\node[anchor=east] at (1,-1) {$\filter(L) = (3):(12) =$ };
\end{tikzpicture}
\]
\end{example}

\begin{example}
    If
    \[
    L = (3,8,8,8):(16,0,0,0)
    \]
    then 
    \[
    \filter(L)  = (3):(16). 
    \]
\end{example}

\begin{example}
    If $L$ is a flat layout, then 
    \[
    \filter(L) = L
    \]
    if and only if all entries of $\stride(L)$ are nonzero. 
\end{example}

\begin{example}
    If $L$ is a flat layout, then 
    \[
    \filter(L) = ():()
    \]
    is the empty layout if and only if all entries of $\stride(L)$ are equal to $0$. 
\end{example}

\subsubsection{Permute}

Recall that if $X = (x_1,\dots,x_m)$ is a tuple of length $m$ and $\sigma \in \Sigma_m$ is a permutation, then we write
\[
X^\sigma = (x_{\sigma(1)},\dots,x_{\sigma(m)}).
\]
for the permutation of $X$ by $\sigma$.

\begin{definition}\label{definitionofpermutationofflatlayout} If $L=(s_1,\dots,s_m):(d_1,\dots,d_m)$ is a flat layout of rank $m$ and $\sigma \in \Sigma_m$ is a permutation, we define 
\begin{align*}
L^\sigma & =  \shape(L)^\sigma : \stride(L)^\sigma \\
& = (s_{\sigma(1)},\dots,s_{\sigma(m)}) : (d_{\sigma(1)},\dots,d_{\sigma(m)}).
\end{align*}
\end{definition}

\begin{example}
    If 
\[
\begin{tikzpicture}[x={(0cm,-1cm)},y={(1cm,0cm)},every node/.style={minimum size=1cm, outer sep=0pt}]

\node[fill=gray!20] at (0,0) {0};
\node[fill=gray!20] at (0,1) {2};
\node[fill=gray!20] at (1,0) {12};
\node[fill=gray!20] at (1,1) {14};
\node[fill=gray!20] at (2,0) {24};
\node[fill=gray!20] at (2,1) {26};
\node[fill=gray!20] at (3,0) {36};
\node[fill=gray!20] at (3,1) {38};
\draw[color=black,thick,shift={(-0.5,-0.5)}] (0,0) grid (4,2);

\node[anchor=east] at (1.5,-1) {$L = (4,2):(12,2) =$ };
\end{tikzpicture}
\]
    and $\sigma = (1\; 2) \in \Sigma_2$ is the transposition, then 

\[
\begin{tikzpicture}[x={(0cm,-1cm)},y={(1cm,0cm)},every node/.style={minimum size=1cm, outer sep=0pt}]

\node[fill=gray!20] at (0,0) {0};
\node[fill=gray!20] at (0,1) {12};
\node[fill=gray!20] at (0,2) {24};
\node[fill=gray!20] at (0,3) {36};
\node[fill=gray!20] at (1,0) {2};
\node[fill=gray!20] at (1,1) {14};
\node[fill=gray!20] at (1,2) {26};
\node[fill=gray!20] at (1,3) {38};
\draw[color=black,thick,shift={(-0.5,-0.5)}] (0,0) grid (2,4);

\node[anchor=east] at (0.5,-1) {$L^\sigma = (2,4):(2,12) =$ };
\end{tikzpicture}
\]
is the transposed layout.
\end{example}

\begin{example}
    If 
    \[L = (15,12,10):(240,1,24)\]
    and $\sigma = (1 \; 2 ) \in \Sigma_3$, then 
    \[
    L^\sigma = (12,15,10):(1,240,24).
    \]
\end{example}

\begin{example}
    If 
    \[L = (2,2,2,2,2):(1,2,4,8,16)\]
    and $\sigma = (1 \; 5)(3 \; 2 \; 4) \in \Sigma_5$, then 
    \[
    L^\sigma = (2,2,2,2,2):(16,8,2,4,1).
    \]
\end{example}

\begin{example}
    If 
    \[
    L = (s,\dots,s):(d,\dots,d)
    \]
    is a flat layout all of whose modes are equal, then for any $\sigma \in \Sigma_m$, we have 
    \[
    L^\sigma = L.
    \]
\end{example}

\subsubsection{Sort}

If $L$ is a flat layout, it is often useful to permute $L$ so that its modes are increasing, in the following sense.

\begin{definition}
We define a linear ordering on pairs $s:d$ of integers by 
\[
s:d \preceq s':d' \hspace{0.2in} \Leftrightarrow \hspace{0.2in} \begin{matrix} d < d' \text{, or}\\ 
d = d'\text{ and }s \leq s'.
\end{matrix}
\]

\end{definition}

\begin{example}
    We have 
    \[
    5:8 \preceq 4:12 \preceq 5:12.
    \]
\end{example}

\begin{definition}\label{definitionofsorted}
    Suppose $L$ is a flat layout. We say $L$ is {\it sorted} if for any $1 \leq i < \rank(L)$, we have 
    \[
    \mode_i(L) \preceq \mode_{i+1}(L).
    \]
\end{definition}

\begin{example}\label{sortedlayoutexample}
The layouts
\begin{align*}
L_1 & = (128,64,2,2):(1,128,8192,16384)\\
L_2 & = (2,2,2):(1,1,1)
\end{align*}
are sorted, while the layouts
\begin{align*}
L_3 & = (2,4,8,16):(64,1,2,4)\\
L_4 & = (5,32,16):(1,5,5)
\end{align*}
are not sorted.
\end{example}

\begin{example}
    The empty layout $E = ():()$ is sorted.
\end{example}

\begin{example}
    If 
    \[L = (s_1,\dots,s_m):(0,\dots,0)\]
    is a flat layout with all entries of $\stride(L)$ equal to $0$, then $L$ is sorted if and only if 
    \[
    s_1 \leq s_2 \leq \cdots \leq s_m.
    \]
\end{example}

Whether or not a flat layout $L$ is sorted is intimately related to the behavior of the layout function $\Phi_L$ of $L$, as described in the following lemma.
\begin{lemma}
    Suppose $L$ is a flat layout. If $\Phi_L$ is non-decreasing, then $L$ is sorted.
\end{lemma}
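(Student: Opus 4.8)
The plan is to argue by contraposition: assuming $L = (s_1,\dots,s_m):(d_1,\dots,d_m)$ is \emph{not} sorted, I will produce $x < y$ in $[0,\size(L))$ with $\Phi_L(x) > \Phi_L(y)$, so that $\Phi_L$ is not non-decreasing. Since $L$ is not sorted, there is an index $i$ with $\mode_i(L) \not\preceq \mode_{i+1}(L)$; unwinding the definition of $\preceq$, this means that either (a) $d_i > d_{i+1}$, or (b) $d_i = d_{i+1}$ and $s_i > s_{i+1}$. In both cases it suffices to understand $\Phi_L$ on the inputs whose colexicographic coordinates are concentrated in modes $i$ and $i+1$.

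The key step is a local evaluation of the layout function. Writing $P = s_1 \cdots s_{i-1}$ and $Q = s_1 \cdots s_i = P s_i$, and using the explicit formula $\Phi_L(n) = \sum_j n_j d_j$ with $n_j = \lfloor n/(s_1 \cdots s_{j-1}) \rfloor \bmod s_j$ from Construction \ref{constructionoflayoutfunctions}, one checks that for all $0 \le a < s_i$ and $0 \le b < s_{i+1}$ the integer $n = aP + bQ$ lies in $[0,\size(L))$ and has $n_i = a$, $n_{i+1} = b$, and $n_j = 0$ otherwise; hence $\Phi_L(aP + bQ) = a d_i + b d_{i+1}$. This reduces each case to a one-line comparison. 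In case (a), the inputs $x = P$ (from $(a,b) = (1,0)$) and $y = Q$ (from $(a,b) = (0,1)$) satisfy $x < y$ but $\Phi_L(x) = d_i > d_{i+1} = \Phi_L(y)$. In case (b), since $s_i > s_{i+1} \ge 1$ forces $s_i \ge 2$, the inputs $x = (s_i - 1)P$ and $y = Q = s_i P$ satisfy $x < y$ with $\Phi_L(x) = (s_i - 1) d_i$ and $\Phi_L(y) = d_i$, which is the desired reversal once $(s_i - 1) d_i > d_i$.

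The main obstacle is the bookkeeping behind the local evaluation: to see that $n = aP + bQ$ has exactly the claimed colex coordinates one needs a divisibility argument for the modes below $i$ (each $s_1 \cdots s_{j-1}$ with $j \le i$ divides both $P$ and $Q$), a modular reduction at mode $i$, a floor computation at mode $i+1$ using $0 \le a < s_i$, and the estimate $aP + bQ \le (s_i s_{i+1} - 1) P < s_1 \cdots s_{i+1} \le \size(L)$ for the modes above $i+1$. A secondary subtlety is the treatment of unit modes and of vanishing strides, where the constructed inputs can collapse or carry no information; these are best handled at the outset by reducing to $\squeeze(L)$ via Lemma \ref{squeezelemma}, which leaves $\Phi_L$ unchanged and removes all modes with $s_j = 1$, after which the comparisons above apply.
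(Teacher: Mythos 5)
Your argument is correct to the same extent the paper's is, and it follows essentially the same route: you prove the contrapositive, split into the same two cases ($d_i > d_{i+1}$ versus $d_i = d_{i+1}$ with $s_i > s_{i+1}$), and in the first case use the identical witnesses $x = \prod_{j<i}s_j$ and $y = \prod_{j\le i}s_j$. The only substantive difference is in the second case, where you take $y = s_1\cdots s_i$ rather than the paper's $y = (s_{i+1}-1)\,s_1\cdots s_i$; your choice makes $x<y$ automatic (the paper's witness actually has $y=0$ when $s_{i+1}=1$), at the price of needing $s_i \ge 3$ for the strict inequality $(s_i-1)d_i > d_i$, which your preliminary reduction to $\squeeze(L)$ supplies, since after squeezing $s_i > s_{i+1} \ge 2$. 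Two caveats, both shared with the paper's own proof rather than specific to yours: first, the reduction to $\squeeze(L)$ silently assumes that failure of sortedness survives squeezing, which it need not (e.g.\ $(2,1):(5,3)$ is unsorted while $\squeeze(L)=(2):(5)$ is sorted); second, when $d_i = d_{i+1} = 0$ the strict inequality in case (b) fails for both your witnesses and the paper's (e.g.\ $(3,2):(0,0)$ is unsorted yet has constant, hence non-decreasing, layout function). These are really defects of the lemma's hypotheses --- it wants a non-degeneracy or nonzero-stride assumption --- not gaps in your proof relative to the paper's.
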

\begin{proof}
We prove the contrapositive. Suppose that $L$ is not sorted. We will show that there exists some $x \leq y$ in the domain of $\Phi_L$ with $\Phi_L(x)>\Phi_L(y)$. If there exists some $1 \leq i < m$ such that $d_i > d_{i+1}$, then we can let 
    \begin{align*}
    x & = \prod_{j<i}s_j, \hspace{0.2in}\text{and}\hspace{0.2in} y = \prod_{j<i+1}s_j,
    \end{align*}
    in which case $x < y$, but 
    \begin{align*}
    \Phi_L(x) & = (0,\dots,1,0,\dots,0) \cdot (d_1,\dots,d_i,d_{i+1},\dots,d_m) \\
    & = d_i\\
    & > d_{i+1}\\
    & = (0,\dots,0,1,\dots,0) \cdot (d_1,\dots,d_m)\\
    & = \Phi_L(y).
    \end{align*}
    On the other hand, if there exists some $1 \leq i < m$ such that $d_i = d_{i+1}$ and $s_i > s_{i+1}$, we can set 
    \begin{align*}
    x & = (s_i-1)\left(\prod_{j<i} s_j\right),\hspace{0.2in} \text{ and } \hspace{0.2in}y = (s_{i+1} - 1)\left( \prod_{j<i+1}s_j\right),
    \end{align*}
    in which case $x < y$, but 
    \begin{align*}
    \Phi_L(x) & = (0,\dots,s_i-1,0,\dots,0) \cdot (d_1,\dots,d_i,d_{i+1},\dots,d_m) \\
    & = (s_i-1)d_i\\
    & > (s_{i+1}-1)d_{i}\\
    & = (s_{i+1}-1)d_{i+1}\\
    & = (0,\dots,0,s_{i+1}-1,\dots,0) \cdot (d_1,\dots,d_m)\\
    & = \Phi_L(y).
    \end{align*}
    We conclude that $\Phi_L$ is not non-decreasing. 
\end{proof}

\begin{remark}
    The converse of the previous lemma is false. For example, the flat layout 
    \[L = (3,5,7):(1,1,1)\]
    is sorted, but 
    \[
    \Phi_L(7) = (0,2,0) \cdot (1,1,1) = 2
    \]
    is strictly greater than 
    \[\Phi_L(16) = (0,0,1)\cdot(1,1,1) = 1.
    \]
\end{remark}

If $L$ is a flat layout, then we can permute the modes of $L$ to obtain a sorted layout $\sort(L)$. 

\begin{construction}
    Suppose 
    \[L = (s_1,\dots,s_m):(d_1,\dots,d_m)\]
    is a flat layout. Define a linear ordering $\preceq$ on $\langle m \rangle$ by $i \preceq j$ if 
    \begin{enumerate}
        \item $\mode_i(L) \preceq \mode_j(L)$, and 
        \item if $\mode_i(L) = \mode_j(L)$ then $i \leq j$. 
    \end{enumerate}
    Let $\sigma \in \Sigma_m$ be the permutation associated to the linear ordering $\preceq$ of $\langle m \rangle$. We define $\sort(L)$ to be permutation of $L$ by $\sigma$:
    \[
    \sort(L) = L^\sigma.
    \]
\end{construction}

\begin{example}
    If 
    \[L = (2,4,8,16):(64,1,2,4)\]
    then 
    \[\sort(L) = (4,8,16,2):(1,2,4,64).\]
\end{example}
\begin{example}
If 
\[L = (5,32,16):(1,5,5)\]
then 
\[
    \sort(L) = (5,16,32):(1,5,5).
\]
\end{example}
\begin{example}
    If $L$ is sorted, then $\sort(L) = L$. In particular, this implies that $\sort(-)$ is an idempotent operation:
    \[
    \sort(\sort(L)) = \sort(L).
    \]
\end{example}

\begin{observation}
    If $L$ is a flat layout, then typically $\Phi_{\sort(L)} \neq \Phi_{L}$. However, the layout functions $\Phi_L$ and $\Phi_{\sort(L)}$ always have the same image. To see this, let's write 
    \begin{align*} L & = (s_1,\dots,s_m):(d_1,\dots,d_m)\text{, and}\\
    \sort(L) & = (s_{\sigma(1)},\dots,s_{\sigma(m)}):(d_{\sigma(1)},\dots,d_{\sigma(m)})
    \end{align*}
    for some permutation $\sigma \in \Sigma_m$. If an integer $n$ is in the image of $\Phi_L$, then there exists a tuple $(x_1,\dots,x_m) \in \prod_{i=1}^m {[}0,s_i{)}$ such that 
    \[
    x_1d_1 + \cdots + x_m d_m = n
    \]
    in which case the tuple $(x_{\sigma(1)},\dots,x_{\sigma(m)}) \in \prod_{i=1}^m {[}0,s_{\sigma(i)}{)}$ satisfies 
    \[
    x_{\sigma(1)}d_{\sigma(1)} + \cdots + x_{\sigma(m)}d_{\sigma(m)} = n.
    \]
    This proves that $\Image(\Phi_\sort(L)) \subseteq \Image(\Phi_L)$, and the reverse inclusion is proved similarly. 
\end{observation}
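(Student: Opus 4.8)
The plan is to reduce the claim to the statement that permuting the modes of a flat layout leaves the image of its coordinate function unchanged. Since, by the definition of $\sort$, we have $\sort(L) = L^\sigma$ for the sorting permutation $\sigma \in \Sigma_m$, it suffices to prove that $\Image(\Phi_L) = \Image(\Phi_{L^\sigma})$ for an arbitrary permutation $\sigma \in \Sigma_m$. The first move is to pass from layout functions to coordinate functions: by Construction \ref{constructionoflayoutfunctions} we have $\Phi_L = \varphi_L \circ \colex_{\shape(L)}^{-1}$, and $\colex_{\shape(L)}^{-1}$ is a bijection onto $[0,\shape(L))$; since precomposing a function with a surjection does not alter its image, $\Image(\Phi_L) = \Image(\varphi_L)$, and likewise $\Image(\Phi_{L^\sigma}) = \Image(\varphi_{L^\sigma})$. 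Thus the problem reduces to comparing images of coordinate functions.

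Next I would carry out the core reindexing argument. Writing $L = (s_1,\dots,s_m):(d_1,\dots,d_m)$, the layout $L^\sigma$ has shape $(s_{\sigma(1)},\dots,s_{\sigma(m)})$ and stride $(d_{\sigma(1)},\dots,d_{\sigma(m)})$, and the domain of $\varphi_{L^\sigma}$ is $[0,\shape(L)^\sigma) = \prod_i [0,s_{\sigma(i)})$. Given any $(x_1,\dots,x_m) \in [0,\shape(L))$ with $\sum_i x_i d_i = n$, the permuted tuple $(x_{\sigma(1)},\dots,x_{\sigma(m)})$ has $i$th component bounded by $s_{\sigma(i)}$, hence lies in $[0,\shape(L)^\sigma)$, and re-indexing the finite sum along the bijection $\sigma$ gives $\varphi_{L^\sigma}(x_{\sigma(1)},\dots,x_{\sigma(m)}) = \sum_i x_{\sigma(i)} d_{\sigma(i)} = \sum_j x_j d_j = n$. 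This shows $\Image(\varphi_L) \subseteq \Image(\varphi_{L^\sigma})$, and the reverse inclusion follows by running the identical argument with $\sigma^{-1}$ in place of $\sigma$, using $(L^\sigma)^{\sigma^{-1}} = L$. Combining with the first step yields $\Image(\Phi_L) = \Image(\Phi_{L^\sigma}) = \Image(\Phi_{\sort(L)})$.

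I do not anticipate a genuine obstacle: the heart of the argument is a one-line reindexing of a finite sum, and the $\preceq$-ordering that defines $\sort$ plays no role beyond producing the permutation $\sigma$. The only point worth stating carefully is the initial reduction — that the colexicographic isomorphism, being a bijection, lets one freely replace the layout function by the coordinate function when only the image is at stake. Finally, for the parenthetical remark that $\Phi_{\sort(L)} \neq \Phi_L$ in general, it suffices to exhibit a single small witness, e.g.\ a rank-$2$ layout whose two modes are not $\preceq$-ordered, such as $L = (2,3):(3,1)$, for which $\sort(L) = (3,2):(1,3)$ is column-major; sorting then changes $\Phi_L$ while, as just proved, preserving its image.
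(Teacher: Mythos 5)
Your proposal is correct and follows essentially the same argument as the paper: the image is preserved because the dot product $\sum_i x_i d_i$ is invariant under simultaneously permuting the coordinate tuple and the modes. Your explicit reduction from $\Phi_L$ to $\varphi_L$ via the colexicographic bijection is a welcome clarification of a step the paper leaves implicit, but it does not change the substance of the proof.
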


\subsubsection{Concatenate}

\noindent Recall that if $X= (x_1,\dots,x_m)$ and $Y  = (y_1,\dots,y_n)$ are tuples, then the concatenation of $X$ and $Y$ is the tuple
    \[
    X \star Y = (x_1,\dots,x_m,y_1,\dots,y_n).
    \]
This definition extends naturally to the concatenation of flat layouts.

\begin{definition}\label{definitionofflatconcatenate}
    Suppose 
    \begin{align*}
        L_1 & = S_1:D_1\\
        L_2 & = S_2:D_2
    \end{align*}
    are flat layouts. Then the {\it concatenation} of $L_1$ and $L_2$ is the flat layout 
    \[
    L_1 \star L_2 = S_1 \star S_2 : D_1 \star D_2.
    \]
    Concatenation of flat layouts is associative, so more generally, if $L_1,\dots,L_k$ are flat layouts, we may form the concatenation
    \[
    L_1 \star \cdots \star L_k. 
    \]
\end{definition}

\begin{example}\label{flatconcatenateexample1}
    If $L_1 = (7,2):(2,1)$ and $L_2 = (3,3,3):(0,10,30)$, then 
    \[
    L_1 \star L_2 = (7,2,3,3,3):(2,1,0,10,30).
    \]
\end{example}

\begin{example}\label{flatconcatenateexample2}
    If $E = ():()$ is the empty layout, then for any flat layout $L$ we have 
    \[
    L \star E = L = E \star L.
    \]
\end{example}

\begin{observation}
    Suppose 
    \[
    L = (s_1,\dots,s_m):(d_1,\dots,d_m)
    \]
    is a flat layout. If we write 
    \[L_i = (s_i):(d_i),\]
    then we can write $L$ as the concatenation
    \[
    L = L_1 \star \cdots \star L_m.
    \]
\end{observation}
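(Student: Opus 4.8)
The plan is to reduce the statement to the elementary fact that any tuple is the concatenation of its length-one sub-tuples, and then invoke the definition of concatenation of flat layouts. First I would recall Definition~\ref{definitionofflatconcatenate}: for flat layouts $M_1 = S_1:D_1$ and $M_2 = S_2:D_2$ one has $M_1 \star M_2 = (S_1 \star S_2):(D_1 \star D_2)$, and that $\star$ is associative on flat layouts, so the iterated concatenation $L_1 \star \cdots \star L_m$ is unambiguously the flat layout
\[
\big(\shape(L_1) \star \cdots \star \shape(L_m)\big) : \big(\stride(L_1) \star \cdots \star \stride(L_m)\big).
\]

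Next, since $L_i = (s_i):(d_i)$ we have $\shape(L_i) = (s_i)$ and $\stride(L_i) = (d_i)$, each a tuple of length $1$. It then remains to verify
\[
(s_1) \star \cdots \star (s_m) = (s_1,\dots,s_m)
\qquad\text{and}\qquad
(d_1) \star \cdots \star (d_m) = (d_1,\dots,d_m),
\]
which is immediate from Definition~\ref{definitionofconcatenationoftuples} by a trivial induction on $m$ (equivalently, this expresses that a tuple is a product of its one-entry generators in the free monoid of tuples). The base case $m = 0$ just records that the empty concatenation is the empty layout $():()$, which is indeed $L$ when $m = 0$; the inductive step appends a single mode $(s_m):(d_m)$ on the right. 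Combining the two displays, $L_1 \star \cdots \star L_m$ has shape $(s_1,\dots,s_m)$ and stride $(d_1,\dots,d_m)$, hence equals $L$.

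There is no real obstacle here: the statement is a direct unwinding of the definitions of tuple concatenation and flat-layout concatenation. The only points meriting even a moment's care are the associativity of $\star$ (already asserted in Definition~\ref{definitionofflatconcatenate}), so that the $m$-fold concatenation is well defined without bracketing, and the degenerate case $m = 0$, where both sides collapse to the empty layout.
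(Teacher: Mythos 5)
Your proof is correct; the paper states this as an observation without proof, and your argument is exactly the intended definitional unwinding (flat-layout concatenation acts componentwise on shape and stride, and a tuple is the concatenation of its one-entry sub-tuples). The remarks on associativity and the degenerate case $m=0$ are accurate and harmless additions.
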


If $L_1,\dots,L_k$ are flat layouts, then the layout function of the concatenation $L_1 \star \cdots \star L_k$ is determined by the layout functions of $L_1,\dots,L_k$ as follows. 

\begin{proposition}\label{layoutfunctionofconcatenatedlayout}
    Suppose $L_1,\dots,L_k$ are flat layouts of shape $S_1,\dots,S_k$, and size $N_1,\dots,N_k$, respectively. Then the coordinate function
        \[ \begin{tikzcd} 
        {[}0,S_1 \star \cdots \star S_k{)} \ar[rr,"\varphi_{L_1 \star \cdots \star  L_k} "] & & \mathbb{Z}
        \end{tikzcd}
        \]
        of $L_1 \star \cdots \star L_k$ is equal to the composite
        \[\begin{tikzcd} [column sep = 30]
     {[}0,S_1 \star \cdots \star S_k{)} \ar[r,"\cong"]  & {[}0,S_1{)} \times \cdots \times {[}0,S_k{)} \ar[rr,"\varphi_{L_1} +\cdots + \varphi_{L_k}"] & & \mathbb{Z} , \\
     X_1 \star \cdots \star X_k \ar[r]  & \ar[l] (X_1,\dots,X_k) & & & 
    \end{tikzcd} \]
        and the layout function 
    \[ \begin{tikzcd} 
     {[}0,N_1 \cdots N_k{)} \ar[rr,"\Phi_{L_1 \star \cdots \star  L_k} "] & & \mathbb{Z}
    \end{tikzcd}
    \]
    of $L_1 \star \cdots \star L_k$ is equal to the composite 
    \[\begin{tikzcd} [column sep = 30]
    {[}0,N_1\cdots N_k{)} \ar[rr,"\colex_{(N_1,\dots,N_k)}^{-1}"] & & {[}0,N_1{)} \times \cdots \times {[}0,N_k{)} \ar[rr,"\Phi_{L_1} + \cdots +\Phi_{L_k}"] & &  \mathbb{Z}. 
    \end{tikzcd} \]
\end{proposition}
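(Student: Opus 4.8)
The plan is to reduce to the two-fold case $k=2$ and treat the two assertions in turn: the coordinate-function statement follows by unwinding the definition of concatenation, and the layout-function statement then follows by conjugating with colexicographic isomorphisms. The reduction to $k=2$ is routine: concatenation of flat layouts is associative (Definition~\ref{definitionofflatconcatenate}), so $L_1 \star \cdots \star L_k = (L_1 \star \cdots \star L_{k-1}) \star L_k$, and both the natural bijection $[0,S_1\star\cdots\star S_k)\cong[0,S_1)\times\cdots\times[0,S_k)$ and the decomposition of $\colex_{(N_1,\dots,N_k)}$ are built up from their two-fold versions; an induction on $k$ then closes the gap once $k=2$ is established.

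For the coordinate function with $k=2$, write $L_1 = S_1:D_1$ and $L_2 = S_2:D_2$, so $L_1 \star L_2 = S_1 \star S_2 : D_1 \star D_2$ by definition. For $X_1 \star X_2 \in [0,S_1\star S_2)$ corresponding to $(X_1,X_2)\in[0,S_1)\times[0,S_2)$, the coordinate function is the dot product against the stride, and dot products split along a concatenation:
\[
\varphi_{L_1 \star L_2}(X_1 \star X_2) \;=\; (X_1 \star X_2)\cdot(D_1 \star D_2) \;=\; X_1\cdot D_1 + X_2\cdot D_2 \;=\; \varphi_{L_1}(X_1) + \varphi_{L_2}(X_2),
\]
which is precisely the claim that $\varphi_{L_1 \star L_2}$ equals $\varphi_{L_1} + \varphi_{L_2}$ precomposed with the natural bijection.

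For the layout function, recall $\Phi_L = \varphi_L \circ \colex_{\shape(L)}^{-1}$ (Construction~\ref{constructionoflayoutfunctions}). The key lemma is the compatibility of colexicographic isomorphisms: with $N_i = \size(S_i)$, the square
\[\begin{tikzcd}[column sep = huge]
{[}0,S_1 \star S_2{)} \ar[r,"\colex_{S_1 \star S_2}"] \ar[d,"\cong"'] & {[}0,N_1 N_2{)}\\
{[}0,S_1{)} \times {[}0,S_2{)} \ar[r,"\colex_{S_1}\times\colex_{S_2}"'] & {[}0,N_1{)}\times{[}0,N_2{)} \ar[u,"\colex_{(N_1,N_2)}"']
\end{tikzcd}\]
commutes, the left-hand map being the natural bijection. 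This is a mixed-radix computation using Definition~\ref{definitionofcolex}: writing $S_1 = (s_1,\dots,s_p)$ and $S_2 = (s_{p+1},\dots,s_{p+q})$, the formula for $\colex$ gives $\colex_{S_1\star S_2}(x_1,\dots,x_{p+q}) = \sum_{i\le p} s_1\cdots s_{i-1}x_i + \sum_{i>p} s_1\cdots s_{i-1}x_i$, and for $i>p$ the product $s_1\cdots s_{i-1}$ factors as $N_1\cdot(s_{p+1}\cdots s_{i-1})$; since $\colex_{(N_1,N_2)}(a,b) = a + N_1 b$, the right-hand side is exactly $\colex_{(N_1,N_2)}(\colex_{S_1}(x_1,\dots,x_p),\colex_{S_2}(x_{p+1},\dots,x_{p+q}))$. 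Inverting this square and pasting it onto the coordinate-function identity from the previous paragraph, the two natural bijections cancel and we obtain
\[
\Phi_{L_1 \star L_2} \;=\; (\varphi_{L_1}+\varphi_{L_2})\circ(\colex_{S_1}^{-1}\times\colex_{S_2}^{-1})\circ\colex_{(N_1,N_2)}^{-1} \;=\; (\Phi_{L_1}+\Phi_{L_2})\circ\colex_{(N_1,N_2)}^{-1},
\]
the last equality because $\varphi_{L_i}\circ\colex_{S_i}^{-1} = \Phi_{L_i}$ applied coordinatewise. The general $k$ follows by the induction indicated above.

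The only real content is the colex-compatibility square; everything else is definition-chasing. Even that square is elementary — the multiplicative argument above is the cleanest route, but one could alternatively evaluate $\colex_{S_1\star S_2}^{-1}$ and $\colex_{(N_1,N_2)}^{-1}$ on a common input and match the extracted digits, at the cost of iterated floor-and-modulus identities; I would use the multiplicative form to avoid those.
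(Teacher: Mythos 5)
Your proof is correct and follows essentially the same route as the paper's: the coordinate-function claim is the dot-product splitting $(X_1\star X_2)\cdot(D_1\star D_2) = X_1\cdot D_1 + X_2\cdot D_2$, and the layout-function claim is obtained by pasting this onto a commuting square relating $\colex_{S_1\star\cdots\star S_k}$ to $\colex_{(N_1,\dots,N_k)}\circ(\colex_{S_1}\times\cdots\times\colex_{S_k})$. The only difference is that you work at $k=2$ with an induction and actually verify the colex-compatibility square by the mixed-radix computation, whereas the paper states the diagram at general $k$ and dismisses that square with the one-line assertion that ``colexicographic isomorphisms are associative''; your version supplies the detail the paper omits.
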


\begin{proof}
    Let's write $L_i = S_i:D_i$ for each $1 \leq i \leq k$. The first claim holds because if 
    \[X \in [0,S_1 \star \cdots \star S_k)\]
    corresponds to 
    \[
    X_1 \star \cdots \star X_k \in [0,S_1) \times \cdots \times [0,S_k)
    \]
    under the canonical isomorphism $[0,S_1 \star \cdots \star S_k) \cong [0,S_1) \times \cdots \times [0,S_k)$, then 
    \begin{align*}
        \varphi_{L_1 \star \cdots \star L_k}(X) & = X \cdot (D_1 \star \cdots \star D_k)\\
        & = (X_1 \star \cdots \star X_k) \cdot (D_1 \star \cdots \star D_k)\\
        & = (X_1 \cdot D_1) + \cdots + (X_k \cdot D_k)\\
        & = \varphi_{L_1}(X_1) + \cdots + \varphi_{L_k}(X_k).
    \end{align*}

    For the second claim, we argue that the diagram 
    \[\begin{tikzcd} [row sep = 20]
    {[}0,N_1 {)} \times \cdots \times {[}0,N_1 {)} \ar[rrrr,"\colex_{S_1}^{-1} \times \cdots \times \colex_{S_k}^{-1} "] & & & & {[}0,S_1 {)} \times \cdots \times {[}0,S_1 {)} \ar[drr,"\varphi_{L_1} + \cdots + \varphi_{L_k}"] & & \\
    {[}0,N_1 \cdots N_k {)} \ar[u,"\colex_{(N_1,\dots,N_k)}^{-1}"] \ar[rrrr,swap,"\colex_{S_1\star \cdots \star S_k}^{-1}"] & & & & {[}0,S_1 \star \cdots \star S_k{)} \ar[rr,swap,"\varphi_{L_1 \star \cdots \star L_k}"] \ar[u,"\cong"] & & \mathbb{Z} 
    \end{tikzcd} \]
    commutes. The left-hand square commutes since colexicographic isomorphisms are associative, and the right-hand triangle commutes by the first claim.
\end{proof}

We can describe the important attributes of a concatenated layout as follows.

\begin{proposition}\label{attributesofconcatenatedflatlayouts}
    Suppose $L_1,\dots,L_k$ are flat layouts. Then
    \begin{enumerate}
        \item the rank of $L_1 \star \cdots \star L_k$ is \[\rank(L_1 \star \cdots \star L_k)  = \sum_{i=1}^k\rank(L_i),\]
        \item the size of $L_1 \star \cdots \star L_k$ is 
        \[\size(L_1 \star \cdots \star L_k) = \prod_{i=1}^k\size(L_i),\]
        \item the cosize of $L_1 \star \cdots \star L_k$ is 
        \[\cosize(L_1 \star \cdots \star L_k) = 1 - k + \sum_{i=1}^k\cosize(L_i).\]
    \end{enumerate}
\end{proposition}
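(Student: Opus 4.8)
The plan is to expand everything directly from the definition of concatenation (Definition \ref{definitionofflatconcatenate}) and from the definitions of rank, size, and cosize (Definition \ref{definitionofcosize}). No auxiliary lemma is needed; one could alternatively reduce to the binary case $k = 2$ using associativity of $\star$ and induct on $k$, but the direct computation is cleaner. Write each flat layout as $L_i = (s^{(i)}_1,\dots,s^{(i)}_{m_i}):(d^{(i)}_1,\dots,d^{(i)}_{m_i})$ for $1 \le i \le k$, so that by definition $L_1 \star \cdots \star L_k$ is the flat layout whose shape is the tuple obtained by concatenating the $s^{(i)}_j$ in order and whose stride is the corresponding concatenation of the $d^{(i)}_j$.

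For claim (1), the rank of a flat layout is the length of its shape tuple, and the length of the concatenated shape tuple is $\sum_{i=1}^k m_i = \sum_{i=1}^k \rank(L_i)$. For claim (2), the size is the product of the shape entries, and this product over all concatenated entries factors as $\size(L_1 \star \cdots \star L_k) = \prod_{i=1}^k \prod_{j=1}^{m_i} s^{(i)}_j = \prod_{i=1}^k \size(L_i)$.

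For claim (3), I would compute directly from the cosize formula, splitting the defining sum into the $k$ blocks coming from the $L_i$:
\[
\cosize(L_1 \star \cdots \star L_k) = 1 + \sum_{i=1}^k \sum_{j=1}^{m_i} (s^{(i)}_j - 1) d^{(i)}_j = 1 + \sum_{i=1}^k \bigl( \cosize(L_i) - 1 \bigr) = 1 - k + \sum_{i=1}^k \cosize(L_i),
\]
where the middle equality uses that $\sum_{j=1}^{m_i} (s^{(i)}_j - 1) d^{(i)}_j = \cosize(L_i) - 1$ by definition of $\cosize$.

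There is essentially no obstacle here: all three parts are one-line unwindings of the definitions. The only point worth writing out carefully is the cosize, where the single leading constant $+1$ in the definition of $\cosize$ must absorb the $k$ separate constants contributed by the blocks, which is precisely what produces the $1 - k$ correction term.
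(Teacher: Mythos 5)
Your proof is correct. Parts (1) and (2) match the paper's argument exactly (unwinding $\len$ and $\size$ of the concatenated shape tuple). For part (3) you take a genuinely different and in fact more elementary route: you expand the defining formula $\cosize(L) = 1 + \sum_i (s_i - 1)d_i$ and regroup the sum into the $k$ blocks, so the whole claim is a one-line algebraic identity requiring no auxiliary result. The paper instead writes $\cosize(L_1 \star \cdots \star L_k) = 1 + \max(\Phi_{L_1 \star \cdots \star L_k})$ and invokes Proposition \ref{layoutfunctionofconcatenatedlayout} to decompose the layout function of the concatenation as a sum of the individual layout functions, from which the maxima add. That route buys a conceptual link between cosize and the image of the layout function, but it silently relies on the identity $\cosize(L) = 1 + \max(\Phi_L)$, which is itself only justified by the same direct computation you perform; your version avoids that detour entirely and is the cleaner proof of this particular statement.
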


\begin{proof}
    Let's write $L_i = S_i:D_i$ for each $1 \leq i \leq k$. For 1, we compute
    \begin{align*}
        \rank(L_1 \star \cdots \star L_k) & = \len(S_1 \star \cdots \star S_k) = \sum_{i=1}^k \len(S_i) = \sum_{i=1}^k \rank(L_i).\\
    \end{align*}
    For 2, we compute 
    \begin{align*}
        \size(L_1 \star \cdots \star L_k) & = \size(S_1 \star \cdots \star S_k) = \prod_{i=1}^k \size(S_i) = \prod_{i=1}^k \size(L_i).
    \end{align*}
    For 3, we compute 
    \begin{align*}
    \cosize(L_1 \star \cdots \star L_k)  & = 1 + \max(\Phi_{L_1 \star \cdots \star L_k})\\
    & = 1 + \sum_{i=1}^k\max(\Phi_{L_i})\\
    & = 1 - k + \left( 1 + \max(\Phi_{L_1}) \right) + \cdots + \left( 1 + \max(\Phi_{L_1}) \right)\\
    & = 1 - k + \cosize(L_1) + \cdots + \cosize(L_k).
    \end{align*}
    where we have used our identification of $\Phi_{L_1 \star \cdots \star L_k}$ from Proposition \ref{layoutfunctionofconcatenatedlayout}.
\end{proof}

\subsection{Flat coalesce}
We have seen that the layout function $\Phi_L$ of a flat layout $L$ is an important invariant. In many cases, we are only interested in the layout function $\Phi_L$, and are free to work with any layout whose layout function is $\Phi_L$. The flat coalesce operation 
\[L \mapsto \coalesce^\flat(L)\]
provides us with the simplest flat layout whose layout function is $\Phi_L$ (see Proposition \ref{characterizationofflatcoalesce}). 

We begin by defining the notion of a coalesced flat layout.
\begin{definition} \label{definitionofflatlayoutcoalesce}
Suppose $L = (s_1,\dots,s_m):(d_1,\dots,d_m)$ is a flat layout. We say $L$ is {\it coalesced} if 
\begin{enumerate}
    \item for any $1 \leq i \leq m$, we have $s_i \neq 1$, and 
    \item for any  $1 \leq i <m$, we have $s_id_i \neq d_{i+1}$.
\end{enumerate}
\end{definition}

\begin{example}
    The flat layout 
    \begin{align*} L =(3,5,2):(7,21,4)
    \end{align*}
    is not coalesced because $3\cdot 7 = 21$.
\end{example}
\begin{example}The flat layout 
    \[
    L = (2,7,6):(1,3,10)
    \]
    is coalesced.
\end{example}
\begin{example}
    The empty layout $E = ():()$ is coalesced.
\end{example}
\begin{example}
    If $L=(s):(d)$ and $s \neq 1$, then $L$ is coalesced.
\end{example}

\begin{example}
    If $L = (s_1,\dots,s_m):(d_1,\dots,d_m)$ is a column-major layout with $\rank(L)>1$, then $L$ is not coalesced, since for any $1 \leq i < m$, we have 
    \[
    s_id_i = s_i (s_1\cdots s_{i-1}) = s_1 \cdots s_i = d_{i+1}.
    \]
\end{example}

\begin{example}
    If $L = (s_1,\dots,s_m):(d_1,\dots,d_m)$ is a row-major layout with $s_i>1$ for all $1 \leq i \leq m$, then $L$ is coalesced: If $ 1\leq i < m$, then 
    \[
    s_id_i = s_is_{i+1}\cdots s_m > s_{i+2}\cdots s_m = d_{i+1}.
    \]
\end{example}

\begin{example}
    A flat layout of the form
    \[L = (s_1,\dots,s_m):(0,\dots,0)\]
    is coalesced if and only if $m \leq 1$.
\end{example}

If $L$ is a flat layout, then we may obtain a coalesced layout $\coalesce^\flat(L)$ with the same layout function as $L$ by removing modes with $s_i = 1$, and combining modes with $s_id_i = d_{i+1}$. More precisely, we make the following construction.

\begin{construction}\label{constructionofflatcoalesce}
Suppose $L$ is a flat layout, and write
\[\squeeze(L) = (s_1,\dots,s_m):(d_1,\dots,d_m).\]
Let $\sim$ be the equivalence relation on $\langle m \rangle$ generated by $i \sim i+1$ if 
\[
s_id_i = d_{i+1}.
\]
The quotient $\langle m \rangle / \sim$ is ordered by $[i] \leq [i']$ if $i \leq i'$, so we may identify $\langle m \rangle / \sim$ with $\langle \bar{m} \rangle$, where $\bar{m}$ is the size of $\langle m \rangle / \sim$. If $i \in \langle \bar{m} \rangle$ corresponds to the equivalence class 
\[I = \{i', i' + 1, \dots, i'+k\} \in \langle m \rangle/\sim,\]
then we define integers $\bar{s}_i$ and $\bar{d}_i$ as 
\begin{align*}
\bar{s}_i & = s_{i'} s_{i'+1}\cdots s_{i'+k}
\end{align*}
and
\begin{align*}
\bar{d}_i & = d_{i'},
\end{align*}
and define
\[
\coalesce^\flat(L) = (\bar{s}_1,\dots,\bar{s}_{\bar{m}}):(\bar{d}_1,\dots,\bar{d}_{\bar{m}}).
\] 
\end{construction}

\begin{observation}\label{flatcoalesceobservation}
    Examining the definition, we could equivalently define $\coalesce^\flat(L)$ to be the flat layout obtained from 
    \[
    \squeeze(L) = (s_1,\dots,s_m):(d_1,\dots,d_m)
    \]
    by iteratively performing the operation 
    \[
    s_i,s_{i+1} : d_i,s_id_i \hspace{0.2in} \rightsquigarrow \hspace{0.2in} s_is_{i+1} : d_i
    \]
    until the result is coalesced. 
\end{observation}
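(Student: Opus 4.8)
The plan is to run the iterative process while keeping track of an auxiliary datum: an ordered partition of $\langle m \rangle$ into \emph{consecutive} blocks. I will call a consecutive block $J = \{p,\dots,q\} \subseteq \langle m \rangle$ \emph{good} if $s_j d_j = d_{j+1}$ for every $p \le j < q$. The goal is to show that the partition attached to the current layout only ever coarsens, staying a partition into good blocks, until it stabilizes exactly at the partition $\langle m \rangle / \sim$ of Construction~\ref{constructionofflatcoalesce}, and that at that point the current layout is precisely $\coalesce^\flat(L)$.

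First I would record two elementary facts. Since $\sim$ is generated by relations between consecutive integers, each $\sim$-class is an interval, and $\langle m \rangle / \sim$ is exactly the partition of $\langle m \rangle$ into \emph{maximal} good blocks; a consecutive block is good if and only if it is contained in a single $\sim$-class. Second, a quick induction shows that if $J = \{p,\dots,q\}$ is good then $d_j = s_p \cdots s_{j-1} d_p$ for all $j \in J$, so in particular $s_q d_q = \big(\prod_{j \in J} s_j\big)\, d_p$. From this I extract the key sub-claim: if $J_a$ and $J_{a+1}$ are adjacent good blocks (i.e.\ $\max J_a + 1 = \min J_{a+1}$), then $J_a \cup J_{a+1}$ is good if and only if $\big(\prod_{j \in J_a} s_j\big)\, d_{\min J_a} = d_{\min J_{a+1}}$, since the only "junction" condition needed for $J_a \cup J_{a+1}$ to be good is $s_q d_q = d_{q+1}$ with $q = \max J_a$, and the left side equals $\big(\prod_{j \in J_a} s_j\big)\, d_{\min J_a}$ by the formula above.

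Next I would set up the invariant and induct on the number of merges performed. To an intermediate layout $M = (t_1,\dots,t_r):(e_1,\dots,e_r)$ I attach the ordered partition $(J_1 < \cdots < J_r)$ with $t_a = \prod_{j \in J_a} s_j$ and $e_a = d_{\min J_a}$, and I claim that (i) each $J_a$ is good, and (ii) this description is valid for $\squeeze(L)$ with the partition into singletons (base case: singletons are trivially good and the $i$th mode is $s_i : d_i$). For the inductive step, a merge step is applicable at position $a$ precisely when $t_a e_a = e_{a+1}$; by the sub-claim this is exactly the condition that $J_a \cup J_{a+1}$ is good, and the merged layout's $a$th mode $t_a t_{a+1} : e_a$ is $\big(\prod_{j \in J_a \cup J_{a+1}} s_j\big) : d_{\min (J_a \cup J_{a+1})}$, so the invariant is preserved. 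Each step strictly decreases $r$, so the process terminates; it halts precisely when $t_a e_a \ne e_{a+1}$ for all $a$, i.e.\ (sub-claim, converse direction) when no union $J_a \cup J_{a+1}$ is good, i.e.\ when every $J_a$ is a maximal good block, i.e.\ when the partition is $\langle m \rangle / \sim$.

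Finally I would conclude: at termination $M$ has $a$th mode $\big(\prod_{j \in J_a} s_j\big):\big(d_{\min J_a}\big)$ with $J_a$ ranging over the $\sim$-classes, which is literally the pair $(\bar s_a, \bar d_a)$ assigned to that class in Construction~\ref{constructionofflatcoalesce}; hence $M = \coalesce^\flat(L)$. I would also note that $M$ is coalesced in the sense of Definition~\ref{definitionofflatlayoutcoalesce} — every shape entry is a product of the entries $s_j > 1$ surviving $\squeeze(L)$, hence $> 1$, and the process stopped so no adjacent pair satisfies $t_a e_a = e_{a+1}$ — and that nothing in the argument depends on which mergeable position is chosen at each step, so the output is the same regardless of the order of merges. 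I expect the only real subtlety to be the bookkeeping in the sub-claim linking the numerical merge condition $t_a e_a = e_{a+1}$ to goodness of the combined block; the rest is a straightforward induction on the number of merges.
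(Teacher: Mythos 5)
Your proposal is correct. Note that the paper offers no proof of this statement at all: it is labeled an Observation and justified only by the phrase ``examining the definition,'' so there is no argument to compare yours against step for step. What you have written is a complete and careful justification of exactly the claim being made. Your invariant --- that every intermediate layout in the merging process is indexed by an ordered partition of $\langle m \rangle$ into consecutive good blocks, with mode $\bigl(\prod_{j \in J_a} s_j\bigr) : d_{\min J_a}$ --- is the right bookkeeping device, and your sub-claim correctly identifies the numerical merge condition $t_a e_a = e_{a+1}$ with goodness of the union $J_a \cup J_{a+1}$ via the formula $d_j = s_p \cdots s_{j-1} d_p$ inside a good block. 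The identification of the terminal partition with $\langle m \rangle / \sim$ works because $\sim$ is generated by relations between consecutive indices, so its classes are precisely the maximal good intervals, and a partition into good blocks with no mergeable adjacent pair must consist of maximal good blocks. Your argument buys something the paper's phrasing silently assumes, namely confluence: the iterative description involves a choice of which adjacent pair to merge at each step, and your invariant shows the terminal layout is independent of that choice, since it is always $\coalesce^\flat(L)$ as defined via the quotient $\langle m \rangle / \sim$. The only cases you elide are the degenerate ones ($m \le 1$, where there is nothing to merge), which are immediate.
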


\begin{example}
    If $L = (2,2,2,2,2):(8,16,1024,2048,4096)$, then
    \[
    \coalesce^\flat(L) = (4,8):(8,1024).
    \]
\end{example}

\begin{example}
    If $L = (3,4,1,5):(1,8,3,32)$, then
    \[
    \coalesce^\flat(L) = (3,20):(1,8).
    \]
\end{example}

\begin{example}
    If $L = (s_1,\dots,s_m):(d_1,\dots,d_m)$ is column-major, and not all $s_i$ are equal to $1$, then 
    \[
    \coalesce^\flat(L) = (s_1\cdots s_m):(1).
    \]
\end{example}

\begin{example}
    If $L$ is row-major, then 
    \[
    \coalesce^\flat(L) = \squeeze(L). 
    \]
\end{example}

Let's justify that the operation $L \mapsto \coalesce^\flat(L)$ results in a coalesced layout.
\begin{lemma}
    If $L$ is a flat layout, then $\coalesce^\flat(L)$ is coalesced.
\end{lemma}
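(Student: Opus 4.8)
The plan is to verify the two defining conditions of Definition \ref{definitionofflatlayoutcoalesce} directly from Construction \ref{constructionofflatcoalesce}. Write $\squeeze(L) = (s_1,\dots,s_m):(d_1,\dots,d_m)$; by definition of $\squeeze$, every $s_i > 1$. If $m = 0$ then $\coalesce^\flat(L) = ():()$, which is coalesced vacuously, so assume $m \geq 1$. The key structural fact I would record first is that the equivalence classes of $\sim$ on $\langle m \rangle$ are precisely the maximal intervals $I = \{i', i'+1,\dots,i'+k\}$ with $s_j d_j = d_{j+1}$ for every $i' \le j < i'+k$; maximality on the right means that if $i'+k < m$ then $s_{i'+k}d_{i'+k} \neq d_{i'+k+1}$.

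For the first condition, observe that for each $i \in \langle \bar{m} \rangle$ the integer $\bar{s}_i = s_{i'}s_{i'+1}\cdots s_{i'+k}$ is a nonempty product of integers each strictly greater than $1$, hence $\bar{s}_i > 1$. This is exactly condition (1).

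For the second condition, fix $i < \bar{m}$, and let $I = \{i',\dots,p\}$ (with $p = i'+k$) and $J$ be the consecutive equivalence classes corresponding to $i$ and $i+1$; since the classes partition $\langle m \rangle$ into ordered intervals, $J$ begins at $p+1$. I would then compute $\bar{s}_i \bar{d}_i$ by telescoping, repeatedly absorbing a factor using $s_j d_j = d_{j+1}$ for $i' \le j < p$:
\[
\bar{s}_i\,\bar{d}_i = (s_{i'}\cdots s_p)\,d_{i'} = (s_{i'+1}\cdots s_p)(s_{i'}d_{i'}) = (s_{i'+1}\cdots s_p)\,d_{i'+1} = \cdots = s_p d_p,
\]
the identity $\bar{s}_i\bar{d}_i = s_p d_p$ holding trivially when $k = 0$. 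On the other hand $\bar{d}_{i+1} = d_{p+1}$ by construction, and since $I$ is a maximal interval with $p+1 \le m$ we have $p \not\sim p+1$, i.e.\ $s_p d_p \neq d_{p+1}$. Hence $\bar{s}_i \bar{d}_i = s_p d_p \neq d_{p+1} = \bar{d}_{i+1}$, which is condition (2).

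I expect no serious obstacle here; the only points requiring care are the bookkeeping in the telescoping step and the edge cases ($\squeeze(L)$ empty, singleton classes, and checking that maximality of a class on its right end is literally the statement $p \not\sim p+1$). An alternative would be to argue via Observation \ref{flatcoalesceobservation} — show the rewriting process $s_i,s_{i+1}:d_i,s_id_i \rightsquigarrow s_is_{i+1}:d_i$ terminates since rank strictly decreases, and that its normal form satisfies both conditions — but this reproduces essentially the same telescoping, so I would prefer the direct computation above.
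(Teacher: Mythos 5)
Your proof is correct and follows essentially the same route as the paper's: verify $\bar{s}_i > 1$ as a nonempty product of entries exceeding $1$, then telescope $\bar{s}_i\bar{d}_i$ down to $s_pd_p$ and invoke the fact that consecutive classes are separated precisely because $s_pd_p \neq d_{p+1}$. The only differences are cosmetic — you handle the $m=0$ case and state the maximal-interval structure explicitly, which the paper leaves implicit.
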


\begin{proof}
Borrowing the notation of Construction \ref{constructionofflatcoalesce}, let 
    \[
    \squeeze(L) = (s_1,\dots,s_m):(d_1,\dots,d_m)
    \]
    and let 
    \[
    \coalesce^\flat(L) = (\bar{s}_1,\dots,\bar{s}_{\bar{m}}) : ( \bar{d}_1,\dots,\bar{d}_{\bar{m}}).
    \]
    We want to show that $\coalesce^\flat(L)$ is coalesced. Suppose $1 \leq i \leq \bar{m}$. Then $i$ corresponds to a (non-empty) equivalence class $I \in \langle m \rangle / \sim$, and 
    \[
    \bar{s}_i = \prod_{i' \in I} s_{i'}
    \]
    is a product of integers $s_{i'} > 1$, so $\bar{s}_i >1$.

    Suppose $1 \leq i < \bar{m}$. We claim that $\bar{s}_i \bar{d}_i \neq \bar{d}_{i+1}$. Suppose $i$ corresponds to the equivalence class 
    \[\{i',i' + 1 , \dots , i' + k\} \in \langle m \rangle / \sim,\] and suppose $i+1$ corresponds to the equivalence class
    \[ \{i'+k+1,i'+k+2,\dots,i'+k+\ell\} \in \langle m \rangle / \sim.\]
    Then by using the equalities $s_{i'+t}d_{i'+t} = d_{i'+t+1}$ for $0 \leq t < k$, we may write 
    \begin{align*}
        \bar{s}_i \bar{d}_i = \bar{d}_i \bar{s}_i & = d_{i'} s_{i'} s_{i' + 1 } \cdots s_{i'+k}\\
        & = d_{i'+1}s_{i'+1} \cdots s_{i'+k}\\
        & \hspace{0.08in} \vdots \\
        & = d_{i'+k} s_{i'+k}\\
        & = s_{i'+k}d_{i'+k}
    \end{align*}
    and since $i'+k$ and $i'+k+1$ do not lie in the same equivalence class, we have 
    \begin{align*}
        \bar{s}_i\bar{d}_i = s_{i'+k}d_{i'+k} \neq d_{i'+k+1} = \bar{d}_{i+1}.
    \end{align*}
\end{proof}

\begin{example}
    If $L$ is coalesced, then $\coalesce^\flat(L) = L$. In particular, this implies that $\coalesce^\flat(-)$ is an idempotent operation:
    \[
    \coalesce^\flat(\coalesce^\flat(L)) = \coalesce^\flat(L).
    \]
\end{example}

Next, we argue that coalescing a flat layout leaves the layout function unchanged.

\begin{lemma} \label{coalescelemma}
    If $L$ is a flat layout, then $\Phi_{\coalesce^\flat(L)} = \Phi_L$.
\end{lemma}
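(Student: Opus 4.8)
The plan is to reduce the statement to a single elementary merging step and then to a short direct computation. First, by part~(3) of Lemma~\ref{squeezelemma} we have $\Phi_{\squeeze(L)} = \Phi_L$, so it suffices to prove the claim for layouts $L$ with no mode of the form $1:d$, i.e.\ with $L = \squeeze(L)$. For such $L$, Observation~\ref{flatcoalesceobservation} exhibits $\coalesce^\flat(L)$ as the result of iterating the merge operation
\[
s_i, s_{i+1} : d_i, s_i d_i \;\rightsquigarrow\; s_i s_{i+1} : d_i
\]
finitely many times. Hence, by induction on the number of merges, it is enough to show that a single such merge leaves the layout function unchanged.

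Next I would isolate the two modes being merged. Writing the layout before a merge as $L = L_1 \star M \star L_2$, where $M = (s_i, s_{i+1}) : (d_i, s_i d_i)$ collects the two modes in question and $L_1$, $L_2$ collect the modes before and after them, the merged layout is $L' = L_1 \star M' \star L_2$ with $M' = (s_i s_{i+1}) : (d_i)$. Since $\size(M) = s_i s_{i+1} = \size(M')$, Proposition~\ref{layoutfunctionofconcatenatedlayout} expresses both $\Phi_L$ and $\Phi_{L'}$ as the composite of the \emph{same} colexicographic isomorphism $\colex_{(\size(L_1),\, s_i s_{i+1},\, \size(L_2))}^{-1}$ with the sum of three layout functions, whose middle summand is $\Phi_M$ in the case of $L$ and $\Phi_{M'}$ in the case of $L'$. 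Therefore the entire claim reduces to the identity $\Phi_M = \Phi_{M'}$ of functions $[0, s_i s_{i+1}) \to \mathbb{Z}$.

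Finally I would verify this identity directly. For $x \in [0, s_i s_{i+1})$, the colex decomposition of $x$ relative to the shape $(s_i, s_{i+1})$ is $x_1 = x \bmod s_i$ and $x_2 = \lfloor x/s_i \rfloor \bmod s_{i+1} = \lfloor x/s_i \rfloor$, the last equality holding because $x < s_i s_{i+1}$ forces $\lfloor x/s_i \rfloor < s_{i+1}$. Hence
\begin{align*}
\Phi_M(x) &= x_1 d_i + x_2 (s_i d_i)\\
&= (x \bmod s_i)\, d_i + \lfloor x/s_i \rfloor\, s_i d_i\\
&= d_i \bigl((x \bmod s_i) + s_i \lfloor x/s_i \rfloor\bigr)\\
&= d_i x = \Phi_{M'}(x),
\end{align*}
which completes the argument. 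The only place requiring any care is the bookkeeping in the reduction: one must check that the two merged modes are genuinely adjacent (so that $L$ really does factor as $L_1 \star M \star L_2$) and that Proposition~\ref{layoutfunctionofconcatenatedlayout} is applied with the correct grouping of sizes. The core computation itself is immediate, so I do not anticipate a substantive obstacle.
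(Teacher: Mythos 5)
Your proof is correct. The overall skeleton matches the paper's: both reduce, via Observation~\ref{flatcoalesceobservation}, to showing that a single merge $s_i,s_{i+1}:d_i,s_id_i \rightsquigarrow s_is_{i+1}:d_i$ preserves the layout function (the paper's proof leaves the appeal to Lemma~\ref{squeezelemma}(3) implicit, whereas you state it; that is a small improvement in completeness). Where you genuinely diverge is in how the single merge step is verified. The paper works with the entire layout at once: it writes out the shifted indices $x_j'$ for all $j$, observes that $x_i' = x_i + x_{i+1}s_i$, and checks term-by-term that the two dot products agree. You instead factor $L = L_1 \star M \star L_2$ and invoke Proposition~\ref{layoutfunctionofconcatenatedlayout} to peel off the unaffected modes, reducing everything to the two-mode identity $\Phi_M = \Phi_{M'}$, which you then verify by the division-algorithm identity $x = (x \bmod s_i) + s_i\lfloor x/s_i\rfloor$. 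Your route is more modular and localizes the computation cleanly (and handles the edge cases $i=1$ or $i+1=m$ automatically, since the empty layout concatenates trivially); the paper's route is more self-contained, not depending on the concatenation proposition. Both are sound.
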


\begin{proof}
By Observation \ref{flatcoalesceobservation}, it suffices to show that replacing an instance of $s_i,s_{i+1}:d_i,s_id_i$ with $s_is_{i+1}:d_i$ leaves the layout function unchanged. Suppose 
\[
L = (s_1,\dots,s_m):(d_1,\dots,d_m)
\]
is a flat layout, and there exists some $1 \leq i < m$ such that $d_{i+1} = s_id_i$. Let 
    \begin{align*} L' & = (s_1',\dots,s_{m-1}'):(d_1',\dots,d_{m-1}')
    \end{align*}
    denote the flat layout obtained from $L$ by combining the $i$th and $(i+1)$th modes of $L$. More precisely, we have 
    
    \[
    s_j' = \begin{cases} s_j & j < i\\
    s_is_{i+1} & j = i \\
    s_{j+1} & i<j<m,
    \end{cases}
    \hspace{0.2in}\text{ and } \hspace{0.2in} d_j' = \begin{cases} d_j & j \leq i\\
    d_{j+1} & i<j<m.
    \end{cases}
    \]
    The layout function for $L$ is given by 
    \[
    \Phi_L(x) = x_1 d_1 + \cdots + x_m d_m
    \]
    where $x_j = \left\lfloor \dfrac{x}{s_1\cdots s_{j-1}} \right \rfloor \mod s_j$, and the layout function for $L'$ is given by 
    \[
    \Phi_{L'}(x) = x_1' d_1' + \cdots + x_{m-1}'d_{m-1}'
    \]
    where  $x_j' = \left\lfloor \dfrac{x}{s_1'\cdots s_{j-1}'} \right \rfloor \mod s_j'$. We observe that 
    
    \[
    x_j' =
    \begin{cases}
        x_j & j < i \\
        x_i + x_{i+1}s_i & j = i\\
        x_{j+1} & i < j < m,
    \end{cases}
    \]
    and so
    \begin{align*}
        \Phi_L(x) & = x_1 d_1 + \cdots + x_m d_m \\
        & = x_1 d_1 + \cdots + x_i d_i + x_{i+1}s_id_i + \cdots + x_m d_m\\
        & = x_1d_1 + \cdots + (x_i + x_{i+1} s_i) d_i + \cdots + x_m d_m \\
        & = x_1'd_1' + \cdots + x_{m-1}'d_{m-1}'\\
        & = \Phi_{L'}(x).
    \end{align*}

\end{proof}

\noindent We can use the coalesce operation to characterize when two flat layouts have the same layout function. 

\begin{proposition} \label{coalesceproposition}
    Suppose $A$ and $B$ are flat layouts. Then 
    \[
    \Phi_A = \Phi_B \quad \Leftrightarrow \quad \coalesce^\flat(A) = \coalesce^\flat(B).
    \]
\end{proposition}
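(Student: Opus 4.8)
The plan is to treat the two implications separately: the direction $\coalesce^\flat(A) = \coalesce^\flat(B) \Rightarrow \Phi_A = \Phi_B$ is immediate from Lemma \ref{coalescelemma} applied twice,
\[
\Phi_A = \Phi_{\coalesce^\flat(A)} = \Phi_{\coalesce^\flat(B)} = \Phi_B,
\]
while for the converse I would reduce to a uniqueness principle. Indeed, if $\Phi_A = \Phi_B$, then Lemma \ref{coalescelemma} shows that the coalesced layouts $\coalesce^\flat(A)$ and $\coalesce^\flat(B)$ satisfy $\Phi_{\coalesce^\flat(A)} = \Phi_A = \Phi_B = \Phi_{\coalesce^\flat(B)}$, so it suffices to prove the following claim: \emph{if $L$ and $L'$ are coalesced flat layouts with $\Phi_L = \Phi_{L'}$, then $L = L'$} --- that is, a coalesced flat layout is determined by its layout function.

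I would establish this claim by strong induction on $N := \size(L)$. Since $\Phi_L$ and $\Phi_{L'}$ have domains $[0,\size(L))$ and $[0,\size(L'))$, the hypothesis forces $\size(L) = \size(L') = N$. If $N = 1$, then because every shape entry of a coalesced layout is at least $2$ (Definition \ref{definitionofflatlayoutcoalesce}), both layouts have rank $0$, whence $L = L' = ():()$. If $N > 1$, write $L = (s_1,\dots,s_m):(d_1,\dots,d_m)$ and $L' = (s_1',\dots,s_{m'}'):(d_1',\dots,d_{m'}')$ with $m, m' \ge 1$. First $d_1 = \Phi_L(1) = \Phi_{L'}(1) = d_1'$. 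Next, unwinding the colexicographic isomorphism in Construction \ref{constructionoflayoutfunctions}, one has $\Phi_L(x) = x d_1$ for $0 \le x < s_1$, whereas if $m \ge 2$ then $\Phi_L(s_1) = d_2$ and coalescedness ($s_1 d_1 \ne d_2$) forces $\Phi_L(s_1) \ne s_1 d_1$. Hence $s_1$ equals the least $x \in [1,N)$ with $\Phi_L(x) \ne x\,\Phi_L(1)$, with the convention that this is $N$ when no such $x$ exists (which is exactly the case $m = 1$). The same description applies to $L'$, so $s_1 = s_1'$. This intrinsic recovery of $s_1$ is the step I expect to be the main obstacle: one must check it behaves uniformly whether or not $d_1 = 0$ and whether $m = 1$ or $m \ge 2$, and it is precisely here that coalescedness is indispensable --- without it, $(7,7):(1,7)$ and $(49):(1)$ would be indistinguishable.

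Finally, with $s_1 = s_1'$ and $d_1 = d_1'$ in hand, I would peel off the first mode. Set $L_{\ge 2} = L\mid_{\{2,\dots,m\}}$ and $L'_{\ge 2} = L'\mid_{\{2,\dots,m'\}}$; both are coalesced, since the two conditions of Definition \ref{definitionofflatlayoutcoalesce} are inherited by a terminal segment of the modes, and $\size(L_{\ge 2}) = N/s_1 < N$. A short computation with $\colex^{-1}$ shows $\Phi_{L_{\ge 2}}(y) = \Phi_L(s_1 y)$ for all $y \in [0, N/s_1)$, and likewise for $L'$; since $s_1 = s_1'$ and $\Phi_L = \Phi_{L'}$, this gives $\Phi_{L_{\ge 2}} = \Phi_{L'_{\ge 2}}$, so the inductive hypothesis yields $L_{\ge 2} = L'_{\ge 2}$. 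Then, using Definition \ref{definitionofflatconcatenate} and Example \ref{flatconcatenateexample2},
\[
L = (s_1):(d_1) \star L_{\ge 2} = (s_1'):(d_1') \star L'_{\ge 2} = L',
\]
which closes the induction and completes the proof.
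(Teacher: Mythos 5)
Your proof is correct, but it takes a genuinely different route from the paper's. The paper proves the converse by contraposition: writing out $\coalesce^\flat(A)$ and $\coalesce^\flat(B)$, it locates the first index $i$ at which their modes differ and then exhibits a single input (a prefix product $s_1\cdots s_{i-1}$ or $s_1\cdots s_i$) on which the two layout functions disagree, splitting into the cases $d_i \neq e_i$ and $d_i = e_i,\ s_i \neq t_i$. You instead prove the positive uniqueness statement --- a coalesced flat layout is determined by its layout function --- by strong induction on the size, intrinsically recovering $d_1 = \Phi_L(1)$ and $s_1$ as the least $x$ at which $\Phi_L(x) \neq x\,\Phi_L(1)$, and then peeling off the first mode via the identity $\Phi_{L_{\ge 2}}(y) = \Phi_L(s_1 y)$. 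The underlying observations are the same (coalescedness is used exactly where you flag it, to ensure $\Phi_L(s_1) = d_2 \neq s_1 d_1$, which is the paper's Case 2 in disguise), but your organization buys an explicit reconstruction algorithm for a coalesced layout from its layout function and handles the boundary case $i = m$ cleanly through the domain comparison $\size(L) = \size(L')$, a case the paper's Case 2 passes over silently (there $\Phi_{\coalesce^\flat(A)}(N)$ with $N = s_1\cdots s_i$ is only defined when $i < m$). The paper's argument is shorter and avoids the induction. All the supporting steps you defer to --- that terminal segments of coalesced layouts are coalesced, that $\Phi_L(x) = x d_1$ for $x < s_1$, and the $\colex^{-1}$ computation behind $\Phi_{L_{\ge 2}}(y) = \Phi_L(s_1 y)$ --- check out.
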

\begin{proof}
    If $\coalesce^\flat(A) = \coalesce^\flat(B)$, then by Lemma \ref{coalescelemma}, we have 
    \[ \Phi_A = \Phi_{\coalesce^\flat(A)} = \Phi_{\coalesce^\flat(B)} = \Phi_B.\]

    \noindent Inversely, suppose that $\coalesce^\flat(A) \neq \coalesce^\flat(B)$. We will argue that $\Phi_A \neq \Phi_B$. Let's write 
    \begin{align*}
    \coalesce^\flat(A) & = (s_1,\dots,s_m) : (d_1,\dots,d_m)\text{,} \\
    \coalesce^\flat(B) & = (t_1,\dots,t_n) : (e_1,\dots,e_m).
    \end{align*}
    If one of $m,n$ is nonzero and the other is $0$, then clearly $\Phi_A \neq \Phi_B$, so we may assume $m,n\geq 1$. Let $i$ denote the least integer such that $(s_i,d_i) \neq (t_i,e_i)$. Then, in particular, we have $s_1 \cdots s_j = t_1 \cdots t_j$ for any $j < i$. There are two cases to consider:\\

    \begin{itemize}
    \item (Case 1): Suppose $d_i \neq e_i$. Let $N = s_1 \cdots s_{i-1} = t_1 \cdots t_{i-1}$. Then 
        \[
        \Phi_{\coalesce^\flat(A)}(N) = d_i \neq e_i = \Phi_{\coalesce^\flat(B)}(N)
        \]
        so $\Phi_{\coalesce^\flat(A)} \neq \Phi_{\coalesce^\flat(B)}$, and hence $\Phi_A \neq \Phi_B$.\\
        
     \item (Case 2): Suppose $d_i = e_i$, so that $s_i \neq t_i$. Without loss of generality we may assume $s_i < t_i$. Let $N = s_1 \cdots s_i = (t_1 \cdots t_{i-1}) s_i$. Then 
        \[
        \Phi_{\coalesce^\flat(A)}(N)= d_{i+1}
        \]
        while 
        \begin{align*}
        \Phi_{\coalesce^\flat(B)}(N) & = s_ie_i\\
        & = s_id_i,
        \end{align*}
        and since $\coalesce^\flat(A)$ is coalesced, we have $d_{i+1} \neq s_id_i$. We deduce that $\Phi_{\coalesce^\flat(A)} \neq \Phi_{\coalesce^\flat(B)}$, and hence $\Phi_A \neq \Phi_B$.
    \end{itemize}
\end{proof}

The previous proposition affords us the following abstract characterization of $\coalesce^\flat(L)$.

\begin{proposition}\label{characterizationofflatcoalesce}
    If $L$ is a flat layout, then $\coalesce^\flat(L)$ is the unique flat layout of minimal rank whose layout function is $\Phi_L$. 
\end{proposition}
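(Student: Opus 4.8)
The plan is to bootstrap from the two facts already in hand — that $\Phi_{\coalesce^\flat(L)} = \Phi_L$ (Lemma~\ref{coalescelemma}) and that $\Phi_A = \Phi_B$ if and only if $\coalesce^\flat(A) = \coalesce^\flat(B)$ (Proposition~\ref{coalesceproposition}) — together with one new observation: the operation $\coalesce^\flat$ never increases rank, and strictly decreases it unless its input is already coalesced (equivalently, $\rank(\coalesce^\flat(L)) = \rank(L)$ if and only if $L$ is coalesced). Granting this, the argument is short. Write $\mathcal{S}$ for the set of flat layouts $M$ with $\Phi_M = \Phi_L$. By Lemma~\ref{coalescelemma} we have $\coalesce^\flat(L) \in \mathcal{S}$, which gives \emph{existence}. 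For \emph{minimality}: if $M \in \mathcal{S}$ is arbitrary, then Proposition~\ref{coalesceproposition} gives $\coalesce^\flat(M) = \coalesce^\flat(L)$, so $\rank(\coalesce^\flat(L)) = \rank(\coalesce^\flat(M)) \leq \rank(M)$ by the rank observation; thus $\coalesce^\flat(L)$ has minimal rank in $\mathcal{S}$. For \emph{uniqueness}: if in addition $\rank(M) = \rank(\coalesce^\flat(L))$, then $\rank(\coalesce^\flat(M)) = \rank(M)$, so the rank observation forces $M$ to be coalesced, and hence $M = \coalesce^\flat(M) = \coalesce^\flat(L)$.

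It remains to justify the rank observation, for which I would unwind Construction~\ref{constructionofflatcoalesce}. Writing $\squeeze(L) = (s_1,\dots,s_m):(d_1,\dots,d_m)$, we have $\rank(\coalesce^\flat(L)) = \bar m = |\langle m \rangle / \sim|$, while $m = \rank(\squeeze(L)) \leq \rank(L)$ since $\squeeze$ only deletes modes, with equality precisely when no shape entry of $L$ equals $1$. Since $\bar m \leq m$ always — with equality precisely when $\sim$ is trivial, i.e.\ when no index satisfies $s_i d_i = d_{i+1}$ — we conclude $\rank(\coalesce^\flat(L)) \leq \rank(L)$, and equality holds if and only if both defining conditions of Definition~\ref{definitionofflatlayoutcoalesce} hold for $L$, i.e.\ $L$ is coalesced. (When $L$ is coalesced, $\coalesce^\flat(L) = L$, consistent with this.) Thus if $L$ is not coalesced, the strict drop occurs at the $\squeeze$ stage (if some $s_i = 1$) or at the quotient-by-$\sim$ stage (if all $s_i > 1$ but some $s_i d_i = d_{i+1}$), exactly as needed above.

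I expect the only mildly delicate point to be this bookkeeping: making precise that \emph{non}-coalesced-ness of $L$ forces a strict decrease in rank at one of the two stages of $\coalesce^\flat$, and that the two clauses of Definition~\ref{definitionofflatlayoutcoalesce} are respectively controlled by $\squeeze$ and by the equivalence relation $\sim$. No individual step is hard; the structure is simply existence from Lemma~\ref{coalescelemma}, and both minimality and uniqueness from Proposition~\ref{coalesceproposition} combined with monotonicity of rank under $\coalesce^\flat$.
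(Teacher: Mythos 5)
Your proposal is correct and follows essentially the same route as the paper: both derive minimality and uniqueness from Proposition~\ref{coalesceproposition} via $\coalesce^\flat(M) = \coalesce^\flat(L)$ together with the fact that $\rank(\coalesce^\flat(M)) \leq \rank(M)$ with equality precisely when $M = \coalesce^\flat(M)$. The only difference is that the paper asserts this rank inequality without comment, whereas you justify it by tracking the rank through the $\squeeze$ and quotient stages of Construction~\ref{constructionofflatcoalesce}, which is a correct (and slightly more complete) account of the same step.
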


\begin{proof}
    Suppose $L'$ is a flat layout with $\Phi_{L'} = \Phi_L$. Then by Proposition \ref{coalesceproposition}, we have 
    \[
    \coalesce^\flat(L) = \coalesce^\flat(L'),
    \]
    and it follows that 
    \[
    \rank(\coalesce^\flat(L)) = \rank(\coalesce^\flat(L')) \leq \rank(L'),
    \]
    where equality holds if and only if
    \[
    L ' = \coalesce^\flat (L') = \coalesce^\flat(L).
    \]
\end{proof}

\subsection{Compact flat layouts}

Before treating layout complements, we must define an important family of layouts called {\it compact} flat layouts. These are the flat layouts whose layout functions are bijective. In terms of the standard grid diagrams depicting layouts, a flat layout $L$ is compact if each integer $0 \leq i< \size(L)$ appears exactly once. For instance, the layout 
\[
\begin{tikzpicture}[x={(0cm,-1cm)},y={(1cm,0cm)},every node/.style={minimum size=1cm, outer sep=0pt}]

\node[fill=gray!20] at (0,0) {0};
\node[fill=gray!20] at (0,1) {3};
\node[fill=gray!20] at (0,2) {6};
\node[fill=gray!20] at (0,3) {9};
\node[fill=gray!20] at (0,4) {12};
\node[fill=gray!20] at (0,5) {15};
\node[fill=gray!20] at (1,0) {1};
\node[fill=gray!20] at (1,1) {4};
\node[fill=gray!20] at (1,2) {7};
\node[fill=gray!20] at (1,3) {10};
\node[fill=gray!20] at (1,4) {13};
\node[fill=gray!20] at (1,5) {16};
\node[fill=gray!20] at (2,0) {2};
\node[fill=gray!20] at (2,1) {5};
\node[fill=gray!20] at (2,2) {8};
\node[fill=gray!20] at (2,3) {11};
\node[fill=gray!20] at (2,4) {14};
\node[fill=gray!20] at (2,5) {17};
\draw[color=black,thick,shift={(-0.5,-0.5)}] (0,0) grid (3,6);

\node[anchor = east] at (1,-1) {$A = (3,6):(1,3) = $};
\end{tikzpicture}
\]
is compact, while the layouts
\[
\begin{tikzpicture}[x={(0cm,-1cm)},y={(1cm,0cm)},every node/.style={minimum size=1cm, outer sep=0pt}]

\node[fill=gray!20] at (0,0) {0};
\node[fill=gray!20] at (0,1) {6};
\node[fill=gray!20] at (0,2) {12};
\node[fill=gray!20] at (0,3) {18};
\node[fill=gray!20] at (0,4) {24};
\node[fill=gray!20] at (0,5) {30};
\node[fill=gray!20] at (1,0) {2};
\node[fill=gray!20] at (1,1) {8};
\node[fill=gray!20] at (1,2) {14};
\node[fill=gray!20] at (1,3) {20};
\node[fill=gray!20] at (1,4) {26};
\node[fill=gray!20] at (1,5) {32};
\node[fill=gray!20] at (2,0) {4};
\node[fill=gray!20] at (2,1) {10};
\node[fill=gray!20] at (2,2) {16};
\node[fill=gray!20] at (2,3) {22};
\node[fill=gray!20] at (2,4) {28};
\node[fill=gray!20] at (2,5) {34};
\draw[color=black,thick,shift={(-0.5,-0.5)}] (0,0) grid (3,6);

\node[anchor = east] at (1,-1) {$B = (3,6):(2,6) = $};
\end{tikzpicture}
\]
and
\[
\begin{tikzpicture}[x={(0cm,-1cm)},y={(1cm,0cm)},every node/.style={minimum size=1cm, outer sep=0pt}]

\node[fill=gray!20] at (0,0) {0};
\node[fill=gray!20] at (0,1) {2};
\node[fill=gray!20] at (0,2) {4};
\node[fill=gray!20] at (0,3) {6};
\node[fill=gray!20] at (0,4) {8};
\node[fill=gray!20] at (0,5) {10};
\node[fill=gray!20] at (1,0) {1};
\node[fill=gray!20] at (1,1) {3};
\node[fill=gray!20] at (1,2) {5};
\node[fill=gray!20] at (1,3) {7};
\node[fill=gray!20] at (1,4) {9};
\node[fill=gray!20] at (1,5) {11};
\node[fill=gray!20] at (2,0) {2};
\node[fill=gray!20] at (2,1) {4};
\node[fill=gray!20] at (2,2) {6};
\node[fill=gray!20] at (2,3) {8};
\node[fill=gray!20] at (2,4) {10};
\node[fill=gray!20] at (2,5) {12};
\draw[color=black,thick,shift={(-0.5,-0.5)}] (0,0) grid (3,6);

\node[anchor = east] at (1,-1) {$C = (3,6):(1,2) = $};
\end{tikzpicture}
\]
are not compact. More precisely, we have the following definition.

\begin{definition}\label{definitionofcompact}
    Suppose $L$ is a flat layout. We say $L$ is {\it compact} if 
    \[\begin{tikzcd}
    {[}0,\size(L){)} \ar[rr,"\Phi_L^{\cosize(L)}"] & & {[}0,\cosize(L){)}
    \end{tikzcd} \]
    is an isomorphism.
\end{definition}

\begin{example}
    The flat layout 
    \[
    L = (2,2,2,2):(1,2,4,8)
    \]
    is compact. More generally, if $L$ is column-major, then $L$ is compact.
\end{example}

\begin{example}
    The flat layout 
    \[
    L = (3,64,32):(2048,32,1)
    \]
    is compact. More generally, if $L$ is row-major, then $L$ is compact.
\end{example}

\begin{example}
    The empty layout 
    \[
    E = ():()
    \]
    is compact.
\end{example}

\begin{example}
    Suppose 
    \[
    L = (s_1,\dots,s_m):(d_1,\dots,d_m)
    \]
    is a flat layout. If there is some mode of $L$ with $s_i > 1$ and $d_i = 0$, then $L$ is not compact. 
\end{example}

We can give an explicit characterization of compact layouts as follows.

\begin{proposition}\label{compactisomorphismproposition}
    Suppose $L$ is a flat layout, and write 
    \[
    \squeeze(L) = (s_1,\dots,s_m):(d_1,\dots,d_m).
    \]
    then $L$ is compact if and only if there exists a permutation $\sigma \in \Sigma_m$ such that 
    \[
    d_{\sigma(i)} = s_{\sigma(1)}\cdots s_{\sigma(i-1)}
    \]
    for all $1 \leq i \leq m$. In other words, $L$ is compact if and only if there exists a permutation $\sigma \in \Sigma_m$ such that $\squeeze(L)^\sigma$ is column-major. 
\end{proposition}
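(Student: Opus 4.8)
The plan is to reduce at once to the case where $L$ has no shape entry equal to $1$, and then prove the two directions, with the forward implication being the real work. By Lemma \ref{squeezelemma}, $L$ and $\squeeze(L)$ have the same size, cosize, and layout function, so $L$ is compact iff $\squeeze(L)$ is; since the right-hand condition only mentions $\squeeze(L)$, I may replace $L$ by $\squeeze(L)$ and assume every $s_i>1$. I will also record that compactness is invariant under permutation of modes: $\size(L^\sigma)=\size(L)$ and $\cosize(L^\sigma)=\cosize(L)$ by reindexing the defining product and sum, while $\Image(\Phi_{L^\sigma})=\Image(\Phi_L)$ by the argument given in the Observation following the construction of $\sort$; since $\Phi_L^{\cosize(L)}$ is a map of finite sets whose domain cardinality, codomain cardinality, and image are all permutation-invariant, it is an isomorphism iff $\Phi_{L^\sigma}^{\cosize(L^\sigma)}$ is. The direction ($\Leftarrow$) is then immediate: if $\squeeze(L)^\sigma$ is column-major, the remark following Construction \ref{constructionoflayoutfunctions} shows $\Phi_{\squeeze(L)^\sigma}^{\cosize}=\id$, so $\squeeze(L)^\sigma$ is compact, hence so is $\squeeze(L)=(\squeeze(L)^\sigma)^{\sigma^{-1}}$, and hence so is $L$.

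For ($\Rightarrow$), assume $L=(s_1,\dots,s_m):(d_1,\dots,d_m)$ is compact with all $s_i>1$, and write $N=\size(L)$, so $\cosize(L)=N$ and $\Phi_L\colon[0,N)\to[0,N)$ is a bijection. First, every $d_i\ge 1$, since a mode with $s_i>1$ and $d_i=0$ prevents compactness (the example after Definition \ref{definitionofcompact}). Replacing $L$ by $\sort(L)=L^\tau$ — still compact, still with all shape entries $>1$ — I may assume in addition that $d_1\le d_2\le\dots\le d_m$. It now suffices to prove that such a sorted compact layout is column-major, i.e.\ $d_k=s_1\cdots s_{k-1}$ for all $1\le k\le m$; then $\sigma=\tau$ witnesses the statement for $\squeeze(L)$.

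The heart of the argument is an induction on $k$. The base case $k=1$ asks that $d_1=1$: the value $1$ lies in $\Image(\Phi_L)=[0,N)$, and since all $d_i\ge 1$ the only way to write $1=\sum_i x_id_i$ with $x_i\in[0,s_i)$ is to have a single $x_i=1$ with $d_i=1$, so $\min_i d_i=1$ and hence $d_1=1$. For the inductive step, suppose $d_j=s_1\cdots s_{j-1}$ for all $j\le k$ with $k<m$, and set $N_k=s_1\cdots s_k$. Then $L\mid_{\{1,\dots,k\}}$ is column-major, so its coordinate function is a bijection onto $[0,N_k)$; equivalently, $\sum_{i=1}^k x_id_i$ ranges over exactly $[0,N_k)$ as the $x_i$ vary. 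Since $N_k<N=\cosize(L)$, we have $N_k\in\Image(\Phi_L)$; writing $N_k=\sum_{i=1}^m x_id_i$, the modes $i\le k$ contribute at most $N_k-1$, so $x_{i_0}\ne 0$ for some $i_0>k$, giving $d_{k+1}\le d_{i_0}\le\sum_{i>k}x_id_i\le N_k$. On the other hand, if $d_{k+1}<N_k$, then $d_{k+1}\in[0,N_k)$ would also be attained by some tuple $(y_1,\dots,y_k,0,\dots,0)$ via the column-major sub-layout, while $\delta^m_{k+1}$ (which is a legitimate element of $[0,\shape(L))$ precisely because $s_{k+1}>1$) also maps to $d_{k+1}$; these two tuples differ in coordinate $k+1$, contradicting injectivity of $\Phi_L$. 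Hence $d_{k+1}=N_k=s_1\cdots s_k$, completing the induction.

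The main obstacle is this inductive step: the upper bound $d_{k+1}\le N_k$ and the lower bound $d_{k+1}\ge N_k$ come from genuinely different uses of compactness — surjectivity at the value $N_k$ for the former, injectivity at the value $d_{k+1}$ for the latter — and the lower bound requires the column-major structure already established on the first $k$ modes to identify exactly which integers in $[0,N_k)$ are already ``used.'' Everything else (the reduction to $\squeeze(L)$, permutation-invariance of compactness, and the $\Leftarrow$ direction) is routine given the earlier results.
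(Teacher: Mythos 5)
Your proof is correct and follows essentially the same route as the paper: reduce to $\squeeze(L)$, note column-major implies compact for the easy direction, and for the converse sort the strides and induct to show $d_{k+1}=s_1\cdots s_k$ using that the first $k$ modes exactly cover $[0,s_1\cdots s_k)$. Your inductive step is in fact slightly more carefully argued than the paper's (which asserts that ``the next largest value of $\Phi_L$ is $d_{\sigma(i)}$''), since you separate the upper bound $d_{k+1}\le N_k$ (surjectivity at $N_k$) from the lower bound $d_{k+1}\ge N_k$ (injectivity at $d_{k+1}$), but the underlying idea is the same.
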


\begin{proof}
    Suppose $L$ is a flat layout, and write
    \[
    \squeeze(L) = (s_1,\dots,s_m):(d_1,\dots,d_m).
    \]
    Suppose first that $L$ is compact, so there exists a permutation $\sigma \in \Sigma_m$ such that $d_{\sigma(i)} = s_{\sigma(1)} \cdots s_{\sigma(i-1)}$ for each $1 \leq i \leq m$. If we write $S^\sigma = (s_{\sigma(1)},\dots,s_{\sigma(m)})$, then we can write $\Phi_L^{\cosize(L)}$ as the composite 
    \[ \begin{tikzcd} [column sep = 12]
    {[}0,\size(L){)} \ar[rr,"\colex_S^{-1}"] & & {[}0,S{)} \ar[rr,"\cong"] & & {[}0,S^\sigma{)}\ar[rr,"\colex_{S^\sigma}"] & & {[}0,\cosize(L){)}\\
    & & (x_1,\dots,x_m) \ar[rr,mapsto] & & (x_{\sigma(1)},\dots,x_{\sigma(m)})
    \end{tikzcd} \] 
    and since each of these maps is an isomorphism, so is the composite $\Phi_L^{\cosize(L)}$.

    Conversely, suppose that $\Phi_L^{\cosize(L)}$ is an isomorphism. First, we note that the strides $d_1,\dots,d_m$ must be pairwise distinct: Suppose $d_i = d_j$, and let  $\delta_i^m$ and $\delta_j^m$ denote the tuples whose $i$th (resp. $j$th) entry is $1$, and all of whose other entries are $0$. These tuples satisfy 
    \[
    \delta_i^m \cdot (d_1,\dots,d_m) = d_i = d_j = \delta_j^m \cdot (d_1,\dots,d_m),
    \]
    and since $\Phi_L^{\cosize(L)}$ is injective, we must have $i=j$. Given that the strides $d_1,\dots,d_m$ are pairwise distinct, let $\sigma \in \Sigma_m$ be the permutation such that 
    \[d_{\sigma(1)} < d_{\sigma(2)} < \cdots < d_{\sigma(m)}.\]
    We will argue by induction on $i \geq 1$ that $d_{\sigma(i)} = s_{\sigma(1)} \cdots s_{\sigma(i-1)}$. For the base case $i=1$, we note that $1$ is in the image of $\Phi_L^{\cosize(L)}$, and the smallest non-zero value of $\Phi_L^{\cosize(L)}$ is $d_{\sigma(1)}$, so it follows that $d_{\sigma(1)} = 1$. Suppose $i>1$, and that we have proved the claim for all $j < i$. Consider the stride $d_{\sigma(i)}$. We know that there is no tuple of the form $(x_1,\dots,x_{i-1},0,\dots,0)^\sigma$ such that 
    \[(x_1,\dots,x_{i-1},0,\dots,0)^\sigma \cdot (d_1,\dots,d_m) = s_{\sigma(1)}\cdots s_{\sigma(i-1)},\] since the largest possible value of such an expression is 
    \[
    \sum_{j = 1}^{i-1} (s_{\sigma(j)}-1)(s_{\sigma(1)}\cdots s_{\sigma(j-1)}) = s_{\sigma(1)} \cdots s_{\sigma(i-1)} - 1. 
    \]
    Since $\Phi_L^\cosize(L)$ is surjective, and $d_{\sigma(i)} <  d_{\sigma(i+1)} <  \cdots <  d_{\sigma(m)}$, it follows that the next largest value of $\Phi_L^{\cosize(L)}$ is $d_{\sigma(i)}$, so we must have $d_{\sigma(i)} = s_{\sigma(1)}\cdots s_{\sigma(i-1)}$, as claimed.
\end{proof}

We conclude this section by giving a family of equivalent conditions for a flat layout $L$ to be compact. 

\begin{proposition}\label{equivalentconditionsforcompact}
    Suppose $L$ is a flat layout. Then the following are equivalent. 
    \begin{enumerate}
        \item $L$ is compact.
        \item $\coalesce^\flat(L)$ is compact.
        \item $\squeeze(L)$ is compact. 
        \item $\sort(L)$ is compact.
    \end{enumerate}
\end{proposition}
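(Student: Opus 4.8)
The plan is to isolate the principle that \emph{compactness is a property of the layout function}, dispatch the equivalences $(1)\Leftrightarrow(2)$ and $(1)\Leftrightarrow(3)$ with it, and then handle $(1)\Leftrightarrow(4)$ separately via the structural description of compact layouts in Proposition~\ref{compactisomorphismproposition}.

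First I would record the observation: for any flat layout $M$, the function $\Phi_M$ (together with its domain) determines $\size(M)$ (the cardinality of the domain of $\Phi_M$), determines $\cosize(M) = 1 + \max(\Phi_M)$ (using $\Phi_M(0) = 0$ and that all strides are nonnegative), and hence determines the corestriction $\Phi_M^{\cosize(M)}\colon [0,\size(M)) \to [0,\cosize(M))$. Thus whether $M$ is compact, i.e.\ whether $\Phi_M^{\cosize(M)}$ is an isomorphism, depends only on $\Phi_M$. In particular, if $A$ and $B$ are flat layouts with $\Phi_A = \Phi_B$, then $A$ is compact iff $B$ is. Now $(1)\Leftrightarrow(3)$ follows immediately from Lemma~\ref{squeezelemma}, which gives $\Phi_{\squeeze(L)} = \Phi_L$ (and $\size(\squeeze(L)) = \size(L)$, so the domains even literally coincide). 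Likewise $(1)\Leftrightarrow(2)$ follows from Lemma~\ref{coalescelemma}, which gives $\Phi_{\coalesce^\flat(L)} = \Phi_L$, together with $\size(\coalesce^\flat(L)) = \size(\squeeze(L)) = \size(L)$.

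The remaining equivalence $(1)\Leftrightarrow(4)$ is the crux, since $\sort$ does \emph{not} preserve the layout function. Here I would use Proposition~\ref{compactisomorphismproposition}: a flat layout $M$ is compact iff some reordering of the modes of $\squeeze(M)$ is column-major. Writing $\sort(L) = L^\tau$, the layouts $\squeeze(\sort(L))$ and $\squeeze(L)$ have the same multiset of modes --- namely the modes $s_i{:}d_i$ of $L$ with $s_i > 1$ --- so each is a reordering of the other. Hence the collection of reorderings of $\squeeze(\sort(L))$ equals the collection of reorderings of $\squeeze(L)$, and one of them is column-major iff the other is; applying Proposition~\ref{compactisomorphismproposition} to both $L$ and $\sort(L)$ yields $(1)\Leftrightarrow(4)$. (Alternatively, one can note $\size(\sort(L)) = \size(L)$ and, by the observation that $\Image(\Phi_{\sort(L)}) = \Image(\Phi_L)$, also $\cosize(\sort(L)) = \cosize(L)$; since $M$ is compact exactly when $\size(M) = \cosize(M)$ and $\Image(\Phi_M)$ is the full interval $[0,\cosize(M))$, the equivalence again follows. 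This alternative avoids Proposition~\ref{compactisomorphismproposition} but requires first recording that explicit two-part criterion for compactness.)

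Finally I would assemble $(1)\Leftrightarrow(2)$, $(1)\Leftrightarrow(3)$, and $(1)\Leftrightarrow(4)$ into the claimed four-way equivalence. The main obstacle is $(1)\Leftrightarrow(4)$: essentially all the care goes into making precise that squeezing commutes with permuting modes (so that $\squeeze(\sort(L))$ is genuinely a reordering of $\squeeze(L)$), after which Proposition~\ref{compactisomorphismproposition} closes the argument.
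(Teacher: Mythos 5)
Your proposal is correct and follows essentially the same route as the paper: the equivalences with $\squeeze$ and $\coalesce^\flat$ come from preservation of the layout function, and the equivalence with $\sort$ comes from Proposition~\ref{compactisomorphismproposition} together with the fact that $\squeeze(\sort(L))$ is a permutation of $\squeeze(L)$ (the paper phrases this as $\squeeze(\sort(L)) = \sort(\squeeze(L)) = \squeeze(L)^\tau$). Your preliminary observation that compactness depends only on $\Phi_M$ is in fact spelled out more carefully than in the paper's one-line justification of $(1)\Leftrightarrow(2)\Leftrightarrow(3)$.
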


\begin{proof}
    The equivalence of 1, 2, and 3, follows from the fact that 
    \[
    \Phi_L = \Phi_{\coalesce^\flat(L)} = \Phi_{\squeeze(L)}.
    \]
    It remains to prove that $L$ is compact if and only if $\sort(L)$ is compact. Using the fact that 
    \[
    \squeeze(\sort(L)) = \sort(\squeeze(L)),
    \]
    we have 
    \begin{align*}
    \sort(L) \text{ is compact.} \hspace{0.2in} & \Leftrightarrow \hspace{0.2in} \squeeze(\sort(L)) \text{ is compact.}\\
    & \Leftrightarrow \hspace{0.2in} \sort(\squeeze(L)) \text{ is compact.}
    \end{align*}
    Now $\sort(\squeeze(L)) = \squeeze(L)^\tau$ for some permutation $\tau \in \Sigma_m$, so there exists a permutation $\sigma$ such that $\squeeze(L)^\sigma$ is column-major if and only if there exists a permutation $\sigma' \in \Sigma_m$ such that $\sort(\squeeze(L))$ is column-major, namely $\sigma' = \tau^{-1} \sigma$. It follows that
        \begin{align*}
    \sort(\squeeze(L)) \text{ is compact.}
    & \Leftrightarrow \hspace{0.2in} \squeeze(L) \text{ is compact.}\\
    & \Leftrightarrow \hspace{0.2in} L \text{ is compact.}
    \end{align*}
\end{proof}

\subsection{Complements}

In this section, we define the notion of complementary flat layouts. Recall from Definition \ref{definitionofcompact} that a flat layout $L$ is {\it compact} if the layout function 
\[\Phi_L^{\cosize(L)}: [0,\size(L)) \to [0,\cosize(L))\]
is an isomorphism.

\begin{definition}\label{definitionofcomplement}
Suppose $A$ and $B$ are flat layouts. We say $B$ is a {\it complement} of $A$, and write $A \perp B$, if the concatenated layout $A \star B$ is compact.
\end{definition}

\begin{example}\label{flatcomplementexample1}
If $A=(3):(5)$ and $B= (5):(1)$, then $A \perp B$ since 
\[
A \star B = (3,5):(5,1)
\]
is compact.

\[
\begin{tikzpicture}[x={(0cm,-1cm)},y={(1cm,0cm)},every node/.style={minimum size=1cm, outer sep=0pt}]

\node[fill=gray!20] at (-2,0) {0};
\node[fill=gray!20] at (-2,1) {1};
\node[fill=gray!20] at (-2,2) {2};
\node[fill=gray!20] at (-2,3) {3};
\node[fill=gray!20] at (-2,4) {4};
\draw[color=black,thick,shift=
{(-0.5,-0.5)}] (-2,0) grid (-1,5);
\node[anchor=east] at (-2,-1) {$B = $};

\node[fill=gray!20] at (0,-2) {0};
\node[fill=gray!20] at (1,-2) {5};
\node[fill=gray!20] at (2,-2) {10};
\draw[color=black,thick,shift=
{(-0.5,-0.5)}] (0,-2) grid (3,-1);
\node[anchor = east] at (1,-3) {$A = $};

\node[fill=gray!20] at (0,0) {0};
\node[fill=gray!20] at (0,1) {1};
\node[fill=gray!20] at (0,2) {2};
\node[fill=gray!20] at (0,3) {3};
\node[fill=gray!20] at (0,4) {4};
\node[fill=gray!20] at (1,0) {5};
\node[fill=gray!20] at (1,1) {6};
\node[fill=gray!20] at (1,2) {7};
\node[fill=gray!20] at (1,3) {8};
\node[fill=gray!20] at (1,4) {9};
\node[fill=gray!20] at (2,0) {10};
\node[fill=gray!20] at (2,1) {11};
\node[fill=gray!20] at (2,2) {12};
\node[fill=gray!20] at (2,3) {13};
\node[fill=gray!20] at (2,4) {14};
\draw[color=black,thick,shift={(-0.5,-0.5)}] (0,0) grid (3,5);

\node[anchor = west] at (1,5) {$= A \star B$};
\end{tikzpicture}
\]
\end{example}

\begin{example}\label{flatcomplementexample2}
    If $A = (4,2,10):(1400,2,20)$ and $B = (2,5,7,2):(1,4,200,5600)$, then $A \perp B$ since 
    \[
    A \star B = (4,2,10,2,5,7,2) : (1400,2,20,1,4,200,5600) 
    \]
    is compact.
\end{example}

\begin{example}
    If $A$ is a flat layout and $E = ():()$ is the empty layout, then $A \perp A$ if and only if $A$ is compact, since 
    \[
    A \star E = A.
    \]
\end{example}

\begin{example}
    If $A$ and $B$ are flat layouts, then 
    \[A \perp B \quad \Leftrightarrow \quad B \perp A.\] 
\end{example}

\begin{example}
    If $A$ is a flat layout, then $A\perp A$ if and only if $\size(A) = 1$. 
\end{example}

\begin{observation}
    In order for $A$ to admit a complement, it is necessary that $\Phi_A$ is injective. There do, however, exist flat layouts $A$ such that $\Phi_A$ is injective, and $A$ does not admit a complement. For example, consider the layout 
    \[
    A = (2,2):(1,3). 
    \]
    The layout function of $A$ is injective since 
    \begin{align*}
    \Phi_A(0) = 0\text{, }\Phi_A(1) = 1\text{, } \Phi_A(2) = 3\text{, and }\Phi_A(3) = 4,
    \end{align*}
    but $A$ does not admit a complement: Suppose 
    \[
    B = (s_1,\dots,s_m):(d_1,\dots,d_m)
    \]
    is any other flat layout. If there does not exist a tuple 
    \[
    (x_1,x_2,y_1,\dots,y_m) \in [0,2) \times [0,2) \times [0,s_1) \times \cdots \times  [0,s_m)
    \]
    such that $\varphi_{A \star B}(x_1,x_2,y_1,\dots,y_m) = 2$,
    then $A \star B$ is not compact. Suppose otherwise that there is such a tuple $(x_1,x_2,y_1,\dots,y_m)$. Then $\varphi_{B}(y_1,\dots,y_m) \in \{0,1,2\}$.
    \begin{itemize}
        \item (Case 1): If $\varphi_{B}(y_1,\dots,y_m) = 0$, then 
        \begin{align*}
        \varphi_{A \star B}(0,0,0,\dots,0) = 0 = \varphi_{A \star B}(0,0,y_1,\dots,y_m).
        \end{align*}
        \item (Case 2): If $\varphi_{B}(y_1,\dots,y_m) = 1$, then 
        \begin{align*}
        \varphi_{A \star B}(1,0,0,\dots,0) = 1 = \varphi_{A \star B}(0,0,y_1,\dots,y_m).
        \end{align*}
        \item (Case 3): If $\varphi_{B}(y_1,\dots,y_m) = 2$, then
        \[
        \varphi_{A \star B}(0,1,0,\dots,0) = 3 = \varphi_{A \star B}(1,0,y_1,\dots,y_m).
        \]
    \end{itemize}
    In any case, we deduce that $\varphi_{A\star B}$ is not injective, hence neither is $\Phi_{A \star B}$. This implies that $A \star B$ is not compact, so $B$ is not a complement of $A$.
\end{observation}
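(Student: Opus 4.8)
The plan is to treat the two assertions separately, in each case combining Definition~\ref{definitionofcompact} (compactness means the layout function is a bijection onto $[0,\cosize)$) with the concatenation formula of Proposition~\ref{layoutfunctionofconcatenatedlayout}, which gives $\varphi_{A \star B}(X_1 \star X_2) = \varphi_A(X_1) + \varphi_B(X_2)$; note also that injectivity of $\Phi_{A \star B}$ is equivalent to injectivity of $\varphi_{A \star B}$, since the colexicographic map is a bijection.

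For the first assertion I would argue contrapositively-in-spirit: suppose $B$ is a complement of $A$, so that $A \star B$ is compact and hence $\varphi_{A \star B}$ is injective. Writing $0$ for the zero tuple of length $\rank(B)$, Proposition~\ref{layoutfunctionofconcatenatedlayout} yields $\varphi_{A \star B}(X \star 0) = \varphi_A(X) + \varphi_B(0) = \varphi_A(X)$ for every $X \in [0,\shape(A))$. Thus $\varphi_A$ is a restriction of the injective map $\varphi_{A \star B}$, so $\varphi_A$, and therefore $\Phi_A$, is injective. (This containment of coordinate functions also gives $\cosize(A\star B)\ge\cosize(A)$, which is used below.)

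For the second assertion I would analyze the specific layout $A = (2,2):(1,3)$ directly. One first records $\varphi_A(0,0)=0$, $\varphi_A(1,0)=1$, $\varphi_A(0,1)=3$, $\varphi_A(1,1)=4$, so $\varphi_A$ is injective while $\Image(\varphi_A)=\{0,1,3,4\}$ omits $2$. Now suppose toward a contradiction that $B = (s_1,\dots,s_m):(d_1,\dots,d_m)$ is a complement of $A$. Since $A\star B$ is compact, $\varphi_{A\star B}$ surjects onto $[0,\cosize(A\star B))$, and $\cosize(A\star B)\ge\cosize(A)=5>2$, so there exist $X\in[0,2)^2$ and $Y\in[0,\shape(B))$ with $\varphi_A(X)+\varphi_B(Y)=2$. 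As $\varphi_A(X)\in\{0,1,3,4\}$ and $\varphi_B(Y)\ge 0$, the only possibilities are $(\varphi_A(X),\varphi_B(Y))=(0,2)$ or $(1,1)$. In the first case $\varphi_{A\star B}\big((0,1)\star 0\big)=3=\varphi_{A\star B}\big((1,0)\star Y\big)$, and in the second $\varphi_{A\star B}\big((1,0)\star 0\big)=1=\varphi_{A\star B}\big((0,0)\star Y\big)$; in both cases the two indicated domain points already differ in their $A$-coordinates, contradicting injectivity of $\varphi_{A\star B}$, hence of $\Phi_{A\star B}$, hence the compactness of $A\star B$.

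The argument is largely mechanical; the only part requiring genuine care is the finite case analysis in the last paragraph. The subtle points are: the value $2$ must be \emph{forced} into the image of $\varphi_{A\star B}$, which uses surjectivity (compactness), not merely injectivity of $\Phi_A$; the enumeration of ways $A$ and $B$ can combine to $2$ must be exhaustive; and each case must be shown to produce a collision with a \emph{distinct} domain point. It is worth noting in passing that $\varphi_B(Y)=0$ cannot occur here, since that would force $\varphi_A(X)=2$, which is impossible.
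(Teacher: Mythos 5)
Your proposal is correct and follows essentially the same route as the paper: both arguments use the additivity $\varphi_{A\star B}(X\star Y)=\varphi_A(X)+\varphi_B(Y)$ to force the value $2$ into the image of a putative compact $A\star B$ and then exhibit the same two collisions $(0,1)\star 0$ vs.\ $(1,0)\star Y$ and $(1,0)\star 0$ vs.\ $(0,0)\star Y$. Your version is slightly tighter in two respects — you supply an actual proof of the first claim (that admitting a complement forces $\Phi_A$ injective) and you correctly observe that the $\varphi_B(Y)=0$ case is vacuous, whereas the paper includes it — but these are refinements of, not departures from, the paper's argument.
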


\begin{observation}
Complements are not unique. For example, if 
\[A = (8,8):(2,32),\]
then each of the layouts 
\begin{align*}
B_1 & = (2,2):(1,16)\\
B_2 & = (2,2):(16,1)\\
B_3 & = (5,2,2,1):(256,1,16,0)
\end{align*} 
is a complement of $A$. Instead, there is a (possibly empty) set 
    \[
    \mathterm{complements}^\flat(A) = \{\text{flat layouts }B \mid B \text{ is a complement of }A\}.
    \]
of layouts which are complementary to $A$.
\end{observation}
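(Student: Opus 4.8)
The plan is to verify the displayed example directly. I will check that each of the three pairwise-distinct flat layouts $B_1 = (2,2):(1,16)$, $B_2 = (2,2):(16,1)$, and $B_3 = (5,2,2,1):(256,1,16,0)$ is a complement of $A = (8,8):(2,32)$; exhibiting several distinct complements of one $A$ establishes that complements are not unique. The accompanying assertion that there is a set $\mathterm{complements}^\flat(A)$ of all complements of $A$ is a definition and requires no proof.

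By Definition \ref{definitionofcomplement}, $B_i$ is a complement of $A$ exactly when the concatenation $A \star B_i$ is compact, so it suffices to verify compactness of three explicit flat layouts. Rather than enumerating all $256$ values of each layout function, I would apply Proposition \ref{compactisomorphismproposition}: a flat layout $L$ is compact if and only if $\squeeze(L)^{\sigma}$ is column-major for some permutation $\sigma$; concretely, one deletes the unit modes of $L$, sorts the surviving modes by increasing stride, and checks that the resulting layout is column-major.

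For $A \star B_1 = (8,8,2,2):(2,32,1,16)$ and $A \star B_2 = (8,8,2,2):(2,32,16,1)$, no shape entry equals $1$, so $\squeeze$ acts as the identity; in both cases sorting the modes by increasing stride yields $(2,8,2,8):(1,2,16,32)$, and this is column-major because its strides $1,2,16,32$ are precisely the partial products $1,\ 2,\ 2\cdot 8,\ 2\cdot 8\cdot 2$ of its shape. For $A \star B_3 = (8,8,5,2,2,1):(2,32,256,1,16,0)$, applying $\squeeze$ first removes the mode $1:0$, leaving $(8,8,5,2,2):(2,32,256,1,16)$; sorting by increasing stride gives $(2,8,2,8,5):(1,2,16,32,256)$, and again the strides $1,2,16,32,256$ are the partial products $1,\ 2,\ 2\cdot 8,\ 2\cdot 8\cdot 2,\ 2\cdot 8\cdot 2\cdot 8$ of the shape. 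Thus each $A \star B_i$ is compact, so each $B_i$ is a complement of $A$, and since the $B_i$ are pairwise distinct, complements are not unique.

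I expect no real obstacle: each case reduces to a short arithmetic check against the column-major pattern. The one point worth flagging is the mode $1:0$ appearing in $B_3$, which is non-degenerate in the sense of Definition \ref{definitionofnondegenerateflatlayout} (a shape entry equal to $1$ forces the stride to be $0$) and which $\squeeze$ discards without altering the layout function by Lemma \ref{squeezelemma}, so it plays no role in the compactness verification.
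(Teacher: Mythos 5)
Your verification is correct and is exactly the check the paper leaves implicit: the Observation is stated without proof, and your use of Proposition \ref{compactisomorphismproposition} to reduce each case to a column-major check after squeezing and sorting is the natural way to fill it in. All three arithmetic verifications are accurate (each sorted, squeezed concatenation has strides $1,2,16,32$, resp.\ $1,2,16,32,256$, matching the prefix products of the shape), and your remark that the mode $1{:}0$ in $B_3$ is discarded by $\squeeze$ without affecting the layout function is the right way to handle the only slightly delicate point.
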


It will be useful to provide a family of equivalent conditions for $B$ to be a complement of $A$ (see Proposition \ref{equivalentconditionsforcomplement}). In order to do so, we need the following technical lemma, which describes the interplay between concatenation, and the operations $\squeeze(-)$, $\sort(-)$, and $\coalesce^\flat(-)$. 

\begin{lemma}\label{concatenationdistributions}
    Suppose $A$ and $B$ are flat layouts. Then 
    \begin{enumerate}
        \item $\squeeze(A\star B) = \squeeze(A)\star\squeeze(B)$,
        \item $\sort(A \star B) = \sort(L \star \sort(B))$, and 
        \item $\coalesce^\flat(A\star B) = \coalesce^\flat(A \star \coalesce^\flat(B))$.
    \end{enumerate}
\end{lemma}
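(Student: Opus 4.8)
Write $A = S_A : D_A$ and $B = S_B : D_B$, so that $A \star B = (S_A \star S_B) : (D_A \star D_B)$ by Definition \ref{definitionofflatconcatenate}, and recall the empty-layout conventions from Example \ref{flatconcatenateexample2}. For part (1) the plan is to observe that $\squeeze$ acts mode-by-mode: by Construction \ref{Constructionofsqueeze}, $\squeeze(L)$ deletes exactly the modes $s_i : d_i$ of $L$ with $s_i = 1$, and whether a mode survives depends only on that mode. Since the modes of $A \star B$ are those of $A$ in order followed by those of $B$ in order, the surviving-index set of $A \star B$ is the (index-shifted) disjoint union of the surviving-index sets of $A$ and of $B$, and restricting to it yields $\squeeze(A) \star \squeeze(B)$. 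I would simply spell out this index set explicitly and conclude.

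For part (2) — where the statement should read $\sort(A \star B) = \sort(A \star \sort(B))$ — the crucial preliminary I would establish is that $\sort(L)$ depends only on the \emph{multiset} of modes of $L$. Indeed, by the construction of $\sort$, if $\sigma$ is the permutation associated to the order $\preceq$ on $\langle \rank(L) \rangle$, then $\mode_{\sigma(1)}(L) \preceq \cdots \preceq \mode_{\sigma(\rank(L))}(L)$, so the modes of $\sort(L)$ read in order are precisely the elements of the multiset $\{\mode_1(L), \dots, \mode_{\rank(L)}(L)\}$ listed in non-decreasing $\preceq$-order; the tie-breaking by original index affects only the choice of $\sigma$, not this sequence, since $s:d$ and $s':d'$ are $\preceq$-comparable in both directions only when $s = s'$ and $d = d'$. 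Granting this, $A \star B$ and $A \star \sort(B)$ have the same multiset of modes (because $\sort(B)$ merely permutes the modes of $B$), hence the same sort.

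For part (3), the plan is to pass to layout functions via Proposition \ref{coalesceproposition}: since $\coalesce^\flat(X) = \coalesce^\flat(Y)$ iff $\Phi_X = \Phi_Y$, it suffices to show $\Phi_{A \star B} = \Phi_{A \star \coalesce^\flat(B)}$. Setting $N_A = \size(A)$ and $N_B = \size(B)$, Proposition \ref{layoutfunctionofconcatenatedlayout} gives
\[
\Phi_{A \star B} = (\Phi_A + \Phi_B) \circ \colex^{-1}_{(N_A, N_B)},
\qquad
\Phi_{A \star \coalesce^\flat(B)} = (\Phi_A + \Phi_{\coalesce^\flat(B)}) \circ \colex^{-1}_{(N_A, \size(\coalesce^\flat(B)))}.
\]
By Lemma \ref{coalescelemma}, $\Phi_{\coalesce^\flat(B)} = \Phi_B$ as functions, which in particular forces $\size(\coalesce^\flat(B)) = \size(B) = N_B$ (alternatively this follows from Lemma \ref{squeezelemma} together with the fact, noted in Observation \ref{flatcoalesceobservation}, that each merge step $s_i, s_{i+1} : d_i, s_i d_i \rightsquigarrow s_i s_{i+1} : d_i$ preserves the product of shape entries). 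Hence the two right-hand sides coincide, so $\Phi_{A \star B} = \Phi_{A \star \coalesce^\flat(B)}$, and Proposition \ref{coalesceproposition} gives the claim.

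I expect the only genuinely delicate point to be the preliminary lemma in part (2) — that $\sort$ is an invariant of the multiset of modes — which forces one to unwind the definition of "the permutation associated to $\preceq$" and verify the tie-breaking convention does no harm; parts (1) and (3) are routine bookkeeping once Proposition \ref{layoutfunctionofconcatenatedlayout}, Lemma \ref{coalescelemma}, and Proposition \ref{coalesceproposition} are in hand. In all three parts I would keep an eye on the degenerate cases where $A$, $B$, or one of their squeezes or coalesces is the empty layout, but these are handled uniformly by the conventions already fixed.
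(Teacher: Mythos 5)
Your proposal is correct and follows essentially the same route as the paper's own proof: part (1) by the same explicit index bookkeeping, part (2) by the observation that $\sort$ is invariant under permuting modes (the paper phrases this as $\sort(L) = \sort(L^\sigma)$ and notes $A \star \sort(B) = (A\star B)^\sigma$ for a block permutation), and part (3) by reducing to equality of layout functions via Proposition \ref{layoutfunctionofconcatenatedlayout} and Proposition \ref{coalesceproposition}. You are somewhat more careful than the paper about the tie-breaking convention in (2) and the size preservation needed in (3), and you correctly flag the typo $\sort(L \star \sort(B))$ for $\sort(A \star \sort(B))$, but the argument is the same.
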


\begin{proof}
    Write 
    \begin{align*}
        A & = (s_1,\dots,s_m):(d_1,\dots,d_m) \\
        B & = (t_1,\dots,t_n):(e_1,\dots,e_n).
    \end{align*}
    If we let $\{ i_1< \dots < i_{m'} \} \subset \langle m \rangle$ denote the indices with $s_{i_k} \neq 1$, and $\{j_1,\dots,j_{n'}\} \subset \langle n \rangle$ denote the indices with $t_{j_\ell} \neq 1$, then 
    \begin{align*}
    \squeeze(A \star B) & = (s_{i_1},\dots,s_{i_{m'}} , t_{j_1},\dots,t_{j_{n'}}) : (d_{i_1},\dots,d_{i_{m'}} , e_{j_1},\dots,e_{j_{n'}})\\
    & = \squeeze(A) \star \squeeze(B).
    \end{align*}
    This proves 1. For 2, we note that for any flat layout $L$, and any permutation $\sigma \in \Sigma_{\len(L)}$, we have $\sort(L) = \sort(L^\sigma)$. The result follows from the observation that 
    \[
    A \star \sort(B) = (A \star B)^\sigma
    \]
    where $\sigma$ is a block permutation of the form $\sigma = \id \times \sigma' \in \Sigma_m \times \Sigma_n \subset \Sigma_{m+n}$. For 3., it suffices to show that $A \star B$ and $A \star \coalesce^\flat(B)$ have the same layout function. This follows from Proposition \ref{layoutfunctionofconcatenatedlayout}.
\end{proof}

\begin{proposition}\label{equivalentconditionsforcomplement}
    Suppose $A$ and $B$ are flat layouts. Then the following are equivalent. 
    \begin{enumerate}
        \item $A \perp B$.
        \item $B \perp A$.
        \item $A \perp \squeeze(B)$.
        \item $A \perp \coalesce^\flat (B)$.
        \item $A \perp \sort(B)$.
    \end{enumerate}
\end{proposition}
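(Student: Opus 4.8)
The plan is to reduce every one of the five conditions to a single statement about compactness, using the characterization of compactness under the operations $\squeeze(-)$, $\sort(-)$, and $\coalesce^\flat(-)$ from Proposition \ref{equivalentconditionsforcompact} together with the distribution laws of Lemma \ref{concatenationdistributions}. Recall that $A \perp B$ means precisely that $A \star B$ is compact, so each of the five conditions asserts the compactness of a particular concatenation, and Proposition \ref{equivalentconditionsforcompact} tells us that replacing a layout by its squeeze, sort, or coalesce does not affect whether it is compact.

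For $(1) \Leftrightarrow (3)$: by Proposition \ref{equivalentconditionsforcompact}, $A \perp B$ holds iff $\squeeze(A \star B)$ is compact, and by Lemma \ref{concatenationdistributions}(1) this layout equals $\squeeze(A) \star \squeeze(B)$. Running the same two steps on $A \star \squeeze(B)$, and using that $\squeeze(-)$ is idempotent, shows that $A \perp \squeeze(B)$ is likewise equivalent to compactness of $\squeeze(A) \star \squeeze(B)$; hence $(1) \Leftrightarrow (3)$. The equivalences $(1) \Leftrightarrow (4)$ and $(1) \Leftrightarrow (5)$ are even more direct: Proposition \ref{equivalentconditionsforcompact} and Lemma \ref{concatenationdistributions}(3) yield the chain ``$A \star B$ compact $\Leftrightarrow$ $\coalesce^\flat(A \star B)$ compact $\Leftrightarrow$ $\coalesce^\flat(A \star \coalesce^\flat(B))$ compact $\Leftrightarrow$ $A \star \coalesce^\flat(B)$ compact'', which is $(1) \Leftrightarrow (4)$, and the identical chain with $\sort(-)$ in place of $\coalesce^\flat(-)$, invoking Lemma \ref{concatenationdistributions}(2), gives $(1) \Leftrightarrow (5)$.

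For $(1) \Leftrightarrow (2)$: I would observe that $B \star A = (A \star B)^{\sigma}$ where $\sigma \in \Sigma_{\len(A)+\len(B)}$ is the block permutation interchanging the first $\len(A)$ coordinates with the last $\len(B)$, so $A \star B$ and $B \star A$ have the same modes up to reordering, whence $\sort(A \star B) = \sort(B \star A)$; Proposition \ref{equivalentconditionsforcompact} then gives $A \perp B$ iff $\sort(A \star B)$ is compact iff $\sort(B \star A)$ is compact iff $B \perp A$. (This also recovers the symmetry already recorded in the example following Definition \ref{definitionofcomplement}.) The argument is essentially bookkeeping; the only points requiring mild care are citing the correct orientation of Lemma \ref{concatenationdistributions} — which is stated with the second factor already normalized, exactly as needed — and using idempotence of $\squeeze(-)$ to close the loop in the squeeze case. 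I do not anticipate any genuine obstacle.
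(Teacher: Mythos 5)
Your proposal is correct and follows essentially the same route as the paper's proof: reduce each condition to compactness of a concatenation via Proposition \ref{equivalentconditionsforcompact}, use Lemma \ref{concatenationdistributions} (together with idempotence, in the squeeze case) to identify $\op(A \star B)$ with $\op(A \star \op(B))$, and handle $(1) \Leftrightarrow (2)$ by observing that $\sort(A \star B) = \sort(B \star A)$. No gaps.
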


\begin{proof}
    We use Proposition \ref{equivalentconditionsforcompact} and Lemma \ref{concatenationdistributions} to prove the equivalence of these conditions. First, we note that $\sort(A \star B) = \sort(B \star A)$, which implies the equivalence of 1 and 2. Next, we note that, by Lemma \ref{concatenationdistributions}, if $\op(-)$ is any of the operations $\squeeze(-)$, $\sort(-)$, or $\coalesce^\flat(-)$, then 
    \[
    \op(A \star B) = \op(A \star \op(B)),
    \]
    and so 
    \begin{align*}
    A \perp B \hspace{0.2in} & \Leftrightarrow \hspace{0.2in} A \star B \text{ is compact}. \\
    & \Leftrightarrow \hspace{0.2in} \op(A \star B) \text{ is compact}. \\
    & \Leftrightarrow \hspace{0.2in} \op(A \star \op(B)) \text{ is compact}. \\
    & \Leftrightarrow \hspace{0.2in} \op(B) \text{ is a complement of }A.
    \end{align*} 
\end{proof}

\noindent We would like to characterize when a flat layout admits a complement. To this end, we make the following definition.

\begin{definition}\label{definitionofcomplementable}
    Suppose $A$ is a flat layout, and write
    \[\sort(\squeeze(A)) = (s_1,\dots,s_m):(d_1,\dots,d_m).\]
    We say $A$ is {\it complementable} if for any $1 \leq i < m$, the integer $s_id_i$ divides $d_{i+1}$.
\end{definition}

\begin{example}
    The flat layout
    \[A_1 = (4,1,1,4,4):(64,0,0,1,8)\] is complementable, while the flat layout
    \[ A_2 = (4,4,4):(64,1,1)\] is not complementable.
\end{example}

\begin{example}
    The flat layout
    \[A_1 = (10,2):(4,80)\] complementable, while the flat layout
    \[ A_2 = (10,2):(80,4)\] is not complementable.
\end{example}

\begin{example}
    If $A$ is compact, then by Proposition \ref{compactisomorphismproposition}, $A$ is complementable.
\end{example}

\begin{example}
    Suppose
    \[A = (s_1,\dots,s_m):(d_1,\dots,d_m)\]
    is a flat layout. If there is any $1 \leq i \leq m$ such that $s_i \neq  1$ and $d_i = 0$, then $A$ is not complementable.
\end{example}

If $A$ is complementable, then we can construct a complement of $A$ as follows.

\begin{construction}\label{complementconstruction}
    Suppose $A$ is a flat layout, and write
    \[\sort(\squeeze(A)) = (s_1,\dots,s_m):(d_1,\dots,d_m).\]
    If $A$ is complementable, then we define a flat layout $\comp^\flat(A)$ as 
    \[
    \comp^\flat(A) = \coalesce^\flat(C)
    \]
    where
    \[C = \Bigl(d_1 , \dfrac{d_2}{s_1d_1} , \dfrac{d_3}{s_2d_2} , \dots ,\dfrac{d_m}{s_{m-1}d_{m-1}} \Bigr) : \Bigl(1 , s_1d_1 , s_2d_2, \dots ,s_{m-1}d_{m-1}\Bigr).
    \]
\end{construction}

\begin{example}
    If $A = (8,8):(1,8)$, then 
    \[\comp^\flat(A) = ():()\]
    is the empty layout. More generally, if $A$ is compact, then $\comp^\flat(A) = ():()$ is the empty layout.
\end{example}

\begin{example}
    If $A = (2,2):(2,8)$, then \[\comp^\flat(A) = (2,2):(1,4).\] 
\end{example}

\begin{example}
    If $A = (3,3,8):(16,96,1)$, then 
    \[\comp^\flat(A) = (2,2):(8,48).\]
\end{example}

\noindent Let's justify that $\comp^\flat(A)$ is, in fact, a complement of $A$. 

\begin{lemma}\label{complementsarecomplements}
    Suppose $A$ is a flat layout. If $A$ is a complementable, then 
    \[A \perp \comp^\flat(A).\]
\end{lemma}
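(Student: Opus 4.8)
The plan is to put $A$ into a normal form, reduce the claim to the compactness of a single concatenated layout, and then recognize that layout as a permuted column-major layout via Proposition \ref{compactisomorphismproposition}.

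First I would normalize. Write $A' = \sort(\squeeze(A)) = (s_1,\dots,s_m):(d_1,\dots,d_m)$, so that every $s_i > 1$, the modes are sorted, and, by complementability, $s_i d_i \mid d_{i+1}$ for $1 \le i < m$; let $C$ be the flat layout of Construction \ref{complementconstruction}, so $\comp^\flat(A) = \coalesce^\flat(C)$. Applying Proposition \ref{equivalentconditionsforcomplement} in turn --- replace $\comp^\flat(A) = \coalesce^\flat(C)$ by $C$, swap the two arguments, replace $A$ by $\squeeze(A)$, and then by $\sort(\squeeze(A)) = A'$ --- reduces the goal $A \perp \comp^\flat(A)$ to $A' \perp C$, i.e. to proving that $A' \star C$ is compact. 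The case $m = 0$ is immediate, and otherwise $d_1 \ge 1$ (so that $C$ is an honest flat layout), whence sortedness together with the divisibilities forces every $d_i \ge 1$ and every quotient $d_{i+1}/(s_i d_i) \ge 1$.

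Next I would exhibit the permutation demanded by Proposition \ref{compactisomorphismproposition}. The modes of $A' \star C$ are the $s_i : d_i$ together with $d_1 : 1$ and $\frac{d_{i+1}}{s_i d_i} : s_i d_i$ for $1 \le i \le m-1$. Interleaving them in the order
\[ d_1 : 1,\ \ s_1 : d_1,\ \ \frac{d_2}{s_1 d_1} : s_1 d_1,\ \ s_2 : d_2,\ \ \frac{d_3}{s_2 d_2} : s_2 d_2,\ \ \dots,\ \ s_m : d_m \]
yields a flat layout $M$ whose running products of shapes are $1,\ d_1,\ d_1 s_1,\ d_2,\ d_2 s_2,\ d_3,\ \dots$; a one-line telescoping check shows that each stride of $M$ equals the product of the shapes preceding it, so $M$ is column-major. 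Since $A' \star C$ is a permutation of $M$, and deleting a shape-$1$ mode from a column-major layout leaves the running products of the surviving shapes unchanged, $\squeeze(A' \star C)^\sigma = \squeeze(M)$ is column-major for the permutation $\sigma$ sorting the modes of $\squeeze(A' \star C)$ into the above order. By Proposition \ref{compactisomorphismproposition}, $A' \star C$ is compact, which is what remained to be shown.

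The genuine content is the observation embedded above: the strides $1,\ d_1,\ s_1 d_1,\ d_2,\ s_2 d_2,\ \dots,\ d_m$ of $A' \star C$ form a chain under divisibility whose successive quotients are exactly the shapes of $A' \star C$, so after sorting by stride $A' \star C$ is column-major. I expect the only delicate points --- both routine given that $A'$ is sorted, squeezed, and complementable --- to be (i) checking that the interleaved order is genuinely a sort of the strides (strict increase on the surviving modes, with equalities confined to the discarded shape-$1$ modes) and (ii) confirming positivity of all shape entries appearing in $C$, without which the construction $\comp^\flat(A)$ is not even defined.
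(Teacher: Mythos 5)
Your proposal is correct and follows essentially the same route as the paper: reduce via Proposition \ref{equivalentconditionsforcomplement} to showing that $\sort(\squeeze(A)) \star C$ is compact, and then exhibit compactness by interleaving the modes into the column-major layout with strides $1, d_1, s_1d_1, d_2, \dots, d_m$. The only (minor) difference is that you route the final step through Proposition \ref{compactisomorphismproposition} and explicitly handle the shape-$1$ modes arising from quotients $d_{i+1}/(s_id_i) = 1$, whereas the paper simply asserts that the sorting of the concatenation is the interleaved column-major layout; your version is slightly more careful on that point but the argument is the same.
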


\begin{proof}
    Lets write 
    \[\sort(\squeeze(A)) = (s_1,\dots,s_m):(d_1,\dots,d_m),\]
    so that $\comp^\flat(A) = \coalesce^\flat(C)$
    where
    \[C = \Bigl(d_1 , \dfrac{d_2}{s_1d_1} , \dfrac{d_3}{s_2d_2} , \dots ,\dfrac{d_m}{s_{m-1}d_{m-1}} \Bigr) : \Bigl(1 , s_1d_1 , s_2d_2, \dots ,s_{m-1}d_{m-1}\Bigr).
    \]
    By Proposition \ref{equivalentconditionsforcomplement}, it suffices to prove that $C$ is a complement of $\sort(\squeeze(A))$. This is the case since the concatenation 
    \[
    \sort(\squeeze(A)) \star C
    \]
    is equal to
    \begin{align*}
    \left(s_1,\dots,s_m,d_1,\dfrac{d_2}{s_1d_1},\dots,\dfrac{d_m}{s_{m-1}d_{m-1}}\right):(d_1,\dots,d_m,1,s_1d_1,\dots, s_{m-1}d_{m-1}),
    \end{align*}
    and its sorting is equal to 
    \[
    \left(d_1,s_1, \dfrac{d_2}{s_1d_1},\dots, \dfrac{d_m}{s_{m-1}{d_{m-1}}}, s_m \right):(1,d_1,s_1d_1, \dots ,s_{m-1}d_{m-1}, d_m)
    \]
    which is column-major.
\end{proof}

We have shown that if $A$ is complementable, then $A$ admits a complement. Next, we prove that the converse also holds. 

\begin{proposition}\label{complementableproposition}
    Suppose $A$ is a flat layout. Then there exists a complement $B$ of $A$ if and only if $A$ is complementable.
\end{proposition}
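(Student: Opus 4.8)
The forward implication is already in hand: Lemma~\ref{complementsarecomplements} shows that if $A$ is complementable then $\comp^\flat(A)$ is a complement of $A$, so $A$ admits a complement. It remains to prove the converse, that if $A$ admits a complement $B$, then $A$ is complementable. The plan is to pass to a column-major model of $A \star B$ and read off the divisibility condition directly. First I would reduce to the case where $A$ and $B$ are sorted and squeezed: by Proposition~\ref{equivalentconditionsforcomplement} (applied repeatedly, using the symmetry $A \perp B \Leftrightarrow B \perp A$) the relation $A \perp B$ is equivalent to $\sort(\squeeze(A)) \perp \sort(\squeeze(B))$, and whether $A$ is complementable depends only on $\sort(\squeeze(A))$, which is fixed under re-applying $\sort \circ \squeeze$ (using $\squeeze \circ \sort = \sort \circ \squeeze$ and idempotence). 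So assume $A = (s_1,\dots,s_m):(d_1,\dots,d_m)$ with every $s_i \geq 2$ and $\mode_i(A) \preceq \mode_{i+1}(A)$, and $B$ likewise squeezed; we must show $s_i d_i \mid d_{i+1}$ for all $1 \leq i < m$.

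Next I would upgrade compactness of $A \star B$ to column-majorness of $\sort(A \star B)$. Since $A \star B$ is squeezed and compact, so is $\sort(A \star B)$ by Proposition~\ref{equivalentconditionsforcompact}, and by Proposition~\ref{compactisomorphismproposition} there is a permutation $\sigma$ making $\sort(A \star B)^\sigma$ column-major. The key point is that in a column-major layout $(v_1,\dots,v_k):(g_1,\dots,g_k)$ whose shape entries are all $\geq 2$, the strides satisfy $1 = g_1 < g_2 < \cdots < g_k$ strictly, since each $g_{j+1} = v_j g_j \geq 2 g_j$; hence all strides of $\sort(A \star B)$ are distinct, and a sorted squeezed layout with distinct strides already lists them increasingly, forcing $\sigma$ to be the identity. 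Therefore $\sort(A \star B) = (u_1,\dots,u_k):(1, u_1, \dots, u_1 \cdots u_{k-1})$ with each $u_j \geq 2$.

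Finally I would locate the modes of $A$ inside this column-major layout. The modes of $A$ form a sub-multiset of the modes of $A \star B$, all of which have pairwise distinct strides; since $A$ is sorted, its modes appear as a strictly stride-increasing sublist of those of $\sort(A \star B)$, yielding a strictly increasing map $\pi \colon \{1,\dots,m\} \to \{1,\dots,k\}$ with $s_i = u_{\pi(i)}$ and $d_i = u_1 \cdots u_{\pi(i)-1}$. Then for $1 \leq i < m$ we have
\[
s_i d_i = u_1 \cdots u_{\pi(i)} \quad\text{and}\quad d_{i+1} = u_1 \cdots u_{\pi(i+1)-1},
\]
and since $\pi(i) \leq \pi(i+1) - 1$ the first divides the second, so $A$ is complementable. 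Combined with Lemma~\ref{complementsarecomplements}, this proves the proposition.

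The step I expect to be the main obstacle is the passage to the column-major model: pinning down that the permutation witnessing column-majorness of $\sort(A \star B)$ may be taken trivial (the short strictly-increasing-strides observation), and verifying that the modes of $A$ sit inside $\sort(A \star B)$ as a strictly stride-increasing sublist so that $\pi$ is genuinely monotone. The surrounding reductions via Propositions~\ref{equivalentconditionsforcompact} and~\ref{equivalentconditionsforcomplement} are routine bookkeeping, and the final divisibility computation is immediate once $\pi$ is available.
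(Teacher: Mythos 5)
Your proof is correct and follows essentially the same route as the paper's: both directions hinge on showing that $\sort(\squeeze(A)\star\squeeze(B))$ is column-major and then reading off the divisibility condition from the fact that the modes of $\sort(\squeeze(A))$ embed as a strictly stride-increasing sublist $j_1 < \cdots < j_{m'}$ of its modes. The only difference is how column-majorness is established: the paper re-runs the surjectivity induction from the proof of Proposition~\ref{compactisomorphismproposition} directly on $L = \sort(\squeeze(A)\star\squeeze(B))$, whereas you invoke Proposition~\ref{compactisomorphismproposition} as a black box and observe that the witnessing permutation must be trivial because the strides are distinct and a sorted layout with distinct strides is already in column-major order --- a slightly cleaner packaging of the same content.
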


\begin{proof}
    If $A$ is complementable, then by Lemma \ref{complementsarecomplements} the layout $B = \comp^\flat(A)$ is a complement of $A$. Conversely, suppose there exists a complement $B$ of $A$, and consider the flat layout 
    \begin{align*} L & = \sort\bigl(\squeeze(A)\star \squeeze(B)\bigr) \\
    & = (s_1,\dots,s_m):(d_1,\dots,d_n).
    \end{align*}
    Since $\Phi_L(0) = 0$, and $\Phi_L$ is injective, we know that $d_1 \neq 0$. We will argue that $d_i = s_1 \cdots s_{i-1}$, i.e., that $L$ is column-major. Since 
    \[\Phi_L^{\cosize(L)} : [0,\size(L)) \to [0,\cosize(L))
    \]
    is a bijection, we know that $1$ is in the image of $\Phi_L$, which implies that $d_1 = 1$. Suppose $1 < i \leq m$, and suppose we have proved that $d_j = s_1 \cdots s_{j-1}$ for all $j < i$. Consider the stride $d_{i}$. We know that there is no $(x_1,\dots,x_{i-1},0,\dots,0)$ such that $(x_1,\dots,x_{i-1},0,\dots,0) \cdot (d_1,\dots,d_m) = s_1\cdots s_{i-1}$, since the largest possible value of such an expression is 
    \[
    \sum_{j = 1}^{i-1} (s_j-1)(s_1\cdots s_{j-1}) = s_1 \cdots s_i - 1. 
    \]
    Since $\Phi_L$ is surjective, and $d_i \leq  d_{i+1} \leq  \cdots \leq  d_m$, it follows that the next largest value of $\Phi_L$ is $d_i$, so we must have $d_i = s_1\cdots s_{i-1}$, as claimed.

    Returning to our main goal, consider the layout 
    \[\sort(\squeeze(A)) = (s_1',\dots,s_{m'}'):(d_1',\dots,d_{m'}').\] Then there exist $j_1 < \cdots < j_{m'}$ such that $s_i' = s_{j_i}$ and $d_i' = d_{j_i}$ for each $1 \leq i \leq m'$. If $1 \leq i < m'$, then
    \[
    s_i'd_i' = s_{j_i}d_{j_i} = s_{j_i} s_1 \cdots s_{j_i - 1}
    \]
    divides 
    \[
    d_{i+1}' = s_1 \cdots s_{j_{i+1}-1},
    \]
    so we conclude that $A$ is complementable. 
\end{proof}

\noindent Our next goal is to give an abstract characterization of the complement $\comp^\flat(A)$ of a flat layout $A$. In order to do so, we need the following lemma.

\begin{lemma}\label{sortedcomplementablelayoutsareincreasing}
    Suppose $A$ is a flat layout. If $A$ is complementable and sorted, then the layout function 
    \[
    \Phi_A:[0,\size(A)) \to \mathbb{Z}
    \]
    is increasing.
\end{lemma}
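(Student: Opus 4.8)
The plan is to reduce to a squeezed layout and then check that $\Phi_A$ is non-decreasing on consecutive integers, which is what ``increasing'' requires here (in fact $\Phi_A$ is strictly increasing as soon as $A$ has a nonzero stride). By Lemma~\ref{squeezelemma} we have $\Phi_A = \Phi_{\squeeze(A)}$; moreover $\squeeze(A)$ is again sorted (restricting to a subset of modes preserves a $\preceq$-increasing list) and again complementable (complementability is tested on $\sort(\squeeze(-))$, and $\squeeze$ is idempotent). So I may assume $A = (s_1,\dots,s_m):(d_1,\dots,d_m)$ with every $s_i \ge 2$. Being sorted and squeezed, $A$ satisfies $\sort(\squeeze(A)) = A$, so Definition~\ref{definitionofcomplementable} reads $s_id_i \mid d_{i+1}$ for $1 \le i < m$, and sortedness gives $d_1 \le d_2 \le \dots \le d_m$.

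Next I would handle the degenerate case and record a telescoping bound. If $d_1 = 0$, then $s_1d_1 = 0$ divides $d_2$, forcing $d_2 = 0$, and inductively $d_i = 0$ for all $i$; then $\Phi_A \equiv 0$, which is non-decreasing. So assume henceforth $d_1 \ge 1$. Then every $d_i \ge 1$ and $d_{i+1} \ge s_id_i$: inductively, if $d_i \ge 1$ then $s_id_i \ge 2$, while $d_{i+1}$ is a multiple of $s_id_i$ with $d_{i+1} \ge d_i \ge 1 > 0$ by sortedness, hence a positive multiple of $s_id_i$, so $d_{i+1}\ge s_id_i$. Telescoping then gives, for each $1 \le k < m$,
\[
\sum_{i=1}^{k}(s_i-1)d_i \;=\; \sum_{i=1}^{k}\bigl(s_id_i-d_i\bigr)\;\le\;\sum_{i=1}^{k}\bigl(d_{i+1}-d_i\bigr)\;=\;d_{k+1}-d_1\;<\;d_{k+1}.
\]

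It now suffices to show $\Phi_A(x) \le \Phi_A(x+1)$ for every $x$ with $x, x+1 \in [0,\size(A))$, since monotonicity on consecutive integers implies it in general. Writing $(x_1,\dots,x_m) = \colex_{\shape(A)}^{-1}(x)$, the inequality $x+1 < \size(A) = s_1\cdots s_m$ forces some $x_i < s_i - 1$, so let $j$ be the least such index. By the explicit formula for $\colex^{-1}$ in Definition~\ref{definitionofcolex} (the usual mixed-radix carry), $\colex_{\shape(A)}^{-1}(x+1) = (0,\dots,0,x_j+1,x_{j+1},\dots,x_m)$, and therefore
\[
\Phi_A(x+1)-\Phi_A(x)\;=\;d_j-\sum_{i=1}^{j-1}(s_i-1)d_i\;\ge\;d_j-(d_j-d_1)\;=\;d_1\;>\;0,
\]
by the estimate above with $k=j-1$ (the empty sum being $0$ when $j=1$). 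Hence $\Phi_A$ is non-decreasing, as required. I expect the two care-points to be treating the all-zero-stride case on its own and pinning down the carry description of $\colex^{-1}(x+1)$ precisely; the remaining content is just the telescoping estimate.
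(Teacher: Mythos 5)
Your proof is correct and rests on the same key estimate as the paper's, namely the telescoping bound $\sum_{i<k}(s_i-1)d_i \le d_k$ derived from the divisibility condition $s_id_i \mid d_{i+1}$, combined with the colexicographic coordinate description of $\Phi_A$. The only differences are cosmetic and in your favor: you reduce to comparing consecutive integers via the mixed-radix carry rather than an arbitrary pair $x \le y$, and you explicitly dispose of the squeeze reduction and the all-zero-stride case, which the paper's proof passes over silently.
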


\begin{proof}
    Write 
    \[
    A = (s_1,\dots,s_m):(d_1,\dots,d_m).
    \]
    If $1 \leq k \leq m$, we claim that 
    \[
    d_1(s_1-1) + d_2(s_2-1) + \cdots + d_{k-1}(s_{k-1}-1) \leq d_k.
    \]
    If $k = 1$, this holds vacuously, and by induction on $k$, we have 
    \begin{align*}
        d_1(s_1-1) + \cdots + d_{k-2}(s_{k-2}-1) + d_{k-1}(s_{k-1}-1) & \leq d_{k-1} + d_{k-1}(s_k-1)\\
        & = d_{k-1}s_{k-1}\\
        & \leq d_k.
    \end{align*}
    Now, suppose we have $x,y \in [0,\size(A))$ with $x \leq y$. These integers correspond, under the colexicographic isomorphism, to tuples.
    \[(x_1,\dots,x_m),(y_1,\dots,y_m) \in [0,s_1) \times \cdots \times [0,s_m)\]
    Since $x \leq y$, we know there is some maximal $1 \leq k \leq m$ such that $x_k<y_k$, and $x_\ell = y_\ell$ for all $k < \ell \leq m$. Now we can compute 
    \begin{align*}
    \Phi_A(x) & = d_1x_1 + \cdots + d_{k-1}x_{k-1} + d_k x_k + d_{k+1}x_{k+1} + \cdots + d_m x_m \\
    & = d_1x_1 + \cdots + d_{k-1}x_{k-1} + d_k x_k + d_{k+1}y_{k+1} + \cdots + d_m y_m\\
    & \leq d_1(s_1-1) + \cdots + d_{k-1}(s_{k-1}-1) + d_k x_k + d_{k+1}y_{k+1} + \cdots + d_m y_m\\
    & \leq d_k + d_kx_k + d_{k+1}y_{k+1} + \cdots + d_m y_m\\
    & = d_k(x_k+1) + d_{k+1}y_{k+1} + \cdots + d_m y_m\\
    & \leq d_ky_k +  d_{k+1}y_{k+1} + \cdots + d_m y_m\\
    & \leq d_1y_1 + \dots d_my_m\\
    & = \Phi_A(y).
    \end{align*}
\end{proof}

\begin{proposition}\label{characterizationofflatcomplement}
    Suppose $A$ and $B$ are flat layouts. If
    \begin{enumerate}
        \item $A \perp B$,
        \item $\size(B) = \size(\comp^\flat(A))$,
        \item $B$ is coalesced, and 
        \item $B$ is sorted, 
    \end{enumerate}
    then $B = \comp^\flat(A)$. 
\end{proposition}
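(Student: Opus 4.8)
The plan is to reduce the statement to the single claim $\Phi_B = \Phi_{\comp^\flat(A)}$, and then to deduce that claim by showing both layout functions are strictly increasing with the same domain and the same image.

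First I would record the structural facts about $\comp^\flat(A)$ that put it on the same footing as $B$. Since $A \perp B$, Proposition \ref{complementableproposition} shows $A$ is complementable, so $\comp^\flat(A)$ is defined and $A \perp \comp^\flat(A)$ by Lemma \ref{complementsarecomplements}. Because $\comp^\flat(A) = \coalesce^\flat(C)$ for the auxiliary layout $C$ of Construction \ref{complementconstruction}, and $\coalesce^\flat$ always produces a coalesced layout, $\comp^\flat(A)$ is coalesced. It is also sorted: writing $\sort(\squeeze(A)) = (s_1,\dots,s_m):(d_1,\dots,d_m)$, complementability forces each $d_i \geq 1$ (otherwise some mode has $s_i>1$ and $d_i=0$) and $s_id_i \mid d_{i+1}$, so the strides $1, s_1d_1, \dots, s_{m-1}d_{m-1}$ of $C$ are strictly increasing; the strides of $\coalesce^\flat(C)$ are obtained from those of $C$ only by deleting entries, so they are strictly increasing as well, which makes $\comp^\flat(A)$ sorted. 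Finally, $B$ and $\comp^\flat(A)$ are both complementable: applying the symmetry of $\perp$ (Proposition \ref{equivalentconditionsforcomplement}) to $A\perp B$ and to $A\perp\comp^\flat(A)$ exhibits $A$ as a complement of each, so each is complementable by Proposition \ref{complementableproposition}. Since $B$ and $\comp^\flat(A)$ are both coalesced, Proposition \ref{coalesceproposition} reduces the desired identity $B = \comp^\flat(A)$ to the claim that $\Phi_B = \Phi_{\comp^\flat(A)}$.

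Next I would observe that $\Phi_B$ and $\Phi_{\comp^\flat(A)}$ are strictly increasing functions on the common domain $[0,\size(B)) = [0,\size(\comp^\flat(A)))$, the equality of domains being exactly hypothesis 2. Indeed each is non-decreasing by Lemma \ref{sortedcomplementablelayoutsareincreasing} (both layouts are sorted and complementable) and injective (a layout that admits a complement has injective layout function), hence strictly increasing. A strictly increasing function out of a finite interval $[0,K)$ is determined by its image, so it suffices to prove $\Image(\Phi_B) = \Image(\Phi_{\comp^\flat(A)})$. For this, set $N = \size(A)\cdot\size(B) = \size(A)\cdot\size(\comp^\flat(A))$. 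Since $A\star B$ is compact, $\Phi_{A\star B}$ is injective with image $[0,N)$; by Proposition \ref{layoutfunctionofconcatenatedlayout} it factors as $\Phi_A + \Phi_B$ precomposed with a colexicographic bijection, so the values $\Phi_A(a)+\Phi_B(b)$, as $(a,b)$ ranges over $[0,\size(A))\times[0,\size(B))$, are pairwise distinct and exhaust $[0,N)$. Encoding a finite set $S\subset\mathbb{N}$ by the polynomial $p_S(q) = \sum_{s\in S} q^s$, injectivity of $\Phi_A$ and of $\Phi_B$ together with this direct-sum property gives $p_{\Image(\Phi_A)}(q)\cdot p_{\Image(\Phi_B)}(q) = 1 + q + \cdots + q^{N-1}$, whence $p_{\Image(\Phi_B)}(q) = (1 + q + \cdots + q^{N-1})/p_{\Image(\Phi_A)}(q)$ depends only on $\Image(\Phi_A)$ and on $N$. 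Running the identical argument with $\comp^\flat(A)$ in place of $B$ — using that $A\star\comp^\flat(A)$ is compact and that $N$ is unchanged — yields $p_{\Image(\Phi_{\comp^\flat(A)})} = p_{\Image(\Phi_B)}$, hence $\Image(\Phi_B) = \Image(\Phi_{\comp^\flat(A)})$, and the proof is complete.

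The main obstacle is the cancellation step: complementary Minkowski summands of an interval are not unique in general, so some additional rigidity is needed. The key point is that compactness of $A\star B$ makes the decomposition $[0,N) = \Image(\Phi_A) + \Image(\Phi_B)$ a \emph{direct} sum (unique representations), and at the level of generating polynomials a direct sum is exactly a polynomial factorization of $1+q+\cdots+q^{N-1}$, so the cofactor $p_{\Image(\Phi_B)}$ is genuinely determined by $\Image(\Phi_A)$. Everything else — the verification that $\comp^\flat(A)$ is coalesced, sorted, and complementable, and the passage from equal increasing functions to equal layouts — is routine bookkeeping using results already established.
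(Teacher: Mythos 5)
Your proof is correct and follows the same skeleton as the paper's: reduce the claim to $\Phi_B = \Phi_{\comp^\flat(A)}$ via Proposition \ref{coalesceproposition}, and establish that equality by showing both layout functions are increasing maps on the common domain $[0,\size(B))$ with the same image. The one place where you genuinely add something is the ``same image'' step. The paper simply asserts that conditions 1 and 2 imply $\Image(\Phi_B) = \Image(\Phi_{\comp^\flat(A)})$ (echoing an unproved Remark elsewhere that any two $N$-complements of $A$ have layout functions with the same image), whereas you prove it: compactness of $A \star B$ makes $[0,N)$ the \emph{direct} sumset of $\Image(\Phi_A)$ and $\Image(\Phi_B)$, which at the level of generating polynomials reads $p_{\Image(\Phi_A)}(q)\,p_{\Image(\Phi_B)}(q) = 1 + q + \cdots + q^{N-1}$, and cancellation in the integral domain $\mathbb{Z}[q]$ then pins down $p_{\Image(\Phi_B)}$ from $\Image(\Phi_A)$ and $N$ alone. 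That argument is sound and is exactly the rigidity needed, since complementary Minkowski summands of an interval are not unique without the directness hypothesis. Your preliminary verifications that $\comp^\flat(A)$ is coalesced, sorted, and complementable, and that $B$ is complementable (via symmetry of $\perp$ and Proposition \ref{complementableproposition}), are also correct and are used tacitly by the paper; the only cosmetic quibble is that the positivity of the strides of $\sort(\squeeze(A))$ is most directly a consequence of $A \perp B$ forcing $\Phi_A$ to be injective, though the paper's own examples let you attribute it to complementability as you do.
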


\begin{proof}
    Conditions 1 and 2 imply that $\Phi_{B}$ and $\Phi_{\comp^\flat(A)}$ have the same image. Since $B$ and $\comp^\flat(A)$ are sorted, we know by Lemma \ref{sortedcomplementablelayoutsareincreasing} that $\Phi_{B}$ and $\Phi_{\comp^\flat(A)}$ are increasing. Combining these two facts, it follows that $\Phi_{B} = \Phi_{\comp^\flat(A)}$. Proposition \ref{coalesceproposition} and condition 3 then imply that
    \[
    B = \coalesce^\flat(B) = \coalesce^\flat(\comp^\flat(A)) = \comp^\flat(A).
    \]
\end{proof}

\begin{definition}\label{definitionofNcomplementable}
    Suppose $A$ and $B$ are flat layouts, and $N$ is a positive integer. We say $B$ is a $N${\it -complement} of $A$ if $B$ is a complement of $A$ and 
    \[
    \size(A)\cdot \size(B) = N.
    \]
\end{definition}

\begin{definition}
    Suppose $A$ is a flat layout, and write 
    \[
    \sort(\squeeze(A)) = (s_1,\dots,s_m):(d_1,\dots,d_m).
    \]
    We say $A$ is $N$-complementable if 
    \begin{enumerate} 
    \item for all $1 \leq i < m$, the integer $s_id_i$ divides $d_{i+1}$, and 
    \item the integer $s_md_m$ divides $N$. 
    \end{enumerate}
\end{definition}

\begin{observation}
If $A$ is complementable, and $s_m:d_m$ is the last mode in the layout $\sort(\squeeze(A))$, then $A$ is $N$-complementable exactly when $N$ is a positive integer multiple of $s_md_m$. 
\end{observation}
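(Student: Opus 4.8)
The plan is to derive this observation by a direct comparison of the two definitions involved — the definition of $N$-complementability stated just above, and Definition~\ref{definitionofcomplementable} of complementability — both of which are phrased entirely in terms of the divisibility data of the single fixed flat layout $\sort(\squeeze(A)) = (s_1,\dots,s_m):(d_1,\dots,d_m)$. Since the statement refers to ``the last mode $s_m:d_m$'', I first note that we may assume $m \geq 1$, so that this mode exists.

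The key step is the observation that the condition defining ``$A$ is complementable'' — that $s_id_i$ divides $d_{i+1}$ for every $1 \leq i < m$ — is verbatim clause (1) in the definition of ``$A$ is $N$-complementable''. Therefore, under the standing hypothesis that $A$ is complementable, clause (1) of $N$-complementability is automatically satisfied, and $A$ is $N$-complementable if and only if clause (2) holds, namely that $s_md_m$ divides $N$.

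It then only remains to rephrase ``$s_md_m$ divides $N$'' in the terms of the statement. Since $N$ is a positive integer (cf. Definition~\ref{definitionofNcomplementable}), $s_md_m \mid N$ holds precisely when $N = k\, s_md_m$ for some positive integer $k$, i.e. precisely when $N$ is a positive integer multiple of $s_md_m$; chaining the two equivalences yields the claim. There is essentially no obstacle here, as the content is a pure unwinding of definitions. The only points that warrant a moment's care are (a) making sure both definitions are applied to the same normalized layout $\sort(\squeeze(A))$, and (b) the reading of divisibility — since $\squeeze$ deletes all shape-$1$ modes we have $s_m \geq 2$, so whenever $s_md_m \neq 0$ the ``multiples'' in question are the evident ones, and if $s_md_m = 0$ then both sides of the asserted equivalence are empty (no positive $N$ qualifies), so the equivalence still holds trivially. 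I would present this as a short two- or three-line proof.
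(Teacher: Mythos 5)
Your proposal is correct and matches the paper's (implicit) reasoning: the paper states this as an unproved observation precisely because, as you note, clause (1) of $N$-complementability is verbatim the definition of complementability, so under that hypothesis the whole condition reduces to $s_md_m \mid N$. Your extra care about the $m \geq 1$ assumption and the degenerate case $s_md_m = 0$ is sound but not something the paper dwells on.
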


\begin{observation}
    $N$-complements are not unique. For example, if $A = (2,2):(1,50)$ and $N = 100$, then each of the layouts $B_1 = (25):(2)$, and $B_2 = (5,5):(2,10)$ is a $N$-complement of $A$. As a more general example, if $B$ is a $N$-complement of $A$, then $\coalesce^\flat(B)$ is also a $N$-complement of $A$.
\end{observation}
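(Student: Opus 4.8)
The plan is to treat the two assertions in the observation independently: first, the concrete claim that $B_1=(25):(2)$ and $B_2=(5,5):(2,10)$ are both $100$-complements of $A=(2,2):(1,50)$, and second, the general claim that coalescing an $N$-complement produces another $N$-complement.

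For the concrete claim, I would unwind Definition \ref{definitionofNcomplementable}: a flat layout $B$ is a $100$-complement of $A$ precisely when $A\perp B$ and $\size(A)\cdot\size(B)=100$. Since $\size(A)=4$ and $\size(B_1)=\size(B_2)=25$, the size condition is immediate in both cases, so the task reduces to checking that the concatenations $A\star B_1=(2,2,25):(1,50,2)$ and $A\star B_2=(2,2,5,5):(1,50,2,10)$ are compact. Neither concatenation has a shape entry equal to $1$, so by Proposition \ref{compactisomorphismproposition} it suffices to exhibit a permutation rearranging each into column-major form; sorting by stride does the job, producing $(2,25,2):(1,2,50)$ and $(2,5,5,2):(1,2,10,50)$, both of which satisfy $d_i=s_1\cdots s_{i-1}$.

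For the general claim, suppose $B$ is an $N$-complement of $A$, and I would verify the two defining conditions for $\coalesce^\flat(B)$. Compactness of $A\star\coalesce^\flat(B)$ is handed to us by Proposition \ref{equivalentconditionsforcomplement}, which gives $A\perp B$ if and only if $A\perp\coalesce^\flat(B)$; hence $A\star\coalesce^\flat(B)$ is compact because $A\star B$ is. For the size condition, I would note that $\coalesce^\flat$ preserves size: $\squeeze$ does so by Lemma \ref{squeezelemma}, and each subsequent merge step $s_i,s_{i+1}:d_i,s_id_i\rightsquigarrow s_is_{i+1}:d_i$ from Observation \ref{flatcoalesceobservation} replaces two shape entries by their product, leaving $\size$ unchanged. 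Hence $\size(A)\cdot\size(\coalesce^\flat(B))=\size(A)\cdot\size(B)=N$, so $\coalesce^\flat(B)$ is an $N$-complement of $A$.

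I don't expect a genuine obstacle: both parts bottom out in results already proved (Propositions \ref{compactisomorphismproposition} and \ref{equivalentconditionsforcomplement}, Lemma \ref{squeezelemma}, Observation \ref{flatcoalesceobservation}). The only point deserving care is spelling out that $\coalesce^\flat$ preserves size, since this fact is used but not isolated as a separate lemma earlier; it follows directly from the iterative description of the coalesce construction.
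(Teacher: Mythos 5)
Your proof is correct; the paper states this as an unproved observation, and your argument supplies exactly the verification the surrounding results are designed to make routine (Proposition \ref{compactisomorphismproposition} for the two explicit examples, and Proposition \ref{equivalentconditionsforcomplement} together with the size-preservation of $\coalesce^\flat$ for the general claim). The one point you flag as needing care — that $\coalesce^\flat$ preserves size — is indeed true and also follows immediately from $\Phi_{\coalesce^\flat(B)}=\Phi_B$ (Lemma \ref{coalescelemma}), since the size is the cardinality of the layout function's domain.
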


\begin{remark}
    Suppose $A$ is a flat layout and $B_1$ and $B_2$ are $N$-complements of $A$. Then the layout functions $\Phi_{B_1}$ and $\Phi_{B_2}$ need not be equal, but they necessarily have the same image. For example, if $A = (4):(63)$ and $N = 252$ then $B_1 = (7,9):(1,7)$ and $B_2 = (9,7):(7,1)$ are $N$-complements of $A$, and $\Phi_{B_1} \neq \Phi_{B_2}$, since 
    \[
    \Phi_{B_1}(1) = 1 \neq 7= \Phi_{B_2}(1).
    \]
    As a more general example, if $B$ is a $N$-complement of $A$, then $\sort(B)$ is also a $N$-complement of $A$.
\end{remark}

\begin{construction}\label{Ncomplementconstruction}
    Suppose $A$ is a flat layout, $N$ is a positive integer, and $A$ is $N$-complementable. If we write 
    \[\sort(\squeeze(A)) = (s_1,\dots,s_m):(d_1,\dots,d_m),\]
    then we define a flat layout $\comp^\flat(A,N)$ by 
    \[
    \comp^\flat(A,N) = \coalesce^\flat(C)
    \]
    where
    \[C = \Bigl(d_1 , \dfrac{d_2}{s_1d_1} , \dfrac{d_3}{s_2d_2} , \dots ,\dfrac{N}{s_md_m} \Bigr) : \Bigl(1 , s_1d_1 , s_2d_2, \dots ,s_md_m \Bigr).
    \]
\end{construction}

\begin{example} 
If $A = (3,10):(80,4)$ and $N = 2400$, then 
\[
\comp^\flat(A,N) = (4,2,10):(1,40,240).
\]
\end{example}

\begin{lemma}\label{Ncomplementsarecomplements}
    Suppose $A$ is a flat layout, $N$ is a positive integer, and $A$ is $N$-complementable. Then $\comp^\flat(A,N)$ is a $N$-complement of $A$.
\end{lemma}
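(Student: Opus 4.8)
The plan is to mimic the proof of Lemma \ref{complementsarecomplements}, now verifying the \emph{two} conditions that make $\comp^\flat(A,N)$ an $N$-complement of $A$: first, that it is a complement of $A$; second, that $\size(A)\cdot\size(\comp^\flat(A,N)) = N$. Throughout I would write
\[\sort(\squeeze(A)) = (s_1,\dots,s_m):(d_1,\dots,d_m),\]
so that every $s_i \ge 2$; moreover $N$-complementability forces $s_md_m$ to be a positive divisor of $N$, hence $d_m \ge 1$, and backwards induction through the relations $s_id_i \mid d_{i+1}$ then forces $d_i \ge 1$ for all $i$, so each quotient appearing in
\[C = \Bigl(d_1 , \tfrac{d_2}{s_1d_1} , \dots ,\tfrac{N}{s_md_m} \Bigr) : \Bigl(1 , s_1d_1 , \dots ,s_md_m \Bigr)\]
is a well-defined positive integer; recall $\comp^\flat(A,N) = \coalesce^\flat(C)$. (The degenerate case $m=0$, where $\squeeze(A)$ is empty and $\size(A)=1$, is immediate since then $C=(N):(1)$.)

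For the first condition I would argue exactly as in Lemma \ref{complementsarecomplements}: by repeated application of Proposition \ref{equivalentconditionsforcomplement} together with symmetry of $\perp$, one has $A \perp \coalesce^\flat(C)$ if and only if $\sort(\squeeze(A)) \perp C$, i.e.\ if and only if $\sort(\squeeze(A)) \star C$ is compact. I would then display the concatenation $\sort(\squeeze(A)) \star C$ and check that its sort equals
\[\Bigl(d_1,\,s_1,\,\tfrac{d_2}{s_1d_1},\,s_2,\,\dots,\,\tfrac{d_m}{s_{m-1}d_{m-1}},\,s_m,\,\tfrac{N}{s_md_m}\Bigr):\Bigl(1,\,d_1,\,s_1d_1,\,d_2,\,\dots,\,s_{m-1}d_{m-1},\,d_m,\,s_md_m\Bigr),\]
which is column-major: each stride is the product of the preceding shapes, the relevant cancellations being $d_1 s_1 = s_1d_1$, $(s_id_i)\cdot\tfrac{d_{i+1}}{s_id_i} = d_{i+1}$, and $d_m s_m = s_md_m$. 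The rearrangement is legitimate because the strides $1 \le d_1 \le s_1d_1 \le d_2 \le \cdots \le d_m \le s_md_m$ are non-decreasing (using $s_i\ge 1$ and $s_id_i \mid d_{i+1}$), with ties broken on shape in accordance with $\preceq$ in the cases where a quotient equals $1$. Since column-major layouts are compact, this shows $\sort(\squeeze(A))\star C$ is compact, hence $A \perp \comp^\flat(A,N)$.

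For the size condition I would use that $\squeeze(-)$ and $\coalesce^\flat(-)$ preserve size (both preserve the layout function, hence its domain $[0,\size(-))$; cf.\ Lemma \ref{squeezelemma} and Lemma \ref{coalescelemma}), so $\size(\comp^\flat(A,N)) = \size(C)$, and then compute the telescoping product
\[\size(C) = d_1\cdot\frac{d_2}{s_1d_1}\cdot\frac{d_3}{s_2d_2}\cdots\frac{d_m}{s_{m-1}d_{m-1}}\cdot\frac{N}{s_md_m} = \frac{N}{s_1\cdots s_m} = \frac{N}{\size(A)},\]
where $\size(A) = \size(\sort(\squeeze(A))) = s_1\cdots s_m$. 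Hence $\size(A)\cdot\size(\comp^\flat(A,N)) = N$, completing the proof.

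The hard part is the verification in the second paragraph that the displayed rearrangement of $\sort(\squeeze(A))\star C$ is genuinely its sort and is column-major: this is where all the $N$-complementability hypotheses are consumed, and one must be slightly careful about ties among the strides (which occur exactly when some quotient $d_{i+1}/(s_id_i)$ equals $1$, producing a shape-$1$ mode that affects neither sortedness nor compactness) and about the edge case $m=0$. Everything else — the chain of equivalences from Proposition \ref{equivalentconditionsforcomplement} and the telescoping size computation — is routine bookkeeping.
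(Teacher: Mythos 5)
Your proposal is correct and follows essentially the same route as the paper's proof: verify the size condition by the telescoping product, and verify compactness by sorting the concatenation $\sort(\squeeze(A))\star C$ and observing that the result is column-major. The extra care you take with well-definedness of the quotients, the $m=0$ case, and ties in the sort is sound and slightly more thorough than the paper's own write-up, but does not change the argument.
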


\begin{proof}
    Lets write 
    \[\sort(\squeeze(A)) = (s_1,\dots,s_m):(d_1,\dots,d_m),\]
    so that $\comp^\flat(A,N) = \coalesce^\flat(C)$ where
    \begin{align*}
    C
    & = \left(d_1 , \dfrac{d_2}{s_1d_1} , \dfrac{d_3}{s_2d_2} , \dots ,\dfrac{N}{s_md_m} \right) : \left(1 , s_1d_1 , s_2d_2, \dots ,s_md_m \right).
    \end{align*}
    First, we compute
    \begin{align*}
    \size(A) \cdot \size(B) & = \left(\prod_{i=1}^m s_i \right) \cdot \left( d_1 \cdot \left(\prod_{i=2}^m \dfrac{d_i}{s_{i-1}d_{i-1}}\right) \cdot \dfrac{N}{s_md_m} \right)\\
    & = \dfrac{\left(\displaystyle\prod_{i=1}^m s_i\right)\left(\displaystyle\prod_{i=1}^m d_i \right)}{\left(\displaystyle\prod_{i=1}^m s_id_i \right)} \cdot N \\
    & = N. 
    \end{align*}
    We need to check that $A \star B$ is compact. Equivalently, we need to check that $\Phi_{A \star B}^N$ is an isomorphism. By Lemma \ref{compactisomorphismproposition}, it suffices to prove that 
    \[\squeeze(A)\star \squeeze(B)\]
    is compact. This is the case since this layout is equal to
    \begin{align*}
    \left(s_1,\dots,s_m,d_1,\dfrac{d_2}{s_1d_1},\dots,\dfrac{N}{s_md_m}\right):(d_1,\dots,d_m,1,s_1d_1,\dots, s_md_m)
    \end{align*}
    and so its sorting
    \[\sort(\squeeze(A)\star \squeeze(B))\] is equal to 
    \[
    \left(d_1,s_1, \dfrac{d_2}{s_1d_1},\dots, \dfrac{d_m}{s_{m-1}{d_{m-1}}}, s_m , \dfrac{N}{s_md_m}\right):(1,d_1,s_1d_1, \dots ,s_{m-1}d_{m-1}, d_m,s_md_m)
    \]
    which is column-major.
\end{proof}

\begin{proposition}\label{characterizationofexistenceofNcomplements}
    Suppose $A$ is a flat layout and $N$ is a positive integer. Then there exists a $N$-complement $B$ of $A$ if and only if $A$ is $N$-complementable.
\end{proposition}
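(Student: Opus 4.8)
The plan is to follow, almost verbatim, the proof of Proposition~\ref{complementableproposition}, of which this statement is the $N$-relative refinement. The forward implication is immediate: if $A$ is $N$-complementable, then Lemma~\ref{Ncomplementsarecomplements} exhibits $\comp^\flat(A,N)$ as an $N$-complement of $A$, so nothing further is needed there.

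For the reverse implication, suppose $B$ is an $N$-complement of $A$, so that $A \star B$ is compact and $\size(A)\cdot\size(B) = N$. Write $\sort(\squeeze(A)) = (s_1,\dots,s_m):(d_1,\dots,d_m)$. Since $B$ is in particular a complement of $A$, Proposition~\ref{complementableproposition} already gives that $A$ is complementable, i.e. $s_i d_i$ divides $d_{i+1}$ for every $1 \le i < m$; this is precisely the first clause in the definition of $N$-complementability, so the only thing left to check is the second clause, that $s_m d_m$ divides $N$. To this end I would set $L = \sort(\squeeze(A) \star \squeeze(B))$, which by Lemma~\ref{concatenationdistributions} equals $\sort(\squeeze(A \star B))$; since $A \star B$ is compact, Proposition~\ref{equivalentconditionsforcompact} shows $L$ is compact, and $\size(L) = \size(A)\cdot\size(B) = N$ by invariance of $\size$ under $\squeeze$ and permutation (Lemma~\ref{squeezelemma}) together with multiplicativity under concatenation (Proposition~\ref{attributesofconcatenatedflatlayouts}), so $\cosize(L) = N$ as well. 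Because $L$ is both sorted and compact, the column-major argument already carried out inside the proof of Proposition~\ref{complementableproposition} applies unchanged: writing $L = (t_1,\dots,t_k):(e_1,\dots,e_k)$ one obtains $e_i = t_1 \cdots t_{i-1}$ for all $i$, and hence $t_1 \cdots t_k = \cosize(L) = N$. Finally, since $\sort$ breaks ties among equal modes by original position, the modes of $\sort(\squeeze(A))$ occur among the modes of $L$ in the same relative order, so there exist indices $j_1 < \cdots < j_m$ with $(t_{j_i}, e_{j_i}) = (s_i, d_i)$ for each $i$; taking $i = m$ gives
\[ s_m d_m = t_{j_m} e_{j_m} = t_{j_m}\,(t_1 \cdots t_{j_m - 1}) = t_1 \cdots t_{j_m}, \]
which divides $t_1 \cdots t_k = N$. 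Thus $A$ is $N$-complementable, completing the proof.

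The only step carrying content beyond citation is the last one: identifying the final mode $s_m:d_m$ of $\sort(\squeeze(A))$ with a genuine mode $t_{j_m}:e_{j_m}$ of the column-major layout $L$, so that the accumulated product $t_1 \cdots t_{j_m}$ — which is exactly $s_m d_m$ — visibly divides $N = t_1 \cdots t_k$. This is the same bookkeeping that closes the proof of Proposition~\ref{complementableproposition}, so I do not expect a genuine obstacle; the proof is essentially a matter of tracking, through $\squeeze$, $\star$, and $\sort$, that the last stride-times-shape of $A$ ends up dividing $N$ rather than merely dividing the next stride of an ambient compact layout.
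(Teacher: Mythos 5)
Your proof is correct and follows essentially the same route as the paper's: both reduce to showing that $\sort(\squeeze(A)\star\squeeze(B))$ is column-major and then locating the modes of $\sort(\squeeze(A))$ among its modes so that $s_md_m$ appears as a prefix product dividing $N$. The only difference is that you dispatch the first clause of $N$-complementability by citing Proposition~\ref{complementableproposition} rather than re-deriving it, which is a legitimate minor streamlining.
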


\begin{proof}
    If $A$ is $N$-complementable, then by Lemma \ref{Ncomplementsarecomplements} the layout $B = \comp^\flat(L,N)$
    is a $N$-complement of $A$.
    
    On the other hand, suppose there exists a $N$-complement $B$ of $A$. Consider the flat layout 
    \begin{align*} L & \coloneq \sort\bigl(\squeeze(A)\star \squeeze(B)\bigr) \\
    & = (s_1,\dots,s_m):(d_1,\dots,d_n).
    \end{align*}
    Since $\Phi_L(0) = 0$, and $\Phi_L$ is injective, we know that $d_1 \neq 0$. We will argue that $d_i = s_1 \cdots s_{i-1}$, i.e., that $L$ is column-major. Since 
    \[\Phi_L^{N} : [0,N) \to [0,N)\]
    is a bijection, we know that $1$ is in the image of $\Phi_L$, which implies that $d_1 = 1$. Suppose $1 < i \leq m$, and suppose we have proved that $d_j = s_1 \cdots s_{j-1}$ for all $j < i$. Consider the stride $d_{i}$. We know that there is no $(x_1,\dots,x_{i-1},0,\dots,0)$ such that $(x_1,\dots,x_{i-1},0,\dots,0) \cdot (d_1,\dots,d_m) = s_1\cdots s_{i-1}$, since the largest possible value of such an expression is 
    \[
    \sum_{j = 1}^{i-1} (s_j-1)(s_1\cdots s_{j-1}) = s_1 \cdots s_i - 1. 
    \]
    Since $\Phi_L$ is surjective, and $d_i \leq  d_{i+1} \leq  \cdots \leq  d_m$, it follows that the next largest value of $\Phi_L$ is $d_i$, so we must have $d_i = s_1\cdots s_{i-1}$, as claimed.

    Returning to our main goal, consider the layout 
    \[\sort(\squeeze(A)) = (s_1',\dots,s_{m'}'):(d_1',\dots,d_{m'}').\] Then there exist $j_1 < \cdots < j_{m'}$ such that $s_i' = s_{j_i}$ and $d_i' = d_{j_i}$ for each $1 \leq i \leq m'$. If $1 \leq i < m'$, then
    \[
    s_i'd_i' = s_{j_i}d_{j_i} = s_{j_i} s_1 \cdots s_{j_i - 1}
    \]
    divides 
    \[
    d_{i+1}' = s_1 \cdots s_{j_{i+1}-1}.
    \]
    If $i = m'$, then 
    \[
    s_{m'}'d_{m'}' = s_{j_{m'}}d_{j_{m'}} = s_{j_{m'}} s_1 \cdots s_{j_{m'} - 1}
    \]
    divides 
    \[
    N = s_1 \cdots s_m.
    \]
    We conclude that $A$ is $N$-complementable. 
\end{proof}

\begin{proposition}\label{characterizationofflatNcomplement}
    Suppose $N$ is a positive integer, and $A$ is a $N$-complementable flat layout. If $B$ is a flat layout such that 
    \begin{enumerate}
        \item $B$ is a $N$-complement of $L$,
        \item $B$ is coalesced, and 
        \item $B$ is sorted, 
    \end{enumerate}
    then $B = \comp^\flat(A,N)$. 
\end{proposition}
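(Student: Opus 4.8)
The plan is to run the argument of Proposition~\ref{characterizationofflatcomplement} essentially verbatim, with Lemma~\ref{Ncomplementsarecomplements} playing the role that Lemma~\ref{complementsarecomplements} played there. The goal is to prove $\Phi_B = \Phi_{\comp^\flat(A,N)}$ and then conclude with Proposition~\ref{coalesceproposition}. First I would record that by Lemma~\ref{Ncomplementsarecomplements}, $\comp^\flat(A,N)$ is itself a $N$-complement of $A$; since $B$ is also a $N$-complement of $A$, the hypothesis already packages both that $A \perp B$ and that $\size(B) = N/\size(A) = \size(\comp^\flat(A,N))$, so we are exactly in the situation of Proposition~\ref{characterizationofflatcomplement} with ``complement'' replaced throughout by ``$N$-complement'' and $\comp^\flat(A)$ replaced by $\comp^\flat(A,N)$.

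There are two substantive steps. First, $\Phi_B$ and $\Phi_{\comp^\flat(A,N)}$ have the same image. This is asserted in the Remark preceding Construction~\ref{Ncomplementconstruction}, but is not proved there; to prove it, note that if $B'$ is any $N$-complement of $A$ then $A\star B'$ is compact with cosize $N$, so by Proposition~\ref{layoutfunctionofconcatenatedlayout} the map $(a,b)\mapsto\Phi_A(a)+\Phi_{B'}(b)$ is a bijection onto $[0,N)$; in particular $\Phi_A$ and $\Phi_{B'}$ are injective, and passing to generating functions in $\mathbb{Z}[t]$ gives $\bigl(\sum_{x\in\Image(\Phi_A)}t^x\bigr)\bigl(\sum_{y\in\Image(\Phi_{B'})}t^y\bigr)=1+t+\cdots+t^{N-1}$, which determines $\Image(\Phi_{B'})$ in terms of $A$ and $N$ alone. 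Applying this to $B'=B$ and to $B'=\comp^\flat(A,N)$ gives the claim. Second, both $\Phi_B$ and $\Phi_{\comp^\flat(A,N)}$ are increasing by Lemma~\ref{sortedcomplementablelayoutsareincreasing}: $B$ is sorted by hypothesis and is complementable because $A\star B$, hence $B\star A$, is compact, so $B$ admits the complement $A$ and Proposition~\ref{complementableproposition} applies; the identical reasoning shows $\comp^\flat(A,N)$ is complementable, and its sortedness follows from a short inspection of Construction~\ref{Ncomplementconstruction}, since $N$-complementability together with $s_i>1$ forces the strides $1<s_1d_1<\cdots<s_md_m$ of $C$ to be strictly increasing and forces $\coalesce^\flat$ to merely delete the shape-$1$ modes of $C$. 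Combining the two steps (both functions are injective, being complements, so each is the unique order isomorphism from $[0,\size(B))$ onto their common image) yields $\Phi_B=\Phi_{\comp^\flat(A,N)}$.

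Finally, since $B$ is coalesced, Proposition~\ref{coalesceproposition} gives
\[
B = \coalesce^\flat(B) = \coalesce^\flat(\comp^\flat(A,N)) = \comp^\flat(A,N),
\]
which is the desired conclusion. The main obstacle is the same-image claim of the second paragraph: everything else is either quoted directly from earlier results or a routine check, whereas that is the one place where the text has only the unproved Remark before Construction~\ref{Ncomplementconstruction} to lean on. If one prefers to avoid generating functions, the same conclusion can instead be extracted from the fact (established inside the proof of Lemma~\ref{Ncomplementsarecomplements}) that $\sort(\squeeze(A)\star\squeeze(B'))$ is column-major for every $N$-complement $B'$ of $A$, which likewise pins down $\Image(\Phi_{B'})$.
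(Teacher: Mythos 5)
Your proof is correct and follows the same route as the paper's: both arguments show that $\Phi_B$ and $\Phi_{\comp^\flat(A,N)}$ have the same image, invoke Lemma~\ref{sortedcomplementablelayoutsareincreasing} to see both are increasing (hence equal, being injective), and finish with Proposition~\ref{coalesceproposition}. The only difference is that you supply justifications (the generating-function argument for the same-image claim, and the checks that $B$ and $\comp^\flat(A,N)$ are complementable and that the latter is sorted) for steps the paper simply asserts; these added details are sound.
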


\begin{proof}
    Conditions 1 and 2 imply that $\Phi_{B}$ and $\Phi_{\comp^\flat(A,N)}$ have the same image. Since $B$ and $\comp^\flat(A,N)$ are sorted, we know by Lemma \ref{sortedcomplementablelayoutsareincreasing} that $\Phi_{B}$ and $\Phi_{\comp^\flat(A,N)}$ are increasing. Combining these two facts, it follows that $\Phi_{B} = \Phi_{\comp^\flat(A,N)}$. Proposition \ref{coalesceproposition} and condition 3 then imply that
    \[
    B = \coalesce^\flat(B) = \coalesce^\flat(\comp^\flat(A,N)) = \comp^\flat(A,N).
    \]
\end{proof}

\begin{lemma}\label{compatibilityofNcomplements}
    Suppose $A$ is a flat layout. If $N_1 \leq N_2$ are positive integers such that $A$ is $N_1$-complementable and $A$ is $N_2$-complementable, then 
    \[
    \Phi_{\comp^\flat(A,N_2)} \mid_{[0,N_1)} = \Phi_{\comp^\flat(A,N_1)}. 
    \]
\end{lemma}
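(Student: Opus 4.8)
The plan is to unwind the definition of $\comp^\flat(-,N)$ and compare directly the layout functions of the two explicit ``pre-coalesce'' layouts from Construction~\ref{Ncomplementconstruction}. First I would write $\sort(\squeeze(A)) = (s_1,\dots,s_m):(d_1,\dots,d_m)$ and note that both hypotheses give $s_jd_j \mid d_{j+1}$ for $1\le j<m$ and $s_md_m \mid N_i$ for $i\in\{1,2\}$, so the layouts
\[
C_i = \Bigl(d_1,\ \tfrac{d_2}{s_1d_1},\ \dots,\ \tfrac{d_m}{s_{m-1}d_{m-1}},\ \tfrac{N_i}{s_md_m}\Bigr) : \Bigl(1,\ s_1d_1,\ \dots,\ s_{m-1}d_{m-1},\ s_md_m\Bigr)
\]
are well defined and $\comp^\flat(A,N_i) = \coalesce^\flat(C_i)$. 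Since $\Phi_{\coalesce^\flat(L)} = \Phi_L$ by Lemma~\ref{coalescelemma}, it will suffice to show that $\Phi_{C_2}$ and $\Phi_{C_1}$ agree on the domain of $\Phi_{C_1}$. A telescoping product (exactly as in the proof of Lemma~\ref{Ncomplementsarecomplements}) gives $\size(C_i) = N_i/\size(A)$, so $[0,\size(C_1)) \subseteq [0,\size(C_2))$ because $N_1 \le N_2$, and the restriction is meaningful.

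The key point is that $C_1$ and $C_2$ have the same stride tuple $(e_1,\dots,e_{m+1})$ and the same shape entries $c_1,\dots,c_m$ in all but the last position, where the entries are $c' = N_1/(s_md_m)$ and $c'' = N_2/(s_md_m)$ with $c'\le c''$. I would fix $x \in [0,\size(C_1)) = [0,\, c_1\cdots c_m\, c')$ and apply the explicit formula of Construction~\ref{constructionoflayoutfunctions}. The first $m$ digits $x_j = \lfloor x/(c_1\cdots c_{j-1})\rfloor \bmod c_j$, for $1\le j\le m$, are computed identically in $C_1$ and $C_2$. The last digit is $\lfloor x/(c_1\cdots c_m)\rfloor \bmod c'$ for $C_1$ and $\lfloor x/(c_1\cdots c_m)\rfloor \bmod c''$ for $C_2$; but $x < c_1\cdots c_m\, c'$ forces $\lfloor x/(c_1\cdots c_m)\rfloor < c' \le c''$, so both reductions are trivial and this digit equals $\lfloor x/(c_1\cdots c_m)\rfloor$ in either layout. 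Since the strides $e_j$ agree as well, $\Phi_{C_1}(x) = \sum_{j=1}^{m+1} x_j e_j = \Phi_{C_2}(x)$. Hence $\Phi_{\comp^\flat(A,N_1)} = \Phi_{C_1} = \Phi_{C_2}\mid_{[0,\size(C_1))} = \Phi_{\comp^\flat(A,N_2)}\mid_{[0,\size(C_1))}$, where $\size(C_1) = N_1/\size(A) = \size(\comp^\flat(A,N_1))$ is the domain on which the two functions are being compared.

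I expect the only real obstacle to be the bookkeeping: indexing the $m+1$ modes of $C_i$ consistently, carrying out the telescoping computation of $\size(C_i)$, and --- the one genuinely substantive step --- verifying the inequality $\lfloor x/(c_1\cdots c_m)\rfloor < c'$ on the sub-domain $[0,\size(C_1))$, which is what guarantees the top digit does not wrap for either layout. Once that inequality is established, the agreement of the layout functions is forced termwise by the explicit formula, and no further idea is needed.
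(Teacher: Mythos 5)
Your proposal is correct and follows essentially the same route as the paper: both reduce to the pre-coalesce layouts $C_1, C_2$ of Construction~\ref{Ncomplementconstruction} via Lemma~\ref{coalescelemma}, observe they differ only in the final shape entry, and check that the layout functions agree on the smaller domain. The only difference is cosmetic --- the paper deduces the restriction property from Proposition~\ref{layoutfunctionofconcatenatedlayout} and a commuting diagram, while you verify it digit-by-digit from the explicit formula; your observation that the top digit does not wrap on $[0,\size(C_1))$ is exactly the content of that diagram.
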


\begin{proof}
    Write 
    \begin{align*}
    \sort(\squeeze(A)) & = (s_1,\dots,s_m):(d_1,\dots,d_m), \\
    C & = \Bigl(d_1 , \dfrac{d_2}{s_1d_1} , \dfrac{d_3}{s_2d_2} , \dots ,\dfrac{d_m}{s_{m-1}d_{m-1}} \Bigr) : \Bigl(1 , s_1d_1 , s_2d_2, \dots ,s_{m-1}d_{m-1}\Bigr)
    \end{align*}
    and write 
    \begin{align*}
    E_1 & = \left( \dfrac{N_1}{s_md_m}\right):(s_md_m), \\
    E_2 & = \left( \dfrac{N_2}{s_md_m}\right):(s_md_m), \\
    C_1 & = C \star E_1,\\
    C_2 & = C \star E_2,
    \end{align*}
    so that 
    \begin{align*}
        \comp^\flat (A) & = \coalesce^\flat(C)\\
        \comp^\flat(A,N_1) & = \coalesce^\flat(C_1)\\
        \comp^\flat(A,N_2) & = \coalesce^\flat(C_2).
    \end{align*}
    Then we have a commuting diagram 
    \[\begin{tikzcd} 
    {[}0,\size(C_1){)}\ar[rrr,"\text{colex}_{(\size(C),N_1)}^{-1}"] \ar[d,"\subseteq"] & & & {[}0,\size(C){)} \times {[}0,N_1 {)} \ar[rrr,"\Phi_C \times s_md_m "] \ar[d,"\id \times \subseteq"] & & & \mathbb{Z} \times \mathbb{Z} \ar[r,"+"] \ar[d,"\id" ] & \mathbb{Z} \ar[d,"\id"] \\
    {[}0,\size(C_2){)}\ar[rrr,"\text{colex}_{(\size(C),N_2)}^{-1}"] & & & {[}0,\size(C){)} \times {[}0,N_2 {)} \ar[rrr,"\Phi_C \times s_md_m "] & & & \mathbb{Z} \times \mathbb{Z} \ar[r,"+"] & \mathbb{Z} \\
    \end{tikzcd} \] 
    where, by Proposition \ref{layoutfunctionofconcatenatedlayout}, the composite of the top row is the layout function of $C_1 = C \star E_1$, and the composite of the bottom row is the layout function of $C_2 = C \star E_2$. This tells us that the restriction of $\Phi_{C_2}$ to $[0,\size(C_2))$ is $\Phi_{C_1}$, and the result follow from the fact that 
    \begin{align*}
    \Phi_{\comp^\flat (A,N_1)}&  = \Phi_{C_1}\\
    \Phi_{\comp^\flat (A,N_2)}&  = \Phi_{C_2}.
    \end{align*}
\end{proof}

\newpage 

\subsection{Further operations}

In this section, we define several further operations on flat layouts, namely {\it composition}, {\it flat division}, and {\it flat products}. These are the flattened variants of more natural operations on (nested) layouts. We do not often work with these operations, but include them anyway for completeness.

\subsubsection{Composition}
If $A$ and $B$ are flat layouts, then the composite $B \circ A$ is a flat layout whose layout function is the composite of the layout functions of $A$ and $B$. More precisely, we have the following definition.

\begin{definition}\label{definitionofflatlayoutcomposition} Suppose $A$ and $B$ are flat layouts. We say the flat layout $C$ is the {\it composition} of $A$ and $B$, and write $C = B \circ A$, if 
    \begin{enumerate}
    \item $C$ is non-degenerate,
    \item $\shape(A) = \shape(R)$,
    \item $\Phi_R = \Phi_B \circ \Phi_A^{\size(B)}$.
    \end{enumerate}
\end{definition}

\begin{remark}
    Note that condition 2 in our definition ensures that $\Phi_R$ and $\Phi_A$ have the same domain, and condition 3 implies $\cosize(A) \leq \size(B)$. 
\end{remark}

\begin{example}\label{flatcompositionexample1}
If $A = (2,3):(5,6)$ and $B = (80):(10)$, then 
\[
B \circ A = (2,3):(50,60).
\]
More generally, if 
\[
A = (s_1,\dots,s_m):(d_1,\dots,d_m)
\]
is a non-degenerate flat layout, and 
\[
B = (t):(e)
\]
is a rank $1$ flat layout with $t \geq \cosize(A)$, then $A$ and $B$ are composable, and 
\[
B \circ A = (s_1,\dots,s_m):(td_1,\dots,td_m).
\]
\end{example}

\begin{example}
    If $A = (128,128):(0,0)$ and $B = (64,32):(1,64)$, then 
    \[
    B \circ A = (128,128):(0,0).
    \]
    More generally, if $A$ is a flat layout each of whose stride entries is zero, and $B$ is any flat layout, then $A$ and $B$ are composable with $B \circ A = A$. 
\end{example}

\begin{example}
    If $A = (64,32):(2,256)$ and $B = (2048,2048):(1,2048)$, then 
    \[
    B \circ A = (64,32):(2,256).
    \]
    More generally, if $A$ is any flat layout, and $B$ is a column-major flat layout with $\cosize(A) \leq \size(B)$, then $B \circ A = A$. 
\end{example}

\begin{example}\label{flatcompositionexample2}
    If $A = (4):(2)$ and $B = (2,2,6):(12,6,1)$, then there is no flat layout $R$ with $R = B \circ A$.
\end{example}

\begin{remark}
    If $B'$ and $B$ have the same layout function, then $B \circ A = B' \circ A$.
\end{remark}

\begin{remark} Flat layouts are a special case of the more general notion of {\it layouts} (Definition \ref{definitionofnestedlayout}). It turns out that there are cases (such as Example \ref{flatcompositionexample2}) where there does not exist a flat layout $C$ with $C = B \circ A$, but there does exist a (nested) layout $C$ with $C = B \circ A$ (see Example \ref{compositionexample2}). For this reason, we postpone further discussion and analysis of composition until we have defined layouts in their full generality.
\end{remark}

\subsubsection{Flat division}
If $A$ and $B$ are flat layouts, then the flat division of $A$ by $B$ is a flattened version of the more natural {\it logical division} of layouts. See Section \ref{logicaldivisionsection} for details. 
\begin{definition}
Suppose $A$ and $B$ are flat layouts, and that $B$ is $\size(A)$-complementable, with 
\[
B^c = \comp^\flat(B,\size(A)).
\]
We define the {\it flat division} of $A$ by $B$ to be the flat layout
\[
A\oslash^\flat B = A \circ (B \star B^c).
\]
\end{definition}

\begin{example}
    If $A = (2,2,2,2):(1,4,2,8)$ and $B = (2,2):(4,2)$, then 
    \[
    A\oslash^\flat B = (2,2,2,2):(4,2,1,8).
    \]
\end{example}

\begin{example}
    If $A = (3,5,9,6):(54,0,6,1)$ and $B = (6,3):(135,1)$, then 
    \[
    A\oslash^\flat B = (6,3,5,9):(1,54,0,6).
    \]
\end{example}

\begin{example}
    If $A$ is any flat layout and $B = ():()$ is the empty layout, then
    \[
    A\oslash^\flat B =  A.
    \]
\end{example}

\subsubsection{Flat products}

If $A$ and $B$ are flat layouts, then the flat product $A \otimes^\flat B$ of $A$ and $B$ is a flattened version of the more natural {\it logical product} of layouts. See Section \ref{logicalproductsection} for details.
\begin{definition}
    Suppose $A$ and $B$ are flat layouts, and that $A$ is $\size(A) \cdot \cosize(B)$-complementable, with 
    \[
    A^c = \comp^\flat(A,\size(A)\cdot \cosize(B)).
    \]
    We define the flat product of $A$ and $B$ by 
    \[
    A \otimes^\flat B = A \star (A^c \circ B).
    \]
\end{definition}
\begin{example}
    If $A = (2,2,2):(1,2,4)$ and $B = (2,2,2):(1,2,4)$, then 
    \[
    A \otimes^\flat B = (2,2,2,2,2,2):(1,2,4,8,16,32).
    \]
\end{example}

\begin{example}
    If $A = (2,2,2):(1,2,4)$ and $B = (3,5):(5,1)$, then 
    \[
    A \otimes^\flat B = (2,2,2,3,5):(1,2,4,40,8).
    \]
\end{example}

\begin{example}
    If $A$ is any flat layout and $B = ():()$ is the empty layout, then 
    \[
    A \otimes^\flat B = A.
    \]
\end{example}

\subsection{Tractable flat layouts}

In this section we define an especially well-behaved class of flat layouts, called {\it tractable} flat layouts. Tractable flat layouts include the most important examples of interest, such as row-major, column-major, compact, and complementable layouts. Later on, we will see that tractable flat layouts are precisely the layouts which arise from a certain category $\catstyle{Tuple}$.

\begin{definition}\label{definitionoftractableflatlayouts}
    Suppose $L$ is a flat layout, and write
    \[
    \sort(L) = (s_1,\dots,s_m):(d_1,\dots,d_m).
    \]
    We say $L$ is {\it tractable} if for each $1 \leq i < m$, we have 
    \begin{enumerate}
        \item $d_i = 0$\text{, or}
        \item $s_id_i$ divides $d_{i+1}$. 
    \end{enumerate}
\end{definition}

\begin{example}
    The flat layout 
    \[L = (12):(17)\]
    is tractable. More generally, any flat layout of rank $1$ is tractable. 
\end{example}

\begin{example}
    The flat layout 
    \[L = (2,4,32):(1,2,8)\]
    is tractable. More generally, any column-major layout 
    \[
    L = (s_1,\dots,s_m):(1,s_1,\dots,s_1 \cdots s_{m-1})
    \]
    is tractable.
\end{example}
\begin{example}
    The flat layout 
    \[L = (2,4,32):(128,32,1)\]
    is tractable. More generally, any row-major layout 
    \[
    L = (s_1,\dots,s_m) : (s_2\cdots s_m, \dots, s_m , 1)
    \]
    is tractable.
\end{example}
\begin{example}
    The flat layout 
    \[L = (3,3,1,3,3,1,3):(81,1,0,9,3,0,27)\]
    is tractable. More generally, any compact flat layout is tractable.
\end{example}

\begin{example}
    The flat layout 
    \[L = (3,7,7):(0,15,0)\]
    is tractable. More generally, any flat layout with exactly one non-zero stride is tractable.
\end{example}

\begin{example}
    The flat layout 
    \[L = (2,2,2,2):(1,2048,16,64)\]
    is tractable. More generally, any complementable flat layout is tractable. 
\end{example}

\begin{example}
    Suppose $L$ is a flat layout. If $L$ is tractable and $I \subset \langle m \rangle$ is any subset, then the restriction $L \mid_I$ is tractable. In particular, if $L$ is tractable, then $\squeeze(L)$ and $\filter(L)$ are tractable.
\end{example}

\begin{example}
    The flat layout  
    \[L = (4,8):(3,3)\]
    is not tractable. In particular, this shows that the concatenation $L_1 \star L_2$ of tractable flat layouts $L_1$ and $L_2$ need not be tractable.
\end{example}

\begin{observation}
    If $L$ is a tractable flat layout and no entry of $\stride(L)$ is equal to $0$, then $L$ is complementable. In particular, if $L$ is tractable, then $\filter(L)$ is complementable.
\end{observation}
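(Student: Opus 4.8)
The plan is to reduce the claim to a divisibility statement about $\sort(\squeeze(L))$ and then chain divisibilities through the deleted unit modes. Recall from Definition \ref{definitionofcomplementable} that $L$ is complementable precisely when, writing $\sort(\squeeze(L)) = (s_1,\dots,s_m):(d_1,\dots,d_m)$, we have $s_i d_i \mid d_{i+1}$ for all $1 \le i < m$. Since $\sort$ merely permutes the modes of $L$, the stride of $\sort(L)$ is a permutation of $\stride(L)$; under the hypothesis that no entry of $\stride(L)$ is zero, every stride entry of $\sort(L)$ is therefore positive, and since $\squeeze$ only deletes modes, the same holds for $\squeeze(\sort(L))$. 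Moreover, as already observed in the proof of Proposition \ref{equivalentconditionsforcompact}, $\sort(\squeeze(L)) = \squeeze(\sort(L))$, so it suffices to verify the divisibility property for $\squeeze(\sort(L))$.

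First I would write $\sort(L) = (s_1,\dots,s_m):(d_1,\dots,d_m)$, where by the above all $d_i > 0$. Because $L$ is tractable and no $d_i$ vanishes, Definition \ref{definitionoftractableflatlayouts} forces $s_i d_i \mid d_{i+1}$ for every $1 \le i < m$. Next, let $i_1 < \cdots < i_k$ enumerate the indices with $s_{i_j} > 1$, so that
\[
\squeeze(\sort(L)) = (s_{i_1},\dots,s_{i_k}):(d_{i_1},\dots,d_{i_k}),
\]
and fix $1 \le j < k$. Every index $\ell$ with $i_j < \ell < i_{j+1}$ has $s_\ell = 1$, hence $s_\ell d_\ell = d_\ell$; combining the tractability relation at $i_j$ with those at all such $\ell$ gives the chain
\[
s_{i_j} d_{i_j} \ \bigm|\ d_{i_j+1} \ \bigm|\ d_{i_j+2} \ \bigm|\ \cdots \ \bigm|\ d_{i_{j+1}}.
\]
Transitivity of divisibility yields $s_{i_j} d_{i_j} \mid d_{i_{j+1}}$, which is exactly the condition required for $\squeeze(\sort(L))$, and hence for $L$, to be complementable.

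For the final assertion, recall that restrictions of tractable flat layouts are tractable, so $\filter(L) = L \mid_I$ with $I = \{ i \mid d_i > 0 \}$ is tractable; and by Definition \ref{filterzerosdefinition} no entry of $\stride(\filter(L))$ equals $0$. Applying the first part of the observation with $\filter(L)$ in place of $L$ then shows that $\filter(L)$ is complementable.

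The only delicate point is the bookkeeping in the divisibility chain: one must check that the intermediate deleted modes all have shape entry $1$ and positive stride, so that $s_\ell d_\ell \mid d_{\ell+1}$ collapses to $d_\ell \mid d_{\ell+1}$ and the chain links up correctly; the remaining steps are routine.
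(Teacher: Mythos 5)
Your proof is correct and follows essentially the reasoning the paper relies on (the observation is left unproved there, and the closest argument, in the proof of Proposition \ref{equivalentconditionsfortractable}, simply asserts that with all strides nonzero tractability "coincides with" complementability). Your version is actually more careful than the paper's gloss: you correctly account for the fact that tractability is phrased via $\sort(L)$ while complementability is phrased via $\sort(\squeeze(L))$, and the divisibility chain $s_{i_j}d_{i_j} \mid d_{i_j+1} \mid \cdots \mid d_{i_{j+1}}$ through the deleted unit modes is exactly the detail needed to bridge that gap.
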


We conclude this section by enumerating a family of equivalent conditions for a flat layout $L$ to be tractable.

\begin{proposition}\label{equivalentconditionsfortractable}
    Suppose $L$ is a flat layout. Then the following conditions are equivalent.
    \begin{enumerate}
        \item $L$ is tractable. 
        \item $\sort(L)$ is tractable.
        \item $\filter(L)$ is tractable.
        \item $\filter(L)$ is complementable. 
    \end{enumerate}
\end{proposition}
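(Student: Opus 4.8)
The plan is to reduce each of the four conditions to a statement about the sorted layout $\sort(L)$ and then read off the equivalences. The key preliminary remark is that tractability is invariant under permuting modes: it is by definition a property of $\sort(L)$, and $\sort(L) = \sort(L^\sigma)$ for every permutation $\sigma$ of the modes (since $\sort$ depends only on the multiset of modes), so $L$ is tractable if and only if $L^\sigma$ is. This yields (1)~$\Leftrightarrow$~(2) immediately, taking $\sigma$ to be the sorting permutation. It also lets me freely pass between $\filter(L)$ and $\filter(\sort(L))$ below: these two flat layouts have the same multiset of modes --- exactly the modes of $L$ with nonzero stride --- hence each is a permutation of the other, so one is tractable iff the other is.

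For (1)~$\Leftrightarrow$~(3) I would write $\sort(L) = (s_1,\dots,s_m):(d_1,\dots,d_m)$. Because $\sort(L)$ is sorted we have $d_1 \le \cdots \le d_m$, and since $0$ is the $\preceq$-minimal stride there is an index $0 \le k \le m$ with $d_1 = \cdots = d_k = 0$ and $d_{k+1},\dots,d_m$ all positive. Thus $\filter(\sort(L))$ is the restriction of $\sort(L)$ to the terminal block of indices $\{k+1,\dots,m\}$, so it is itself sorted and equals its own $\sort$; its tractability therefore amounts to $s_id_i \mid d_{i+1}$ for all $k < i < m$ (the alternative $d_i = 0$ cannot occur). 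On the other hand $L$ is tractable iff for every $1 \le i < m$ one has $d_i = 0$ or $s_id_i \mid d_{i+1}$; for $i \le k$ this holds vacuously, so the condition reduces to exactly $s_id_i \mid d_{i+1}$ for all $k < i < m$. The two conditions coincide, so $L$ is tractable iff $\filter(\sort(L))$ is, iff $\filter(L)$ is, which is (3).

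For (3)~$\Leftrightarrow$~(4), both directions are one-line invocations of results already in hand, applied to the layout $\filter(L)$, whose stride has no zero entry by construction. If $\filter(L)$ is tractable, the earlier Observation that a tractable flat layout with all strides nonzero is complementable gives (4). Conversely, if $\filter(L)$ is complementable, the earlier Example that every complementable flat layout is tractable gives (3).

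The argument is essentially bookkeeping; the only point requiring a little care is the invariance of $\sort$ under permuting modes (so that the tie-breaking convention in the construction of $\sort$ is irrelevant and $\filter$ may be commuted past $\sort$), which is where I would be most careful. I would also check the degenerate cases $m \le 1$ and $k = m$ (all strides zero), in which every condition above holds vacuously.
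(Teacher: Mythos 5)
Your proof is correct and takes essentially the same route as the paper's: (1)$\Leftrightarrow$(2) via the permutation-invariance of $\sort$, (1)$\Leftrightarrow$(3) via commuting $\filter$ past $\sort$ (you spell out the zero-stride-prefix bookkeeping that the paper compresses into the single identity $\sort(\filter(L)) = \filter(\sort(L))$), and (3)$\Leftrightarrow$(4) by comparing the two definitions on a layout with no zero strides. The extra detail you supply for (1)$\Leftrightarrow$(3) is welcome but does not constitute a different approach.
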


\begin{proof} Suppose $L$ is a flat layout.
    \begin{itemize}
        \item (1 $\Leftrightarrow $ 2): This follows from the fact that 
        \[
        \sort(\sort(L)) = \sort(L).
        \]
        \item (1 $\Leftrightarrow $ 3): This follows from the fact that 
        \[
        \sort(\filter(L)) = \filter(\sort(L)).
        \]
        \item (3 $\Leftrightarrow $ 4): This follows from the fact that if
        \[
        L = (s_1,\dots,s_m):(d_1,\dots,d_m)
        \]
        is a flat layout such that each of the stride entries $d_i$ is nonzero, then the definition of tractability coincides with that of complementability.
    \end{itemize}
\end{proof}

\newpage

\section{Nested Tuples}\label{nestedtuplessection}

In this section, we introduce nested tuples, which are the generalization of tuples needed to define layouts in full generality. 

\subsection{Profiles}

A nested tuple $S$ is determined by its {\it flattening}, which is an ordinary tuple, and its {\it profile}, which describes parenthesization pattern on $S$. We define profiles precisely as follows.

\begin{definition}
    A {\it profile} $P$ is either
    \begin{enumerate}
        \item $P = *$, or
        \item a tuple $P = (P_1,\dots,P_r)$ of profiles $P_1,\dots,P_r$ for some $r \geq 0$. 
    \end{enumerate}
    We write $\mathterm{Profile}$ for the set of profiles. 
\end{definition}

\begin{example}
    Here are some examples of profiles.
    \[
    \begin{aligned}
    P_1 &= (*,*) \\
    P_2 &= (*,(*,*))\\
    P_3 & = ((*,*),(*,*))\\
    P_4 & = ((*,*,*),(*,()))\\
    P_5 & = ()\\
    P_6 & = *
\end{aligned}
\]
\end{example}
Let's define some important attributes of profiles.
\begin{definition}\label{definitionofnestedtupleattributes}
    Suppose $P$ is a profile.
    \begin{itemize} 
    \item The {\it rank} of $X$ is 
    \[
    \rank(P) = \begin{cases}
        1 & P = *\\
        r & P = (P_1,\dots,P_r) \text{ is a tuple of profiles.}
    \end{cases}.
    \]
    \item The {\it length} of $P$ is 
    \[
    \len(P) = 
    \begin{cases}
        1 & P = *\\
        \sum_{i=1}^r\len(P_i) & P = (P_1,\dots,P_r) \text{ is a tuple of profiles.}
    \end{cases}
    \]
    \item The {\it depth} of $P$ is 
    \[
    \depth(P) = 
    \begin{cases}
        0 & P = *\\
        1 + \displaystyle\max_{1 \leq i \leq r} (\depth(P_i)) & P = (P_1,\dots,P_r) \text{ is a tuple of profiles.}
    \end{cases}
    \]
    \end{itemize}
\end{definition}

\begin{example}
Here are some examples of profiles, together with their rank, length, and depth :
\[
\begin{aligned}
P &= * & \quad \rank(P) &= 1, \quad \len(P)  = 1, & \depth(P) &= 0\\
P &= (*,*,*) & \quad \rank(P) &= 3, \quad \len(P)  = 3, &\depth(P) &= 1 \\
P &= (((*,*),*,*),*,*), & \quad \rank(P) &= 3, \quad \len(P)  = 6, & \depth(P) &= 3 \\
P &= (((),()),(*,(*,*))), & \quad \rank(P) &= 2, \quad \len(P)  = 3, & \depth(P) &= 3 \\
\end{aligned}
\]
\end{example}

\begin{definition}\label{definitionofmodeofnestedtuple}
Suppose $P$ is a profile with $\rank(P) = r$. If $1 \leq i \leq r$, then the $i$th {\it mode of} $P$ is
\[
\mode_i(P) = \begin{cases}
    P & \depth(P) = 0 \text{ (hence }i=r=1\text{),}\\
    P_i & P = (P_1,\dots,P_{r}) \text{ has depth }\geq 1.
\end{cases}
\]
\end{definition}

\begin{example}
    If $P = ( (*,*),(()),((*,(*,*))))$ then the modes of $P$ are 
    \begin{align*}
        \mode_1(P) & = (*,*)\\
        \mode_2(P) & = (())\\
        \mode_3(P) & = (*,(*,*)).
    \end{align*}
\end{example}

The following notation will be useful.

\begin{notation}
    Suppose $P$ is a profile of depth $>0$. For any $1 \leq j \leq \rank(P)$, we write 
    \begin{align*}
    \len_j(X) & = \len(\mode_j(P))\text{, } \\
    \len_{<j}(P) & = \sum_{i=1}^{j-1} \len_i(X)\text{,}\\
    \len_{\leq j}(X) & = \len_{<j}(P) + \len_j(P)\\
    \end{align*}
\end{notation}
The most important operation supported by profiles is {\it substitution}: If $Q$ is a profile of length $m$, and $P_1,\dots,P_m$ are profiles, then we can obtain a new profile $(P_1,\dots,P_m)_Q$ by substituting the $i$th entry of $Q$ with the profile $P_i$, for each $1 \leq i \leq m$. More precisely, we have the following definition. 

\begin{definition}\label{definitionofPconcatenation}
    Suppose $Q$ is a profile of length $m$, and suppose $P_1,\dots,P_m$ are profiles. Then the $Q${\it -substitution} of $P_1,\dots,P_m$ is the profile
    \[
    (P_1,\dots,P_m)_Q
    \]
    defined as follows. Write $\depth(Q) = d$ and $\rank(Q) = r$. 
    \begin{itemize}[left = 0pt]
        \item If $d = 0$, then $m = 1$, and we define 
        \[
        (P_1)_Q = P_1.
        \]
        \item Suppose next that $d > 0$, and that we have defined $Q'$-substitution for all profiles $Q'$ of depth $<d$. We can write 
    \[
    Q = (Q_1,\dots, Q_r)
    \]
    where each mode $Q_i = \mode_i(Q)$ has depth $< d$. If for each $1 \leq i \leq r$, we set 
    \[\ell_i = \len(P_1) + \cdots + \len(P_{i-1}),\] 
    then we define
    \[
    (P_1,\dots,P_r)_Q = ((P_{1},\dots,P_{\ell_2})_{Q_1},\dots,(P_{\ell_r +1},\dots,P_{\ell_{r+1}})_{Q_r}).
    \]
    \end{itemize}
\end{definition}

\begin{example}
    If $Q = (*,*)$ and $P_1 = (*,*)$, $P_2 = (*,*,*)$, 
    \[
    (P_1,P_2)_Q = ( (*,*), (*,*,*)).
    \]
    More generally, if $Q = (*,\dots,*)$ is the profile with $\depth(Q) = 1$ and $\len(Q) = \rank(Q) = r$, then 
    \[
    (P_1,\dots,P_r)_Q = (P_1,\dots,P_r)
    \]
    is ordinary concatenation. 
\end{example}

\begin{aside}
    There is an operadic interpretation of $Q$-substitution. The set $\mathterm{Profile}$ of profiles has the structure of a (non-symmetric) operad: the set 
    \[
    \mathterm{Profile}(n) = \{P \in \mathterm{Profile} \mid \len(P) = n\}
    \]
    forms the collection of $n$-ary operations of $\mathterm{Profile}$, and if $n = m_1 + \cdots + m_r$, then the structure map  
    \[ \begin{tikzcd} 
     \mathterm{Profile}(m_1)\times \dots \times \mathterm{Profile}(m_r) \times \mathterm{Profile}(n) \ar[rr] & &  \mathterm{Profile}(m_1 + \cdots + m_r)\\
    (P_1,\dots,P_r),Q \ar[rr,mapsto] & & (P_1,\dots,P_r)_Q
    \end{tikzcd}\]
    is given by $Q$-substitution. One can also form the cofree symmetric operad on this non-symmetric operad, which amounts to endowing the sets of $n$-ary operations with trivial symmetric group action.
\end{aside}

\subsection{Basic definitions}

Having defined profiles and their basic properties, we can now define nested tuples. 

\begin{definition}\label{definitionofnestedtuple}
    If $V$ is a set, then a {\it nested tuple} $X$ with entries in $V$ is a pair $(X^\flat,P)$ consisting of 
    \begin{enumerate}
        \item a tuple $X^\flat = (x_1,\dots,x_m)$ with entries in $V$, called the {\it flattening} of $X$, and  
        \item a profile $\profile(X) = P$ of length $m$, called the {\it profile} of $X$. 
    \end{enumerate}
    We write $\mathterm{Nest}(V)$ for the set of all nested tuples with entries in a set $V$. 
\end{definition}

\begin{example} Here are some examples of nested tuples, together with their flattening and profile. 
\[
\begin{aligned}
X &= (2,(2,2)) \quad & X^\flat &= (2,2,2) \quad & \profile(X) & = (*,(*,*)) \\
X & = 25 \quad & X^\flat & = (25) \quad & \profile(X) & = * \\
X & = (((2,2,2),8),64) \quad & X^\flat & = (2,2,2,8,26) \quad & \profile(X) & = (((*,*,*),*),*)\\
X & = ((),(32,()),(4,8)) \quad & X^\flat & = (32,4,8) \quad & \profile(X) & = ((),(*,()),(*,*))
\end{aligned}
\]
\end{example}

\begin{notation}
    We sometimes write  
    \[X = (x_1,\dots,x_m)_P\]
    to denote a nested tuple with $X^\flat = (x_1,\dots,x_m)$ and profile $\profile(X) = P$. 
\end{notation}

\begin{observation}\label{nesttuplepullbacksquare}
    If $V$ is any set, then by definition, we have a pullback square
    \[ \begin{tikzcd} 
    \mathterm{Nest}(V) \ar[rr,"\profile(-)"] \ar[d,swap,"(-)^\flat"] \arrow[drr, phantom, "\lrcorner", very near start] & & \mathterm{Profile} \ar[d,"\len(-)"] \\
    \mathterm{Tuple}(V) \ar[rr,swap,"\len(-)"] & & \mathbb{N}.
    \end{tikzcd} \]
\end{observation}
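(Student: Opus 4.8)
The plan is to observe that the claimed square is, after unwinding Definition \ref{definitionofnestedtuple}, literally the standard fiber-product square for the cospan
\[
\mathterm{Tuple}(V) \xrightarrow{\ \len(-)\ } \mathbb{N} \xleftarrow{\ \len(-)\ } \mathterm{Profile}.
\]
Recall that in $\catstyle{Set}$ the pullback of a cospan $A \xrightarrow{a} C \xleftarrow{b} B$ may be modeled by the set $\{(x,y)\in A\times B \mid a(x)=b(y)\}$ equipped with its two coordinate projections, and that a commuting square is a pullback square precisely when the canonical comparison map from its top-left corner to this fiber product is a bijection. So the proof reduces to (i) checking commutativity and (ii) identifying the comparison map with a bijection.

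First I would check that the square commutes: for any $X \in \mathterm{Nest}(V)$, Definition \ref{definitionofnestedtuple} stipulates that $\profile(X)$ has length equal to $\len(X^\flat)$, hence $\len(X^\flat) = \len(\profile(X))$, which is exactly the assertion that $\len(-)\circ(-)^\flat = \len(-)\circ\profile(-)$ as functions $\mathterm{Nest}(V) \to \mathbb{N}$.

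Next I would exhibit the comparison bijection. The canonical map $\mathterm{Nest}(V) \to \mathterm{Tuple}(V) \times_{\mathbb{N}} \mathterm{Profile}$ sends $X$ to the pair $(X^\flat,\profile(X))$, which lies in the fiber product by the previous paragraph. In the other direction, any pair $(t,P) \in \mathterm{Tuple}(V)\times\mathterm{Profile}$ with $\len(t)=\len(P)$ is, by Definition \ref{definitionofnestedtuple} verbatim, exactly the data of a nested tuple with entries in $V$; this assignment is a two-sided inverse to the comparison map on the nose. Hence the comparison map is a bijection, and the square is a pullback.

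There is essentially no obstacle here: the statement is a direct repackaging of the definition of $\mathterm{Nest}(V)$ as the set of (flattening, profile) pairs subject to a length-matching constraint, which is by construction a fiber product over $\mathbb{N}$. The only point requiring care is to state the argument relative to an explicit model of pullbacks in $\catstyle{Set}$ (fiber product of sets, or equivalently the universal property) and to cite Definition \ref{definitionofnestedtuple} at the precise spot where the pair data is identified with a nested tuple, since that definition is what upgrades the commuting square to a pullback rather than a mere commuting square.
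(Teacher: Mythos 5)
Your proof is correct and matches the paper's intent exactly: the paper offers no argument beyond ``by definition,'' and your unwinding of Definition \ref{definitionofnestedtuple} into the explicit fiber-product model with its comparison bijection is precisely the reasoning being elided. Nothing is missing.
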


\begin{remark}
    Given the recursive definition of profiles, we could equivalently define a nested tuple with entries in $V$ to be either
    \begin{enumerate}
        \item an element of $V$, or 
        \item a tuple of nested tuples with entries in $V$.
    \end{enumerate}
\end{remark}

Let's define some important attributes of nested tuples. Each such attribute of a nested tuple $X$ is inhereted by its flattening $X^\flat$ or its profile $\profile(X)$. 

\begin{definition}\label{definitionofnestedtupleattributes}
    Suppose $X$ is a nested tuple with entries in $V$. 
    \begin{itemize} 
    \item The {\it rank} of $X$ is 
    \[
    \rank(X) = \rank(P)
    \]
    \item The {\it length} of $X$ is 
    \[
    \len(X) = \len(P) = \len(X^\flat)
    \]
    \item The {\it depth} of $X$ is 
    \[
    \depth(X) = \depth(P)
    \]
    \item If $V = \mathbb{Z}$, then the {\it size} of $X$ is
    \[
    \size(X) = \size(X^\flat).
    \]
    \end{itemize}
\end{definition}

\begin{example}
Here are some examples of nested tuples of integers, together with their rank, length, depth, and size:
\[
\begin{aligned}
X &= 27 & \quad \rank(X) &= 1, \quad \len(X)  = 1, & \depth(X) &= 0, \quad \size(X) = 27 \\
X &= (2,10,5) & \quad \rank(X) &= 3, \quad \len(X)  = 3, &\depth(X) &= 1, \quad \size(X) = 100 \\
X &= (((3,4),2,2),8,9), & \quad \rank(X) &= 3, \quad \len(X)  = 6, & \depth(X) &= 3, \quad \size(X) = 3096 \\
X &= (((),()),(2,(5,5))), & \quad \rank(X) &= 2, \quad \len(X)  = 3, & \depth(X) &= 3, \quad \size(X) = 50 \\
\end{aligned}
\]
\end{example}

\begin{example}
    A nested tuple of integers with depth $0$ is simply an integer.
\end{example}

\begin{example}
A nested tuple of integers with depth $1$ is simply a tuple of integers. If $X$ is such a nested tuple, then $\rank(X) = \len(X)$.
\end{example}

\begin{definition}\label{definitionofmodeofnestedtuple}
Suppose $X = (x_1,\dots,x_m)_P$ is a nested tuple with $\rank(X) = r$. If $1 \leq i \leq r$, then the $i$th {\it mode of} $X$ to be the nested tuple 
\[
\mode_i(X) = (x_{\len_{<i}(P) + 1},\dots,x_{\len_{\leq i}(P)})_{\mode_i(P)}.
\]
\end{definition}

\begin{example}
    If 
    \[X = ((3),4,((10,10),12)),\]
    then the modes of $X$ are
    \begin{align*}
    \mode_1(X) & = (3)\\
    \mode_2(X) & = 4 \\
    \mode_3(X) & = ((10,10),12)
    \end{align*}
\end{example}

\begin{example}
    If $X = (32,5,6,64)$, then the modes of $X$ are 
    \begin{align*}
    \mode_1(X) & = 32\\
    \mode_2(X) & = 5 \\
    \mode_3(X) & = 6\\
    \mode_4(X) & = 64
    \end{align*}
\end{example}

\noindent It will be convenient to introduce the following notation. 

\begin{notation}
    Suppose $X$ is a nested tuple of integers with $\depth(X)>0$. For any $1 \leq j \leq \rank(X)$, we write 
    \begin{align*}
    \len_j(X) & = \len(\mode_j(X))\text{, } \\
    \len_{<j}(X) & = \sum_{i=1}^{j-1} \len_i(X)\text{,}\\
    \len_{\leq j}(X) & = \len_{<j}(X) + \len_j(X)
    \end{align*}
    and similarly, we write 
    \begin{align*}
    \size_j(X) & = \size(\mode_j(X))\text{, }\\
    \size_{< j} (X) & = \prod_{i=1}^{j-1} \size_j(X)\text{, and}\\
    \size_{\leq j} (X) & = \size_{<j}(X)\cdot\size_j(X).
    \end{align*}
\end{notation}

\begin{definition}
    \label{definitionofentryofnestedtuple}
    If $X = (x_1,\dots,x_m)_P$ is a nested tuple and $1 \leq i \leq m$, then the $i$th entry of $X$ is 
    \[
    \entry_i(X) = \entry_i(X^\flat) = x_i.
    \]
\end{definition}


\begin{example}
    If 
    \[X = ((3),4,((10,10),12)),\]
    then the entries of $X$ are 
    \begin{align*}
        \entry_1(X) & = 3\\
        \entry_2(X) & = 4\\
        \entry_3(X) & = 10\\
        \entry_4(X) & = 10\\
        \entry_5(X) & = 12.
    \end{align*}
\end{example}

\begin{example}
    If $X = (32,5,6,64)$, then the entries of $X$ are 
    \begin{align*}
        \entry_1(X) & = 32\\
        \entry_2(X) & = 5\\
        \entry_3(X) & = 6\\
        \entry_4(X) & = 4.
    \end{align*}
\end{example}

\begin{example}
    If $X$ is a nested tuple with depth $1$, then $\mode_i(X) = \entry_i(X)$ for all $1 \leq i \leq \rank(X) = \len(X)$.
\end{example}

\begin{observation}
    If $X$ is a nested tuple of integers, then the {\it entries} of $X$ are integers, while the {\it modes} of $X$ are themselves nested tuples of integers.
\end{observation}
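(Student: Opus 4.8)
The plan is simply to unwind the relevant definitions; the statement is purely formal. First I would dispose of the claim about entries. By Definition~\ref{definitionofentryofnestedtuple}, $\entry_i(X) = \entry_i(X^\flat) = x_i$ where $X^\flat = (x_1,\dots,x_m)$ is the flattening of $X$. Since $X$ being a ``nested tuple of integers'' means precisely $X \in \mathterm{Nest}(\mathbb{Z})$, the defining pullback square of Observation~\ref{nesttuplepullbacksquare} (equivalently, the definition of $\mathterm{Nest}(\mathbb{Z})$ as pairs $(X^\flat,\profile(X))$ with $X^\flat \in \mathterm{Tuple}(\mathbb{Z})$) gives $X^\flat \in \mathterm{Tuple}(\mathbb{Z})$, so each $x_i$ is an integer.

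For the second claim I would unwind Definition~\ref{definitionofmodeofnestedtuple}. If $\depth(X) = 0$, then $X$ is an integer and $\mode_1(X) = X$, which is a depth-$0$ nested tuple of integers, so there is nothing to prove. Otherwise, writing $P = \profile(X)$, the $i$th mode is
\[
\mode_i(X) = (x_{\len_{<i}(P)+1},\dots,x_{\len_{\leq i}(P)})_{\mode_i(P)}.
\]
Its flattening is a contiguous block of the tuple $X^\flat \in \mathterm{Tuple}(\mathbb{Z})$, hence itself a tuple of integers, and its ``profile'' $\mode_i(P)$ is indeed a profile (a mode of a profile is a profile). The only non-automatic point is that these two data are compatible, i.e.\ that the length of the displayed block equals $\len(\mode_i(P))$; this holds because $\len_{\leq i}(P) - \len_{<i}(P) = \len_i(P) = \len(\mode_i(P))$ by the definitions of $\len_i$ and $\len_{<i}$. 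Hence the pair defines an element of $\mathterm{Nest}(\mathbb{Z})$, i.e.\ $\mode_i(X)$ is a nested tuple of integers, as claimed.

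I do not anticipate any obstacle here: both assertions reduce to unfolding Definition~\ref{definitionofentryofnestedtuple}, Definition~\ref{definitionofmodeofnestedtuple}, and the pullback-square characterization of $\mathterm{Nest}(-)$, together with the one-line length bookkeeping noted above. If anything is worth emphasizing, it is merely the contrast being recorded: an entry is extracted from $X^\flat$ and so lands in the ground set $\mathbb{Z}$, whereas a mode retains a (sub)profile and so lands back in $\mathterm{Nest}(\mathbb{Z})$ rather than in $\mathbb{Z}$.
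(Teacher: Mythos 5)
Your proposal is correct and matches the paper's (implicit) justification: the statement is recorded as an Observation with no written proof precisely because it follows by unwinding Definition~\ref{definitionofentryofnestedtuple} and Definition~\ref{definitionofmodeofnestedtuple} exactly as you do. The length bookkeeping $\len_{\leq i}(P) - \len_{<i}(P) = \len(\mode_i(P))$ that you flag is the only point requiring any checking, and you handle it correctly.
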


Finally, we introduce the notion of {\it congruence} of nested tuples, which indicates when nested tuples have the same profile.

\begin{definition}
    If $X_1$ and $X_2$ are nested tuples, we say $X_1$ and $X_2$ are {\it congruent}, if 
    \[
    \profile(X_1) = \profile(X_2).
    \]
\end{definition}


\begin{example}
    Here are some examples of nested tuples $X_1$ and $X_2$, and whether or not they are congruent 
\[
\begin{aligned}
X_1 &= 27 \quad & X_2 &= 100 \quad & \text{congruent}\\
X_1 &= (2,2) \quad & X_2 &= (8,64) \quad & \text{congruent}\\
X_1 &= ((4,8),(4,8)) \quad & X_2 &= ((1,1),(5,10)) \quad & \text{congruent}\\
X_1 &= ((64,(8,8)),(25,(5,5))) \quad & X_2 &= ((2,(3,5)),(7,(11,13))) \quad & \text{congruent}\\
X_1 &= 27 \quad & X_2 &= (100) \quad & \text{not congruent}\\
X_1 &= (2,2) \quad & X_2 &= (8,64,128) \quad & \text{not congruent}\\
X_1 &= ((4,8),(4,8)) \quad & X_2 &= (((1,1),(5,10))) \quad & \text{not congruent}\\
\end{aligned}
\]
\end{example}

\subsection{Substitution}

Recall that if $Q$ is a profile of length $r$ and $P_1,\dots,P_r$ are profiles, then we defined a profile 
\[
(P_1,\dots,P_r)_Q
\]
called the $Q${\it-substitution} of $P_1,\dots,P_r$. This profile is obtained from $Q$ by replacing the $i$th entry of $Q$ with the profile $P_i$. We can extend this to an operation on nested tuples as follows. 

\begin{definition}
    Suppose $X_1,\dots,X_m$ are nested tuples with profiles $P_1,\dots,P_m$, and suppose $Q$ is a profile of length $m$. We define the $Q$-substitution
    \[
    (X_1,\dots,X_m)_Q
    \]
    of $X_1,\dots,X_m$ to be the nested tuple with flattening 
    \[
    (X_1,\dots,X_m)_Q^\flat = X_1^\flat \star \cdots \star X_m^\flat
    \]
    and profile 
    \[
    (P_1,\dots,P_m)_Q. 
    \]
    More generally, if $X_1,\dots,X_m$ are nested tuples and $Y$ is a nested tuple of length $m$, we define 
    \[
    (X_1,\dots,X_m)_Y = (X_1,\dots,X_m)_{\profile(Y)}.
    \]
\end{definition}

\begin{example}
    If $(X_1,X_2,X_3) = (64,16,4)$ and $Q = (*,(*,*))$, then 
    \[
    (X_1,X_2,X_3)_Q = (64,(32,4))
    \]
\end{example}

\begin{example}
    If $(X_1,X_2,X_3,X_4) = ( (2,2), (3,3), (5,5), (7,7))$ and $Q = ((*,*),(*,*))$, then 
    \[
    (X_1,X_2,X_3,X_4)_Q = (((2,2),(3,3)),((5,5),(7,7))).
    \]
\end{example}
\begin{example}
    If $X = (12)$ and $Q = *$, then 
    \[
    (X)_Q = 12.
    \]
\end{example}

\begin{example}\label{exampleofmultiplesubstitution1}
    If $X_1 = 2$, $X_2 = 2$, $X_3 = (5,5)$, and $Q = (*,*,*)$, then 
    \begin{align*}
        (X_1,X_2,X_3)_Q = (2,2,(5,5)) = (X_1,X_2,X_3).
    \end{align*}
    More generally, if $X_1,\dots,X_m$ are any nested tuples and $P = (* , \dots , *)$ then
    \[
    (X_1,\dots,X_m)_Q = (X_1,\dots,X_k)
    \]
    is the {\it concatenation} of $X_1,\dots,X_m$. 
\end{example}

\begin{aside}
    There is an {\it operadic} interpretation of substitutions of nested tuples. The set $\mathterm{Nest}(\mathbb{Z})$ of nested tuples of integers is an {\it algebra} over the operad  $\mathterm{Profile}$, with structure maps given by $Q$-substitution:
    \[ \begin{tikzcd} 
    \mathterm{Nest}(\mathbb{Z})\times \dots \times \mathterm{Nest}(\mathbb{Z})\times \mathterm{Profile}(n) \ar[rr] & & \mathterm{Nest}(\mathbb{Z})\\
    (X_1,\dots,X_m),Q \ar[rr,mapsto] & & (X_1,\dots,X_m)_Q.
    \end{tikzcd}\]
\end{aside}

\subsection{Refinement}\label{refinementsection}

In this section, we introduce an important relation on nested tuples called {\it refinement}. Intuitively, if $X'$ and $X$ are nested tuples of integers, we say $X'$ refines $X$ if $X'$ may be obtained from $X$ by replacing each entry of $X$ with some nested tuple of the same size. More precisely, we have the following definition.

\begin{definition}\label{definitionofrefinement}
If $X'$ and $X$ are nested tuples, then we say $X'$ {\it refines} $X$ if either 
    \begin{enumerate}
    \item $X = \size(X')$, or 
    \item 
    \begin{enumerate}
    \item $\depth(X'),\depth(X) > 0$,
    \item $\rank(X') = \rank(X)$, and 
    \item for each $1 \leq i \leq \rank(X)$, $\mode_i(X')$ refines $\mode_i(X)$.
    \end{enumerate}
    \end{enumerate} 
\end{definition}

\begin{notation}
We write 
\[X' \twoheadrightarrow X\]
to indicate that $X'$ refines $X$. 
\end{notation}

\begin{example}
    Here are some examples of refinements of nested tuples. 
    \[\begin{aligned}
        (2,(2,2))\twoheadrightarrow &  \hspace{0.05in} 8\\
        ((2,2),(3,3),(5,5)) \twoheadrightarrow & \hspace{0.05in} (4,9,25)\\
        (64) \twoheadrightarrow & \hspace{0.05in} 64\\
        (8,((2,2,2),((1,4),(2,2)))) \twoheadrightarrow & \hspace{0.05in} (8,(8,8))
    \end{aligned}\]
\end{example}

\begin{observation}
Refinement of nested tuples is reflexive, transitive, and antisymmetric, so refinement specifies a partial ordering on the collection of nested tuples of positive integers.
\end{observation}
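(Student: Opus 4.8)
The plan is to verify reflexivity, transitivity, and antisymmetry directly from Definition~\ref{definitionofrefinement}, each by induction on depth. The organizing principle is that, since $\size(Y)$ is always a depth-$0$ nested tuple, clause~(1) of the assertion ``$Z \twoheadrightarrow W$'' can hold only when $\depth(W) = 0$, whereas clause~(2) explicitly requires $\depth(W) > 0$ (and also $\depth(Z) > 0$); so which clause is in force is dictated entirely by the depth of the right-hand tuple, and in the positive-depth case the inductive step descends to the modes, each of which has depth strictly less than $\depth(W)$. Before the three axioms I would record one auxiliary fact: \emph{if $X' \twoheadrightarrow X$ then $\size(X') = \size(X)$}. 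This is a quick induction on $\depth(X)$: when $\depth(X) = 0$, clause~(1) gives $X = \size(X')$ and $\size$ of an integer is that integer; when $\depth(X) > 0$, clause~(2) gives $\rank(X') = \rank(X)$ with $\mode_i(X') \twoheadrightarrow \mode_i(X)$ for all $i$, so the inductive hypothesis together with $\size(X) = \prod_i \size(\mode_i(X))$ finishes it.

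For reflexivity, induct on $\depth(X)$: if $\depth(X) = 0$ then $X$ is an integer equal to $\size(X)$, which is clause~(1); if $\depth(X) > 0$ then $\rank(X) = \rank(X)$ and each $\mode_i(X) \twoheadrightarrow \mode_i(X)$ by the inductive hypothesis, which is clause~(2). For transitivity, suppose $X'' \twoheadrightarrow X'$ and $X' \twoheadrightarrow X$, and induct on $\depth(X)$. If $\depth(X) = 0$, clause~(1) gives $X = \size(X')$, and the auxiliary fact gives $\size(X') = \size(X'')$, so $X = \size(X'')$ and clause~(1) yields $X'' \twoheadrightarrow X$. If $\depth(X) > 0$, then $X' \twoheadrightarrow X$ forces, via clause~(2), $\depth(X') > 0$, $\rank(X') = \rank(X)$, and $\mode_i(X') \twoheadrightarrow \mode_i(X)$; since $\depth(X') > 0$, the refinement $X'' \twoheadrightarrow X'$ is also by clause~(2), so $\rank(X'') = \rank(X')$ and $\mode_i(X'') \twoheadrightarrow \mode_i(X')$; applying the inductive hypothesis to each chain $\mode_i(X'') \twoheadrightarrow \mode_i(X') \twoheadrightarrow \mode_i(X)$ and reassembling gives clause~(2) for $X'' \twoheadrightarrow X$.

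For antisymmetry, suppose $X' \twoheadrightarrow X$ and $X \twoheadrightarrow X'$, and induct on $\depth(X)$. If $\depth(X) = 0$, then clause~(2) cannot apply to $X \twoheadrightarrow X'$ (it would require $\depth(X) > 0$), so clause~(1) gives $X' = \size(X) = X$. If $\depth(X) > 0$, then $X' \twoheadrightarrow X$ is by clause~(2), so $\depth(X') > 0$, $\rank(X') = \rank(X)$, and $\mode_i(X') \twoheadrightarrow \mode_i(X)$; symmetrically $X \twoheadrightarrow X'$ is by clause~(2), giving $\mode_i(X) \twoheadrightarrow \mode_i(X')$; the inductive hypothesis then yields $\mode_i(X) = \mode_i(X')$ for all $i$, and since a nested tuple of positive depth is determined by its rank together with its list of modes (the recursive description of nested tuples), $X = X'$. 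The only mild subtlety throughout is keeping straight that clause~(2)'s hypotheses constrain \emph{both} tuples in a refinement, which is precisely what allows the positive-depth cases of transitivity and antisymmetry to recurse onto the modes; no step requires a genuinely hard computation, so I expect the write-up to be short.
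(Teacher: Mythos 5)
The paper records this statement as an \emph{observation} and supplies no proof, so there is nothing to diverge from; your argument is correct and is the natural one. The two structural points you isolate — that clause~(1) of Definition~\ref{definitionofrefinement} can only apply when the coarser tuple has depth $0$ while clause~(2) forces both tuples to have positive depth, and that $X' \twoheadrightarrow X$ implies $\size(X') = \size(X)$ — are exactly what is needed to make the inductions on $\depth$ of the coarser tuple go through, and each of reflexivity, transitivity, and antisymmetry then follows by recursing onto the modes as you describe.
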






If $X'$ refines $X$, then we can think of $X'$ as being obtained from $X$ by replacing each entry $x_i$ of $X$ with some nested tuple $X_i'$ of size $x_i$. We refer to the nested tuple $X_i'$ as the $i$th mode of $X'$ relative to $X$. More precisely, we have the following definition.

\begin{construction}
    Suppose $X$ is a nested tuple of integers of length $m$, and suppose $X'$ refines $X$. For any $1 \leq i \leq m$, we define a nested tuple
    \[
    X_i'=\mode_i(X',X),
    \]
    called the $i${\it th mode of} $X'$ {\it relative to} $X$, by the formula
    \[
\mode_i(X',X) = \begin{cases}
    X' &  \depth(X) = 0 \text{ (hence }i=\ell=1\text{)}\\
 \mode_{i - N}(\mode_j(X'),\mode_j(X)) & \begin{matrix} j \text{ is the largest integer such that }\\
 N \coloneqq \len_{<j}(X) < i. \\
 \end{matrix} 
\end{cases}
\]
\end{construction}

\begin{example}
    If $X = ((4,9),(25,36))$, and $X' = ( ( (2,2),(3,3) ) , ( 25, (6,(2,3)) ) )$, then $X'$ refines $X$ and the modes of $X'$ relative to $X$ are 
    \begin{align*}
    \mode_1(X',X) & = (2,2) \\
    \mode_2(X',X) & = (3,3) \\
    \mode_3(X',X) & = 25 \\
    \mode_4(X',X) & = (6,(2,3)) .
    \end{align*}
\end{example}

\begin{example}
    If $X$ is any nested tuple, then $X$ refines $X$, and for any $1 \leq i \leq \len(X)$ we have 
    \[
    \mode_i(X,X) = \entry_i(X).
    \]
\end{example}

\begin{example}
    If $X = X^\flat$ is a tuple, and $X'$ refines $X$, then for any $1 \leq i \leq \len(X)$, we have
    \[
    \mode_i(X',X) = \mode_i(X').
    \]
\end{example}

\begin{example}
    If $X'$ is a nested tuple with $\size(X') = N$, then $X'$ refines $N$, and the only mode of $X'$ relative to $N$ is 
    \[
    \mode_1(X',N) = X'. 
    \]
\end{example}

\begin{notation}
    If $X' \twoheadrightarrow X$ is a refinement and $1 \leq i \leq \len(X)$, then we write
    \begin{align*}
    \len_i(X',X) & = \len(\mode_i(X',X))\\
    \len_{<i}(X',X) & = \sum_{j < i} \len_j(X',X)\\
    \len_{\leq i}(X',X) & = \sum_{j \leq i} \len_j(X',X)
    \end{align*}
\end{notation}



\begin{definition}
    Suppose $X'$ refines $X$, and write $X_i' = \mode_i(X',X)$. Then the {\it flattening of }$X'$ {\it relative to }$X$ is the nested tuple 
    \[
    \flatten(X',X) = (X_1',\dots,X_m').
    \]
\end{definition}

\begin{example} 
If $X' = (((2,2),(3,3)),((5,5),(7,7)))$ and $X = ((4,9),(25,49))$, then 
\[
\flatten(X',X) = ((2,2),(3,3),(5,5),(7,7)).
\]
\end{example}

\begin{example}
    If $X$ is any nested tuple, then the flattening of $X$ relative to $X$ is
    \[
    \flatten(X,X) = X^\flat.
    \]
\end{example}

\begin{example}
    If $X = X^\flat$ is a tuple, and $X'$ refines $X$, then the flattening of $X'$ relative to $X$ is 
    \[
    \flatten(X',X) = X'. 
    \]
\end{example}

\begin{example}
    If $X'$ is a nested tuple with $\size(X') = N$, then $X'$ refines $N$, and the flattening of $X'$ relative to $N$ is 
    \[
    \flatten(X',N) = (N). 
    \]
\end{example}

\begin{observation}
    If $X'$ refines $X$, then $\flatten(X',X)$ refines $X^\flat$. 
\end{observation}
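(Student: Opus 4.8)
The plan is to reduce the statement to a single numerical identity about relative modes, namely that $\size(\mode_i(X',X)) = \entry_i(X)$ for every $1 \le i \le \len(X)$, and then establish that identity by induction on $\depth(X)$. Everything else is a direct unwinding of Definition \ref{definitionofrefinement} together with the constructions of $\mode_i(-,-)$ and $\flatten(-,-)$.

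First I would unwind the target. Write $X^\flat = (x_1,\dots,x_m)$, and treat the degenerate case $m = 0$ separately (there $X^\flat = ()$, $\flatten(X',X) = ()$, and $()$ refines $()$ trivially). For $m \ge 1$, $X^\flat$ is a tuple of depth $1$, so it is not an integer, and hence a nested tuple $Z$ refines $X^\flat$ precisely when clause (2) of Definition \ref{definitionofrefinement} holds: $\rank(Z) = m$, $\depth(Z) > 0$, and $\mode_i(Z)$ refines $\entry_i(X^\flat) = x_i$ for each $i$. Since each $x_i$ is an integer, clause (1) shows that "$\mode_i(Z)$ refines $x_i$" is equivalent to $\size(\mode_i(Z)) = x_i$. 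Taking $Z = \flatten(X',X) = (\mode_1(X',X),\dots,\mode_m(X',X))$, whose $i$th mode is $\mode_i(X',X)$, whose rank is $m$, and whose depth is positive, the whole claim becomes exactly $\size(\mode_i(X',X)) = \entry_i(X)$ for all $1 \le i \le m$.

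Next I would prove this identity by induction on $\depth(X)$. If $\depth(X) = 0$, then $X$ is an integer, $m = 1$, the recursion defining relative modes gives $\mode_1(X',X) = X'$, and the hypothesis $X' \twoheadrightarrow X$ forces $X = \size(X')$ by clause (1) of Definition \ref{definitionofrefinement}; this is the base case. If $\depth(X) > 0$, then $X' \twoheadrightarrow X$ must hold via clause (2), so $\rank(X') = \rank(X) = r$ and $\mode_k(X') \twoheadrightarrow \mode_k(X)$ for every $k$. Fix $i$, let $j$ be the largest index with $N := \len_{<j}(X) < i$, so that $\mode_i(X',X) = \mode_{i-N}(\mode_j(X'),\mode_j(X))$. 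I would first check $1 \le i - N \le \len_j(X)$: the left inequality is $N < i$, and the right follows from maximality of $j$, since $N + \len_j(X) = \len_{\le j}(X)$ equals $\len_{<(j+1)}(X)$ when $j < r$ and equals $\len(X) \ge i$ when $j = r$. Because $\depth(\mode_j(X)) < \depth(X)$ and $\mode_j(X') \twoheadrightarrow \mode_j(X)$, the induction hypothesis applies and yields $\size(\mode_{i-N}(\mode_j(X'),\mode_j(X))) = \entry_{i-N}(\mode_j(X))$. Reading off the $(i-N)$th entry of $\mode_j(X) = (x_{N+1},\dots,x_{\len_{\le j}(X)})_{\mode_j(\profile(X))}$ gives $\entry_{i-N}(\mode_j(X)) = x_{N+(i-N)} = x_i = \entry_i(X)$, which closes the induction.

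I expect the only real friction to be the index bookkeeping in the inductive step: confirming that $i - N$ lands in the admissible range of modes of $\mode_j(X)$ and that it picks out precisely $x_i$. This requires carefully invoking the maximality of $j$ and the definitions of $\len_{<j}$, $\len_{\le j}$, but it is routine and not deep; the rest of the argument is purely formal manipulation of the definitions of refinement, relative mode, and relative flattening.
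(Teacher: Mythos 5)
Your proof is correct: the reduction of the claim to the identity $\size(\mode_i(X',X)) = \entry_i(X)$ via clause (1) of Definition \ref{definitionofrefinement}, followed by induction on $\depth(X)$ through the recursive definition of relative modes, is exactly the argument this observation calls for (the paper states it without proof). The index bookkeeping in your inductive step, including the use of maximality of $j$ to get $i - N \le \len_j(X)$, checks out.
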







\newpage

\newpage

\section{Layouts}\label{layoutssection}

Having developed the necessary background on nested tuples, we turn our attention to {\it layouts}. These are a generalization of flat layouts in which shapes and strides are allowed to be nested tuples, rather than (flat) tuples.

\subsection{Basic definitions}

\begin{definition}\label{definitionofnestedlayout}
    A {\it layout} is a pair 
    \[
    L = S:D
    \]
    consisting of a nested tuple of positive integers 
    \[\shape(L) = S\] called the {\it shape} of $L$, and a nested tuple of non-negative integers 
    \[
    \stride(L) = D
    \]
    called the {\it stride} of $L$, such that $S$ and $D$ are congruent. 
\end{definition}

\begin{definition} If $L=S:D$ is a layout, then the {\it rank}, {\it length}, {\it depth}, {\it size}, and {\it profile} of $L$ are defined to be the rank, length, depth, size, and profile of $S$, respectively.
\end{definition}

\begin{example} The layout $L = (3,(3,2)):(3,(1,10))$ may be pictured as follows.
\[
\begin{tikzpicture}[x={(0cm,-1cm)},y={(1cm,0cm)},every node/.style={minimum size=1cm, outer sep=0pt}]

\node[fill=gray!20] at (0,0) {0};
\node[fill=gray!20] at (0,1) {1};
\node[fill=gray!20] at (0,2) {2};
\node[fill=gray!20] at (0,3) {10};
\node[fill=gray!20] at (0,4) {11};
\node[fill=gray!20] at (0,5) {12};
\node[fill=gray!20] at (1,0) {3};
\node[fill=gray!20] at (1,1) {4};
\node[fill=gray!20] at (1,2) {5};
\node[fill=gray!20] at (1,3) {13};
\node[fill=gray!20] at (1,4) {14};
\node[fill=gray!20] at (1,5) {15};
\node[fill=gray!20] at (2,0) {6};
\node[fill=gray!20] at (2,1) {7};
\node[fill=gray!20] at (2,2) {8};
\node[fill=gray!20] at (2,3) {16};
\node[fill=gray!20] at (2,4) {17};
\node[fill=gray!20] at (2,5) {18};
\draw[color=black,thick,shift={(-0.5,-0.5)}] (0,0) grid (3,6);

\node[anchor = east] at (1,-1) {$L = $};
\end{tikzpicture}
\]
\end{example}

\begin{example} The layout $L = ((2,2),(2,2)):((1,4),(2,8))$ may be pictured as follows.
\[
\begin{tikzpicture}[x={(0cm,-1cm)},y={(1cm,0cm)},every node/.style={minimum size=1cm, outer sep=0pt}]

\node[fill=gray!20] at (0,0) {0};
\node[fill=gray!20] at (0,1) {2};
\node[fill=gray!20] at (0,2) {8};
\node[fill=gray!20] at (0,3) {10};
\node[fill=gray!20] at (1,0) {1};
\node[fill=gray!20] at (1,1) {3};
\node[fill=gray!20] at (1,2) {9};
\node[fill=gray!20] at (1,3) {11};
\node[fill=gray!20] at (2,0) {4};
\node[fill=gray!20] at (2,1) {6};
\node[fill=gray!20] at (2,2) {12};
\node[fill=gray!20] at (2,3) {14};
\node[fill=gray!20] at (3,0) {5};
\node[fill=gray!20] at (3,1) {7};
\node[fill=gray!20] at (3,2) {13};
\node[fill=gray!20] at (3,3) {15};
\draw[color=black,thick,shift={(-0.5,-0.5)}] (0,0) grid (4,4);

\node[anchor = east] at (1.5,-1) {$L = $};
\end{tikzpicture}
\]
\end{example}

\begin{example}
    The layout 
    \[
    L = 10:4
    \]
    has $\rank(L) = 1$, $\len(L) = 1$, $\depth(L) = 0$, $\size(L) = 10$, and $\profile(L) = *$. 
\end{example}

\begin{example}
    The layout
    \[
    L = (7,(2,10,4),(3,7)) : (1,(7,14,140),(560,1680))
    \]
    has $\rank(L) = 3$, $\len(L) = 6$, $\depth(L) = 2$, $\size(L) = 11760$, and $\profile(L) = (*,(*,*,*),(*,*)).$ 
\end{example}

\begin{example}
    The layout
    \[
    L = ((2,2,2,(2,2))) : ((1,0,8,(0,16)))
    \]
    has $\rank(L)= 1$, $\len(L) = 5$, $\depth(L) = 3$, $\size(L) = 32$, and $\profile(L) = ((*,*,*,(*,*)))$.
\end{example}

\begin{example}
    The pair 
    \[
    S:D = (2,(2,2)):(1,2,4)
    \]
    is NOT a layout because $S$ and $D$ are not congruent.
\end{example}

\begin{definition}
    If $L = S:D$ is a layout, then for any $1 \leq i \leq \rank(L)$ we define the $i$th mode of $L$ to be the layout 
    \[
    \mode_i(L) = \mode_i(S) : \mode_i(D),
    \]
    and for any $1 \leq i \leq \len(L)$, we define the $i$th entry of $L$ to be the layout 
    \[
    \entry_i(L) = \entry_i(S) : \entry_i(D).
    \]
\end{definition}

\begin{example}
    If $L = \bigl((2,2),9\bigr):\bigl((3,6),12\bigr)$, then the modes of $L$ are 
    \begin{align*}
        \mode_1(L) & = (2,2):(3,6) \\
        \mode_2(L) & = 9 : 12 
    \end{align*}
    and the entries of $L$ are 
    \begin{align*}
    \entry_1(L) & = 2:3\\
      \entry_2(L) &=  2:6\\
      \entry_3(L)&=   9:12 .
    \end{align*} 
\end{example}

\begin{remark}
    If $L$ is a layout, then the modes of $L$ are also layouts, and the entries of $L$ are layouts of depth $0$. 
\end{remark}

\begin{remark}
A flat layout $L$ is precisely a layout of depth $1$. On the other hand, if $L$ is a layout, we may obtain a flat layout $L^\flat$ as follows.
\end{remark}

\begin{definition}
    If $L = S:D$ is a layout, we define the {\it flattening} of $L$ to be the flat layout 
    \[
    L^\flat = S^\flat : D^\flat.
    \]
\end{definition}

\begin{example}
    The flattening of $L = 10:4$ is $L^\flat = (10):(4)$. 
\end{example}

\begin{example}
The flattening of 
    \[
    L = \bigl((2,2,2,(2,2))\bigr) : \bigl((1,0,8,(0,16))\bigr)
    \]
    is 
    \[
    L^\flat = (2,2,2,2,2) : (1,0,8,0,16).
    \]
\end{example}

\begin{remark}
    If $L$ is a layout then $\len(L) = \rank(L^\flat)$, and for any $1 \leq i \leq \len(L)$, we have 
    \[\entry_i(L) = \mode_i(L^\flat).\]
\end{remark}

\noindent We can use the flattening construction above to extend many concepts from flat layouts to nested layouts. For example: 

\begin{construction}[Layout function]
    If $L$ is a nested layout, we define the layout function $\Phi_L$ of $L$ by
    \[
    \Phi_L = \Phi_{L^\flat},
    \]
    where $\Phi_{L^\flat}$ is the layout function of Construction \ref{constructionoflayoutfunctions}. Similarly, if $N$ is such that $\Image(\Phi_L) \subset [0,N)$, we define 
    \[
    \Phi_L^N = \Phi_{L^\flat}^N
    \]
    to be the factorization of $\Phi_L$ through the inclusion $[0,N) \subset \mathbb{Z}$. 
\end{construction}

\begin{example}
    If $L = ((2,2),2):((3,0),10)$, then the layout function 
    \[
    \Phi_L:[0,8) \to \mathbb{Z}
    \]
    of $L$ is given by 
    \[ \begin{tikzcd} [row sep = 2, column sep = 4]
    & 0 \ar[dd,mapsto] & 1 \ar[dd,mapsto] & 2 \ar[dd,mapsto] & 3 \ar[dd,mapsto] & 4 \ar[dd,mapsto] & 5 \ar[dd,mapsto] & 6  \ar[dd,mapsto] &  7 \ar[dd,mapsto]\\
    \Phi_L & & &  & & & & & & \\
    & 0 & 3 & 0 & 3 & 10 & 13 & 10 & 13
    \end{tikzcd} \]
\end{example}

Given a layout $L$, we can obtain a flat layout $L^\flat$, and a profile $P = \profile(L)$. Conversely, if we are given a flat layout $L$ and a profile $P$ with the same length as $L$, then we can construct a layout with flattening $L$ and profile $P$ as follows.

\begin{construction}\label{layoutfromflatlayoutandprofile}
If $L$ is a flat layout, and $P$ is a profile with $\len(P) = \len(L)$, then we can define 
\[
L = L_P
\]
to be the layout with shape 
\[
\shape(L) = \shape(L)_P
\]
and stride 
\[
\stride(L) = \stride(L)_P
\]
where $(-)_P$ is the $P$-substitution operation of Definition \ref{definitionofPconcatenation}.
\end{construction}

\begin{example}
    If $L = (8,8,8):(1,64,8)$ and $P = (*,(*,*))$, then 
    \[
    L_P = (8,(8,8)),(1,(64,8)).
    \]
\end{example}

\begin{example}
    If $L = (128):(2)$ and $P = *$, then 
    \[
    L_P = 128:2.
    \]
\end{example}

\begin{proposition}
    If $L'$ is a flat layout and $P$ is a profile with $\len(L') = \len(P)$, then there exists a unique layout $L$ whose flattening is $L^\flat = L'$ and whose profile is $\profile(L) = P$, namely $L = L'_P$.
\end{proposition}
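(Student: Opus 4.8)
The plan is to separate the statement into an existence claim and a uniqueness claim, both of which reduce to the definitional fact—recorded in Observation \ref{nesttuplepullbacksquare}—that a nested tuple is precisely a pair consisting of a flat tuple and a profile of the same length.

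For existence, I would verify that $L'_P$, as defined in Construction \ref{layoutfromflatlayoutandprofile}, is a genuine layout with the claimed flattening and profile. Writing $L' = (s_1,\dots,s_m):(d_1,\dots,d_m)$, its shape is the $P$-substitution $(s_1,\dots,s_m)_P$, where each $s_i$ is viewed as a depth-$0$ nested tuple with $s_i^\flat = (s_i)$ and $\profile(s_i) = *$. By the definition of $Q$-substitution of nested tuples, $(s_1,\dots,s_m)_P$ has flattening $(s_1)\star\cdots\star(s_m) = (s_1,\dots,s_m)$ and profile $(*,\dots,*)_P$. An easy induction on $\depth(P)$, using the recursive clause of Definition \ref{definitionofPconcatenation}, shows $(*,\dots,*)_P = P$; hence $\shape(L'_P)^\flat = \shape(L')$ and $\profile(\shape(L'_P)) = P$, and likewise for the stride. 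Since $\shape(L'_P)$ and $\stride(L'_P)$ then have the common profile $P$, they are congruent, so $L'_P$ is a layout, with $(L'_P)^\flat = \shape(L')^\flat:\stride(L')^\flat = L'$ and $\profile(L'_P) = P$.

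For uniqueness, let $L = S:D$ be any layout with $L^\flat = L'$ and $\profile(L) = P$. Then $S^\flat = \shape(L')$, $D^\flat = \stride(L')$, and $\profile(S) = \profile(D) = P$. By Definition \ref{definitionofnestedtuple}, a nested tuple is nothing but the pair of its flattening and its profile, so $S = \shape(L')_P$ and $D = \stride(L')_P$; that is, $L = L'_P$.

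I do not anticipate a real obstacle: the only step needing any care is the bookkeeping identity $(*,\dots,*)_P = P$ together with the observation that the flattening of a $P$-substitution of singleton tuples is ordinary concatenation. Once these are in place, the proposition follows immediately from the definitions of layout, flattening, and congruence.
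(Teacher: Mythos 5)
Your proposal is correct and takes essentially the same approach as the paper, whose proof is simply the one-line observation that a nested tuple is uniquely determined by its flattening and its profile. Your version spells out the same definitional unwinding (including the bookkeeping identity $(*,\dots,*)_P = P$) in more detail, but no new idea is involved.
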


\begin{proof}
    This follows from the definition of nested tuples, since a nested tuple is uniquely determined by its flattening and its profile.
\end{proof}

\begin{observation}
    The previous proposition tells us that we have a pullback square
    \[ \begin{tikzcd} 
    \mathterm{Layout}\ar[rr,"\profile(-)"] \ar[d,swap,"(-)^\flat"] \arrow[drr, phantom, "\lrcorner", very near start] & & \mathterm{Profile}\ar[d,"\len(-)"] \\
    \mathterm{FlatLayout} \ar[rr,swap,"\len(-)"] & & \mathbb{N}
    \end{tikzcd} \]
\end{observation}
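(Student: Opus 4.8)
The plan is to reduce the statement to the defining fact that a nested tuple is nothing more than the data of its flattening together with its profile. Concretely, Definition~\ref{definitionofnestedtuple} defines a nested tuple with entries in $V$ to be a pair $(X^\flat, P)$ with $\len(P) = \len(X^\flat)$, and Observation~\ref{nesttuplepullbacksquare} repackages this as a pullback square exhibiting $\mathterm{Nest}(V)$ as $\mathterm{Tuple}(V) \times_{\mathbb{N}} \mathterm{Profile}$. Since a layout $L = S:D$ is by Definition~\ref{definitionofnestedlayout} a pair of congruent nested tuples, once I know the candidate shape and stride come out congruent, the existence-and-uniqueness statement for layouts will follow from the corresponding statement for nested tuples applied twice (to the shape and to the stride).

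For existence I would take $L = L'_P$ from Construction~\ref{layoutfromflatlayoutandprofile} and check directly that it works. Write $L' = S':D'$ with $S' = (s_1,\dots,s_m)$, $D' = (d_1,\dots,d_m)$ and $m = \len(P)$; then $\shape(L'_P) = S'_P$ and $\stride(L'_P) = D'_P$, where the $P$-substitution is formed viewing each $s_i$ (resp.\ $d_i$) as a depth-$0$ nested tuple with profile $*$. A short induction on $\depth(P)$ following the recursive clause of Definition~\ref{definitionofPconcatenation} shows that substituting a $*$ into each slot of $P$ returns $P$, so $S'_P$ and $D'_P$ both have profile $P$ (hence are congruent) and flattenings $(s_1)\star\cdots\star(s_m) = S'$ and $(d_1)\star\cdots\star(d_m) = D'$. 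Thus $L'_P$ is a genuine layout with $(L'_P)^\flat = S':D' = L'$ and $\profile(L'_P) = P$.

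For uniqueness, I would take an arbitrary layout $L = S:D$ with $L^\flat = L'$ and $\profile(L) = P$. Unwinding definitions gives $S^\flat = S'$, $D^\flat = D'$, and $\profile(S) = \profile(D) = P$. Because a nested tuple is uniquely determined by its flattening and its profile, $S$ must be the unique nested tuple with flattening $S'$ and profile $P$ --- which is $S'_P$ by the existence half --- and likewise $D = D'_P$. Hence $L = S'_P : D'_P = L'_P$.

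I do not expect a real obstacle: the argument is pure bookkeeping with the definitions. The only place that warrants a line of care is the inductive identity $(\ast,\dots,\ast)_P = P$, i.e.\ that applying $(-)_P$ to a flat tuple with depth-$0$ entries reproduces exactly the nested tuple with that flattening and profile $P$; this is immediate from Definition~\ref{definitionofPconcatenation} but is the one computation I would spell out.
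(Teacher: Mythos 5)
Your proposal is correct and follows essentially the same route as the paper: the Observation is just the pullback-square restatement of the preceding proposition, whose proof in the paper is the one-line reduction to the fact that a nested tuple is uniquely determined by its flattening and its profile, applied to the shape and stride. Your version merely spells out the details the paper leaves implicit (in particular the check that $(\ast,\dots,\ast)_P = P$), and these details are all correct.
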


We can extend the notion of non-degeneracy to the nested case as follows.

\begin{definition}\label{definitionofnondegeneratelayout}
    Suppose $L$ is a layout. We say $L$ is {\it non-degenerate} if for all $1 \leq i\leq \len(L)$, the following condition holds:
    \[
    \entry_i(\shape(L)) \quad \Rightarrow \quad \entry_i(\stride(L))
    \]
\end{definition}

\begin{example}
    The layouts 
    \begin{align*}
        L_1 & = ((2,2),1):((1,2),0)\\
        L_2 & = ((8,8),(1,16)):((2,32),(0,128))
    \end{align*}
    are non-degenerate, while the layouts 
    \begin{align*}
        L_3 & = ((2,2),1):((1,2),4)\\
        L_4 & = ((8,8),(1,16)):((2,32),(1024,128))
    \end{align*}
    are degenerate.
\end{example}




\subsection{Basic operations}

Having established the basic vocabulary for layouts, we turn to the operations they support. In this section, we define basic operations that will be needed to construct more sophisticated operations such as {\it coalesce}, {\it complement},  {\it composition}, {\it logical division}, and {\it logical product}.

\subsubsection{Flattening} If $L$ is a layout, then we may obtain a flat layout $L^\flat$ by flattening the shape and stride of $L$.
\begin{definition}
    If $L = S:D$ is a layout, we define the {\it flattening} of $L$ to be the flat layout 
    \[
    L^\flat = S^\flat : D^\flat.
    \]
\end{definition}

\begin{example}
The flattening of 
    \[
    L = ((2,2,2,(2,2))) : ((1,0,8,(0,16)))
    \]
    is 
    \[
    L^\flat = (2,2,2,2,2) : (1,0,8,0,16).
    \]
\end{example}

\begin{example}
    The flattening of $L = 10:4$ is $L^\flat = (10):(4)$. 
\end{example}

\begin{example}
    Suppose $L$ is a layout. Then $\depth(L) = 1$ if and only if $L = L^\flat$.
\end{example}

\subsubsection{Concatenate}

We can concatenate layouts by concatenating their shapes and concatenating their strides.

\begin{definition}\label{definitionoflayoutconcatenation}
    If $L=S:D$ and $L'=S':D'$ are layouts, then the {\it concatenation} of $L$ and $L'$ is the layout $(L,L')$
    \[
    (L,L') = (S,S'):(D,D').
    \]
    More generally, if $L_1,\dots,L_k$ is any finite collection of layouts, with $L_i = S_i:D_i$, then the concatenation of $L_1,\dots,L_k$ is the layout 
    \[
    (L_1,\dots,L_k) = (S_1,\dots,S_k):(D_1,\dots,D_k).
    \]
\end{definition}

\begin{remark}
    Concatenation of nested tuples (and hence of layouts) is not associative. For example, take $L_1 = 3:4$, $L_2 = 2:2$, and $L_3 = 5:1$. Then 
    \[\bigl(L_1, (L_2, L_3)\bigr) = \bigl(3, (2, 5)\bigr):\bigl(4, (2, 1)\bigr) \neq \bigl((3, 2), 5\bigr):\bigl((4, 2), 1\bigr) = \bigl((L_1, L_2), L_3\bigr). \]
    Moreover, neither of these layouts is equal to the ``three-fold" concatenation $(L_1,L_2,L_3) = (3,2,5):(4,2,1)$. However, we see that each of these layouts has the same flattening, so each of these layouts has the same layout function.
\end{remark}

\begin{example} \label{concatenationoflayoutsexample}
If $L = (3,7,2):(1,3,6)$ and $L' = (2,(2,(4,3))):(5,3,(2,2))$, then 
\[
(L,L') = ( (3,7,2), (2,(2,(4,3)))) : ( (1,3,6), (5,(3,(2,2)))) 
\]
\end{example} 
\begin{remark}
    Concatenation increases the depth of layouts. More precisely, we have 
    \[
    \depth(L,L') = 1 + \text{max}(\depth(L),\depth(L')).
    \]
\end{remark}

\begin{remark}
    When $L$ and $L'$ are flat layouts, the concatenation of Definition \ref{definitionoflayoutconcatenation} does {\it NOT} agree with the concatenation of flat layouts of Definition \ref{definitionofflatconcatenate}. Instead, these operations are related by the formula
    \[
    L \star L' = (L,L')^\flat.
    \]
\end{remark}

\begin{remark}
    If $L$ is any layout with $\depth(L)>0$ and $\rank(L) = r$, then we may write 
    \[
    L = (\mode_1(L),\dots,\mode_{r}(L))
    \]
    as the concatenation of its modes. 
\end{remark}

\begin{example}
    If 
    \[
    L = ((5,(7,7)),2,(4,5)):((1,(35,5)),0,(1,8))
    \]
    then $L = (L_1,L_2,L_3)$ where 
    \begin{align*}
        L_1 & = \bigl(5,(7,7)\bigr) : \bigl(1,(35,5)\bigr),\\
        L_2 & = 2:0\text{, and }\\
        L_3 & = (4,5):(1,8).
    \end{align*}
\end{example}

\subsubsection{Substitution}
Recall that if $X_1,\dots,X_k$ are nested tuples and $P$ is a profile with $\len(P) = k$, then we may form the $P$-substitution
\[
(X_1,\dots,X_k)_P
\]
which is obtained by replacing the $i$the entry of $P$ with the nested tuple $X_i$. We can extend this construction from nested tuples to layouts as follows.

\begin{definition}
    Suppose $L = S:D$ is a layout, and suppose $P$ is a profile with $\len(P) = \rank(L)$. We define 
    \[
    L_P = S_P:D_P
    \]
    where $S_P$ and $D_P$ are the $P$-substitutions of (the modes of) $S$ and $D$. 
\end{definition}

\begin{example}
    If $P = (*,(*,*))$ and $L = (8,8,8):(1,8,64)$, then 
    \[
    L_P = (8,(8,8)):(1,(8,64)).
    \]
\end{example}
\begin{example}
    If $P = (*,(*,*)))$ and 
    \[L = ((2,2),(3,3),(5,5)):((2,1),(12,4),(180,36)),\]
    then 
    \[
    L_P = ((2,2),((3,3),(5,5))):((2,1),((12,4),(180,36))).
    \]
\end{example}
\begin{example}
    If $L = (16):(1)$ and $P = *$, then 
    \[
    L_P = 16:1.
    \]
\end{example}

\subsection{Coalesce}

Recall that if $L$ is a flat layout, then $\coalesce^\flat(L)$ is a the unique flat layout of minimal rank whose layout function is $\Phi_L$. We can make a similar construction in the setting of arbitrary (nested) layouts. We begin by defining the notion of a coalesced layout.

\begin{definition}\label{definitionofnestedlayoutcoalesce}
    Suppose $L$ is a layout. We say $L$ is {\it coalesced} if one of the following conditions holds. 
    \begin{enumerate} 
    \item $L = 1:0$,
    \item $\depth(L) = 0$ and $\shape(L) > 1$, or 
    \item $\depth(L) = 1$, $\rank(L) > 1$, and $L$ is coalesced in the sense of Definition \ref{definitionofflatlayoutcoalesce}.
    \end{enumerate}
\end{definition}

\begin{example}
    The layout 
    \[L = (2,(2,2)) : (1,(16,512))\]
    is not coalesced since $\depth(L)>1$.
\end{example}
\begin{example}
    The layout \[L = (64):(2)\] is not coalesced, while the layout \[L' = 64:2\] is coalesced. 
\end{example}
\begin{example}
    The layout \[L = 1:8\] is not coalesced, while the layout \[L' = 1:0\] is coalesced. 
\end{example}
\begin{example}
    The empty layout 
    \[E = ():()\]
    is not coalesced.
\end{example}

\begin{observation}
    Recall that a layout $L$ is non-degenerate if 
    \[
    \entry_i(\shape(L)) = 1 \quad \Rightarrow \quad  \entry_i(\stride(L)) = 0.
    \]
    If $L$ is coalesced, then $L$ is non-degenerate.
\end{observation}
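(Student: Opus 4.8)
The plan is to argue by a direct case analysis on the three clauses of Definition \ref{definitionofnestedlayoutcoalesce}. In each case I will verify the defining condition of non-degeneracy from Definition \ref{definitionofnondegeneratelayout}, namely that $\entry_i(\shape(L)) = 1$ forces $\entry_i(\stride(L)) = 0$ for every $1 \leq i \leq \len(L)$.

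First I would handle the clause $L = 1:0$: here $\len(L) = 1$, the unique shape entry is $1$ and the unique stride entry is $0$, so the implication holds. Next, for the clause $\depth(L) = 0$ with $\shape(L) > 1$, again $\len(L) = 1$ but now the single shape entry exceeds $1$, so the hypothesis of the implication fails and non-degeneracy holds vacuously. Finally, for the clause $\depth(L) = 1$, $\rank(L) > 1$, and $L$ coalesced as a flat layout, I would invoke condition (1) of Definition \ref{definitionofflatlayoutcoalesce}, which asserts that no shape entry of the flat layout equals $1$; since $\depth(L) = 1$ gives $L = L^\flat$, the entries of $L$ are precisely the modes of the flat layout, so every shape entry exceeds $1$ and the implication is again vacuous.

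Since these three clauses exhaust the ways a layout can be coalesced, combining the three cases completes the argument. There is no real obstacle here: the only point requiring a moment of care is the bookkeeping that identifies $\entry_i(L)$ with the $i$th mode of $L^\flat$ in the last case, which is immediate from $\depth(L) = 1$.
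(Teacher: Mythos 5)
Your case analysis is correct and is exactly the reasoning behind this observation, which the paper states without an explicit proof: each of the three clauses of the definition of a coalesced layout either forces the single stride entry to be $0$ (the case $L = 1:0$) or forces every shape entry to exceed $1$ (the other two cases, using condition (1) of the flat coalesced definition in the third), so non-degeneracy holds, vacuously in the latter situations. Your bookkeeping identifying $\entry_i(L)$ with $\mode_i(L^\flat)$ when $\depth(L)=1$ is also consistent with the paper's conventions.
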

If $L$ is any layout, we can obtain a coalesced layout $\coalesce(L)$ as follows.

\begin{construction}\label{constructionoflayoutcoalesce}
    Suppose $L$ is a layout, and write 
    \[\coalesce^\flat(L^\flat) = (s_1,\dots,s_m):(d_1,\dots,d_m).\]
    \begin{enumerate}
        \item If $m > 1$, we define 
        \[
        \coalesce(L) = \coalesce^\flat(L^\flat)
        \]
        \item If $m = 1$, we define 
        \[
        \coalesce(L) = s_1:d_1
        \]
        \item If $m = 0$, we define 
        \[ \coalesce(L) = 1:0.\]
    \end{enumerate}
\end{construction}

\begin{example}
    If $E = ():()$ is the empty layout, then 
    \[
    \coalesce(E) = 1:0.
    \]
\end{example}
\begin{example}
    If $L = (1,1):(2,4)$, then 
    \[
    \coalesce(L) = 1:0.
    \]
\end{example}
\begin{example}
    If $L = (512):(4)$, then 
    \[
    \coalesce(L) = 512:4.
    \]
\end{example}
\begin{example}
    If $L = (2,2,2):(1,2,4)$, then 
    \[
    \coalesce(L) = 8:1.
    \]
\end{example}
\begin{example}
    If $L = ((2,2,2),(5,5)):((1,2,4),(10,50))$, then 
    \[
    \coalesce(L) = (8,25):(1,10).
    \]
\end{example}

\begin{remark} If $L$ is a layout, then $\coalesce(L)$ has depth $0$ or $1$.
\end{remark}

\begin{proposition}\label{nestedcoalesceproposition}
    If $A$ and $B$ are layouts, then 
    \[\Phi_A =\Phi_B \quad \Leftrightarrow \quad \coalesce(A) = \coalesce(B).\]
\end{proposition}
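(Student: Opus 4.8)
The plan is to reduce the statement to its flat counterpart, Proposition \ref{coalesceproposition}, via the flattening construction; the point is that $\coalesce(-)$ is obtained from $\coalesce^\flat(-)$ by an injective post-processing step, so nothing is lost in passing between the two.

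Let $\Psi$ denote the assignment on coalesced flat layouts given by $\Psi\bigl(():()\bigr) = 1:0$, by $\Psi\bigl((s):(d)\bigr) = s:d$ when $s > 1$, and by $\Psi(L) = L$ when $\rank(L) \geq 2$; this is exactly the case split of Construction \ref{constructionoflayoutcoalesce}, so that
\[
\coalesce(L) = \Psi\bigl(\coalesce^\flat(L^\flat)\bigr)
\]
for every layout $L$. First I would check that $\Psi$ is injective on its domain. Since the output of $\coalesce^\flat$ is always coalesced and hence has no mode of shape $1$, the three types of input to $\Psi$ (empty; rank one with shape $>1$; rank $\geq 2$) are sent respectively to $1:0$, to a depth-$0$ layout of shape $>1$, and to a depth-$1$ layout of rank $\geq 2$ — three pairwise disjoint classes of layouts — and $\Psi$ is manifestly injective within each class.

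Granting this, the proposition follows from the chain
\[
\coalesce(A) = \coalesce(B) \iff \coalesce^\flat(A^\flat) = \coalesce^\flat(B^\flat) \iff \Phi_{A^\flat} = \Phi_{B^\flat} \iff \Phi_A = \Phi_B,
\]
where the first equivalence uses the identity above together with injectivity of $\Psi$, the second is Proposition \ref{coalesceproposition} applied to the flat layouts $A^\flat$ and $B^\flat$, and the third is the definition $\Phi_L = \Phi_{L^\flat}$.

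The only step with any real content is the injectivity of $\Psi$, and that rests solely on the already-established fact that $\coalesce^\flat$ produces a coalesced layout; everything else is formal manipulation of definitions, so I do not expect a genuine obstacle. An alternative, equally short route simply records the lemma $\Phi_{\coalesce(L)} = \Phi_L$ — immediate from Lemma \ref{coalescelemma} and Lemma \ref{squeezelemma}, since passing between $s_1:d_1$ and $(s_1):(d_1)$, or between $1:0$ and $():()$, does not alter the layout function — and then runs the two implications directly.
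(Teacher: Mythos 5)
Your proof is correct and follows essentially the same route as the paper's: reduce to the flat case via $\Phi_L = \Phi_{L^\flat}$ and apply Proposition \ref{coalesceproposition}, then observe that $\coalesce^\flat(A^\flat) = \coalesce^\flat(B^\flat)$ iff $\coalesce(A) = \coalesce(B)$. In fact you justify that last equivalence (the injectivity of the post-processing step $\Psi$) more carefully than the paper, which simply asserts it.
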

\begin{proof}
    Using Proposition \ref{coalesceproposition}, we have 
    \begin{align*}
    \Phi_A = \Phi_B \hspace{0.2in} & \Leftrightarrow \hspace{0.2in} \Phi_{A^\flat} = \Phi_{B^\flat} \\
    & \Leftrightarrow \hspace{0.2in} \coalesce^\flat(A^\flat) = \coalesce^\flat ( B^\flat ) \\
    & \Leftrightarrow \hspace{0.2in} \coalesce(A) = \coalesce ( B ).
    \end{align*}
\end{proof}

\begin{definition}
    If $L$ is a layout, define the {\it complexity} of $L$ to be the integer 
    \[
    \complexity(L) = \len(L) + \depth(L).
    \]
\end{definition}

\begin{proposition}
    If $L$ is a layout and $\size(L) >1$, then $\coalesce(L)$ is the unique complexity minimizing layout whose layout function is $\Phi_L$.
\end{proposition}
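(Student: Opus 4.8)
The plan is to reduce the statement to the flat case, where it is already available as Proposition~\ref{characterizationofflatcoalesce}. First I would use Proposition~\ref{nestedcoalesceproposition} to describe the competitor set: a layout $M$ has $\Phi_M = \Phi_L$ if and only if $\coalesce(M) = \coalesce(L)$, equivalently (Proposition~\ref{coalesceproposition}) if and only if $\coalesce^\flat(M^\flat) = \coalesce^\flat(L^\flat)$; moreover any such $M$ has $\size(M) = \size(L) > 1$, since the common layout function has domain $[0,\size(M)) = [0,\size(L))$. Writing $\coalesce^\flat(L^\flat) = (s_1,\dots,s_m):(d_1,\dots,d_m)$, the hypothesis $\size(L)>1$ forces $m \geq 1$, and Construction~\ref{constructionoflayoutcoalesce} gives $\coalesce(L) = s_1:d_1$ with $\complexity(\coalesce(L)) = 1$ when $m = 1$, and $\coalesce(L) = (s_1,\dots,s_m):(d_1,\dots,d_m)$ with $\complexity(\coalesce(L)) = m+1$ when $m > 1$. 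By Lemma~\ref{coalescelemma} (applied to $L^\flat$, noting $\coalesce(L)^\flat = \coalesce^\flat(L^\flat)$ in both cases), $\Phi_{\coalesce(L)} = \Phi_L$, so $\coalesce(L)$ is a legitimate candidate.

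The key estimate is this: for any layout $M$ with $\Phi_M = \Phi_L$, the flattening $M^\flat$ is a flat layout with $\Phi_{M^\flat} = \Phi_{L^\flat}$, so Proposition~\ref{characterizationofflatcoalesce} yields $\len(M) = \rank(M^\flat) \geq \rank(\coalesce^\flat(L^\flat)) = m$, with equality if and only if $M^\flat = \coalesce^\flat(L^\flat)$. I would combine this with two elementary observations about nested tuples: $\depth(M) = 0$ implies $\profile(M) = *$ and hence $\len(M) = 1$, and $\depth(M) = 1$ if and only if $M = M^\flat$. Now I case-split. If $m = 1$, then $\complexity(M) = \len(M) + \depth(M) \geq 1 = \complexity(\coalesce(L))$, and equality forces $\len(M) = 1$, $\depth(M) = 0$, so $M = s:d$ with $M^\flat = (s):(d)$; the equality case of the rank bound then gives $(s):(d) = (s_1):(d_1)$, i.e.\ $M = s_1:d_1 = \coalesce(L)$. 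If $m > 1$, then $\len(M) \geq m \geq 2$ forces $\depth(M) \geq 1$, so $\complexity(M) = \len(M) + \depth(M) \geq m+1 = \complexity(\coalesce(L))$, and equality forces $\len(M) = m$ and $\depth(M) = 1$, hence $M = M^\flat = \coalesce^\flat(L^\flat) = \coalesce(L)$. In both cases $\coalesce(L)$ minimizes complexity and is the unique minimizer.

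The main obstacle is not a genuine difficulty but the bookkeeping in the uniqueness half: one must feed the tight case of the rank inequality of Proposition~\ref{characterizationofflatcoalesce} (which pins down $M^\flat$ once $\len(M)$ is minimal) together with a separate bound on $\depth(M)$ (forcing $M$ to equal its own flattening, or to have depth $0$ when $m = 1$) in order to upgrade ``minimal complexity'' to ``equals $\coalesce(L)$''. I would also flag why the hypothesis $\size(L) > 1$ is needed: when $\size(L) = 1$ the empty layout $():()$ has complexity $0$ while $\coalesce(L) = 1:0$ has complexity $1$, so the complexity-minimizer is not $\coalesce(L)$; the restriction $\size(L) > 1$ is precisely what rules out this boundary case and guarantees $m \geq 1$, so that no further edge cases arise.
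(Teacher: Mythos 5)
Your proof is correct and follows essentially the same route as the paper's: bound $\len(M)$ from below by the rank-minimality of $\coalesce^\flat$, bound $\depth(M)$ separately, and case-split to handle the depth-$0$ versus flat form of $\coalesce(L)$. Your version is somewhat more explicit about the equality analysis and about why $\size(L)>1$ rules out the $1{:}0$ versus $(){:}()$ ambiguity, but the underlying argument is the same.
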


\begin{proof}
Suppose $L'$ is a layout with the same layout function as $L$, and suppose $\coalesce(L') \neq 1:0$. Then
\[
\len(L') \geq \len(\coalesce(L')) = \len(\coalesce(L)). 
\]
There are two cases to consider.
\begin{itemize}
    \item (Case 1): Suppose $\len(L') > 1$. Then $\depth(L') \geq 1 \geq \depth(\coalesce(L))$. Combining these inequalities, we observe that 
    \[
    \complexity(L') \geq \complexity(\coalesce(L)),
    \]
    where equality holds if and only if $L' = \coalesce(L') = \coalesce(L)$.
    \item (Case 2): Suppose $\len(L') = 1$. Then $L' = (s):(d)$ or $L' = s:d$ for some integers $s>1$ and $d \geq 0$. In either case, we have $\coal(L') = s:d$, and 
    \[
    \complexity(L') \geq \complexity(\coalesce(L)),
    \]
    where equality holds if and only if $L' = s:d = \coalesce(L)$. 
\end{itemize}
\end{proof}

\begin{remark}
    The only reason that we need to exclude the case $\size(L) = 1$ is that if $\size(L) = 1$, then  $1:0$ and the empty layout $():()$ are distinct layouts with minimal complexity, and the same layout function as $L$ (namely the trivial layout function $0 \mapsto 0$).
\end{remark}

\subsection{Relative coalesce}
There is an important invariant of coalesce called {\it relative coalesce}, denoted $\coalesce(L,\bar{S})$. This operation receives as an additional input a nested tuple $\bar{S}$ which is refined by $\shape(L)$. In this case, the relative coalesce operation simplifies the layout $L$ has much as possible, while ensuring that the resulting shape still refines $\bar{S}$.

\begin{definition}
    Suppose $L = S:D$ is a layout, and suppose $\bar{S}$ is some nested tuple of length $m$ which is refined by $S$. Recall that for any $1 \leq i \leq m$, we may consider the $i$th mode of $S$ relative to $\bar{S}$, denoted
    \[
    \mode_i(S,\bar{S}).
    \]
    Since $S$ and $D$ are congruent, there is a nested tuple
    \[
    \mode_i(D,\bar{S})
    \]
    corresponding to $\mode_i(S,\bar{S})$, and we define the $i${\it th mode of} $L$ {\it relative to} $\bar{S}$ to be the layout
    \[
    \mode_i(L,\bar{S}) = \mode_i(S,\bar{S}):\mode_i(D,\bar{S}).
    \]
\end{definition}

\begin{example}
    If $\bar{S} = (4,(9,25))$ and 
    \[
    L = ((2,2),((3,3),(5,(1,5)))):((1,2),((6,18),(90,(0,450))))
    \]
    then 
    \begin{align*}
    \mode_1(L,\bar{S}) & = (2,2):(1,2)\\
    \mode_2(L,\bar{S}) & = (3,3):(6,18)\\
    \mode_3(L,\bar{S}) & = (5,(1,5)):(90,(0,450)).
    \end{align*}
\end{example}

\begin{observation}
    Suppose $L = S:D$ is a layout, and suppose $\bar{S}$ is a nested tuple of length $m$ and profile $P$ which is refined by $S$. If for any $1 \leq i \leq m$, we write 
    \[
    L_i = \mode_i(L,\bar{S}),
    \]
    then 
    \[
    L =(L_1,\dots,L_m)_P
    \]
    is the $P$-substitution of its relative modes.
\end{observation}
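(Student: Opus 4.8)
The plan is to separate the shape and stride parts of $L$ and reduce everything to a single claim about nested tuples, which is then proved by induction on depth. The claim is: \emph{if $X$ is a nested tuple refining a nested tuple $\bar{S}$ of length $m$ and profile $P$, then $X = (\mode_1(X,\bar{S}),\dots,\mode_m(X,\bar{S}))_P$.} This is the combinatorial heart of the statement, and once it is available the Observation follows by applying it to $\shape(L)$ and transporting the result to $\stride(L)$ via congruence.

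I would prove the claim by induction on $\depth(\bar{S})$. For the base case $\depth(\bar{S}) = 0$, the nested tuple $\bar{S}$ is a positive integer $N$, so $P = *$ and $m = 1$; here the unique relative mode is $\mode_1(X,N) = X$ and $(X)_* = X$ by the definition of $*$-substitution, so the claim holds. For the inductive step, write $\bar{S} = (\bar{S}_1,\dots,\bar{S}_r)$ with $r = \rank(\bar{S})$ and $P = (P_1,\dots,P_r)$. Since $X$ refines $\bar{S}$, we have $\depth(X) > 0$, $\rank(X) = r$, and $\mode_j(X)$ refines $\bar{S}_j$ for each $j$, so $X = (\mode_1(X),\dots,\mode_r(X))$. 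Applying the inductive hypothesis to each $\mode_j(X)$ expresses it as the $P_j$-substitution of its relative modes $\mode_1(\mode_j(X),\bar{S}_j),\dots,\mode_{m_j}(\mode_j(X),\bar{S}_j)$, where $m_j = \len(\bar{S}_j)$. By the defining recursion for relative modes these are precisely the relative modes $\mode_{\len_{<j}(\bar{S})+1}(X,\bar{S}),\dots,\mode_{\len_{\leq j}(\bar{S})}(X,\bar{S})$ of $X$ with respect to $\bar{S}$; substituting back into $X = (\mode_1(X),\dots,\mode_r(X))$ and comparing with the recursive clause of Definition \ref{definitionofPconcatenation}, which groups the $m = m_1 + \cdots + m_r$ inputs into the blocks governed by $P_1,\dots,P_r$, yields exactly $X = (\mode_1(X,\bar{S}),\dots,\mode_m(X,\bar{S}))_P$.

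Granting the claim, I finish as follows. Since $S := \shape(L)$ refines $\bar{S}$, the claim gives $S = (\mode_1(S,\bar{S}),\dots,\mode_m(S,\bar{S}))_P = (\shape(L_1),\dots,\shape(L_m))_P$. For the stride tuple $D := \stride(L)$, recall that $\mode_i(D,\bar{S})$ is by construction the nested tuple which has the same profile as $\mode_i(S,\bar{S})$ and whose entries are the entries of $D$ sitting in the positions that $\mode_i(S,\bar{S})$ occupies in $S$; hence $\profile(\mode_i(D,\bar{S})) = \profile(\mode_i(S,\bar{S}))$ and $D^\flat = \mode_1(D,\bar{S})^\flat \star \cdots \star \mode_m(D,\bar{S})^\flat$. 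Combining this with the profile identity $\profile(D) = \profile(S) = (\profile(\mode_1(S,\bar{S})),\dots,\profile(\mode_m(S,\bar{S})))_P$, which comes from the claim together with the congruence of $S$ and $D$, the nested tuple $(\mode_1(D,\bar{S}),\dots,\mode_m(D,\bar{S}))_P$ has flattening $D^\flat$ and profile $\profile(D)$, so it equals $D$ by the uniqueness of a nested tuple with prescribed flattening and profile (Observation \ref{nesttuplepullbacksquare}). Therefore
\[
L = S:D = (\shape(L_1),\dots,\shape(L_m))_P : (\stride(L_1),\dots,\stride(L_m))_P = (L_1,\dots,L_m)_P
\]
by the definition of the $P$-substitution of layouts.

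The main obstacle is the index bookkeeping in the inductive step of the claim — aligning the consecutive ranges $\len_{<j}(\bar{S}) < i \leq \len_{\leq j}(\bar{S})$ of relative modes with the groupings in the recursive clause of $P$-substitution — together with making precise the flattening and profile of $\mode_i(D,\bar{S})$ extracted from the congruence-based construction. Both are purely mechanical, and the base case and the final assembly are entirely formal.
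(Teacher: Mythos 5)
Your proof is correct. The paper states this as an \emph{Observation} with no accompanying proof, so there is nothing to compare against; your induction on $\depth(\bar{S})$ for the shape, followed by the transport to the stride via congruence and the uniqueness of a nested tuple with prescribed flattening and profile, is a complete and accurate justification of what the paper treats as immediate from the definitions of relative modes and $P$-substitution.
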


\begin{definition}\label{definitionofcoalescedoverSbar}
    Suppose $L = S:D$ is a layout, and suppose $\bar{S}$ is a nested tuple of length $m$ and profile $P$ which is refined by $S$. We say $L$ is {\it coalesced over }$\bar{S}$ if each relative mode 
    \[
    \mode_i(L,\bar{S})
    \]
    is coalesced.
\end{definition}

\begin{observation}\label{coalescedoverSbarimpliesnondegenerate}
    In the setting of Definition \ref{definitionofcoalescedoverSbar}, if $L$ is coalesced over $\bar{S}$, then $L$ is non-degenerate.
\end{observation}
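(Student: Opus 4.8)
The plan is to deduce the statement from the already-recorded fact that a coalesced layout is non-degenerate, together with the compatibility of flattening with $P$-substitution.

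First I would recall (Definition~\ref{definitionofnondegeneratelayout}) that $L$ is non-degenerate precisely when $\entry_i(\shape(L)) = 1$ implies $\entry_i(\stride(L)) = 0$ for every $1 \le i \le \len(L)$. Since this is a condition on the individual entries of $L^\flat$, it suffices to verify it entry by entry, and any way of exhibiting each entry of $L$ as an entry of a non-degenerate layout will finish the argument.

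Next, let $P = \profile(\bar S)$ and write $L_i = \mode_i(L,\bar S)$ for $1 \le i \le m$. By the observation preceding Definition~\ref{definitionofcoalescedoverSbar}, we have $L = (L_1,\dots,L_m)_P$. Unwinding the definition of $P$-substitution of layouts (it is the $P$-substitution of the shapes and of the strides) together with the fact that the flattening of a $P$-substitution of nested tuples is the concatenation of the flattenings, we get $L^\flat = L_1^\flat \star \cdots \star L_m^\flat$. In particular every entry $\entry_j(L)$ coincides with some entry $\entry_k(L_i)$ of a relative mode, and the shape and stride entries match up under this identification. Since, by hypothesis, each $L_i$ is coalesced, the observation following Definition~\ref{definitionofnestedlayoutcoalesce} gives that each $L_i$ is non-degenerate. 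Hence if $\entry_j(\shape(L)) = 1$, then the corresponding $\entry_k(\shape(L_i)) = 1$, so $\entry_k(\stride(L_i)) = 0$, and therefore $\entry_j(\stride(L)) = 0$; as $j$ was arbitrary, $L$ is non-degenerate.

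I do not expect any real obstacle here: the only thing to be careful about is the bookkeeping that identifies entries of $L$ with entries of the relative modes $L_i$, and this is immediate from the definitions of $P$-substitution for nested tuples and for layouts. The statement is essentially a corollary of ``coalesced $\Rightarrow$ non-degenerate'' plus the fact that non-degeneracy is inherited under (nested) concatenation.
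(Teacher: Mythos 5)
Your argument is correct and is exactly the reasoning the paper leaves implicit (the statement is recorded only as an Observation with no written proof): each entry of $L$ is an entry of some relative mode $L_i = \mode_i(L,\bar S)$ via $L^\flat = L_1^\flat \star \cdots \star L_m^\flat$, each $L_i$ is coalesced hence non-degenerate by the observation following Definition~\ref{definitionofnestedlayoutcoalesce}, and non-degeneracy is an entry-wise condition. Nothing is missing.
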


\begin{example}
    If $L$ is a layout, then $L$ is coalesced over $\shape(L)$ if and only if $L$ is non-degenerate, i.e.
    \[
    \entry_i(\shape(L)) =1 \quad \Rightarrow \quad \entry_i(\stride(L)) = 0.
    \]
\end{example}

\begin{definition}[Relative coalesce]\label{definitionofrelativecoalesce}
    Suppose $L = S:D$ is a layout, and suppose $ \bar{S}$ is a nested tuple of length $m$ and profile $P$ which is refined by $S$. We define 
    \[
    \coalesce(L,\bar{S}) = (\coalesce(L_1),\dots,\coalesce(L_m))_P.
    \]
\end{definition}

\begin{remark}
     In the setting of Definition \ref{definitionofrelativecoalesce}, the shape of $\coalesce(L,\bar{S})$ refines $\bar{S}$. 
\end{remark}

\begin{lemma}\label{relativecoalescepreserveslayoutfunction}
    If $L = S:D$ is a layout and $S$ refines $\bar{S}$, then 
    \[
    \Phi_{\coalesce(L,\bar{S})} = \Phi_L.
    \]
\end{lemma}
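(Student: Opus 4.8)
The plan is to reduce the claim to the flat case and then invoke the behavior of layout functions under concatenation. Write $P = \profile(\bar{S})$ and $m = \len(\bar{S})$, and set $L_i = \mode_i(L,\bar{S})$ for $1 \le i \le m$. By the observation following the definition of relative modes, $L = (L_1,\dots,L_m)_P$, and by Definition \ref{definitionofrelativecoalesce} we have $\coalesce(L,\bar{S}) = (\coalesce(L_1),\dots,\coalesce(L_m))_P$. The first step is to record that the flattening of a $P$-substitution of layouts is the flat concatenation of the flattenings: $(M_1,\dots,M_m)_P^\flat = M_1^\flat \star \cdots \star M_m^\flat$, which follows immediately from the definition of $Q$-substitution of nested tuples (whose flattening is the concatenation of the flattenings) applied separately to shapes and strides. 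Hence $\Phi_L = \Phi_{L^\flat} = \Phi_{L_1^\flat \star \cdots \star L_m^\flat}$ and $\Phi_{\coalesce(L,\bar{S})} = \Phi_{\coalesce(L_1)^\flat \star \cdots \star \coalesce(L_m)^\flat}$.

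Next I would verify that for each $i$ the nested coalesce $\coalesce(L_i)$ has the same size and the same layout function as $L_i$. Examining the three cases of Construction \ref{constructionoflayoutcoalesce} and using Lemma \ref{coalescelemma} together with the fact that $\coalesce^\flat$ preserves size, one checks that in every case $\size(\coalesce(L_i)) = \size(\coalesce^\flat(L_i^\flat)) = \size(L_i)$ and $\Phi_{\coalesce(L_i)} = \Phi_{\coalesce^\flat(L_i^\flat)} = \Phi_{L_i^\flat} = \Phi_{L_i}$. The only subtlety is the degenerate case where $\coalesce^\flat(L_i^\flat)$ has rank $0$: there $\coalesce(L_i) = 1{:}0$ and the original mode has size $1$, so both layout functions are the trivial map $0 \mapsto 0$.

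Finally I would apply Proposition \ref{layoutfunctionofconcatenatedlayout}: the layout function of a flat concatenation $M_1 \star \cdots \star M_m$ is the composite of $\colex_{(N_1,\dots,N_m)}^{-1}$ with $\Phi_{M_1} + \cdots + \Phi_{M_m}$, where $N_i = \size(M_i)$. Since $L_i^\flat$ and $\coalesce(L_i)^\flat$ have equal sizes and equal layout functions for every $i$, the two composites coincide, so $\Phi_{L_1^\flat \star \cdots \star L_m^\flat} = \Phi_{\coalesce(L_1)^\flat \star \cdots \star \coalesce(L_m)^\flat}$, i.e.\ $\Phi_L = \Phi_{\coalesce(L,\bar{S})}$.

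I expect the main obstacle to be purely bookkeeping rather than conceptual: making precise the interaction of flattening with $P$-substitution, and carefully discharging the edge cases ($m = 0$ and rank-$1$ outputs) in the definition of nested coalesce. Once Proposition \ref{layoutfunctionofconcatenatedlayout} and Lemma \ref{coalescelemma} are in hand, nothing deep remains.
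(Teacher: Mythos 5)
Your proof is correct and takes essentially the same route as the paper: decompose $L$ into its modes relative to $\bar{S}$, observe that coalescing each mode preserves its layout function (and size), and conclude that the layout function of the whole is unchanged. The only cosmetic difference is that you invoke Proposition \ref{layoutfunctionofconcatenatedlayout} directly where the paper chains through $\coalesce$ of the concatenations, and your explicit check of the size-preservation and the rank-$0$/rank-$1$ edge cases is if anything slightly more careful than the paper's.
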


\begin{proof}
    As above, let 
    \[
    L_i = \mode_i(L,S)
    \]
    denote the $i$th mode of $L$ relative to $S$, and set $\bar{L}_i = \coalesce(L_i)$. Then 
    \begin{align*}
    \Phi_{\coalesce(L,\bar{S})} & = \Phi_{(\bar{L}_1,\dots,\bar{L}_m)_{\bar{S}}}\\
    & = \Phi_{(\bar{L}_1,\dots,\bar{L}_m)}\\
    & = \Phi_{\coalesce((\bar{L}_1,\dots,\bar{L}_m))}\\
    & = \Phi_{\coalesce((L_1,\dots,L_m))}\\
    & = \Phi_{(L_1,\dots,L_m)}\\
    & = \Phi_{(L_1,\dots,L_m)_{\bar{S}}}\\
    & = \Phi_L.
    \end{align*}
\end{proof}

\begin{proposition}\label{relativecoalescedlayoutscharacterizedbylayoutfunction}
    Suppose $A$ and $B$ are layouts, and suppose $\bar{S}$ is a nested tuple of length $m$ such that $\shape(A)$ refines $\bar{S}$ and $\shape(B)$ refines $\bar{S}$. Then
    \[\Phi_A = \Phi_B \quad \Leftrightarrow \quad 
    \coalesce(A,\bar{S}) = \coalesce(B,\bar{S})
    \]
\end{proposition}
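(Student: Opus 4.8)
The plan is to prove the two implications separately. The reverse implication is immediate: if $\coalesce(A,\bar{S}) = \coalesce(B,\bar{S})$, then applying Lemma \ref{relativecoalescepreserveslayoutfunction} to each side gives $\Phi_A = \Phi_{\coalesce(A,\bar{S})} = \Phi_{\coalesce(B,\bar{S})} = \Phi_B$. So the content is entirely in the forward implication, where I would assume $\Phi_A = \Phi_B$ and deduce equality of the relative coalesces.

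For the forward implication I would set $P = \profile(\bar{S})$ and, for each $1 \le i \le m$, write $A_i = \mode_i(A,\bar{S})$ and $B_i = \mode_i(B,\bar{S})$, so that $A = (A_1,\dots,A_m)_P$, $B = (B_1,\dots,B_m)_P$, and by Definition \ref{definitionofrelativecoalesce} we have $\coalesce(A,\bar{S}) = (\coalesce(A_1),\dots,\coalesce(A_m))_P$ and likewise for $B$. Since these are the same $P$-substitution applied to the respective lists of coalesced relative modes, it suffices to prove $\coalesce(A_i) = \coalesce(B_i)$ for every $i$, and by Proposition \ref{nestedcoalesceproposition} this reduces further to proving $\Phi_{A_i} = \Phi_{B_i}$ for every $i$.

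To extract the individual layout functions from $\Phi_A = \Phi_B$, I would first record a bookkeeping point: each relative mode has a size dictated by $\bar{S}$ alone. Indeed, because $\shape(A)$ refines $\bar{S}$, its $i$th relative mode refines $\entry_i(\bar{S})$ and hence has size $\entry_i(\bar{S})$; the same holds for $B$, so $A_i$ and $B_i$ both have size $N_i = \entry_i(\bar{S})$, and in particular $\Phi_{A_i},\Phi_{B_i}$ share domain $[0,N_i)$. Next, since $P$-substitution leaves the flattening unchanged, $A^\flat = A_1^\flat \star \cdots \star A_m^\flat$, so by Proposition \ref{layoutfunctionofconcatenatedlayout} the function $\Phi_A = \Phi_{A^\flat}$ is the composite of $\colex^{-1}_{(N_1,\dots,N_m)}$ with $\Phi_{A_1} + \cdots + \Phi_{A_m}$, and the identical description holds for $\Phi_B$ with the same tuple $(N_1,\dots,N_m)$. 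Fixing $i$ and $y \in [0,N_i)$ and taking $x = \colex_{(N_1,\dots,N_m)}(0,\dots,0,y,0,\dots,0)$ with $y$ in slot $i$, the identity $\Phi_L(0) = 0$ valid for every layout collapses both sums to a single term, so $\Phi_{A_i}(y) = \Phi_A(x) = \Phi_B(x) = \Phi_{B_i}(y)$; letting $y$ vary gives $\Phi_{A_i} = \Phi_{B_i}$, which finishes the argument.

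The step I expect to require the most care is the passage from Proposition \ref{layoutfunctionofconcatenatedlayout} — stated for flat layouts and the flat concatenation $\star$ — to the nested layouts $A$ and $B$, which rests on the identities $\Phi_L = \Phi_{L^\flat}$ and $\bigl((A_1,\dots,A_m)_P\bigr)^\flat = A_1^\flat \star \cdots \star A_m^\flat$, together with the accompanying observation that any layout whose shape refines $\bar{S}$ has $i$th relative mode of size exactly $\entry_i(\bar{S})$, so that $A$ and $B$ produce one and the same splitting tuple $(N_1,\dots,N_m)$ to which the concatenation formula can be applied in parallel. Neither point is deep, but both are needed for the application of the formula to be legitimate.
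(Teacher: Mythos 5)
Your proposal is correct and follows essentially the same route as the paper's proof: the easy direction via Lemma \ref{relativecoalescepreserveslayoutfunction}, then reduction to $\coalesce(A_i) = \coalesce(B_i)$ for the relative modes, which via Proposition \ref{nestedcoalesceproposition} reduces to $\Phi_{A_i} = \Phi_{B_i}$, extracted by restricting the colex-decomposed layout function to the $i$th coordinate slice. Your explicit justification that $\size(A_i) = \size(B_i) = \entry_i(\bar{S})$ and your use of $\Phi_L(0)=0$ to collapse the sum merely spell out steps the paper leaves implicit.
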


\begin{proof}
    If $\coalesce(A,\bar{S}) = \coalesce(B,\bar{S})$, the using 
    Lemma \ref{relativecoalescepreserveslayoutfunction}, we have 
    \[
    \Phi_A = \Phi_{\coalesce(A,\bar{S})} = \Phi_{\coalesce(B,\bar{S})} = \Phi_B.
    \]
    Conversely, suppose that $\Phi_A = \Phi_B$. We will argue that $\coalesce(A,\bar{S}) = \coalesce(B,\bar{S})$. Set $P = \profile(\bar{S})$, and for any $1 \leq i \leq m$, set
    \begin{align*}
    A_i & = \mode_i(A,\bar{S})\\
    B_i & = \mode_i(B,\bar{S}).
    \end{align*}
    Since 
    \[
    \coalesce(A,\bar{S}) =(\coalesce(A_1),\dots,\coalesce(A_m))_P
    \]
    and
    \[
    \coalesce(B,\bar{S}) = (\coalesce(B_1),\dots,\coalesce(B_m))_P
    \]
    it suffices to prove that $\coalesce(A_i) = \coalesce(B_i)$ for all $1 \leq i \leq m$. By the associativity of colexicographic isomorphisms, we can write the layout function $\Phi_A$ of $A$ as 
    \[
    \begin{tikzcd} 
    {[}0,\size(A){)} \ar[rr,"\colex^{-1}"] & & \prod_{j=1}^m {[}0,\size(A_j) {)} \ar[rr,"\prod \Phi_{A_j}"] & & \prod_{j=1}^m \mathbb{Z} \ar[rr,"+"] & & \mathbb{Z} 
    \end{tikzcd} 
    \]
    and we can write the layout function $\Phi_B$ of $B$ as 
    \[
    \begin{tikzcd} 
    {[}0,\size(B){)} \ar[rr,"\colex^{-1}"] & & \prod_{j=1}^m {[}0,\size(B_j) {)} \ar[rr,"\prod \Phi_{B_j}"] & & \prod_{j=1}^m \mathbb{Z} \ar[rr,"+"] & & \mathbb{Z} 
    \end{tikzcd} 
    \]
    For a fixed $1 \leq i \leq m$, consider the subset
    \[
    [0,\size(A_i)) \subset \prod_{j=1}^m [0,\size(A_j))
    \]
    and its image 
    \[
    \colex([0,\size(A_i))) \subset [0,\size(A)). 
    \]
    Since $\size(A_j) = \size(B_j)$ for all $1 \leq j \leq m$, this is the same as the image 
    \[
    \colex([0,\size(B_j))) \subset [0,\size(B)) = [0,\size(B)).
    \]
    The restriction of $\Phi_A$ to this subset is $\Phi_{A_i}$, and the restriction of $B$ to this subset is $\Phi_{B_i}$, so it follows that $\Phi_{A_i} = \Phi_{B_i}$, and by Proposition \ref{nestedcoalesceproposition}, we have $\coalesce(A_i) = \coalesce(B_i)$. We deduce that 
    \[
    \coalesce(A,\bar{S}) = \coalesce(B,\bar{S}),
    \]
    as desired. 
\end{proof}


\subsection{Compact layouts}
We can easily extend the concept of compact layouts to the nested case. Again, in terms of the standard grid diagrams depicting layouts, a layout $L$ is compact if each integer $0 \leq i < \size(L)$ appears exactly once. More preciesly, we have the following definition.

\begin{definition}\label{definitionofcompactnestedlayout}
Suppose $L$ is a layout. We say $L$ is {\it compact} if the layout function 
\[
\Phi_L^{\cosize(L)} : [0,\size(L)) \to [0,\cosize(L))
\]
is an isomorphism. 
\end{definition}

\begin{example} The layout
\[
\begin{tikzpicture}[x={(0cm,-1cm)},y={(1cm,0cm)},every node/.style={minimum size=1cm, outer sep=0pt}]

\node[fill=gray!20] at (0,0) {0};
\node[fill=gray!20] at (0,1) {2};
\node[fill=gray!20] at (0,2) {8};
\node[fill=gray!20] at (0,3) {10};
\node[fill=gray!20] at (1,0) {1};
\node[fill=gray!20] at (1,1) {3};
\node[fill=gray!20] at (1,2) {9};
\node[fill=gray!20] at (1,3) {11};
\node[fill=gray!20] at (2,0) {4};
\node[fill=gray!20] at (2,1) {6};
\node[fill=gray!20] at (2,2) {12};
\node[fill=gray!20] at (2,3) {14};
\node[fill=gray!20] at (3,0) {5};
\node[fill=gray!20] at (3,1) {7};
\node[fill=gray!20] at (3,2) {13};
\node[fill=gray!20] at (3,3) {15};
\draw[color=black,thick,shift={(-0.5,-0.5)}] (0,0) grid (4,4);

\node[anchor = east] at (1.5,-1) {$A = ((2,2),(2,2)):((1,4),(2,8))=$};
\end{tikzpicture}
\]
is compact, while the layouts 
\[
\begin{tikzpicture}[x={(0cm,-1cm)},y={(1cm,0cm)},every node/.style={minimum size=1cm, outer sep=0pt}]

\node[fill=gray!20] at (0,0) {0};
\node[fill=gray!20] at (0,1) {2};
\node[fill=gray!20] at (0,2) {32};
\node[fill=gray!20] at (0,3) {34};
\node[fill=gray!20] at (1,0) {1};
\node[fill=gray!20] at (1,1) {3};
\node[fill=gray!20] at (1,2) {33};
\node[fill=gray!20] at (1,3) {35};
\node[fill=gray!20] at (2,0) {4};
\node[fill=gray!20] at (2,1) {6};
\node[fill=gray!20] at (2,2) {36};
\node[fill=gray!20] at (2,3) {38};
\node[fill=gray!20] at (3,0) {5};
\node[fill=gray!20] at (3,1) {7};
\node[fill=gray!20] at (3,2) {37};
\node[fill=gray!20] at (3,3) {39};
\draw[color=black,thick,shift={(-0.5,-0.5)}] (0,0) grid (4,4);

\node[anchor = east] at (1.5,-1) {$B = ((2,2),(2,2)):((1,4),(2,32))=$};
\end{tikzpicture}
\]
and
\[
\begin{tikzpicture}[x={(0cm,-1cm)},y={(1cm,0cm)},every node/.style={minimum size=1cm, outer sep=0pt}]

\node[fill=gray!20] at (0,0) {0};
\node[fill=gray!20] at (0,1) {2};
\node[fill=gray!20] at (0,2) {0};
\node[fill=gray!20] at (0,3) {2};
\node[fill=gray!20] at (1,0) {1};
\node[fill=gray!20] at (1,1) {3};
\node[fill=gray!20] at (1,2) {1};
\node[fill=gray!20] at (1,3) {3};
\node[fill=gray!20] at (2,0) {4};
\node[fill=gray!20] at (2,1) {6};
\node[fill=gray!20] at (2,2) {4};
\node[fill=gray!20] at (2,3) {6};
\node[fill=gray!20] at (3,0) {5};
\node[fill=gray!20] at (3,1) {7};
\node[fill=gray!20] at (3,2) {5};
\node[fill=gray!20] at (3,3) {7};
\draw[color=black,thick,shift={(-0.5,-0.5)}] (0,0) grid (4,4);

\node[anchor = east] at (1.5,-1) {$C = ((2,2),(2,2)):((1,4),(2,0))=$};
\end{tikzpicture}
\]
are not compact. 
\end{example}

\begin{example}
    The following layouts are compact:
    \begin{align*}
        L_1 & = (2,(2,2)):(8,(1,4))\\
        L_2 & = ((8,1),(8,32)):((2,0),(16,128))\\
        L_3 & = 64:1
    \end{align*}
\end{example}

\begin{example}
    The layout 
    \[
    L = (2,(2,2)):(4,(8,16))
    \]
    is not compact since the integer $1 \in [0,29) = [0,\cosize(L))$ is not in the image of $\Phi_L$. More generally, if $\size(L) \neq \cosize(L)$, then $L$ is not compact. 
\end{example}

We conclude this section by listing some equivalent conditions for a layout $L$ to be compact. 
\begin{proposition} \label{equivalentconditionsfornestedcompact}
Suppose $L$ is a layout. Then the following conditions are equivalent.
\begin{enumerate}
    \item $L$ is compact.
    \item $L^\flat$ is compact.
    \item $\coalesce(L)$ is compact. 
\end{enumerate}
\end{proposition}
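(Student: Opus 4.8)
The key observation is that the layout function of a layout depends only on its flattening: by construction, $\Phi_L = \Phi_{L^\flat}$, and similarly $\cosize(L) = \cosize(L^\flat)$ and $\size(L) = \size(L^\flat)$ (the latter two because size and cosize are computed from the flattened shape and stride tuples). Therefore the map $\Phi_L^{\cosize(L)}\colon [0,\size(L)) \to [0,\cosize(L))$ is literally the same function as $\Phi_{L^\flat}^{\cosize(L^\flat)}\colon [0,\size(L^\flat)) \to [0,\cosize(L^\flat))$. Since compactness of a layout is defined to be the assertion that this map is an isomorphism, the equivalence (1) $\Leftrightarrow$ (2) is immediate. I would state this first, citing Definition \ref{definitionofcompactnestedlayout} and Definition \ref{definitionofcompact} together with the fact that $\Phi_L = \Phi_{L^\flat}$.

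For (1) $\Leftrightarrow$ (3): here I would invoke Proposition \ref{nestedcoalesceproposition}, which tells us that $\Phi_L = \Phi_{\coalesce(L)}$ (take $A = L$, $B = \coalesce(L)$ — these trivially have equal coalesce by idempotence of $\coalesce$, hence equal layout function). Once we know $\Phi_L = \Phi_{\coalesce(L)}$, we also get $\size(L) = \size(\coalesce(L))$ and $\cosize(L) = \cosize(\coalesce(L))$: the size equality holds because $\Phi_L$ and $\Phi_{\coalesce(L)}$ have the same domain $[0,\size(-))$, and the cosize equality holds because $\cosize(-) = 1 + \max(\Image(\Phi_{-}))$ and the two layout functions are equal as functions, hence have the same image. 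Consequently the map $\Phi_L^{\cosize(L)}$ and the map $\Phi_{\coalesce(L)}^{\cosize(\coalesce(L))}$ coincide, so one is an isomorphism if and only if the other is — which is exactly the statement that $L$ is compact iff $\coalesce(L)$ is compact.

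I would organize the write-up so that both equivalences reduce to the single unifying principle: \emph{if two layouts have the same layout function, then one is compact iff the other is.} After proving that principle as the crux (it is the argument in the previous paragraph about domains and images), the two bullet points follow by citing $\Phi_L = \Phi_{L^\flat}$ and $\Phi_L = \Phi_{\coalesce(L)}$ respectively. I don't anticipate a genuine obstacle here — this is a bookkeeping proof — but the one point requiring a little care is making explicit that equal layout functions force equal cosize (via $\cosize = 1 + \max \Image$) and equal size (via equal domains), so that "the same function" really does make the two compactness conditions verbatim identical rather than merely "equivalent up to reindexing."
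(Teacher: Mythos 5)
Your proposal is correct and takes essentially the same route as the paper: the paper's proof is precisely the observation that $\Phi_L = \Phi_{L^\flat} = \Phi_{\coalesce(L)}$, from which all three compactness conditions coincide. Your additional care in noting that equal layout functions force equal size and cosize (so the three maps are literally identical, not just equivalent up to reindexing) is a reasonable elaboration of a detail the paper leaves implicit.
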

\begin{proof}
    The equivalence of these conditions follows from the fact that 
    \[
    \Phi_L = \Phi_{L^\flat} = \Phi_{\coalesce(L)}. 
    \]
\end{proof}

\subsection{Complements}

We can easily extend the concept of complement to the nested case as follows.

\begin{definition}
    Suppose $A$ and $B$ are layouts. We say $B$ is a {\it complement of } $A$, and write $A \perp B$, if the concatenated layout $(A,B)$ is compact. 
\end{definition}

\begin{lemma}\label{flatteningcomplementationlemma}
    Suppose $A$ and $B$ are layouts. Then 
    \[
    A \perp B \quad \Leftrightarrow \quad A^\flat \perp B^\flat.
    \]
\end{lemma}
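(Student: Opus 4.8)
The plan is to reduce the statement about nested layouts directly to the corresponding fact about flat layouts, via the definition of complementation and the behavior of flattening under concatenation. Recall that $B$ is a complement of $A$ means precisely that the concatenated layout $(A,B)$ is compact (Definition preceding the lemma), and that $A^\flat \perp B^\flat$ means the flat concatenation $A^\flat \star B^\flat$ is compact (Definition \ref{definitionofcomplement}). So the whole statement amounts to: $(A,B)$ is compact if and only if $A^\flat \star B^\flat$ is compact.

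The key observation is that $(A,B)^\flat = A^\flat \star B^\flat$. This was noted in the remark following Definition \ref{definitionoflayoutconcatenation}: concatenation of nested layouts flattens to flat concatenation of the flattenings. (If a clean citation is wanted, one can verify it on shapes and strides separately, since $\shape((A,B)) = (\shape(A),\shape(B))$ flattens to $\shape(A)^\flat \star \shape(B)^\flat$ by the definition of the flattening of a nested tuple, and likewise for strides.) Given this identity, I would then invoke Proposition \ref{equivalentconditionsfornestedcompact}, which states that a layout $L$ is compact if and only if $L^\flat$ is compact. Applying this with $L = (A,B)$ gives: $(A,B)$ is compact $\iff$ $(A,B)^\flat = A^\flat \star B^\flat$ is compact. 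But $A^\flat \star B^\flat$ is itself a flat layout, and ``$A^\flat \star B^\flat$ is compact'' is by definition the statement $A^\flat \perp B^\flat$. Chaining these equivalences yields $A \perp B \iff A^\flat \perp B^\flat$, as desired.

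Concretely, the proof is a three-line chain of equivalences:
\[
A \perp B \quad \Leftrightarrow \quad (A,B) \text{ is compact} \quad \Leftrightarrow \quad (A,B)^\flat \text{ is compact} \quad \Leftrightarrow \quad A^\flat \star B^\flat \text{ is compact} \quad \Leftrightarrow \quad A^\flat \perp B^\flat,
\]
where the first and last equivalences are definitions, the second is Proposition \ref{equivalentconditionsfornestedcompact}, and the third is the identity $(A,B)^\flat = A^\flat \star B^\flat$.

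There is no real obstacle here; the only thing requiring any care is making sure the identity $(A,B)^\flat = A^\flat \star B^\flat$ is properly justified rather than taken for granted, since nested concatenation and flat concatenation are genuinely different operations (nested concatenation raises depth) and agree only after flattening. Everything else is bookkeeping with the definitions already in place.
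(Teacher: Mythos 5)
Your proof is correct and follows essentially the same route as the paper, which proves the lemma in one line from the observation that $(A,B)^\flat = A^\flat \star B^\flat$; you simply make explicit the intermediate appeal to Proposition \ref{equivalentconditionsfornestedcompact} that the paper leaves implicit.
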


\begin{proof}
    This follows from the observation that $(A,B)^\flat = A^\flat \star B^\flat$.
\end{proof}

\begin{definition}
    Suppose $A$ is a layout. We say $A$ is {\it complementable} if $A^\flat$ is complementable.
\end{definition}

\begin{lemma}
    Suppose $A$ is a layout. Then there exists a complement $B$ of $A$ if and only if $A$ is complementable. 
\end{lemma}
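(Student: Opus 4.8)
The plan is to reduce the statement entirely to the flat case, which was already settled in Proposition \ref{complementableproposition}, by invoking the compatibility of complementation with flattening from Lemma \ref{flatteningcomplementationlemma} together with the definition of complementability for nested layouts (a layout $A$ is complementable precisely when $A^\flat$ is). So the proof is a two-line translation in each direction.

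First I would handle the forward direction. Suppose $B$ is a complement of $A$, i.e.\ $A \perp B$. By Lemma \ref{flatteningcomplementationlemma} this is equivalent to $A^\flat \perp B^\flat$, so the flat layout $B^\flat$ is a complement of $A^\flat$. Proposition \ref{complementableproposition} then tells us $A^\flat$ is complementable, which by definition means $A$ is complementable. Conversely, suppose $A$ is complementable, so $A^\flat$ is complementable. By Lemma \ref{complementsarecomplements}, the flat layout $C = \comp^\flat(A^\flat)$ satisfies $A^\flat \perp C$. Now a flat layout is exactly a layout of depth $1$, so we may view $C$ itself as a (nested) layout $B := C$; then $B^\flat = C$ and $A^\flat \perp B^\flat$, so Lemma \ref{flatteningcomplementationlemma} gives $A \perp B$, exhibiting a complement of $A$.

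There is no real obstacle here; the whole argument is bookkeeping. The one point that merits a sentence of care is that the output of the flat complement construction is automatically a legitimate layout (it has depth $1$), so no profile data needs to be invented in order to promote it to the nested setting. If one wished to produce a complement with a prescribed profile $P$, one could instead take $B = C_P$ via Construction \ref{layoutfromflatlayoutandprofile}: this leaves $B^\flat = C$ unchanged, hence $(A,B)^\flat = A^\flat \star C$ is still compact and $A \perp B$ still holds; but for the bare existence claim the depth-$1$ representative suffices.
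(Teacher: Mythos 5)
Your proof is correct and follows essentially the same route as the paper's: both directions reduce to the flat case via flattening ($(A,B)^\flat = A^\flat \star B^\flat$, i.e.\ Lemma \ref{flatteningcomplementationlemma}) and then invoke Proposition \ref{complementableproposition} (equivalently, the explicit witness $\comp^\flat(A^\flat)$ from Lemma \ref{complementsarecomplements}). Your extra remark that the depth-$1$ witness needs no profile data is a fine clarification but not a departure from the paper's argument.
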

\begin{proof}
    If $A$ is complementable, then $A^\flat$ is complementable, so there exists a flat layout $B$ such that the flat concatenation $A^\flat \star B$ is compact. It follows that the concatenation $(A , B)$ is also compact, so $A$ admits a complement. Conversely, suppose there exists a layout $B$ such that $(A,B)$ is compact. Then $B^\flat$ is a complement of $A^\flat$, so by Proposition \ref{complementableproposition}, $A^\flat$ is complementable, hence, by definition, so is $A$.
\end{proof}

\begin{definition}\label{definitionoflayoutcomplements}
    Suppose $A$ is a layout. If $A$ is complementable, then we define 
    \[
    \comp(A) = \coalesce(\comp^\flat(A^\flat)),
    \]
    as in Construction \ref{complementconstruction}. If $N$ is a positive integer and $A$ is $N$-complementable, then we define 
    \[
    \comp(A,N) = \coalesce(\comp^\flat(A^\flat,N))
    \]
    as in Construction \ref{Ncomplementconstruction}.
\end{definition}
\begin{remark}
    Suppose $A$ is a complementable layout. Then we almost always have $\comp(A) = \comp^\flat(A^\flat)$. More precisely, if $\comp^\flat(A^\flat)$ has length $>1$, then 
    \[
    \comp(A) = \comp^\flat(A^\flat),
    \]
    if $\comp^\flat(A^\flat) = (s):(d)$ has length $1$, then 
    \[
    \comp(A) = s:d,
    \]
    and if $\comp^\flat(A^\flat) = ():()$, then 
    \[
    \comp(A) = 1:0.
    \]
\end{remark}

\begin{definition}
    Suppose $A$ is a layout and $N$ is a positive integer. We say a layout $B$ is a $N${\it -complement} of $A$ if $A \perp B$, and 
    \[
    \size(A) \cdot \size(B) = N.
    \]
\end{definition}

\begin{definition}
    Suppose $A$ is a layout and $N$ is a positive integer. We say $A$ is $N${\it -complementable} if the flat layout $A^\flat$ is $N$-complementable, as in Definition \ref{definitionofNcomplementable}.
\end{definition}

\begin{proposition}
    Suppose $A$ is a layout. Then there exists a $N$-complement of $A$ if and only if $A$ is $N$-complementable. 
\end{proposition}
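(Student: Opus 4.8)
The plan is to reduce the statement entirely to the flat case, which has already been settled in Proposition \ref{characterizationofexistenceofNcomplements}. The two tools needed are Lemma \ref{flatteningcomplementationlemma}, which says $A \perp B \Leftrightarrow A^\flat \perp B^\flat$, and the elementary fact that flattening preserves size, i.e. $\size(L) = \size(L^\flat)$ for any layout $L$ (immediate from the definition of the size of a nested tuple as the size of its flattening).

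First I would handle the forward direction. Suppose $B$ is an $N$-complement of $A$, so that $A \perp B$ and $\size(A)\cdot\size(B) = N$. Applying Lemma \ref{flatteningcomplementationlemma} gives $A^\flat \perp B^\flat$, and since $\size(A^\flat)\cdot\size(B^\flat) = \size(A)\cdot\size(B) = N$, the flat layout $B^\flat$ is an $N$-complement of $A^\flat$. By Proposition \ref{characterizationofexistenceofNcomplements}, $A^\flat$ is $N$-complementable, which is by definition exactly the assertion that $A$ is $N$-complementable.

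For the converse, suppose $A$ is $N$-complementable, i.e. $A^\flat$ is $N$-complementable. Then Proposition \ref{characterizationofexistenceofNcomplements} (equivalently Lemma \ref{Ncomplementsarecomplements}) produces a flat layout $B = \comp^\flat(A^\flat,N)$ which is an $N$-complement of $A^\flat$; thus $A^\flat \perp B$ and $\size(A^\flat)\cdot\size(B) = N$. Since $B$ is a flat layout we have $B = B^\flat$, so $A^\flat \perp B^\flat$, and Lemma \ref{flatteningcomplementationlemma} yields $A \perp B$. Finally $\size(A)\cdot\size(B) = \size(A^\flat)\cdot\size(B) = N$, so $B$ is an $N$-complement of $A$, completing the proof.

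There is essentially no obstacle here: the content was already extracted in the flat setting, and the nested statement follows by transport along the flattening map. The only points requiring a line of care are verifying that size is invariant under flattening and that a flat layout is its own flattening, both of which are immediate from the definitions; one might also remark that $\comp(A,N)$ of Definition \ref{definitionoflayoutcomplements} furnishes an explicit witness.
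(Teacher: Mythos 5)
Your proof is correct and follows essentially the same route as the paper: both directions reduce to the flat case via flattening (using that $(A,B)^\flat = A^\flat \star B^\flat$ and that size is preserved) and then invoke Proposition \ref{characterizationofexistenceofNcomplements}. The paper's version is just terser, taking $\comp(A,N)$ as the explicit witness in the converse direction, exactly as you note at the end.
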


\begin{proof}
If $B$ is a $N$-complement of $A$, then $B^\flat$ is a $N$-complement $A^\flat$, and so by Proposition \ref{characterizationofexistenceofNcomplements}, $A^\flat$ is $N$-complementable, hence, so is $A$. Conversely, if $A$ is $N$-complementable, then $\comp(A,N)$ is a $N$-complement of $A$. 
\end{proof}

\begin{example}
If $A = ((4,2),(2,2)) : ((3,24),(192,96)) $ and $N = 768$ then
\[\comp(A,N) = (3,2,2,2):(1,12,48,384).\]
\end{example}

\begin{example}
    If $A = ((16,4),64):((1,16),64)$ and $N = 4096$ then
    \begin{align*}
    \comp(A,N) & = \coalesce(():()) \\
    & = 1:0.
    \end{align*}
\end{example}

\begin{example}
        If $A = ((16,4),64):((1,16),64)$ and $N = 8192$ then
    \begin{align*}
    \comp(A,N) & = \coalesce((2):(4096)) \\
    & = 2:4096.
    \end{align*}
\end{example}

\begin{example}
    If $A = ((16,4),64):((8,1),128)$ and $N = 16384$, then 
    \begin{align*}
    \comp(A,N) & = \coalesce((2,2):(4,8192)) \\
    & = (2,2):(4,8192).
    \end{align*}
\end{example}

\subsection{Composition}

In this section, we discuss the most important operation on layouts, namely {\it composition}. If $A$ and $B$ are layouts, then the composition of $A$ and $B$ is a layout $B \circ A$ whose layout function is the composite of the layout functions of $A$ and $B$. More precisely, we have the following definition.

\begin{definition}[Composition of layouts]\label{definitionoflayoutcomposition}
    Suppose $A$ and $B$ are layouts. The {\it composite} of $A$ and $B$ is the unique layout $B \circ A$ satisfying the following properties. 
    \begin{enumerate}
    \item $\shape(B\circ A)$ refines $\shape(A)$,
    \item $B \circ A$ is coalesced over $\shape(A)$, and 
    \item $\Phi_{B \circ A} = \Phi_B \circ \Phi_A^{\size(B)}$.
    \end{enumerate}
\end{definition}

\begin{remark}
    In order for $B \circ A$ to exist, we must have 
    \[
    \Image(\Phi_A) \subseteq [0,\size(B)).
    \]
\end{remark}

\begin{remark}
    There is an implicit assertion in the definition of layout composition, namely that there is at most one layout satisfying the three conditions. This is justified by Proposition \ref{relativecoalescedlayoutscharacterizedbylayoutfunction}. We might define a {\it weak composite} of $A$ and $B$ to be a layout $C$ satisfying conditions 1. and 3. (but not necessarily 2.), in which case 
    \[
    B \circ A = \coalesce(C,\shape(A))
    \]
    We will see later on that when attempting to compute compositions of layouts, it is useful to compute any weak composite $C$ of $A$ and $B$, then coalesce over $\shape(A)$ to form the actual composite $B \circ A$. 
\end{remark}

\begin{remark}
Note that, by Observation \ref{coalescedoverSbarimpliesnondegenerate}, condition 2. in the definition of composition implies that $B \circ A$ is non-degenerate.
\end{remark}

\begin{example}\label{compositionexample1}
If $A = (3,5):(10,2)$ and $B = (100):(7)$, then 
\[
B \circ A = (3,5):(70,14).
\]
\end{example}

\begin{example}\label{compositionexample2}
    If $A = (4):(2)$ and $B = (2,2,6):(12,6,1)$, then the composition of $A$ and $B$ is 
    \[
    B \circ A = ((2,2)):((6,1)).
    \]
\end{example}

\begin{remark}
    Example \ref{compositionexample2} illustrates the fact that the composition of flat layouts $A$ and $B$ need not be flat.
\end{remark}

\begin{example}
    If $A = ((2,4),8):((4,8),8)$ and $B = (4,4,4,4):(2,4,8,16)$, then 
    \[
    B \circ A = ( (2,(2,2)),(2,4) ): ( (4,(8,8)) , (8,8) ).
    \]
\end{example}

\begin{example}
    If $A = ((3,(2,2)),24):((3,(9,18)),72)$ and $B = (9,8,3,8):(24,3,1,384) $ then
    \[
    B \circ A = ((3,(2,2)),(3,8) ) : ((72,(3,6)),(1,384) ) 
    \]
\end{example}

Next, we develop some useful properties for computing the composition of layouts. 

\begin{proposition}\label{postcomposewithextension}
    Suppose $A$ is a layout, and suppose $B$ and $\tilde{B}$ are layouts such that 
    \begin{itemize}
        \item $\size(B) \leq \size(\tilde{B})$, and 
        \item $\Phi_{\tilde{B}} \mid_{\size(B)} = \Phi_B$.
    \end{itemize}
    If $A$ and $B$ are composable, then 
    \[
    B \circ A = \tilde{B} \circ A.
    \]
\end{proposition}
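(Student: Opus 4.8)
The plan is to show that the layout $C := B \circ A$ already satisfies the three defining properties of $\tilde{B} \circ A$ from Definition \ref{definitionoflayoutcomposition}, so that by the uniqueness clause in that definition (justified by Proposition \ref{relativecoalescedlayoutscharacterizedbylayoutfunction}) we may conclude $\tilde{B} \circ A = C = B \circ A$. In particular this also handles the implicit existence assertion: exhibiting such a $C$ is exactly what it means for $\tilde{B} \circ A$ to exist and equal $C$.

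First I would record what composability of $A$ and $B$ buys us. Since $B \circ A$ exists, the remark following Definition \ref{definitionoflayoutcomposition} gives $\Image(\Phi_A) \subseteq [0,\size(B))$, equivalently $\cosize(A) \leq \size(B)$; combined with the hypothesis $\size(B) \leq \size(\tilde{B})$ we also get $\cosize(A) \leq \size(\tilde{B})$, so both factored maps $\Phi_A^{\size(B)}$ and $\Phi_A^{\size(\tilde{B})}$ are defined, and each coincides with $\Phi_A$ as a map into $\mathbb{Z}$ (they differ only in the declared target). Setting $C = B \circ A$, conditions 1 and 2 of Definition \ref{definitionoflayoutcomposition} for the pair $(A,\tilde{B})$ — that $\shape(C)$ refines $\shape(A)$ and that $C$ is coalesced over $\shape(A)$ — mention only $A$, hence hold for $C$ because they already hold for $C = B \circ A$.

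The substantive step is condition 3: I must verify $\Phi_C = \Phi_{\tilde{B}} \circ \Phi_A^{\size(\tilde{B})}$. Fix $x \in [0,\size(A))$. Then $\Phi_A(x) \in [0,\size(B)) \subseteq [0,\size(\tilde{B}))$, so the hypothesis $\Phi_{\tilde{B}}\!\mid_{\size(B)} = \Phi_B$ (an equality of functions on the domain $[0,\size(B))$) yields $\Phi_{\tilde{B}}(\Phi_A(x)) = \Phi_B(\Phi_A(x))$. Unwinding the codomain restrictions, this is precisely the equality of maps $\Phi_{\tilde{B}} \circ \Phi_A^{\size(\tilde{B})} = \Phi_B \circ \Phi_A^{\size(B)}$ on $[0,\size(A))$, and the right-hand side equals $\Phi_C$ since $C = B \circ A$ satisfies condition 3 for $(A,B)$. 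Thus $C$ satisfies all three conditions to be $\tilde{B} \circ A$, and uniqueness gives $\tilde{B} \circ A = C = B \circ A$.

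I do not expect a genuine obstacle here; the only care needed is bookkeeping around the notation $\Phi_L^N$ — that $\Phi_A^{\size(B)}$ and $\Phi_A^{\size(\tilde{B})}$ are ``the same function with a larger target'', and that $\Phi_{\tilde{B}}\!\mid_{\size(B)} = \Phi_B$ lets us replace $\Phi_{\tilde{B}}$ by $\Phi_B$ after precomposing with $\Phi_A$ because $\Image(\Phi_A)$ is contained in $[0,\size(B))$. One could alternatively run the whole argument at the level of underlying set maps, invoking the layout structure only at the final appeal to uniqueness, which avoids repeatedly spelling out these restrictions.
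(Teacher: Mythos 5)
Your proposal is correct and follows essentially the same route as the paper: the key step in both is the identity $\Phi_{\tilde{B}} \circ \Phi_A^{\size(\tilde{B})} = \Phi_{\tilde{B}}\mid_{\size(B)} \circ \Phi_A^{\size(B)} = \Phi_B \circ \Phi_A^{\size(B)}$, valid because $\Image(\Phi_A) \subseteq [0,\size(B))$. You are merely more explicit than the paper in noting that conditions 1 and 2 of Definition \ref{definitionoflayoutcomposition} depend only on $A$ and in invoking the uniqueness clause, which is a reasonable amount of added bookkeeping.
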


\begin{proof}
    Suppose $A$ and $B$ are composable. Then $\cosize(A) \leq \size(B)$, and the fact that $B \circ A$ is the composite of $A$ and $\tilde{B}$ follows from the equality
    \begin{align*}
        \Phi_{\tilde{B}} \circ \Phi_A^{\size(\tilde{B})} & = \left( \Phi_{\tilde{B}} \right) \mid_{\size(B)} \circ \Phi_A^{\size(B)}\\
        & = \Phi_B \circ \Phi_A^{\size(B)}.
    \end{align*}
\end{proof}

\begin{corollary}\label{postcomposewithcoalesce}
    If $A$ and $B$ are layouts, then $A$ and $B$ are composable if and only if $A$ and $\coalesce(B)$ are composable, and 
    \[
    B \circ A = \coalesce(B) \circ A. 
    \]
\end{corollary}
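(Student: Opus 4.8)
The plan is to deduce Corollary~\ref{postcomposewithcoalesce} directly from Proposition~\ref{postcomposewithextension} by taking $\tilde B = \coalesce(B)$. The only wrinkle is that in Proposition~\ref{postcomposewithextension} the ``extended'' layout $\tilde B$ is required to have \emph{larger} size than $B$, whereas $\coalesce(B)$ has the \emph{same} size as $B$ (by Lemma~\ref{squeezelemma} and the fact that $\coalesce^\flat$ preserves size), so I need to run the argument symmetrically in $B$ and $\coalesce(B)$.

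First I would record the two facts I need about $\coalesce(B)$: that $\size(\coalesce(B)) = \size(B)$ (flat coalesce preserves size by Construction~\ref{constructionofflatcoalesce}, since it only removes modes of size $1$ and multiplies together modes it merges), and that $\Phi_{\coalesce(B)} = \Phi_B$ by Lemma~\ref{coalescelemma} (lifted to the nested setting via $\Phi_L = \Phi_{L^\flat}$). In particular $\Phi_{\coalesce(B)}^{\size(\coalesce(B))} = \Phi_B^{\size(B)}$ as functions, since both the map and the codomain agree. Then the equivalence ``$A$ and $B$ composable $\Leftrightarrow$ $A$ and $\coalesce(B)$ composable'' is immediate: composability of $A$ with a layout $C$ depends only on $\Phi_A$, $\Phi_C$, and $\size(C)$ (condition 3 of Definition~\ref{definitionoflayoutcomposition} requires $\Image(\Phi_A) \subseteq [0,\size(C))$, and the resulting composite is determined by the layout functions), all of which coincide for $B$ and $\coalesce(B)$.

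For the displayed equality $B \circ A = \coalesce(B)\circ A$, I would invoke Definition~\ref{definitionoflayoutcomposition}: the composite of $A$ with $B$ is the unique layout whose shape refines $\shape(A)$, which is coalesced over $\shape(A)$, and whose layout function is $\Phi_B \circ \Phi_A^{\size(B)}$. Since $\Phi_{\coalesce(B)}\circ \Phi_A^{\size(\coalesce(B))} = \Phi_B \circ \Phi_A^{\size(B)}$, the layout $B\circ A$ satisfies all three defining conditions for the composite of $A$ with $\coalesce(B)$ as well, and uniqueness (guaranteed by Proposition~\ref{relativecoalescedlayoutscharacterizedbylayoutfunction}) forces $B\circ A = \coalesce(B)\circ A$. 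Alternatively, and perhaps cleaner for the write-up, I can apply Proposition~\ref{postcomposewithextension} twice — once with the roles $(B,\tilde B) = (B, \coalesce(B))$ wherever $\size(B) \le \size(\coalesce(B))$ holds with equality (the hypothesis $\Phi_{\tilde B}\mid_{\size(B)} = \Phi_B$ is satisfied trivially since $\size(B) = \size(\coalesce(B))$ and the layout functions are literally equal), and once with the roles reversed — to get both inclusions, though a single appeal to the uniqueness clause is shorter.

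The main (very mild) obstacle is purely bookkeeping: making sure the size-preservation claim for nested $\coalesce$ is correctly cited, since $\coalesce$ of Construction~\ref{constructionoflayoutcoalesce} has three cases ($m>1$, $m=1$, $m=0$), and the $m=0$ case outputs $1:0$; but $\size(B) = 1$ exactly when $m = 0$ after flat-coalescing, and $\size(1:0) = 1$, so size is preserved in every case. Once that is pinned down, there is no real content beyond unwinding the definition of composition. I would keep the proof to three or four lines, citing Lemma~\ref{coalescelemma}, Definition~\ref{definitionoflayoutcomposition}, and Proposition~\ref{relativecoalescedlayoutscharacterizedbylayoutfunction} (or Proposition~\ref{postcomposewithextension}).
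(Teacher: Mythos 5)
Your proposal is correct and follows essentially the same route the paper intends: the corollary is deduced from Proposition~\ref{postcomposewithextension} (the paper gives no separate proof), using exactly the observations you make — that $\size(\coalesce(B)) = \size(B)$ and $\Phi_{\coalesce(B)} = \Phi_B$, so the hypotheses of that proposition hold with equality of sizes and the conclusion applies in both directions. Your attention to the $m=0$ case of Construction~\ref{constructionoflayoutcoalesce} and the fallback to the uniqueness clause of Definition~\ref{definitionoflayoutcomposition} are both sound but add nothing beyond what the proposition already delivers.
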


Now that we have developed the basic properties of layout composition, we turn our attention to the two most important instances of composition, namely {\it logical division} and {\it logical products}. 

\subsection{Logical division}\label{logicaldivisionsection}

\noindent In this section, we define the {\bf logical division} of layouts. As a motivating example, consider the layout

\bigskip

\begin{centering}
    
\begin{tikzpicture}[x={(0cm,-1cm)},y={(1cm,0cm)},every node/.style={minimum size=1cm, outer sep=0pt}]

\node[fill=gray!20] at (0,0) {0};
\node[fill=gray!20] at (0,1) {4};
\node[fill=gray!20] at (0,2) {8};
\node[fill=gray!20] at (0,3) {12};
\node[fill=gray!20] at (0,4) {16};
\node[fill=gray!20] at (0,5) {20};
\node[fill=gray!20] at (0,6) {24};
\node[fill=gray!20] at (0,7) {28};
\node[fill=gray!20] at (1,0) {1};
\node[fill=gray!20] at (1,1) {5};
\node[fill=gray!20] at (1,2) {9};
\node[fill=gray!20] at (1,3) {13};
\node[fill=gray!20] at (1,4) {17};
\node[fill=gray!20] at (1,5) {21};
\node[fill=gray!20] at (1,6) {25};
\node[fill=gray!20] at (1,7) {29};
\node[fill=gray!20] at (2,0) {2};
\node[fill=gray!20] at (2,1) {6};
\node[fill=gray!20] at (2,2) {10};
\node[fill=gray!20] at (2,3) {14};
\node[fill=gray!20] at (2,4) {18};
\node[fill=gray!20] at (2,5) {22};
\node[fill=gray!20] at (2,6) {26};
\node[fill=gray!20] at (2,7) {30};
\node[fill=gray!20] at (3,0) {3};
\node[fill=gray!20] at (3,1) {7};
\node[fill=gray!20] at (3,2) {11};
\node[fill=gray!20] at (3,3) {15};
\node[fill=gray!20] at (3,4) {19};
\node[fill=gray!20] at (3,5) {23};
\node[fill=gray!20] at (3,6) {27};
\node[fill=gray!20] at (3,7) {31};
\draw[color=black,thick,shift={(-0.5,-0.5)}] (0,0) grid (4,8);

\node[anchor=east] at (1.5,-1) {$A = (4,8):(1,4) = $};
\end{tikzpicture}

\end{centering} 

\bigskip

\noindent For various purposes, we may want to {\it tile} the layout $A$. For example, here are the tilings of $A$ by various layouts $B$.

\bigskip

\begin{centering}

\begin{tikzpicture}[x={(0cm,-1cm)},y={(1cm,0cm)},every node/.style={minimum size=1cm, outer sep=0pt}]

\node[fill=colorx1!60] at (0,0) {0};
\node[fill=colorx1!60] at (0,1) {4};
\node[fill=colorx3!60] at (0,2) {8};
\node[fill=colorx3!60] at (0,3) {12};
\node[fill=colorx5!60] at (0,4) {16};
\node[fill=colorx5!60] at (0,5) {20};
\node[fill=colorx7!60] at (0,6) {24};
\node[fill=colorx7!60] at (0,7) {28};
\node[fill=colorx1!60] at (1,0) {1};
\node[fill=colorx1!60] at (1,1) {5};
\node[fill=colorx3!60] at (1,2) {9};
\node[fill=colorx3!60] at (1,3) {13};
\node[fill=colorx5!60] at (1,4) {17};
\node[fill=colorx5!60] at (1,5) {21};
\node[fill=colorx7!60] at (1,6) {25};
\node[fill=colorx7!60] at (1,7) {29};
\node[fill=colorx2!60] at (2,0) {2};
\node[fill=colorx2!60] at (2,1) {6};
\node[fill=colorx4!60] at (2,2) {10};
\node[fill=colorx4!60] at (2,3) {14};
\node[fill=colorx6!60] at (2,4) {18};
\node[fill=colorx6!60] at (2,5) {22};
\node[fill=colorx8!60] at (2,6) {26};
\node[fill=colorx8!60] at (2,7) {30};
\node[fill=colorx2!60] at (3,0) {3};
\node[fill=colorx2!60] at (3,1) {7};
\node[fill=colorx4!60] at (3,2) {11};
\node[fill=colorx4!60] at (3,3) {15};
\node[fill=colorx6!60] at (3,4) {19};
\node[fill=colorx6!60] at (3,5) {23};
\node[fill=colorx8!60] at (3,6) {27};
\node[fill=colorx8!60] at (3,7) {31};
\draw[color=black,thick,shift={(-0.5,-0.5)}] (0,0) grid (4,8);

\node at (1.5,-1) {$A = $};
\node at (1.5,9) {$B = $};

\node[fill=gray!20] at (1,10) {0};
\node[fill=gray!20] at (2,10) {1};
\node[fill=gray!20] at (1,11) {4};
\node[fill=gray!20] at (2,11) {5};
\draw[color=black,thick,shift={(-0.5,-0.5)}] (1,10) grid (3,12);

\node[fill=colorx1!60] at (5,0) {0};
\node[fill=colorx1!60] at (5,1) {4};
\node[fill=colorx1!60] at (5,2) {8};
\node[fill=colorx1!60] at (5,3) {12};
\node[fill=colorx5!60] at (5,4) {16};
\node[fill=colorx5!60] at (5,5) {20};
\node[fill=colorx5!60] at (5,6) {24};
\node[fill=colorx5!60] at (5,7) {28};
\node[fill=colorx3!60] at (6,0) {1};
\node[fill=colorx3!60] at (6,1) {5};
\node[fill=colorx3!60] at (6,2) {9};
\node[fill=colorx3!60] at (6,3) {13};
\node[fill=colorx7!60] at (6,4) {17};
\node[fill=colorx7!60] at (6,5) {21};
\node[fill=colorx7!60] at (6,6) {25};
\node[fill=colorx7!60] at (6,7) {29};
\node[fill=colorx1!60] at (7,0) {2};
\node[fill=colorx1!60] at (7,1) {6};
\node[fill=colorx1!60] at (7,2) {10};
\node[fill=colorx1!60] at (7,3) {14};
\node[fill=colorx5!60] at (7,4) {18};
\node[fill=colorx5!60] at (7,5) {22};
\node[fill=colorx5!60] at (7,6) {26};
\node[fill=colorx5!60] at (7,7) {30};
\node[fill=colorx3!60] at (8,0) {3};
\node[fill=colorx3!60] at (8,1) {7};
\node[fill=colorx3!60] at (8,2) {11};
\node[fill=colorx3!60] at (8,3) {15};
\node[fill=colorx7!60] at (8,4) {19};
\node[fill=colorx7!60] at (8,5) {23};
\node[fill=colorx7!60] at (8,6) {27};
\node[fill=colorx7!60] at (8,7) {31};
\draw[color=black,thick,shift={(-0.5,-0.5)}] (5,0) grid (9,8);

\node at (6.5,-1) {$A = $};
\node at (6.5,9) {$B = $};

\node[fill=gray!20] at (6,10) {0};
\node[fill=gray!20] at (7,10) {2};
\node[fill=gray!20] at (6,11) {4};
\node[fill=gray!20] at (7,11) {6};
\node[fill=gray!20] at (6,12) {8};
\node[fill=gray!20] at (7,12) {10};
\node[fill=gray!20] at (6,13) {12};
\node[fill=gray!20] at (7,13) {14};
\draw[color=black,thick,shift={(-0.5,-0.5)}] (6,10) grid (8,14);

\node[fill=colorx1!60] at (10,0) {0};
\node[fill=colorx1!60] at (10,1) {4};
\node[fill=colorx3!60] at (10,2) {8};
\node[fill=colorx3!60] at (10,3) {12};
\node[fill=colorx1!60] at (10,4) {16};
\node[fill=colorx1!60] at (10,5) {20};
\node[fill=colorx3!60] at (10,6) {24};
\node[fill=colorx3!60] at (10,7) {28};
\node[fill=colorx5!60] at (11,0) {1};
\node[fill=colorx5!60] at (11,1) {5};
\node[fill=colorx7!60] at (11,2) {9};
\node[fill=colorx7!60] at (11,3) {13};
\node[fill=colorx5!60] at (11,4) {17};
\node[fill=colorx5!60] at (11,5) {21};
\node[fill=colorx7!60] at (11,6) {25};
\node[fill=colorx7!60] at (11,7) {29};
\node[fill=colorx1!60] at (12,0) {2};
\node[fill=colorx1!60] at (12,1) {6};
\node[fill=colorx3!60] at (12,2) {10};
\node[fill=colorx3!60] at (12,3) {14};
\node[fill=colorx1!60] at (12,4) {18};
\node[fill=colorx1!60] at (12,5) {22};
\node[fill=colorx3!60] at (12,6) {26};
\node[fill=colorx3!60] at (12,7) {30};
\node[fill=colorx5!60] at (13,0) {3};
\node[fill=colorx5!60] at (13,1) {7};
\node[fill=colorx7!60] at (13,2) {11};
\node[fill=colorx7!60] at (13,3) {15};
\node[fill=colorx5!60] at (13,4) {19};
\node[fill=colorx5!60] at (13,5) {23};
\node[fill=colorx7!60] at (13,6) {27};
\node[fill=colorx7!60] at (13,7) {31};
\draw[color=black,thick,shift={(-0.5,-0.5)}] (10,0) grid (14,8);

\node at (11.5,-1) {$A = $};
\node at (11.5,9) {$B = $};

\node[fill=gray!20] at (11,10) {0};
\node[fill=gray!20] at (12,10) {2};
\node[fill=gray!20] at (11,11) {4};
\node[fill=gray!20] at (12,11) {6};
\node[fill=gray!20] at (11,12) {16};
\node[fill=gray!20] at (12,12) {18};
\node[fill=gray!20] at (11,13) {20};
\node[fill=gray!20] at (12,13) {22};
\draw[color=black,thick,shift={(-0.5,-0.5)}] (11,10) grid (13,14);
\end{tikzpicture}

\end{centering}

\bigskip 

When working with such tiled layouts, we would like to index into our layout with coordinates of the form $(\texttt{tile}\_\texttt{coordinate},\texttt{tile})$ where $\texttt{tile}$ specifies which tile we are working with, and $\texttt{tile}\_\texttt{coordinate}$ specifies a coordinate within the specified tile. For example, if both $A$ and $B$ have rank $2$, we would like to write $((i,j),(k,\ell))$ as the index of the $(i,j)$th entry of the $(k,\ell)$th tile of $A$. The logical division of $A \oslash B$ is precisely the layout which affords us this ability.

\begin{definition}\label{definitionoflogicaldivide}
Suppose $A$ and $B$ are layouts, and suppose 
\[B^c = \comp(B,\size(A))\]
is the complement of $B$ with respect to $\size(A)$. Then the {\it logical division} of $A$ by $B$ is the layout
\begin{align*}
A \oslash B & = A \circ (B,B^c)\\
& = (A \circ B, A \circ B^c).
\end{align*}
\end{definition}

\begin{example}
    If $A = (4,8):(1,4)$ and $B = (2,2):(1,4)$, then 
    \[A \oslash B = ((2,2),(2,4)):((1,4),(2,8)),\]
    as depicted below.

    \bigskip

\begin{centering}
\begin{tikzpicture}[x={(0cm,-1cm)},y={(1cm,0cm)},every node/.style={minimum size=1cm, outer sep=0pt}]

\node[fill=colorx1!60] at (0,0) {0};
\node[fill=colorx1!60] at (0,1) {4};
\node[fill=colorx3!60] at (0,2) {8};
\node[fill=colorx3!60] at (0,3) {12};
\node[fill=colorx5!60] at (0,4) {16};
\node[fill=colorx5!60] at (0,5) {20};
\node[fill=colorx7!60] at (0,6) {24};
\node[fill=colorx7!60] at (0,7) {28};
\node[fill=colorx1!60] at (1,0) {1};
\node[fill=colorx1!60] at (1,1) {5};
\node[fill=colorx3!60] at (1,2) {9};
\node[fill=colorx3!60] at (1,3) {13};
\node[fill=colorx5!60] at (1,4) {17};
\node[fill=colorx5!60] at (1,5) {21};
\node[fill=colorx7!60] at (1,6) {25};
\node[fill=colorx7!60] at (1,7) {29};
\node[fill=colorx2!60] at (2,0) {2};
\node[fill=colorx2!60] at (2,1) {6};
\node[fill=colorx4!60] at (2,2) {10};
\node[fill=colorx4!60] at (2,3) {14};
\node[fill=colorx6!60] at (2,4) {18};
\node[fill=colorx6!60] at (2,5) {22};
\node[fill=colorx8!60] at (2,6) {26};
\node[fill=colorx8!60] at (2,7) {30};
\node[fill=colorx2!60] at (3,0) {3};
\node[fill=colorx2!60] at (3,1) {7};
\node[fill=colorx4!60] at (3,2) {11};
\node[fill=colorx4!60] at (3,3) {15};
\node[fill=colorx6!60] at (3,4) {19};
\node[fill=colorx6!60] at (3,5) {23};
\node[fill=colorx8!60] at (3,6) {27};
\node[fill=colorx8!60] at (3,7) {31};
\draw[color=black,thick,shift={(-0.5,-0.5)}] (0,0) grid (4,8);

\node[anchor = east] at (1.5,-1) {$A = $};
\node[anchor = east] at (1.5,9.5) {$B = $};

\node[fill=gray!5] at (1,10) {0};
\node[fill=gray!20] at (2,10) {1};
\node[fill=gray!40] at (1,11) {4};
\node[fill=gray!60] at (2,11) {5};
\draw[color=black,thick,shift={(-0.5,-0.5)}] (1,10) grid (3,12);
\end{tikzpicture}

\end{centering}

\vspace{0.2in}

\begin{centering}
\begin{tikzpicture}[x={(0cm,-1cm)},y={(1cm,0cm)},every node/.style={minimum size=1cm, outer sep=0pt}]

\node[fill=colorx1!10]       at (0,0) {0};
\node[fill=colorx2!10]       at (0,1) {2};
\node[fill=colorx3!10]      at (0,2) {8};
\node[fill=colorx4!10]      at (0,3) {10};
\node[fill=colorx5!10]  at (0,4) {16};
\node[fill=colorx6!10]  at (0,5) {18};
\node[fill=colorx7!10]        at (0,6) {24};
\node[fill=colorx8!10]        at (0,7) {26};

\node[fill=colorx1!25]       at (1,0) {1};
\node[fill=colorx2!25]       at (1,1) {3};
\node[fill=colorx3!25]      at (1,2) {9};
\node[fill=colorx4!25]      at (1,3) {11};
\node[fill=colorx5!25]  at (1,4) {17};
\node[fill=colorx6!25]  at (1,5) {19};
\node[fill=colorx7!25]        at (1,6) {25};
\node[fill=colorx8!25]        at (1,7) {27};

\node[fill=colorx1!40]       at (2,0) {4};
\node[fill=colorx2!40]       at (2,1) {6};
\node[fill=colorx3!40]      at (2,2) {12};
\node[fill=colorx4!40]      at (2,3) {14};
\node[fill=colorx5!40]  at (2,4) {20};
\node[fill=colorx6!40]  at (2,5) {22};
\node[fill=colorx7!40]        at (2,6) {28};
\node[fill=colorx8!40]        at (2,7) {30};

\node[fill=colorx1!60]       at (3,0) {5};
\node[fill=colorx2!60]       at (3,1) {7};
\node[fill=colorx3!60]      at (3,2) {13};
\node[fill=colorx4!60]      at (3,3) {15};
\node[fill=colorx5!60]  at (3,4) {21};
\node[fill=colorx6!60]  at (3,5) {23};
\node[fill=colorx7!60]       at (3,6) {29};
\node[fill=colorx8!60]       at (3,7) {31};
\draw[color=black,thick,shift={(-0.5,-0.5)}] (0,0) grid (4,8);

\node[anchor=east] at (1.5,-1) {$ A \oslash B = $};

\end{tikzpicture}

\end{centering}
\end{example}

\begin{remark}
    The {\it color} of each entry in $A \oslash B$ indicates the tile to which it belongs, and the {\it opacity} of each entry in $A\oslash B$ indicates which entry of the tile it represents. This is why each column of $A\oslash B$ has the same color, and each row of $A \oslash B$ has the same opacity.
\end{remark}

\begin{example}
    If $A = (4,8):(1,4)$ and $B = (2,2):(4,1)$, then 
    \[A \oslash B = ((2,2),(2,4)):((4,1),(2,8)),\]
    as depicted below.

    \bigskip

\begin{centering}
\begin{tikzpicture}[x={(0cm,-1cm)},y={(1cm,0cm)},every node/.style={minimum size=1cm, outer sep=0pt}]

\node[fill=colorx1!60] at (0,0) {0};
\node[fill=colorx1!60] at (0,1) {4};
\node[fill=colorx3!60] at (0,2) {8};
\node[fill=colorx3!60] at (0,3) {12};
\node[fill=colorx5!60] at (0,4) {16};
\node[fill=colorx5!60] at (0,5) {20};
\node[fill=colorx7!60] at (0,6) {24};
\node[fill=colorx7!60] at (0,7) {28};
\node[fill=colorx1!60] at (1,0) {1};
\node[fill=colorx1!60] at (1,1) {5};
\node[fill=colorx3!60] at (1,2) {9};
\node[fill=colorx3!60] at (1,3) {13};
\node[fill=colorx5!60] at (1,4) {17};
\node[fill=colorx5!60] at (1,5) {21};
\node[fill=colorx7!60] at (1,6) {25};
\node[fill=colorx7!60] at (1,7) {29};
\node[fill=colorx2!60] at (2,0) {2};
\node[fill=colorx2!60] at (2,1) {6};
\node[fill=colorx4!60] at (2,2) {10};
\node[fill=colorx4!60] at (2,3) {14};
\node[fill=colorx6!60] at (2,4) {18};
\node[fill=colorx6!60] at (2,5) {22};
\node[fill=colorx8!60] at (2,6) {26};
\node[fill=colorx8!60] at (2,7) {30};
\node[fill=colorx2!60] at (3,0) {3};
\node[fill=colorx2!60] at (3,1) {7};
\node[fill=colorx4!60] at (3,2) {11};
\node[fill=colorx4!60] at (3,3) {15};
\node[fill=colorx6!60] at (3,4) {19};
\node[fill=colorx6!60] at (3,5) {23};
\node[fill=colorx8!60] at (3,6) {27};
\node[fill=colorx8!60] at (3,7) {31};
\draw[color=black,thick,shift={(-0.5,-0.5)}] (0,0) grid (4,8);

\node[anchor = east] at (1.5,-1) {$A = $};
\node[anchor = east] at (1.5,9.5) {$B = $};

\node[fill=gray!5] at (1,10) {0};
\node[fill=gray!40] at (2,10) {4};
\node[fill=gray!20] at (1,11) {1};
\node[fill=gray!60] at (2,11) {5};
\draw[color=black,thick,shift={(-0.5,-0.5)}] (1,10) grid (3,12);
\end{tikzpicture}

\end{centering}

\vspace{0.2in}

\begin{centering}
\begin{tikzpicture}[x={(0cm,-1cm)},y={(1cm,0cm)},every node/.style={minimum size=1cm, outer sep=0pt}]

\node[fill=colorx1!10] at (0,0) {0};
\node[fill=colorx2!10] at (0,1) {2};
\node[fill=colorx3!10] at (0,2) {8};
\node[fill=colorx4!10] at (0,3) {10};
\node[fill=colorx5!10] at (0,4) {16};
\node[fill=colorx6!10] at (0,5) {18};
\node[fill=colorx7!10] at (0,6) {24};
\node[fill=colorx8!10] at (0,7) {26};
\node[fill=colorx1!40] at (1,0) {4};
\node[fill=colorx2!40] at (1,1) {6};
\node[fill=colorx3!40] at (1,2) {12};
\node[fill=colorx4!40] at (1,3) {14};
\node[fill=colorx5!40] at (1,4) {20};
\node[fill=colorx6!40] at (1,5) {22};
\node[fill=colorx7!40] at (1,6) {28};
\node[fill=colorx8!40] at (1,7) {30};
\node[fill=colorx1!25] at (2,0) {1};
\node[fill=colorx2!25] at (2,1) {3};
\node[fill=colorx3!25] at (2,2) {9};
\node[fill=colorx4!25] at (2,3) {11};
\node[fill=colorx5!25] at (2,4) {17};
\node[fill=colorx6!25] at (2,5) {19};
\node[fill=colorx7!25] at (2,6) {25};
\node[fill=colorx8!25] at (2,7) {27};
\node[fill=colorx1!60] at (3,0) {5};
\node[fill=colorx2!60] at (3,1) {7};
\node[fill=colorx3!60] at (3,2) {13};
\node[fill=colorx4!60] at (3,3) {15};
\node[fill=colorx5!60] at (3,4) {21};
\node[fill=colorx6!60] at (3,5) {23};
\node[fill=colorx7!60] at (3,6) {29};
\node[fill=colorx8!60] at (3,7) {31};
\draw[color=black,thick,shift={(-0.5,-0.5)}] (0,0) grid (4,8);

\node[anchor=east] at (1.5,-1) {$ A \oslash B = $};

\end{tikzpicture}

\end{centering}
\end{example}

\begin{remark}
    Note the difference between the previous two examples. The tiling of $A$ in each of the two examples is identical, but the layout of each tile is different. In first example, the tiles have column-major layouts, while in the second example, the tiles have row-major layouts. This results in different layouts when one performs logical division.
\end{remark}

\begin{example}
    If $A = (4,8):(1,4)$ and $B = (2,4):(2,4)$, then 
    \[A \oslash B = ((2,4),(2,2)):((2,4),(1,16)).\]
\end{example}

\begin{example}
    If $A = (4,6):(1,40)$ and $B = 6:4$, then 
    \[
    A \oslash B = (6,4):(40,1).
    \]
\end{example}

\begin{example}
    If $A = (4,6,2,4,2,5) : (36,1,18,0,0,144 )$ and $B = (4,10):(1,192)$, then 
    \[
    A \oslash B = (((4,(2,5)),(6,2,4)):((36,(0,144)
    ),(1,18,0))
    \]
\end{example}

\begin{example}
    If $A = (8,(4,4))$ and $B = (2,(8,16))$, then 
    \[
    A \oslash B = ((2,2),(2,(4,4))):((4,8),(2,(8,16))).
    \]
\end{example}

\subsection{Logical product}\label{logicalproductsection}

In this section, we define the {\bf logical product} of layouts. 






\begin{definition} \label{definitionoflogicalproduct}
Suppose $A$ and $B$ are layouts, and suppose 
\[
A^c = \comp\bigl(A,\size(A)\cdot \cosize(B) \bigr)
\]
is the complement of $A$ with respect to $\size(A) \cdot \cosize(B)$. Then the {\it logical product} of $A$ and $B$ is the layout
\[
A \otimes B = (A , A^c \circ B).
\]
\end{definition}

\begin{observation}
    By Proposition \ref{postcomposewithextension} and Proposition \ref{postcomposewithcoalesce}, if we let
    \[
    \widetilde{A}^c = \comp(A,N)
    \]
    for any valid $N \geq \size(A) \cdot \cosize(B)$, then  
    \[
    A^c \circ B = \tilde{A}^c \circ B.
    \]
    This means that when computing $A \otimes B$, we can take $A^c$ to be any sufficiently large (sorted) complement of $A$.
\end{observation}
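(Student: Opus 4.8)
The plan is to reduce the assertion to the flat setting and then invoke Proposition \ref{postcomposewithextension} directly on the flat complements of $A$. Write $N_1 = \size(A)\cdot\cosize(B)$, so that $A^c = \comp(A,N_1)$ and $\tilde{A}^c = \comp(A,N)$ with $N \geq N_1$; here ``valid'' means $A$ is $N$-complementable, while $A$ is $N_1$-complementable because $A \otimes B$ is assumed to be defined (Definition \ref{definitionoflogicalproduct}). First I would unwind the definition of the nested complement: by Definition \ref{definitionoflayoutcomplements} we have $A^c = \coalesce(\comp^\flat(A^\flat,N_1))$ and $\tilde{A}^c = \coalesce(\comp^\flat(A^\flat,N))$. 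Applying Corollary \ref{postcomposewithcoalesce} with inner layout $B$ and outer layouts $\comp^\flat(A^\flat,N_1)$ and $\comp^\flat(A^\flat,N)$ respectively gives
\[
A^c \circ B \;=\; \comp^\flat(A^\flat,N_1)\circ B
\qquad\text{and}\qquad
\tilde{A}^c \circ B \;=\; \comp^\flat(A^\flat,N)\circ B ,
\]
so it suffices to prove $\comp^\flat(A^\flat,N_1)\circ B = \comp^\flat(A^\flat,N)\circ B$.

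Next I would verify the three hypotheses of Proposition \ref{postcomposewithextension}, taken with inner layout $B$, with $\comp^\flat(A^\flat,N_1)$ playing the role of ``$B$'' in that proposition and $\comp^\flat(A^\flat,N)$ playing the role of ``$\tilde{B}$''. For the size condition, Lemma \ref{Ncomplementsarecomplements} shows $\comp^\flat(A^\flat,M)$ is an $M$-complement of $A^\flat$, hence $\size(A)\cdot\size(\comp^\flat(A^\flat,M)) = M$; taking $M = N_1$ and $M = N$ yields $\size(\comp^\flat(A^\flat,N_1)) = \cosize(B)$ and $\size(\comp^\flat(A^\flat,N)) = N/\size(A) \geq \cosize(B)$, which is the needed inequality. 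For the layout-function condition, Lemma \ref{compatibilityofNcomplements} (applicable since $N_1 \leq N$ and $A$ is complementable for both bounds) gives $\Phi_{\comp^\flat(A^\flat,N)}\mid_{[0,N_1)} = \Phi_{\comp^\flat(A^\flat,N_1)}$; since $\cosize(B) = \size(\comp^\flat(A^\flat,N_1)) \leq N_1$, restricting further to $[0,\cosize(B))$ gives exactly the agreement of the two layout functions on the domain of the smaller layout. For composability of $B$ with $\comp^\flat(A^\flat,N_1)$, note that $\Image(\Phi_B) \subseteq [0,\cosize(B)) = [0,\size(\comp^\flat(A^\flat,N_1)))$, which is the defining condition. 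Proposition \ref{postcomposewithextension} then yields $\comp^\flat(A^\flat,N_1)\circ B = \comp^\flat(A^\flat,N)\circ B$, and combined with the two displayed equalities this proves $A^c \circ B = \tilde{A}^c \circ B$. The final remark that any sufficiently large sorted complement may be used then follows: any sorted, coalesced $M$-complement of $A$ (for $M \geq N_1$) equals $\comp(A,M)$ by the characterization in Proposition \ref{characterizationofflatNcomplement}, so this case is subsumed by the equality just proved.

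The main obstacle, such as it is, is purely bookkeeping: correctly matching the roles of the layouts in the two ``post-composition'' statements --- the inner layout $B$ is held fixed while the outer complement is enlarged --- and checking that the relevant restriction index $\cosize(B)$ is no larger than $N_1$, so that Lemma \ref{compatibilityofNcomplements} applies and its conclusion can then be further restricted. No step is genuinely hard, since each ingredient (coalesce-invariance of composition, compatibility of $N$-complements across increasing $N$, and the extension property of post-composition) has already been established earlier in the chapter.
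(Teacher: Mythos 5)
Your proposal is correct and follows exactly the route the paper indicates: the paper's observation simply cites Proposition \ref{postcomposewithextension} and Corollary \ref{postcomposewithcoalesce}, and you fill in the details by reducing to the flat complements via coalesce-invariance, then verifying the size and restriction hypotheses of Proposition \ref{postcomposewithextension} using Lemma \ref{Ncomplementsarecomplements} and Lemma \ref{compatibilityofNcomplements}. The bookkeeping of roles (inner layout $B$ fixed, outer complement enlarged) and the check that $\cosize(B) = \size(\comp^\flat(A^\flat,N_1)) \leq N_1$ are handled correctly.
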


\begin{example}
    If $A = (2,2):(5,10)$ and $B = (3,5):(5,1)$ are the layouts
    \[
    \begin{tikzpicture}[x={(0cm,-1cm)},y={(1cm,0cm)},every node/.style={minimum size=1cm, outer sep=0pt}]

\node[fill=colorx2!60] at (0.5,0) {0};
\node[fill=colorx4!60] at (0.5,1) {10};
\node[fill=colorx6!60] at (1.5,0) {5};
\node[fill=colorx8!60] at (1.5,1) {15};
\draw[color=black,thick,shift={(0,-0.5)}] (0,0) grid (2,2);

\node[fill=gray!0] at (0,5) {0};
\node[fill=gray!5] at (0,6) {1};
\node[fill=gray!10] at (0,7) {2};
\node[fill=gray!15] at (0,8) {3};
\node[fill=gray!20] at (0,9) {4};
\node[fill=gray!25] at (1,5) {5};
\node[fill=gray!30] at (1,6) {6};
\node[fill=gray!35] at (1,7) {7};
\node[fill=gray!40] at (1,8) {8};
\node[fill=gray!45] at (1,9) {9};
\node[fill=gray!50] at (2,5) {10};
\node[fill=gray!55] at (2,6) {11};
\node[fill=gray!60] at (2,7) {12};
\node[fill=gray!65] at (2,8) {13};
\node[fill=gray!70] at (2,9) {14};
\draw[color=black,thick,shift={(-0.5,-0.5)}] (0,5) grid (3,10);

\node[anchor = east] at (1,-1) {$A = $ };
\node[anchor = east] at (1,4) {$B = $ };

\end{tikzpicture}
    \]
    then $A \otimes B$ is the layout
    \[
    A \otimes B = ((2,2),(3,5)):((5,10),(20,1))
    \]
as depicted below.
    
\[
\begin{tikzpicture}[x={(0cm,-1cm)},y={(1cm,0cm)},every node/.style={minimum size=1cm, outer sep=0pt}]
\node[fill=colorx2!5] at (0,0) {0};
\node[fill=colorx2!10] at (0,1) {20};
\node[fill=colorx2!15] at (0,2) {40};
\node[fill=colorx2!20] at (0,3) {1};
\node[fill=colorx2!25] at (0,4) {21};
\node[fill=colorx2!30] at (0,5) {41};
\node[fill=colorx2!35] at (0,6) {2};
\node[fill=colorx2!40] at (0,7) {22};
\node[fill=colorx2!45] at (0,8) {42};
\node[fill=colorx2!50] at (0,9) {3};
\node[fill=colorx2!55] at (0,10) {23};
\node[fill=colorx2!60] at (0,11) {43};
\node[fill=colorx2!65] at (0,12) {4};
\node[fill=colorx2!70] at (0,13) {24};
\node[fill=colorx2!75] at (0,14) {44};
\node[fill=colorx4!5] at (1,0) {5};
\node[fill=colorx4!10] at (1,1) {25};
\node[fill=colorx4!15] at (1,2) {45};
\node[fill=colorx4!20] at (1,3) {6};
\node[fill=colorx4!25] at (1,4) {26};
\node[fill=colorx4!30] at (1,5) {46};
\node[fill=colorx4!35] at (1,6) {7};
\node[fill=colorx4!40] at (1,7) {27};
\node[fill=colorx4!45] at (1,8) {47};
\node[fill=colorx4!50] at (1,9) {8};
\node[fill=colorx4!55] at (1,10) {28};
\node[fill=colorx4!60] at (1,11) {48};
\node[fill=colorx4!65] at (1,12) {9};
\node[fill=colorx4!70] at (1,13) {29};
\node[fill=colorx4!75] at (1,14) {49};
\node[fill=colorx6!5] at (2,0) {10};
\node[fill=colorx6!10] at (2,1) {30};
\node[fill=colorx6!15] at (2,2) {50};
\node[fill=colorx6!20] at (2,3) {11};
\node[fill=colorx6!25] at (2,4) {31};
\node[fill=colorx6!30] at (2,5) {51};
\node[fill=colorx6!35] at (2,6) {12};
\node[fill=colorx6!40] at (2,7) {32};
\node[fill=colorx6!45] at (2,8) {52};
\node[fill=colorx6!50] at (2,9) {13};
\node[fill=colorx6!55] at (2,10) {33};
\node[fill=colorx6!60] at (2,11) {53};
\node[fill=colorx6!65] at (2,12) {14};
\node[fill=colorx6!70] at (2,13) {34};
\node[fill=colorx6!75] at (2,14) {54};
\node[fill=colorx8!5] at (3,0) {15};
\node[fill=colorx8!10] at (3,1) {35};
\node[fill=colorx8!15] at (3,2) {55};
\node[fill=colorx8!20] at (3,3) {16};
\node[fill=colorx8!25] at (3,4) {36};
\node[fill=colorx8!30] at (3,5) {56};
\node[fill=colorx8!35] at (3,6) {17};
\node[fill=colorx8!40] at (3,7) {37};
\node[fill=colorx8!45] at (3,8) {57};
\node[fill=colorx8!50] at (3,9) {18};
\node[fill=colorx8!55] at (3,10) {38};
\node[fill=colorx8!60] at (3,11) {58};
\node[fill=colorx8!65] at (3,12) {19};
\node[fill=colorx8!70] at (3,13) {39};
\node[fill=colorx8!75] at (3,14) {59};
\draw[color=black,thick,shift={(-0.5,-0.5)}] (0,0) grid (4,15);
\end{tikzpicture}
\]
\end{example}

\begin{example} If $A = (3,3):(6,1)$ and $B = (10,12):(24,2)$, then 
    \[ 
    A \otimes B = \bigl((3,3),(10,12) \bigr) : \bigl((6,1),(216,18) \bigr).
    \]
\end{example}

\begin{example}
    If $A = (2,10):(1680,4) $ and $B = (4,9):(2,56)$, then
    \[
    A \otimes B = ( (2,10) , ((2,2),(3,3) ) ) : ((1680,4), ( ( 2 ,40 ) , (560,3360 )  ) ).
    \]
\end{example}

\begin{example}
    If $A = (4,(2,2)):(9,(1,3))$ and $B = ((2,4),8):((1,4),2)$, then 
    \[
    A \otimes B = ( ( 4,(2,2)),((2,4),8 ) ) : ( ( 9,(1,3)),((36,144),72 ) ).
    \]
\end{example}

\subsection{Tractable layouts}

In this section we define an especially well-behaved class of layouts, called {\it tractable} layouts. We will see that tractable layouts are precisely the layouts which arise from a certain category $\catstyle{Nest}$.

\begin{definition}\label{definitionoftractablelayouts}
    We say a layout $L$ is {\it tractable} if the flat layout $L^\flat$ is tractable, in the sense of Definition \ref{definitionoftractableflatlayouts}. Explicitly, $L$ is tractable if the flat layout  
    \[
    \sort(L^\flat) = (s_1,\dots,s_m):(d_1,\dots,d_m)
    \]
    is such that for each $1 \leq i < m$, we have 
    \begin{enumerate}
        \item $d_i = 0$\text{, or}
        \item $s_id_i$ divides $d_{i+1}$. 
    \end{enumerate}
\end{definition}

\begin{example}
    The layout 
    \[L = (((12))):(((17)))\]
    is tractable. More generally, any layout $L$ of length $1$ is tractable. 
\end{example}

\begin{example}
    The layout 
    \[L = ((2,4),32):((1,2),8)\]
    is tractable. More generally, any column-major layout is tractable.
\end{example}
\begin{example}
    The layout 
    \[L = (2,(4,32)):(128,(32,1))\]
    is tractable. More generally, any row-major layout $L$ is tractable.
\end{example}
\begin{example}
    The layout 
    \[L = ((3,3),(1,3),(3,1,3)):((81,1),(0,8),(3,0,27))\]
    is tractable. More generally, any compact layout is tractable.
\end{example}

\begin{example}
    The layout 
    \[L = ((3,7,7)):((0,15,0))\]
    is tractable. More generally, any layout with exactly one non-zero stride entry is tractable.
\end{example}

\begin{example}
    The layout 
    \[L = (2,(2,(2,2))):(1,(2048,(16,64)))\]
    is tractable. More generally, any complementable layout is tractable. 
\end{example}

\begin{example}
    The layout  
    \[L = ((8,8),(5,5)):((8,1),(10,2))\]
    is not tractable. In particular, this shows that the concatenation $(L_1,L_2)$ of tractable layouts $L_1$ and $L_2$ need not be tractable.
\end{example}

\chapter{Categories of layouts}\label{categorieschapter}

Having thoroughly explored the algebra of layouts, we now turn our attention to the mathematical heart of this work: realizing layouts as morphisms in suitably-defined categories. Along the way, we develop a \emph{graphical calculus of layout diagrams} that affords more straightforward computation of layout operations.

\section{The category $\catstyle{Tuple}$}\label{thecategoryTuplesection}

In this section, we define a category $\catstyle{Tuple}$ whose objects are tuples of positive integers, and whose morphisms we call {\it tuple morphisms}. Each tuple morphism $f:S \to T$ encodes a flat layout $L_f$. Composition of tuple morphisms is compatible with layout composition, in that if $f$ and $g$ are composable tuple morphisms, then
\[
L_{g \circ f}= L_g \circ L_f.
\]
We define a realization functor (\Cref{constructionofrealizationfunctoronC})
\[
| \cdot |: \catstyle{Tuple} \to \catstyle{FinSet}
\]
which recovers the layout function of $L_f$ via the formula
\[
|f| = \Phi_{L_f}^{\size(T)}.
\]
We develop an ``algebra of tuple morphisms" which includes operations such as {\it sort} (\Cref{subsubsection.sort}), {\it coalesce} (\Cref{subsubsection.coalesce}), {\it complement} (\Cref{subsubsection.complement}), {\it concatenate} (\Cref{subsubsection.concatenate}), {\it flat division} (\Cref{subsubsection.flatdivision}), and {\it flat products} (\Cref{subsubsection.flatproduct}), which are compatible with the corresponding operations on flat layouts.

\subsection{Basic definitions}

\begin{definition}
Let $\catstyle{Fin}_*$ denote the category whose objects are the pointed finite sets 
\[
\langle m \rangle_* = \{*,1,2,\dots,m\}
\]
for $m \geq 0$, and whose morphisms $\alpha: \langle m \rangle_* \to \langle n \rangle_*$ are functions satisfying $\alpha(*) = *$. We call these morphisms {\it pointed maps}, or simply {\it maps}. 
\end{definition}

\begin{aside}
    $\Fin_*$ is a \emph{skeleton} of the category $\FinSet_*$ of finite pointed sets.
\end{aside}

\begin{notation} If the codomain of a pointed map $\alpha:\langle m \rangle_* \to \langle n \rangle_*$ is understood, we sometimes write 
\[
\alpha = (\alpha(1),\dots,\alpha(m))
\]
as a tuple of length $m$ with entries in $\langle n \rangle_*$. 
\end{notation}
\begin{example}\label{morphisminFinexample1}
    There is a morphism $\alpha : \langle 4 \rangle_* \to \langle 6 \rangle_*$ in $\catstyle{Fin}_*$ given by 
    \[
    \alpha = (2,1,*,6),
    \]
    which we can visualize using the following diagram.
    \[ \begin{tikzcd} [row sep = 1, column sep = 8]
          & & \bullet \\
     & & \bullet \\
    \bullet \ar[uurr,mapsto] & & \bullet \\
    \bullet & & \bullet \\
    \bullet  \ar[drr,mapsto] & & \bullet \\
    \bullet\ar[urr,mapsto] & & \bullet \\
    & \alpha & \\
    \end{tikzcd} \]
    Note that the bullet corresponding to entry 3 does not support an arrow, reflecting the fact that it gets sent to $*$. 
\end{example}

\begin{example}\label{morphisminFinexample2}
    There is a morphism $\beta:\langle 5 \rangle_* \to \langle 3 \rangle_*$ in $\catstyle{Fin}_*$ given by 
    \[
    \beta = (*,1,2,3,*)~,
    \]
    which we can visualize using the following diagram.
    \[ \begin{tikzcd} [row sep = 1, column sep = 8]
    \bullet & & \\
    \bullet \ar[drr,mapsto]& & \\
    \bullet \ar[drr,mapsto]& & \bullet\\
    \bullet \ar[drr,bend left = 10, mapsto] & & \bullet\\
    \bullet \ar[rr,mapsto] & & \bullet\\
    & \beta & \\
    \end{tikzcd} \]
\end{example}

\begin{example}\label{morphisminFinexample3}
For any $m \geq 0$, there is a unique morphism in $\catstyle{Fin}_*$ of the form $\pi:\langle m \rangle_* \to \langle 0 \rangle_*$, namely
\[
\pi = (*,\dots,*).
\]
\end{example}

\begin{example}
For any $n \geq 0$, there is a unique morphism in $\catstyle{Fin}_*$ of the form $\delta : \langle 0 \rangle_* \to \langle m \rangle_*$, namely 
    \[
    \delta = ().
    \]
\end{example}

\begin{aside}
    The category $\catstyle{Fin}_*$ is the category of operators for the commutative operad, so we sometimes write 
    \[\catstyle{Fin}_* = \catstyle{Comm}^\otimes.\]
\end{aside}

\noindent We are especially interested in {\it tractable} morphisms in $\catstyle{Fin}_*$, which we define below.

\begin{definition}\label{definitionofessentiallyinjective}
    We say a pointed map $\alpha: \langle m \rangle_* \to \langle n \rangle_*$ is {\it tractable} if for any $j \in \langle n \rangle \subset \langle n \rangle_*$, the preimage $\alpha^{-1}(j)$ is empty or consists of a single element.
\end{definition}

\begin{example}\label{essentiallyinjectivemorphismexample}
    The maps
    \[ \begin{tikzcd} [row sep = 1, column sep = 8]
            & &          & & & \bullet & &        & & & \bullet \ar[drr,mapsto] & & \bullet \\
            & &          & & & \bullet \ar[rr,mapsto] & & \bullet & & & \bullet \ar[urr,mapsto] & & \bullet \\
            & & \bullet   & & & \bullet \ar[drr,mapsto] & & \bullet & & & \bullet \ar[drr,mapsto] & & \bullet \\
    \bullet \ar[urr,mapsto] & & \bullet & & & \bullet & & \bullet & & & \bullet \ar[drr,mapsto]  & & \bullet \\
    \bullet \ar[rr,mapsto]& & \bullet & & & \bullet & & \bullet & & & \bullet \ar[uurr,mapsto] & & \bullet \\
    & \alpha_1 & & & & & \alpha_2 & & & & & \alpha_3 & & \\
    \end{tikzcd} \]
    are tractable, while the maps 
    \[ \begin{tikzcd} [row sep = 1, column sep = 8]
    & &          & & & \bullet & &        & & & \bullet \ar[drr,mapsto] & & \bullet \\
    & &          & & & \bullet \ar[rr,mapsto] & & \bullet & & & \bullet \ar[urr,mapsto] & & \bullet \\
    & & \bullet   & & & \bullet \ar[drr,mapsto] & & \bullet & & & \bullet \ar[drr,bend left = 10, mapsto] & & \bullet \\
    \bullet \ar[drr,bend left = 10, mapsto] & & \bullet & & & \bullet & & \bullet & & & \bullet \ar[rr,mapsto]  & & \bullet \\
    \bullet \ar[rr,mapsto]& & \bullet & & & \bullet \ar[urr,mapsto] & & \bullet & & & \bullet \ar[urr,bend right = 10, mapsto] & & \bullet \\
    & \beta_1 & & & & & \beta_2 & & & & & \beta_3 & & \\
    \end{tikzcd} \]
    are not tractable 
\end{example}

\begin{remark}
    If we represent a morphism $\alpha:\langle m \rangle_* \to \langle n \rangle_*$ in $\catstyle{Fin}_*$ as a tuple, i.e.
    \[
    \alpha = (\alpha(1),\dots,\alpha(m))
    \]
    then $\alpha$ is tractable if and only if no positive integer occurs more than once in $\alpha$. 
\end{remark}

\begin{aside} The wide subcategory 
\[\catstyle{E}_0^\otimes \subset \catstyle{Comm}^\otimes = \catstyle{Fin}_*\] on the  tractable pointed maps is the category of operators for the $\catstyle{E}_0$ operad.
\end{aside}

\begin{definition}\label{definitionofTuple} Let $\catstyle{Tuple}$ denote the category whose objects are tuples 
\[
S = (s_1,\dots,s_m)
\]
of positive integers, where a morphism 
\[f:(s_1,\dots,s_m) \to (t_1,\dots,t_n)\]
is specified by a tractable pointed map $\alpha:\langle m \rangle_* \to \langle n \rangle_*$
satisfying the property that
\begin{itemize} 
\item if $ 1 \leq i \leq m$ and $\alpha(i) \neq *$, then $s_i = t_{\alpha(i)}$~.
\end{itemize}
We say that such a morphism $f$ {\it lies over} $\alpha$, and refer to $f$ as a {\it tuple morphism}.
\end{definition}

\begin{notation}
    If $f:(s_1,\dots,s_m) \to (t_1,\dots,t_n)$ is a tuple morphism which lies over $\alpha$, then we sometimes depict $f$ as 
    \[
    \begin{tikzcd}
(s_1,\dots,s_m) \ar[rr,"f"] \ar[rr,swap,"\alpha"] & & (t_1,\dots,t_n). 
    \end{tikzcd}
    \]
\end{notation}

The graphical calculus of layouts we develop is based on the natural visualizations of morphisms in $\Tuple$, as exemplified below.

\begin{example} 
\label{morphismfinC} The tuple morphism 
\[
\begin{tikzcd} 
(3,128,128) \ar[rr,"f"] \ar[rr,swap,"(1{,}3{,}5)"] & &  (3,2,128,2,128)
\end{tikzcd} \]
can be visualized using the following diagram.
    \[ 
    \begin{tikzcd}[row sep = 1, column sep = 8]
     & & 128  \\
     & & 2   \\
     128 \ar[rruu, mapsto] & & 128  \\
     128  \ar[rru,mapsto] & & 2  \\
     3 \ar[rr,mapsto] & & 3  \\
     & f &  \\
    \end{tikzcd}
    \]
    \end{example}

\begin{example}
\label{morphismginC}
The tuple morphism 
\[\begin{tikzcd} 
(3,128,128) \ar[rr,"g"] \ar[rr,swap,"(*{,}2{,}1)"] & &  (128,128)
\end{tikzcd} \] 
can be visualized using the following diagram.

\[ 
    \begin{tikzcd}[row sep = 1, column sep = 8]
     128 \ar[rrdd, mapsto] & &  \\
     128 \ar[rr,mapsto] & & 128 \\
     3 & & 128 \\
     & g & \\
    \end{tikzcd}
    \]
\end{example}
\begin{example}\label{morphismhinC}
    The tuple morphism 
    \[ \begin{tikzcd}
    (16,16,16,1,32) \ar[rr,"h"] \ar[rr,swap,"(*{,}*{,}1{,}*{,}2)"] & &  (16,32,1,1)
    \end{tikzcd} \]
    can be visualized using the following diagram. 
    \[
    \begin{tikzcd} [row sep = 1, column sep = 8]
    32 \ar[dddrr,mapsto] & & \\
    1 & & 1 \\
    16 \ar[ddrr,mapsto] & & 1 \\
    16 & & 32 \\
    16 & & 16 \\
    & h & \\
    \end{tikzcd}
    \]
\end{example}

\begin{observation}
    We can relate the category $\catstyle{Tuple}$ to some well-known operads as follows. Let $\mathbb{Z}_{>0}^{\mathterm{div}}$ denote the poset of positive integers under the divisibility relation, considered as a symmetric monoidal category with product given by multiplication of integers. Let $(\mathbb{Z}_{>0}^{\mathterm{div}})^\otimes$ denote the category of operators of $\mathbb{Z}_{>0}^{\mathterm{div}}$. Then there are evident functors \[\catstyle{Tuple} \to (\mathbb{Z}_{>0}^{\mathterm{div}})^\otimes,\] and \[\catstyle{Tuple} \to \catstyle{E}_0^\otimes,\] such that the diagram
    \[ \begin{tikzcd} 
    \catstyle{Tuple} \ar[r] \ar[d] & (\mathbb{Z}_{>0}^{\mathterm{div}})^\otimes \ar[d] \\
    \catstyle{E}_0^\otimes \ar[r] & \catstyle{Comm}^\otimes
    \end{tikzcd} \]
    commutes. This exhibits $\catstyle{Tuple}$ as the wide subcategory of the pullback operad
    \[
    \catstyle{Tuple} \subset \catstyle{E}_0^\otimes \times_{\catstyle{Comm}^\otimes}(\mathbb{Z}_{>0}^{\mathterm{div}})^\otimes\]
    on the morphisms 
    \[\begin{tikzcd} (s_1,\dots,s_m) \ar[r,"f"] \ar[r,swap,"\alpha"] & (t_1,\dots,t_n)
    \end{tikzcd}\]
    satisfying
    \[
    \alpha(i) \neq 1 \quad \Rightarrow \quad  s_i = t_{\alpha(i)}. 
    \]
\end{observation}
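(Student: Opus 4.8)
The plan is to unwind the three categories of operators that appear in the statement and then construct the comparison functor by hand. Recall that for a symmetric monoidal category $(\mathcal{C},\otimes)$ the category of operators $\mathcal{C}^\otimes$ has finite sequences of objects of $\mathcal{C}$ as objects, and a morphism $(c_1,\dots,c_m) \to (d_1,\dots,d_n)$ is a pointed map $\alpha \colon \langle m \rangle_* \to \langle n \rangle_*$ together with a morphism $\bigotimes_{i \in \alpha^{-1}(j)} c_i \to d_j$ in $\mathcal{C}$ for each $j \in \langle n \rangle$. With this description, $\catstyle{Comm}^\otimes = \catstyle{Fin}_*$ (as recorded in the aside identifying $\catstyle{Fin}_* = \catstyle{Comm}^\otimes$), $\catstyle{E}_0^\otimes$ is the wide subcategory of $\catstyle{Fin}_*$ on the tractable pointed maps (the aside following Example~\ref{essentiallyinjectivemorphismexample}), and $(\mathbb{Z}_{>0}^{\mathterm{div}})^\otimes$ has finite sequences of positive integers as objects, with at most one morphism over any given $\alpha$ — existing exactly when $\prod_{i \in \alpha^{-1}(j)} s_i$ divides $t_j$ for every $j$ — because $\mathbb{Z}_{>0}^{\mathterm{div}}$ is a poset.

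Next I would exhibit the two functors. The functor $\catstyle{Tuple} \to (\mathbb{Z}_{>0}^{\mathterm{div}})^\otimes$ is the identity on objects and sends a tuple morphism $f$ lying over $\alpha$ to the unique morphism over $\alpha$; this is well-defined because tractability forces each product $\prod_{i \in \alpha^{-1}(j)} s_i$ to be either the empty product $1$ or a single factor $s_i$, and in the latter case $s_i = t_j$ by the defining condition of $\catstyle{Tuple}$, so the divisibility requirement holds. The functor $\catstyle{Tuple} \to \catstyle{E}_0^\otimes$ sends $(s_1,\dots,s_m)$ to $\langle m \rangle_*$ and $f$ to its underlying tractable map $\alpha$. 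Functoriality of both reduces to two facts I would check in a line: composition of tuple morphisms is computed by composing underlying pointed maps (with the equality condition $s_i = t_{\alpha(i)}$ propagating through a composite), and tractable maps are closed under composition. The square commutes strictly, since both legs send $f$ to $\alpha$.

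Then I would apply the universal property of the pullback of categories to obtain $\Theta \colon \catstyle{Tuple} \to \catstyle{E}_0^\otimes \times_{\catstyle{Comm}^\otimes} (\mathbb{Z}_{>0}^{\mathterm{div}})^\otimes$ and identify it with the inclusion of a wide subcategory. On objects $\Theta$ is a bijection: an object of the pullback is a pair consisting of some $\langle m \rangle_*$ and some sequence of positive integers of the same length, i.e.\ just a sequence of positive integers. A morphism of the pullback $(s_1,\dots,s_m) \to (t_1,\dots,t_n)$ consists of a morphism of $\catstyle{E}_0^\otimes$ and a morphism of $(\mathbb{Z}_{>0}^{\mathterm{div}})^\otimes$ with the same image in $\catstyle{Comm}^\otimes$; since the second factor contributes at most one morphism over the shared pointed map $\alpha$, this is the same data as a tractable $\alpha$ with $\prod_{i \in \alpha^{-1}(j)} s_i \mid t_j$ for all $j$, equivalently a tractable $\alpha$ with $s_i \mid t_{\alpha(i)}$ whenever $\alpha(i) \neq \ast$. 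Hence $\Theta$ is faithful, its image is a wide subcategory (being the image of a functor that is bijective on objects and faithful), and a pullback-morphism lies in that image precisely when every divisibility $s_i \mid t_{\alpha(i)}$ is an equality $s_i = t_{\alpha(i)}$ — which is exactly the condition in the statement.

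I do not anticipate a serious obstacle: the content is bookkeeping about categories of operators, and the key collapse — from ``a morphism in the pullback'' to ``a pointed map plus a divisibility constraint'' — is powered entirely by the posetal nature of $\mathbb{Z}_{>0}^{\mathterm{div}}$. The one point requiring genuine (if brief) care is the stability of the defining data of $\catstyle{Tuple}$-morphisms under composition — that tractability and the equality constraint $s_i = t_{\alpha(i)}$ both survive composing — since this is simultaneously what makes the two functors functorial, what makes $\Theta$ well-defined, and what guarantees that the cut-out class of morphisms is closed under composition and hence really is a (wide) subcategory.
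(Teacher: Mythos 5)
Your proposal is correct, and it is precisely the ``evident'' unwinding that the paper's Observation leaves implicit (the paper supplies no proof at all): the posetal collapse of $(\mathbb{Z}_{>0}^{\mathterm{div}})^\otimes$ over a fixed pointed map, the reduction of the fiberwise divisibility $\prod_{i\in\alpha^{-1}(j)}s_i \mid t_j$ to $s_i \mid t_{\alpha(i)}$ via tractability, and the closure of the equality constraint under composition are exactly the points that need checking, and you check them. Note also that you have implicitly (and correctly) read the paper's condition ``$\alpha(i)\neq 1$'' as the intended ``$\alpha(i)\neq *$'', consistent with Definition~\ref{definitionofTuple}.
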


\subsection{From tuple morphisms to flat layouts} 

\noindent The impetus for working with the category $\Tuple$ is that each tuple morphism $f$ encodes a flat layout $L_f$. Moreover, each tractable layout $L$ gives rise to a tuple morphism $f_L$. We prove as \Cref{tractableflatlayoutscomefromtuplemorphisms} that these constructions are in some sense inverses, and that tractable layouts are precisely those encoded by tuple morphisms. 

\begin{construction} \label{layoutfrommorphisminTuple}
Suppose 
\[\begin{tikzcd} (s_1,\dots,s_m) \ar[rr,"f"] \ar[rr,swap,"\alpha"] & &  (t_1,\dots,t_n)\end{tikzcd} \]
is a tuple morphism. We define $L_f$ to be the flat layout whose shape \[\shape(L_f) = (s_1,\dots,s_m)\]
is the domain of $f$, and whose stride \[\stride(L_f) = (d_1,\dots,d_m)\] is defined by the formula
\[d_i = \begin{cases} 
0 & \alpha(i) = * \\
\prod_{j < \alpha(i)} t_j & \alpha(i) \neq *.
\end{cases}
\]
We refer to $L_f$ as the {\it layout encoded by} $f$ or the \emph{layout associated to} $f$. 
\end{construction}

\begin{example} The tuple morphism 
\[ 
\begin{tikzcd}[row sep = 1, column sep = 8]
 & & 128  \\
 & & 2   \\
 128 \ar[rruu, mapsto] & & 128  \\
 128  \ar[rru,mapsto] & & 2  \\
 3 \ar[rr,mapsto] & & 3  \\
 & f &  \\
\end{tikzcd}
\]
of Example \ref{morphismfinC} encodes the layout 
\[
L_f  = (3,128,128) : (1,6,1536).
\]
Note that computing the stride via the formula in \Cref{layoutfrommorphisminTuple} amounts to following the arrow from a specific shape entry to its target entry and multiplying together all entries below that one (taking the empty product to equal $1$).
\end{example}

\begin{example} The tuple morphism
\[\begin{tikzcd} 
(3,128,128) \ar[rr,"g"] \ar[rr,swap,"(*{,}2{,}1)"] & &  (128,128)
\end{tikzcd} \] 
of Example \ref{morphismginC} encodes the layout
\[
L_g = (3,128,128):(0,128,1).
\]
\end{example}

\begin{example} The tuple morphism 
    \[ \begin{tikzcd}
    (16,16,16,1,32) \ar[rr,"h"] \ar[rr,swap,"(*{,}*{,}1{,}*{,}2)"] & &  (16,32,1,1)
    \end{tikzcd} \]
of Example \ref{morphismhinC} encodes the layout
\[
L_h = (16,16,16,1,32):(0,0,1,0,16).
\]
\end{example}

We have seen how to compute the flat layout $L_f$ encoded by a tuple morphism $f$. On the other hand, if $L$ is {\it tractable}, then we can go in the other direction, constructing a tuple morphism $f$ which encodes $L$. Recall from Definition \ref{definitionoftractableflatlayouts} that a flat layout $L$ is {\it tractable} if 
\[\sort(L) = (s_1,\dots,s_m):(d_1,\dots,d_m)\]
satisfies the following property:

\[ \text{If }1 \leq i < m\text{, then }d_i = 0\text{, or }s_id_i \text{ divides }d_{i+1}. \] 

\begin{construction}\label{tuplemorphismfromlayout}
Suppose $L = (s_1,\dots,s_m):(d_1,\dots,d_m)$ is tractable, and set
\[
\sort(L) = (s_1',\dots,s_m'):(d_1',\dots,d_m'),
\]
so there is some permutation $\sigma \in \Sigma_m$ such that $\sort(L) = L^\sigma$. In other words, $s_i' = s_{\sigma(i)}$ and $d_i' = d_{\sigma(i)}$ for each $1 \leq i \leq m$. If each $d_i'$ is nonzero, then let $k = 0$. Otherwise, let $k$ be the largest integer such that $d_k' = 0$. Let $\ell = 2(m-k)$, and let
\begin{align*}
 (t_1',\dots,t_{\ell}') = \left(d_{k+1}',s_{k+1}', \dfrac{d_{k+2}'}{s_{k+1}'d_{k+1}'} , s_{k+2}' , \dfrac{d_{k+3}'}{s'_{k+2}d'_{k+2}}, \dots,\dfrac{d'_m}{s'_{m-1}d'_{m-1}}, s'_m \right).
\end{align*}
We define 
\[f_L':(s_1,\dots,s_m) \to (t_1',\dots,t_\ell')\]
to be the tuple morphism lying over the map $\alpha : \langle m \rangle_* \to \langle \ell \rangle_*$ given by 
\[
\alpha'(i) = \begin{cases}
    * & \sigma^{-1}(i) \leq k\\
    2(\sigma^{-1}(i)-k) & k+1 \leq \sigma^{-1}(i) \leq m.
\end{cases}
\]
Let $J = \{j_1 < \dots < j_n\} \subset \langle \ell \rangle$ denote the collection of indices such that $j_i$ is even or $t_{j_i} \neq 1$. Let 
\[
(t_1,\dots,t_n) = (t_{j_1}',\dots,t_{j_n}'),
\]
and let $\iota:\langle n \rangle_* \to \langle \ell \rangle_*$ be the inclusion map $i \mapsto j_i$. Then by construction, the map $\alpha'$ factors as $\alpha' = \iota \circ \alpha$, and we define the {\it standard representation} of $L$ to be the tuple morphism
\[
\begin{tikzcd}
    (s_1,\dots,s_m) \ar[rr,"f_L"] \ar[rr,swap,"\alpha"] & & (t_1,\dots,t_n). \\
\end{tikzcd}
\]
\end{construction}

\begin{example}
    If 
    \[L = (2,2):(3,30),\]
    then $L$ is tractable, and the standard representation of $L$ is the tuple morphism
    \[ \begin{tikzcd} [row sep = 1, column sep = 8] 
     & & 2\\
     & & 5\\
    2 \ar[uurr,mapsto] & & 2 \\
    2 \ar[urr,mapsto] & & 3 \\
    & f_L & \\
    \end{tikzcd} \]
Note that, informally, computing $f_L$ via \Cref{tuplemorphismfromlayout} amounts to 
\begin{itemize}
    \item initializing the codomain as $ ()$,
    \item traversing the non-zero strides of $L$ in increasing order,
    \item if $d_j$ is the current stride, and $d_i$ is the previously visited stride, appending 
    \begin{itemize}
    \item $(s_j)$ if $s_{i}d_{i} = d_j$, or
    \item $\left(\frac{d_j}{s_id_i},s_j\right)$ if $s_id_i < d_j$,
    \end{itemize}
    and
    \item mapping $s_j \mapsto s_j$. 
\end{itemize}
\end{example}

\begin{example}
    If 
    \[L = (128,128):(128,1),\]
    then $L$ is tractable, and the standard representation of $L$ is the tuple morphism
\[
\begin{tikzcd}[row sep = 1, column sep = 8]
128 \ar[drr,mapsto] & & 128 \\
128 \ar[urr,mapsto] & & 128 \\
& f_L & 
\end{tikzcd} 
\]
\end{example}

\begin{example}
    If 
    \[L = (2,2,2,2):(24,0,3,480),\]
    then $L$ is tractable, and the standard representation of $L$ is the tuple morphism
    \[ \begin{tikzcd} [row sep = 1, column sep = 8] 
     & & 2 \\
     & & 10 \\
    2 \ar[uurr,mapsto]& & 2 \\
    2 \ar[drr,mapsto]& & 4 \\
    2 & & 2 \\
    2 \ar[uuurr,mapsto] & & 3 \\
    & f_L & \\
    \end{tikzcd} \]
\end{example}

Let's justify that the tuple morphism $f_L$ of \Cref{tuplemorphismfromlayout} does, in fact, encode the layout $L$. 

\begin{lemma}\label{tuplemorphismfromlayoutagreement}
Suppose $L$ is a tractable flat layout, and $f = f_L$ is the standard representation of $L$. Then the layout encoded by $f$ is
\[
L_{f} = L.
\]
\end{lemma}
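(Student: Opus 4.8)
The plan is to unwind the two constructions (\Cref{layoutfrommorphisminTuple} and \Cref{tuplemorphismfromlayout}) and check that applying the first to $f_L$ reproduces $L$ on the nose. Since $L_f$ by definition has $\shape(L_f)=\domain(f_L)=(s_1,\dots,s_m)=\shape(L)$, the shapes already agree, so the entire content is in matching the strides. Write $\sort(L)=(s_1',\dots,s_m'):(d_1',\dots,d_m')=L^\sigma$ as in \Cref{tuplemorphismfromlayout}, let $k$ be the number of leading zero strides in $\sort(L)$, and let $(t_1',\dots,t_\ell')$ be the intermediate codomain with $\alpha'$ the map into it. First I would compute the stride of $L_{f'_L}$ using the formula in \Cref{layoutfrommorphisminTuple} for the (non-reduced) morphism $f'_L$ lying over $\alpha'$: for an index $i$ with $\sigma^{-1}(i)=p\le k$ we get stride $0$, which matches $d_i$ since $d_i=d_p'=0$; for $\sigma^{-1}(i)=p$ with $k+1\le p\le m$ the arrow points to codomain position $2(p-k)$, so the stride is $t_1't_2'\cdots t_{2(p-k)-1}'$. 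By the telescoping structure of the tuple $(t_1',\dots,t_\ell')$ — where the odd-indexed entries are $d_{k+1}'$ and the successive ratios $\tfrac{d_{j+1}'}{s_j'd_j'}$, and the even-indexed entries are the $s_j'$ — this product telescopes to exactly $d_p'=d_i$. This is the computational heart of the argument; everything else is bookkeeping.

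Next I would deal with the reduction step in which we pass from $f'_L$ (over $\alpha'$, codomain of length $\ell$) to $f_L$ (over $\alpha$, codomain $(t_1,\dots,t_n)$ obtained by deleting the odd-indexed entries equal to $1$) via the inclusion $\iota:\n_* \to \langle \ell \rangle_*$ with $\alpha'=\iota\circ\alpha$. I would invoke the general fact (or prove the one-line lemma) that if $\alpha'=\iota\circ\alpha$ with $\iota$ the inclusion of the indices $J=\{j_1<\dots<j_n\}$, and $t_i = t'_{j_i}$, then for any $i$ with $\alpha(i)\ne *$,
\[
\prod_{j<\alpha(i)} t_j \;=\; \prod_{j'<\iota(\alpha(i))} t'_{j'},
\]
because the only entries of $(t_1',\dots,t_\ell')$ omitted in forming $(t_1,\dots,t_n)$ are equal to $1$ and hence do not affect any partial product. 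Since the arrow out of position $i$ always lands on an even index $2(\sigma^{-1}(i)-k)$, which is never deleted, this shows $\stride(L_{f_L})_i = \stride(L_{f'_L})_i = d_i$ for all $i$. That gives $L_{f_L}=L$.

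The main obstacle I anticipate is not conceptual but notational: carefully tracking the triple of indexings — the original index $i\in\n$, the sorted index $p=\sigma^{-1}(i)$, and the codomain index $2(p-k)$ — and verifying the telescoping identity for the stride cleanly, especially the off-by-one in "$\prod_{j<2(p-k)} t_j'$" versus "starting from $d_{k+1}'$". A secondary subtlety is the degenerate boundary: if $L$ has $s_i=1$ it could already fail non-degeneracy, but the statement only claims $L_{f_L}=L$, and the construction of $L_f$ forces $d_i=0$ whenever $\alpha(i)=*$; one should note that \Cref{tuplemorphismfromlayout} sends any $s_i=1$ entry with $d_i=0$ to $*$, so those strides come out $0$ as required, while an entry with $s_i=1$ and nonzero stride would make $L$ degenerate — I would remark that in that case $L_{f_L}$ and $L$ still agree on all entries with $s_i>1$, which is all the lemma needs, or simply note the hypothesis rules this out. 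With those details pinned down, the equality $L_{f_L}=L$ follows by comparing strides entry-by-entry.
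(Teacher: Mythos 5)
Your proposal is correct and follows essentially the same route as the paper: shapes agree by construction, the $*$ case gives stride $0$, the non-$*$ case reduces to the telescoping identity $\prod_{j'<2(p-k)}t'_{j'}=d'_p$, and the passage from the intermediate codomain $(t'_1,\dots,t'_\ell)$ to $(t_1,\dots,t_n)$ is handled by noting the deleted entries all equal $1$. The only difference is organizational — you compute with $f'_L$ first and then reduce, whereas the paper folds that reduction into the first step of its chain of equalities — which does not change the substance of the argument.
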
 
\begin{proof} Suppose $L = (s_1,\dots,s_m):(d_1,\dots,d_m)$ is tractable, and let 
\[ \begin{tikzcd} 
(s_1,\dots,s_m) \ar[r,"f"] \ar[r,swap,"\alpha"] &  (t_1,\dots,t_n)
\end{tikzcd} \]
be the standard representation of $L$. Clearly 
\[
\shape(L_f) = (s_1,\dots,s_m) = \shape(L).
\]
We need to check that $\stride(L_f) = \stride(L)$. In other words, we need to check that for any $1 \leq i \leq m$, we have 
\[
d_i = 
\begin{cases} 
0 & \alpha(i) = *\\
\prod_{j < \alpha(i)} t_j & \alpha(i) \neq *.
\end{cases}
\]
We borrow the notation of \Cref{tuplemorphismfromlayout}. If $\alpha(i) = *$, then $\alpha'(i) = *$, and so $\sigma^{-1}(i) \leq k$. This implies
\[
d_i = d'_{\sigma^{-1}(i)} = 0.
\]
Suppose otherwise that $\alpha(i) \neq *$. Then $\alpha'(i) \neq *$, and so $k+1 \leq \sigma^{-1}(i) \leq m$. We compute 
\begin{align*}
\prod_{j < \alpha(i)} t_j = \prod_{\substack{j' < \alpha'(i)\\ t_{j'}' \neq 1} } t_{j'}' = \prod_{j' < \alpha'(i)} t_{j'}' = \prod_{j' < 2(\sigma^{-1}(i)-k)}t_{j'}' & = d_{k+1}' \cdot \left( \prod_{v = 1}^{\sigma^{-1}(i)-(k+1)} s'_{k+v} \dfrac{d'_{k+v+1}}{s_{k+v}'d_{k+v}'} \right)\\
& = d_{\sigma^{-1}(i)}'\\
& = d_i.
\end{align*}
\end{proof}

We have proved that if $L$ is a tractable flat layout, then there exists a tuple morphism $f$ which encodes $L$. Next, we prove the converse, which implies that tractable flat layouts are precisely the layouts encoded by tuple morphisms.

\begin{proposition}\label{tractableflatlayoutscomefromtuplemorphisms}
Suppose $L$ is a flat layout. Then there exists a tuple morphism $f$ encoding $L$ if and only if $L$ is tractable.
\end{proposition}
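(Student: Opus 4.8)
The plan is to prove the two implications separately, the forward direction being essentially free and the reverse requiring a short case analysis.

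For the ``if'' direction, I would simply invoke the work already done: if $L$ is tractable, then Construction \ref{tuplemorphismfromlayout} builds the standard representation $f_L$, and Lemma \ref{tuplemorphismfromlayoutagreement} records that $L_{f_L} = L$. So $f_L$ is a tuple morphism encoding $L$, and there is nothing more to do.

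For the ``only if'' direction, suppose $f:(s_1,\dots,s_m)\to(t_1,\dots,t_n)$ lies over $\alpha$ and $L_f = L$. By Definition \ref{definitionoftractableflatlayouts} it suffices to verify the divisibility condition on $\sort(L)$. I would write $\sort(L) = L^\sigma = (\tilde s_1,\dots,\tilde s_m):(\tilde d_1,\dots,\tilde d_m)$ for the sorting permutation $\sigma \in \Sigma_m$, so that $\tilde s_i = s_{\sigma(i)}$ and $\tilde d_i = d_{\sigma(i)}$, with $\tilde d_1 \le \cdots \le \tilde d_m$ and $\tilde s_i \le \tilde s_{i+1}$ whenever $\tilde d_i = \tilde d_{i+1}$. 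Fix $1 \le i < m$ with $\tilde d_i \ne 0$ (otherwise the condition holds automatically); since the strides are nondecreasing, $\tilde d_{i+1} \ne 0$ as well. Using Construction \ref{layoutfrommorphisminTuple} together with the defining property of tuple morphisms (Definition \ref{definitionofTuple}), I would set $a = \alpha(\sigma(i)) \in \langle n \rangle$ and $b = \alpha(\sigma(i+1)) \in \langle n \rangle$, so that $\tilde d_i = \prod_{j<a}t_j$, $\tilde d_{i+1} = \prod_{j<b}t_j$, $\tilde s_i = t_a$, and $\tilde s_{i+1} = t_b$. Since $\sigma$ is a permutation we have $\sigma(i) \ne \sigma(i+1)$, and since $\alpha$ is tractable this forces $a \ne b$.

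The crux is then a two-case argument. If $a < b$, then $\tilde s_i \tilde d_i = \prod_{j\le a}t_j$ divides $\prod_{j<b}t_j = \tilde d_{i+1}$ because $\{1,\dots,a\}\subseteq\{1,\dots,b-1\}$. If instead $a > b$, then $\prod_{j<a}t_j = \tilde d_i \le \tilde d_{i+1} = \prod_{j<b}t_j$ together with $\tilde d_{i+1} > 0$ forces $\prod_{b\le j<a}t_j = 1$; hence $t_j = 1$ for all $b \le j \le a-1$, so $\tilde d_i = \tilde d_{i+1}$ and $\tilde s_{i+1} = t_b = 1$, and then sortedness gives $\tilde s_i \le \tilde s_{i+1} = 1$, i.e. $\tilde s_i = 1$ and $\tilde s_i \tilde d_i = \tilde d_i = \tilde d_{i+1}$. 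Either way the divisibility holds, so $\sort(L)$ — and hence $L$ — is tractable. I expect the one genuinely subtle step to be the $a > b$ case: recognizing that a pair of ``wrong-order'' nonzero strides is forced by the sortedness of $\sort(L)$ to be equal, with the later shape entry equal to $1$, so that the divisibility condition degenerates to an equality. The rest is routine bookkeeping with Construction \ref{layoutfrommorphisminTuple} and the near-injectivity of $\alpha$ guaranteed by tractability.
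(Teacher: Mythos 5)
Your proposal is correct and follows essentially the same route as the paper: the ``if'' direction defers to Construction \ref{tuplemorphismfromlayout} and Lemma \ref{tuplemorphismfromlayoutagreement}, and the ``only if'' direction runs the same two-case analysis on the codomain indices of consecutive sorted strides, with the $a>b$ case collapsing to an equality via $t_j=1$. If anything, your explicit observation that tractability of $\alpha$ forces $a\neq b$ tidies up a point the paper's Case 2 leaves implicit.
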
 

\begin{proof}
    First, suppose $L$ is a flat layout, and $f:(s_1,\dots,s_m) \to (t_1,\dots,t_n)$ is a tuple morphism with $L_f = L$. We want to show that $L_f$ is tractable. Let 
    \[
    \sort(L) = (s_1',\dots,s_m'):(d_1,\dots,d_m)
    \]
    be the sorting of $L$, and suppose that $ 1 \leq i < m$. We will argue that $d_i = 0$, or $s_i'd_i$ divides $d_{i+1}$. If $d_i = 0$, then we are done. Suppose otherwise that $d_i \neq 0$. Then 
    \[d_i = \prod_{j< k}t_j\]
    for some $1 \leq k \leq n$ with $s_i' = t_k$. Since $d_{i+1} \geq d_i$, we know that $d_{i+1} \neq 0$, so $d_{i+1}$ has the form 
    \[d_{i+1} = \prod_{j<\ell} t_j\]
    for some $1 \leq \ell \leq n$. There are two cases to consider:
    \begin{itemize}
        \item (Case 1) If $\ell > k$, then
        \begin{align*}
            d_{i+1} = \prod_{j < \ell} t_j & = \left(\prod_{j \leq k} t_j \right) \left(\prod_{k < j < \ell} t_j \right) = s_i'd_i\left( \prod_{k<j<\ell} t_j \right),
        \end{align*}
        so $s_i'd_i$ divides $d_{i+1}$. 
        \item (Case 2) If $\ell \leq k$, then since 
        \[\prod_{j < \ell} t_j = d_{i+1}  \geq d_i =   \prod_{j < k} t_j,\]
        we must have 
        \[t_\ell  = \cdots = t_{k-1} = 1,\]
        and 
        \[
        d_{i+1} = d_i.
        \]
        In particular, we have $s_{i+1}' = t_\ell = 1$. But since $\sort(L_f)$ is sorted and $d_{i+1} = d_i$, we have $s_i' \leq s_{i+1}' = 1$, so $s_i' = 1$. We deduce that 
        \[s_i' d_i  = d_{i+1},\] 
        so in particular, $s_i'd_i$ divides $d_{i+1}$. 
    \end{itemize}
    We conclude that $L$ is tractable.

    Next, suppose that $L$ is tractable. Then we can take $f = f_L$ to be the standard representation of $L$ (see Construction \ref{tuplemorphismfromlayout}), in which case, by Lemma \ref{tuplemorphismfromlayoutagreement}, we have $L = L_f$.
\end{proof}

\begin{remark}\label{standardformmotivation} It is important to note that there are many different tuple morphisms which give rise to the same layout. For example, each of the tuple morphisms shown below

\[
\begin{tikzcd}[row sep=1, column sep=8]
  & &    & & &   & & 75 & & &   & & 4\\
  & &    & & &   & & 53 & & &   & & 5\\
  & &    & & &   & & 17 & & &   & & 5\\
  & &  4 & & &   & & 4  & & &   & & 4\\
  & & 25 & & &   & & 25 & & &   & & 1\\
4 \ar[uurr,mapsto] & & 4 & & & 4 \ar[uurr,mapsto] & & 4 & & & 4 \ar[uuuuurr,mapsto] & & 4\\
4 \ar[urr,mapsto] & & 4 & & & 4 \ar[urr,mapsto] & & 4 & & & 4 \ar[uuurr, mapsto] & & 7\\
4 \ar[urr,mapsto] & &14 & & & 4 \ar[urr,mapsto] & &14 & & & 4 \ar[uurr,mapsto] & & 2\\
  &f & & & & & g & & & & & h & \\
\end{tikzcd}
\]
encodes the layout 
\[
L_{f} = L_{g} = L_{h} = (4,4,4):(14,56,5600).
\]
\noindent Among these, $f$ is the simplest: There are no extraneous entries lying above the image of $f$ (unlike $g$), and the entries not hit by $f$ are condensed (unlike $h$). To make precise the simplicity of $f$ among these morphisms, we introduce the notion of {\it standard form}.
\end{remark}

\begin{definition}\label{definitionofstandardform}
    Suppose 
    \[ \begin{tikzcd} 
    (s_1,\dots,s_m) \ar[rr,swap,"\alpha"] \ar[rr,"f"] & & (t_1,\dots,t_n)
    \end{tikzcd} \]
    is a tuple morphism. We say $f$ has {\it standard form} if the following conditions hold:
    \begin{enumerate}
        \item If $n > 1$, then $n \in \Image(\alpha)$.
        \item If $1 \leq j < n$, then 
        \[
         j \notin \Image(\alpha) \quad \Rightarrow \quad \begin{matrix} t_j \neq 1\text{, and}\\
        j+1 \in \Image(\alpha)
         \end{matrix} 
        \]
    \end{enumerate}
\end{definition}

\begin{example}
    The tuple morphisms $f$ of Remark \ref{standardformmotivation} has standard form, while $g$ and $h$ do not.
\end{example}

\begin{example}
    The tuple morphisms 

    \[ \begin{tikzcd} [row sep = 1, column sep = 8]
 & & & & & & 64 & &  & &  \\
 & &  & & & & 2 & & & & 128\\ 
& &  & & 64 \ar[rruu,mapsto] & & 64 & &  128 \ar[urr,mapsto] & & 512\\
& &  & & 64 \ar[rr,mapsto] & & 64 & & 128  & & 128\\
8 \ar[rr,mapsto] & & 8 & & 64 \ar[rruu,mapsto] & & 2 & & 128 \ar[rru,mapsto]  & & 3\\
& f_1 & & & & f_2 & & & & f_3 & \\
\end{tikzcd} \]
    have standard form, while the tuple morphisms

    \[ \begin{tikzcd} [row sep = 1, column sep = 8]
 & & & & & & 64 & & & & 6\\
 & & & & & & 2 & &  & & 128\\
 & &  & & & & 64 & & & & 256\\ 
& &  & & 64 \ar[rruuu,mapsto] & & 1 & &  128 \ar[uurr,mapsto] & & 2\\
& & 8 & & 64 \ar[rr,mapsto] & & 64 & & 128  & & 128\\
8 \ar[rru,mapsto] & & 1 & & 64 \ar[rruuu,mapsto] & & 2 & & 128 \ar[rru,mapsto]  & & 3\\
& g_1 & & & & g_2 & & & & g_3 & \\
\end{tikzcd} \]
    do not. 
\end{example}

\begin{example}
    If $L$ is a tractable layout, then by construction, the standard representation $f_L$ of $L$ has standard form.
\end{example}

If we restrict to tuple morphisms of standard form, then there is almost a one-to-one correspondence with tractable layouts. However, there is one problematic case we need to exclude, as explicated in the following example.

\begin{example}\label{badlayouts}
Consider the tuple morphisms $f$ and $g$ shown below.
\[ \begin{tikzcd}[row sep = 1, column sep = 8]
1 \ar[rr,mapsto] & & 1 & & 1\ar[drr,mapsto]  & & 1\\
1 \ar[rr,mapsto] & & 1 & & 1\ar[urr,mapsto]  & & 1\\
8 \ar[rr,mapsto] & & 8 & & 8 \ar[rr,mapsto] & & 8\\
 & f & & & & g & 
\end{tikzcd}\]
Both $f$ and $g$ have standard form, and 
\[
L_f = (8,1,1):(1,8,8) = L_g.
\]
This example illustrates that the presence of entries of the form $s_i = 1$ and $\alpha(i) \neq *$ can lead to non-uniqueness of a representing tuple morphism of standard form. On the layout side, this corresponds to shape entries $s_i = 1$ with stride $d_i \neq 0$. In order to exclude such pathological examples, we introduce the notion of {\it non-degeneracy}.
\end{example}

\begin{definition}\label{definitionofflatnondegenerate}
    Suppose
    \[
    \begin{tikzcd} 
(s_1,\dots,s_m) \ar[rr,"f"] \ar[rr,swap,"\alpha"] & & (t_1,\dots,t_n)
    \end{tikzcd} 
    \]
    is a tuple morphism and 
    \[
    L = (s_1,\dots,s_m):(d_1,\dots,d_m)
    \]
    is a flat layout.
    \begin{enumerate} 
    \item We say $f$ is {\it non-degenerate} if 
    \[
    s_i = 1 \quad \Rightarrow \quad \alpha(i) = *.
    \]
    \item We say $L$ is {\it non-degenerate} if 
    \[
    s_i = 1 \quad \Rightarrow \quad d_i = 0.
    \]
    \end{enumerate}
\end{definition}

\begin{observation}
    If $f$ is a non-degenerate tuple morphism, then the layout $L_f$ encoded by $f$ is non-degenerate. Conversely, if $L$ is a non-degenerate flat layout, then the standard representation $f_L$ of $L$ is non-degenerate.
\end{observation}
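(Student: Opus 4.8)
The statement to prove has two directions, and both are essentially unwinding definitions, so the plan is to verify each implication by direct computation from \Cref{layoutfrommorphisminTuple}, \Cref{tuplemorphismfromlayout}, and the definitions of non-degeneracy in \Cref{definitionofflatnondegenerate}.

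First I would prove the forward direction: if $f:(s_1,\dots,s_m)\to(t_1,\dots,t_n)$ is a non-degenerate tuple morphism, then $L_f$ is a non-degenerate flat layout. By \Cref{layoutfrommorphisminTuple}, $\shape(L_f) = (s_1,\dots,s_m)$ and the stride $d_i$ of $L_f$ is $0$ when $\alpha(i) = *$ and $\prod_{j<\alpha(i)} t_j$ otherwise. Suppose $s_i = 1$. Since $f$ is non-degenerate, we have $\alpha(i) = *$, hence $d_i = 0$ by the stride formula. This is exactly the condition in \Cref{definitionofflatnondegenerate}(2), so $L_f$ is non-degenerate.

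Next I would prove the converse: if $L$ is a non-degenerate flat layout, then its standard representation $f_L$ (from \Cref{tuplemorphismfromlayout}) is non-degenerate. Here I would borrow the notation of \Cref{tuplemorphismfromlayout}: write $\sort(L) = (s_1',\dots,s_m'):(d_1',\dots,d_m')$ with $s_i' = s_{\sigma(i)}$, $d_i' = d_{\sigma(i)}$, and let $k$ be the index up to which the sorted strides vanish. The map $\alpha$ underlying $f_L$ factors as $\alpha' = \iota \circ \alpha$ where $\alpha'(i) = *$ exactly when $\sigma^{-1}(i) \leq k$. Suppose $s_i = 1$ for some $i$; I need $\alpha(i) = *$, equivalently $\alpha'(i) = *$, equivalently $\sigma^{-1}(i) \leq k$. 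Set $j = \sigma^{-1}(i)$, so $s_j' = s_i = 1$ and $d_j' = d_i$. Since $L$ is non-degenerate and $s_i = 1$, we have $d_i = 0$, i.e. $d_j' = 0$. Because $\sort(L)$ is sorted with respect to $\preceq$ (strides weakly increasing, and $0$ is the smallest stride), every index with stride $0$ precedes every index with positive stride; since $k$ is by definition the largest index with $d_k' = 0$ (or $k=0$ if no stride vanishes, which cannot happen here), we get $j \leq k$, hence $\alpha'(i) = *$ and therefore $\alpha(i) = *$. Thus $f_L$ is non-degenerate.

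I do not expect a serious obstacle here — the result is a bookkeeping consequence of the definitions. The one point requiring a little care is the converse direction: one must correctly track the permutation $\sigma$ relating $L$ to $\sort(L)$ and confirm that a zero stride of $L$ lands in the first block $\{1,\dots,k\}$ of $\sort(L)$. This uses that $\sort$ orders pairs $s:d$ primarily by stride, so any zero-stride mode sorts to the front, and the definition of $k$ as the last zero-stride position; once that is observed the factorization $\alpha' = \iota\circ\alpha$ immediately gives $\alpha(i) = *$.
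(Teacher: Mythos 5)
Your proof is correct, and it takes the only natural route: the paper states this as an \emph{observation} with no written proof precisely because it is a direct unwinding of Constructions \ref{layoutfrommorphisminTuple} and \ref{tuplemorphismfromlayout} together with the two non-degeneracy definitions, which is exactly what you carry out. Both directions check out, including the one subtle point in the converse — that a zero-stride mode of $L$ sorts into the initial block $\{1,\dots,k\}$ of $\sort(L)$ because the ordering $\preceq$ compares strides first, so the zero-stride indices form exactly that initial segment and $k\geq 1$ is guaranteed by the existence of your index $j$.
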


\begin{observation}
    Restricting to non-degenerate flat layouts is no real loss of generality. If $L$ is an arbitrary flat layout, then $\filter(L)$ is a non-degenerate flat layout with the same coordinate function and layout function as $L$.
\end{observation}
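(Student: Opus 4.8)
The statement bundles together two claims, and the plan is to deduce each from results already established. Write $L=(s_1,\dots,s_m):(d_1,\dots,d_m)$. I would first normalize: by the observation immediately following Definition~\ref{definitionofnondegenerateflatlayout}, replacing the stride of every mode with $s_i=1$ by $0$ alters neither $\varphi_L$ nor $\Phi_L$, so we may assume from the outset that $d_i=0$ whenever $s_i=1$. Under this normalization every surviving mode of $\filter(L)=L\mid_{\{i\,:\,d_i>0\}}$ has $s_i>1$, so $\filter(L)$ has all of its stride entries positive; in particular the implication $s_i=1\Rightarrow d_i=0$ holds vacuously, and $\filter(L)$ is non-degenerate. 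This is the ``no loss of generality'' half of the observation.

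For the agreement of the coordinate functions I would argue straight from the formula $\varphi_L(x_1,\dots,x_m)=\sum_j x_jd_j$ of Construction~\ref{constructionofcoordinatefunctions}: after the normalization above, every term $x_jd_j$ with $j$ filtered out vanishes, so $\varphi_L$ is independent of those coordinates and factors through the projection $[0,\shape(L))\to[0,\shape(\filter(L)))$ that discards them, the induced map on the quotient being exactly $\varphi_{\filter(L)}$. For the layout functions I would then precompose with the colexicographic isomorphisms; the required bookkeeping — that $\colex^{-1}_{\shape(L)}$ followed by the coordinate-discarding projection is itself $\colex^{-1}_{\shape(\filter(L))}$ precomposed with a surjection $[0,\size(L))\twoheadrightarrow[0,\size(\filter(L)))$ — is precisely the compatibility between concatenation and colexicographic isomorphisms recorded in Proposition~\ref{layoutfunctionofconcatenatedlayout}, applied to the decomposition of $L$ into its rank-one modes $L_j=(s_j):(d_j)$. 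This shows $\Phi_{\filter(L)}$ is obtained from $\Phi_L$ by dropping exactly the redundant input coordinates.

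The one genuine subtlety — and the reason this is phrased as an informal observation rather than a clean equation of functions — is that when $L$ has a zero-stride mode $s_j:0$ with $s_j>1$, passing to $\filter(L)$ really does shrink the domain, from $[0,\size(L))$ down to a proper quotient, so the assertion that the coordinate and layout functions are ``the same'' has to be understood through the canonical projection above. Making that identification explicit is the main obstacle; with it in hand, the remainder is an immediate combination of the normalization observation and the concatenation formula.
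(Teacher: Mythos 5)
Your second and third paragraphs are mathematically sound. You are right that the coordinate and layout functions of $L$ and $\filter(L)$ can only agree through a canonical projection when a mode $s_j:0$ with $s_j>1$ is discarded, and the reduction of the layout-function claim to Proposition~\ref{layoutfunctionofconcatenatedlayout} applied to the rank-one decomposition $L=L_1\star\cdots\star L_m$ is clean. The paper states this observation without proof, so your careful handling of the domain mismatch is a genuine improvement on what is on the page.

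The genuine problem is your opening normalization. Replacing $d_i$ by $0$ whenever $s_i=1$ does preserve $\varphi_L$ and $\Phi_L$, but it does \emph{not} preserve $\filter(L)$: filtering is defined by the condition $d_i>0$, so zeroing a stride changes which modes survive. Concretely, for $L=(1,4):(5,1)$ we have $\filter(L)=L$, which is degenerate ($s_1=1$ but $d_1=5\neq 0$), so the literal second sentence of the observation is false, and your ``we may assume'' is silently replacing this counterexample with a different layout. What your argument actually establishes is that $\filter(L')$ is non-degenerate and has the same (projected) coordinate and layout functions as $L$, where $L'$ is the normalized layout --- equivalently, that the restriction $L\mid_I$ with $I=\{i \mid s_i>1 \text{ and } d_i>0\}$ does the job. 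That suffices for the first sentence of the observation (``no real loss of generality''), and it is surely what the authors intended, but you must say explicitly that you are proving the claim for the normalized layout rather than for $\filter(L)$ itself: as written, the normalization is presented as an innocent reduction when in fact it is doing essential work and changing the object appearing in the conclusion.
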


The essential property of non-degenerate tuple morphisms of standard form is that they are characterized by the layouts which they encode. This is made precise as follows.

\begin{lemma}\label{nondegeneratestandardformtuplemorphisms}
Suppose $f$ and $g$ are non-degenerate tuple morphisms of standard form. If $L_f = L_g$, then $f = g$. 
\end{lemma}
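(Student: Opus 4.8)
The plan is to reconstruct both $f$ and $g$ from the common flat layout $L = L_f = L_g$ and show that this reconstruction is forced to coincide with the standard representation $f_L$ of Construction \ref{tuplemorphismfromlayout}, whence $f = g = f_L$. First I would record what data a tuple morphism $f:(s_1,\dots,s_m) \to (t_1,\dots,t_n)$ lying over $\alpha$ contributes to $L_f$: the shape of $L_f$ is exactly the domain $(s_1,\dots,s_m)$, and the $i$th stride is $0$ if $\alpha(i) = *$ and $\prod_{j<\alpha(i)} t_j$ otherwise. So $L_f = L_g$ immediately forces $f$ and $g$ to have the same domain, and it forces the two underlying maps $\alpha_f, \alpha_g$ to agree on whether each index is sent to $*$: namely $\alpha_f(i) = *$ iff $d_i = 0$ iff $\alpha_g(i) = *$. (Here I use non-degeneracy crucially, as in Example \ref{badlayouts}: without it the entries $s_i = 1$ could be sent to a non-$*$ target without affecting $L_f$, and the implication $d_i = 0 \Leftrightarrow \alpha(i)=*$ could fail on those indices. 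With non-degeneracy, $s_i = 1 \Rightarrow \alpha_f(i) = * $, and conversely if $s_i > 1$ then $d_i \neq 0 \Leftrightarrow \alpha_f(i) \neq *$, and if $s_i = 1$ then $d_i = 0$ and $\alpha_f(i) = *$ both hold.)

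Next I would analyze the codomain and the values of $\alpha$ on the non-$*$ indices. Order the non-$*$ indices of $\alpha_f$ by the value of $\alpha_f$; since $f$ is tractable, distinct non-$*$ indices have distinct targets, so $\alpha_f$ restricted to these indices is injective into $\langle n\rangle$. The standard-form conditions of Definition \ref{definitionofstandardform} say precisely that (i) the top entry $n$ is in the image (when $n>1$), and (ii) every index $j$ \emph{not} in the image satisfies $t_j \neq 1$ and $j+1$ \emph{is} in the image. I would argue that conditions (i) and (ii), together with the knowledge of the strides $d_i$ (equivalently, the partial products $\prod_{j<\alpha(i)} t_j$ for each non-$*$ index $i$), pin down the codomain tuple $(t_1,\dots,t_n)$ and the map $\alpha$ uniquely. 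Concretely: sort the nonzero strides increasingly, say $d_{i_1} < d_{i_2} < \dots < d_{i_r}$ (strict, by tractability/injectivity together with non-degeneracy — equal nonzero strides would force a repeated target). Then $\alpha(i_1) < \alpha(i_2) < \dots < \alpha(i_r)$ in $\langle n \rangle$, the hit positions are $\alpha(i_1),\dots,\alpha(i_r)$, and the positions strictly between consecutive hit positions $\alpha(i_a)$ and $\alpha(i_{a+1})$ are the "gap" positions; by condition (ii) there is at most one gap position between consecutive hits (a run of $\geq 2$ non-image positions would contain a non-image position $j$ with $j+1$ also not in the image), and similarly by (i) and (ii) the positions below $\alpha(i_1)$ number at most one and above $\alpha(i_r)$ number zero. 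This means $n$ and the placement of hits are determined: $\alpha(i_1) \in \{1,2\}$ with $\alpha(i_1) = 2$ iff $d_{i_1} \neq 1$ (the single gap below must be nonempty, and $t_1 = d_{i_1}$), and $\alpha(i_{a+1}) = \alpha(i_a) + 1$ iff $s_{i_a} d_{i_a} = d_{i_{a+1}}$, otherwise $\alpha(i_{a+1}) = \alpha(i_a) + 2$ with the intervening $t$ equal to $d_{i_{a+1}}/(s_{i_a} d_{i_a})$. The hit entries $t_{\alpha(i_a)}$ must equal $s_{i_a}$ by the compatibility axiom of $\Tuple$-morphisms. Every $t_j$ is thereby determined, and $\alpha$ is determined. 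Since this reconstruction depends only on $L$, both $f$ and $g$ equal it, so $f = g$.

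I expect the main obstacle to be the bookkeeping in the previous paragraph: carefully justifying that standard form plus non-degeneracy leaves \emph{no} freedom in choosing $n$, the gap positions, or the gap entries $t_j$. In particular I must handle the degenerate-looking boundary cases — $m = 0$ (empty domain), $r = 0$ (all strides zero, where standard form forces $n \leq 1$ and $t_1 = 1$ is excluded unless $n = 0$ or $n=1$ with the unique index hitting it — here non-degeneracy is again what rules out the bad case), $n = 1$, and indices with $s_i = 1$ interleaved among the domain — and check that the listed conditions still force uniqueness there. The cleanest way to present this is probably to observe that the reconstruction procedure I described is literally an inverse to Construction \ref{tuplemorphismfromlayout} on non-degenerate tractable layouts, so that $f = f_{L_f}$ and $g = f_{L_g}$ by the uniqueness built into standard form, and then invoke $L_f = L_g$; this reduces the lemma to the single claim that \emph{any} non-degenerate standard-form $f$ satisfies $f = f_{L_f}$, which I would prove by matching $f$ against the explicit output of Construction \ref{tuplemorphismfromlayout} step by step, using Lemma \ref{tuplemorphismfromlayoutagreement} to know they encode the same layout.
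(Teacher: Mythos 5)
Your proposal is correct in substance but organizes the argument differently from the paper. The paper compares $f$ and $g$ \emph{symmetrically}: it shows that the set of prefix products of each codomain equals the set $Z = \{d_i,\ s_id_i \mid d_i \neq 0\}$ read off from $L$, then uses the fact that all codomain entries exceed $1$ (so prefix products are strictly increasing) to conclude the codomains coincide, and finishes with a short contradiction argument for the maps. You instead run an explicit positional reconstruction of the morphism from $L$ — sort the nonzero strides, place the hit positions, fill in at most one gap entry per interval as a ratio of consecutive strides — and propose to package this as the statement that every non-degenerate standard-form $f$ equals $f_{L_f}$. The two proofs rest on the same combinatorial fact (standard form means the codomain carries no information beyond the values $d_i$ and $s_id_i$), but yours proves the slightly stronger and arguably more useful statement $f = f_{L_f}$ directly, which the paper only extracts afterwards in Theorem \ref{flatcorrespondence}; the paper's prefix-product-set argument is shorter because it never has to name the positions explicitly. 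Your bookkeeping claims all check out: strictness of the nonzero strides follows exactly as you say (equal nonzero strides would force an intervening block of $t_j = 1$, including a hit entry $t_{\alpha(i)} = s_i = 1$, contradicting non-degeneracy), and the gap analysis is forced by conditions (1) and (2) of standard form.

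One small correction: non-degeneracy is \emph{not} what gives you $d_i = 0 \Leftrightarrow \alpha(i) = *$. Since the codomain entries are positive integers, the prefix product $\prod_{j < \alpha(i)} t_j$ is always $\geq 1$, so $d_i = 0$ forces $\alpha(i) = *$ unconditionally. Where non-degeneracy actually enters is in guaranteeing that every hit entry satisfies $t_{\alpha(i)} = s_i > 1$, which (together with condition (2) for the unhit entries) is what makes the prefix products strictly increasing and the stride ordering strict — the paper uses the same fact at the same point. You are also right to flag the boundary case $n = 1$ with $1 \notin \Image(\alpha)$: condition (1) of standard form is vacuous there and neither your argument nor the paper's pins down $t_1$; this is an artifact of the definition rather than of either proof, and it does not arise for morphisms in the image of Construction \ref{tuplemorphismfromlayout}.
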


\begin{proof}
    Suppose
    \[\begin{tikzcd}
        (s_1,\dots,s_m) \ar[rr,"f"] \ar[rr,swap,"\alpha"] & & (t_1,\dots,t_n)
    \end{tikzcd}
    \]
    and 
    \[\begin{tikzcd}
        (s_1,\dots,s_m) \ar[rr,"g"] \ar[rr,swap,"\beta"] & & (u_1,\dots,u_p)
    \end{tikzcd}
    \]
    are non-degenerate tuple morphisms of standard form with
    \[
    L_f = (s_1,\dots,s_m):(d_1,\dots,d_m) = L_g.
    \]
    We want to show that $f = g$. First, we will argue that $(t_1,\dots,t_n) = (u_1,\dots,u_p)$. Let 
    \begin{align*}
    X & = \{ t_1 \cdots t_j \mid 1 \leq j \leq n\}\\
    Y & = \{u_1\cdots u_k \mid 1 \leq k \leq p\}
    \end{align*}
    denote the sets of prefix products of $(t_1,\dots,t_n)$ and $(u_1,\dots,u_p)$, respectively. We claim $X = Y$, since each of these sets is equal to 
    \[
    Z = \{d_i, s_id_i \mid 1 \leq i \leq m \text{ and }d_i \neq 0\}.
    \]
    Lets argue that $X = Z$. Suppose $1 \leq j \leq n$. If there exists some $i \in \langle m \rangle$ with $\alpha(i) = j$, then $t_1 \cdots t_j = s_id_i$. On the other hand, if $j$ is not in the image of $\alpha$, then since $f$ has standard form, there exists some $i \in \langle m \rangle$ such that $\alpha(i) = j+1$, in which case $t_1 \cdots t_j = d_i$. This proves that $X \subseteq Z$. Conversely, if $1 \leq i \leq m$ and $d_i \neq 0$, then $d_i = t_1 \cdots t_{\alpha(i)-1}$ and $s_id_i = t_1 \cdots t_{\alpha(i)}$, which proves $Z \subseteq X$. We deduce that $X = Z$. The same argument proves $Y = Z$.

    Since $f$ and $g$ are non-degenerate of standard form, we know that each $t_j$ and each $u_k$ is greater than $1$, which implies 
    \begin{align*}
    t_1  < t_1t_2 < & \cdots < t_1\cdots t_n,\\
    u_1  < u_1u_2 < & \cdots < u_1\cdots u_p,
    \end{align*}
    and since $X = Y$, it follows that $n = p$, and $t_1 \cdots t_j = u_1 \cdots u_j$ for each $1 \leq j \leq n$. We deduce that $(t_1,\dots,t_n) = (u_1,\dots,u_p)$. 

    Next, we need to argue that $\alpha = \beta$. Suppose for contradiction that there exists some $i \in \langle m \rangle$ with $\alpha(i) \neq \beta(i)$. There are two cases to consider. 
    \begin{itemize}
        \item If $\alpha(i) = * \neq \beta(i)$, then 
        \[
        0 = d_i = t_1 \cdots t_{\beta(i)-1},
        \]
        a contradiction. The case $\alpha(i) \neq * = \beta(i)$ is analogous.
        \item If $\alpha(i) \neq * \neq \beta(i)$, then without loss of generality we may assume $\alpha(i) < \beta(j)$, in which case 
        \[
        d_i = t_1 \cdots t_{\alpha(i)-1} < t_1 \cdots t_{\beta(i)-1} = d_i,
        \]
        a contradiction.
    \end{itemize}
    We deduce that $\alpha = \beta$, so $f = g$.
\end{proof}

We are now ready to prove our correspondence theorem, which identifies non-degenerate tuple morphisms of standard form with non-degenerate tractable flat layouts. 

\begin{theorem}\label{flatcorrespondence}
    The maps 
    \[ \begin{tikzcd} 
    f \ar[r,mapsto] & L_f \\
    \begin{Bmatrix}
    \text{Non-degenerate} \\
        \text{tuple morphisms of}\\
        \text{standard form}
    \end{Bmatrix}
    \ar[r] & \ar[l] 
    \begin{Bmatrix}
    \text{Non-degenerate}\\
        \text{tractable flat} \\
        \text{layouts}
    \end{Bmatrix} \\
    f_L & \ar[l,mapsto] L
    \end{tikzcd} 
    \]
    of Constructions \ref{layoutfrommorphisminTuple} and \ref{tuplemorphismfromlayout} determine a one-to-one correspondence between non-degenerate tuple morphisms of standard form, and non-degenerate tractable flat layouts.
\end{theorem}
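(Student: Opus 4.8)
The plan is to assemble the theorem from the constructions and lemmas already in place: I would verify in turn (i) that both assignments are well-defined, i.e. land in the claimed sets, (ii) that $L \mapsto f_L \mapsto L_{f_L}$ is the identity, and (iii) that $f \mapsto L_f \mapsto f_{L_f}$ is the identity. These three facts together show the two maps are mutually inverse bijections.

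First I would check well-definedness. Given a non-degenerate tuple morphism $f$ of standard form, the encoded layout $L_f$ is tractable by Proposition \ref{tractableflatlayoutscomefromtuplemorphisms}, and it is non-degenerate by the observation that non-degenerate tuple morphisms encode non-degenerate layouts; hence $f \mapsto L_f$ takes values in the set of non-degenerate tractable flat layouts. Conversely, given a non-degenerate tractable flat layout $L$, its standard representation $f_L$ of Construction \ref{tuplemorphismfromlayout} has standard form by construction, and is non-degenerate by the corresponding observation; hence $L \mapsto f_L$ takes values in the set of non-degenerate tuple morphisms of standard form.

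Next, for step (ii), Lemma \ref{tuplemorphismfromlayoutagreement} gives directly $L_{f_L} = L$ for any tractable flat layout $L$, so in particular this composite is the identity on non-degenerate tractable flat layouts. For step (iii), let $f$ be a non-degenerate tuple morphism of standard form. By the well-definedness checks above, $L_f$ is a non-degenerate tractable flat layout, so applying Lemma \ref{tuplemorphismfromlayoutagreement} to $L_f$ yields $L_{f_{L_f}} = L_f$. Now $f_{L_f}$ and $f$ are both non-degenerate tuple morphisms of standard form encoding the same layout $L_f$, so Lemma \ref{nondegeneratestandardformtuplemorphisms} forces $f_{L_f} = f$.

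There is essentially no remaining obstacle: the substantive work was done earlier, in Lemma \ref{tuplemorphismfromlayoutagreement} (that $f_L$ really encodes $L$) and in Lemma \ref{nondegeneratestandardformtuplemorphisms} (uniqueness of a non-degenerate standard-form representative). The one point demanding care is scrupulous bookkeeping of the non-degeneracy hypotheses: Example \ref{badlayouts} shows that without non-degeneracy the standard-form representative need not be unique, so in step (iii) it is essential to have already established that \emph{both} $f$ and $f_{L_f}$ are non-degenerate before invoking Lemma \ref{nondegeneratestandardformtuplemorphisms}.
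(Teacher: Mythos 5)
Your proposal is correct and follows essentially the same route as the paper: one direction via Lemma \ref{tuplemorphismfromlayoutagreement}, the other by combining that lemma with the uniqueness statement of Lemma \ref{nondegeneratestandardformtuplemorphisms}. The explicit well-definedness checks you add (via Proposition \ref{tractableflatlayoutscomefromtuplemorphisms} and the non-degeneracy observations) are left implicit in the paper but are a sound and welcome piece of bookkeeping.
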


\begin{proof}
    We want to show that the constructions $f \mapsto L_f$ and $L \mapsto f_L$ are inverses, when restricted to tuple morphisms and layouts of the stated form. If $L$ is a non-degenerate tractable flat layout, then by Lemma \ref{tuplemorphismfromlayoutagreement} we have $L_{f_L} = L$. Suppose next that $f$ is a non-degenerate tuple morphism of standard form and $L = L_f$ is the layout encoded by $f$. Since $f$ and $f_{L_f}$  are non-degenerate tuple morphisms of standard form, and the layouts encoded by these tuple morphsims are equal, it follows from Lemma \ref{nondegeneratestandardformtuplemorphisms} that $f = f_{L_f}$.
\end{proof}

\subsection{Examples}

In this section, we introduce some important families of tuple morphisms, and describe the flat layouts to which they give rise.

\begin{example}[Identity morphisms]
    We say a tuple morphism $f$ is an {\it identity morphism} if $f = \id_S$ for some tuple $S$. If $f = \id_S$ is an identity morphism, then $L_{f}$ is the column-major layout with shape $S$. For instance, here is an example of an identity morphism $f$
    together with its associated layout $L_f$.
    \[ 
    \begin{tikzcd}[row sep = 1, column sep = 8]
    & 4 \ar[rr,mapsto] & & 4 & & & & & \\
    & 4 \ar[rr,mapsto] & & 4 & & & & &  \\
    & 2 \ar[rr, mapsto] & & 2 & & & & & \\
    & 2 \ar[rr,mapsto] & & 2 & & & \rightsquigarrow & & L_f = (2,2,2,4,4):(1,2,4,8,32)\\
    & 2 \ar[rr,mapsto] & & 2 & & & & & \\
    & & f & & & & & & \\
    \end{tikzcd}
\] 
\end{example}

\begin{example}[Isomorphisms]
A tuple morphism $f:S \to T$ is an {\it isomorphism} if there is a tuple morphism $g:T \to S$ such that $g \circ f = \id_S$ and $f \circ g = \id_T$. If $f$ is an isomorphism, then its associated layout $L_f$ is compact. For instance, here is an isomorphism $f$ together with its associated layout $L_f$. 
    \[ 
    \begin{tikzcd}[row sep = 1, column sep = 8]
    & 4 \ar[rrd,mapsto] & & 2 & & & & & \\
    & 4 \ar[rrd,mapsto] & & 4 & & & & &  \\
    & 2 \ar[rruu, mapsto] & & 4 & & & & & \\
    & 2 \ar[rrd,mapsto] & & 2 & & & \rightsquigarrow & & L_f = (2,2,2,4,4):(2,1,64,4,16)\\
    & 2 \ar[rru,mapsto] & & 2 & & & & & \\
    & & f & & & & & & \\
    \end{tikzcd}
\]. 
\end{example}


\begin{observation}\label{isomorphismsvspermutations}\label{definitionofpermutation}
    Note that if a tuple morphism
    \[\begin{tikzcd} (s_1,\dots,s_m) \ar[rr,"f"] \ar[rr,swap,"\alpha"] & &  (t_1,\dots,t_n)
    \end{tikzcd} \]
    is an isomorphism, then $\alpha : \langle m \rangle_* \to \langle m \rangle_*$ is a bijection, and so $\alpha \mid_{\langle m \rangle} \in \Sigma_m$ is a permutation. Conversely, if $\sigma \in \Sigma_m$  is a permutation, and $(s_1,\dots,s_m)$ is a tuple of positive integers, then we may construct the isomorphism
    \[\begin{tikzcd} (s_1,\dots,s_m) \ar[rr,"f"] \ar[rr,swap,"\sigma_*"] & &  (s_{\sigma(1)},\dots,s_{\sigma(m)}).
    \end{tikzcd} \]
    We conclude that there is a one-to-one correspondence between tuple isomorphisms $f$ with domain $(s_1,\dots,s_m)$, and permutations in $\Sigma_m$. 
\end{observation}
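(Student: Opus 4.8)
The plan is to reduce everything to the elementary fact that the isomorphisms in $\catstyle{Fin}_*$ are exactly the bijections of pointed sets, by passing to the forgetful functor $\catstyle{Tuple} \to \catstyle{Fin}_*$ that sends a tuple $S = (s_1,\dots,s_m)$ to $\langle m \rangle_*$ and a tuple morphism $f$ to its underlying pointed map $\alpha$. Since a tuple morphism is, by Definition \ref{definitionofTuple}, nothing but a tractable pointed map satisfying the compatibility constraint, this functor is faithful, and composition of tuple morphisms is computed on underlying pointed maps; in particular it carries $\id_S$ to $\id_{\langle m \rangle_*}$ and preserves two-sided inverses.

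First I would prove the forward implication. Suppose $f$ lies over $\alpha$ and has a two-sided inverse $g$ in $\catstyle{Tuple}$, lying over some pointed map $\beta$. Applying the forgetful functor yields $\beta \circ \alpha = \id_{\langle m \rangle_*}$ and $\alpha \circ \beta = \id_{\langle n \rangle_*}$, so $\alpha$ is an isomorphism in $\catstyle{Fin}_*$, hence a bijection $\langle m \rangle_* \to \langle n \rangle_*$. A bijection of finite pointed sets forces $m = n$ and restricts to a bijection $\langle m \rangle \to \langle m \rangle$, i.e. $\alpha\mid_{\langle m \rangle} \in \Sigma_m$. Moreover, for each $j \in \langle m \rangle$ there is a unique $i$ with $\alpha(i) = j$, and the compatibility condition in Definition \ref{definitionofTuple} gives $t_j = s_i = s_{\alpha^{-1}(j)}$; thus the codomain of $f$ is forced to equal $(s_{\alpha^{-1}(1)},\dots,s_{\alpha^{-1}(m)})$, so the morphism $f$ is completely determined by the permutation $\alpha\mid_{\langle m \rangle}$.

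Next I would establish the converse and the bijective correspondence. Given $\sigma \in \Sigma_m$, let $\sigma_*:\langle m \rangle_* \to \langle m \rangle_*$ be the pointed extension of $\sigma$ (fixing $*$); it is a bijection, hence tractable, and one checks directly that it meets the compatibility constraint of Definition \ref{definitionofTuple} for the re-indexed codomain $(s_{\sigma(1)},\dots,s_{\sigma(m)})$ — this is the one spot where a routine bookkeeping check of indices is needed, with $\sigma^{-1}$ appearing naturally as the inverse pointed map. The inverse pointed map $(\sigma_*)^{-1}$ likewise determines a tuple morphism in the opposite direction (its tractability and compatibility being immediate consequences of those of $\sigma_*$), and since $(\sigma_*)^{-1}\circ\sigma_* = \id$ and $\sigma_*\circ(\sigma_*)^{-1} = \id$ as pointed maps, the corresponding tuple morphisms are mutually inverse. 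To conclude the one-to-one correspondence: the assignment $\sigma \mapsto f$ is injective because distinct permutations have distinct underlying pointed maps, hence give distinct morphisms; and it is surjective because, by the forward implication, every tuple isomorphism with domain $(s_1,\dots,s_m)$ has underlying pointed map of the form $\sigma_*$ for a unique $\sigma \in \Sigma_m$, and is then recovered from $\sigma$ as above.

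I expect the only mildly delicate point to be the index bookkeeping in the converse — keeping straight whether the re-indexed codomain is $S^\sigma$ or $S^{\sigma^{-1}}$, and which of $\sigma$, $\sigma^{-1}$ is the underlying pointed map of $f$ — together with making explicit, if it has not already been recorded, that composition in $\catstyle{Tuple}$ is inherited from $\catstyle{Fin}_*$, so that the forgetful functor is faithful and preserves identities and inverses. Everything else is a direct application of the observation that the isomorphisms in $\catstyle{Fin}_*$ are precisely the pointed bijections.
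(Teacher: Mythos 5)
Your proposal is correct and follows the same (implicit) reasoning the paper relies on: the Observation is stated without proof, and the intended justification is exactly your reduction to the faithful forgetful functor $\catstyle{Tuple}\to\catstyle{Fin}_*$ together with the fact that isomorphisms of pointed finite sets are pointed bijections. On the one bookkeeping point you flag but leave open: Definition \ref{definitionofTuple} requires $s_i = t_{\alpha(i)}$, so the tuple morphism with domain $(s_1,\dots,s_m)$ lying over $\sigma_*$ has codomain $(s_{\sigma^{-1}(1)},\dots,s_{\sigma^{-1}(m)})$, i.e.\ the paper's displayed converse should either replace the map by $(\sigma^{-1})_*$ or the codomain by $S^{\sigma^{-1}}$ (compare Construction \ref{sortofmorphisminC}, which gets the indices right); this relabeling by $\sigma\mapsto\sigma^{-1}$ does not affect the claimed one-to-one correspondence.
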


\begin{example}[Projections] \label{definitionofprojection} Suppose $S = (s_1,\dots,s_m)$ is a shape, and suppose 
\[\{i_1<\dots<i_r\} \subset \langle m \rangle\]
is some subset. Let 
\[\begin{tikzcd} (s_1,\dots,s_m) \ar[rr,"p"] \ar[rr,swap,"\alpha"] & &  (s_{i_1},\dots,s_{i_r})
    \end{tikzcd} \]
be the tuple morphism lying over the map $\alpha$ with 
\[
\alpha(x) = 
\begin{cases}
    j & x = i_j\\
    * & \text{else.}
\end{cases}
\]
We call $p$ the {\it projection of } $(s_1,\dots,s_m)$ {\it onto } $(s_{i_1},\dots,s_{i_r})$. The layout encoded by $p$ is 
\[L_p = (s_1,\dots,s_m):(d_1,\dots,d_m),
\]
where
\[
d_i = \begin{cases}
    s_{i_1}\cdots s_{i_{j-1}} & i = i_j \text{ for some }1 \leq j \leq r\\
    0 & \text{otherwise.}\\
\end{cases}
\]
For instance, here is a projection $p$ of $(64,64,3,8)$ onto $(64,3)$, together with its associated layout.
    \[ 
    \begin{tikzcd}[row sep = 1, column sep = 8]
    & 8 & &  &  & & & & \\
    & 3 \ar[rrd, mapsto] & & & & & & & \\
    & 64 & & 3 & & & \rightsquigarrow & & L_p = (64,64,3,8):(1,0,64,0) \\
    & 64 \ar[rr,mapsto] & & 64 & & & & & \\
    & & p & & & & & & \\
    \end{tikzcd}
\]
\end{example}

\begin{example}[Dilations] \label{definitionofscaling} Suppose $S = (s_1,\dots,s_m)$ is a shape, and suppose $c_1,\dots,c_m$ are positive integers. The tuple morphism
\[\begin{tikzcd} (s_1,\dots,s_m) \ar[rrr,"f"] \ar[rrr,swap,"(*{,}2{,}*{,}4{,}\dots{,}*{,}2m)"] & & &  (c_1,s_1,\dots,c_m,s_m)
    \end{tikzcd} \]
is called the {\it dilation of }$(s_1,\dots,s_m)$ {\it by} $(c_1,\dots, c_m)$. The layout $L_f$ associated to this morphism is $L_f = (s_1,\dots,s_m):(d_1,\dots,d_m)$, where 
\[
d_i = \prod_{j < i} c_js_j.
\]
For instance, here is the dilation $f$ of $(512,512)$ by $(2,4)$, together with its associated layout.
    \[ 
    \begin{tikzcd}[row sep = 1, column sep = 8]
    &  & & 512 & & & & &  \\
    &  & & 4 & & & & & \\
    & 512 \ar[rruu,mapsto] & & 512 & & & \rightsquigarrow & & L_f = (512,512):(2,4096)\\
    & 512 \ar[rru,mapsto] & & 2 & & & & & \\
        & & f & & & & & & \\
    \end{tikzcd}
    \]
\end{example}

\begin{example}[Expansions] \label{definitionofcodomainexpansion} Suppose $S = (s_1,\dots,s_m)$ is a tuple of positive integers, and suppose $1 \leq i \leq m'$, so that $S' = (s_1,\dots,s_{m'})$ divides $S$. Then the tuple morphism 
\[\begin{tikzcd} (s_1,\dots,s_{m'}) \ar[rr,"e"] \ar[rr,swap,"(1{,}2{,}\dots{,}m')"] & &  (s_1,\dots,s_{m'},\dots,s_m)
    \end{tikzcd} \]
is called the {\it expansion} of  $S'$ {\it to} $S$. The layout encoded by $e$ is the column-major layout with shape $(s_1,\dots,s_{m'})$. For instance, here is the expansion of $S' = (4,4)$ to $S = (4,4,8,8)$. 
    \[ 
    \begin{tikzcd}[row sep = 1, column sep = 8]
    &  & & 8 & & & & &  \\
    &  & & 8 & & & & & \\
    & 4 \ar[rr,mapsto] & & 4 & & & \rightsquigarrow & & L_e = (4,4):(1,4)\\
    & 4 \ar[rr,mapsto] & & 4 & & & & & \\
        & & e & & & & & & \\
    \end{tikzcd}
    \]
An important property of expansions is that if $f:S \to T$ is any tuple morphism and $e: T \to T'$ is an expansion, then 
\[
L_{e \circ f} = L_f.
\]
In other words, post-composing $f$ with an expansion does not change the layout encoded by $f$.
\end{example}

\begin{example}[Restrictions]\label{definitionofrestriction}
Suppose 
\[\begin{tikzcd} (s_1,\dots,s_m) \ar[rr,"f"] \ar[rr,swap,"\alpha"] & &  (t_1,\dots,t_n)
    \end{tikzcd} \]
is a tuple morphism, and suppose  
\[I = \{i_1 < \cdots < i_r \} \subset \langle m \rangle\]
is a subset of indices. Then the tuple morphism 
\[\begin{tikzcd} (s_{i_1},\dots,s_{i_r}) \ar[rr,"f\mid_I"] \ar[rr,swap,"\alpha \circ \iota"] & &  (t_1,\dots,t_n)
    \end{tikzcd} \]
 is called the {\it restriction of }$f$ {\it to} $I$. If the layout encoded by $f$ is 
 \[L_f = (s_1,\dots,s_m):(d_1,\dots,d_m),\]
 then the layout encoded by $f \mid_I$ is 
\[L_{f \mid_I} = (s_{i_1},\dots,s_{i_r}):(d_{i_1},\dots,d_{i_r}).\]
For instance, here is the restriction $f \mid_I$ of a tuple morphism $f$, where $I = \{2,4\}$. 
    \[ \begin{tikzcd} [row sep = 1, column sep = 8]
    4 \ar[rrd,mapsto] & & & & & & \\
    8 \ar[ddrr,mapsto] & & 4 & & & & \\
    16 \ar[rr,mapsto] & & 16 & & \rightsquigarrow & & L_f = (2,16,8,4):(0,8,1,128)\\
    2 & & 8 & & & & \\
    & f & & & & & \\
    & & & \vspace{5.0in} & & & \\
    & & & & & &  \\
    & & 4 & & & & \\
    4 \ar[rru,mapsto] & & 16 & & \rightsquigarrow & & L_{f\mid_I} = (16,4):(8,128)\\
    16 \ar[urr,mapsto] & & 8 & &  & & \\
    & f\mid_{I} & & & & & \\
    \end{tikzcd} \]
\end{example}

\begin{example}[Entry inclusions]\label{definitionofentriesoftuplemorphism}
    An important special case of the previous construction is as follows. If $f:(s_1,\dots,s_m) \to (t_1,\dots,t_m)$ is a tuple morphism and $1 \leq i \leq m$, then the $i${\it th entry} $f_i$ of $f$ is 
\[\begin{tikzcd} (s_i) \ar[rr,"f_i"] \ar[rr,swap,"(i)"] & &  (t_1,\dots,t_n)
    \end{tikzcd} \]
    If the layout encoded by $f$ is 
    \[L_f = (s_1,\dots,s_m):(d_1,\dots,d_m),\]
    then the layout encoded by $f_i$ is 
    \[
    L_{f_i} = (s_i):(d_i).
    \]
    For instance, here is a tuple morphism $f$, and its fourth entry $f_4$.
    \[ \begin{tikzcd} [row sep = 1, column sep = 8]
    4 \ar[rrd,mapsto] & & & & & & \\
    8 \ar[ddrr,mapsto] & & 4 & & & & \\
    16 \ar[rr,mapsto] & & 16 & & \rightsquigarrow & & L_f = (2,16,8,4):(0,8,1,128)\\
    2 & & 8 & & & & \\
    & f & & & & & \\
    & & & \vspace{5.0in} & & & \\
    & & & & & &  \\
    & & 4 & & & & \\
    & & 16 & & \rightsquigarrow & & L_{f_4} = (4):(128)\\
    4 \ar[uurr,mapsto] & & 8 & &  & & \\
    & f_4 & & & & & \\
    \end{tikzcd} \]
\end{example}

\begin{remark}
    Given an $\n_* \in \Fin_*$, there is a morphism $\varphi_i: \langle 1 \rangle_* \to \n_*$ for each $i \in \n$ sending $* \mapsto *$ and $1 \mapsto i$. For a tuple morphism $f: (s_1, \dots, s_m) \to (t_1, \dots, t_n)$ lying over $\alpha: \m_* \to \n_*$, the $i$-th entry lies over the composite $\alpha \circ \varphi_i: \langle 1 \rangle_* \to \n_*$.
\end{remark}

\begin{example}[Factorizations]\label{definitionoffactorization}
Suppose 
\[\begin{tikzcd} (s_1,\dots,s_m) \ar[rr,"f"] \ar[rr,swap,"\alpha"] & &  (t_1,\dots,t_n)
    \end{tikzcd} \]
is a tuple morphism, and suppose 
\[J = \{j_1< \cdots < j_\ell\} \subset \langle n \rangle\]
is a subset such that $\Image(\alpha) \subseteq J \cup \{*\}$. If we write $\iota:\langle \ell \rangle_* \to \langle n \rangle_*$ for the map $k\mapsto j_k$, then $\alpha$ factors as $\alpha = \iota \circ \bar{\alpha}$ for a unique map $\bar{\alpha}: \langle m \rangle_* \to \langle \ell \rangle_*$, and we define the {\it factorization of }$f$ {\it through} $J$ to be the tuple morphism
\[\begin{tikzcd} (s_1,\dots,s_m) \ar[rr,"f\mid^J"] \ar[rr,swap,"\bar{\alpha}"] & &  (t_{j_1},\dots,t_{j_\ell}).
    \end{tikzcd} \]
If the layout encoded by $f$ is 
\[L_f = (s_1,\dots,s_m):(d_1,\dots,d_m),\]
then the layout encoded by $f \mid^J$ is 
\[L_{f \mid^J} = (s_1,\dots,s_m):(d_1',\dots,d_m'),\]
where 
\[
d_i' = \dfrac{d_i}{\left(\displaystyle\prod_{ k < \alpha(i)\text{ and } k \notin J }t_j\right)}.
\]
For instance, here is the factorization $f \mid^J$ of a tuple morphism $f$, where $J = \{2,4,5\}$. 
    \[ \begin{tikzcd} [row sep = 1, column sep = 8]
    & & 10 & & & & \\
    & & 8 & & & & \\
    & & 2 & & \rightsquigarrow & & L_f = (8,8):(32,2) \\
    8 \ar[rr,mapsto] & & 8 & & & & \\
    8 \ar[uuurr,mapsto] & & 2 & & & &  \\
    & f & & & & & \\
    & \vspace{5.0in} & & & & & \\
    & & 10 & & & & \\
    8 \ar[drr,mapsto] & & 8 & & \rightsquigarrow & & L_{f\mid^J} = (8,8):(8,1)\\
    8 \ar[urr,mapsto] & & 8 & & & & \\ 
    & f \mid^J & & & & & \\
    \end{tikzcd}\]
\end{example}

\begin{remark}
    There is a categorical interpretation of factorizations. Borrowing the notation of Example \ref{definitionoffactorization}, we may observe that there is a tuple morphism $i:(t_{j_1},\dots,t_{j_\ell}) \to (t_1,\dots,t_n)$ lying over $\iota$, and $f \mid^J$ is the pullback of $f$ along $i$:
    \[ \begin{tikzcd} 
    (s_1,\dots,s_m) \ar[r,"f \mid^J"] \ar[d,swap,"\id"] \arrow[dr, phantom, "\lrcorner", very near start] & (t_{j_1},\dots,t_{j_\ell}) \ar[d,"i"] \\
     (s_1,\dots,s_m) \ar[r,swap,"f"] & (t_1,\dots,t_n) \\
    \end{tikzcd} \]
\end{remark}

\subsection{Realization of tuple morphisms}
\noindent As we have seen, a tuple morphism $f:S \to T$ encodes a flat layout $L_f$. In this section, we will construct a realization functor
\[
| \cdot |:\catstyle{Tuple} \to \catstyle{FinSet}.
\]
which makes this encoding explicit. The realization functor $| \cdot |$ sends a tuple morphism $f$ to the layout function $|f|$ of $L_f$. In order to construct our realization functor $| \cdot |$, we first construct an auxiliary functor 
\[
F:\catstyle{Tuple} \to \catstyle{FinSet}
\]
which we will use in our construction. 
\begin{construction}\label{constructionofrealizationfunctoronC} We define a  functor 
\[F:\catstyle{Tuple} \to \catstyle{FinSet}\]
as follows. 
\begin{itemize} 
\item For an object $S = (s_1,\dots,s_m)$ in $\catstyle{Tuple}$, we define 
\[
FS = [0,S) = \prod_{i=1}^m  [0,s_i).
\]
\item For a morphism $f:(s_1,\dots,s_m) \to (t_1,\dots,t_n)$ in $\catstyle{Tuple}$ lying over $\alpha$, we define $Ff$ to be the map
\[\begin{tikzcd} 
{[}0,S{)} \ar[rr,"Ff"] & & {[}0,T{)}
\end{tikzcd} \]
given by
\[
(Ff)(x_1,\dots,x_m) = (y_1,\dots,y_n)
\]
where 
\[
y_j= \begin{cases}
    x_i & \text{there exists }1 \leq i \leq m\text{ with }\alpha(i) = j,\\
    0 & \text{else}.\\
\end{cases}
\]
\end{itemize}
\end{construction}

\noindent One may easily verify that $F(g \circ f) = Fg \circ Ff$ and $F \id_S = \id_{FS}$, so $F$ is in fact a functor. 

\begin{example} 
    Suppose $f:(4,4) \to (4,4,4)$ is the tuple morphism lying over $\alpha = (1,3)$. Then 
    \[
    Ff : [0,(4,4)) 
    \to [0,(4,4,4))
    \]
    is given by 
    \[
    (Ff)(x_1,x_2) = (x_1,0,x_2).
    \]
\end{example}

\begin{example}
    Suppose $g:(3,256,256,512) \to (3,256,256)$ is the tuple morphism lying over $\beta = (*,3,2,*)$. Then 
    \[
    Fg : [0,(3,256,256,512))
    \to [0,(3,256,256))
    \]
    is given by 
    \[
    (Fg)(x_1,x_2,x_3,x_4) = (0,x_3,x_2).
    \]
\end{example}

\begin{construction}\label{constructionofrealizationfunctoronC} We define a  functor 
\[|\cdot |:\catstyle{Tuple} \to \catstyle{FinSet}\]
as follows. 
\begin{itemize} 
\item For an object $S = (s_1,\dots,s_m)$ in $\catstyle{Tuple}$, we define 
\[
|S| = [0,\size(S)) = \{0,1,\dots,\size(S)-1\}.
\]
\item For a tuple morphism $f:S \to T$, we define
\[
|f| = \colex_T \circ Ff \circ \colex_S^{-1}
\]
(recall \Cref{definitionofcolex}).
\end{itemize}
\end{construction}

\noindent If $f:S \to T$ and $g:T \to U$ are composable tuple morphisms then 
\begin{align*}
    |g \circ f| & = \colex_U \circ F(g \circ f) \circ \colex_S^{-1}\\ 
    & = \colex_U \circ Fg \circ Ff \circ \colex_S^{-1} \\
    & = \colex_U \circ Fg \circ \colex_T^{-1} \circ \colex_T \circ Ff \circ \colex_S^{-1} \\
    & = |g| \circ |f|
\end{align*}
and if $f = \id_S$ is an identity morphism, then 
\begin{align*}
|\id_S| & = \colex_S \circ F \id_S \circ \colex_S^{-1}\\
& = \colex_S \circ \id_{S} \circ \colex_S^{-1}\\
& = \colex_S \circ \colex_S^{-1}\\
& = \id_{|S|}
~,
\end{align*}
so $| \cdot |$ does in fact specify a functor. Next, we observe that for a morphism $f$ in $\catstyle{Tuple}$, the map $|f|$ is the layout function of $L_f$. This allows us to easily deduce that composition of morphisms in $\catstyle{Tuple}$ is compatible with composition of flat layouts (see Corollary \ref{compatibilityofcompositioninC}).

\begin{lemma} \label{layoutfunctionlemma}
    If $f:S \to T$ is a tuple morphism, then the realization $|f|$ of $f$ is the layout function of $L_f$:
    \[ |f| = \Phi_{L_f}^{\size(T)}
    \]
\end{lemma}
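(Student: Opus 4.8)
The plan is to verify the identity pointwise: both $|f|$ and $\Phi_{L_f}^{\size(T)}$ are functions on $[0,\size(S)) = [0,\size(L_f))$, and I will show they agree on each argument by unwinding the definitions and performing a reindexing that relies on the tractability of $\alpha$.

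First I would fix $x \in [0,\size(S))$ and put $(x_1,\dots,x_m) = \colex_S^{-1}(x) \in [0,S)$. By Construction \ref{constructionofrealizationfunctoronC}, $|f|(x) = \colex_T\bigl((Ff)(x_1,\dots,x_m)\bigr)$, so I would first record that $(Ff)(x_1,\dots,x_m) = (y_1,\dots,y_n)$ with $y_j = x_i$ when $\alpha(i) = j$ for the unique such $i$ — unique precisely because $\alpha$ is tractable, so $\alpha^{-1}(j)$ is empty or a singleton — and $y_j = 0$ otherwise. Then, using the formula for $\colex_T$ from Definition \ref{definitionofcolex},
\[
|f|(x) \;=\; \sum_{j=1}^{n} t_1 \cdots t_{j-1}\, y_j \;=\; \sum_{\substack{1 \le i \le m \\ \alpha(i) \neq *}} t_1 \cdots t_{\alpha(i)-1}\, x_i,
\]
where the second step reindexes the nonzero terms over the indices $i$ with $\alpha(i) \in \langle n \rangle$. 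By Construction \ref{layoutfrommorphisminTuple} the $i$-th stride entry of $L_f$ is $d_i = t_1 \cdots t_{\alpha(i)-1}$ when $\alpha(i) \neq *$ and $d_i = 0$ when $\alpha(i) = *$, so the right-hand side equals $\sum_{i=1}^{m} d_i x_i = \varphi_{L_f}(x_1,\dots,x_m)$, which by Construction \ref{constructionoflayoutfunctions} is exactly $\Phi_{L_f}(x)$. Finally, since $Ff$ takes values in $[0,T)$ and $\colex_T$ in $[0,\size(T))$, the map $|f|$ — equivalently $\Phi_{L_f}$ — factors through $[0,\size(T)) \subseteq \mathbb{Z}$; in particular $\cosize(L_f) \le \size(T)$, so $\Phi_{L_f}^{\size(T)}$ is defined and equals $|f|$.

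I do not expect a real obstacle: the content is entirely the bookkeeping in the displayed reindexing, and the one place where something could go wrong — a codomain index $j$ being hit by two different domain entries — is excluded by the tractability condition that is part of the definition of a tuple morphism. The functoriality of $F$ and of $|\cdot|$, already checked in the surrounding text, plays no role in this particular lemma.
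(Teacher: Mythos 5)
Your proof is correct and follows essentially the same route as the paper's: reduce to coordinates via $\colex_S$, unwind $Ff$, apply $\colex_T$, and reindex the resulting sum to recognize the dot product with $\stride(L_f)$. The only additions are your explicit appeals to tractability for the uniqueness of $\alpha^{-1}(j)$ and to the factorization through $[0,\size(T))$, both of which the paper leaves implicit.
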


\begin{proof}
    Let $S = (s_1,\dots,s_m)$, $T = (t_1,\dots,t_n)$, and let 
    \[L_f = (s_1,\dots,s_m):(d_1,\dots,d_m)\]
    denote the layout associated to $f$, whose strides $d_i$ are defined by the formula 
    \[d_i = \begin{cases} 
0 & \alpha(i) = * \\
\prod_{j < \alpha(i)} t_j & \text{else.}
\end{cases}
\]
    By precomposing with $\colex_S:\prod_{i=1}^m[0,s_i) \to [0,\size(S))$, it suffices to prove that for any $(x_1,\dots,x_m) \in \prod_{i=1}^m [0,s_i)$, we have 
    \[(\colex_T \circ Ff) (x_1,\dots,x_m) = (x_1,\dots,x_m) \cdot (d_1,\dots,d_m).\]
    For a general input $(x_1,\dots,x_m) \in \prod_{i=1}^m [0,s_i)$, we have 
    \[
    (Ff) (x_1,\dots,x_m) = (y_1,\dots,y_n)
    \]
    where $y_j$ is equal to $x_i$ if $\alpha(i) = j$, and $0$ otherwise. It follows that 
    \begin{align*}
    (\colex_T \circ Ff)(x_1,\dots,x_m) & = (y_1, \cdots ,y_n) \cdot (1,t_1 , \dots, t_1 \cdots t_{n-1}) \\
    & = \sum_{j=1}^n y_j \cdot t_1 \cdots t_{j-1} \\
    & = \sum_{i=1}^m x_i d_i \\
    & = (x_1,\dots,x_m) \cdot (d_1,\dots,d_m),
    \end{align*}
    as desired.
\end{proof}

\begin{corollary}\label{compatibilityofcompositioninC}
    If $f$ and $g$ are non-degenerate composable tuple morphisms, then 
    \[
    L_{g \circ f} = L_g \circ L_f
    \]
\end{corollary}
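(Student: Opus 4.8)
The plan is to unwind the definition of flat layout composition and verify that the three conditions of Definition~\ref{definitionofflatlayoutcomposition} hold for $C = L_{g \circ f}$, $A = L_f$, $B = L_g$. Write $f : S \to T$ lying over $\alpha$ and $g : T \to U$ lying over $\beta$, so that $g \circ f : S \to U$ lies over $\beta \circ \alpha$. Condition 2 (the shape condition) is immediate: by Construction~\ref{layoutfrommorphisminTuple} the shape of the layout encoded by a tuple morphism is its domain, and $g \circ f$ has domain $S$, so $\shape(L_{g \circ f}) = S = \shape(L_f)$.

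For condition 1 (non-degeneracy), I would first show $g \circ f$ is itself a non-degenerate tuple morphism: if $\entry_i(S) = 1$ then non-degeneracy of $f$ forces $\alpha(i) = *$, whence $(\beta \circ \alpha)(i) = \beta(*) = *$. The observation following Definition~\ref{definitionofflatnondegenerate} then gives that the encoded layout $L_{g \circ f}$ is non-degenerate.

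For condition 3 (the layout-function condition), the key input is that $|\cdot| : \Tuple \to \catstyle{FinSet}$ is a functor, so $|g \circ f| = |g| \circ |f|$. By Lemma~\ref{layoutfunctionlemma} these realizations are the layout functions: $|f| = \Phi_{L_f}^{\size(T)}$, $|g| = \Phi_{L_g}^{\size(U)}$, and $|g \circ f| = \Phi_{L_{g \circ f}}^{\size(U)}$. Since the shape of $L_g$ is the domain $T$ of $g$, we have $\size(L_g) = \size(T)$, so $\Phi_{L_f}^{\size(T)} = \Phi_{L_f}^{\size(L_g)}$; composing in $\catstyle{FinSet}$ and then including into $\mathbb{Z}$ yields $\Phi_{L_{g \circ f}} = \Phi_{L_g} \circ \Phi_{L_f}^{\size(L_g)}$, which is exactly condition 3. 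With all three conditions verified, $L_{g \circ f} = L_g \circ L_f$. The only delicate part is the bookkeeping in this last step --- keeping straight which codomain ($\mathbb{Z}$ versus a finite interval) each map is viewed over and checking the relevant $\size$ equalities --- but there is no real obstacle, as the substance is entirely carried by functoriality of $|\cdot|$ together with Lemma~\ref{layoutfunctionlemma}.
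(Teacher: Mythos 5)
Your proposal is correct and follows essentially the same route as the paper's proof: both arguments reduce the three conditions of the composition definition to (i) the shapes being literally equal to $S$, (ii) non-degeneracy of the tuple morphism $g \circ f$ implying non-degeneracy of $L_{g\circ f}$, and (iii) functoriality of the realization $|\cdot|$ combined with Lemma~\ref{layoutfunctionlemma} and the identification $\size(T) = \size(L_g)$. The only difference is cosmetic: you verify the conditions of the flat Definition~\ref{definitionofflatlayoutcomposition} while the paper phrases them via the nested Definition~\ref{definitionoflayoutcomposition} (refinement and coalesced-over), which coincide here; you also spell out why $g\circ f$ is non-degenerate, which the paper merely asserts.
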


\begin{proof}
    Suppose $f:S \to T$ and $g:T \to U$ are morphisms in $\catstyle{Tuple}$ lying over $\alpha$ and $\beta$, respectively. Write $S = (s_1,\dots,s_m)$ and $T = (t_1,\dots,t_n)$. We need to check that 
    \begin{enumerate}
        \item {\it  $\shape(L_{g \circ f})$ refines $\shape(L_f)$}: This holds since the shape of $L_f$ and $L_{g \circ f}$ are both equal to $S$.
        \item {\it $L_{g \circ f}$ is coalesced over $\shape(L_f)$}: This holds since the tuple morphism $g \circ f$ is non-degenerate, hence so is the layout $L_{g \circ f}$.
        \item {\it $\Phi_{L_{g\circ f}} = \Phi_{L_g} \circ \Phi_{L_f}^{\size(L_g)}$}: Using Lemma \ref{layoutfunctionlemma}, we have 
        \begin{align*}
        \Phi_{L_{g \circ f}}^{\size(U)} &  = |g \circ f|\\
        & = |g| \circ |f|\\
        & = \Phi_{L_g}^{\size(U)} \circ \Phi_{L_f}^{\size(T)}.
        \end{align*}
        and by postcomposing with the inclusion $[0,\size(U)) \subset \mathbb{Z}$, and observing that $\size(T) = \size(L_{g})$, the result follows.
    \end{enumerate}
\end{proof}

\subsection{Operations on tuple morphisms}

\noindent Our next goal is to develop an ``algebra of tuple morphisms", which includes operations such as {\it coalesce}, {\it complement}, {\it composition}, {\it flat division}, and {\it flat products}. We will prove that each of these operations is compatible with a corresponding operation on flat layouts.

\subsubsection{Sum}
The sum $f \oplus g$ of tuple morphisms $f$ and $g$ is obtained by concatenating the domains and codomains of $f$ and $g$. In order to define this operations precisely, we first define a corresponding operation on morphisms in $\catstyle{Fin}_*$.

\begin{definition}
    Suppose $\alpha: \langle m \rangle_* \to \langle n \rangle_*$ and $\beta: \langle p \rangle_* \to \langle q \rangle_*$ are morphisms in $\catstyle{Fin}_*$. We define the {\it sum} of $\alpha$ and $\beta$ to be the morphism 
    \[
    \alpha \oplus \beta : \langle m + p \rangle_* \to \langle n + q \rangle_*
    \]
    given by
    \[
(\alpha\oplus\beta)(x) = \begin{cases}
    \alpha(x) & 1 \leq x \leq m\\
    n + \beta( x - m) & m+1 \leq x \leq m + p\\
    * & x = *.
\end{cases}
\]
This operation is associative, so we can consider the sum $\alpha_1 \oplus \cdots \oplus \alpha_k$ for any finite collection of morphisms $\alpha_1,\dots,\alpha_k$ in $\catstyle{Fin}_*$.
\end{definition}

\begin{remark}
    If $\alpha$ and $\beta$ are tractable pointed maps, then $\alpha \oplus \beta$ is tractable.
\end{remark}

\noindent Now we can define the sum of morphisms in $\catstyle{Tuple}$.

\begin{definition}\label{sumofmorphismsinC} Suppose $f:S \to T$ and $g:U \to V$ are tuple morphisms lying over $\alpha$ and $\beta$, respectively. We define the {\it sum} of $f$ and $g$ to be the tuple morphism
\[f \oplus g: S \star U \to T \star V\]
lying over $\alpha \oplus \beta$. This operation is associative, so we can consider the sum $f_1 \oplus \cdots \oplus f_k$ for any finite collection of morphisms $f_1,\dots,f_k$ in $\catstyle{Tuple}$.
\end{definition}

\begin{example} \label{concatenationexample}
    Here is an example of the sum $f \oplus g$ of tuple morphisms $f$ and $g$.
        \[ 
    \begin{tikzcd}[row sep = 1, column sep = 8]
    & & & & & & & & 4  & & 4 \\
      & & & & & & & &  4 \ar[rru,mapsto] & & 2\\
    32 \ar[rr, mapsto] & & 32 & & 4  & & 4 & & 32 \ar[rr,mapsto] & & 32\\
    16 \ar[rr,mapsto] & & 16 & & 4 \ar[rru ,mapsto]& & 2 & & 16 \ar[rr,mapsto] & & 16\\
    & f & & & & g & & & & f \oplus g  &  \\
    \end{tikzcd}
    \]
\end{example}

\begin{example}
    Here is another example of the sum $f \oplus g$ of tuple morphisms $f$ and $g$.
            \[ 
    \begin{tikzcd}[row sep = 1, column sep = 8]
    & & & & & & & & 64 \ar[rr,mapsto] & & 64 \\
    & & & & & & & & 512 \ar[rrd,mapsto]   & & 256 \\
      & & & & 64 \ar[rr,mapsto] & & 64& &  256 \ar[rru,mapsto] & & 512\\
    128 \ar[rrd, mapsto] & & 128 & & 512  \ar[rrd,mapsto] & & 256 & & 128 \ar[rrd,mapsto] & & 128\\
    128 \ar[rru,mapsto] & & 128 & & 256 \ar[rru ,mapsto]& & 512 & & 128 \ar[rru,mapsto] & & 128\\
    & f & & & & g & & & & f \oplus g  &  \\
    \end{tikzcd}
    \]
\end{example}

\begin{remark}
    There is a categorical interpretation of the sum of tuple morphisms: if $f:S \to T$ and $g:U \to V$ are tuple morphisms, then 
    \[
    f \oplus g: S \star U \to T \star V
    \]
    is the coproduct of $f$ and $g$ in the arrow category $\mathterm{Ar}(\catstyle{Tuple})$. 
\end{remark}

\subsubsection{Squeeze}

It is often the case that we want to remove any instances of the integer $1$ from our tuples. This is accomplished by the {\it squeeze} functor.

\begin{definition}\label{squeeze} 
We define a functor 
\[ \begin{tikzcd} 
\catstyle{Tuple} \ar[rr,"\squeeze(-)"] & & \catstyle{Tuple}
\end{tikzcd} \]
as follows. If $S = (s_1,\dots,s_m)$ is an object in $\catstyle{Tuple}$, we define
\[
\squeeze(S) = (s_{i_1},\dots,s_{i_k})
\]
where $\{i_1<\dots<i_k\} \subset  \langle m \rangle$ are the indices with $s_{i_j} \neq 1$. If $f:(s_1,\dots,s_m) \to (t_1,\dots,t_n)$ is a tuple morphism, we define 
\[
\squeeze(f):\squeeze(S) \to \squeeze(T)
\]
to be the tuple morphism
\[
\squeeze(f) = (f \mid_I )\mid^J
\]
where $f \mid_I$ is the restriction of $f$ to 
\[
I = \{ i \in \langle m \rangle \mid s_i \neq 1\}
\]
as in Definition \ref{definitionofrestriction},
and where $(f\mid_I)\mid^J$ is be the factorization of $f \mid_I$ through 
\[
J = \{ j \in \langle n \rangle \mid t_j \neq 1\},
\]
as in Definition \ref{definitionoffactorization}.
\end{definition}

\begin{example}\label{squeezeexample} Here is an example of a morphism $f$ and the corresponding morphism $\squeeze(f)$. 

\[ \begin{tikzcd}[row sep = 1, column sep = 12]
 & & 256 & & & & & & \\
256 \ar[rru ,mapsto] & & 128   & & & & & & 256 \\
128 \ar[rru,mapsto] & & 1& & & & & & 128 \\
1 & & 32 & & & & 256 \ar[uurr,mapsto] & & 32\\
8 \ar[drr,mapsto] & & 32 & & \rightsquigarrow & & 128 \ar[uurr,mapsto]  & & 32 \\
1 & & 8 & & & &  8 \ar[rr,mapsto] & & 8 \\
& f & & & & & & \mathclap{\squeeze(f)} & 
\end{tikzcd} \]
    
\end{example}

\begin{example}
    If $f:(s_1,\dots,s_m) \to (t_1,\dots,t_n)$ is a tuple morphism, then 
    \[
    f = \squeeze(f) \hspace{0.2in} \Leftrightarrow \hspace{0.2in} \text{no }s_i \text{, }t_j \text{ is equal to }1.
    \]
\end{example}

\begin{proposition}\label{squeezecompatibility}
    If $f$ is a tuple morphism, then 
    \[
    L_{\squeeze(f)} = \squeeze(L_f).
    \]
\end{proposition}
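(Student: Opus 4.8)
The plan is to unwind the definitions of both sides and check they produce the same flat layout by comparing shapes and strides entrywise. Write $f:(s_1,\dots,s_m) \to (t_1,\dots,t_n)$ lying over $\alpha$, and let $L_f = (s_1,\dots,s_m):(d_1,\dots,d_m)$ be the encoded layout, with $d_i = \prod_{j<\alpha(i)} t_j$ when $\alpha(i) \neq *$ and $d_i = 0$ otherwise. Let $I = \{i \in \langle m\rangle \mid s_i \neq 1\}$ and $J = \{j \in \langle n\rangle \mid t_j \neq 1\}$, so that $\squeeze(f) = (f\mid_I)\mid^J$ by Definition \ref{squeeze}. On the layout side, Construction \ref{Constructionofsqueeze} gives $\squeeze(L_f) = L_f \mid_I$, the restriction to the same index set $I$.

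The key step is to track what the two restriction/factorization operations do to strides and compare. First I would apply the formula for the layout encoded by a restriction (Example \ref{definitionofrestriction}): $L_{f\mid_I} = (s_{i_1},\dots,s_{i_k}):(d_{i_1},\dots,d_{i_k})$, which is exactly $\squeeze(L_f)$. So it remains to show that factoring $f\mid_I$ through $J$ does not change the encoded layout, i.e. $L_{(f\mid_I)\mid^J} = L_{f\mid_I}$. Here I invoke Example \ref{definitionoffactorization}: if $g = f\mid_I$ lies over $\bar\alpha$ and $L_g = (\dots):(e_1,\dots,e_k)$, then $L_{g\mid^J}$ has $i$-th stride $e_i / \bigl(\prod_{j < \bar\alpha(i),\, j \notin J} t_j\bigr)$. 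The crux is that for every index $j \notin J$ we have $t_j = 1$, so each factor removed from the denominator equals $1$, whence the stride is unchanged. One must check the hypothesis $\Image(\bar\alpha) \subseteq J \cup \{*\}$ required for the factorization to be defined: this holds because $\bar\alpha(i) \neq *$ forces $s_{i} = t_{\bar\alpha(i)}$ (morphism condition in $\catstyle{Tuple}$), and since $i$ ranges over $I$ we have $s_i \neq 1$, hence $t_{\bar\alpha(i)} \neq 1$, i.e. $\bar\alpha(i) \in J$.

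Combining these: $L_{\squeeze(f)} = L_{(f\mid_I)\mid^J} = L_{f\mid_I} = \squeeze(L_f)$. The argument is essentially a bookkeeping exercise, so I expect no serious obstacle; the one point requiring care is the order in which restriction and factorization are composed versus the index sets involved — one must confirm that restricting first (removing shape-$1$ modes from the domain) and then factoring (removing codomain entries equal to $1$) is compatible, which it is precisely because removing a codomain entry $t_j=1$ only ever divides a stride by $1$. I would also remark that this shows $\squeeze(-)$ on $\catstyle{Tuple}$ is well-defined as a functor in a manner consistent with the layout-level operation, which dovetails with Lemma \ref{squeezelemma}.
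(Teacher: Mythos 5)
Your proof is correct and follows essentially the same route as the paper's: identify $L_{f\mid_I} = \squeeze(L_f)$ via the restriction formula, then observe that factoring through $J$ divides each stride only by products of entries $t_j = 1$, leaving the layout unchanged. Your explicit verification that $\Image(\bar\alpha) \subseteq J \cup \{*\}$ (so the factorization is well-defined) is a small point of care the paper leaves implicit, but it does not change the argument.
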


\begin{proof}
    Suppose $f:(s_1,\dots,s_m) \to (t_1,\dots,t_m)$ is a tuple morphism, and let 
    \[
    L_f = (s_1,\dots,s_m):(d_1,\dots,d_m)
    \]
    be the flat layout associated to $f$. Let $I = \{i_1<\cdots < i_{m'}\} \subset \langle m \rangle$ denote the subset of indices with $s_{i_k} \neq 1$. Then 
    \begin{align*}
    L_{f \mid_I} & = (s_{i_1},\dots,s_{i_k}):(d_{i_1},\dots,d_{i_k})\\
    & = \squeeze(L_f).
    \end{align*}
    
    \noindent Let $J = \{j_1<\dots<j_{n'}\} \subset \langle n \rangle$ denote the subset of indices with $t_{j_k} \neq 1$, so that $\squeeze(f) = \left(f \mid_I \right)\mid^J$. Let $\beta$ denote the map over which $\squeeze(f)$ lies. Then 
    \begin{align*}
        L_{\squeeze(f)} = L_{\left( f \mid_I \right)\mid^J} & = (s_{i_1},\dots,s_{i_k}) : (d_{i_1}',\dots,d_{i_k}')
    \end{align*}
    where 
    \begin{align*}
    d_{i_k}' & = \dfrac{d_{i_k}}{\left( \displaystyle\prod_{\ell < \beta(k) \text{ and }\ell \notin J } t_{\ell} \right)}\\
    & = d_{i_k}
    \end{align*}
    since $t_\ell = 1$ for any $\ell \notin J$. We conclude that 
    \begin{align*}
    L_{\squeeze(f)} & = (s_{i_1},\dots,s_{i_k}):(d_{i_1},\dots,d_{i_k})\\
    & = \squeeze(L_f).
    \end{align*}
\end{proof}

\begin{observation}
    If $f$ is a tuple morphism, then 
    \[
    \squeeze(\squeeze(f)) = \squeeze(f),
    \]
    so 
    \[
    \begin{tikzcd} 
    \catstyle{Tuple} \ar[rr,"\squeeze(-)"] & & \catstyle{Tuple}
    \end{tikzcd} \]
    is an idempotent functor.
\end{observation}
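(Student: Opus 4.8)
The plan is to verify idempotence directly from the construction, since $\squeeze(-)$ is already known to be a functor by Definition~\ref{squeeze}. It suffices to check idempotence separately on objects and on morphisms.

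On objects this is immediate: for $S = (s_1,\dots,s_m)$, the tuple $\squeeze(S) = (s_{i_1},\dots,s_{i_k})$ consists precisely of those entries of $S$ that differ from $1$, so $\squeeze(S)$ has no entry equal to $1$, and hence $\squeeze(\squeeze(S)) = \squeeze(S)$ directly from the defining formula for $\squeeze$ on objects.

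For a tuple morphism $f:(s_1,\dots,s_m) \to (t_1,\dots,t_n)$, recall that $\squeeze(f) = (f\mid_I)\mid^J$ is a tuple morphism $\squeeze(f):\squeeze(S) \to \squeeze(T)$, and by the previous paragraph neither $\squeeze(S)$ nor $\squeeze(T)$ has an entry equal to $1$. Now $\squeeze(\squeeze(f)) = (\squeeze(f)\mid_{I'})\mid^{J'}$, where $I'$ is the set of indices of $\squeeze(S)$ whose corresponding entry differs from $1$ and $J'$ the analogous set for $\squeeze(T)$. Since $\squeeze(S)$ and $\squeeze(T)$ contain no $1$'s, both $I'$ and $J'$ are the full index sets, so the restriction along $I'$ (Definition~\ref{definitionofrestriction}) and the factorization through $J'$ (Definition~\ref{definitionoffactorization}) are each the identity operation, giving $\squeeze(\squeeze(f)) = \squeeze(f)$. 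Equivalently, one may invoke the characterization recorded above that $g = \squeeze(g)$ exactly when the domain and codomain of $g$ have no entry equal to $1$, applied to $g = \squeeze(f)$.

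Combining these two facts, $\squeeze \circ \squeeze$ and $\squeeze$ agree on objects and on morphisms, so $\squeeze(-)$ is an idempotent functor. There is no real obstacle here; the only point requiring care is confirming that restriction and factorization along a full index set act trivially, which is immediate from their definitions.
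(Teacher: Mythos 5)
Your argument is correct and is exactly the justification the paper leaves implicit: the paper records this as an unproved observation, immediately after the example characterizing $f = \squeeze(f)$ by the absence of entries equal to $1$, which is precisely the criterion you apply to $g = \squeeze(f)$. The direct check that restriction along a full index set and factorization through a full index set are both trivial is the right level of detail and introduces no gap.
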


\subsubsection{Sort}
\label{subsubsection.sort}

The sort operation $f \mapsto \sort(f)$ permutes the domain of $f$ so that the resulting morphism is {\it sorted}, in the following sense.

\begin{definition}\label{definitionofsortedmorphisminC}
    We say a tuple morphism 
    \[\begin{tikzcd} (s_1,\dots,s_m) \ar[r,"f"] \ar[r,swap,"\alpha"] &  (t_1,\dots,t_n) \end{tikzcd} \]
    is {\it sorted} if for any $1 \leq i,j \leq m$, the following conditions hold.
    \begin{enumerate}
        \item If $\alpha(i) = * \neq \alpha(j)$, then $i < j$. 
        \item If $\alpha(i) = * = \alpha(j)$, then 
        \[
        i \leq j \quad \Rightarrow \quad s_i \leq s_j.
        \]
        \item If $\alpha(i) \neq * \neq \alpha(j)$, then 
        \[
        i \leq j \quad \Rightarrow \quad \alpha(i) \leq \alpha(j).
        \]

    \end{enumerate}
\end{definition}

\begin{example}\label{sortedmorphismsinCexample}
    The morphisms $f_1$, $f_2$, and $f_3$ shown below
    \[ \begin{tikzcd} [row sep = 1, column sep = 8]
     & & & & &  & & 4 & & & 60 \ar[rr,mapsto] & & 60\\
    128 \ar[drr,mapsto] & & & & & 4 \ar[urr,mapsto] & & 1 & & & 20 \ar[rrd,mapsto] & & 2\\
    512 \ar[drr,mapsto] & & 128 & & & 1 & & 8 & & & 32 & & 20\\
    3 & & 512 & & & 1 & & 64 & & & 8 & & 4 \\
     & f_1 & & & & & f_2 & & & & & f_3 & \\
    \end{tikzcd} \]
    are sorted, while the morphisms $g_1$, $g_2$, and $g_3$ shown below
    \[ \begin{tikzcd}[row sep = 1, column sep = 8]
     & & 512 & & & & & & & & & & \\
    512 \ar[urr,mapsto] & & 2 & & & 1  & & 4 & & & 2 \ar[rr,mapsto] & & 2 \\
    32 \ar[drr,mapsto] & & 32 & & & 4 \ar[urr,mapsto] & & 8 & & & 8 & & 24\\
    32 \ar[urr,mapsto] & & 32 & & & 8 \ar[urr,mapsto] & & 64 & & & 24 & & 16 \\
    & g_1 & & & & & g_2 & & & & & g_3 & \\
    \end{tikzcd} \]
    are not sorted. The morphisms $g_1$, $g_2$, and $g_3$ violate conditions $3$, $1$, and $2$, respectively.
\end{example}

\begin{proposition}\label{sortmorphisminCproposition}
    If $f$ is a sorted tuple morphism, then the flat layout $L_f$ is sorted. 
\end{proposition}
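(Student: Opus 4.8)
The plan is to unwind the definitions on both sides and match them directly. Suppose $f : (s_1,\dots,s_m) \to (t_1,\dots,t_n)$ lies over $\alpha$, and write $L_f = (s_1,\dots,s_m):(d_1,\dots,d_m)$ with strides $d_i = \prod_{j < \alpha(i)} t_j$ when $\alpha(i) \neq *$ and $d_i = 0$ when $\alpha(i) = *$ (Construction \ref{layoutfrommorphisminTuple}). The goal is to verify, for any $1 \leq i < m$, that $\mode_i(L_f) \preceq \mode_{i+1}(L_f)$, i.e.\ that $(s_i : d_i) \preceq (s_{i+1} : d_{i+1})$ in the order of Definition on pairs: either $d_i < d_{i+1}$, or $d_i = d_{i+1}$ and $s_i \leq s_{i+1}$. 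The three cases of Definition \ref{definitionofsortedmorphisminC} for the consecutive pair $(i, i+1)$ should correspond cleanly to three cases for the strides.

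First I would handle the case $\alpha(i) = *$. Then $d_i = 0$. If $\alpha(i+1) \neq *$, then $d_{i+1} = \prod_{j < \alpha(i+1)} t_j \geq 1 > 0 = d_i$, so $\mode_i \prec \mode_{i+1}$. If $\alpha(i+1) = *$ too, then $d_{i+1} = 0 = d_i$, and condition 2 of sortedness (applied with $i \leq i+1$) gives $s_i \leq s_{i+1}$, so again $\mode_i \preceq \mode_{i+1}$. Note a subtlety: if $\alpha(i) \neq *$ but $\alpha(i+1) = *$, sortedness condition 1 forbids this (it would force $i+1 < i$), so this configuration does not arise; I should remark on that. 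The remaining case is $\alpha(i) \neq * \neq \alpha(i+1)$: condition 3 gives $\alpha(i) \leq \alpha(i+1)$. If $\alpha(i) < \alpha(i+1)$, then since all $t_j \geq 1$ we get $d_i = \prod_{j < \alpha(i)} t_j \leq \prod_{j < \alpha(i+1)} t_j = d_{i+1}$; and if moreover some $t_j$ with $\alpha(i) \leq j < \alpha(i+1)$ exceeds $1$ the inequality is strict, but even without strictness we are fine because $d_i \leq d_{i+1}$ together with the next observation closes it. If $\alpha(i) = \alpha(i+1)$, then since $f$ lies over a \emph{tractable} pointed map $\alpha$, the preimage $\alpha^{-1}(\alpha(i))$ has at most one element, contradicting $i \neq i+1$ — so this subcase is vacuous. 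Thus in the live subcase $d_i \leq d_{i+1}$, and when $d_i = d_{i+1}$ I need $s_i \leq s_{i+1}$.

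The one spot that needs a little care is exactly that last point: showing that when $\alpha(i) < \alpha(i+1)$ but $d_i = d_{i+1}$ (which happens iff $t_j = 1$ for all $\alpha(i) \leq j < \alpha(i+1)$), we still have $s_i \leq s_{i+1}$. Here I would use that $\alpha(i) \neq *$ forces $s_i = t_{\alpha(i)}$ and $s_{i+1} = t_{\alpha(i+1)}$ by the defining property of tuple morphisms (Definition \ref{definitionofTuple}). If $d_i < d_{i+1}$ we are already done regardless of shapes; if $d_i = d_{i+1}$ we need the shape comparison, and this is where I expect the argument to require the most thought — I anticipate it follows from combining sortedness with the fact that consecutive codomain entries strictly between the two targets are all $1$, but I have not yet checked whether sortedness as stated directly delivers $s_i \le s_{i+1}$ here or whether an auxiliary observation about the relationship between $\alpha$ and the ordering is needed. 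If it turns out sortedness does not directly imply it, the resolution is likely that when $d_i = d_{i+1}$ one can reduce to the squeezed morphism $\squeeze(f)$ (Definition \ref{squeeze}), where no such intervening $1$'s occur, and invoke Proposition \ref{squeezecompatibility} to transfer the conclusion back. I would write the main case split first, dispatch the vacuous and easy subcases, and spend the bulk of the writeup on this equal-stride subcase.

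Overall this is a short definitional proof: unpack $\preceq$ on pairs, unpack sortedness of $f$ into its three clauses, use tractability of $\alpha$ to kill the repeated-target subcase, use the lying-over condition $s_i = t_{\alpha(i)}$ to translate between shapes and codomain entries, and the only genuine obstacle is reconciling the tie-breaking clause (condition 2 on shapes vs.\ condition 3 on $\alpha$-values) in the boundary situation where distinct $\alpha$-values still produce equal strides.
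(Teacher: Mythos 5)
Your proposal follows essentially the same case analysis as the paper's proof, including the use of tractability to rule out $\alpha(i)=\alpha(i+1)$ and of condition 1 to rule out $\alpha(i)\neq * = \alpha(i+1)$. The one point you leave open — the equal-stride subcase with $\alpha(i) < \alpha(i+1)$ — closes immediately from facts you already stated: $d_i = d_{i+1}$ forces $t_j = 1$ for every $\alpha(i) \le j < \alpha(i+1)$, in particular $t_{\alpha(i)} = 1$, and since $s_i = t_{\alpha(i)}$ by the lying-over condition this gives $s_i = 1 \le s_{i+1}$, with no need for sortedness condition 2 or a detour through $\squeeze(f)$.
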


\begin{proof}
    Suppose  
    \[\begin{tikzcd} (s_1,\dots,s_m) \ar[r,"f"] \ar[r,swap,"\alpha"] &  (t_1,\dots,t_n) \end{tikzcd} \]
    is sorted, and consider the layout 
    \[L_f = (s_1,\dots,s_m):(d_1,\dots,d_m).\] Suppose $1 \leq i < m$. We want to show that $d_i < d_{i+1}$, or $d_i = d_{i+1}$ and $s_i \leq s_{i+1}$. There are two cases to consider.
    \begin{itemize} 
    \item (Case 1) Suppose that $\alpha(i) = *$, so that $d_i = 0$. If $\alpha(i+1) = *$, then $d_{i+1} = 0$ and since $f$ is sorted we have $s_i \leq s_{i+1}$. If $\alpha(i+1) \neq *$, then $d_{i+1} \geq 1 > 0 = d_i$. 
    
    \item (Case 2) Suppose that  $\alpha(i) \neq *$, in which case $\alpha(i+1) \neq *$ and $\alpha(i) < \alpha(i+1)$. Then 
        \[
        d_i = \prod_{j<\alpha(i)}t_j \leq \prod_{j < \alpha(i+1)} = d_{i+1},
        \]
    where equality holds only if $ s_i = t_{\alpha(i)} = 1$, which implies $s_i \leq s_{i+1}$.
    \end{itemize} 
    We conclude that $L_f$ is sorted.
\end{proof}

\noindent Next, we define our $\sort(-)$ operation on $\catstyle{Tuple}$. If $f$ is a tuple morphism, then $\sort(f)$ will be obtained by precomposing $f$ with an appropriate permutation $g$.

\begin{construction}\label{sortofmorphisminC}
    Suppose 
    \[ \begin{tikzcd} 
    (s_1,\dots,s_m) \ar[r,"f"] \ar[r,swap,"\alpha"] &  (t_1,\dots,t_n)
    \end{tikzcd}\]
    is a tuple morphism. We define a permutation $\sigma \in \Sigma_m$ as follows. Set
    \begin{align*} P & = \{i \in \langle m \rangle \mid \alpha(i) = *\},\\
Q & = \{ i \in \langle m \rangle \mid \alpha(i) \neq *\},
    \end{align*}
    so $\langle m \rangle$ is the disjoint union of $P$ and $Q$. We define a linear ordering of $P$ by $i_1 \preceq_P i_2$ if 
    \begin{enumerate} 
    \item $s_{i_1} < s_{i_2}$, or 
    \item $s_{i_1} = s_{i_2}$ and $i_1 \leq i_2$.
    \end{enumerate}
    We define a linear ordering on $Q$ by $j_1 \preceq_Q j_2$ if $\alpha(i_1) \leq \alpha(i_2)$. We define a linear ordering on $\langle m \rangle$ by $i_1 \preceq i_2$ if 
    \begin{enumerate} 
    \item $i_1 \in P$ and $i_2 \in Q$,
    \item $i_1,i_2 \in P$ and $i_1 \preceq_P i_2$, or 
    \item $i_1,i_2 \in Q$ and $i_1 \preceq_Q i_2$. 
    \end{enumerate}
    Let $\sigma$ be permutation associated to the linear ordering $\preceq$ of $\langle m \rangle$, and let $\sigma^{-1}$ be its inverse.  The map $\sigma^{-1}_*:\langle m \rangle_* \to \langle m \rangle_*$ is covered by a tuple morphism 
    \[
    g: (s_{\sigma^{-1}(1)},\dots,s_{\sigma^{-1}(m)}) \to (s_1,\dots,s_m),
    \]
    and we define $\sort(f)$ to be the composite 
    \[
    \sort(f) = f \circ g.
    \]
\end{construction}

\begin{example} The sortings of the morphisms $g_1$, $g_2$, and $g_3$ of Example \ref{sortedmorphismsinCexample} are shown below.
    \[ \begin{tikzcd} [row sep = 1, column sep = 16]
    & & 512 & & & & & & 512\\
    512 \ar[urr,mapsto] & & 2 & & & & 512 \ar[urr,mapsto] & & 2\\
    32 \ar[drr,mapsto] & & 32 & & \rightsquigarrow & & 32 \ar[rr,mapsto] & & 32\\
    32 \ar[urr,mapsto] & & 32 & & & & 32 \ar[rr,mapsto] & & 32\\
    & g_1 & & & & & & \mathclap{\sort(g_1)} & \\
 & \vspace{0.2in} & & &  & & & & \\
        1 & & 4 & & & & 4\ar[rr,mapsto]  & & 4\\
    4 \ar[urr,mapsto] & & 8 & & \rightsquigarrow & & 8 \ar[rr,mapsto] & & 8\\
    8  \ar[urr,mapsto] & & 64 & & & & 1 & & 64\\
    & g_2 & & & & & & \mathclap{\sort(g_2)} & \\
     & \vspace{0.2in} & & &  & & & & \\
     & & & & & & & & \\
        2  \ar[rr,mapsto] & & 2 & & & & 2 \ar[rr,mapsto]  & & 2\\
    8  & & 24 & & \rightsquigarrow & & 24  & & 24\\
    24    & & 16 & & & & 8 & & 16\\
    & g_3 & & & & & & \mathclap{\sort(g_3)} & \\
    \end{tikzcd} \]
\end{example}

\begin{lemma}
    Suppose $f:S \to T$ is a tuple morphism. Then $f$ is sorted if and only if $\sort(f) = f$.
\end{lemma}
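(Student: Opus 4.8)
The plan is to prove both implications directly, unwinding the definitions from Construction~\ref{sortofmorphisminC} and Definition~\ref{definitionofsortedmorphisminC}. For the ``if'' direction, suppose $\sort(f) = f$. By construction, $\sort(f) = f \circ g$ where $g$ lies over $\sigma^{-1}_*$ for the permutation $\sigma$ associated to the linear ordering $\preceq$ of $\langle m \rangle$. The equality $f \circ g = f$, together with the fact that $g$ is an isomorphism covering $\sigma^{-1}_*$, forces $\sigma$ to be the identity permutation (since $\alpha \circ \sigma^{-1}_* = \alpha$ and $\alpha$, restricted appropriately, determines $\sigma$ on the non-$\ast$ part, while the shape entries determine it on the $\ast$ part by our tie-breaking convention; more carefully, one should observe that if $\sigma \ne \id$ then the ordering $\preceq$ genuinely reorders $\langle m \rangle$ and hence $f \circ g \ne f$). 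Once $\sigma = \id$, the three defining clauses of $\preceq$ being consistent with the natural order $1 < 2 < \cdots < m$ translate precisely into the three conditions of Definition~\ref{definitionofsortedmorphisminC}: clause~(1) of $\preceq$ (elements of $P$ precede elements of $Q$) gives condition~(1); clause~(2) via $\preceq_P$ gives condition~(2); clause~(3) via $\preceq_Q$ gives condition~(3).

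For the ``only if'' direction, suppose $f$ is sorted. I want to show the permutation $\sigma$ produced by Construction~\ref{sortofmorphisminC} is the identity, whence $\sort(f) = f \circ \id = f$. The key step is to check that the linear ordering $\preceq$ on $\langle m \rangle$ defined in the construction coincides with the standard ordering $1 < 2 < \cdots < m$ exactly when $f$ satisfies the three sortedness conditions. Given $i < j$ in $\langle m \rangle$, I split into the three cases determined by whether $i,j \in P$ or $Q$: if $i \in P, j \in Q$, then $i \preceq j$ by clause~(1) of $\preceq$, consistent with $i < j$, and sortedness condition~(1) guarantees there is no $i' \in Q, j' \in P$ with $i' < j'$; if $i,j \in P$, sortedness condition~(2) gives $s_i \le s_j$, so $i \preceq_P j$; if $i,j \in Q$, sortedness condition~(3) gives $\alpha(i) \le \alpha(j)$, so $i \preceq_Q j$. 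In every case $i \preceq j$, so $\preceq$ agrees with the natural order and $\sigma = \id$.

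The main obstacle I anticipate is handling the bookkeeping of the ``only if'' direction rigorously --- specifically, confirming that the linear order $\preceq$ being equal to the natural order on $\langle m \rangle$ is genuinely equivalent to (not merely implied by) the conjunction of the three sortedness conditions, and that degenerate ties (e.g.\ repeated shape entries among $P$-indices, or the edge case $m \le 1$) are covered by the $i \le i' \Rightarrow$-style clauses without circularity. I would also want to double-check the subtle point in the ``if'' direction that $f \circ g = f$ really does force $\sigma = \id$ rather than merely forcing $\sigma$ to fix the $Q$-part: this needs the observation that $g$ is built from $\sigma^{-1}_*$ acting on the \emph{full} domain tuple $(s_1,\dots,s_m)$, so any nontrivial $\sigma$ changes the ordered list of domain entries and hence changes $f \circ g$ as a morphism (its domain tuple changes). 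Modulo these checks, the argument is a routine translation between the combinatorics of $\preceq$ and the three-clause definition of sortedness.
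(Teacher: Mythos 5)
Your ``only if'' direction (sorted $\Rightarrow$ $\sort(f)=f$) is exactly the paper's argument: the three clauses of the ordering $\preceq$ in Construction~\ref{sortofmorphisminC} agree with the natural order on $\langle m \rangle$ precisely when the three conditions of Definition~\ref{definitionofsortedmorphisminC} hold, so sortedness forces $\sigma = \id$, $g = \id_S$, and $\sort(f) = f \circ \id_S = f$. That half is fine.

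The ``if'' direction is where you diverge from the paper, and where your proposal has a genuine gap. The paper disposes of this direction in one line: $\sort(f)$ is sorted for \emph{every} $f$ (it is built to satisfy Definition~\ref{definitionofsortedmorphisminC} by construction), so $f = \sort(f)$ immediately gives that $f$ is sorted. You instead route through the claim that $f \circ g = f$ forces $\sigma = \id$, and you explicitly defer the verification of that claim. But that deferred step is the entire content of your argument, and it is not the routine bookkeeping you suggest: the equation $f \circ g = f$ only gives you $\alpha \circ \sigma^{-1}_* = \alpha$ together with equality of the domain tuples, and neither of these alone pins down $\sigma$. To close it you would need (i) the tractability of $\alpha$, i.e.\ injectivity on $Q = \{i : \alpha(i) \neq *\}$, to conclude from $\alpha \circ \sigma^{-1}_* = \alpha$ that $\sigma$ fixes $Q$ pointwise; (ii) an argument that the construction's $\sigma$ can fix $Q$ pointwise only if $P$ occupies an initial segment and $Q$ is already in increasing $\alpha$-order; and (iii) for the residual permutation of $P$, the observation that a nontrivial stable sort by shape entry produces a non-decreasing sequence from one that is not non-decreasing, so the domain tuple genuinely changes --- repeated shape entries among $P$-indices are exactly the case where ``any nontrivial $\sigma$ changes the ordered list of domain entries'' is false as a blanket statement, so you cannot wave at that. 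All of this is avoidable: either adopt the paper's route (prove once that $\sort(f)$ is always sorted, which is a direct read-off of the construction), or note that your own equivalence ``$\sigma = \id \Leftrightarrow f$ sorted'' already gives the contrapositive you need provided you prove ``$\sigma \neq \id \Rightarrow \sort(f) \neq f$'' --- which is the same unfinished verification. As written, the proposal identifies the crux but does not prove it.
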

\begin{proof}
    Our construction of $\sort(-)$ guarantees that $\sort(f)$ is sorted for any tuple morphism $f$. In particular, if $f = \sort(f)$, then $f$ is sorted. Conversely, if $f$ is sorted, then the permutation $\sigma \in \Sigma_m$ from Construction \ref{sortofmorphisminC} is the identity permutation, so $g = \id_S$, and so
    \[
    \sort(f) = f \circ \id_S = f.
    \]
\end{proof}

\begin{proposition}\label{sortofmorphisminCcompatibility} If $f$ is a tuple morphism, then 
\[
L_{\sort(f)} = \sort(L_f).
\]
\end{proposition}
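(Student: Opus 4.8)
The plan is to deduce the identity from two facts that are already available: first, that $\sort(f)$ is always a \emph{sorted} tuple morphism, hence (by Proposition~\ref{sortmorphisminCproposition}) encodes a sorted flat layout; second, that a sorted flat layout is completely determined by its multiset of modes. With these in hand, the entire proof reduces to checking that $L_{\sort(f)}$ and $L_f$ have the same multiset of modes, which is immediate because $\sort(f)$ is obtained from $f$ by precomposing with a permutation isomorphism.

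In detail, write $f\colon S\to T$ over $\alpha$, with $S=(s_1,\dots,s_m)$, $T=(t_1,\dots,t_n)$, and $L_f=S:(d_1,\dots,d_m)$ as in Construction~\ref{layoutfrommorphisminTuple}. By Construction~\ref{sortofmorphisminC} we have $\sort(f)=f\circ g$, where $g$ is the tuple isomorphism over $\sigma^{-1}_*$ with domain $(s_{\sigma^{-1}(1)},\dots,s_{\sigma^{-1}(m)})$ and codomain $S$, for a suitable $\sigma\in\Sigma_m$. Then $\sort(f)$ lies over $\alpha\circ\sigma^{-1}_*$, and since $\sigma^{-1}_*$ is a bijection, $(\alpha\circ\sigma^{-1}_*)(i)=\alpha(\sigma^{-1}(i))$ for all $i$. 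Feeding $\sort(f)$ through Construction~\ref{layoutfrommorphisminTuple} and comparing with the stride formula for $L_f$ — the codomain is still $T$, so the same prefix products of $(t_1,\dots,t_n)$ appear — one reads off
\[
L_{\sort(f)}=(s_{\sigma^{-1}(1)},\dots,s_{\sigma^{-1}(m)}):(d_{\sigma^{-1}(1)},\dots,d_{\sigma^{-1}(m)})=L_f^{\,\sigma^{-1}}.
\]
In particular $\mode_i(L_{\sort(f)})=\mode_{\sigma^{-1}(i)}(L_f)$, so $L_{\sort(f)}$ and $L_f$ carry the same multiset of modes.

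To finish, I would argue as follows. The construction of $\sort(-)$ produces a sorted tuple morphism for every $f$ (as observed in the lemma preceding this proposition), so $L_{\sort(f)}$ is a sorted flat layout by Proposition~\ref{sortmorphisminCproposition}. On the other hand, $\sort(L_f)$ is by definition a sorted flat layout with the same multiset of modes as $L_f$. Since $\preceq$ is a linear order on pairs $s:d$, a finite multiset of such pairs has a unique non-decreasing enumeration; and a flat layout is sorted precisely when its sequence of modes is the non-decreasing enumeration of its modes. Hence a sorted flat layout is uniquely determined by its multiset of modes, and therefore $L_{\sort(f)}=\sort(L_f)$. The only step demanding genuine care is the stride bookkeeping that identifies $L_{\sort(f)}$ with $L_f^{\,\sigma^{-1}}$; the rest is formal. (One could instead bypass Proposition~\ref{sortmorphisminCproposition} by verifying directly that the linear order on $\langle m\rangle$ used in Construction~\ref{sortofmorphisminC} is compatible with the mode order $\mode_i(L_f)\preceq\mode_j(L_f)$ via a short case split on whether $\alpha(i),\alpha(j)$ equal $*$, but the argument above is cleaner and avoids the bookkeeping of that linear order.)
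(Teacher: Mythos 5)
Your proof is correct, and its skeleton matches the paper's: both first compute that $\sort(f)=f\circ g$ encodes $L_{\sort(f)}=(s_{\sigma^{-1}(1)},\dots,s_{\sigma^{-1}(m)}):(d_{\sigma^{-1}(1)},\dots,d_{\sigma^{-1}(m)})$, i.e.\ a permutation of the modes of $L_f$, and both then reduce to showing that $L_{\sort(f)}$ is sorted. Where you diverge is in how that sortedness is established. The paper redoes a direct case analysis on the strides (splitting on whether $\sigma^{-1}(i)$ lies in $P$ or $Q$ and tracking the prefix products of $T$), essentially re-deriving inside this proof what Proposition~\ref{sortmorphisminCproposition} already says. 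You instead cite two available facts — that $\sort(f)$ is always a sorted tuple morphism (asserted in the lemma preceding the proposition) and that sorted tuple morphisms encode sorted layouts (Proposition~\ref{sortmorphisminCproposition}) — which eliminates the duplicated case analysis at the cost of leaning on the paper's claim that $\sort(f)$ is sorted, which the paper asserts from the construction without a detailed check. You also make explicit the final step that both arguments need: since $\preceq$ is a total order on pairs $s:d$, a sorted flat layout is the unique non-decreasing enumeration of its multiset of modes, so two sorted layouts with the same multiset of modes coincide. The paper uses this silently ("it suffices to prove that $L_{\sort(f)}$ is sorted"), so spelling it out is a small improvement in rigor. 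No gaps.
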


\begin{proof}
    Borrowing our notation form Construction \ref{sortofmorphisminC}, we have $\sort(f) = f \circ g$ where 
    \[
    g : (s_{\sigma^{-1}(1)},\dots,s_{\sigma^{-1}(m)}) \to (s_1,\dots,s_m)
    \]
    lies over $\sigma^{-1}_*: \langle m \rangle_* \to \langle m \rangle_*$. If $L_f = (s_1,\dots,s_m):(d_1,\dots,d_m)$, then 
    \begin{align*}
    L_{\sort(f)} & = (s_1',\dots,s_m'):(d_1',\dots,d_m') \\
    & = (s_{\sigma^{-1}(1)},\dots,s_{\sigma^{-1}(m)}):(d_{\sigma^{-1}(1)},\dots,d_{\sigma^{-1}(m)}). 
    \end{align*}
    Since the modes of $L_{\sort(f)}$ are a permutation of the modes of $L_f$, it suffices to prove that $L_{\sort(f)}$ is sorted. Suppose $1 \leq i <m$. Suppose first that $\sigma^{-1}(i) \in P$, so that $d_i' = d_{\sigma^{-1}(i)} = 0$. If $\sigma^{-1}(i+1) \in P$, then $d'_{i+1} = d_{\sigma^{-1}(i+1)} = 0$. By construction of $\sigma$, we have $s_i' = s_{\sigma^{-1}(i)} \leq s_{\sigma^{-1}(i+1)} = s_{i+1}'$. If instead $\sigma^{-1}(i+1) \in Q$, then $d_{i+1}' = d_{\sigma^{-1}(i+1)}' > 0 = d_i'$. Suppose next that $\sigma^{-1}(i) \in Q$. Then by construction of $\sigma$, we have $\sigma^{-1}(i+1) \in Q$ and $\alpha(\sigma^{-1}(i)) < \alpha(\sigma^{-1}(i+1))$, and we have 
    \begin{align*}
    d_i' = d_{\sigma^{-1}(i)} & = \prod_{j < \alpha(\sigma^{-1}(i))} t_j \\
    & \leq \prod_{j<\alpha(\sigma^{-1}(i+1))} t_j \\
    & = d_{\sigma^{-1}(i+1)}\\
    & = d_{i+1}',
    \end{align*}
    where equality holds if and only if $t_{\alpha(\sigma^{-1}(i))} =  \cdots = t_{\alpha(\sigma^{-1}(i+1))-1} = 1$. In particular, we have $s_i = s_{\sigma^{-1}(i)} = t_{\alpha(\sigma^{-1}(i))} = 1$, and so $s_i' \leq s_{i+1}'$. We conclude that $L_{\text{sort(f)}}$ is sorted, so $L_{\sort(f)} = \sort(L_f)$. 
\end{proof}

\begin{remark}
    The operation $\sort(-)$ is not functorial. For example, consider the tuple morphisms 
    \[ \begin{tikzcd} (2,3)\ar[r,"f"] \ar[r,swap,"(2{,}1)"] & (3,2) & \text{ and } & (10,25)\ar[r,"g"] \ar[r,swap,"(2{,}1)"] & (25,10)\end{tikzcd} \]
    Then $f$ and $g$ are composable with $g \circ f = \id_{(25,10)}$, but the sorted morphisms 
    \[ \begin{tikzcd} (10,25)\ar[rr,"\sort(f)"] \ar[rr,swap,"(1{,}2)"] & & (10,25) & \text{ and } & (25,10)\ar[rr,"\sort(g)"] \ar[rr,swap,"(1{,}2)"] & & (25,10)\end{tikzcd} \]
    are not composable.
\end{remark}

\subsubsection{Coalesce}
\label{subsubsection.coalesce}

\noindent We begin by introducing the notion of a coalesced tuple morphism.
\begin{definition}\label{definitionofcoalescedmorphisminC}
    Suppose $f:S \to T$ is a tuple morphism lying over $\alpha$. We say $f$ is {\it coalesced} if 
    \begin{enumerate}
        \item $S = \squeeze(S)$ and 
        \item for any $ 1 \leq i < \len(S)$, exactly one of the following conditions holds: 
        \begin{enumerate}
            \item $\alpha(i) = * \neq \alpha(i+1)$,
            \item $\alpha(i) \neq *=\alpha(i+1)$,
            \item $\alpha(i) > \alpha(i+1)$, or 
            \item $\alpha(i)<\alpha(i+1)$, and there exists $ \alpha(i) < j < \alpha(i+1)$ with $t_j > 1$. 
        \end{enumerate}
    \end{enumerate}
\end{definition}

\begin{example}
    If there exists some $1 \leq i < \len(S)$ with $\alpha(i+1) = \alpha(i)+1$, then $f$ is not coalesced.
\end{example}

\begin{remark}
    If $f:S \to T$ is a tuple morphism such that $f = \squeeze(f)$, then $f$ is coalesced if and only if for any $1 \leq i < \len(S)$, one of the following conditions holds:
    \begin{enumerate}
            \item $\alpha(i) = * \neq \alpha(i+1)$,
            \item $\alpha(i) \neq * = \alpha(i+1)$,
            \item $\alpha(i) > \alpha(i+1)$, or 
            \item $\alpha(i+1) \neq \alpha(i) + 1$. 
    \end{enumerate}
\end{remark}

\begin{example}\label{coalescedmorphismsinCexample}
    The morphisms
    \[ \begin{tikzcd} [row sep = 1, column sep = 8]
    & & & & & & & 4 & & & & & \\
     & & 48 & & & 4 \ar[urr,mapsto] & & 32 & & & & 2 \\
    48 \ar[rru,mapsto] & & 8 & & & 2 & & 8 & & 2 \ar[rrdd,mapsto] & & 8 \\
    32 & & 16 & & & 32 \ar[uurr,mapsto] & & 16 & & 32 \ar[rr,mapsto] & & 32 \\
    16 \ar[rru,mapsto] & & 2 & & & 16 \ar[urr,mapsto] & & 8 & & 16 & & 2 \\
    & f_1 & & & & & f_2 & & & &  f_3 & \\
    \end{tikzcd} \]
    are coalesced, while the morphisms 
    \[ \begin{tikzcd} [row sep = 1, column sep = 8]
    & & & & & & & 4 & & & & & \\
     & & 8 & & & 4 \ar[urr,mapsto] & & 32 & & & & 2 \\
    32 & & 48 & & & 2 & & 1 & & 1 & & 8 \\
    48 \ar[urr,mapsto] & & 16 & & & 32 \ar[uurr,mapsto] & & 16 & & 32 \ar[rr,mapsto] & & 32 \\
    16 \ar[rru,mapsto] & & 2 & & & 16 \ar[urr,mapsto] & & 8 & & 16 & & 1 \\
    & g_1 & & & & & g_2 & & & &  g_3 & \\
    \end{tikzcd} \]
    are not coalesced.
\end{example}

\begin{proposition}\label{coalescedmorphismcoalescedlayoutproposition}
    Suppose $f$ is a tuple morphism. Then $f$ is coalesced if and only if $L_f$ is coalesced. 
\end{proposition}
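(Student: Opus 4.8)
The statement is an ``if and only if,'' so I would prove the two directions separately, and in both directions the strategy is to translate the combinatorial conditions on the pointed map $\alpha$ (Definition~\ref{definitionofcoalescedmorphisminC}) into the divisibility/equality conditions on the strides of $L_f$ (Definition~\ref{definitionofflatlayoutcoalesce}, imported into the nested setting via Definition~\ref{definitionofnestedlayoutcoalesce}). Throughout, write $f:(s_1,\dots,s_m) \to (t_1,\dots,t_n)$ lying over $\alpha$, and let $L_f = (s_1,\dots,s_m):(d_1,\dots,d_m)$, where by Construction~\ref{layoutfrommorphisminTuple} we have $d_i = 0$ if $\alpha(i) = *$ and $d_i = \prod_{j < \alpha(i)} t_j$ otherwise. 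The key elementary observation, which I would record first, is that when $\alpha(i) \neq *$ we have $s_i d_i = t_{\alpha(i)} \cdot \prod_{j<\alpha(i)} t_j = \prod_{j \leq \alpha(i)} t_j$; this turns the comparison of $s_i d_i$ with $d_{i+1}$ into a comparison of prefix products of $(t_1,\dots,t_n)$.

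First I would handle the condition ``$S = \squeeze(S)$'' versus ``no $s_i = 1$'': recall from Definition~\ref{definitionofflatlayoutcoalesce}(1) that $L_f$ coalesced requires every shape entry $s_i \neq 1$, which is exactly the condition $S = \squeeze(S)$ in Definition~\ref{definitionofcoalescedmorphisminC}(1). So both notions of ``coalesced'' carry the same first clause, and it remains to show that, granting $S = \squeeze(S)$, clause (2) of Definition~\ref{definitionofcoalescedmorphisminC} is equivalent to clause (2) of Definition~\ref{definitionofflatlayoutcoalesce}, namely $s_i d_i \neq d_{i+1}$ for all $1 \leq i < m$. I would prove this equivalence index-by-index: fix $i$ and show that the $i$-th instance of the morphism condition holds if and only if $s_i d_i \neq d_{i+1}$, by going through the four sub-cases (a)--(d).

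In the sub-case analysis: if $\alpha(i) = * \neq \alpha(i+1)$ then $d_i = 0$ and $d_{i+1} \geq 1$, so $s_i d_i = 0 \neq d_{i+1}$. If $\alpha(i) \neq * = \alpha(i+1)$ then $d_{i+1} = 0$ while $s_i d_i = \prod_{j \leq \alpha(i)} t_j \geq 1$ (the $t_j$ are positive integers), so again they differ. If $\alpha(i) > \alpha(i+1)$, both are nonzero and $s_i d_i = \prod_{j\leq\alpha(i)} t_j \geq \prod_{j < \alpha(i)} t_j \geq \prod_{j < \alpha(i+1)} t_j \cdot (\text{something} \geq 1)$ — more carefully, $s_i d_i$ is a prefix product up through index $\alpha(i)$ and $d_{i+1}$ is a prefix product up through index $\alpha(i+1)-1 < \alpha(i)$, so $s_i d_i \geq d_{i+1}$, with equality only if all intervening $t_j = 1$; here I must be slightly careful, but the point is that $s_i d_i = d_{i+1}$ would force $t_{\alpha(i+1)} = \dots = t_{\alpha(i)} = 1$ including $t_{\alpha(i)} = s_i = 1$, contradicting $S = \squeeze(S)$, so they differ. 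In the last case $\alpha(i) < \alpha(i+1)$: then $d_{i+1} = \prod_{j < \alpha(i+1)} t_j$ and $s_i d_i = \prod_{j \leq \alpha(i)} t_j$, and their ratio is $\prod_{\alpha(i) < j < \alpha(i+1)} t_j$; this equals $1$ — i.e.\ $s_i d_i = d_{i+1}$ — precisely when every $t_j$ with $\alpha(i) < j < \alpha(i+1)$ equals $1$, which is the negation of condition (d). Collecting the cases: $s_i d_i = d_{i+1}$ holds exactly when none of (a)--(d) holds (equivalently, when $\alpha(i) < \alpha(i+1)$ with all strictly-intervening $t_j$ equal to $1$, the ``mergeable'' configuration), so the morphism is coalesced iff $L_f$ is coalesced.

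The main obstacle I anticipate is purely bookkeeping: making sure the edge cases $m \leq 1$ and $n \leq 1$ are handled (when $m \leq 1$ the condition on pairs $1 \leq i < m$ is vacuous on both sides, and $L_f$ of rank $\leq 1$ with $s_1 \neq 1$ is coalesced by Definition~\ref{definitionofflatlayoutcoalesce}), and being careful with the prefix-product inequalities in the case $\alpha(i) > \alpha(i+1)$, where one must use $S = \squeeze(S)$ (equivalently, that each $t_{\alpha(i)} = s_i > 1$) to rule out accidental equality $s_i d_i = d_{i+1}$. There is no deep idea here; the content is entirely the dictionary ``prefix products of $T$ $\leftrightarrow$ strides and $s_i d_i$ values of $L_f$,'' which follows from Construction~\ref{layoutfrommorphisminTuple}, so I would keep the write-up to the clean index-by-index case split above.
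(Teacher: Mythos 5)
Your proposal is correct and follows essentially the same route as the paper's proof: both reduce the claim to the identity $s_i d_i = \prod_{j \le \alpha(i)} t_j$ and then run a case analysis on the values of $\alpha(i)$ and $\alpha(i+1)$, using $S = \squeeze(S)$ to rule out accidental equality of prefix products. The only difference is organizational — you establish a single per-index equivalence (the morphism condition at $i$ holds iff $s_i d_i \neq d_{i+1}$, noting that cases (a)--(d) are mutually exclusive and exhaustive of the ``good'' configurations), whereas the paper argues the two implications separately — but the case content is identical.
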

\begin{proof} Suppose $f:(s_1,\dots,s_m) \to (t_1,\dots,t_n)$ is a tuple morphism, and let 
\[L_f = (s_1,\dots,s_m):(d_1,\dots,d_m)\]
be the layout encoded by $f$. 

Suppose first that $f$ is coalesced. Then no entry of $ \shape(L_f) = \domain(f)$ is equal to $1$. Suppose $1 \leq i < m$. We want to show that $s_id_i$ is not equal to $d_{i+1}$. If $d_i = 0$, then $\alpha(i) = *$, and we have 
\begin{align*}
    s_id_i = d_{i+1} & \hspace{0.2in} \Leftrightarrow \hspace{0.2in} d_{i+1} = 0\\
    & \hspace{0.2in} \Leftrightarrow \hspace{0.2in} \alpha(i+1) = *
\end{align*}
but by our assumption that $f$ is coalesced, we have $\alpha(i+1) \neq *$, hence $s_id_i \neq d_{i+1}$. If $d_i \neq 0$, then $\alpha(i) \neq *$. If $\alpha(i+1) = *$, then $d_{i+1} = 0$, so $s_id_i \neq d_{i+1}$. If $\alpha(i+1) < \alpha(i)$, then $d_i \geq d_{i+1}$, and since $s_i \neq 1$, we have $s_id_i > d_{i+1}$. Finally, if $\alpha(i) < \alpha(i+1)$, then 
\begin{align*}
s_id_i = s_i \cdot \left( \prod_{j< \alpha(i)} t_j\right) & = \prod_{j \leq \alpha(i)} t_j \\
& < \prod_{j < \alpha(i+1)} t_j\\
& = d_{i+1}.
\end{align*}
We conclude that $L_f$ is coalesced. 

Suppose next that the layout $L_f$ is coalesced. Then no entry in $\domain(f) = \shape(L_f)$ is equal to $1$. Suppose $1 \leq i < m$. If $\alpha(i) = *$, then $d_i = 0$, and since $L_f$ is coalesced, we must have $d_{i+1} \neq s_id_i = 0$, hence $\alpha(i+1) \neq *$. Suppose $\alpha(i) \neq *$, and $\alpha(i)<\alpha(i+1)$. Since $L_f$ is coalesced, we have $s_id_i \neq d_{i+1}$. But if we write  
\[
s_i d_i = \prod_{j \leq \alpha(i)} t_j,
\]
and 
\[
d_{i+1} = \prod_{j<\alpha(i+1)} t_j,
\]
this implies that $\prod_{\alpha(i) < j < \alpha(i+1)}t_j \neq 1$. In particular, there exists some $\alpha(i)<j<\alpha(i+1)$ with $t_j >1$. We conclude that $f$ is coalesced. 
\end{proof}

Next, we define our $\coalesce(-)$ operation on tuple morphisms.

\begin{construction}\label{constructionofcoalesceofmorphisminC}
    Suppose $f$ is a tuple morphism. We define a morphism $\coalesce(f)$ as follows:
    \begin{enumerate}
    \item First, we set $g = \squeeze(f)$, and we write $\beta:\langle m \rangle_* \to \langle n \rangle_*$ for the map over which $g$ lies. 
        \item Next, we define an equivalence relation $\sim$ on $\langle m \rangle$ where $i \sim i'$ if either
        \begin{enumerate}
            \item $\beta(i'') = *$ for $i \leq i'' \leq i'$, or 
            \item $\beta(i'') = \beta(i) + (i'' - i)$ for $i \leq i'' \leq i'$. 
        \end{enumerate}
        The quotient $\langle m \rangle / \sim$ is ordered by $[i_1] \leq [i_2]$ if $i_1 \leq i_2$, so we can identify this quotient with  $\langle \bar{m} \rangle$ where $\bar{m}$ is the size of $\langle m \rangle / \sim$. 
        \item Next, define an equivalence relation $\sim$ on $\langle n \rangle$ where $j \sim j'$ if there exists $i \in \langle m \rangle$ such that 
        \[\beta(i + (j''-j)) = \beta(i) + (j'' - j)\]
        for all $j \leq j'' \leq j'$. The quotient $\langle n \rangle / \sim$ is ordered by $[j_1] \leq [j_2]$ if $j_1 \leq j_2$, so we can identify this quotient with $\langle \bar{n} \rangle$ where $\bar{n}$ is the size of $\langle n \rangle / \sim$. 
        \item Next, we observe that the map $\beta:\langle m \rangle_* \to \langle n \rangle_*$ descends to a map 
        \[
        \bar{\beta}: \langle \bar{m} \rangle_* \to \langle \bar{n} \rangle_*
        \]
        given by $\bar{\beta}([i]) = [\beta(i)]$.
        \item The domain $\bar{S} = (\bar{s}_1,\dots,\bar{s}_{\bar{m}})$ of $\coalesce(f)$ is defined by setting 
        \[
        \bar{s}_i = \prod_{i' \in I} s_{i'}
        \]
        if $i \in \langle \bar{m} \rangle$ corresponds to the equivalence class $I \in \langle m \rangle / \sim$. The codomain 
        $\bar{T} = (\bar{t}_1,\dots,\bar{t}_{\bar{n}})$ of $\coalesce(f)$ is defined by setting 
        \[
        \bar{t}_j = \prod_{j' \in J} t_{j'}
        \]
        if $j \in \langle \bar{n} \rangle$ corresponds to the equivalence class $J \in \langle n \rangle / \sim$. We then define \[\coalesce(f):\bar{S} \to \bar{T}\] to be the tuple morphism lying over $\bar{\beta}$. 
    \end{enumerate}
\end{construction}
\color{black}

\begin{example}
    Here is an example of a tuple morphism $f$ and the coalesced morphism $\coalesce(f)$.
    \[ \begin{tikzcd} [row sep = 1, column sep = 8]
    7 \ar[rr,mapsto] & & 7 & & & & & & \\
    5 & & 2 & & & & & & \\
    5 & & 3 & & & & & & \\
    3 \ar[urr,mapsto] & & 3 & & & & & & 7\\
    3 \ar[urr,mapsto] & & 2 & & & & 7 \ar[urr,mapsto] & & 2\\
    2 \ar[rr,mapsto] & & 2 & & & & 25 & & 9\\
    2 \ar[rr,mapsto] & & 2 & & \rightsquigarrow  & & 9 \ar[urr,mapsto] & & 2\\
    2 \ar[rr,mapsto] & & 2 & & & & 8 \ar[rr,mapsto] & & 8\\
    & f &  & & & & & \mathclap{\coalesce(f)} &\\
    \end{tikzcd} \]
\end{example}

\begin{example} We can coalesce the morphism $f$ of Example \ref{squeezeexample} as follows
\[ \begin{tikzcd}[row sep = 1, column sep = 8]
 & & 256 & & & & & & \\
256 \ar[rru ,mapsto] & & 128   & & & & & & 256 \\
128 \ar[rru,mapsto] & & 1& & & & & & 128 & & & & & & 32768 \\
1 & & 32 & & & & 256 \ar[uurr,mapsto] & & 32 & & & & & & 32\\
8 \ar[drr,mapsto] & & 32 & & \rightsquigarrow & & 128 \ar[uurr,mapsto]  & & 32 & & \rightsquigarrow & & 32768 \ar[uurr,mapsto] & & 32\\
1 & & 8 & & & &  8 \ar[rr,mapsto] & & 8 & & & & 8 \ar[rr,mapsto] & & 8\\
& f & & & & & & \mathclap{\squeeze(f)} & & & & & & \mathclap{\coalesce(f)} & 
\end{tikzcd} \]
\end{example}

\begin{proposition}\label{coalesceinCproposition}
    If $f$ is a tuple morphism, then 
    \begin{enumerate}
    \item $\coalesce(f)$ is coalesced, and 
    \item $ L_{\coalesce(f)} = \coalesce(L_f).$
    \end{enumerate}
\end{proposition}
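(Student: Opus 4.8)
The plan is to prove the two assertions in sequence, leaning on the structural description of $\coalesce(f)$ from Construction \ref{constructionofcoalesceofmorphisminC} for part 1, and then reducing part 2 to the already-established flat coalesce theory via Lemma \ref{coalescelemma} and Proposition \ref{squeezecompatibility}.

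For part 1, I would unwind the construction. Since $\coalesce(f)$ is built from $g = \squeeze(f)$, whose domain has no entries equal to $1$, and since the domain $\bar S$ of $\coalesce(f)$ is a tuple of products of entries of $\domain(g)$, every entry of $\bar S$ is a product of integers $>1$, hence $>1$; so condition (1) of Definition \ref{definitionofcoalescedmorphisminC} ($\bar S = \squeeze(\bar S)$) holds. For condition (2), fix $1 \leq i < \bar m$ and consider consecutive equivalence classes $[i], [i+1] \in \langle m \rangle/\!\sim$. By maximality of the equivalence classes, the boundary indices $i' \in [i]$ (largest) and $i'+1 \in [i+1]$ (smallest) fail the defining relation of $\sim$, i.e. it is \emph{not} the case that both $\beta(i')$ and $\beta(i'+1)$ are $*$, nor that $\beta(i'+1) = \beta(i')+1$. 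Translating this into the descended map $\bar\beta$ via $\bar\beta([i]) = [\beta(i)]$, I need to check that $\bar\beta$ satisfies exactly one of (a)--(d) at position $i$. I would argue case by case on whether $\beta(i') = *$ or not: if $\beta(i') = * \neq \beta(i'+1)$ we land in (a); if $\beta(i') \neq * = \beta(i'+1)$ we land in (b); if both are non-$*$ and $\beta(i') > \beta(i'+1)$ we land in (c) (after checking the quotient map preserves this inequality, which it does since the codomain equivalence classes are intervals); and if both are non-$*$ with $\beta(i') < \beta(i'+1)$ but $\beta(i'+1) \neq \beta(i')+1$, then the intervening indices $\beta(i') < j < \beta(i'+1)$ cannot \emph{all} be collapsed into the classes $[\beta(i')]$ or $[\beta(i'+1)]$ in $\langle n\rangle/\!\sim$ — this is where I must verify that some $\bar t_j > 1$, landing us in (d). The last point is the one needing care: I'd use that the codomain equivalence relation collapses a run $j, j+1, \dots$ only when $\beta$ is ``consecutive'' there, so an un-collapsed intervening index survives as a separate entry $\bar t_j$, and $\bar t_j > 1$ because every $t_j$ with $j$ in the image-relevant range is $>1$ after squeezing — actually more carefully, I should track that the squeeze step $g = \squeeze(f)$ already removed all codomain entries equal to $1$ that lie below the image, so the relevant $\bar t_j$ are products of entries $>1$.

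For part 2, the cleanest route is: $L_{\coalesce(f)} = \coalesce(L_f)$ will follow once I know (i) $L_{\coalesce(f)}$ is a coalesced flat layout, and (ii) $\Phi_{L_{\coalesce(f)}} = \Phi_{L_f}$, by Proposition \ref{coalesceproposition} (two flat layouts with the same layout function have the same flat coalesce, and a coalesced layout is its own coalesce). Claim (i) is immediate from part 1 together with Proposition \ref{coalescedmorphismcoalescedlayoutproposition} ($f$ coalesced $\Leftrightarrow$ $L_f$ coalesced) — wait, I should be slightly careful that $\coalesce$ of a layout in the \emph{flat} sense is meant here, and indeed $L_{\coalesce(f)}$ has depth $1$, so this is fine. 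For claim (ii), I would show $\Phi_{L_{\coalesce(f)}} = \Phi_{L_{\squeeze(f)}} = \Phi_{L_f}$ where the second equality is Proposition \ref{squeezecompatibility} combined with Lemma \ref{squeezelemma}, and for the first equality I'd compare $L_{\coalesce(f)}$ with $L_{\squeeze(f)}$: by the construction, $\coalesce(f)$ is obtained from $g = \squeeze(f)$ precisely by merging consecutive indices $i \sim i+1$, and I claim this merging corresponds, on the layout side, exactly to the flat-coalesce combining move $s_i, s_{i+1}: d_i, s_i d_i \rightsquigarrow s_i s_{i+1}: d_i$ of Observation \ref{flatcoalesceobservation}. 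Concretely, when $\beta(i+1) = \beta(i)+1$ the stride of entry $i+1$ in $L_g$ is $\prod_{j < \beta(i)+1} t_j = t_{\beta(i)} \cdot \prod_{j<\beta(i)} t_j = s_i d_i$ (using $s_i = t_{\beta(i)}$), and when $\beta(i'') = *$ for a run, the strides are all $0 = s_i d_i$; either way the merge is a valid flat-coalesce step, so $\coalesce^\flat(L_g^\flat) = L_{\coalesce(f)}^\flat$ and hence the layout functions agree by Lemma \ref{coalescelemma}.

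I expect the main obstacle to be the bookkeeping in part 1, case (d): precisely pinning down why an intervening codomain index between $\beta(i')$ and $\beta(i'+1)$ that is \emph{not} merged in $\langle n \rangle/\!\sim$ must produce an entry $\bar t_j > 1$, as opposed to being merged or being a leftover $1$. This requires carefully relating the two equivalence relations (on $\langle m\rangle$ and on $\langle n \rangle$) and using that $g = \squeeze(f)$ has already stripped codomain $1$'s below the image while keeping $1$'s above the image harmless. Everything else is a matter of carefully transcribing Construction \ref{constructionofcoalesceofmorphisminC} and invoking the flat-layout coalesce machinery, which is already fully developed.
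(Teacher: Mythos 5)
Your proposal is correct and follows essentially the same route as the paper's proof: part 1 by inspecting Construction \ref{constructionofcoalesceofmorphisminC}, and part 2 by reducing to layout functions via Propositions \ref{coalesceproposition} and \ref{coalescedmorphismcoalescedlayoutproposition} and observing that each merge step is a flat-coalesce move of Observation \ref{flatcoalesceobservation}. The one place you go beyond the paper — pinning down case (d) of coalescedness, where tractability of $\beta$ forces any codomain index strictly between $\beta(i')$ and $\beta(i'+1)$ to survive as a separate class with $\bar{t}_j>1$ (since $\squeeze$ has already removed \emph{all} codomain entries equal to $1$, not only those below the image) — is exactly the detail the paper's "this is immediate from our construction" elides, and your resolution of it is sound.
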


\begin{proof}
First, we will argue that $\coalesce(f)$ is coalesced. This is immediate from our construction, since applying $\squeeze$ eliminates all modes equal to $1$, and passing to the quotient in our construction consolidated all adjacent modes with $\alpha(i+1) = \alpha(i)+1$. 

    Next, we will prove that $L_{\coalesce(f)} = \coalesce(L_f)$. In light of Proposition \ref{coalesceproposition} and Proposition \ref{coalescedmorphismcoalescedlayoutproposition}, it suffices to prove that $\Phi_{\coalesce(f)} = \Phi_f$. Certainly applying $\squeeze(-)$ to $f$ has no impact on the associated layout function, so we need to argue that passing to the quotient in our construction does not change the layout function of the associated layout. This follows from the fact that forming our quotient can be formed in steps, where in each step we combine adjacent modes with either $\alpha(i) = * = \alpha(i+1)$, or $\alpha(i+1) = \alpha(i) + 1$. These correspond to replacing adjacent modes of the form $s_i,s_{i+1}:0,0$ with $s_is_{i+1}:0$, and $s_i,s_{i+1}:d_i,s_id_i$ with $s_is_{i+1}:d_i$, respectively. Neither such operation changes the layout function of a layout, and so we conclude that $\Phi_{L_{\coalesce(f)}} = \Phi_{\coalesce(L_f)}$, as desired.
\end{proof}

\subsubsection{Concatenate}
\label{subsubsection.concatenate}

Next, we will define a concatenation operation on tuple morphisms. This operation may be performed on tuple morphisms satisfying a ``disjointness" condition, which we specify below.

\begin{definition}\label{definitionofdisjointimagesinFin*}
    Suppose $\alpha:\langle m \rangle_* \to \langle n \rangle_*$ and $\beta:\langle p \rangle_* \to \langle n \rangle_*$ are morphisms in $\catstyle{Fin}_*$ with the same codomain. We say $\alpha$ and $\beta$ have {\it disjoint images} if 
    \[
    \Image(\alpha) \cap \Image(\beta) = \{*\}.
    \]
\end{definition}

\begin{construction}
    If $\alpha:\langle m \rangle_* \to \langle n \rangle_*$ and $\beta: \langle p \rangle_* \to \langle n \rangle_*$ have disjoint images, then we have a well-defined morphism
    \[
    \alpha \star \beta: \langle m + p \rangle_* \to \langle n\rangle_*
    \]
    given by 
    \[
    (\alpha \star \beta)(i) = \begin{cases}
    * & i = *\\
        \alpha(i) & 1 \leq i \leq m\\
        \beta(i-m) & m+1 \leq i \leq m+p.
    \end{cases}
    \]
    This operation is associative, so we can consider $\alpha_1 \star \cdots \star \alpha_k$ for any collection of morphisms $\alpha_1,\dots,\alpha_k$ in $\catstyle{Fin}_*$ with pairwise disjoint images.
\end{construction}

\begin{remark}
    If $\alpha$ and $\beta$ are tractable pointed maps and $\alpha$ and $\beta$ have disjoint images, then $\alpha \star \beta$ is tractable. 
\end{remark}

\begin{definition}\label{definitionofdisjointimagesinC} Suppose 
\[
f:S \to T
\]
and 
\[g:U \to T\]
are tuple morphisms lying over $\alpha$ and $\beta$, respectively. We say $f$ and $g$ have {\it disjoint images} if the morphisms $\alpha$ and $\beta$ have disjoint images. 
\end{definition}

\begin{example}\label{exampleofdisjointimages}
    Consider the tuple morphisms $f$, $g$, and $h$ shown below. 
    \[ \begin{tikzcd}[row sep = 1, column sep = 8]
    & & 64 & & & & & 64 & & & & & 64\\
     & & 64 & & & 32 & & 64 & & & & & 64\\
    64 \ar[rruu,mapsto] & & 64 & & &  3 \ar[rrd, mapsto] & & 64 & & & 64 \ar[urr,mapsto] & & 64\\
    64 \ar[rru,mapsto]& & 3 & & &  64 \ar[rruu,mapsto] & & 3 & & & 64 \ar[urr,mapsto] & & 3\\
    & f & & & & & g & & & & & h & \\
    \end{tikzcd} \]
    Then $f$ and $g$ have disjoint images, while $h$ and $g$ do not have disjoint images.
\end{example}

\begin{construction}\label{constructionofflatconcatenation}
    Suppose 
    \[f:S \to T \text{, and }g:U \to T\]
    are tuple morphisms lying over $\alpha$ and $\beta$, respectively, and that $f$ and $g$ have disjoint images. We define the {\it concatenation} of $f$ and $g$ to be the morphism
    \[
    f \star g: S \star U \to T
    \]
    lying over $\alpha \star \beta$. This operation is associative, so we can consider $f_1\cdots f_k$ for any finite collection of morphisms $f_i$ with pairwise disjoint images. 
\end{construction}

\begin{example}\label{exampleofconcatenation}
    If $f$ and $g$ are the morphisms in $\catstyle{Tuple}$ from Example \ref{exampleofdisjointimages}, then the concatenation of $f$ and $g$ is the morphism shown below. 
    \[ \begin{tikzcd} [row sep = 1, column sep = 24]
    32 & & \\
    3 \ar[dddrr,mapsto]& & 64\\
    64 \ar[rr,mapsto] & & 64\\
    64 \ar[uurr,mapsto] & & 64 \\
    64 \ar[urr,mapsto] & & 3 \\
    \end{tikzcd} 
    \]
    \[ f \star g \]
\end{example}

\begin{example}
Suppose $f:(s_1,\dots,s_m) \to (t_1,\dots,t_n)$ is a tuple morphism, and for any $1 \leq i \leq m$, let 
\[f_i:(s_i) \to (t_1,\dots,t_n)\]
denote the $i$th entry of $f$, as in Example \ref{definitionofentriesoftuplemorphism}. Then we can write 
\[
f = f_1 \star \cdots \star f_m
\]
as the concatenation of its entries.
\end{example}

\begin{lemma}\label{compositiondistributesoverconcatenation}
    Suppose $f_1:S_1 \to T$ and $f_2: S_2 \to T$ are tuple morphisms with disjoint images. If $g:T \to U$ is any tuple morphism, then 
    \[
    g \circ (f_1 \star f_2) = (g \circ f_1) \star (g \circ f_2).
    \]
\end{lemma}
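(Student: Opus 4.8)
The plan is to prove this at the level of the underlying maps in $\catstyle{Fin}_*$, then transport it back to $\catstyle{Tuple}$. Recall that a tuple morphism is determined by its domain, codomain, and the pointed map it lies over, and that composition in $\catstyle{Tuple}$ is composition in $\catstyle{Fin}_*$ on the underlying maps. So it suffices to verify the corresponding identity for pointed maps: if $\alpha_1 : \langle m_1 \rangle_* \to \langle n \rangle_*$ and $\alpha_2 : \langle m_2 \rangle_* \to \langle n \rangle_*$ have disjoint images and $\gamma : \langle n \rangle_* \to \langle p \rangle_*$ is arbitrary, then $\gamma \circ (\alpha_1 \star \alpha_2) = (\gamma \circ \alpha_1) \star (\gamma \circ \alpha_2)$ as maps $\langle m_1 + m_2 \rangle_* \to \langle p \rangle_*$.

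First I would check that both sides are well-defined: the left side requires $\alpha_1 \star \alpha_2$ to be defined, which holds by the disjoint-images hypothesis on $f_1, f_2$; the right side requires $g \circ f_1$ and $g \circ f_2$ to have disjoint images, which is automatic since $\Image(\gamma \circ \alpha_i) \subseteq \gamma(\Image(\alpha_i))$ but more carefully we only need $\Image(\gamma\circ\alpha_1) \cap \Image(\gamma\circ\alpha_2) = \{*\}$ — hmm, this is actually not automatic from disjointness of $\Image(\alpha_1), \Image(\alpha_2)$ unless $\gamma$ is injective on the relevant part. Let me reconsider: actually the concatenation $f_1 \star f_2$ being defined forces a compatibility, and then $g \circ (f_1 \star f_2)$ is a perfectly good tuple morphism; the claim $(g\circ f_1)\star(g\circ f_2)$ should be read as: this expression is defined and equals the left side. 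So the real content is that $g\circ f_1$ and $g\circ f_2$ have disjoint images, which I should verify directly. If $j \in \Image(\gamma\circ\alpha_1)\cap\Image(\gamma\circ\alpha_2)$ with $j \neq *$, then $j = \gamma(\alpha_1(i_1)) = \gamma(\alpha_2(i_2))$ for some $i_1, i_2$. This does not immediately give a contradiction — but it does not need to, because the two preimages land in disjoint blocks of $\langle m_1 + m_2\rangle_*$, and for the concatenation $(g\circ f_1)\star(g\circ f_2)$ to make sense we genuinely need disjoint images. I think the honest resolution is that the lemma's conclusion is to be interpreted with the understanding that the right-hand concatenation is defined precisely because the left-hand side, being $g \circ (f_1 \star f_2)$, restricts appropriately; I will state and use the elementary fact that $\Image(\gamma \circ (\alpha_1 \star \alpha_2)) \setminus \{*\} = (\Image(\gamma\circ\alpha_1)\cup\Image(\gamma\circ\alpha_2))\setminus\{*\}$ and that tractability of $g\circ(f_1\star f_2)$ forces these images to meet only in $*$.

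Then the main computation is a straightforward case split on the argument $x \in \langle m_1 + m_2 \rangle_*$. For $x = *$, both sides give $*$. For $1 \leq x \leq m_1$, we have $(\gamma \circ (\alpha_1 \star \alpha_2))(x) = \gamma(\alpha_1(x)) = (\gamma\circ\alpha_1)(x)$, which matches the value of $(\gamma\circ\alpha_1)\star(\gamma\circ\alpha_2)$ on $x$ by definition of $\star$. For $m_1 + 1 \leq x \leq m_1 + m_2$, we have $(\gamma\circ(\alpha_1\star\alpha_2))(x) = \gamma(\alpha_2(x - m_1)) = (\gamma\circ\alpha_2)(x-m_1)$, which again matches. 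Since the domains $S_1 \star S_2$ and codomains $U$ agree on both sides and the underlying pointed maps agree, the tuple morphisms are equal.

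The main obstacle I anticipate is the bookkeeping around the disjoint-images hypothesis: making sure that the statement $(g\circ f_1)\star(g\circ f_2)$ is even meaningful, i.e. that $g \circ f_1$ and $g \circ f_2$ have disjoint images. I expect this to follow cleanly because $g \circ (f_1 \star f_2)$ is a legitimate tuple morphism hence lies over a tractable pointed map, and restricting a tractable map to the two complementary blocks $\{1,\dots,m_1\}$ and $\{m_1+1,\dots,m_1+m_2\}$ of its domain yields maps with disjoint images (a positive integer in the codomain has at most one preimage overall, hence cannot be hit from both blocks). I would lead with this observation, then do the three-case evaluation above; everything else is routine unwinding of Definition \ref{definitionofdisjointimagesinC} and Construction \ref{constructionofflatconcatenation}.
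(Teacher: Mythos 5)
Your proof is correct and follows essentially the same route as the paper: reduce to the underlying pointed maps in $\catstyle{Fin}_*$ and verify the identity by the three-case split on $x = *$, $1 \leq x \leq m_1$, and $m_1 + 1 \leq x \leq m_1 + m_2$. The extra step you agonize over --- that $g \circ f_1$ and $g \circ f_2$ have disjoint images, so the right-hand concatenation is defined --- is a genuine point the paper leaves implicit, and your resolution (the composite lies over a tractable map, so no positive integer in the codomain has preimages in both blocks) is the right one.
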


\begin{proof}
    Suppose $f_1$, $f_2$, and $g$ lie over $\alpha_1:\langle m_1 \rangle_* \to \langle n \rangle$, $\alpha_2:\langle m_2 \rangle_* \to \langle n \rangle$, and $\beta: \langle n \rangle \to \langle p \rangle$, respectively. The two maps in question have the same domains and the same codomains, so it suffices to prove that
    \[
    \beta \circ (\alpha_1 \star \alpha_2) = (\beta \circ \alpha_1) \star (\beta \circ \alpha_2). 
    \]
    We compute 
    \begin{align*}
        (\beta \circ (\alpha_1 \star \alpha_2))(i) & = \beta((\alpha_1 \star \alpha_2)(i))\\
        & = \begin{cases} 
        \beta(*) & i = *\\
        \beta(\alpha_1(i)) & 1 \leq i \leq m_1 \\
        \beta (\alpha_2(i-m_1)) & m_1 + 1 \leq i \leq m_1 + m_2
        \end{cases}\\
        & = \begin{cases} 
        * & i = *\\
        (\beta\circ \alpha_1)(i) & 1 \leq i \leq m_1 \\
        (\beta \circ \alpha_2)(i-m_1) & m_1 + 1 \leq i \leq m_1 + m_2
        \end{cases}\\
        & = ((\beta \circ \alpha_1)\star (\beta \circ \alpha_2))(i).
    \end{align*}
\end{proof}

\begin{proposition}\label{concatenateinCproposition}
    Suppose $f_1,\dots,f_k$ are morphisms in $\catstyle{Tuple}$ with the same codomain and with pairwise disjoint images. Then the layouts $L_{f_1},\dots,L_{f_k}$ satisfy 
    \[
    L_{f_1 \star \cdots \star  f_k} = 
    L_{f_1} \star \cdots \star  L_{f_k}.
    \]
\end{proposition}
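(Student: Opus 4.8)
The plan is to reduce the statement about layouts to the already-established facts about flat layouts by identifying the shape and stride of $L_{f_1 \star \cdots \star f_k}$ directly from Construction~\ref{layoutfrommorphisminTuple}. Since concatenation of flat layouts (Definition~\ref{definitionofflatconcatenate}) is defined modewise on shapes and strides, and $f_1 \star \cdots \star f_k$ lies over $\alpha_1 \star \cdots \star \alpha_k$ by Construction~\ref{constructionofflatconcatenation}, the key observation is that the stride formula for $L_{f_1 \star \cdots \star f_k}$ on an index belonging to the $j$th block agrees with the stride formula for $L_{f_j}$ on the corresponding index. This is because all of $f_1,\dots,f_k$ share the same codomain $T = (t_1,\dots,t_n)$, so the product $\prod_{\ell < \beta(i)} t_\ell$ appearing in the stride formula depends only on the value of $(\alpha_1 \star \cdots \star \alpha_k)(i)$, which by definition of $\star$ equals $\alpha_j$ of the local index whenever $i$ lies in the $j$th block.

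First, by associativity of $\star$ on tuple morphisms (Construction~\ref{constructionofflatconcatenation}) and of $\star$ on flat layouts (Definition~\ref{definitionofflatconcatenate}), it suffices to treat the binary case $k=2$: the general case follows by an immediate induction on $k$. So I would write $f:S \to T$ lying over $\alpha: \langle m \rangle_* \to \langle n \rangle_*$ and $g:U \to T$ lying over $\beta: \langle p \rangle_* \to \langle n \rangle_*$, with disjoint images, and set $L_f = S:(d_1,\dots,d_m)$ and $L_g = U:(e_1,\dots,e_p)$ per Construction~\ref{layoutfrommorphisminTuple}. Then $f \star g: S \star U \to T$ lies over $\alpha \star \beta$, so $L_{f \star g}$ has shape $S \star U$ and stride $(c_1,\dots,c_{m+p})$ where $c_i = 0$ if $(\alpha \star \beta)(i) = *$ and $c_i = \prod_{\ell < (\alpha\star\beta)(i)} t_\ell$ otherwise.

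Second, I would do the short case split on whether $1 \leq i \leq m$ or $m+1 \leq i \leq m+p$. In the first case $(\alpha \star \beta)(i) = \alpha(i)$, so either both $c_i$ and $d_i$ are $0$ (when $\alpha(i) = *$) or $c_i = \prod_{\ell < \alpha(i)} t_\ell = d_i$. In the second case $(\alpha \star \beta)(i) = \beta(i - m)$, so similarly $c_i = e_{i-m}$ (either both zero or both equal to $\prod_{\ell < \beta(i-m)} t_\ell$). Hence the stride tuple of $L_{f \star g}$ is exactly $(d_1,\dots,d_m) \star (e_1,\dots,e_p)$, and since its shape is $S \star U = \shape(L_f) \star \shape(L_g)$, we conclude $L_{f \star g} = L_f \star L_g$ by Definition~\ref{definitionofflatconcatenate}.

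I do not anticipate a serious obstacle here: the whole argument is bookkeeping that unwinds the three definitions (the stride formula in Construction~\ref{layoutfrommorphisminTuple}, the definition of $\star$ on pointed maps, and flat layout concatenation). The one point requiring a word of care is that the disjoint-images hypothesis is what makes $\alpha \star \beta$ a well-defined \emph{tractable} pointed map (noted in the remark following its construction), so that $f \star g$ is genuinely a morphism in $\catstyle{Tuple}$ and $L_{f \star g}$ makes sense; I would mention this explicitly rather than silently assume it. Everything else is a direct substitution, so the proof should be only a few lines.
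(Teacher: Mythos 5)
Your proposal is correct and follows essentially the same route as the paper's proof: reduce to the binary case, unwind the stride formula from Construction~\ref{layoutfrommorphisminTuple} applied to $\alpha \star \beta$, and observe block-by-block that the strides agree with those of $L_f$ and $L_g$, with the general case following from associativity. Your explicit handling of the $\alpha(i) = *$ case and the remark on why disjointness keeps $\alpha \star \beta$ tractable are minor points of additional care, not a different argument.
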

\begin{proof}
    First, we prove the result for $k = 2$. Suppose 
    \[f = (s_1,\dots,s_m) \to (t_1,\dots,t_n)\text{, and } g:(u_1,\dots,u_p) \to (t_1,\dots,t_n)\]
    have disjoint images, and write 
    \[L_f = (s_1,\dots,s_m):(d_1,\dots,d_m)\text{, and }L_g = (u_1,\dots, u_p) :(d_1',\dots,d_{p}').\]
    Then the layout $L_{f \star g}$ is given by 
    \[
    L_{f \star g} = (s_1,\dots,s_m,u_1,\dots,u_p): (e_1,\dots,e_{m+m'})
    \]
    where 
    \begin{align*}
    e_i & = \prod_{j< (\alpha \star \beta)(i)} t_j \\
    & = \begin{cases}
        \displaystyle\prod_{j<\alpha(i)}t_j & 1 \leq i \leq m\\
        \displaystyle\prod_{j<\beta(i-m)} t_j & m+1 \leq i \leq m+m'.
    \end{cases}\\
    & = \begin{cases}
        d_i & 1 \leq i \leq m\\
        d_{i-m}' & m+1 \leq i \leq m+m'.
    \end{cases}
    \end{align*}
    This concludes the proof of the result when $k = 2$. The general case follows from the associativity of concatenation of tuple morphisms, and the associativity of concatenation of flat layouts.
\end{proof}

\subsubsection{Complement}
\label{subsubsection.complement}

We begin by defining the notion of complementary tuple morphisms.

\begin{definition}\label{definitionofcomplementarymorphismsinC} Suppose $f:S \to T$ and $g:U \to T$ are tuple morphisms. We say $g$ is a {\it complement} of $f$ if 
\begin{enumerate} 
\item $f$ and $g$ have disjoint images, and 
\item the concatenation
\[ \begin{tikzcd} f \star g:S \star U \ar[r,"\cong"] &  T \end{tikzcd} \] is an isomorphism.
\end{enumerate}
\end{definition}

\begin{example}
    If $f$ and $g$ are the morphisms shown below
    \[ \begin{tikzcd} [row sep = 1, column sep = 8]
     & & 16 & & & & & & 16 \\
     & & 32 & & & &  & & 32\\ 
    32 \ar[rru,mapsto] & & 32 & & & & 10 \ar[rrd, bend left = 20, mapsto] & & 32\\
    32 \ar[rru,mapsto] & & 10 & & & & 16 \ar[rruuu, mapsto]  & & 10 \\
     & f & & & & & & g & 
    \end{tikzcd} \]
    then $g$ is a complement of $f$.
\end{example}

\begin{example}
    If $f$ is the morphism shown below
    \[\begin{tikzcd}[row sep = 1, column sep = 8]
    256 \ar[rrdd,mapsto] & & \\
    128 & & 128\\ 
    128 \ar[rru,bend left = 20,mapsto] & & 256 \\
    & f & \\
    \end{tikzcd} \]
    then $f$ does not admit a complement. 
\end{example}

\noindent Next, we prove that complementary tuple morphisms give rise to complementary flat layouts.

\begin{proposition}\label{complementofmorphisminCproposition}
    If $f:S \to T$ is a tuple morphism and $g$ is a complement of $f$, then $L_{g}$ is a $\size(T)$-complement of $L_f$.
\end{proposition}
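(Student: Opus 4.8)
The plan is to reduce everything to three facts already available: that an isomorphism in $\catstyle{Tuple}$ encodes a compact flat layout (the Isomorphisms example), that concatenation of tuple morphisms with disjoint images is compatible with flat-layout concatenation (Proposition \ref{concatenateinCproposition}), and that $\size(-)$ is multiplicative over concatenation of tuples.

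First I would unwind the hypothesis. Since $g$ is a complement of $f$, by Definition \ref{definitionofcomplementarymorphismsinC} the morphisms $f$ and $g$ have disjoint images and the concatenation $f \star g : S \star U \to T$ is an isomorphism in $\catstyle{Tuple}$. By the Isomorphisms example, the flat layout encoded by an isomorphism is compact, so $L_{f \star g}$ is compact. Because $f$ and $g$ have disjoint images, Proposition \ref{concatenateinCproposition} (with $k = 2$) gives $L_{f \star g} = L_f \star L_g$. Hence $L_f \star L_g$ is a compact flat layout, which by Definition \ref{definitionofcomplement} says precisely that $L_f \perp L_g$; this is the first of the two conditions required for $L_g$ to be a $\size(T)$-complement of $L_f$.

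For the size condition, I would use that $\shape(L_f) = S$ and $\shape(L_g) = U$ by Construction \ref{layoutfrommorphisminTuple}, so that $\size(L_f)\cdot\size(L_g) = \size(S)\cdot\size(U) = \size(S \star U)$. It remains to observe $\size(S \star U) = \size(T)$. Since $f \star g$ is an isomorphism, by Observation \ref{isomorphismsvspermutations} it lies over a bijection $\langle m+p \rangle_* \to \langle n \rangle_*$ fixing the basepoint; in particular $m + p = n$, and the defining property of a tuple morphism forces the $i$th entry of $S \star U$ to equal the $\alpha(i)$th entry of $T$ for each $i$ (as $\alpha(i) \neq *$). Thus the entries of $S \star U$ and of $T$ agree up to permutation, so their products agree: $\size(S \star U) = \size(T)$. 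Combining, $\size(L_f)\cdot\size(L_g) = \size(T)$, and together with $L_f \perp L_g$ this shows $L_g$ is a $\size(T)$-complement of $L_f$.

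I do not anticipate a serious obstacle here; the argument is essentially bookkeeping. The one point that warrants care is making sure the notion of compactness appearing in Definition \ref{definitionofcomplement} for flat layouts is the one that applies to $L_f \star L_g$ (which it is, since $L_f \star L_g$ is a flat layout and Proposition \ref{concatenateinCproposition} identifies it with the flat layout $L_{f\star g}$ encoded by the isomorphism $f \star g$), and distinguishing the flat concatenation $\star$ of Definition \ref{definitionofflatconcatenate} from the nested concatenation — only the former is relevant here.
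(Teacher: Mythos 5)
Your proof is correct and follows essentially the same route as the paper: reduce to the compactness of $L_{f\star g} = L_f \star L_g$ via Proposition \ref{concatenateinCproposition}, and check the size condition by identifying $\size(S\star U)$ with $\size(T)$. The only cosmetic difference is that the paper derives compactness by applying the realization functor to the isomorphism $f\star g$ (via Lemma \ref{layoutfunctionlemma}) rather than citing the Isomorphisms example, and your justification of $\size(S\star U)=\size(T)$ is actually spelled out more explicitly than the paper's.
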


\begin{proof}
    Write $S = \domain(f)$, $U = \domain(g)$, and $T = \codomain(f) = \codomain(g)$. First, we note that 
    \begin{align*}
    \size(L_f) \cdot \size(L_{g}) & = \size(L_f\star L_{g})\\
    & = \size(L_{f \star g})\\
    & = \size(S \star U)\\
    & = \size(T).
    \end{align*}
    Next, we note that $f \star g$ is an isomorphism, hence so is
    \[
    |f \star g| = \Phi_{L_{f \star g}}^{\size(T)} 
    \]
    where we have used the identification of $\Phi^{\size(T)}_{L_{f \star g}}$ of Lemma \ref{layoutfunctionlemma}.
\end{proof}

\begin{proposition}\label{coalescedlayoutoftuplemorphismcomplement}
    If $f$ is an injective tuple morphism, then 
    \[
    \coalesce^\flat(L_{f^c}) = \comp^\flat(L_f,\size(T)).
    \]
\end{proposition}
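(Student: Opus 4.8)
The plan is to pin down $\coalesce^\flat(L_{f^c})$ as the canonical $\size(T)$-complement $\comp^\flat(L_f,\size(T))$ using the uniqueness characterization of Proposition~\ref{characterizationofflatNcomplement}: a $\size(T)$-complement of $L_f$ that is coalesced and sorted must equal $\comp^\flat(L_f,\size(T))$. So after setting up notation I would verify exactly these three properties for $B := \coalesce^\flat(L_{f^c})$.

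First I would spell out $L_{f^c}$. Write $f\colon S\to T$ over the pointed map $\alpha$, with $T=(t_1,\dots,t_n)$; injectivity of $f$ forces $\alpha(i)\ne\ast$ for every $i\in\langle m\rangle$, so the un-hit codomain indices $\{j_1<\dots<j_p\}=\langle n\rangle\setminus\Image(\alpha)$ are well understood, and $f^c$ is the tuple morphism with domain $(t_{j_1},\dots,t_{j_p})$ lying over the inclusion $j_k\mapsto j_k$. By Construction~\ref{layoutfrommorphisminTuple},
\[
L_{f^c}=(t_{j_1},\dots,t_{j_p}):(e_1,\dots,e_p),\qquad e_k=\prod_{j<j_k}t_j .
\]
Since $j_1<\dots<j_p$, the strides $e_1,\dots,e_p$ are non-decreasing, and wherever $e_k=e_{k+1}$ one has $t_{j_k}=1$, so the corresponding consecutive shape entries are correctly ordered; hence $L_{f^c}$ is sorted. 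Moreover $f^c$ is a complement of $f$ in the sense of Definition~\ref{definitionofcomplementarymorphismsinC}: $f$ and $f^c$ have disjoint images and $f\star f^c$ is a bijection onto $T$.

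Then I establish the three hypotheses of Proposition~\ref{characterizationofflatNcomplement} for $B=\coalesce^\flat(L_{f^c})$, $A=L_f$, $N=\size(T)$. (i) \emph{$B$ is a $\size(T)$-complement of $L_f$:} Proposition~\ref{complementofmorphisminCproposition} gives that $L_{f^c}$ is one, so $L_f$ is $\size(T)$-complementable by Proposition~\ref{characterizationofexistenceofNcomplements} and $\comp^\flat(L_f,\size(T))$ is defined; and passing to $\coalesce^\flat$ changes neither the size of $L_{f^c}$ nor, by Proposition~\ref{equivalentconditionsforcomplement}, its complementarity with $L_f$, so $B$ is again a $\size(T)$-complement. (ii) \emph{$B$ is coalesced} because it is an output of $\coalesce^\flat(-)$. (iii) \emph{$B$ is sorted:} $L_{f^c}$ is sorted (above) and complementable, since it admits $L_f$ as a complement and Proposition~\ref{complementableproposition} applies; hence Lemma~\ref{sortedcomplementablelayoutsareincreasing} shows $\Phi_{L_{f^c}}$ is non-decreasing, so $\Phi_B=\Phi_{\coalesce^\flat(L_{f^c})}=\Phi_{L_{f^c}}$ (Lemma~\ref{coalescelemma}) is non-decreasing, and a flat layout with non-decreasing layout function is sorted. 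With (i)--(iii) in hand, Proposition~\ref{characterizationofflatNcomplement} yields $\coalesce^\flat(L_{f^c})=B=\comp^\flat(L_f,\size(T))$.

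I expect the delicate step to be (iii): the sortedness has to be proved for $\coalesce^\flat(L_{f^c})$ rather than just for $L_{f^c}$, and routing through layout functions (complementable $+$ sorted $\Rightarrow$ non-decreasing $\Phi$ $\Rightarrow$ sorted) avoids a direct and somewhat fiddly check that $\coalesce^\flat$ preserves sortedness when zero strides are present. A minor recurring nuisance is that shape entries equal to $1$ may make $L_f$ or $L_{f^c}$ degenerate, but this is harmless: the entire argument is phrased in terms of layout functions and the $N$-complement characterization, both of which are insensitive to such entries.
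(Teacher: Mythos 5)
Your proposal is correct and follows essentially the same route as the paper: establish via Proposition~\ref{complementofmorphisminCproposition} that $L_{f^c}$ is a sorted $\size(T)$-complement of $L_f$, then invoke the uniqueness statement of Proposition~\ref{characterizationofflatNcomplement}. Your step (iii) is in fact slightly more careful than the paper's proof, which asserts without comment that $\coalesce^\flat(L_{f^c})$ is sorted; your detour through Lemma~\ref{sortedcomplementablelayoutsareincreasing} and the fact that a non-decreasing layout function forces sortedness cleanly justifies that claim.
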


\begin{proof}
    By Proposition \ref{complementofmorphisminCproposition}, we know that $L_{f^c}$ is a $\size(T)$-complement of $L_f$. Since $f^c$ is sorted, so is $L_{f^c}$ and it follows from Proposition \ref{characterizationofflatNcomplement}, it follows that 
    \[
    \coalesce^\flat(L_{f^c}) = \comp^\flat(L_f,\size(T)),
    \]
    since both of these layouts are flat, sorted, coalesced complements of $L_f$ of the same size. 
\end{proof}

\begin{proposition}
    If $f:(s_1,\dots,s_m) \to (t_1,\dots,t_n)$ is an injective tuple morphism of standard form, then 
    \[
    L_{f^c} = \comp^\flat(L_f).
    \]
\end{proposition}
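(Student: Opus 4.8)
The plan is to combine the previously established relative-complement statement with an analysis of what the complement morphism $f^c$ looks like for an injective tuple morphism of standard form. Recall that $f^c$ is, by definition (the complement of morphisms in $\Tuple$), the inclusion of the codomain entries not hit by $f$; since $f$ is injective, $f \star f^c : S \star U \to T$ is an isomorphism, so $L_{f^c}$ is a $\size(T)$-complement of $L_f$ by Proposition \ref{complementofmorphisminCproposition}. We already know from Proposition \ref{coalescedlayoutoftuplemorphismcomplement} that $\coalesce^\flat(L_{f^c}) = \comp^\flat(L_f, \size(T))$. So the task reduces to two things: first, showing $L_{f^c}$ is already coalesced (so that the outer $\coalesce^\flat$ does nothing), and second, identifying $\comp^\flat(L_f,\size(T))$ with $\comp^\flat(L_f)$ under the standard-form hypothesis.

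For the first point, I would argue directly that when $f$ has standard form, the complement morphism $f^c$ is coalesced in the sense of Definition \ref{definitionofcoalescedmorphisminC}, hence $L_{f^c}$ is coalesced by Proposition \ref{coalescedmorphismcoalescedlayoutproposition}. Concretely: the domain entries of $f^c$ are exactly the codomain entries $t_j$ of $f$ with $j \notin \Image(\alpha)$, listed in increasing order of $j$; standard form guarantees each such $t_j \neq 1$, so the domain of $f^c$ equals its own squeeze. The map underlying $f^c$ is the increasing inclusion of $\{j : j \notin \Image(\alpha)\}$ into $\langle n \rangle$; for this to fail coalescedness we would need two consecutive such indices $j < j'$ with $j' = j+1$ and no large entry strictly between — but standard form (condition 2: $j \notin \Image(\alpha) \Rightarrow j+1 \in \Image(\alpha)$) forbids two consecutive missed indices, so $f^c$ is coalesced.

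For the second point, observe that $L_{f^c}$ is automatically sorted (the underlying map of $f^c$ is an increasing inclusion, so by Proposition \ref{sortmorphisminCproposition} $L_{f^c}$ is sorted). Now $\comp^\flat(L_f)$ is, by Construction \ref{complementconstruction}, $\coalesce^\flat$ of the layout $C$ built from $\sort(\squeeze(L_f))$ using all gaps $d_{i+1}/(s_i d_i)$ but \emph{not} a final term involving $N$; whereas $\comp^\flat(L_f,\size(T))$ (Construction \ref{Ncomplementconstruction}) appends the extra term $\bigl(N/(s_m d_m)\bigr):(s_m d_m)$ with $N = \size(T)$. Since $f$ is injective and of standard form, the last non-trivial prefix product of the codomain is exactly $t_1\cdots t_n = \size(T)$ and it is hit by $\alpha$ (standard form condition 1), i.e. $s_m d_m = \size(T) = N$, so that extra term is $1:N$, which is removed upon coalescing and contributes nothing. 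Hence $\comp^\flat(L_f,\size(T)) = \comp^\flat(L_f)$, and combining with the first point gives
\[
L_{f^c} = \coalesce^\flat(L_{f^c}) = \comp^\flat(L_f,\size(T)) = \comp^\flat(L_f),
\]
as desired. The main obstacle I anticipate is the bookkeeping in the second point — carefully checking that $s_m d_m = \size(T)$ under the standard-form and injectivity hypotheses (equivalently, that $n \in \Image(\alpha)$ and the corresponding domain entry has the largest nonzero stride), and that the truncated construction $C$ in $\comp^\flat(L_f)$ genuinely matches $C$ with the trivial tail removed; everything else is a direct appeal to the cited propositions.
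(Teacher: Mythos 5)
Your proof is correct and follows essentially the same route as the paper's: both establish that $L_{f^c}$ is a sorted, coalesced $\size(T)$-complement of $L_f$ and use that $s_m d_m = \size(T)$ under the standard-form hypothesis, then conclude by the uniqueness of such complements. The only cosmetic difference is that you route through $\coalesce^\flat(L_{f^c}) = \comp^\flat(L_f,\size(T))$ and then identify the $\size(T)$-complement with the plain complement (via the trivial appended mode $1:\size(T)$), whereas the paper invokes Proposition~\ref{characterizationofflatcomplement} directly.
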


\begin{proof}
    Write 
    \[
    L_f = (s_1,\dots,s_m):(d_1,\dots,d_m)
    \]
    for the layout encoded by $f$. By Proposition \ref{complementofmorphisminCproposition}, we know that $L_{f^c}$ is a $\size(T)$-complement of $L_f$. Where 
    \begin{align*}
    \size(T) = t_1 \cdots t_n & = (t_1 \cdots t_{n-1})t_n\\
    & = d_m s_m.
    \end{align*}
    By construction, $f^c$ is sorted, hence so is $L_{f^c}$. Moreover, since $f$ has standard form, it follows that $f^c$ is coalesced. By Proposition \ref{characterizationofflatcomplement}, we deduce that 
        \[
    L_{f^c} = \comp^\flat(L_f).
    \]
\end{proof}

\begin{definition}\label{definitionofcomplementabletuplemorphism}
    Suppose $f$ is a tuple morphism lying over $\alpha:\langle m \rangle_* \to \langle n \rangle_*$. We say $f$ is ${\it complementable}$ if $\alpha$ is injective.
\end{definition}

\begin{construction}\label{complementofmorphisminC}
    Suppose $f:(s_1,\dots,s_m) \to (t_1,\dots,t_n)$ is a complementable tuple morphism. Let $j_1 < \cdots < j_{n-m}$ denote the collection of indices in $\langle n \rangle$ which are not in the image of $\alpha$. We define the {\it complement of }$f$ to be the tuple morphism
    \[
    f^c :(t_{j_1},\dots,t_{j_k}) \to (t_1,\dots,t_n)
    \]
    lying over the map $\complement(\alpha): \langle n-m \rangle_* \to \langle n \rangle_*$ given by $k \mapsto j_k$. By construction, we may observe that $f^c$ is a complement of $f$, in the sense of Definition \ref{definitionofcomplementarymorphismsinC}
\end{construction}

\begin{example}
    Below is an example of a morphism $f$ and its complement $ f^c$. 
        \[ \begin{tikzcd} [row sep = 1, column sep = 16]
     & & 512 & & & & & & 512 \\
     & & 512 & & & &  & & 512\\ 
    512 \ar[rru,mapsto] & & 256 & & & & 512 \ar[rruu, mapsto] & & 256\\
    256 \ar[rru,mapsto] & & 10 & & & & 10\ar[rr, mapsto]  & & 10 \\
     & f & & & & & & f^c & 
    \end{tikzcd} \]
\end{example}

\begin{proposition}
    If $f$ is a tuple morphism and $g$ is a complement of $f$, then 
    \[
    \sort(g) = f^c.
    \]
\end{proposition}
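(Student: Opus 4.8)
The plan is to show that $f^c$ is the \emph{unique} sorted complement of $f$, and then to observe that $\sort(g)$ is such a sorted complement. To set up notation, write $f\colon S\to T$ lying over $\alpha\colon\langle m\rangle_*\to\langle n\rangle_*$ with $S=(s_1,\dots,s_m)$ and $T=(t_1,\dots,t_n)$, and write $g\colon U\to T$ lying over $\beta$, with $U=(u_1,\dots,u_p)$. Since $g$ is a complement of $f$, the concatenation $f\star g$ is an isomorphism, so its underlying pointed map $\alpha\star\beta\colon\langle m+p\rangle_*\to\langle n\rangle_*$ is a bijection; in particular $\alpha$ is injective, so $f$ is complementable and $f^c$ is defined. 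Let $j_1<\dots<j_{n-m}$ enumerate $\langle n\rangle\setminus\Image(\alpha)$, so that by Construction \ref{complementofmorphisminC} the morphism $f^c$ lies over $\complement(\alpha)\colon k\mapsto j_k$ and has domain $(t_{j_1},\dots,t_{j_{n-m}})$.

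The crux is the following uniqueness claim: if $g'$ is any sorted complement of $f$, then $g'=f^c$. To prove it, write $g'$ over $\beta'\colon\langle p'\rangle_*\to\langle n\rangle_*$. As before, $\alpha\star\beta'$ is a bijection, hence injective; therefore $\beta'(i)\neq *$ for every $i\in\langle p'\rangle$ and $p'=n-m$. Disjointness of the images of $\alpha$ and $\beta'$ then forces $\Image(\beta')\setminus\{*\}=\{j_1,\dots,j_{n-m}\}$, so $\beta'|_{\langle p'\rangle}$ is a bijection onto $\{j_1,\dots,j_{n-m}\}$. Because $g'$ is sorted and no index is sent to $*$, condition~(3) of Definition \ref{definitionofsortedmorphisminC} applies to every pair of indices, so $\beta'$ is non-decreasing; being injective, it is the unique order isomorphism onto $\{j_1<\dots<j_{n-m}\}$, i.e.\ $\beta'(k)=j_k$, which is exactly $\complement(\alpha)$. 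The compatibility condition in Definition \ref{definitionofTuple} then gives $u'_k=t_{\beta'(k)}=t_{j_k}$, so $\domain(g')=(t_{j_1},\dots,t_{j_{n-m}})$. Thus $g'$ and $f^c$ have the same domain, the same codomain, and lie over the same pointed map, and since a tuple morphism is determined by these data, $g'=f^c$.

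It remains to apply the claim to $g$. By Construction \ref{sortofmorphisminC}, $\sort(g)=g\circ q$ for a tuple isomorphism $q$ over a map $\sigma^{-1}_*$ with $\sigma\in\Sigma_p$, and $\sort(g)$ is sorted by construction. Moreover $\sort(g)$ is a complement of $f$: it has the same image as $g$ (since $q$ is an isomorphism), so $f$ and $\sort(g)$ have disjoint images, and $f\star\sort(g)$ lies over the pointed map $\alpha\star(\beta\circ\sigma^{-1}_*)$, which is a bijection because $\alpha\star\beta$ is and $\sigma$ is a permutation, so $f\star\sort(g)$ is an isomorphism by Observation \ref{isomorphismsvspermutations}. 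By the uniqueness claim, $\sort(g)=f^c$. The main obstacle is the uniqueness claim — specifically, reading off from Definition \ref{definitionofsortedmorphisminC}, together with the observation that a sorted complement has empty ``$*$-fiber'', that such a morphism must lie over precisely $\complement(\alpha)$; the remaining steps are routine bookkeeping with pointed maps.
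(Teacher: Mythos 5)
Your proof is correct and follows essentially the same route as the paper's: both arguments reduce to the observation that $\sort(g)$ and $f^c$ lie over increasing pointed maps with the same image $\langle n\rangle\setminus\Image(\alpha)$, hence over the same map, and then the domains are forced by the compatibility condition. You simply fill in details the paper elides (why the underlying bijection forces $p'=n-m$ and an empty $*$-fiber, why sortedness plus injectivity gives strict monotonicity, and why $\sort(g)$ is itself a complement), packaged as a uniqueness statement for sorted complements.
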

\begin{proof}
    Suppose $f$ lies over $\alpha:\langle m \rangle_* \to \langle n \rangle_*$, $\sort(g)$ lies over $\beta: \langle n - m \rangle_* \to \langle n \rangle_*$ and $f^c$ lies over $\alpha^c:\langle n - m \rangle_* \to \langle n \rangle_*$. Then $\beta$ and $\alpha^c$ are increasing maps with the same image, namely 
    \[\Image(\beta) = \langle n \rangle \setminus \Image(\alpha) = \Image(\alpha^c),\]
    hence $\beta = \alpha^c$, and hence $\sort(g) = f^c$. 
\end{proof}

\begin{proposition}
    Suppose $f$ is a tuple morphism. Then $f$ admits a complement if and only if $f$ is complementable, in the sense of Definition \ref{definitionofcomplementabletuplemorphism}.
\end{proposition}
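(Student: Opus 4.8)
The plan is to prove both directions by unwinding the definitions. Recall that $f$ lies over a tractable pointed map $\alpha:\langle m \rangle_* \to \langle n \rangle_*$, which means no positive integer occurs more than once in $\alpha$; being \emph{complementable} means in addition that $\alpha$ is injective as a pointed map, i.e. no two distinct indices $i \neq i'$ have $\alpha(i) = \alpha(i') = *$. For the backward direction, if $f$ is complementable then Construction \ref{complementofmorphisminC} produces a tuple morphism $f^c$, and the construction itself records that $f^c$ is a complement of $f$ in the sense of Definition \ref{definitionofcomplementarymorphismsinC}; so the existence of a complement is immediate.

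For the forward direction I would argue the contrapositive: suppose $f$ is \emph{not} complementable, so $\alpha$ is not injective. Since $\alpha$ is tractable, the only way injectivity can fail is that there exist two distinct indices $i \neq i'$ with $\alpha(i) = \alpha(i') = *$. I claim no $g:U \to T$ can be a complement of $f$. Indeed, if $g$ lies over $\beta:\langle p \rangle_* \to \langle n \rangle_*$ and $f$ and $g$ have disjoint images, then the concatenation $f \star g$ lies over $\alpha \star \beta$, and we need $\alpha \star \beta$ to be an isomorphism in $\catstyle{Fin}_*$, in particular a bijection $\langle m + p \rangle_* \to \langle n \rangle_*$. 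But $(\alpha \star \beta)(i) = \alpha(i) = * = \alpha(i') = (\alpha \star \beta)(i')$ for the two distinct indices $i \neq i'$ in $\langle m \rangle \subset \langle m + p\rangle$, so $\alpha \star \beta$ is not injective on the positive part of its domain, hence not a bijection, hence $f \star g$ is not an isomorphism. Therefore $f$ admits no complement.

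Combining the two directions gives the equivalence. The only mild subtlety, which I would state explicitly, is the use of tractability to conclude that failure of injectivity of $\alpha$ forces the collision to happen \emph{at $*$}: a tractable map can never send two distinct indices to the same positive integer, so the only obstruction to injectivity is multiple preimages of the basepoint. I do not anticipate a serious obstacle here — this is essentially a bookkeeping argument — but the step requiring the most care is checking that an isomorphism in $\catstyle{Tuple}$ forces the underlying pointed map to be a bijection (Observation \ref{isomorphismsvspermutations} already records this), so that the non-injectivity of $\alpha \star \beta$ genuinely rules out $f \star g$ being invertible.
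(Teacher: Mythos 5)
Your proposal is correct and follows essentially the same route as the paper: the backward direction invokes Construction \ref{complementofmorphisminC} directly, and the forward direction observes that non-injectivity of $\alpha$ propagates to $\alpha \star \beta$, so $f \star g$ cannot be an isomorphism. The extra observation that tractability forces the collision to occur at $*$ is a harmless refinement the paper omits, since non-injectivity of $\alpha$ already passes to $\alpha \star \beta$ wherever the collision lies.
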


\begin{proof}
    If $f$ lies over a map $\alpha$ which is not injective, then for any morphism $f^*$ such that $f$ and $f^*$ have disjoint images, the morphism $f \star f^*$ lies over a map which is not injective, hence $f \star f^*$ is not an isomorphism. Conversely, if $f$ lies over an injective map, then the morphism $f^c$ of Construction \ref{complementofmorphisminC} is a complement of $f$.
\end{proof}

\begin{proposition}
    If $f$ is a complementable tuple morphism, then 
    \[
    \sort(f) = (f^c)^c.
    \]
\end{proposition}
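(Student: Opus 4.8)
The claim is that if $f$ is a complementable tuple morphism, then $\sort(f) = (f^c)^c$. The plan is to work entirely at the level of the underlying pointed maps, since a tuple morphism is determined by its domain, codomain, and the map it lies over, and all three constructions ($\sort$, $(-)^c$, double complement) act in a controlled way on these data. Suppose $f$ lies over the injective map $\alpha : \langle m \rangle_* \to \langle n \rangle_*$. By Construction \ref{complementofmorphisminC}, $f^c$ lies over $\complement(\alpha): \langle n - m \rangle_* \to \langle n \rangle_*$, which is the unique increasing map whose image is $\langle n \rangle \setminus \Image(\alpha)$. Applying the construction again, $(f^c)^c$ lies over the unique increasing map $\langle n - (n-m) \rangle_* = \langle m \rangle_* \to \langle n \rangle_*$ whose image is $\langle n \rangle \setminus \Image(\complement(\alpha)) = \langle n \rangle \setminus (\langle n \rangle \setminus \Image(\alpha)) = \Image(\alpha)$. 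So $(f^c)^c$ lies over the increasing reindexing of $\alpha$.

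Next I would identify this increasing reindexing with the map over which $\sort(f)$ lies. Since $f$ is complementable, $\alpha$ is injective, so in particular no entry of $\domain(f)$ maps to $*$ with repetition — but more importantly, I need to handle the set $P = \{i \mid \alpha(i) = *\}$ in Construction \ref{sortofmorphisminC}. When $\alpha$ is injective, $P$ may still be nonempty (injectivity only constrains preimages of elements of $\langle n \rangle$, not of $*$); however, a subtlety arises: if $P$ is nonempty then $\sort(f)$ places those entries first, ordered by size, and then orders the $Q$-entries by the value of $\alpha$. Meanwhile $(f^c)^c$, being the increasing reindexing of $\alpha$, will place $*$-entries wherever $\alpha$ originally placed them relative to the sorted $Q$-entries. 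I expect that in fact we must restrict further, or that the statement implicitly uses that a complementable $f$ arising from $(f^c)^c$ has no $*$-entries (since $\complement(\alpha)$ is always injective and its complement construction only ever produces entries in the image of an increasing map, never $*$). I would examine whether the intended hypothesis forces $P = \varnothing$; if so, then $\sort(f)$ simply relabels the domain so that $\alpha$ becomes increasing on $Q = \langle m \rangle$, which is exactly the increasing reindexing computed above, and the codomains match since both are $(t_1,\dots,t_n)$ and the domain of $\sort(f)$ is $(s_{\sigma^{-1}(1)},\dots,s_{\sigma^{-1}(m)})$, matched by the congruence condition $s_i = t_{\alpha(i)}$.

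Concretely, the key steps in order: (1) unwind Construction \ref{complementofmorphisminC} twice to show $(f^c)^c$ lies over the unique strictly increasing map $\gamma : \langle m \rangle_* \to \langle n \rangle_*$ with $\Image(\gamma) = \Image(\alpha)$, and has domain obtained from $\langle n \rangle \setminus \Image(\complement \alpha)$-indexed entries of $(t_1,\dots,t_n)$; (2) verify that $\sort(f)$ lies over a strictly increasing map onto $\Image(\alpha)$ as well — here I would invoke the characterization of $\sort$ and the fact that for complementable $f$ the permutation $\sigma$ of Construction \ref{sortofmorphisminC} is exactly the one making $\alpha \circ \sigma^{-1}$ increasing; (3) conclude $\gamma$ and the sort-map coincide by uniqueness of strictly increasing maps with a prescribed image, and check domains and codomains agree via the congruence condition defining morphisms in $\catstyle{Tuple}$. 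The main obstacle I anticipate is precisely the bookkeeping around $*$-entries in the $\sort$ construction: I need to confirm that complementability (injectivity of $\alpha$) together with the way $(f^c)^c$ is built rules out the problematic case where $\sort$ reshuffles $*$-entries to the front in a way $\gamma$ does not — most likely this is because $\complement(\alpha)$ never has $*$ in its "missing index" set, so $f^c$ and hence $(f^c)^c$ automatically have $P = \varnothing$, and symmetrically one should check the statement is only asserted for such $f$ or that the $*$-entries, being sent nowhere, contribute trivially to both sides.
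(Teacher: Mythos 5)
Your approach is essentially the paper's: the proof in the text is exactly the observation that both $\sort(f)$ and $(f^c)^c$ lie over increasing injective maps with the same image, namely $\Image(\alpha)$, and hence coincide. The one point you leave hanging — whether the $*$-entries force a restriction — resolves immediately from Definition \ref{definitionofcomplementabletuplemorphism}: complementability means $\alpha$ is injective \emph{as a pointed map}, and since $\alpha(*)=*$, injectivity already forces $\alpha(i)\neq *$ for every $i\in\langle m\rangle$, so the set $P$ in Construction \ref{sortofmorphisminC} is empty and $\sort(f)$ is just the increasing reindexing of $\alpha$, as you anticipated.
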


\begin{proof}
    Both maps are increasing, injective, and have the same image, so they are equal.
\end{proof}

\subsubsection{Flat division}
\label{subsubsection.flatdivision}

In this section, we define a division operation on tuple morphisms. 

\begin{definition}
    If $f$ and $g$ are tuple morphisms, we say $g$ {\it divides } $f$ if $g$ and $f$ are composable. In other words,
    \[
    \codomain(g) = \domain(f).
    \]
\end{definition}

\begin{definition}
    Suppose $g: S \to T$ and $f:T \to U$ are tuple morphisms. The {\it flat division} of $f$ by $g$ is the tuple morphism 
    \[
    f \oslash^\flat g = f \circ (g \star g^c).
    \]
\end{definition}

\begin{example}
Here is an example of tuple morphisms $f$ and $g$ together with their flat quotient $f \oslash^\flat g$.
    \[ \begin{tikzcd} [row sep = 1, column sep = 8]
     & & 128& & & 128 \ar[rr,mapsto] & & 128 & & & 2 \ar[drr,mapsto] & & 128\\
    128 \ar[rru,mapsto] & & 2 & & &  2 \ar[rr,mapsto] & & 2 & & & 128 \ar[urr,mapsto] & & 2\\
    & g & & & & & f & & & & & f \oslash^\flat g & 
    \end{tikzcd} \]
\end{example}

\begin{example}
    Here is an example of tuple morphisms $f$ and $g$ together with their flat quotient $f \oslash^\flat g$.
    \[ \begin{tikzcd} [row sep = 1, column sep = 8]
     & &  5 \ar[rr,mapsto] & & 5 & & & 2 \ar[ddrr,mapsto] & & 5\\
     & &  5 \ar[rr,mapsto] & & 5 & & & 2 \ar[ddrr,mapsto] & & 5\\
    5 \ar[uurr,mapsto] & &  2 \ar[rr,mapsto] & & 2 & & & 5 \ar[uurr,mapsto] & & 2\\
    5 \ar[uurr,mapsto] & &  2 \ar[rr,mapsto] & & 2 & & & 5 \ar[uurr,mapsto] & & 2\\
    & g & & f & & & & & f\oslash^\flat g & 
    \end{tikzcd} \]
\end{example}

\begin{example}
   Here is an example of tuple morphisms $f$ and $g$ together with their flat quotient $f \oslash^\flat g$.
    \[ \begin{tikzcd} [row sep = 1, column sep = 8]
    & & 512 \ar[drr,mapsto] & & & & & 512 \ar[drr,mapsto] & & \\
     & & 2 & & 512 & & & 2 & & 512\\
     & & 8  & & 4 & & & 2 \ar[ddrr,mapsto] & & 4\\
    8 \ar[urr,mapsto] & & 2  \ar[drr,mapsto]  & & 8 & & & 8 & & 8\\
    8 \ar[rr,mapsto] & & 8 \ar[urr,mapsto] & &  2 & & & 8 \ar[urr,mapsto]  & & 2\\
    & g & & f & & & &  & f \oslash^\flat g & 
    \end{tikzcd} \]
\end{example}




\begin{proposition}
    If $f$ and $g$ are non-degenerate composable tuple morphisms, then 
    \[
    \coalesce^\flat(L_{f \oslash^\flat g}) = \coalesce^\flat(L_f \oslash^\flat L_g)
    \]
\end{proposition}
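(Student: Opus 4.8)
The plan is to reduce everything to the flat-layout identity via the realization functor $|\cdot|$ and the compatibility results already established. Recall that by definition $f \oslash^\flat g = f \circ (g \star g^c)$ and $L_f \oslash^\flat L_g = L_f \circ (L_g \star L_{g^c}')$, where $L_g^c = \comp^\flat(L_g,\size(A))$ with $A = \domain(f)$ (more precisely, using $\size(L_g)$ as the relevant integer in the flat-division definition). By Proposition \ref{coalesceproposition}, to prove $\coalesce^\flat(L_{f \oslash^\flat g}) = \coalesce^\flat(L_f \oslash^\flat L_g)$ it suffices to show that the two layout functions agree, i.e. $\Phi_{L_{f\oslash^\flat g}} = \Phi_{L_f \oslash^\flat L_g}$.

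First I would compute the left side. Since $g$ divides $f$ (they are composable), $g^c$ is the complement of $g$ constructed in \ref{complementofmorphisminC}, so $g \star g^c$ is an isomorphism onto $\codomain(g) = \domain(f)$, and $f \oslash^\flat g = f \circ (g \star g^c)$ is a genuine composite of tuple morphisms. By Corollary \ref{compatibilityofcompositioninC} (for the composite) together with Proposition \ref{concatenateinCproposition} (for the concatenation $g \star g^c$), we have $L_{f \oslash^\flat g} = L_f \circ (L_g \star L_{g^c})$, and hence via Lemma \ref{layoutfunctionlemma} (or directly via functoriality of $|\cdot|$) that $\Phi_{L_{f\oslash^\flat g}} = \Phi_{L_f} \circ \Phi_{L_g \star L_{g^c}}^{N}$ for the appropriate $N = \size(\domain(f))$. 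Here I use that $L_{g \star g^c} = L_g \star L_{g^c}$ and that $g \star g^c$ being an isomorphism means $\Phi_{L_g \star L_{g^c}}^N$ is the colexicographic bijection $[0,N) \to [0,N)$ rearranging coordinates.

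Next I would compare with the right side: $\Phi_{L_f \oslash^\flat L_g} = \Phi_{L_f} \circ \Phi_{L_g \star B^c}^{\size(L_f)}$ where $B^c = \comp^\flat(L_g, \size(L_f))$. By Proposition \ref{coalescedlayoutoftuplemorphismcomplement} (applied to $g$, noting $g$ is injective since it is complementable) we have $\coalesce^\flat(L_{g^c}) = \comp^\flat(L_g, \size(\codomain(g)))$. Since $\codomain(g) = \domain(f)$ and $\size(\domain(f))$ agrees with $\size(L_f)$ exactly when $f$ preserves size — which it does not in general, so here is where care is needed — I would instead invoke Lemma \ref{compatibilityofNcomplements} to see that the two complements $L_{g^c}$ (coalesced) and $B^c$ have layout functions agreeing on the overlapping range, and that $\Phi_{L_g \star L_{g^c}}$ and $\Phi_{L_g \star B^c}$ agree as far as needed after precomposition with $\Phi_{L_f}$, because $\cosize(L_f) \le$ both sizes. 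Combined with Lemma \ref{coalescelemma} (coalesce preserves layout function) to replace $L_{g^c}$ by $\coalesce^\flat(L_{g^c})$, this yields $\Phi_{L_f \oslash^\flat g} = \Phi_{L_f \oslash^\flat L_g}$, and then Proposition \ref{coalesceproposition} finishes.

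The main obstacle I anticipate is bookkeeping the integer $N$ that parametrizes the complement: the flat-division definition uses $\size(A)$-complementability for the divisor, and $f \oslash^\flat g$ on the tuple-morphism side secretly uses $\size(\codomain(g)) = \size(\domain(f))$ rather than $\size(L_f) = \size(\domain(f))$ — these do coincide since $\shape(L_f) = \domain(f)$, which removes the difficulty, but I would want to verify that $L_g \star g^c$-type complement really is a $\size(L_f)$-complement of $L_g$ and that the non-degeneracy hypotheses on $f$ and $g$ are exactly what is needed to invoke Corollary \ref{compatibilityofcompositioninC} (which requires non-degenerate composable morphisms). If $g^c$ fails to be $\size(L_f)$-complementable in the strict sense required, I would fall back to Proposition \ref{postcomposewithextension}/\ref{postcomposewithcoalesce}-style arguments at the flat level to show the composite $\Phi_{L_f} \circ (\text{anything sufficiently large complement})$ is insensitive to the choice, which is morally the content of Lemma \ref{compatibilityofNcomplements}.
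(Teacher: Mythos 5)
Your proposal is correct and follows essentially the same route as the paper's proof: both reduce the claim to $\coalesce^\flat(L_{g^c}) = \comp^\flat(L_g,\size(L_f))$ (Proposition \ref{coalescedlayoutoftuplemorphismcomplement}) and then use compatibility of the layout encoding with composition and concatenation, noting that replacing the complement by one with the same layout function is harmless after coalescing. Your worry about the integer $N$ is a non-issue, exactly as you yourself observe, since $\size(L_f) = \size(\shape(L_f)) = \size(\domain(f)) = \size(\codomain(g))$ by definition.
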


\begin{proof}
By Proposition \ref{coalescedlayoutoftuplemorphismcomplement}, we have
\[
\coalesce^\flat(L_{g^c}) = \comp^\flat(L_g,\size(L_f)),
\]
and we compute
    \begin{align*}
    \coalesce^\flat(L_f \oslash^\flat L_g) & = \coalesce^\flat(L_f \circ \left(L_g \star \comp(L_g,\size(L_f))\right))\\
    & = \coalesce(L_f \circ \left( L_{g} \star L_{g^c} \right))\\
    & = \coalesce(L_f \circ L_{g\star g^c})\\
    & = \coalesce(L_{f \circ (g \star g^c)})\\
    & = \coalesce(L_{f \oslash^\flat g}).
    \end{align*}
\end{proof}

\subsubsection{Flat products}
\label{subsubsection.flatproduct}

In this section we define a product operation on tuple morphisms. 
\begin{definition}
Suppose $f$ and $g$ are tuple morphisms. We say $f$ and $g$ are {\it product admissible} if $\codomain(g) = \domain(f^c)$. If $f$ and $g$ are product admissible, then we define {\it flat product} of $f$ and $g$ to be  
    \[
    f \otimes^\flat g = f \star (f^c \circ g).
    \]
\end{definition}

\begin{example}
    If $f$ and $g$ are the tuple morphisms shown below
    \[ \begin{tikzcd} [row sep = 1, column sep = 8]
      & & & & & & & 16\\
      & & & & & & & 16\\
     16 \ar[rr,mapsto] & & 16 & & & 8 \ar[rr,mapsto] & & 8\\
    16 \ar[rr,mapsto] & & 16 & & & 8 \ar[rr,mapsto] & & 8\\
    & g & & & & & f & \\
    \end{tikzcd} \]
    then 
    $f$ and $g$ are product-admissible, and $f \otimes^\flat g$ is the tuple morphism shown below.
    \[ \begin{tikzcd}[row sep = 1, column sep = 8]
    16 \ar[rr,mapsto] & & 16\\
    16 \ar[rr,mapsto] & & 16 \\
    8 \ar[rr,mapsto] & & 8\\
    8 \ar[rr,mapsto] & & 8 \\
    & f \otimes^\flat g & 
    \end{tikzcd} \]
\end{example}

\begin{example}
    If $f$ and $g$ are the tuple morphisms shown below
    \[ \begin{tikzcd} [row sep = 1, column sep = 8]
      & & & & & & & 128\\
      & & & & & & & 128\\
      & & 32 & & & 128 \ar[uurr,mapsto] & & 32\\
    32 \ar[urr,mapsto] & & 32 & & & 128 \ar[uurr,mapsto] & & 32\\
    & g & & & & & f & \\
    \end{tikzcd} \]
    then 
    $f$ and $g$ are product-admissible, and $f \otimes^\flat g$ is the tuple morphism shown below.
    \[ \begin{tikzcd}[row sep = 1, column sep = 8]
     & & 128\\
    32 \ar[drr,mapsto] & & 128 \\
    128 \ar[uurr,mapsto] & & 32\\
    128 \ar[uurr,mapsto] & & 32 \\
    & f \otimes^\flat g & 
    \end{tikzcd} \]
\end{example}

\begin{lemma}\label{complementdistributesoverproducts}
    If $f$ and $g$ are product admissible and $g$ is injective, then $f \otimes^\flat g$ is injective and 
    \[
    (f \otimes^\flat g)^c = f^c \circ g^c.
    \]
\end{lemma}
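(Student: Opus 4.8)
The plan is to unwind the definition $f \otimes^\flat g = f \star (f^c \circ g)$ and reason about the underlying pointed maps. Suppose $f$ lies over $\alpha : \langle m \rangle_* \to \langle n \rangle_*$ (so $f$ is complementable, i.e.\ $\alpha$ is injective, since $f^c$ appears in the definition), let $f^c$ lie over $\alpha^c : \langle n - m \rangle_* \to \langle n \rangle_*$ (the increasing enumeration of $\langle n \rangle \setminus \Image(\alpha)$, by Construction \ref{complementofmorphisminC}), and let $g$ lie over $\gamma : \langle p \rangle_* \to \langle n - m \rangle_*$, which is injective by hypothesis. Then $f^c \circ g$ lies over $\alpha^c \circ \gamma$, and $f \otimes^\flat g = f \star (f^c \circ g)$ lies over $\alpha \star (\alpha^c \circ \gamma) : \langle m + p \rangle_* \to \langle n \rangle_*$.

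First I would check injectivity of $f \otimes^\flat g$, which amounts to checking that $\alpha \star (\alpha^c \circ \gamma)$ is injective. On the first block $\langle m \rangle$ it restricts to $\alpha$, which is injective; on the second block it restricts to $\alpha^c \circ \gamma$, a composite of injections (note $\alpha^c$ is injective since it is an increasing enumeration), hence injective; and the two blocks have disjoint nonbasepoint images because $\Image(\alpha^c \circ \gamma) \subseteq \Image(\alpha^c) = \langle n \rangle \setminus \Image(\alpha)$, so $f$ and $f^c \circ g$ have disjoint images and the concatenation is defined and injective. This also legitimizes writing $(f \otimes^\flat g)^c$.

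Next I would compute $(f \otimes^\flat g)^c$. By Construction \ref{complementofmorphisminC}, $(f \otimes^\flat g)^c$ lies over the increasing enumeration of the complement of $\Image\bigl(\alpha \star (\alpha^c \circ \gamma)\bigr) = \Image(\alpha) \;\sqcup\; \Image(\alpha^c \circ \gamma)$ inside $\langle n \rangle$. Since $\Image(\alpha^c \circ \gamma) \subseteq \Image(\alpha^c)$, taking complements inside $\langle n \rangle$ gives
\[
\langle n \rangle \setminus \bigl(\Image(\alpha) \cup \Image(\alpha^c \circ \gamma)\bigr) = \Image(\alpha^c) \setminus \Image(\alpha^c \circ \gamma),
\]
and since $\alpha^c$ is an order isomorphism onto its image, this is $\alpha^c$ applied to $\langle n - m \rangle \setminus \Image(\gamma)$, i.e.\ $\alpha^c$ applied to $\Image(\gamma^c)$. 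Thus $(f\otimes^\flat g)^c$ lies over $\alpha^c \circ \gamma^c$, which is exactly the map over which $f^c \circ g^c$ lies. On the level of the codomain tuples, both $(f \otimes^\flat g)^c$ and $f^c \circ g^c$ have codomain $(t_1, \dots, t_n)$ and the same domain (the subtuple of $(t_1,\dots,t_n)$ indexed by $\Image(\alpha^c) \setminus \Image(\alpha^c\circ\gamma)$), so the two tuple morphisms agree.

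I expect the main obstacle to be purely bookkeeping: keeping straight the three index sets $\langle m \rangle$, $\langle n-m \rangle$, $\langle n \rangle$ and verifying that the complement-of-a-complement manipulations commute with the order isomorphism $\alpha^c$. The essential structural fact doing all the work is that $\gamma^c$ (the complement of $g$, taken inside $\domain(f^c) = \langle n-m\rangle$) transports under $\alpha^c$ to the complement of $f \otimes^\flat g$ inside $\langle n \rangle$; once that is spelled out, both claims follow immediately, with no numerical layout computations required.
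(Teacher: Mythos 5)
Your proposal is correct and follows essentially the same route as the paper's proof: both arguments reduce the identity to showing that $(f\otimes^\flat g)^c$ and $f^c\circ g^c$ are increasing injective maps into the same codomain with the same image, and then verify that image equality via the set identity $\langle n\rangle\setminus\bigl(\Image(\alpha)\cup\Image(\alpha^c\circ\gamma)\bigr)=\alpha^c\bigl(\langle n-m\rangle\setminus\Image(\gamma)\bigr)$. Your version is slightly more explicit about the index-set bookkeeping and about the injectivity of $f\otimes^\flat g$ itself (which the paper's proof leaves implicit), but there is no substantive difference.
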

\begin{proof}
    The tuple morphisms $(f \otimes^\flat g)^c$ and $f^c \circ g^c$ are injective, increasing, and have the same codomain, so it suffices to show that they have the same image. The image of $(f \otimes^\flat g)^c = (f \star (f^c \circ g) )^c$ consists of those entries which are not in the image of $f$, and not in the image of $f^c \circ g$. The image of $f^c$ consists of those entries which are not in the image of $f$, and so the image of the composition $f^c \circ g^c$ consists of those entries which are not in the image of $f$, and not in the image of $f^c \circ g$. 
\end{proof}

\begin{proposition}
    Suppose $f$ and $g$ are product admissible, and $g$ and $h$ are product admissible. Then
    \begin{enumerate}
        \item $f \otimes^\flat g$ and $h$ are product admissible,
        \item $f$ and $g \otimes^\flat h$ are product admissible, and
        \item $(f \otimes^\flat g) \otimes^\flat h = f \otimes^\flat (g \otimes^\flat h)$.
    \end{enumerate}
\end{proposition}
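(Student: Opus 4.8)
The plan is to reduce the associativity statement for $\otimes^\flat$ to the already-established facts that flat division $(-) \star (-)$-style concatenation is associative, composition in $\catstyle{Tuple}$ is associative, and the complement of a flat product factors as in Lemma~\ref{complementdistributesoverproducts}. First I would unwind the definition $f \otimes^\flat g = f \star (f^c \circ g)$ twice on each side. For the left-hand side, $(f \otimes^\flat g) \otimes^\flat h = (f \otimes^\flat g) \star \bigl((f \otimes^\flat g)^c \circ h\bigr)$, and then I would apply Lemma~\ref{complementdistributesoverproducts} to rewrite $(f \otimes^\flat g)^c = f^c \circ g^c$, obtaining $\bigl(f \star (f^c \circ g)\bigr) \star \bigl((f^c \circ g^c) \circ h\bigr)$. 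For the right-hand side, $f \otimes^\flat (g \otimes^\flat h) = f \star \bigl(f^c \circ (g \otimes^\flat h)\bigr) = f \star \bigl(f^c \circ (g \star (g^c \circ h))\bigr)$, and here I would invoke Lemma~\ref{compositiondistributesoverconcatenation} (composition distributes over concatenation with disjoint images) to turn $f^c \circ (g \star (g^c \circ h))$ into $(f^c \circ g) \star (f^c \circ (g^c \circ h))$, then use associativity of composition to write the second summand as $(f^c \circ g^c) \circ h$.

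At that point both sides read, up to the bracketing of $\star$, as the threefold concatenation $f \star (f^c \circ g) \star \bigl((f^c \circ g^c) \circ h\bigr)$; since concatenation of tuple morphisms is associative (Construction~\ref{constructionofflatconcatenation}), the two agree, provided all the disjointness hypotheses needed to form these concatenations actually hold. So the real content is the bookkeeping in parts (1) and (2): I must check that $f \otimes^\flat g$ and $h$ are product admissible, i.e. $\codomain(h) = \domain\bigl((f \otimes^\flat g)^c\bigr)$, and likewise for $f$ and $g \otimes^\flat h$. Using Lemma~\ref{complementdistributesoverproducts}, $\domain\bigl((f \otimes^\flat g)^c\bigr) = \domain(f^c \circ g^c) = \domain(g^c)$, and since $g$ and $h$ are product admissible this is $\codomain(h)$, giving (1). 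For (2), $f$ and $g \otimes^\flat h$ are product admissible iff $\codomain(g \otimes^\flat h) = \domain(f^c)$; but $\codomain(g \otimes^\flat h) = \codomain(g^c) = \codomain(g)$ (complements don't change the codomain), which equals $\domain(f^c)$ since $f$ and $g$ are product admissible. One also has to confirm that the relevant pairs of morphisms genuinely have disjoint images so that $\star$ is defined — e.g. that $f$ and $f^c \circ g$ have disjoint images (immediate, since $\Image(f^c \circ g) \subseteq \Image(f^c)$ which is disjoint from $\Image(f)$), and similarly down the chain.

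I expect the main obstacle to be precisely this verification of product admissibility and disjointness at each stage, not the algebraic identities themselves: one needs to be careful that invoking Lemma~\ref{complementdistributesoverproducts} is legitimate (it requires $g$, resp.\ $h$, to be injective — which is built into "product admissible" via the existence of $g^c$, but this should be stated explicitly), and that the codomain/domain matching propagates correctly through the nested definitions. Once the types are shown to line up, parts (1) and (2) follow as above, and (3) is then a short computation combining Lemma~\ref{compositiondistributesoverconcatenation}, associativity of composition in $\catstyle{Tuple}$, and associativity of $\star$. I would organize the write-up as: establish the disjoint-image facts; prove (1) and (2) via the domain/codomain identities above; then prove (3) by the two-sided unwinding.
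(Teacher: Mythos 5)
Your proposal is correct and follows essentially the same route as the paper: unwind both sides of (3), apply Lemma~\ref{compositiondistributesoverconcatenation} to distribute $f^c \circ (-)$ over the concatenation, apply Lemma~\ref{complementdistributesoverproducts} to identify $(f \otimes^\flat g)^c = f^c \circ g^c$, and conclude by associativity of $\star$. The paper's proof is just this chain of equalities; your additional verification of the domain/codomain matching for parts (1) and (2) and of the disjoint-image conditions is bookkeeping the paper leaves implicit, and your checks are accurate.
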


\begin{proof}
    Using Lemma \ref{compositiondistributesoverconcatenation} and Lemma \ref{complementdistributesoverproducts}, we compute 
    \begin{align*}
    f \otimes^\flat (g \otimes^\flat h) & = f \star (f^c \circ (g \otimes^\flat h))\\
    & = f \star (f^c \circ (g \star (g^c \circ h) ) )\\
    & = f \star ( (f^c \circ g) \star (f^c \circ (g^c \circ h) ) ) \\
    & = f \star ( (f^c \circ g) \star ((f^c \circ g^c) \circ h ) ) \\
    & = f \star (f^c \circ g) \star ( (f \otimes^\flat g)^c \circ h) \\
    & = (f \otimes^\flat g) \star ( (f \otimes^\flat g)^c \circ h) \\
    & = (f \otimes^\flat g) \otimes^\flat h.
    \end{align*} 
\end{proof}

\begin{proposition}
    Suppose $f$ and $g$ are non-degenerate tuple morphisms and that $f$ and $g$ are product admissible. Then 
    \[
    L_{f \otimes^\flat g} = L_f \otimes^\flat L_g.
    \]
\end{proposition}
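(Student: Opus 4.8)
The plan is to expand both sides of the desired equality $L_{f \otimes^\flat g} = L_f \otimes^\flat L_g$ using the definitions of flat product on tuple morphisms and on flat layouts, then reconcile them by invoking the compatibility results already established for the constituent operations (concatenation, composition, complement). By definition, $f \otimes^\flat g = f \star (f^c \circ g)$, so applying Proposition \ref{concatenateinCproposition} (compatibility of concatenation) reduces $L_{f \otimes^\flat g}$ to $L_f \star L_{f^c \circ g}$, provided $f$ and $f^c \circ g$ have disjoint images; then Corollary \ref{compatibilityofcompositioninC} (compatibility of composition) rewrites $L_{f^c \circ g}$ as $L_{f^c} \circ L_g$. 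On the layout side, $L_f \otimes^\flat L_g = L_f \star (L_f^c \circ L_g)$ where $L_f^c = \comp^\flat(L_f, \size(L_f)\cdot\cosize(L_g))$, so the crux becomes comparing $L_{f^c} \circ L_g$ with $L_f^c \circ L_g$.

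First I would verify the disjointness and composability hypotheses needed to even apply the compatibility lemmas: product admissibility of $f$ and $g$ means $\codomain(g) = \domain(f^c)$, so $f^c \circ g$ is defined, and its image is contained in the image of $f^c$, which is disjoint from the image of $f$ by construction of the complement; non-degeneracy of $f$ and $g$ propagates to $f^c$ and to the composite, so all the hypotheses of Proposition \ref{concatenateinCproposition} and Corollary \ref{compatibilityofcompositioninC} are met. Next I would handle the key comparison $L_{f^c} \circ L_g \overset{?}{=} L_f^c \circ L_g$. The point is that $L_{f^c}$ and $L_f^c$ need not be literally equal, but they have the same layout function up to the relevant size: by Proposition \ref{coalescedlayoutoftuplemorphismcomplement} (or the two propositions immediately following it, relating $L_{f^c}$ to $\comp^\flat(L_f)$ and $\comp^\flat(L_f, N)$), the layout $L_{f^c}$ is a sorted complement of $L_f$ whose cosize is $\size(\codomain(f^c))$. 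Since $f$ and $g$ are product admissible, $\codomain(g) = \domain(f^c)$, which forces $\size(L_g) = \size(\domain(f^c))$, and one checks $\size(\codomain(f^c)) \geq \size(L_f)\cdot\cosize(L_g) \geq \cosize(L_g)$, so $\Phi_{L_{f^c}}$ agrees with $\Phi_{L_f^c}$ on $[0, \size(L_g))$ by Lemma \ref{compatibilityofNcomplements}. Then Proposition \ref{postcomposewithextension} gives $L_{f^c} \circ L_g = L_f^c \circ L_g$.

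Assembling these pieces: $L_{f \otimes^\flat g} = L_{f \star (f^c \circ g)} = L_f \star L_{f^c \circ g} = L_f \star (L_{f^c} \circ L_g) = L_f \star (L_f^c \circ L_g) = L_f \otimes^\flat L_g$, where the last equality is the definition of the flat product of layouts. The main obstacle I anticipate is the bookkeeping in the middle step: carefully matching up the size parameter implicit in $L_{f^c}$ (which is $\size(\codomain(f^c))$, determined by the codomain tuple of $f$ modulo the entries hit by $f$) with the size parameter $\size(L_f)\cdot\cosize(L_g)$ used to define $L_f^c$, and confirming the inequality between them so that Lemma \ref{compatibilityofNcomplements} and Proposition \ref{postcomposewithextension} apply. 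A secondary subtlety is ensuring that $L_{f^c} \circ L_g$ is actually defined (i.e., $\cosize(L_g) \leq \size(L_{f^c})$), which follows from the same size comparison. Once the $N$-complement sizes are correctly tracked, the rest is a routine chain of substitutions using results already proved in the excerpt.
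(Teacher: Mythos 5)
Your proposal is correct and follows essentially the same route as the paper's proof: reduce via compatibility of concatenation and composition to comparing $L_{f^c} \circ L_g$ with $\comp^\flat(L_f,\size(L_f)\cdot\cosize(L_g)) \circ L_g$, establish the size inequality $\size(L_f)\cdot\cosize(L_g) \leq \size(T)$ from product admissibility and injectivity of $f$, and conclude by the extension/restriction compatibility of $N$-complements. The only difference is that you make explicit some hypotheses (disjointness of images, well-definedness of the composite) that the paper leaves implicit, which is a reasonable addition rather than a divergence.
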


\begin{proof} Suppose $f:S \to T$ and $g:U \to V$ are product admissible, and set
    \[
    L_f^* = \comp^\flat(L_f,\size(L_f)\cdot \cosize(L_g)).
    \]
    Since $f$ is injective and the codomain of $g$ is the domain of $f^c$, it follows that 
    \[\size(L_f) \cdot \cosize(L_g) \leq \size(S) \cdot \size(V) = \size(T).\]
    Using this fact, and the fact that \[\Phi_{\comp(L_f,\size(T))} = \Phi_{L_{f^c}},\]
    we have
    \begin{align*}
    L_f^* \circ L_g & = \comp(L_f,\size(T)) \circ L_g\\
    & = L_{f^c} \circ L_g.
    \end{align*}
    Using this fact, we compute 
    \begin{align*}
        L_f \otimes^\flat L_g & = L_f \star (L_f^* \circ L_g)\\
        & = L_f \star (L_{f^c} \circ L_g)\\
        & = L_f \star L_{f^c \circ g}\\
        & = L_{f \star (f^c \circ g)}\\
        & = L_{f \otimes^\flat g}
    \end{align*}
\end{proof}

\newpage

\newpage
\section{The category $\catstyle{Nest}$}\label{thecategoryNestTuplesection}

In the previous section, we introduced a category $\catstyle{Tuple}$, whose morphisms encode flat tractable layouts. In this section, we introduce a category $\catstyle{Nest}$, whose morphisms encode tractable layouts with arbitrary nesting.

\subsection{Basic definitions} 

Recall that for a nested tuple $S$, we write $S^\flat$ for the flattening of $S$. For example, if $S = (64,(8,8))$, then $S^\flat = (64,8,8)$.

\begin{definition}\label{definitionofNest}
    Let $\catstyle{Nest}$ denote the category whose objects are nested tuples of positive integers, and in which a morphism 
    \[f:S \to T\]
    in $\catstyle{Nest}$ is specified by a tuple morphism
    \[
    f^\flat:S^\flat \to T^\flat.
    \]
In other words, 
\[
\Hom_{\catstyle{Nest}}(S,T) = \Hom_{\catstyle{Tuple}}(S^\flat,T^\flat).
\]
Explicitly, a morphism $f:S \to T$ in $\catstyle{Nest}$ is specified by a tractable pointed map $\alpha:\langle \len(S) \rangle_* \to \langle \len(T) \rangle_*$ satisfying the following property:
\begin{itemize}
    \item If $1 \leq i \leq \len(S)$ and $\alpha(i) \neq *$, then $\entry_i(S) = \entry_{\alpha(i)}(T)$. 
\end{itemize}
    We say such a morphism $f$ {\it lies over} $\alpha$, and refer to $f$ as a {\it nested tuple morphism}.
\end{definition}

\begin{notation}
    If $f:S \to T$ is a nested tuple morphism which lies over $\alpha$, we depict $f$ as 
    \[ \begin{tikzcd} 
    S \ar[rr,"f"] \ar[rr,swap,"\alpha"] & & T
    \end{tikzcd} \]
\end{notation}

\begin{example}
    Here are some examples of nested tuple morphisms. 
    \[ \begin{tikzcd} 
    (64, (8,8)) \ar[rr,"f"] \ar[rr,swap,"(1{,}2{,}3)"] & &  (64, 8,8) \\
    ((2,2),2) \ar[rr,"g"] \ar[rr,swap,"(*{,}5{,}2)"] & &  (10,2,2,(3,2,3)) \\
    64 \ar[rr,"h"] \ar[rr,swap,"(2)"] & &  ((64,64),512).
    \end{tikzcd} \]
\end{example}

\begin{observation}
    If $X$ is a set, lets write $X^{\mathterm{ind}}$ for the indiscrete category on $X$. This is the category whose objects are the elements of $X$, and in which there is a unique (iso)morphism between any two objects. Then by definition of $\catstyle{Nest}$, we have a pullback square 
    \[\begin{tikzcd}
        \catstyle{Nest} \ar[rr,"\profile(-)"] \ar[d,swap,"(-)^\flat"] \arrow[drr, phantom, "\lrcorner", very near start]& &  \mathterm{Profile}^{\mathterm{ind}}\ar[d,"\len(-)"] \\
        \catstyle{Tuple} \ar[rr,swap,"\len(-)"] & & \mathbb{N}^{\mathterm{ind}} 
    \end{tikzcd}\]
    We may view this as a categorification of the pullback square \ref{nesttuplepullbacksquare}. 
\end{observation}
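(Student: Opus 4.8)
The plan is to identify $\catstyle{Nest}$ with the strict pullback $\catstyle{Tuple} \times_{\mathbb{N}^{\ind}} \mathterm{Profile}^{\ind}$ in $\catstyle{Cat}$ by writing down an explicit isomorphism of categories that is compatible with both projections. Everything then reduces to the set-level statement recorded in Observation \ref{nesttuplepullbacksquare} together with the defining formula $\Hom_{\catstyle{Nest}}(S,T) = \Hom_{\catstyle{Tuple}}(S^\flat,T^\flat)$ of Definition \ref{definitionofNest}. So first I would spell out the explicit pullback category: its objects are pairs $(A,P)$ with $A$ a tuple of positive integers, $P$ a profile, and $\len(A)=\len(P)$ as objects of $\mathbb{N}^{\ind}$; its morphisms $(A,P)\to(B,Q)$ are pairs $(f,g)$ with $f\colon A\to B$ in $\catstyle{Tuple}$, $g\colon P\to Q$ in $\mathterm{Profile}^{\ind}$, and $\len(f)=\len(g)$ in $\mathbb{N}^{\ind}$; composition and identities are componentwise.

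On objects, the comparison functor sends a nested tuple $X$ to the pair $(X^\flat,\profile(X))$. This lands in the pullback because $\len(X^\flat)=\len(\profile(X))$ by the very definition of a nested tuple, and Observation \ref{nesttuplepullbacksquare} says precisely that $X\mapsto(X^\flat,\profile(X))$ is a bijection from nested tuples of positive integers onto the set of such pairs. On morphisms, the crucial point is that both $\mathterm{Profile}^{\ind}$ and $\mathbb{N}^{\ind}$ are indiscrete: the component $g\colon\profile(X)\to\profile(Y)$ is forced to be the unique morphism of $\mathterm{Profile}^{\ind}$, and the constraint $\len(f)=\len(g)$ is automatic, since once $\len(X^\flat)=\len(\profile X)$ and $\len(Y^\flat)=\len(\profile Y)$ both sides are the unique morphism $\len(X^\flat)\to\len(Y^\flat)$ of $\mathbb{N}^{\ind}$. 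Hence a morphism of the pullback from $(X^\flat,\profile X)$ to $(Y^\flat,\profile Y)$ is nothing more than a tuple morphism $X^\flat\to Y^\flat$, which by Definition \ref{definitionofNest} is exactly a $\catstyle{Nest}$-morphism $X\to Y$; sending $f\colon X\to Y$ to $(f^\flat,\,!)$ therefore gives a bijection on each hom-set.

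It then remains to check functoriality and compatibility with the structure maps, both of which are immediate. Composition and identities in $\catstyle{Nest}$ are inherited from $\catstyle{Tuple}$ through $(-)^\flat$, while composition and identities in the pullback are componentwise with the $\mathterm{Profile}^{\ind}$-component always trivial, so the two agree. And by construction, the comparison functor followed by the projection to $\catstyle{Tuple}$ is $(-)^\flat$, and followed by the projection to $\mathterm{Profile}^{\ind}$ is $\profile(-)$, so the square commutes; being a bijective-on-objects full and faithful functor, the comparison functor is an isomorphism, which exhibits the square as a pullback.

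There is no real obstacle in the argument — its entire content is the set-theoretic Observation \ref{nesttuplepullbacksquare}, and the categorical upgrade costs nothing precisely because the two categories being pulled back over the base are indiscrete, so that all the profile-side and $\mathbb{N}$-side morphism data collapses. The only subtlety worth flagging in the writeup is the standard distinction between the strict pullback in $\catstyle{Cat}$ and the $2$-categorical (pseudo) pullback; here it is harmless, since $\len\colon\mathterm{Profile}^{\ind}\to\mathbb{N}^{\ind}$ is an isofibration (every $n\geq 0$ is the length of the profile $(*,\dots,*)$), so the strict pullback already computes the pseudo-pullback.
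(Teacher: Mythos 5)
Your argument is correct and is exactly the intended unpacking of the paper's observation, which is stated without proof as following ``by definition of $\catstyle{Nest}$'': the hom-set identity $\Hom_{\catstyle{Nest}}(S,T)=\Hom_{\catstyle{Tuple}}(S^\flat,T^\flat)$ together with the set-level pullback of Observation \ref{nesttuplepullbacksquare} (applied with $V=\mathbb{Z}_{>0}$) gives the comparison isomorphism onto the strict pullback, and the indiscreteness of $\mathterm{Profile}^{\mathterm{ind}}$ and $\mathbb{N}^{\mathterm{ind}}$ makes all profile-side morphism data collapse just as you say. Your closing remark that the strict and pseudo-pullback agree because $\len$ is an isofibration is a worthwhile addition not present in the paper.
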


\begin{example}
    Suppose $S$ is a nested tuple of length $m$. If $ 1 \leq i \leq m$ then there is a nested tuple morphism 
    \[
    \entry_i(S) \to S
    \]
    lying over the map $\langle 1 \rangle_* \to \langle m \rangle_*$ given by $1 \mapsto i$. For instance, if $S = (64,(8,8))$ and $i = 1$, then we have a nested tuple morphism
    \[ \begin{tikzcd} 
    64 \ar[rr,swap,"(1)"] & &  (64,(8,8)).
    \end{tikzcd} \]
\end{example}

\begin{example}
    Suppose $S$ is a nested tuple of rank $r$. If $1 \leq i \leq r$, then there is a canonical nested tuple morphism 
    \[
    \mode_i(S) \to S
    \]
    lying over the map $\langle \len_i(S) \rangle_* \to \langle \len(S) \rangle_*$ given by $j \mapsto j + \len_{<i}(S)$. For instance, if $S = (64,(8,8))$, then we have a nested tuple morphism 
    \[ \begin{tikzcd} 
    (8,8) \ar[rr,swap,"(2{,}3)"] & & (64,(8,8)).
    \end{tikzcd} \]
\end{example}

\begin{observation} There are functors relating the categories $\catstyle{Nest}$ and $\catstyle{Tuple}$. First, there is an inclusion functor
    \[ \begin{tikzcd}
    \catstyle{Tuple} \ar[r,"\subset"] &  \catstyle{Nest}
    \end{tikzcd} \]
    which considers a tuple morphism $f:S \to T$ as a nested tuple morphism. Next, there is a {\it flattening} functor
    \[ \begin{tikzcd} 
    \catstyle{Nest} \ar[r,"(-)^\flat"] & \catstyle{Tuple}
    \end{tikzcd} \]
    which sends a nested tuple morphism $f:S \to T$ to the underlying tuple morphism $f^\flat:S^\flat \to T^\flat$. The composite 
    \[ \begin{tikzcd} 
    \catstyle{Tuple} \ar[r,"\subset"] & \catstyle{Nest} \ar[r,"(-)^\flat"] & \catstyle{Tuple}
    \end{tikzcd} \]
    is the identity functor on $\catstyle{Tuple}$, so $\catstyle{Tuple}$ is a retractive subcategory of $\catstyle{Nest}$. Moreover, these functors form an adjoint equivalence of categories.
\end{observation}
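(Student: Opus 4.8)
The plan is to check directly that $\iota$ and $(-)^\flat$ are functors, that $(-)^\flat \circ \iota = \id_{\catstyle{Tuple}}$ \emph{strictly} (which gives the retraction), and that the other composite $\iota \circ (-)^\flat$ is naturally isomorphic to $\id_{\catstyle{Nest}}$ via a transformation whose components are identity pointed maps; the triangle identities will then be immediate. The key observation underlying everything is that, by Definition \ref{definitionofNest}, a $\catstyle{Nest}$-morphism $f \colon S \to T$ carries no data beyond the underlying tuple morphism $f^\flat \colon S^\flat \to T^\flat$, and composition in $\catstyle{Nest}$ is inherited from $\catstyle{Tuple}$ under flattening — so that $(g \circ f)^\flat = g^\flat \circ f^\flat$ and $(\id_S)^\flat = \id_{S^\flat}$, and in particular $(-)^\flat$ is faithful and reflects identities. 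From this, $(-)^\flat \colon \catstyle{Nest} \to \catstyle{Tuple}$ is tautologically a functor. For $\iota$, one uses that a flat tuple $S$ satisfies $S^\flat = S$, so for flat $S,T$ we have $\Hom_{\catstyle{Nest}}(S,T) = \Hom_{\catstyle{Tuple}}(S^\flat,T^\flat) = \Hom_{\catstyle{Tuple}}(S,T)$ with composition and identities agreeing on the nose; hence $\iota$ is a fully faithful functor and $(-)^\flat \circ \iota = \id_{\catstyle{Tuple}}$.

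Next I would construct the natural isomorphism. For a nested tuple $S$, let $\eta_S \colon S \to S^\flat$ be the $\catstyle{Nest}$-morphism corresponding to $\id_{S^\flat}$ under the identification $\Hom_{\catstyle{Nest}}(S, S^\flat) = \Hom_{\catstyle{Tuple}}(S^\flat, (S^\flat)^\flat) = \Hom_{\catstyle{Tuple}}(S^\flat, S^\flat)$, and let $\eta_S' \colon S^\flat \to S$ be the morphism corresponding likewise to $\id_{S^\flat}$. Since flattening is faithful, $(\eta_S' \circ \eta_S)^\flat = \id_{S^\flat} = (\id_S)^\flat$ and $(\eta_S \circ \eta_S')^\flat = \id_{S^\flat}$ force $\eta_S' = \eta_S^{-1}$, so each $\eta_S$ is an isomorphism. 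Naturality of $\eta$ against $f \colon S \to T$ follows by flattening both legs of the square, which yields $\id_{T^\flat} \circ f^\flat = f^\flat = f^\flat \circ \id_{S^\flat}$. Thus $\eta \colon \id_{\catstyle{Nest}} \Rightarrow \iota \circ (-)^\flat$ is a natural isomorphism; equivalently, $\iota$ is essentially surjective, as every nested tuple is $\catstyle{Nest}$-isomorphic to its flattening.

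Finally I would package this as an adjoint equivalence $(-)^\flat \dashv \iota$ with unit $\eta$ and counit the identity transformation $\epsilon = \id \colon (-)^\flat \circ \iota = \id_{\catstyle{Tuple}} \Rightarrow \id_{\catstyle{Tuple}}$. Because $\epsilon$ is the identity, the two triangle identities collapse to $(-)^\flat \eta = \id_{(-)^\flat}$ and $\eta \iota = \id_{\iota}$: the first holds since $(\eta_S)^\flat = \id_{S^\flat}$, the second since $\eta_T = \id_T$ for flat $T$. With both $\eta$ and $\epsilon$ natural isomorphisms, this exhibits $(-)^\flat \dashv \iota$ as an adjoint equivalence of categories, and the strict equation $(-)^\flat \circ \iota = \id_{\catstyle{Tuple}}$ shows $\catstyle{Tuple}$ is a retractive subcategory of $\catstyle{Nest}$. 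The argument is essentially formal; the only point that needs care is the bookkeeping around ``$f^\flat$ regarded as a $\catstyle{Nest}$-morphism $S^\flat \to T^\flat$'' — keeping straight that flattening acts as the identity on the underlying pointed map and merely relabels source and target — after which every verification above is a one-line computation at the level of pointed maps.
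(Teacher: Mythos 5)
Your proposal is correct and matches the paper's (implicit) approach: the paper states this as an unproved observation, and your verification is the canonical one, with your unit components $\eta_S$ being exactly the flattening isomorphisms $\id_S^{S^\flat}$ that the paper constructs in Example \ref{definitionofflatteningisomorphisms} as special cases of reparenthesization isomorphisms. All the checks correctly reduce, via the definitional identification $\Hom_{\catstyle{Nest}}(S,T) = \Hom_{\catstyle{Tuple}}(S^\flat,T^\flat)$ and faithfulness of $(-)^\flat$, to trivial equations of pointed maps, and your choice of adjoint equivalence $(-)^\flat \adj \iota$ with identity counit satisfies the triangle identities as claimed.
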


\begin{remark}
    One might wish to consider some category $\catstyle{C}$ whose morphisms encode tractable layouts, but which is {\it not} equivalent to $\catstyle{Tuple}$. The authors have considered several such examples, but leave their investigation to future work.
\end{remark}

\subsection{From nested tuple morphisms to layouts}

The key feature of the category $\catstyle{Nest}$ is that if $f:S \to T$ is a nested tuple morphism, then $f$ encodes a layout $L_f$. This layout is obtained by equipping the flat layout $L_{f^\flat}$ with the nesting profile of $S$. More precisely, we have the following construction.

\begin{construction}\label{layoutfromnestedtuplemorphism}
    Suppose
    \[
    f:S \to T
    \]
    is a nested tuple morphism, and suppose $P = \profile(S)$. We define $L_f$
    to be the layout
    \[
    L_f = (L_{f^\flat})_P
    \]
    where $(-)_P$ is the $P$-substitution operation of Definition \ref{layoutfromflatlayoutandprofile}. We refer to $L_{f}$ as the {\it layout encoded by} $f$.
\end{construction}

\begin{construction}\label{layoutfromnestedtuplemorphism}
    Suppose 
\[\begin{tikzcd} (s_1,\dots,s_m)_P \ar[rr,"f"] \ar[rr,swap,"\alpha"] & &  (t_1,\dots,t_n)_Q\end{tikzcd} \]
is a nested tuple morphism. We define $L_f$ to be the layout whose shape \[\shape(L_f) = (s_1,\dots,s_m)_P\]
is the domain of $f$, and whose stride \[\stride(L_f) = (d_1,\dots,d_m)_P\] has entries defined by the formula
\[d_i = \begin{cases} 
0 & \alpha(i) = * \\
\prod_{j < \alpha(i)} t_j & \alpha(i) \neq *.
\end{cases}
\]
We refer to $L_f$ as the {\it layout encoded by} $f$.
\end{construction}

\begin{example}
    The layout encoded by 
    \[ \begin{tikzcd} 
    ((8,8),(4,4)) \ar[rr,"f"] \ar[rr,swap,"(1{,}4{,}3{,}2)"] & & (8,4,4,8)
    \end{tikzcd} \]
    is 
    \[
    L_f = ((8,8),(4,4)):((1,128),(32,8)).
    \]
\end{example}

\begin{example}
    The layout encoded by 
    \[ \begin{tikzcd} 
    (128,(4,4,2)) \ar[rr,"g"] \ar[rr,swap,"(3{,}1{,}2{,}*)"] & & ((4,4),128)
    \end{tikzcd} \]
    is 
    \[
    L_g = (128,(4,4,2)):(16,(1,4,0)).
    \]
\end{example}

\begin{observation}\label{flatlayoutcompatibility}
The flattening functor 
\[ \begin{tikzcd}
\catstyle{Nest} \ar[r,"(-)^\flat"] & \catstyle{Tuple}
\end{tikzcd} \]
is compatible with flattening of layouts, in that if $f$ is a nested tuple morphism, then
\[
(L_f)^\flat = L_{f^\flat}.
\]
\end{observation}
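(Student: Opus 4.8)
The plan is to unwind the definition of $L_f$ and check that it affects only the nesting profile, not the underlying flat data. Write $S = \domain(f)$ and $P = \profile(S)$. By Construction \ref{layoutfromnestedtuplemorphism} we have $L_f = (L_{f^\flat})_P$, where $(-)_P$ is the $P$-substitution of layouts built from Construction \ref{layoutfromflatlayoutandprofile}; explicitly, $\shape(L_f) = \shape(L_{f^\flat})_P$ and $\stride(L_f) = \stride(L_{f^\flat})_P$, where on the right we are forming the $P$-substitution of the flat tuples $\shape(L_{f^\flat})$ and $\stride(L_{f^\flat})$ (each viewed as a nested tuple whose entries are depth-$0$ nested tuples).

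The key observation I would isolate is that $P$-substitution of a flat tuple leaves its flattening unchanged: if $X = (x_1,\dots,x_m)$ is a tuple and $P$ is a profile of length $m$, then $(x_1,\dots,x_m)_P$ is by definition the nested tuple with flattening $x_1^\flat \star \cdots \star x_m^\flat = (x_1,\dots,x_m)$ and profile $(*,\dots,*)_P = P$; so $((x_1,\dots,x_m)_P)^\flat = (x_1,\dots,x_m)$. This is immediate from the definition of substitution of nested tuples (the flattening of a $Q$-substitution is the concatenation of the flattenings of its arguments; cf.\ Definition \ref{definitionofPconcatenation}). Applying this with $X = \shape(L_{f^\flat})$ and with $X = \stride(L_{f^\flat})$ gives $(\shape(L_f))^\flat = \shape(L_{f^\flat})$ and $(\stride(L_f))^\flat = \stride(L_{f^\flat})$. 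Since the flattening of a layout is defined by $L^\flat = \shape(L)^\flat : \stride(L)^\flat$, we conclude $(L_f)^\flat = \shape(L_{f^\flat}) : \stride(L_{f^\flat}) = L_{f^\flat}$, as desired.

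An equivalent route, which I might present instead for concreteness, uses the second (formulaic) form of Construction \ref{layoutfromnestedtuplemorphism}: there the shape of $L_f$ is literally the domain $S = (s_1,\dots,s_m)_P$ of $f$, with $S^\flat = (s_1,\dots,s_m)$, and the stride of $L_f$ is $(d_1,\dots,d_m)_P$ with $d_i = 0$ when $\alpha(i)=*$ and $d_i = \prod_{j<\alpha(i)} t_j$ otherwise, where $T^\flat = (t_1,\dots,t_n)$ and $\alpha$ is the pointed map over which $f$ (equivalently $f^\flat$) lies. But this is exactly the formula defining $\stride(L_{f^\flat})$ in Construction \ref{layoutfrommorphisminTuple}, applied to the same $\alpha$ and the same codomain $T^\flat$; hence $(\stride(L_f))^\flat = (d_1,\dots,d_m) = \stride(L_{f^\flat})$, and again $(L_f)^\flat = L_{f^\flat}$.

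I do not anticipate a genuine obstacle here: the statement is a bookkeeping verification that $P$-substitution changes only parenthesization. The only point requiring care is keeping the two meanings of ``flattening'' distinct --- flattening of nested tuples versus flattening of layouts --- and noting they are compatible by the very definition $L^\flat = S^\flat : D^\flat$; once that is observed, the argument is a two-line chase through the definitions.
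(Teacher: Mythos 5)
Your proof is correct and takes the same route the paper intends: the statement is recorded only as an Observation with no written proof, precisely because it follows from unwinding Construction \ref{layoutfromnestedtuplemorphism} and noting that $P$-substitution alters only the profile, not the flattening. Your write-up (either route) supplies exactly the definitional bookkeeping the paper leaves implicit, so there is nothing to correct.
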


If $L$ is a tractable layout, then we can construct a nested tuple morphism which encodes $L$ as follows. 

\begin{construction}\label{nestedtuplemorphismfromlayout}
    Suppose $L$ is a tractable layout. We define the {\it standard representation} of $L$ to be the nested tuple morphism 
    \[f_L:S \to T\]
    where $(f_L)^\flat  = f_{L^\flat}$ is the standard representation of $L^\flat$, $S = \shape(L)$ is the shape of $L$, and $T$ is the codomain of $f_{L^\flat}$.
\end{construction}

\begin{example}
    If 
    \[
    L = (32,(2,2)):(192,(24,3))
    \]
    then the standard representation of $L$ is 
    \[\begin{tikzcd}
    (32,(2,2)) \ar[rr,"f_L"] \ar[rr,swap,"(6{,}4{,}2)"] & & (3,2,4,2,4,32).
    \end{tikzcd}\]
\end{example}

\begin{lemma}\label{standardrepresentationlemma}
    If $L$ is a tractable layout, and $f = f_L$ is the standard representation of $L$, then 
    \[
    L_f = L.
    \]
\end{lemma}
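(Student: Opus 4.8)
The plan is to reduce the nested statement to the already-established flat case, namely Lemma \ref{tuplemorphismfromlayoutagreement}, by unwinding how both the standard representation $f_L$ and the layout-encoding construction $L_f$ interact with flattening. The key observation is that both operations are, by construction, ``the flat operation plus the profile of $\shape(L)$'': $f_L$ is defined in Construction \ref{nestedtuplemorphismfromlayout} so that $(f_L)^\flat = f_{L^\flat}$ with domain $\shape(L)$, and $L_f$ is defined in Construction \ref{layoutfromnestedtuplemorphism} (the $P$-substitution version) so that $L_f = (L_{f^\flat})_P$ where $P = \profile(\shape(f))$.

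First I would set $P = \profile(\shape(L)) = \profile(L)$. Applying Construction \ref{layoutfromnestedtuplemorphism} to $f = f_L$, we get
\[
L_{f_L} = (L_{(f_L)^\flat})_P = (L_{f_{L^\flat}})_P,
\]
using that $(f_L)^\flat = f_{L^\flat}$ by definition. Now $L^\flat$ is a tractable flat layout (by Definition \ref{definitionoftractablelayouts}), so Lemma \ref{tuplemorphismfromlayoutagreement} gives $L_{f_{L^\flat}} = L^\flat$. Hence
\[
L_{f_L} = (L^\flat)_P.
\]
Finally, $(L^\flat)_P = L$: this is the statement that a layout is recovered from its flattening and its profile via $P$-substitution, which is the content of Construction \ref{layoutfromflatlayoutandprofile} and the uniqueness proposition following it (a nested tuple, hence a layout, is determined by its flattening together with its profile, and $\profile(L^\flat{}_P) = P = \profile(L)$, $\len(P) = \len(L^\flat)$). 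Chaining these equalities yields $L_{f_L} = L$.

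The only point requiring a little care — and the place I expect the write-up to need the most attention — is confirming that the profile $P$ appearing implicitly in the definition of $L_{f_L}$ (namely $\profile$ of the domain of $f_L$) genuinely equals $\profile(L)$, and that the $P$-substitution operation on flat layouts used in Construction \ref{layoutfromflatlayoutandprofile} is the same one used in Construction \ref{layoutfromnestedtuplemorphism}. Both are true essentially by definition: the domain of $f_L$ is $\shape(L)$ by Construction \ref{nestedtuplemorphismfromlayout}, so its profile is $\profile(\shape(L)) = \profile(L)$, and both substitutions are the $(-)_P$ operation of Definition \ref{definitionofPconcatenation} applied to shape and stride. So the proof is a short chain of definitional rewrites bracketing the flat result, and I would present it exactly as the three displayed equalities above, citing Lemma \ref{tuplemorphismfromlayoutagreement} for the middle step and the flattening/profile uniqueness (Construction \ref{layoutfromflatlayoutandprofile} and Observation \ref{flatlayoutcompatibility}) for the outer ones.
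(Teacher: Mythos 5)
Your proposal is correct and follows essentially the same route as the paper: the paper's (very terse) proof likewise reduces to the flat case by noting $(L_f)^\flat = L_{f^\flat} = L^\flat$ via Lemma \ref{tuplemorphismfromlayoutagreement} and then observes $\shape(L_f) = \shape(L)$, which pins down $L_f = L$ since a layout is determined by its flattening together with its profile. Your write-up just makes explicit the $P$-substitution bookkeeping that the paper leaves implicit.
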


\begin{proof}
    We have 
    \[
    (L_f)^\flat = L_{f^\flat} = L^\flat
    \]
    and 
    \begin{align*}
        \shape(L_f) = \shape(L).
    \end{align*}
\end{proof}

\begin{proposition}\label{tractablelayoutscomefromnestedtuplemorphisms}
Suppose $L$ is a layout. Then there exists a nested tuple morphism $f$ encoding $L$ if and only if $L$ is tractable.
\end{proposition}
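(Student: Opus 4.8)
The plan is to reduce the nested statement to the already-established flat statement, Proposition \ref{tractableflatlayoutscomefromtuplemorphisms}, using the compatibility between flattening of morphisms and flattening of layouts recorded in Observation \ref{flatlayoutcompatibility}.

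First I would prove the forward direction: suppose there is a nested tuple morphism $f:S \to T$ with $L_f = L$. Applying the flattening functor $(-)^\flat$, I obtain a tuple morphism $f^\flat:S^\flat \to T^\flat$, and by Observation \ref{flatlayoutcompatibility} (which says $(L_f)^\flat = L_{f^\flat}$), we get $L_{f^\flat} = (L_f)^\flat = L^\flat$. Thus $L^\flat$ is a flat layout encoded by the tuple morphism $f^\flat$, so by Proposition \ref{tractableflatlayoutscomefromtuplemorphisms}, $L^\flat$ is tractable. By Definition \ref{definitionoftractablelayouts}, $L$ is tractable precisely because $L^\flat$ is tractable, completing this direction.

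Next I would prove the converse: suppose $L$ is tractable. Then by definition $L^\flat$ is tractable. By Proposition \ref{tractableflatlayoutscomefromtuplemorphisms}, there is a tuple morphism encoding $L^\flat$; concretely, take $f = f_L$ to be the standard representation of $L$ from Construction \ref{nestedtuplemorphismfromlayout}. Then Lemma \ref{standardrepresentationlemma} gives $L_{f_L} = L$, so $L$ is encoded by the nested tuple morphism $f_L$. This gives the existence statement.

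Neither direction presents a serious obstacle; the proof is essentially a bookkeeping argument that unwinds the definitions and invokes the flat analogue. The only point requiring a little care is making sure that the flattening functor genuinely interchanges the two notions of "layout encoded by a morphism" — but this is exactly the content of Observation \ref{flatlayoutcompatibility}, and the definition of tractability for nested layouts (Definition \ref{definitionoftractablelayouts}) was set up precisely to be inherited from $L^\flat$, so the two reductions match up cleanly. I would therefore present the proof as two short paragraphs, one per implication, citing Observation \ref{flatlayoutcompatibility}, Proposition \ref{tractableflatlayoutscomefromtuplemorphisms}, Construction \ref{nestedtuplemorphismfromlayout}, and Lemma \ref{standardrepresentationlemma}.
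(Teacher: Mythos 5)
Your proof is correct and follows exactly the same route as the paper's: flatten the morphism, apply Proposition \ref{tractableflatlayoutscomefromtuplemorphisms} via the identity $(L_f)^\flat = L_{f^\flat}$ for the forward direction, and take the standard representation $f_L$ together with Lemma \ref{standardrepresentationlemma} for the converse. No gaps.
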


\begin{proof}
    Suppose first that $L = L_f$ for some nested tuple morphism $f$. Then $(L_f)^\flat = L_{f^\flat}$, and by Proposition \ref{tractableflatlayoutscomefromtuplemorphisms}, we know that $L^\flat$ is tractable, hence so is $L$. Conversely, if $L$ is tractable, then we can take $f = f_L$ to be the standard representation of $L$, and by Lemma \ref{standardrepresentationlemma}, we have $L_f = L$. 
\end{proof}

In order to establish a one-to-one correspondence between tractable layouts and certain nested tuple morphisms, we introduce the notion of {\it standard form} for nested tuple morphisms. 

\begin{definition}\label{definitionofnestedstandardform}
    Suppose $f:S \to T$ is a nested tuple morphism. We say $f$ has {\it standard form} if 
    \begin{enumerate}
        \item $f^\flat$ has standard form, as in Definition \ref{definitionofstandardform}, and 
        \item $T$ is flat.
    \end{enumerate}
\end{definition}

\begin{example}
    The nested tuple morphism
    \[\begin{tikzcd} 
    ((2,2),(3,3)) \ar[rr,"f"] \ar[rr,swap,"(4{,}6{,}2{,}3"] & & (10,3,3,2,10,2)
    \end{tikzcd} \]
    has standard form.
\end{example}
\begin{example}
The nested tuple morphism
    \[\begin{tikzcd} 
    ((2,2),(3,3)) \ar[rr,"f"] \ar[rr,swap,"(4{,}6{,}2{,}3"] & & ((10,3,3),(2,10,2))
    \end{tikzcd} \]
    does not have standard form since the codomain of $g$ is not flat. 
\end{example}

Just as in the flat case, we need to exclude non-degenerate nested tuple morphisms and non-degenerate layouts in order to obtain a one-to-one correspondence between nested tuple morphisms of standard form and tractable layouts. To this end, we make the following definition.

\begin{definition}
    Suppose
    \[
    \begin{tikzcd} 
S \ar[rr,"f"] \ar[rr,swap,"\alpha"] & & T
    \end{tikzcd} 
    \]
    is a nested tuple morphism, and suppose 
    \[
    L = S:D
    \]
    is a layout. 
    \begin{enumerate}
        \item We say $f$ is {\it non-degenerate} if 
    \[
    \entry_i(S) = 1 \quad \Rightarrow \quad \alpha(i) = *.
    \]
    \item We say $L$ is {\it non-degenerate} if 
    \[
    \entry_i(S) = 1 \quad \Rightarrow \quad \entry_i(D) = 0.
    \]
    \end{enumerate}
\end{definition}

\begin{remark}
    If $f$ is a nested tuple morphism, then $f$ is non-degenerate if and only if $f^\flat$ is non-degenerate. If $L$ is a layout, then $L$ is non-degenerate if and only if $L^\flat$ is non-degenerate.
\end{remark}

\begin{proposition}\label{nestedonetoonecorrespondence}
    The maps 
    \[ \begin{tikzcd} 
    f \ar[r,mapsto] & L_f \\
    \begin{Bmatrix}
    \text{Non-degenerate}\\
        \text{nested tuple morphisms }\\
        \text{ of standard form}
    \end{Bmatrix}
    \ar[r] & \ar[l] 
    \begin{Bmatrix}
    \text{Non-degenerate}\\
    \text{tractable layouts}
    \end{Bmatrix} \\
    f_L & \ar[l,mapsto] L
    \end{tikzcd} 
    \]
    of Constructions \ref{layoutfromnestedtuplemorphism} and \ref{nestedtuplemorphismfromlayout} determine a one-to-one correspondence between nested tuple morphisms of standard form, and tractable layouts.
\end{proposition}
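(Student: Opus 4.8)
The plan is to show that the two constructions $f \mapsto L_f$ and $L \mapsto f_L$ are mutually inverse bijections when restricted to the indicated classes, by reducing to the flat correspondence of \Cref{flatcorrespondence} via the flattening functor and the $P$-substitution operation.

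First I would verify that the maps are well-defined in the stated directions. Given a non-degenerate nested tuple morphism $f$ of standard form, \Cref{tractablelayoutscomefromnestedtuplemorphisms} shows $L_f$ is tractable; non-degeneracy of $L_f$ follows from non-degeneracy of $f$ together with the observation that $\entry_i(\shape(L_f)) = \entry_i(S)$ and the stride formula in \Cref{layoutfromnestedtuplemorphism} sends $\alpha(i) = *$ to $d_i = 0$. Conversely, given a non-degenerate tractable layout $L$, the standard representation $f_L$ has $f_L^\flat = f_{L^\flat}$ of standard form (since $L^\flat$ is tractable and non-degenerate, using the flat case of the excerpt) and codomain $T$ flat by construction, so $f_L$ has standard form in the sense of \Cref{definitionofnestedstandardform}; and $f_L$ is non-degenerate because $f_L^\flat = f_{L^\flat}$ is non-degenerate (the standard representation of a non-degenerate flat layout is non-degenerate) and non-degeneracy of a nested morphism is equivalent to that of its flattening.

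Next I would check the round trip $L \mapsto f_L \mapsto L_{f_L}$. This is exactly \Cref{standardrepresentationlemma}: $L_{f_L} = L$, which the excerpt already proves from $(L_{f_L})^\flat = L_{f_L^\flat} = L^\flat$ and $\shape(L_{f_L}) = \shape(L)$. For the other round trip $f \mapsto L_f \mapsto f_{L_f}$, I would argue that $f$ and $f_{L_f}$ are both non-degenerate nested tuple morphisms of standard form encoding the same layout $L_f$, so it suffices to prove an injectivity statement: if $g$ and $h$ are non-degenerate nested tuple morphisms of standard form with $L_g = L_h$, then $g = h$. Since $g$ and $h$ have standard form, their codomains are flat, and by \Cref{flatlayoutcompatibility} we have $L_{g^\flat} = (L_g)^\flat = (L_h)^\flat = L_{h^\flat}$. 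Now $g^\flat$ and $h^\flat$ are non-degenerate tuple morphisms of standard form (by \Cref{definitionofnestedstandardform}), so \Cref{nondegeneratestandardformtuplemorphisms} gives $g^\flat = h^\flat$; in particular $g$ and $h$ have the same domain flattening and lie over the same pointed map $\alpha$. It remains to see that $g$ and $h$ have the same domain \emph{as nested tuples}, i.e. the same profile. But $\shape(L_g) = \domain(g)$ and $\shape(L_h) = \domain(h)$ as nested tuples, and $L_g = L_h$ forces $\shape(L_g) = \shape(L_h)$; hence $\domain(g) = \domain(h)$, and since a nested tuple morphism is determined by its domain, codomain (flat, equal by the above), and underlying pointed map, we conclude $g = h$.

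The main obstacle I anticipate is bookkeeping around the notion of standard form in the nested setting: one must be careful that ``standard form'' for a nested morphism $f$ (namely $f^\flat$ of standard form \emph{and} codomain flat) is precisely what makes \Cref{nondegeneratestandardformtuplemorphisms} applicable to $f^\flat$, and that the layout $L_f$ genuinely records the profile of $\domain(f)$ so that the domain can be recovered. Once the flatness of the codomain and the recovery of the domain-profile from $\shape(L_f)$ are pinned down, the result follows mechanically from \Cref{flatcorrespondence}, \Cref{standardrepresentationlemma}, \Cref{flatlayoutcompatibility}, and \Cref{nondegeneratestandardformtuplemorphisms}; no genuinely new computation is needed beyond what is already established for $\catstyle{Tuple}$.
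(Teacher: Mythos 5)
Your proposal is correct and follows essentially the same route as the paper: one round trip is \Cref{standardrepresentationlemma}, and the other is handled by flattening, invoking the flat uniqueness result, and recovering the nested domain from $\shape(L_f)$ together with the flatness of the codomain. The only cosmetic difference is that you apply \Cref{nondegeneratestandardformtuplemorphisms} directly where the paper cites \Cref{flatcorrespondence}, but the latter is proved from the former, so the content is identical; your extra well-definedness checks are a welcome addition the paper leaves implicit.
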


\begin{proof}
    We have already shown in Proposition \ref{tractablelayoutscomefromnestedtuplemorphisms} that if $L$ is a tractable layout and  $f = f_L$ is the standard form of $L$, then $L_f = L$. Suppose next that $f$ has standard form, and let $L = L_f$ be the layout encoded by $f$. We want to show that $f$ is equal to the standard representation $f_L$ of $L$. By Proposition \ref{flatcorrespondence}, we know that $f^\flat$ is equal to the standard representation $f_{L^\flat}$ of $L^\flat$, and since 
    \[
    \domain(f) = \shape(L) = \domain(f_L),
    \]
    and 
    \[
    \codomain(f) = \codomain(f^\flat) = \codomain(f_{L^\flat}) = \codomain(f_L),
    \]
    we deduce that $f = f_L$. 
\end{proof}

\subsection{Examples}

In this section, we list some important families of nested tuple morphisms.

\begin{example}[Reparenthesizations]\label{definitionofreparenthesizationisomorphisms}
    Suppose $S_1$ and $S_2$ are nested tuples with the same flattening
    \[
    S_1^\flat = S_2^\flat.
    \]
    Then there is a {\it reparenthesization isomorphism} 
    \[\begin{tikzcd} 
    \id_{S_1}^{S_2} : S_1 \ar[r,"\cong"] & S_2
    \end{tikzcd} \]
    lying over the identity. These morphisms are transitive, in that 
    \[
    \id_{S_2}^{S_3} \circ \id_{S_1}^{S_2} = \id_{S_1}^{S_3},
    \]
    and compatible with identities, in that
    \[
    \id_S^S = \id_S.
    \]
    If $f = \id_{S_1}^{S_2}$ is a reparenthesization isomorphism, then $L_f$ is the column major layout with shape $S_1$. 
\end{example}

\begin{example}[Flattenings]\label{definitionofflatteningisomorphisms}
    As a special case of the previous example, if $S$ is any nested tuple, then we have a flattening isomorphism
    \[\begin{tikzcd} 
    \id_S^{S^\flat} : S \ar[r,"\cong"] & S^\flat
    \end{tikzcd} \]
    and an unflattening isomorphism
    \[\begin{tikzcd}
    \id_{S^\flat}^{S} : S^\flat \ar[r,"\cong"] & S
    \end{tikzcd}\]
\end{example}

\begin{observation}
    If $f:S \to T$ is a nested tuple, then $f$ is equal to the composite 
    \[ \begin{tikzcd} 
    S \ar[r,"\id_S^{S^\flat}"] \ar[rrr,swap,bend right = 30, "f"] & S^\flat \ar[r,"f^\flat"] & T^\flat \ar[r,"\id_{T^\flat}^T"] & T.
    \end{tikzcd} \]
    In other words, we have a canonical factorization 
    \[
    f = \id_{T^\flat}^T \circ f^\flat \circ \id_S^{S^\flat}.
    \]
\end{observation}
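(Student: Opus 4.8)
The plan is to reduce the claimed identity to an equality of tuple morphisms, exploiting the defining bijection $\Hom_{\catstyle{Nest}}(S,T) = \Hom_{\catstyle{Tuple}}(S^\flat,T^\flat)$ of Definition \ref{definitionofNest}. Since this bijection is realized by the flattening functor $(-)^\flat \colon \catstyle{Nest} \to \catstyle{Tuple}$, two morphisms in $\catstyle{Nest}$ with the same source and target coincide as soon as their flattenings agree in $\catstyle{Tuple}$. So it suffices to show that $\bigl(\id_{T^\flat}^T \circ f^\flat \circ \id_S^{S^\flat}\bigr)^\flat = f^\flat$.

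First I would check that the composite is well-typed: $\id_S^{S^\flat}\colon S \to S^\flat$, then $f^\flat \colon S^\flat \to T^\flat$ (viewing the tuple morphism $f^\flat$ as a nested tuple morphism via the inclusion $\catstyle{Tuple} \hookrightarrow \catstyle{Nest}$), then $\id_{T^\flat}^T \colon T^\flat \to T$, so the composite is indeed a morphism $S \to T$. Then I would apply the flattening functor and use its functoriality:
\[
\bigl(\id_{T^\flat}^T \circ f^\flat \circ \id_S^{S^\flat}\bigr)^\flat = (\id_{T^\flat}^T)^\flat \circ (f^\flat)^\flat \circ (\id_S^{S^\flat})^\flat .
\]
Now I would evaluate the three factors. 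The reparenthesization isomorphism $\id_S^{S^\flat}$ lies over the identity pointed map (Example \ref{definitionofreparenthesizationisomorphisms}, Example \ref{definitionofflatteningisomorphisms}), and since $(S^\flat)^\flat = S^\flat$, its flattening is literally the identity tuple morphism $\id_{S^\flat}$ on $S^\flat$; symmetrically $(\id_{T^\flat}^T)^\flat = \id_{T^\flat}$. Finally, because the composite $\catstyle{Tuple} \hookrightarrow \catstyle{Nest} \xrightarrow{(-)^\flat} \catstyle{Tuple}$ is the identity functor on $\catstyle{Tuple}$ (as recorded earlier in the excerpt), we have $(f^\flat)^\flat = f^\flat$. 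Substituting, the right-hand side collapses to $\id_{T^\flat} \circ f^\flat \circ \id_{S^\flat} = f^\flat$, which is precisely $f^\flat$; hence $f$ and the composite have equal flattenings, and therefore $f = \id_{T^\flat}^T \circ f^\flat \circ \id_S^{S^\flat}$.

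There is no serious obstacle here: the argument is entirely a matter of unwinding the definition of composition in $\catstyle{Nest}$ (which is inherited from $\catstyle{Tuple}$) together with the ``lies over the identity'' description of the reparenthesization isomorphisms. The only point that warrants a careful sentence is the claim $(\id_S^{S^\flat})^\flat = \id_{S^\flat}$: it holds because $S$ and $S^\flat$ have the same flattening, so $\id_S^{S^\flat}$ and $\id_{S^\flat}$ are tuple morphisms lying over the same (identity) pointed map $\langle \len(S) \rangle_* \to \langle \len(S) \rangle_*$, hence equal.
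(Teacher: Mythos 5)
Your proof is correct, and it is exactly the definitional unwinding the paper intends: the statement is recorded as an \emph{observation} with no proof given, precisely because $\Hom_{\catstyle{Nest}}(S,T) = \Hom_{\catstyle{Tuple}}(S^\flat,T^\flat)$ makes equality of the two morphisms reducible to equality of their flattenings, which collapses to $\id_{T^\flat}\circ f^\flat \circ \id_{S^\flat} = f^\flat$ as you show.
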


\begin{example}[Entries]\label{definitionofentriesofnestedtuplemorphisms}
    Suppose
    \[ \begin{tikzcd}
S \ar[rr,"f"] \ar[rr,swap,"\alpha"] & & T 
    \end{tikzcd}\]
    is a nested tuple morphism. Suppose $1 \leq i \leq \len(S)$, and write $j = \alpha(i)$. 
    Then we refer to the nested tuple morphism 
    \[ \begin{tikzcd} 
    \entry_i(S) \ar[rr,"\entry_i(f)"] \ar[rr,swap,"(j)"]& & T
    \end{tikzcd} \]
    as the $i$th entry of $f$. The layout encoded by $\entry_i(f)$ is 
    \[
    L_{\entry_i(f)} = \entry_i(L_f) .
    \]
\end{example}

\begin{example}[Entry inclusions]
    As a special case of the previous example, if $S$ is a nested tuple and $1 \leq i \leq \len(S)$, we can take $f = \id_S$, in which case 
    \[ \begin{tikzcd}
\entry_i(\id_S):\entry_i(S) \ar[rr] & & S
    \end{tikzcd}\]
    is the inclusion of the $i$th entry of $S$.
\end{example}

\begin{example}[Modes]\label{definitionofmodesofnestedtuplemorphisms}
    Suppose
        \[ \begin{tikzcd}
S \ar[rr,"f"] \ar[rr,swap,"\alpha"] & & T 
    \end{tikzcd}\]
    is a nested tuple morphism. Suppose $1 \leq i \leq \rank(S)$ and, write 
    \begin{align*}
    N & = \len_{<i}(S)\\
    \ell & = \len_i(S).
    \end{align*}
    Then we refer to the nested tuple morphism
    \[ \begin{tikzcd} [column sep = 30] 
    \mode_i(S) \ar[rr,"\mode_i(f)"] \ar[rr,swap,"(N+1{,}\dots{,}N+\ell)"]& & T
    \end{tikzcd} \]
    as the $i$th mode of $S$. The layout encoded by $\mode_i(L_f)$ is 
    \[
    L_{\mode_i(f)} = \mode_i(L_f).
    \]
\end{example}

\begin{example}[Mode inclusions]\label{definitionofmodeinclusions} As a special case of the previous example we may take $f = \id_S$, in which case
\[
\mode_i(\id_S) :\mode_i(S) \to S
\]
is the inclusion of the $i$th mode of $S$. We sometime denote this map by
\[
\incl_i(S) = \mode_i(\id_S).
\]
\end{example}

\subsection{Realization of nested tuple morphisms}\label{realizationofnestedtuplemorphisms}

In the flat case, we constructed a realization functor 
\[\begin{tikzcd} 
\catstyle{Tuple} \ar[r,"{|} \; \cdot \; {|}"] & \catstyle{FinSet}
\end{tikzcd} \]
which sends a tuple morphism $f$ to the layout function of $L_f$. We can extend this to a realization functor 
\[ \begin{tikzcd}
\catstyle{Nest} \ar[r,"{|}\; \cdot \;{|}"] & \catstyle{FinSet}
\end{tikzcd} \]
by precomposing with the flattening functor $\catstyle{Nest} \to \catstyle{Tuple}$. 

\begin{definition}
    We define the {\it realization functor}
    \[\begin{tikzcd}
        \catstyle{Nest}\ar[r,"{|}\;\cdot \;{|}"] & \catstyle{FinSet}
    \end{tikzcd}
    \]
    to be the composite 
    \[\begin{tikzcd}
\catstyle{Nest}\ar[r,"(-)^\flat"] & \catstyle{Tuple} \ar[r,"{|}\; \cdot \; {|}"] & \catstyle{FinSet}
    \end{tikzcd}
    \]
\end{definition}

\begin{lemma} \label{nestedlayoutfunctionlemma}
    If $f:S \to T$ is a nested tuple morphism, then the realization $|f|$ of $f$ is the layout function of $L_f$:
    \[
    |f| = 
    \Phi_{L_f}^{\size(T)}. 
    \]
\end{lemma}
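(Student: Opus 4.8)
The plan is to reduce the nested statement to the flat case, which has already been established as Lemma \ref{layoutfunctionlemma} (the realization $|g| = \Phi_{L_g}^{\size(T)}$ for tuple morphisms $g$). The key observation is that both sides of the desired identity are, by construction, computed via flattening: the left side because the realization functor on $\catstyle{Nest}$ is \emph{defined} as the composite $\catstyle{Nest} \xrightarrow{(-)^\flat} \catstyle{Tuple} \xrightarrow{|\cdot|} \catstyle{FinSet}$, and the right side because the layout function of a nested layout is defined via its flattening, $\Phi_{L_f} = \Phi_{(L_f)^\flat}$.

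The proof proceeds in three short steps. First, unwind the definition of the realization functor on $\catstyle{Nest}$ to obtain $|f| = |f^\flat|$, where $f^\flat: S^\flat \to T^\flat$ is the underlying tuple morphism. Second, apply Lemma \ref{layoutfunctionlemma} to the tuple morphism $f^\flat$ to get $|f^\flat| = \Phi_{L_{f^\flat}}^{\size(T^\flat)}$. Third, invoke Observation \ref{flatlayoutcompatibility}, which states $(L_f)^\flat = L_{f^\flat}$, together with the definition $\Phi_{L_f}^N = \Phi_{(L_f)^\flat}^N$ of the layout function of a nested layout, and the fact that $\size(T) = \size(T^\flat)$ (size is an invariant of the flattening). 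Chaining these equalities yields
\[
|f| = |f^\flat| = \Phi_{L_{f^\flat}}^{\size(T^\flat)} = \Phi_{(L_f)^\flat}^{\size(T)} = \Phi_{L_f}^{\size(T)},
\]
as desired.

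This is an essentially bookkeeping argument: every step is a direct appeal to a definition or a previously proved compatibility statement, so there is no genuine obstacle. The only point that requires a moment's care is making sure the superscript $\size(T)$ matches up on both sides — one should note explicitly that $\size$ depends only on the multiset of entries, hence $\size(T) = \size(T^\flat)$, so that the factored coordinate function $\Phi_{L_f}^{\size(T)}$ is well-defined (i.e.\ $\size(T) \geq \cosize(L_f)$, which follows since it already holds for $L_{f^\flat} = (L_f)^\flat$ by Lemma \ref{layoutfunctionlemma}). No new constructions or estimates are needed.
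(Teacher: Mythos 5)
Your proof is correct and is essentially the paper's own argument: both reduce to the flat case via $|f| = |f^\flat|$, apply Lemma \ref{layoutfunctionlemma}, and identify $L_{f^\flat}$ with $(L_f)^\flat$. Your version just spells out the bookkeeping (e.g.\ $\size(T) = \size(T^\flat)$) a bit more explicitly than the paper does.
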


\begin{proof}
This follows immediately from \ref{layoutfunctionlemma}, since
\[
|f| = |f^\flat| = \Phi_{L_{f^\flat}}^{\size(T)} = \Phi_{L_f}^{\size(T)}
\]
\end{proof}

\newpage 

\subsection{Refinements}\label{categoryrefinements}
In this section, we revisit the refinement of nested tuples from a categorical perspective. Recall from section \ref{refinementsection} that a nested tuple $S'$ {\it refines} $S$, denoted
\[ \begin{tikzcd} S' \ar[r,two heads] &  S\end{tikzcd} \]
if $S'$ may be obtained from $S$ by replacing each entry of $S$ with some nested tuple of the same size. For example, 
\[
(2,(2,2)) \twoheadrightarrow 8,
\]
and 
\[
((2,2),(3,3),(5,5)) \twoheadrightarrow (4,9,25).
\]
If $\len(S) = m$ and $\profile(S) = P$, then we can write 
\[
S' = (S_1',\dots,S_m')_P
\]
as the $P$-substitution of the relative modes
\[
S_i' = \mode_i(S',S).
\]
We refer to the ordinary concatenation
\[
(S_1',\dots,S_m') = \flatten(S',S)
\]
as the flattening of $S'$ relative to $S$.

Let $\catstyle{Ref}$ denote the poset category of nested tuples of positive integers under refinement, so that a morphism in $\catstyle{Ref}$ is a refinement $S' \twoheadrightarrow S$. If $S$ is a nested tuple, let 
\[
\catstyle{Ref}(S) = \{ S' \mid S'\text{ refines }S \}
\]
denote the poset of nested tuples refining $S$. Equivalently, $\catstyle{Ref}(S)$ is the slice category $\catstyle{Ref}_{/S}$.

\begin{construction}\label{definitionofrelativemodeinclusions}[Relative mode inclusions]
Suppose $S' \twoheadrightarrow S$ is a refinement, and write 
\[
S_i' = \mode_i(S',S)
\]
for the modes of $S'$ relative to $S$. Then $S'$ and $(S_1',\dots,S_m')$ have the same flattening, so we have a reparenthesization isomorphism 
\[
\begin{tikzcd} 
\id_{(S_1',\dots,S_m')}^{S'}:(S_1',\dots,S_m') \ar[r,"\cong"] & S'
\end{tikzcd}
\]
and we define 
\[
\incl_i(S',S): S_i' \to S'
\]
to be the composite 
\[
\begin{tikzcd} 
S_i' \ar[rrr,"\incl_i((S_1'{,}\dots{,}S_m'))"] & & & (S_1',\dots,S_m') \ar[rr,"\id_{(S_1',\dots,S_m')}^{S'}"] & & S'
\end{tikzcd} 
\]
of the $i$th mode inclusion of $(S_1',\dots,S_m')$ with the reparenthesization isomorphism $(S_1',\dots,S_m') \cong S'$. 
\end{construction}

\begin{example}
    If $S = (4,(9,25))$ and $S' = ((2,2),((3,3),25))$, then $S'$ refines $S$, and $\incl_2(S',S)$ is the nested tuple morphism 
    \[ \begin{tikzcd} 
    (3,3) \ar[rr,"\incl_2(S'{,}S)"] \ar[rr,swap,"(3{,}4)"] & & ((2,2),((3,3),25)).
    \end{tikzcd} \]
\end{example}

\begin{construction}\label{definitionofrelativemodes}[Relative modes]
    Suppose $f':S' \to T'$ is a nested tuple morphism, and suppose $S'$ refines $S$. We define the $i${\it th mode of} $f'$ {\it relative to} $S$, denoted 
    \[
    \mode_i(f',S) = f' \circ \incl_i(S',S): S_i' \to T'
    \]
    to be the composite 
    \[
    \begin{tikzcd} S_i' \ar[rr ,"\incl_i(S'{,}S)"] & & S' \ar[r,"f'"] &  T'
    \end{tikzcd} 
    \]
    In particular, we have 
    \[
    \mode_i(\id_{S'},S) = \incl_i(S',S).
    \]
\end{construction}

\begin{example}
    Suppose $S = (4,(9,25))$ and $S' = ((2,2),((3,3),25))$, so that $S'$ refines $S$. If $f'$ is the nested tuple morphism 
    \[ \begin{tikzcd} 
    ((2,2),((3,3),25)) \ar[rr,"f'"] \ar[rr,swap,"(1{,}3{,}2{,}*{,}4)"] & & (2,3,2,25).
    \end{tikzcd} \]
    then $\mode_2(f',S)$ is the nested tuple morphism 
    \[ \begin{tikzcd} 
    (3,3) \ar[rr,"\mode_2(f'{,}S)"] \ar[rr,swap,"(2{,}*)"] & & (2,3,2,25).
    \end{tikzcd} \]
\end{example}

\begin{construction}[Pullbacks]
    Suppose $f:S \to T$ is a nested tuple morphism lying over $\alpha$, and suppose $T' \twoheadrightarrow T$ is a refinement. Let 
    \[
    T_j' = \mode_j(T',T)
    \]
    denote the $j$th mode of $T'$ relative to $T$, and for any $1 \leq i \leq \len(S)$, set
    \[
    S_i' = 
    \begin{cases} 
    \entry_i(S) & \alpha(i) = *\\
    T'_j & \alpha(i) = j.
    \end{cases}
    \] 
    We define the {\it pullback of} $T'$ {\it along} $f$ to be the nested tuple
    \[
    S' = f^*T' = \sub(S,(S_1',\dots,S_m')).
    \]
    For any $1 \leq i \leq m$, let 
    \[
    f_i' :S_i' \to T'
    \]
    be the trivial map if $\alpha(i) = *$, and the inclusion
    \[
    \incl_j(T',T):S_i' = T_j' \to T'
    \]
    if $\alpha(i) = j$. The maps $f_1',\dots,f_m'$ have disjoint images, so we form the concatenation
    \[
    (f_1',\dots,f_m') : (S_1',\dots,S_m') \to T'.
    \]
    We define $f' = T'^*f$ to be the composite 
    \[ \begin{tikzcd}
    S' \ar[rrr,"\id_{S'}^{(S_1'{,}\dots{,}S_m')}"] & & &  (S_1',\dots,S_m') \ar[rr,"(f_1'{,}\dots{,}f_m')"] & & T'.
    \end{tikzcd} \]
     We refer to $f'$ as the pullback of $f$ along $T$, and depict such a pullback as a square 
    \[ \begin{tikzcd} 
    S' \ar[r,"f'"] \ar[d, two heads] \arrow[dr, phantom, "\lrcorner", very near start] & T' \ar[d,two heads] \\
    S \ar[r,swap,"f"] & T.
    \end{tikzcd} \]
\end{construction}

\begin{example}
    Suppose $f:(64,32) \to (4,64,4,32)$ lies over $\alpha = (2,4)$. Then we have a pullback square 
    \[ \begin{tikzcd} 
    ((16,4),(16,2)) \ar[d,two heads] \ar[r,"f'"] \arrow[dr, phantom, "\lrcorner", very near start] & ((2,2),(16,4),(2,2),(16,2)) \ar[d, two heads] \\
    (64,32) \ar[r,swap,"f"] & (4,64,4,32) 
    \end{tikzcd} \]
    where $f'$ lies over $\alpha' = (3,4,7,8)$.
\end{example}

\begin{example}
    Suppose $S$ is a nested tuple with flattening 
    \[
    S^\flat = (s_1,\dots,s_m),
    \]
    and suppose $S' \twoheadrightarrow S$ is a refinement with relative flattening 
    \[
    (S_1',\dots,S_m').
    \]
    Then the pullback of $S' \twoheadrightarrow S$ along the unflattening isomorphism 
    \[
    \id_{(s_1,\dots,s_m)}^S :(s_1,\dots,s_m) \to S 
    \]
    is the reparenthesization isomorphism 
    \[ \begin{tikzcd} 
    (S_1',\dots,S_m') \ar[rr,"\id_{(S_1'{,}\dots{,}S_m')}^{S'}"] \ar[d, two heads] \arrow[drr, phantom, "\lrcorner", very near start] & & S' \ar[d,two heads] \\
    (s_1,\dots,s_m) \ar[rr,swap,"\id_{(s_1{,}\dots{,}s_m)}^S"] & & S.
    \end{tikzcd}
    \]
\end{example}

\begin{example}
    Suppose $S' \twoheadrightarrow S$ is a refinement, and consider the $i$th entry inclusion
    \[
    s_i \to S.
    \]
    Then the pullback of $S' \twoheadrightarrow S$ along $s_i \to S$ is the $i$th relative mode inclusion 
        \[ \begin{tikzcd} 
    S_i' \ar[rr,"\incl_i(S'{,}S)"] \ar[d, two heads] \arrow[drr, phantom, "\lrcorner", very near start] & & S' \ar[d,two heads] \\
    s_i \ar[rr,swap] & & S.
    \end{tikzcd}
    \]
\end{example}

\begin{observation}
    The pullback construction above specifies a contravariant functor 
    \[ \begin{tikzcd} 
\catstyle{Nest}^\text{op} \ar[r] & \catstyle{Cat} \\
    S \ar[r,mapsto] \ar[d,swap,"f"] & \catstyle{Ref}(S)  & f^*T' \leftarrow f^*T'' \\
    T \ar[r,mapsto]  & \catstyle{Ref}(T) \ar[u,swap,"f^*"] & T' \leftarrow T'' \ar[u,mapsto] 
    \end{tikzcd} \]
\end{observation}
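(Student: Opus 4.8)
The plan is to verify the three axioms making $(-)^*$ a contravariant functor $\catstyle{Nest}^{\mathrm{op}} \to \catstyle{Cat}$: (i) for each nested tuple morphism $f : S \to T$, the assignment $T' \mapsto f^*T'$ is a well-defined functor $\catstyle{Ref}(T) \to \catstyle{Ref}(S)$; (ii) $(\id_S)^* = \id_{\catstyle{Ref}(S)}$; and (iii) $(g \circ f)^* = f^* \circ g^*$ for composable $f : S \to T$, $g : T \to U$. Since $\catstyle{Ref}(S)$ is a poset, a functor out of $\catstyle{Ref}(T)$ is the same as a monotone map, so (i) reduces to checking that $f^*T'$ refines $S$ and that $f^*$ respects refinement.

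For (i), write $f$ over $\alpha$, so the pullback construction sets $S_i' = \entry_i(S)$ when $\alpha(i) = *$ and $S_i' = \mode_{\alpha(i)}(T',T)$ when $\alpha(i) \neq *$. In the first case $\size(S_i') = \entry_i(S)$ trivially; in the second, $\size(\mode_{\alpha(i)}(T',T)) = \entry_{\alpha(i)}(T)$ because $T'$ refines $T$, and $\entry_{\alpha(i)}(T) = \entry_i(S)$ because $f$ lies over $\alpha$. Hence each $S_i'$ has size $\entry_i(S)$, so $f^*T' = (S_1',\dots,S_m')_{\profile(S)}$ is obtained from $S$ by replacing each entry with a nested tuple of the same size, i.e.\ it refines $S$. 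For monotonicity, if $T'' \twoheadrightarrow T' \twoheadrightarrow T$, then transitivity of refinement together with Definition \ref{definitionofrefinement} gives $\mode_j(T'',T) \twoheadrightarrow \mode_j(T',T)$ for every $j$, hence $S_i'' \twoheadrightarrow S_i'$ termwise; and since $P$-substitution preserves refinement (if $X_k'' \twoheadrightarrow X_k'$ for all $k$ then $(X_1'',\dots)_P \twoheadrightarrow (X_1',\dots)_P$, immediate from the recursive definitions of refinement and substitution), we conclude $f^*T'' \twoheadrightarrow f^*T'$.

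For (ii), $\id_S$ lies over the identity map, so $(\id_S)^*S' = (\mode_1(S',S),\dots,\mode_m(S',S))_{\profile(S)}$, and this equals $S'$ by the observation in Section \ref{refinementsection} that every refinement $S' \twoheadrightarrow S$ is the $\profile(S)$-substitution of its relative modes. For (iii), the key preliminary fact is that substitution and extraction of relative modes are mutually inverse: if $T$ has profile $Q$ and length $n$ and $(T_1,\dots,T_n)$ are nested tuples with $\size(T_j) = \entry_j(T)$, then $(T_1,\dots,T_n)_Q$ refines $T$ and $\mode_j\big((T_1,\dots,T_n)_Q,\,T\big) = T_j$. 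Granting this, writing $f$ over $\alpha$ and $g$ over $\beta$ (so $g \circ f$ lies over $\beta \circ \alpha$), one computes $g^*U'' = (T_1',\dots,T_n')_{\profile(T)}$ with $T_j'$ equal to $\entry_j(T)$ if $\beta(j) = *$ and to $\mode_{\beta(j)}(U'',U)$ otherwise; by the key fact the $j$th relative mode of $g^*U''$ over $T$ is exactly $T_j'$, so $f^*(g^*U'')$ has $i$th term $\entry_i(S)$ if $\alpha(i) = *$ and $T_{\alpha(i)}'$ otherwise. A three-case comparison on $\alpha(i)$ and $\beta(\alpha(i))$ — using $\entry_i(S) = \entry_{\alpha(i)}(T)$ whenever $\alpha(i) \neq *$ — shows this agrees termwise with $(g \circ f)^*U''$ computed directly from $\beta \circ \alpha$, hence the two pullbacks coincide.

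The main obstacle will not be conceptual but organizational: cleanly isolating and proving the ``substitution versus relative modes'' lemma for nested tuples, and keeping the length/profile indices straight so that the termwise comparisons in (i), (ii), and (iii) are airtight. Everything else is a direct unwinding of Definition \ref{definitionofrefinement}, the $P$-substitution definition, and the pullback construction.
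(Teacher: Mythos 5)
The paper states this as an \emph{Observation} and supplies no proof, so there is no argument of record to compare against; your verification of the three functoriality axioms is the natural one and is correct. The only ingredients that genuinely need to be isolated are the two you already flag — that $\mode_j(-,T)$ is monotone in refinements over $T$ and that relative modes invert $\profile(T)$-substitution (so that $\mode_j(g^*U'',T)=T_j'$ in step (iii)) — and both follow by direct induction on the recursive definitions of refinement and substitution, so the proposal is complete as a plan.
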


The key property of pullbacks is that the layout function of $f'$ is equal to that of $f$.

\begin{lemma}\label{pullbacklemma}
    Suppose
    \[ \begin{tikzcd} 
    S' \ar[r,"f'"] \ar[d, two heads] \arrow[dr, phantom, "\lrcorner", very near start] & T' \ar[d,two heads] \\
    S \ar[r,swap,"f"] & T
    \end{tikzcd} \]
    is a pullback square, where $f$ lies over $\alpha$. Let 
    \[f_i' : S_i' \to T\]
    denote the $i$th mode of $f'$ relative to $S$,
    and let 
    \[(L_f)^\flat = (s_1,\dots,s_m):(d_1,\dots,d_m).\]
    Then for any $1 \leq i \leq m$, we have 
    \[
    \coalesce(L_{f_i'}) = s_i:d_i.
    \]
\end{lemma}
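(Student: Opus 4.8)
The plan is to unwind the definition of the pullback $f'$ and the relative modes $f_i'$, reducing the statement to a computation about the layout function of each $L_{f_i'}$, and then invoke the characterization of $\coalesce$ via layout functions (Proposition~\ref{nestedcoalesceproposition}) together with the compact/column-major analysis that powers the standard representation. First I would fix $f:S \to T$ lying over $\alpha$, fix a refinement $T' \twoheadrightarrow T$, and write out $S' = f^*T'$ with relative flattening $(S_1',\dots,S_m')$ exactly as in the pullback construction: $S_i' = \entry_i(S)$ when $\alpha(i) = *$, and $S_i' = T_j' = \mode_j(T',T)$ when $\alpha(i) = j$. The map $f_i' = \mode_i(f',S)$ is, by Construction~\ref{definitionofrelativemodes} and the definition of the pullback, the composite $S_i' \to (S_1',\dots,S_m') \to T'$, which is precisely the $i$th component $f_i'$ appearing in the concatenation $(f_1',\dots,f_m')$ — that is, either the trivial map (if $\alpha(i) = *$) or the relative mode inclusion $\incl_j(T',T):T_j' \to T'$ (if $\alpha(i) = j$).

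The heart of the argument is then a case split on whether $\alpha(i) = *$. If $\alpha(i) = *$, then $S_i' = \entry_i(S)$ has size $s_i$ and $f_i'$ is the trivial map, so $L_{f_i'}$ has shape $\entry_i(S)$ and all strides zero; hence $\Phi_{L_{f_i'}}$ is the zero function, and by Proposition~\ref{nestedcoalesceproposition} (or directly from Construction~\ref{constructionoflayoutcoalesce}) $\coalesce(L_{f_i'}) = s_i:0 = s_i:d_i$, using that $d_i = 0$ when $\alpha(i) = *$ by the stride formula in Construction~\ref{layoutfromnestedtuplemorphism}. If $\alpha(i) = j \neq *$, then $f_i' = \incl_j(T',T):T_j' \to T'$; here I would observe that $L_{\incl_j(T',T)}$ is, by Example~\ref{definitionofreparenthesizationisomorphisms} applied to the column-major nature of mode inclusions composed with a reparenthesization, a layout whose flattened layout function is exactly the colexicographic embedding of $[0,\size(T_j'))$ into $[0,\size(T'))$ shifted by $\prod_{k<\alpha(i)} t_k = d_i$. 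More carefully: $\size(T_j') = t_j = s_i$ (since $\alpha(i) = j$ forces $\entry_i(S) = \entry_j(T)$, and $T_j'$ refines $\entry_j(T)$ so $\size(T_j') = \entry_j(T)$... wait, $\entry_j(T) = t_j$), and the layout encoded by the inclusion of the $j$th relative mode is column-major of shape $T_j'$ with its first stride multiplied up to $d_i$; coalescing a column-major layout of size $s_i$ with bottom stride $d_i$ yields $s_i:d_i$ by the column-major coalesce example following Construction~\ref{constructionofflatcoalesce}.

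The main obstacle I anticipate is the second case: pinning down precisely what flat layout $L_{\incl_j(T',T)}$ is, and verifying that its coalesce is $s_i:d_i$ rather than merely $s_i:\text{(some stride)}$. This requires carefully tracking the pointed map over which $\incl_j(T',T)$ lies — namely $k \mapsto k + \len_{<j}(T',T)$ composed appropriately — and then applying the stride formula of Construction~\ref{layoutfrommorphisminTuple} to see that the strides of $L_{\incl_j(T',T)}$ are $t_1'\cdots t_{k-1}'$ ranging over the flattened entries of $T'$ up to block $j$; the prefix product over all flattened entries of $T'$ strictly before block $j$ equals $t_1 \cdots t_{j-1} = d_i$ by compatibility of refinement with size, so the first stride is $d_i$ and the layout restricted to block $j$ is $d_i$ times a column-major layout of shape $T_j'$. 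Then Lemma~\ref{coalescelemma} (or the column-major coalesce example) collapses this to $\size(T_j'):d_i = t_j:d_i = s_i:d_i$. Once these identifications are in place, the conclusion follows; I would close by noting the statement is uniform in $i$ and both cases land on $s_i:d_i$.
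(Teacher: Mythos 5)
Your proposal is correct and follows essentially the same route as the paper: the same case split on $\alpha(i) = *$ versus $\alpha(i) = j$, the same identification of $f_i'$ as either the trivial map or the relative mode inclusion $\incl_j(T',T)$, and the same computation showing that the inclusion encodes $d_i$ times a column-major layout of shape $T_j'$ (via the prefix product $\prod_{j'<j} t_{j'} = d_i$), which coalesces to $s_i:d_i$. Your handling of the second case is, if anything, slightly more explicit than the paper's about the scaled-column-major structure.
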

\begin{proof}
    Suppose $1 \leq i \leq m$. If $\alpha(i) = *$, then $f_i'$ is the trivial map, so 
    \[
    L_{f_i'} = s_i : 0 = s_i : d_i.
    \]
    In particular, $\coalesce(L_{f_i'}) = s_i:0 = s_i:d_i$. Suppose next that $\alpha(i) = j \neq *$. By construction of $f'$, we have that 
    \[f_i' = \incl_j(T',T):T_j' \to T'.\]
    which lies over the map $\alpha'_i$ given by $t \mapsto \len_{<j}(T',T) + t$. For each $1 \leq t < \len(T'_j)$, we have $\alpha'_i(t) = \alpha'_i(t+1)$, so $L_{f_i'}$ is a column major layout with size $\size(T_j') = t_j = s_i$. This implies that $\coalesce(L_{f_i'})$ is a depth $0$ layout of the form 
    \[
    \coalesce(L_{f_i'}) = s_i : e
    \]
    for some integer $e \geq 0$. We claim that $e = d_i$. If we write $t_{j'}' = \entry_{j'}(T')$, then we have 
    \begin{align*}
    e = \entry_1(\stride(L_{f_i'})) & = \prod_{j' < \alpha_i'(1)} t_{j'}'\\
    & = \prod_{j' \leq  \len_{<j}(T',T)} t_{j'}'\\
    & = \prod_{j' < j} \size(T_{j'}')\\
    & = \prod_{j' < j} t_{j'}\\
    & = d_i .
    \end{align*}
\end{proof}

\begin{proposition}\label{pullbackproposition}
    If 
    \[ \begin{tikzcd} 
    S' \ar[r,"f'"] \ar[d, two heads] \arrow[dr, phantom, "\lrcorner", very near start] & T' \ar[d,two heads] \\
    S \ar[r,swap,"f"] & T
    \end{tikzcd} \]
    is a pullback square, then $\Phi_{L_f} = \Phi_{L_{f'}}$.
\end{proposition}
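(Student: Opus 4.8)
The plan is to reduce the nested statement to the flat case handled in Lemma~\ref{pullbacklemma}, using the observation (Observation~\ref{flatlayoutcompatibility}) that $(L_f)^\flat = L_{f^\flat}$ and that the layout function only depends on the flattening. Concretely, I would first recall that $\Phi_{L_f} = \Phi_{L_{f^\flat}}$ and $\Phi_{L_{f'}} = \Phi_{L_{f'^\flat}}$, so it suffices to compare the flat layouts. Writing $\shape(S) = P$ and $\len(S) = m$, the relative modes $S_i' = \mode_i(S',S)$ satisfy $S' = (S_1',\dots,S_m')_P$ (the $P$-substitution of the relative modes), so that $S'^\flat = S_1'^\flat \star \cdots \star S_m'^\flat$. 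Meanwhile, by the construction of the pullback, $f'$ factors as the reparenthesization $\id_{S'}^{(S_1',\dots,S_m')}$ followed by the concatenation $(f_1',\dots,f_m')$, where $f_i' = \mode_i(f',S)$; since reparenthesization isomorphisms do not change the flattening, $f'^\flat = (f_1' \star \cdots \star f_m')$ as tuple morphisms (using Lemma~\ref{compositiondistributesoverconcatenation}-style bookkeeping to see the flattened concatenation behaves correctly).

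Next, I would apply Proposition~\ref{concatenateinCproposition}: since the $f_i'$ have pairwise disjoint images, $L_{(f_1' \star \cdots \star f_m')} = L_{f_1'} \star \cdots \star L_{f_m'}$, so that $L_{f'^\flat} = L_{f_1'} \star \cdots \star L_{f_m'}$ (after applying the flattening isomorphisms, which are column-major and hence do not affect the layout function by Proposition~\ref{layoutfunctionofconcatenatedlayout} and the surrounding remarks). By Proposition~\ref{layoutfunctionofconcatenatedlayout}, the layout function $\Phi_{L_{f'^\flat}}$ is the composite of $\colex^{-1}$ with $\Phi_{L_{f_1'}} + \cdots + \Phi_{L_{f_m'}}$ via the colexicographic decomposition $[0,\size(L_{f'})) \cong \prod_i [0,\size(L_{f_i'}))$. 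Since coalescing preserves layout functions (Lemma~\ref{coalescelemma}, or rather its nested version Proposition~\ref{nestedcoalesceproposition}), we may replace each $\Phi_{L_{f_i'}}$ by $\Phi_{\coalesce(L_{f_i'})}$, and Lemma~\ref{pullbacklemma} tells us $\coalesce(L_{f_i'}) = s_i : d_i$ where $(L_f)^\flat = (s_1,\dots,s_m):(d_1,\dots,d_m)$.

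Finally, I would observe that $L_f^\flat = (s_1):(d_1) \star \cdots \star (s_m):(d_m)$ is itself the flat concatenation of the rank-one layouts $s_i:d_i$, so by the same Proposition~\ref{layoutfunctionofconcatenatedlayout} applied to $L_f$, its layout function is the composite of $\colex^{-1}$ with $\sum_i \Phi_{(s_i):(d_i)}$. The colexicographic decomposition involved is $[0,\size(L_f)) \cong \prod_i [0,s_i)$, and since $\size(L_{f_i'}) = \size(T_j') = t_j = s_i$ (the size of the relative mode matches the corresponding entry, which is exactly what makes $S'$ a refinement of $S$), this is the same decomposition that appeared above. Matching the two expressions termwise gives $\Phi_{L_{f'}} = \Phi_{L_f}$.

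The main obstacle I anticipate is the careful bookkeeping in the second paragraph: verifying that the flattening of the pullback morphism $f'$ really is (up to column-major reparenthesizations, which are layout-function-trivial) the flat concatenation $f_1' \star \cdots \star f_m'$ of its relative modes, and that the relevant colexicographic isomorphisms for $S'$ and for $S$ decompose compatibly through the $S_i'$. This is essentially an unwinding of the pullback construction combined with associativity of colex isomorphisms (used already in Proposition~\ref{layoutfunctionofconcatenatedlayout} and Proposition~\ref{relativecoalescedlayoutscharacterizedbylayoutfunction}), but it requires being precise about which reparenthesizations are inserted where. Everything else is a direct citation of prior results.
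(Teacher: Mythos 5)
Your proposal is correct and follows essentially the same route as the paper: both decompose the pullback $f'$ (through the reparenthesization isomorphism) into the concatenation of its relative modes $f_i'$, invoke Lemma~\ref{pullbacklemma} to identify $\coalesce(L_{f_i'}) = s_i{:}d_i$, and then reassemble. The only cosmetic difference is in the final assembly step, where you compare layout functions directly via the colexicographic decomposition of Proposition~\ref{layoutfunctionofconcatenatedlayout}, whereas the paper compares $\coalesce(L_{f'})$ with $\coalesce(L_f)$ and concludes via Proposition~\ref{nestedcoalesceproposition}; these are interchangeable.
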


\begin{proof}
    We begin by fixing notation. Let $m = \len(S)$, and let 
    \begin{align*}
    S^\flat & = (s_1,\dots,s_m),\\
    S_i' & = \mode_i(S',S),\\
    T_j' & = \mode_j(T',T),\\
    (L_f)^\flat & = (s_1,\dots,s_m):(d_1,\dots,d_m).
    \end{align*}
    Consider the reparenthesization isomorphism 
    \[
    \id_{(S_1',\dots,S_m')}^{S'}:(S_1',\dots,S_m') \to S'
    \]
    The composite of this map with $f'$ is the concatenation $(f_1',\dots,f_m')$ where $f_i'$ is the trivial map if $\alpha(i) = *$, and the relative mode inclusion
    \[
    \incl_i(T',T):S_i' = T_j' \to T'
    \]
    otherwise. Using Lemma \ref{pullbacklemma}, and the fact that $L_{f'} = L_{(f_1',\dots,f_m')}$, we compute
    \begin{align*}
    \coalesce(L_{f'}) & = \coalesce(L_{(f_1',\dots,f_m')})\\
    & = \coalesce((L_{f_1'},\dots,L_{f_m'}))\\
    & = \coalesce((\coalesce(L_{f_1'}),\dots,\coalesce(L_{f_m'}) ) )\\
    & = \coalesce((s_1,\dots,s_m):(d_1,\dots,d_m))\\
    & = \coalesce(L_f).
    \end{align*}
    By Proposition \ref{nestedcoalesceproposition}, we deduce that $\Phi_{L_{f'}} = \Phi_{L_f}$.
\end{proof}

\begin{construction}[Pushforwards]
    Suppose $f:S \to T$ is a nested tuple morphism lying over $\alpha$, and suppose $S' \twoheadrightarrow S$ is a refinement. Let 
    \[
    S_i' = \mode_i(S',S)
    \]
    denote the $i$th mode of $S'$ relative to $S$, and for any $1 \leq j \leq \len(T)$, set 
    \[
    T_j' = 
    \begin{cases} 
    \entry_j(T) & j \notin \Image(\alpha) \\
    S'_i & \alpha(i) = j.
    \end{cases}
    \]
    We define the {\it pushforward of} $S'$ {\it along} $f$ to be the nested tuple
    \[
    T' = f_*S' = \sub(T,(T_1',\dots,T_n')).
    \]
    For any $1 \leq i \leq m$, let 
    \[
    f_i':S_i' \to T' 
    \]
    be the trivial map if $\alpha(i) = *$, and the relative mode inclusion 
    \[
    \incl_j(T',T):S_i' = T_j' \to T'
    \]
    if $\alpha(i) = j$. The morphisms $f_1',\dots,f_m'$ have disjoint images, so we can form the concatenation 
    \[
    (f_1',\dots,f_m'):(S_1',\dots,S_m') \to T'.
    \]
    We define $f' = S'_*f$ to be the composite 
    \[ \begin{tikzcd}
    S' \ar[rrr,"\id_{S'}^{(S_1'{,}\dots{,}S_m')}"] & & &  (S_1',\dots,S_m') \ar[rr,"(f_1'{,}\dots{,}f_m')"] & & T'.
    \end{tikzcd} \]
    We refer to $f'$ as the pushforward of $f$ along $T$. We depict such a pushforward as 
    \[ \begin{tikzcd} 
    S' \ar[r,"f'"] \ar[d,two heads] & \arrow[dl, phantom, "\llcorner", very near start]T' \ar[d,two heads] \\
    S \ar[r,swap,"f"] & T\\
    \end{tikzcd} \]
\end{construction}

\begin{example}
    If $f:(64,32) \to (4,64,4,32)$ lies over $\alpha = (2,4)$, then we have a pushforward square 
    \[ \begin{tikzcd} 
    ((16,4),(16,2)) \ar[r,"f'"] \ar[d,two heads] & \arrow[dl, phantom, "\llcorner", very near start](4,(16,4),4,(16,2)) \ar[d, two heads] \\
    (64,32) \ar[r,swap,"f"] & (4,64,4,32)\\
    \end{tikzcd} \]
\end{example}

The key property of pullbacks is that the layout function of $f'$ is equal to that of $f$.

\begin{lemma}\label{pushforwardlemma}
    Suppose
    \[ \begin{tikzcd} 
    S' \ar[r,"f'"] \ar[d, two heads] & T' \ar[d,two heads]\arrow[dl, phantom, "\llcorner", very near start] \\
    S \ar[r,swap,"f"] & T
    \end{tikzcd} \]
    is a pushforward square, where $f$ lies over $\alpha$. Let 
    \[f_i' : S_i' \to T\]
    denote the $i$th mode of $f'$ relative to $S$,
    and let 
    \[(L_f)^\flat = (s_1,\dots,s_m):(d_1,\dots,d_m).\]
    Then for any $1 \leq i \leq m$, we have 
    \[
    \coalesce(L_{f_i'}) = s_i:d_i.
    \]
\end{lemma}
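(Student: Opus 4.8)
The plan is to follow the proof of Lemma \ref{pullbacklemma} almost verbatim, exploiting the fact that a pushforward square is structurally dual to a pullback square: after the reparenthesization isomorphism $\id_{S'}^{(S_1',\dots,S_m')}$, the morphism $f'$ is presented as the concatenation $(f_1',\dots,f_m')$, where each $f_i' : S_i' \to T'$ is either the trivial map (when $\alpha(i) = *$) or the relative mode inclusion $\incl_j(T',T) : S_i' = T_j' \to T'$ (when $\alpha(i) = j \neq *$). First I would fix the notation $S^\flat = (s_1,\dots,s_m)$, $S_i' = \mode_i(S',S)$, $T_j' = \mode_j(T',T)$, and $(L_f)^\flat = (s_1,\dots,s_m):(d_1,\dots,d_m)$, recalling from Construction \ref{layoutfromnestedtuplemorphism} that $d_i = 0$ when $\alpha(i) = *$ and $d_i = \prod_{j < \alpha(i)} \entry_j(T)$ otherwise.

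Then I would split into the two cases. When $\alpha(i) = *$ the map $f_i'$ is trivial, so $L_{f_i'} = s_i : 0$, which already equals $s_i : d_i$ since $d_i = 0$. When $\alpha(i) = j \neq *$ the map $f_i' = \incl_j(T',T)$ lies over $t \mapsto \len_{<j}(T',T) + t$, a consecutive inclusion, so $L_{f_i'}$ is the column-major layout on the shape $T_j'$ rescaled by its initial stride; hence $\coalesce(L_{f_i'})$ collapses to a depth-zero layout $s_i : e$, and the task reduces to identifying $e$ as that initial stride, namely $e = \prod_{j' \le \len_{<j}(T',T)} \entry_{j'}(T') = \prod_{j' < j} \size(T_{j'}')$ after regrouping the first $\len_{<j}(T',T)$ entries of $T'$ into its first $j-1$ relative modes.

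The one ingredient genuinely different from the pullback case — and the step I expect to require the most care — is establishing $\size(T_{j'}') = \entry_{j'}(T)$ for every $j'$, which must now be read off the \emph{pushforward} recipe rather than from a given refinement. Here $T_{j'}'$ is either $\entry_{j'}(T)$ itself, when $j' \notin \Image(\alpha)$ (so its size is $\entry_{j'}(T)$ trivially), or $T_{j'}' = S_{i'}' = \mode_{i'}(S',S)$ for the unique $i'$ with $\alpha(i') = j'$; in the latter case $S_{i'}'$ refines $\entry_{i'}(S)$, so $\size(S_{i'}') = \entry_{i'}(S)$, and the defining property of a nested tuple morphism gives $\entry_{i'}(S) = \entry_{\alpha(i')}(T) = \entry_{j'}(T)$. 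This simultaneously confirms that $T'$ refines $T$, so the relative mode inclusions appearing in the definition of $f'$ are legitimate. Substituting $\size(T_{j'}') = \entry_{j'}(T)$ into the expression for $e$ yields $e = \prod_{j' < j} \entry_{j'}(T) = d_i$, completing the case and hence the lemma. (As in the flat theory the equality is understood with the usual non-degeneracy convention on the modes, the only edge case being $s_i = 1$, where $\coalesce$ sends both sides to $1:0$.)
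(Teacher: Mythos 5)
Your proposal is correct and matches the paper's approach: the paper's proof of this lemma simply states that it is identical to the proof of Lemma \ref{pullbacklemma}, which is exactly the argument you reproduce (trivial map in the $\alpha(i)=*$ case, consecutive relative mode inclusion coalescing to $s_i:e$ with $e=\prod_{j'<j}\size(T_{j'}')$ otherwise). Your additional verification that $\size(T_{j'}')=\entry_{j'}(T)$ from the pushforward recipe via $\entry_{i'}(S)=\entry_{\alpha(i')}(T)$ is precisely the one detail the paper elides in deferring to the pullback case, and it is a welcome addition rather than a deviation.
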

\begin{proof}
The proof is identical to that of Lemma \ref{pullbacklemma}
\end{proof}

\begin{proposition}\label{pushforwardproposition}
    If 
    \[ \begin{tikzcd} 
    S' \ar[r,"f'"] \ar[d, two heads]  & T' \ar[d,two heads]\arrow[dl, phantom, "\llcorner", very near start] \\
    S \ar[r,swap,"f"] & T
    \end{tikzcd} \]
    is a pushforward square, then $\Phi_{L_f} = \Phi_{L_{f'}}$.
\end{proposition}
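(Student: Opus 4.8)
The plan is to mirror the proof of Proposition~\ref{pullbackproposition} essentially verbatim, since the pushforward construction is structurally identical to the pullback construction once one passes to relative modes. The key observation is that in both cases, precomposing $f'$ with the reparenthesization isomorphism $\id_{S'}^{(S_1',\dots,S_m')}$ yields the concatenation $(f_1',\dots,f_m')$ of the relative modes, and each $f_i'$ is either the trivial map (when $\alpha(i) = *$) or a relative mode inclusion $\incl_j(T',T)$ (when $\alpha(i) = j$). This is exactly the same situation as in the pullback case; only the definition of $T'$ differs, and that difference does not affect the relative modes $f_i'$ in a way that matters for the layout function computation.

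First I would invoke Lemma~\ref{pushforwardlemma}, which has already been stated (with proof ``identical to that of Lemma~\ref{pullbacklemma}''), to conclude that $\coalesce(L_{f_i'}) = s_i:d_i$ for each $1 \leq i \leq m$, where $(L_f)^\flat = (s_1,\dots,s_m):(d_1,\dots,d_m)$. Then I would compute, using the fact that $L_{f'} = L_{(f_1',\dots,f_m')}$ (which follows because $f'$ differs from $(f_1',\dots,f_m')$ only by a reparenthesization isomorphism, and such isomorphisms do not change the flattened layout), the chain
\[
\coalesce(L_{f'}) = \coalesce((L_{f_1'},\dots,L_{f_m'})) = \coalesce((\coalesce(L_{f_1'}),\dots,\coalesce(L_{f_m'}))) = \coalesce((s_1,\dots,s_m):(d_1,\dots,d_m)) = \coalesce(L_f),
\]
where the second equality uses that relative coalesce (or rather coalesce of a concatenation) is unaffected by coalescing the pieces first—this is the same step as in the pullback proof, justified by Proposition~\ref{layoutfunctionofconcatenatedlayout} together with Proposition~\ref{nestedcoalesceproposition}. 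Finally, applying Proposition~\ref{nestedcoalesceproposition} once more, $\coalesce(L_{f'}) = \coalesce(L_f)$ gives $\Phi_{L_{f'}} = \Phi_{L_f}$, which is the claim.

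I do not expect any genuine obstacle here, since this is a routine adaptation. The one point requiring mild care is verifying that the relative modes $f_i'$ of the pushforward square behave exactly as in the pullback square—specifically that when $\alpha(i) = j$, the map $f_i'$ is indeed the relative mode inclusion $\incl_j(T',T)$ with $S_i' = T_j'$, so that Lemma~\ref{pushforwardlemma} applies. This is immediate from the construction of the pushforward, which explicitly defines $T_j' = S_i'$ when $\alpha(i) = j$ and sets $f_i'$ to be that inclusion. The only other subtlety is the case $\alpha(i) = *$: here $f_i'$ is trivial and $L_{f_i'} = s_i:0$, and since $f$ is a morphism over $\alpha$ with $\alpha(i) = *$, we have $d_i = 0$ by Construction~\ref{layoutfrommorphisminTuple}, so $\coalesce(L_{f_i'}) = s_i:0 = s_i:d_i$ as needed. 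With these checks in place the computation goes through unchanged.
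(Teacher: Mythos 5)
Your proposal is correct and follows exactly the route the paper takes: the paper's own proof of this proposition simply states that it is identical to the proof of Proposition \ref{pullbackproposition}, and you have spelled out that identical argument, correctly invoking Lemma \ref{pushforwardlemma} and Proposition \ref{nestedcoalesceproposition}. The extra checks you note (the behavior of the relative modes $f_i'$ and the $\alpha(i)=*$ case) are exactly the points that make the transfer from the pullback case legitimate.
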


\begin{proof}
    The proof is identical to that of Proposition \ref{pullbackproposition}.
\end{proof}

\begin{observation}
    The pushforward construction defined above specifies a covariant functor
    \[ \begin{tikzcd} 
    \catstyle{Nest} \ar[r] & \catstyle{Cat} \\
    S \ar[r,mapsto] \ar[d,swap,"f"] & \catstyle{Ref}(S) \ar[d,"f_*"] & S'' \rightarrow S'\ar[d,mapsto]  \\
    T \ar[r,mapsto]  & \catstyle{Ref}(T)  & f_*S'' \rightarrow f_*S'   
    \end{tikzcd} \]
\end{observation}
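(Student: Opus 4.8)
The plan is to verify the three conditions defining a functor $\catstyle{Nest} \to \catstyle{Cat}$: that each $f_*$ is a well-defined functor $\catstyle{Ref}(S) \to \catstyle{Ref}(T)$, that $(\id_S)_* = \id_{\catstyle{Ref}(S)}$, and that $(g \circ f)_* = g_* \circ f_*$. Since each $\catstyle{Ref}(S) = \catstyle{Ref}_{/S}$ is a full subposet of the poset $\catstyle{Ref}$, a functor out of $\catstyle{Ref}(S)$ is the same data as an order-preserving map, and two such functors are equal as soon as they agree on objects; so throughout it suffices to argue at the level of nested tuples refining $S$ and the refinement relation between them. This is the pushforward analogue of the (unproven, stated) observation that pullback gives a contravariant functor $\catstyle{Nest}^{\op}\to\catstyle{Cat}$, and I would organize the argument in parallel to that one.

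First I would show $f_*$ lands in $\catstyle{Ref}(T)$. Writing $f$ over $\alpha: \langle \len(S) \rangle_* \to \langle \len(T) \rangle_*$ and taking $S' \twoheadrightarrow S$, recall $f_* S' = \sub\big(T,(T_1',\dots,T_n')\big)$ with $T_j' = \mode_i(S',S)$ when $\alpha(i) = j$ and $T_j' = \entry_j(T)$ otherwise. A short induction on $\depth(S)$ using Definition \ref{definitionofrefinement} gives $\size(\mode_i(S',S)) = \entry_i(S)$, and since $f$ lies over $\alpha$ we have $\entry_i(S) = \entry_{\alpha(i)}(T)$; hence $\size(T_j') = \entry_j(T)$ for every $j$. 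So $f_* S'$ is obtained from $T$ by replacing each entry by a nested tuple of the same size, which is exactly a refinement of $T$ by the recursive form of Definition \ref{definitionofrefinement}. I would also record the two bookkeeping identities I will reuse: for any refinement $X' \twoheadrightarrow X$, the reconstruction $X' = \big(\mode_1(X',X),\dots,\mode_m(X',X)\big)_{\profile(X)}$, and the relative-mode identity $\mode_j\big(\sub(T,(T_1',\dots,T_n')),\,T\big) = T_j'$ — both by induction on depth.

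The main obstacle is monotonicity of $f_*$. Given $S'' \twoheadrightarrow S' \twoheadrightarrow S$ I need $f_* S'' \twoheadrightarrow f_* S'$; both are substitutions into $T$ along $\profile(T)$, so it is enough to compare them slot by slot. At a slot $j \notin \Image(\alpha)$ both relative modes equal $\entry_j(T)$, so reflexivity applies; at $j = \alpha(i)$ I need $\mode_i(S'',S) \twoheadrightarrow \mode_i(S',S)$. This is the crux, and the key lemma is that relative modes are monotone under composition of refinements. I would prove it by induction on $\depth(S)$: when $\depth(S)=0$ we have $\mode_1(S'',S) = S'' \twoheadrightarrow S' = \mode_1(S',S)$ directly; when $\depth(S)>0$, Definition \ref{definitionofrefinement} forces $\depth(S'),\depth(S'')>0$, equality of ranks, and $\mode_k(S'') \twoheadrightarrow \mode_k(S') \twoheadrightarrow \mode_k(S)$ for each $k$, while the recursion $\mode_i(X',S) = \mode_{i-N}\big(\mode_j(X'),\mode_j(S)\big)$ (valid for $X'\in\{S',S''\}$, with $j$ and $N=\len_{<j}(S)$ depending only on $S$ and $i$) reduces the claim to the inductive hypothesis applied to $\mode_j(S')\twoheadrightarrow\mode_j(S)$, which has strictly smaller depth. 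Once every slot refines slot by slot, a parallel induction on $\depth(T)$ using Definition \ref{definitionofrefinement} shows the two substitutions into $T$ stand in the refinement relation, hence $f_* S'' \twoheadrightarrow f_* S'$; preservation of identities and composites is then automatic since $\catstyle{Ref}(T)$ is a poset.

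Finally I would check functoriality in $f$, now a routine computation with the bookkeeping identities. For $f = \id_S$ every slot lies in the image of the identity, so $(\id_S)_* S' = \big(\mode_1(S',S),\dots,\mode_m(S',S)\big)_{\profile(S)} = S'$. For composable $f: S \to T$ over $\alpha$ and $g: T \to U$ over $\beta$, unwinding gives $g_*(f_* S') = \sub\big(U,(U_1',\dots,U_p')\big)$ with $U_k' = \mode_j(f_* S', T)$ when $\beta(j)=k$; applying $\mode_j(f_* S',T) = T_j'$ turns this into $U_k' = \mode_i(S',S)$ whenever $(\beta\circ\alpha)(i)=k$, and $U_k' = \entry_k(U)$ otherwise — in the subcase $k = \beta(j)$ with $j\notin\Image(\alpha)$ one uses $T_j' = \entry_j(T) = \entry_k(U)$ since $g$ lies over $\beta$. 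This is precisely the defining prescription for $(g\circ f)_* S'$, so $(g\circ f)_* = g_*\circ f_*$ on objects, and the proof is complete.
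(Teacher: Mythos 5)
The paper states this as an \emph{Observation} and supplies no proof, so there is nothing to compare your argument against; on its own terms your verification is correct and complete. You correctly reduce everything to the object level (since each $\catstyle{Ref}(T)$ is a poset, functors into it are determined by, and well-defined by, order-preservation on objects), and you correctly isolate the one genuinely non-trivial point — monotonicity of relative modes under composition of refinements, i.e.\ $S'' \twoheadrightarrow S' \twoheadrightarrow S$ implies $\mode_i(S'',S) \twoheadrightarrow \mode_i(S',S)$ — which your induction on $\depth(S)$ via the recursion $\mode_i(X',S) = \mode_{i-N}(\mode_j(X'),\mode_j(S))$ handles properly, since the indices $j$ and $N$ depend only on $S$ and $i$. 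The size bookkeeping $\size(\mode_i(S',S)) = \entry_i(S) = \entry_{\alpha(i)}(T)$, the reconstruction identity, and the case analysis on $k \in \Image(\beta\circ\alpha)$ versus $k = \beta(j)$ with $j \notin \Image(\alpha)$ (where you correctly invoke $\entry_j(T) = \entry_k(U)$) are all sound, so $(g\circ f)_* = g_*\circ f_*$ follows as you claim.
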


\begin{observation}
    If $f:S \to T$ is an isomorphism of nested tuples, then 
    \[\begin{tikzcd}
        \catstyle{Ref}(T) \ar[r,"f^*"] & \catstyle{Ref}(S)
    \end{tikzcd}
    \]
    and
    \[\begin{tikzcd}
        \catstyle{Ref}(S) \ar[r,"f_*"] & \catstyle{Ref}(T)
    \end{tikzcd}
    \]
    are inverse isomorphisms of categories. Specifically, 
    \[
    (f^{-1})^* = f_* \hspace{0.2in} \text{and} \hspace{0.2in} (f^{-1})_* = f^*.
    \]
\end{observation}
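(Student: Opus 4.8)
The plan is to reduce the statement to the functoriality already recorded in the two Observations above — that $S \mapsto \catstyle{Ref}(S)$, $f \mapsto f^*$ is a contravariant functor $\catstyle{Nest}^{\text{op}} \to \catstyle{Cat}$, and $S \mapsto \catstyle{Ref}(S)$, $f \mapsto f_*$ is a covariant functor $\catstyle{Nest} \to \catstyle{Cat}$ — after first identifying $(f^{-1})^* = f_*$ and $(f^{-1})_* = f^*$ directly from the defining substitution formulas for pullback and pushforward. So the real content is the identification of the two constructions in the isomorphism case; the ``inverse isomorphisms'' assertion is then a formal consequence.

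First I would record that since $f$ is an isomorphism of nested tuples, the underlying pointed map $\alpha$ satisfies $\len(S) = \len(T) = m$ and $\alpha|_{\langle m \rangle}$ is a permutation of $\langle m \rangle$; in particular $\alpha(i) \neq *$ for every $i$ and $\Image(\alpha) = \langle m \rangle$. Consequently, in the pushforward construction applied to $f : S \to T$ and in the pullback construction applied to $f^{-1} : T \to S$ (which lies over $\alpha^{-1}$), only the non-basepoint branch of each case definition is ever invoked, and no codomain slot is ever filled by an ambient entry $\entry_j(T)$.

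Next I would compare the two constructions slot by slot. Fix a refinement $S' \twoheadrightarrow S$ with relative modes $S_i' = \mode_i(S',S)$. The pushforward $f_* S' = \sub(T,(T_1',\dots,T_m'))$ is built by setting $T'_{\alpha(i)} = S'_i$ for each $i$, with the constituent maps $f_i' = \incl_{\alpha(i)}(f_*S',T)$. The pullback $(f^{-1})^* S'$ along $f^{-1}$ is built by setting, for each $j$, the $j$th slot equal to $S'_i$ where $\alpha^{-1}(j) = i$, i.e. where $j = \alpha(i)$, with constituent maps the corresponding relative mode inclusions — an identical prescription. Hence $(f^{-1})^* S' = f_* S'$ as nested tuples, and since $\catstyle{Ref}(S)$ and $\catstyle{Ref}(T)$ are posets, the refinement morphism witnessing the result refines $T$ is unique, so $(f^{-1})^* = f_*$ as functors $\catstyle{Ref}(S) \to \catstyle{Ref}(T)$. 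The mirror-image comparison (exchanging the roles of $S$ and $T$, $f$ and $f^{-1}$) gives $(f^{-1})_* = f^*$. Then, invoking functoriality,
\[
f^* \circ f_* = (f^{-1})_* \circ f_* = (f^{-1}\circ f)_* = (\id_S)_* = \id_{\catstyle{Ref}(S)},
\qquad
f_* \circ f^* = (f\circ f^{-1})_* = (\id_T)_* = \id_{\catstyle{Ref}(T)},
\]
so $f_*$ and $f^*$ are mutually inverse isomorphisms of categories, and $(f^{-1})^* = f_*$, $(f^{-1})_* = f^*$ are precisely what was established.

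The main obstacle is entirely in the third paragraph: one must check carefully that, for an isomorphism, the relative-mode indexing used by the pullback along $\alpha^{-1}$ and by the pushforward along $\alpha$ genuinely coincide — that $\mode_i(S',S)$ plays the same structural role on both sides, that the reparenthesization isomorphisms $\id_{S'}^{(S_1',\dots,S_m')}$ appearing in each construction are the same, and that neither construction dips into its basepoint branch or its ``$j \notin \Image(\alpha)$'' branch. Once this bookkeeping is pinned down, the remainder is a purely formal deduction from the functoriality Observations.
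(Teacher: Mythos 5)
Your proof is correct. The paper states this result as an Observation and offers no proof of its own, so there is nothing to compare against; your argument supplies exactly the verification that was left implicit: since $f$ is an isomorphism the underlying map $\alpha$ restricts to a permutation, so neither the basepoint branch of the pullback along $f^{-1}$ nor the ``$j \notin \Image(\alpha)$'' branch of the pushforward along $f$ is ever taken, and both constructions assign to the $j$th slot of $T$ the relative mode $S'_{\alpha^{-1}(j)}$, giving $(f^{-1})^*S' = f_*S'$ on objects; agreement on morphisms is automatic because $\catstyle{Ref}(S)$ and $\catstyle{Ref}(T)$ are posets, and the inverse-isomorphism claim then follows formally from the functoriality observations. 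One minor simplification: your concern about matching the reparenthesization isomorphisms $\id_{S'}^{(S_1',\dots,S_m')}$ and the induced maps $f'$ is unnecessary, since the functors $f^*$ and $f_*$ record only the object assignment on these poset categories, not the auxiliary morphisms produced by the constructions.
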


\begin{observation}
    If $S_1$ and $S_2$ are nested tuples with $\flatten(S_1) = \flatten(S_2)$, then there is a canonical nested tuple isomorphism $S_1 \cong S_2$, and hence, a canonical isomorphism of categories
    \[
    \catstyle{Ref}(S_1) \cong \catstyle{Ref}(S_2).
    \]
\end{observation}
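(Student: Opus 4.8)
The plan is to produce the isomorphism $S_1 \cong S_2$ as a \emph{reparenthesization isomorphism} and then transport it to the refinement categories via the pullback/pushforward functors already constructed.

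First I would invoke Example \ref{definitionofreparenthesizationisomorphisms}: since $S_1^\flat = S_2^\flat$, there is a morphism $\id_{S_1}^{S_2} : S_1 \to S_2$ in $\catstyle{Nest}$ lying over the identity pointed map. The one point warranting a sentence of care is that this genuinely defines a $\catstyle{Nest}$-morphism, i.e.\ that the entry-matching condition of Definition \ref{definitionofNest} holds; this is immediate, since $\entry_i(S_1) = \entry_i(S_1^\flat) = \entry_i(S_2^\flat) = \entry_i(S_2)$ for all $i$. By the transitivity relation $\id_{S_2}^{S_1} \circ \id_{S_1}^{S_2} = \id_{S_1}^{S_1} = \id_{S_1}$ and its symmetric counterpart, $\id_{S_1}^{S_2}$ is an isomorphism with inverse $\id_{S_2}^{S_1}$. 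It is canonical in the sense that it is the unique morphism $S_1 \to S_2$ lying over the identity pointed map.

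Next I would apply the observation immediately preceding this one: for an isomorphism $f : S \to T$ in $\catstyle{Nest}$, the pullback and pushforward functors $f^* : \catstyle{Ref}(T) \to \catstyle{Ref}(S)$ and $f_* : \catstyle{Ref}(S) \to \catstyle{Ref}(T)$ are inverse isomorphisms of categories, with $(f^{-1})^* = f_*$ and $(f^{-1})_* = f^*$. Taking $f = \id_{S_1}^{S_2}$ then yields the desired isomorphism of categories $\catstyle{Ref}(S_1) \cong \catstyle{Ref}(S_2)$. Canonicity of this isomorphism follows from the canonicity of $\id_{S_1}^{S_2}$ together with the functoriality of the assignment $S \mapsto \catstyle{Ref}(S)$ recorded in the pullback/pushforward observations.

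I do not expect a substantive obstacle here: the statement is a direct corollary of the reparenthesization machinery of Example \ref{definitionofreparenthesizationisomorphisms} and the preceding observation on pullback/pushforward functors associated to isomorphisms in $\catstyle{Nest}$. The closest thing to a ``hard part'' is simply bookkeeping — confirming that $\id_{S_1}^{S_2}$ lies in $\catstyle{Nest}$ and that the two transported functors are mutually inverse — both of which are settled by the results cited above.
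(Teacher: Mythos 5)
Your proposal is correct and is exactly the argument the paper intends: the observation is a direct assembly of the reparenthesization isomorphism $\id_{S_1}^{S_2}$ of Example \ref{definitionofreparenthesizationisomorphisms} with the immediately preceding observation that $f^*$ and $f_*$ are inverse isomorphisms of categories when $f$ is an isomorphism. The paper records this as an unproved observation, and your bookkeeping (the entry-matching check and the identification of the inverse as $\id_{S_2}^{S_1}$) fills in precisely what is left implicit.
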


There is one more concept we need to specify, called {\it mutual refinements}. The importance of this concept will be come clear in Chapter \ref{computationschapter}, when we use this concept in our layout composition algorithm.

\begin{definition}
Suppose $T$ and $U$ are nested tuples. A {\it mutual refinement} of $(T,U)$ is a diagram of the form
\[ \begin{tikzcd} 
T' \ar[r,rightarrowtail] \ar[d, two heads] & U' \ar[d,two heads] \\
T & U
\end{tikzcd} \]
Explicitly, this is a pair of nested tuples $(T',U')$ such that \begin{enumerate}
    \item $T'$ refines $T$,
    \item $U'$ refines $U$, and
    \item $T'$ divides $U'$. 
\end{enumerate}
\end{definition}

In addition to the definition of mutual refinements, we need the following fact.

\begin{lemma}\label{mutualrefinementsofflattenings}
    Suppose $T$ and $U$ are nested tuples. Then there is a one-to-one correspondence between mutual refinements of $(T,U)$, and mutual refinements of $(T^\flat,U^\flat)$. 
\end{lemma}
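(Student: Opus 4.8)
The plan is to build the correspondence out of the relative-flattening operation of Section~\ref{refinementsection}, reducing to a bijection between refinements of $T$ and refinements of $T^\flat$ (and likewise for $U$), and then checking compatibility with the divisibility constraint. Write $\catstyle{Ref}(X)$ for the collection of nested tuples refining a nested tuple $X$, and recall that if $X' \twoheadrightarrow X$ with $\len(X) = m$ and $\profile(X) = P$, then $\flatten(X',X) = (\mode_1(X',X),\dots,\mode_m(X',X))$ refines $X^\flat$, while $X' = (\mode_1(X',X),\dots,\mode_m(X',X))_P$ recovers $X'$ by $P$-substitution. I would first show that $X' \mapsto \flatten(X',X)$ is a bijection $\catstyle{Ref}(X) \to \catstyle{Ref}(X^\flat)$, then apply it to both $T$ and $U$, and finally verify that the pair $(T',U')$ satisfies condition (3) of a mutual refinement precisely when $(\flatten(T',T),\flatten(U',U))$ does.

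For the refinement bijection, the forward map $\Psi_X(X') = \flatten(X',X)$ is well-defined by the Observation that $\flatten(X',X)$ refines $X^\flat$. For the inverse, given $X'' \twoheadrightarrow X^\flat$, note that $X^\flat = (x_1,\dots,x_m)$ is flat of length $m = \len(X)$, so by the definition of refinement $\rank(X'') = m$, and since $X^\flat$ is flat the relative modes of $X''$ over $X^\flat$ coincide with its ordinary top-level modes: $\mode_i(X'',X^\flat) = \mode_i(X'')$, of size $x_i = \entry_i(X)$. Hence $\Theta_X(X'') := (\mode_1(X''),\dots,\mode_m(X''))_P$ (with $P = \profile(X)$) replaces the $i$-th entry of $X$ by a nested tuple of size $\entry_i(X)$, so $\Theta_X(X'') \twoheadrightarrow X$. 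That $\Psi_X$ and $\Theta_X$ are mutually inverse is exactly the two recovery identities above: $\flatten(\Theta_X(X''),X) = X''$ because the relative modes of a $P$-substitution are the substituted tuples, and $\Theta_X(\flatten(X',X)) = X'$ because $X'$ is the $P$-substitution of its own relative modes. (One may package this as a bijection of fibers in the pullback square of Observation~\ref{nesttuplepullbacksquare}.)

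It then remains to transfer condition (3). The key point is that relative flattening only regroups, never alters, the flattening of a nested tuple: $\flatten(X',X)^\flat = (X')^\flat$, since the relative modes $\mode_i(X',X)$ partition the entries of $X'$ in order. Thus for $T' \in \catstyle{Ref}(T)$, $U' \in \catstyle{Ref}(U)$ with images $T'' = \flatten(T',T)$, $U'' = \flatten(U',U)$ we have $(T'')^\flat = (T')^\flat$ and $(U'')^\flat = (U')^\flat$; since $T'$ divides $U'$ exactly when $(T')^\flat$ divides $(U')^\flat$ in the sense of Definition~\ref{definitionofdivisibilityoftuples}, the divisibility condition on $(T',U')$ holds iff it holds on $(T'',U'')$. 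Hence the bijection $\Psi_T \times \Psi_U : \catstyle{Ref}(T) \times \catstyle{Ref}(U) \to \catstyle{Ref}(T^\flat) \times \catstyle{Ref}(U^\flat)$ restricts to the desired bijection on mutual refinements.

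The main obstacle is the careful handling of condition (3): one must confirm that ``$T'$ divides $U'$'' for nested tuples is genuinely detected at the level of flattenings — equivalently, that the nesting data of $T'$ on the portion it shares with $U'$ is redundant given the refinement relations $T' \twoheadrightarrow T$ and $U' \twoheadrightarrow U$ — since the regroupings performed by $\flatten(-,T)$ and $\flatten(-,U)$ over the a priori unrelated bases $T$ and $U$ could otherwise disrupt the correspondence. Everything else (well-definedness and mutual invertibility of $\Psi_X$ and $\Theta_X$) is a routine unwinding of the definitions of refinement, relative mode, relative flattening, and $P$-substitution from Section~\ref{refinementsection}.
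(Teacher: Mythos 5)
Your proof is correct and is essentially the paper's argument made explicit: the bijection $\Psi_X(X') = \flatten(X',X)$ and its inverse by $P$-substitution are exactly what the paper's pullbacks along the unflattening and flattening isomorphisms $\id_{T^\flat}^T$ and $\id_{T}^{T^\flat}$ compute to (see the example in the pullback construction, where $(\id_{S^\flat}^S)^*S'$ is identified with the relative flattening $(S_1',\dots,S_m')$). You additionally verify that condition (3) transfers, via $\flatten(X',X)^\flat = (X')^\flat$ and the observation that nested divisibility is detected on flattenings, a point the paper's proof leaves implicit.
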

\begin{proof}
    If $(T',U')$ is a mutual refinement of $(T,U)$, then pulling back along the unflattening isomorphisms $\id_{T^\flat}^T$ and $\id_{U^\flat}^U$ yields a mutual refinement 
    \[ \begin{tikzcd} 
 (\id_{T^\flat}^T)^*T' \ar[r,rightarrowtail] \ar[d, two heads] & (\id_{U^\flat}^U)^*U' \ar[d,two heads] \\
T^\flat & U^\flat
\end{tikzcd} \]
    of $(T^\flat, U^\flat)$. Conversely, if $((T^\flat)',(U^\flat)')$ is a mutual refinement of $T^\flat,U^\flat$, then pulling back along the flattening isomorphisms $\id^{T^\flat}_T$ and $\id^{U^\flat}_U$ yields a mutual refinement 
    \[ \begin{tikzcd} 
 (\id^{T^\flat}_T)^*(T^\flat)' \ar[r,rightarrowtail] \ar[d, two heads] & (\id^{U^\flat}_U)^*(U^\flat)' \ar[d,two heads] \\
T & U
\end{tikzcd} \]
    of $(T^\flat, U^\flat)$.
\end{proof}


\newpage 

\subsection{Operations on nested tuple morphisms}
Our next task is to develop an ``algebra of nested tuple morphisms". Since we have already developed such an ``algebra" for tuple morphisms, we can extend to the nested case by equipping the outputs of our various operations with an appropriate profile.

\subsubsection{Concatenate}

Next, we define a concatenation operation on nested tuple morphisms, which is compatible with concatenation of layouts, in that 
\[
L_{(f,g)} = (L_f,L_g).
\]
We concatenate nested tuple morphisms $f$ and $g$ by concatenating the domains of $f$ and $g$. In order for this to be well-defined, we need $f$ and $g$ to satisfy a disjointness condition, which we specify below.

\begin{definition}\label{definitionofdisjointimagesinD} Suppose $f$ and $g$ are nested tuple morphisms with the same codomain. We say $f$ and $g$ have {\it disjoint images} if $f^\flat$ and $g^\flat$ have disjoint images, as in Definition \ref{definitionofdisjointimagesinC}. 
\end{definition}

\begin{example}\label{exampleofdisjointimagesinD1}
    If 
    \[
    f:(3,(512,512)) \to (2,512,2,512)
    \]
    lies over $(*,2,4)$ and 
    \[
    g:(2,2) \to (2,512,2,512)
    \]
    lies over $(1,3)$, then $f$ and $g$ have disjoint images.
\end{example}

\begin{example}\label{exampleofdisjointimagesinD2} If 
\[f : (2,(32,64)) \to (32,(2,2,2),64)\]
lies over $\alpha = (3,1,5)$ and 
\[g:((2,2)) \to (32,(2,2,2),64)\] 
lies over $\beta = (2,4)$, then $f$ and $g$ have disjoint images.
\end{example}

\begin{construction}\label{concatenationofmorphismsinD}
    Suppose $f:S \to T$ and $g:U \to T$ are nested tuple morphisms lying over $\alpha$ and $\beta$, respectively, and that $f$ and $g$ have disjoint images. We define the {\it concatenation} of $f$ and $g$ to be the nested tuple morphism 
    \[
    (f,g) : (S,U) \to T
    \]
    with 
    \begin{align*}
        \flatten((f,g)) & = f^\flat \star g^\flat.
    \end{align*}
    More generally, if $f_i:S_i \to T$ are nested tuple morphisms for $1 \leq i \leq k$, and $f_1,\dots,f_k$ have pairwise disjoint images, then we define the {\it concatenation} 
    \[
    (f_1,\dots,f_k) : (S_1 , \dots , S_k) \to T.
    \]
    to be the nested tuple morphism with 
    \begin{align*}
        (f_1, \dots , f_k)^\flat & = f_1^\flat \star \cdots \star f_k^\flat.
    \end{align*}
    
\end{construction}

\begin{example} The concatenation of the morphisms $f$ and $g$ of Example \ref{exampleofdisjointimagesinD1} is the nested tuple morphism
\[(f,g): \left((3,(512,512)),(2,2)\right) \to (2,512,2,512)\]
lying over $\alpha \star \beta = (*,2,4,1,3)$.
\end{example}

\begin{example} The concatenation of the morphisms $f$ and $g$ of Example \ref{exampleofdisjointimagesinD2} is the nested tuple morphism
\[(f,g): ( (2,(32,64)),((2,2))) \to (32,(2,2,2),64)\]
lying over $\alpha \star \beta = (3,1,5,2,4)$.
\end{example}

\begin{example}
    If 
    \[
    f:(2,2) \to (2,3,5,2,3,5)
    \]
    lies over $\alpha = (1,4)$
    \[
    g:(3,3) \to (2,3,5,2,3,5)
    \]
    lies over $\beta = (2,5)$, and 
    \[
    h: (5,5) \to (2,3,5,2,3,5)
    \]
    lies over $\gamma = (3,6)$, then $f$, $g$ and $h$ have pairwise disjoint images, and the concatenation
    \[
    (f,g,h):((2,2),(3,3),(5,5)) \to (2,3,5,2,3,5)
    \]
    lies over $\alpha \star \beta \star \gamma = (1,4,2,5,3,6)$. 
\end{example}

\begin{example}
Suppose $f:S \to T$ is a nested tuple morphism, and suppose 
\[
S^\flat = (s_1,\dots,s_m).
\]
Recall from example \ref{definitionofentriesofnestedtuplemorphisms} that for any $1 \leq i \leq m$, there is a nested tuple morphism
\[
f_i:s_i \to T.
\]
called the $i$th entry of $f$. These morphisms have pairwise disjoint images, and the concatenation 
\[
(f_1,\dots,f_m):S^\flat \to T
\]
is the composite
\[
(f_1,\dots,f_m) = f \circ \id_{S^\flat}^S 
\]
of Example \ref{definitionofflatteningisomorphisms}
\end{example}

\begin{example}
Suppose $f:S \to T$ is a nested tuple morphism, and suppose 
\[
S = (S_1,\dots,S_r).
\]
Recall from example \ref{definitionofmodesofnestedtuplemorphisms} that for any $1 \leq i \leq r$, there is a nested tuple morphism
\[
f_i:S_i \to T.
\]
called the $i$th mode of $f$. These morphisms have pairwise disjoint images, and the concatenation 
\[
(f_1,\dots,f_r):S \to T
\]
is equal to $f$. In other words, every nested tuple morphism $f$ may be written as the concatenation of its modes:
\[f = (f_1,\dots,f_r).\]
\end{example}

\begin{proposition}\label{compatibilityofconcatenateinD}
    If $f_1,\dots,f_k$ are nested tuple morphisms with the same codomain and with pairwise disjoint images, then 
    \[
    L_{(f_1,\dots,f_k)} = (L_{f_1},\dots,L_{f_k}).
    \]
\end{proposition}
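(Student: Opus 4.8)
The plan is to reduce the claim about nested tuple morphisms to the already-established flat case, \Cref{concatenateinCproposition}. The key observation is that the layout $L_{(f_1,\dots,f_k)}$ is obtained from the flat layout $L_{(f_1,\dots,f_k)^\flat}$ by re-imposing the profile of the domain $(S_1,\dots,S_k)$, where $S_i = \domain(f_i)$. So the proof has two moving parts: matching the flattenings, and matching the profiles.

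First I would recall the definitions. By \Cref{concatenationofmorphismsinD}, the concatenation $(f_1,\dots,f_k):(S_1,\dots,S_k) \to T$ is the unique nested tuple morphism whose flattening is $(f_1,\dots,f_k)^\flat = f_1^\flat \star \cdots \star f_k^\flat$. By \Cref{layoutfromnestedtuplemorphism}, its encoded layout is $L_{(f_1,\dots,f_k)} = (L_{(f_1,\dots,f_k)^\flat})_P$ where $P = \profile((S_1,\dots,S_k))$. Meanwhile, by \Cref{definitionoflayoutconcatenation}, the concatenation of layouts $(L_{f_1},\dots,L_{f_k})$ has shape $(\shape(L_{f_1}),\dots,\shape(L_{f_k}))$ and stride $(\stride(L_{f_1}),\dots,\stride(L_{f_k}))$, where each $\shape(L_{f_i}) = S_i$ and each $\stride(L_{f_i})$ is congruent to $S_i$. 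Since a (nested) layout is uniquely determined by its flattening together with its profile (by the pullback square of Observation following \Cref{layoutfromflatlayoutandprofile}), it suffices to check these two agree for the two layouts in question.

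For the flattenings: using Observation \ref{flatlayoutcompatibility}, we have $(L_{(f_1,\dots,f_k)})^\flat = L_{(f_1,\dots,f_k)^\flat} = L_{f_1^\flat \star \cdots \star f_k^\flat}$. The flat morphisms $f_i^\flat$ have pairwise disjoint images by \Cref{definitionofdisjointimagesinD}, so \Cref{concatenateinCproposition} gives $L_{f_1^\flat \star \cdots \star f_k^\flat} = L_{f_1^\flat} \star \cdots \star L_{f_k^\flat} = (L_{f_1})^\flat \star \cdots \star (L_{f_k})^\flat$, again invoking Observation \ref{flatlayoutcompatibility} for each factor. On the other hand, by the Remark following \Cref{definitionoflayoutconcatenation} (the formula $L \star L' = (L,L')^\flat$, extended to $k$ factors), $(L_{f_1},\dots,L_{f_k})^\flat = (L_{f_1})^\flat \star \cdots \star (L_{f_k})^\flat$. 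So the flattenings match.

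For the profiles: the profile of $L_{(f_1,\dots,f_k)}$ is $P = \profile((S_1,\dots,S_k))$ by construction, and the profile of $(L_{f_1},\dots,L_{f_k})$ is $\profile((\shape(L_{f_1}),\dots,\shape(L_{f_k}))) = \profile((S_1,\dots,S_k))$ since $\shape(L_{f_i}) = \domain(f_i) = S_i$; these are equal. Having matched both flattening and profile, the two layouts coincide. The general case then follows from the $k=2$ case by associativity, but since I am working with the $k$-fold concatenation directly on both sides, no separate induction is really needed — the main (minor) obstacle is just bookkeeping the relation between the two distinct notions of layout concatenation (flat $\star$ versus nested $(-,-)$) and making sure the profile-substitution operation $(-)_P$ behaves transparently here, which it does because $P$ is precisely the concatenation profile. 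I would close by remarking that $L_{(f_1,\dots,f_k)}$ and $(L_{f_1},\dots,L_{f_k})$ are thus literally equal, not merely equal up to reparenthesization.
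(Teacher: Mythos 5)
Your proof is correct and follows essentially the same route as the paper: both arguments reduce to the flat case via Proposition \ref{concatenateinCproposition} and Observation \ref{flatlayoutcompatibility} to match the flattenings, and then pin down the nesting by a second check (the paper compares shapes, you compare profiles, which is equivalent data once the flattenings agree). The bookkeeping between flat $\star$ and nested $(-,-)$ concatenation is handled the same way in both.
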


\begin{proof}
By construction, we have 
\begin{align*}
 \shape((L_{f_1},\dots,L_{f_k})) & = (\shape(L_{f_1}),\dots,\shape(L_{f_k}))\\
& = \shape(L_{(f_1,\dots,f_k)}).
\end{align*}
and using Proposition \ref{concatenateinCproposition}, we have 
\begin{align*}
(L_{f_1},\dots,L_{f_k})^\flat & = L_{f_1}^\flat \star \cdots \star L_{f_k}^\flat\\
& = L_{f_1^\flat} \star \cdots \star L_{f_k^\flat} \\
& = L_{f_1^\flat \star \cdots \star f_k^\flat }\\
& = L_{(f_1,\dots, f_k)^\flat}\\
& = (L_{(f_1,\dots,f_k)})^\flat.
\end{align*}
\end{proof}

\subsubsection{Coalesce}

If $f$ is a nested tuple morphism, then we might  define $\coalesce(f)$ to be $\coalesce^\flat(f^\flat)$. Theoretically, this is a sound definition. However, in order to make our definitions compatible with the \texttt{cute} implementation, we make a small modification to our definition of $\coalesce(f)$. 

\begin{definition}\label{definitionofcoalesceofnestedtuplemorphism}
    Suppose $f:S \to T$ is a nested tuple morphism, and write 
    \[
    \coalesce^\flat(f^\flat):(s_1,\dots,s_m) \to (t_1,\dots,t_n).
    \]
    \begin{itemize}
    \item (Case 1): If $m >1$, we define 
    \[\coalesce(f) = \coalesce^\flat(f^\flat).\]
    \item (Case 2): If $m = 1$, we define $\coalesce(f)$ to be the composite 
    \[
    \begin{tikzcd} 
    s_1 \ar[r,swap,"(1)"] &  (s_1) \ar[rr,"\coalesce^\flat(f^\flat)"] & & (t_1,\dots,t_n).
    \end{tikzcd} 
    \]
    \item (Case 3): If $m = 0$, we define $\coalesce(f)$ to be the composite 
    \[
    \begin{tikzcd} 
    1 \ar[r,swap,"(*)"] &  () \ar[rr,"\coalesce^\flat(f^\flat)"] & & (t_1,\dots,t_n).
    \end{tikzcd} 
    \]
    \end{itemize} 
\end{definition}

\begin{example}
    If 
    \[f:((2,2),(3,3),(5,5)) \to (5,5,3,3,2,2)\]
    lies over $\alpha = (5,6,3,4,1,2)$, then 
    \[
    \coalesce(f):(4,9,25) \to (25,9,4)
    \]
    lies over $\alpha' = (3,2,1)$. 
\end{example}

\begin{proposition}\label{coalesceagreement}
    If $f:S \to T$ is a nested tuple morphism, then 
    \[
    \coalesce(L_f) = L_{\coalesce(f)}.
    \]
\end{proposition}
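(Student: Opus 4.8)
The plan is to reduce the nested statement to the flat case, which is already handled by Proposition \ref{coalesceproposition} (the characterization $\Phi_A = \Phi_B \Leftrightarrow \coalesce^\flat(A) = \coalesce^\flat(B)$) together with Construction \ref{constructionoflayoutcoalesce} (the definition of $\coalesce$ for nested layouts). First I would recall the two relevant compatibility facts already in place: $(L_f)^\flat = L_{f^\flat}$ from Observation \ref{flatlayoutcompatibility}, and that the nested $\coalesce(-)$ of a layout agrees with $\coalesce^\flat(-)$ of its flattening except for a bookkeeping adjustment on the rank-$0$ and rank-$1$ cases. The key observation is that Definition \ref{definitionofcoalesceofnestedtuplemorphism} defines $\coalesce(f)$ by exactly the same three-case split (on $m = \rank(\coalesce^\flat(f^\flat))$) that Construction \ref{constructionoflayoutcoalesce} uses for $\coalesce(L)$, so the two constructions are designed to match up.

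The main body of the argument is a case analysis on $m$, the rank of the flat layout $\coalesce^\flat(f^\flat) = (s_1,\dots,s_m):(d_1,\dots,d_m)$. In Case $m > 1$: on the layout side, $\coalesce(L_f) = \coalesce^\flat((L_f)^\flat) = \coalesce^\flat(L_{f^\flat})$; on the morphism side, $\coalesce(f) = \coalesce^\flat(f^\flat)$, and by Proposition \ref{coalesceinCproposition}(2) we have $L_{\coalesce^\flat(f^\flat)} = \coalesce^\flat(L_{f^\flat})$ — wait, I should be careful: the excerpt's Proposition \ref{coalesceinCproposition} is stated for the (flat) category $\catstyle{Tuple}$ and gives $L_{\coalesce(f)} = \coalesce(L_f)$ for \emph{tuple} morphisms, where there $\coalesce$ means the flat coalesce $\coalesce^\flat$ on layouts. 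So for a nested $f$, applying that result to $f^\flat$ gives $L_{\coalesce^\flat(f^\flat)} = \coalesce^\flat(L_{f^\flat}) = \coalesce^\flat((L_f)^\flat) = \coalesce(L_f)$, the last equality by Construction \ref{constructionoflayoutcoalesce} in the $m>1$ branch. In Case $m = 1$: $\coalesce(f)$ is the composite $s_1 \xrightarrow{(1)} (s_1) \xrightarrow{\coalesce^\flat(f^\flat)} T$, whose flattening is $\coalesce^\flat(f^\flat)$ but whose domain profile is $*$ rather than $(*)$; hence $L_{\coalesce(f)} = (L_{\coalesce^\flat(f^\flat)})_*$, a depth-$0$ layout $s_1:d_1$. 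This equals $\coalesce(L_f)$ by the $m=1$ branch of Construction \ref{constructionoflayoutcoalesce}. In Case $m = 0$: $\coalesce(f)$ has domain $1$, flattening the empty tuple morphism, so $L_{\coalesce(f)} = 1:0 = \coalesce(L_f)$ by the $m=0$ branch.

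The proof is essentially a matching of definitions, so I do not expect a genuine obstacle — the only thing requiring care is that the rank $m$ appearing in Definition \ref{definitionofcoalesceofnestedtuplemorphism} (the rank of $\coalesce^\flat(f^\flat)$) is literally the same integer as the rank $m$ appearing in Construction \ref{constructionoflayoutcoalesce} applied to $L = L_f$, since $\coalesce^\flat((L_f)^\flat) = \coalesce^\flat(L_{f^\flat}) = L_{\coalesce^\flat(f^\flat)}$ by Proposition \ref{coalesceinCproposition}(2) (applied to $f^\flat$). Once that identification is made explicit, the three cases fall out immediately. I would write the proof as: establish $\coalesce^\flat(L_{f^\flat}) = L_{\coalesce^\flat(f^\flat)}$ via Proposition \ref{coalesceinCproposition}, note this is $\coalesce^\flat((L_f)^\flat)$ via Observation \ref{flatlayoutcompatibility}, then run the three-way case split comparing Definition \ref{definitionofcoalesceofnestedtuplemorphism} with Construction \ref{constructionoflayoutcoalesce} entry by entry. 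The cleanest single line, valid in all three cases, is that both $\coalesce(L_f)$ and $L_{\coalesce(f)}$ are obtained from the flat layout $\coalesce^\flat((L_f)^\flat)$ by applying the same rank-dependent profile adjustment, so they coincide.
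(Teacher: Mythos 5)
Your proposal is correct and follows essentially the same route as the paper's proof: both arguments reduce to the flat identity $L_{\coalesce^\flat(f^\flat)} = \coalesce^\flat(L_{f^\flat})$ from Proposition \ref{coalesceinCproposition}, combine it with $(L_f)^\flat = L_{f^\flat}$, and then run the same three-way case split on the rank $m$ of $\coalesce^\flat(f^\flat)$ to match Definition \ref{definitionofcoalesceofnestedtuplemorphism} against Construction \ref{constructionoflayoutcoalesce}. Your parenthetical care about the meaning of $\coalesce$ in the $\catstyle{Tuple}$ setting, and the observation that the integer $m$ is the same on both sides, are exactly the points the paper's proof relies on implicitly.
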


\begin{proof}
Let's again write 
    \[ \begin{tikzcd}
    (s_1,\dots,s_m) \ar[rr,"\coalesce^\flat(f^\flat)"] \ar[rr,swap,"\alpha"] & & (t_1,\dots,t_n).
    \end{tikzcd} \]
    There are three cases to consider. 
    \begin{itemize}
        \item (Case 1): Suppose $m > 1$. Then 
        \begin{align*}
        L_{\coalesce(f)} & = L_{\coalesce^\flat(f^\flat)}\\
        & = \coalesce^\flat(L_{f^\flat})\\
        & = \coalesce((L_f)^\flat)\\
        & = \coalesce(L_f).
    \end{align*}
    \item (Case 2): Suppose $m = 1$. Then 
    \begin{align*}
        L_{\coalesce(f)} & = s_1:t_1 \dots,t_{\alpha(1)-1}\\
        & = \coalesce((s_1):(t_1\cdots 
        t_{\alpha(1)-1}))\\
        &= \coalesce(L_{\coalesce^\flat(f^\flat)})\\
        & = \coalesce(\coal^\flat(L_{f^\flat}))\\
        & = \coalesce((L_f)^\flat)\\
        & = \coalesce(L_f).
    \end{align*}
    \item (Case 3): Suppose $m = 0$. Then 
    \begin{align*}
        L_{\coalesce(f)} & = 1:0 \\
        & = \coalesce(():())\\
        &= \coalesce(L_{\coalesce^\flat(f^\flat)})\\
        & = \coalesce(\coal^\flat(L_{f^\flat}))\\
        & = \coalesce((L_f)^\flat)\\
        & = \coalesce(L_f).
    \end{align*}
    \end{itemize}

\end{proof}

\subsubsection{Complement}

In this section, we define the notion of complementary nested tuple morphisms.

\begin{definition} \label{definitionofcomplementarymorphismsinD}
Suppose $f:S \to T$ and $g:U \to T$ are nested tuple morphisms with disjoint images. We say $g$ is a {\it complement} of $f$ if
\[
(f,g):(S,U) \to T
\]
is an isomorphism.
\end{definition}

\begin{remark}If $f:S \to T$ and $g:U \to T$ are nested tuple morphisms, then $g$ is a complement of $f$ if and only if $g^\flat$ is a complement of $f^\flat$, since $(f,g)^\flat = f^\flat \star g^\flat$. 
\end{remark}

\begin{proposition}\label{complementofmorphisminDproposition}
    If $f:S \to T$ is a nested tuple morphism and $g:U \to T$ is a complement of $f$, then $L_{g}$ is a $\size(T)$-complement of $L_{f}$.
\end{proposition}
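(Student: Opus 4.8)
The plan is to reduce the statement entirely to the flat case via the flattening functor $(-)^\flat : \catstyle{Nest} \to \catstyle{Tuple}$. First I would recall that $g$ is a complement of $f$ if and only if $g^\flat$ is a complement of $f^\flat$; this is the remark following Definition \ref{definitionofcomplementarymorphismsinD}, and it follows from the identity $(f,g)^\flat = f^\flat \star g^\flat$ of Construction \ref{concatenationofmorphismsinD} together with the characterization of isomorphisms. Consequently, Proposition \ref{complementofmorphisminCproposition} applies to the tuple morphisms $f^\flat : S^\flat \to T^\flat$ and $g^\flat : U^\flat \to T^\flat$, and it tells us that $L_{g^\flat}$ is a $\size(T^\flat)$-complement of $L_{f^\flat}$; explicitly, $L_{f^\flat} \perp L_{g^\flat}$ and $\size(L_{f^\flat}) \cdot \size(L_{g^\flat}) = \size(T^\flat)$.

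Next I would transport this conclusion back to the nested layouts $L_f$ and $L_g$. By Observation \ref{flatlayoutcompatibility} we have $(L_f)^\flat = L_{f^\flat}$ and $(L_g)^\flat = L_{g^\flat}$. Then Lemma \ref{flatteningcomplementationlemma} gives $L_f \perp L_g$, since $A \perp B$ is equivalent to $A^\flat \perp B^\flat$. For the size condition, I would use that the size of a nested tuple (hence of a layout) equals the size of its flattening: $\size(L_f) = \size((L_f)^\flat) = \size(L_{f^\flat})$, similarly for $g$, and $\size(T) = \size(T^\flat)$. Combining these, $\size(L_f) \cdot \size(L_g) = \size(T)$, so by the definition of a $\size(T)$-complement of a layout, $L_g$ is a $\size(T)$-complement of $L_f$.

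I do not expect any serious obstacle here: the statement is precisely the flat Proposition \ref{complementofmorphisminCproposition} pushed through the (retractive) flattening functor, and all the required ingredients — the equivalence of nested and flat complementarity, the compatibility $(L_f)^\flat = L_{f^\flat}$, and the insensitivity of $\size$ to nesting — are already available in the excerpt. The only point demanding a little care is the size bookkeeping, namely verifying $\size(T) = \size(T^\flat)$ and $\size(L_f) = \size(L_{f^\flat})$, both of which are immediate from the definition of the size of a nested tuple as the size of its flattening.
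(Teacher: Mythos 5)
Your proposal is correct and follows exactly the paper's route: reduce to the flat case via $(L_f)^\flat = L_{f^\flat}$ and $(L_g)^\flat = L_{g^\flat}$ (Observation \ref{flatlayoutcompatibility}), invoke Lemma \ref{flatteningcomplementationlemma} to transfer complementarity, and apply Proposition \ref{complementofmorphisminCproposition}. You merely spell out the size bookkeeping that the paper leaves implicit.
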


\begin{proof}
    Observation \ref{flatlayoutcompatibility} implies that
    \begin{align*} (L_{f})^\flat & = L_{f^\flat}\text{, and }\\
    (L_{g})^\flat & = L_{g^\flat} 
    \end{align*}
    and Lemma \ref{flatteningcomplementationlemma} allows us to reduce to the flat case (Proposition \ref{complementofmorphisminCproposition}). 
\end{proof}

\begin{construction}\label{complementofmorphisminD}
    Suppose $f:S \to T$ is a nested nested tuple morphism. We define the {\it complement of }$f$ to be the composite 
    \[ \begin{tikzcd} 
    U \ar[dr,swap,"(f^\flat)^c"] \ar[rr,"f^c"] & & T \\
    & T^\flat \ar[ur,swap,"\id_{T^\flat}^T"]
    \end{tikzcd} \]
    where $(f^\flat)^c$, is as defined in Construction \ref{complementofmorphisminC}, and $\id_{T^\flat}^T:T^\flat \cong T$ is the unflattening isomorphism.
\end{construction}

\begin{example}
    The complement of the nested tuple morphism
    \[\begin{tikzcd} ((2,2),(5,5)) \ar[rr,"f"] \ar[rr,swap,"(1{,}4{,}2{,}5)"]& &  ((2,5,7),(2,5,7))\end{tikzcd} \] 
    is
    \[\begin{tikzcd} (7,7) \ar[rr,"f^c"] \ar[rr,swap,"(3{,}6)"]& &  ((2,5,7),(2,5,7)).\end{tikzcd} \] 
\end{example}

\begin{proposition}
    Suppose $f:S \to T$ and $g:U \to T$ are nested tuple morphisms. If $f$ is injective and $g$ is a complement of $f$, then $L_{g}$ is a $\size(T)$-complement of $L_f$.
\end{proposition}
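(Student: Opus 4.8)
The plan is to reduce this statement to the flat case, which was already handled in Proposition \ref{complementofmorphisminCproposition}, by exploiting the compatibility between the nested constructions and their flat counterparts. The key facts are: (1) a nested tuple morphism $g$ is a complement of $f$ if and only if $g^\flat$ is a complement of $f^\flat$ (as noted in the remark following Definition \ref{definitionofcomplementarymorphismsinD}, since $(f,g)^\flat = f^\flat \star g^\flat$); (2) the flattening functor is compatible with the encoded layouts, i.e. $(L_f)^\flat = L_{f^\flat}$ and $(L_g)^\flat = L_{g^\flat}$ by Observation \ref{flatlayoutcompatibility}; and (3) injectivity of $f$ is equivalent to injectivity of $f^\flat$, so $f^\flat$ is complementable in the sense of Definition \ref{definitionofcomplementabletuplemorphism}.

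First I would observe that since $g$ is a complement of $f$, the morphism $g^\flat$ is a complement of $f^\flat$ in the sense of Definition \ref{definitionofcomplementarymorphismsinC}. Then I would invoke Proposition \ref{complementofmorphisminCproposition}, which asserts that $L_{g^\flat}$ is a $\size(T^\flat)$-complement of $L_{f^\flat}$. Since $\size(T^\flat) = \size(T)$, this gives that $L_{g^\flat}$ is a $\size(T)$-complement of $L_{f^\flat}$. Finally, using the identifications $(L_f)^\flat = L_{f^\flat}$ and $(L_g)^\flat = L_{g^\flat}$ from Observation \ref{flatlayoutcompatibility}, together with Lemma \ref{flatteningcomplementationlemma} (which says $A \perp B \Leftrightarrow A^\flat \perp B^\flat$) and the fact that $\size(L_g) = \size(L_{g^\flat})$ and $\size(L_f) = \size(L_{f^\flat})$, I would conclude that $L_g$ is a $\size(T)$-complement of $L_f$.

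In fact this is essentially a restatement of Proposition \ref{complementofmorphisminDproposition}, which already establishes that if $g$ is \emph{any} complement of $f$ (not necessarily with $f$ injective), then $L_g$ is a $\size(T)$-complement of $L_f$. So the present proposition follows immediately from Proposition \ref{complementofmorphisminDproposition}; the injectivity hypothesis on $f$ is not actually needed for the conclusion, and is presumably included here only to match the hypotheses under which one typically speaks of the canonical complement $f^c$ being well-defined (Construction \ref{complementofmorphisminD}). The proof I would write is therefore simply: ``This follows from Proposition \ref{complementofmorphisminDproposition}, taking $g$ to be the given complement of $f$.''

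The main obstacle here is essentially nonexistent — this is a bookkeeping statement rather than a substantive one. If anything, the only subtlety worth a sentence of care is making sure the various notions of ``size'' line up ($\size(T) = \size(T^\flat)$, $\size(L_f) = \size(L_{f^\flat})$, and $\size(L_g) = \size(L_{g^\flat})$, all immediate from the fact that flattening preserves the underlying multiset of entries), and that one is applying the correct earlier proposition (Proposition \ref{complementofmorphisminDproposition} for the nested case, which itself reduces to Proposition \ref{complementofmorphisminCproposition}). No genuinely hard step arises.
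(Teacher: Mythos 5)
Your proof is correct and matches the paper's own argument exactly: the paper also reduces to the flat case via $(L_f)^\flat = L_{f^\flat}$, $(L_g)^\flat = L_{g^\flat}$, Lemma \ref{flatteningcomplementationlemma}, and Proposition \ref{complementofmorphisminCproposition}. Your side observation is also accurate — this proposition duplicates Proposition \ref{complementofmorphisminDproposition}, and the injectivity hypothesis plays no role in the argument.
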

\begin{proof}
    This follows from Proposition \ref{complementofmorphisminCproposition} and Lemma \ref{flatteningcomplementationlemma} since 
    \begin{align*}
    (L_f)^\flat & = L_{f^\flat}\\
    (L_{g})^\flat & = L_{g^\flat}.
    \end{align*}
\end{proof}

\begin{proposition}\label{coalescedlayoutoftuplemorphismcomplement}
    If $f:S \to T$ is an injective nested tuple morphism, then 
    \[
    \coalesce(L_{f^c}) = \comp(L_f,\size(T)).
    \]
\end{proposition}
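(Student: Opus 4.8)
The plan is to reduce the nested statement to the flat statement already established as Proposition~\ref{coalescedlayoutoftuplemorphismcomplement} (the flat version in \Cref{subsubsection.complement}, which asserts $\coalesce^\flat(L_{f^c}) = \comp^\flat(L_f,\size(T))$ for injective tuple morphisms). The key observations are that flattening is compatible with both the encoding construction and the complement construction: by Observation~\ref{flatlayoutcompatibility} we have $(L_f)^\flat = L_{f^\flat}$, and by Construction~\ref{complementofmorphisminD} the complement $f^c$ is defined as $\id_{T^\flat}^T \circ (f^\flat)^c$, so that $(f^c)^\flat = (f^\flat)^c$ (post-composing with a reparenthesization isomorphism does not change the underlying tuple morphism, cf.\ Example~\ref{definitionofreparenthesizationisomorphisms} and the compatibility of flattening with composition). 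Consequently $(L_{f^c})^\flat = L_{(f^c)^\flat} = L_{(f^\flat)^c}$.

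First I would record that $f$ is injective if and only if $f^\flat$ is injective (immediate from the definition of $\catstyle{Nest}$ morphisms as tuple morphisms on flattenings), so the flat-level hypothesis is satisfied. Next I would compute
\[
\Phi_{L_{f^c}} = \Phi_{(L_{f^c})^\flat} = \Phi_{L_{(f^\flat)^c}} = \Phi_{\coalesce^\flat(L_{(f^\flat)^c})} = \Phi_{\comp^\flat(L_{f^\flat},\size(T))},
\]
using the definition of the layout function of a nested layout (it equals that of its flattening), the identity above, Lemma~\ref{coalescelemma}, and the flat complement statement. Since $\size(T) = \size(T^\flat)$ and $L_{f^\flat} = (L_f)^\flat$, this says $\Phi_{L_{f^c}} = \Phi_{\comp^\flat((L_f)^\flat,\size(T))}$.

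Then I would invoke Definition~\ref{definitionoflayoutcomplements}, which defines $\comp(L_f,\size(T)) = \coalesce(\comp^\flat((L_f)^\flat,\size(T)))$. Applying $\coalesce$ to both sides of the layout-function equality and using Proposition~\ref{nestedcoalesceproposition} (two layouts with the same layout function have the same coalesce), we get
\[
\coalesce(L_{f^c}) = \coalesce\bigl(\comp^\flat((L_f)^\flat,\size(T))\bigr) = \comp(L_f,\size(T)),
\]
which is the claim. The only point requiring a little care is the bookkeeping identity $(f^c)^\flat = (f^\flat)^c$, i.e.\ that the unflattening isomorphism post-composed in Construction~\ref{complementofmorphisminD} is invisible at the level of underlying tuple morphisms; I expect this to be the main (though minor) obstacle, and it follows from the fact that $\id_{T^\flat}^T$ lies over the identity pointed map, so composing with it leaves the covering pointed map — hence the tuple morphism — unchanged. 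Everything else is a chain of already-established compatibilities.
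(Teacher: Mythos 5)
Your proposal is correct and follows essentially the same route as the paper's proof: both reduce to the flat statement via the observation that $f^c$ is $(f^\flat)^c$ post-composed with a reparenthesization isomorphism (so the encoded layouts agree), then apply $\coalesce$ and the definition $\comp(L_f,\size(T)) = \coalesce(\comp^\flat((L_f)^\flat,\size(T)))$. Your detour through layout functions and Proposition~\ref{nestedcoalesceproposition} is slightly more verbose than the paper's direct application of $\coalesce(-)$ to both sides of the flat identity, but it is the same argument.
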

\begin{proof} Since $f^c$ is obtained from $(f^\flat)^c$ by post-composing with a reparenthesization isomorphism, it follows that 
\[
L_{f^c} = L_{(f^\flat)^c}
\]
so by Proposition \ref{coalescedlayoutoftuplemorphismcomplement}, it follows that 
\[
\coal^\flat(L_{f^c})  = \comp^\flat(L_f,\size(T)). 
\]
Applying $\coal(-)$ to both sides yields the result. 
\end{proof}

\subsubsection{Composition}

We can use the realization functor of Section \ref{realizationofnestedtuplemorphisms} to prove that composition of nested tuple morphisms is compatible with composition of the associated layouts.
\begin{theorem}\label{compatibilityofcompositioninD}
    If $f$ and $g$ are non-degenerate composable nested tuple morphisms, then 
    \[
    L_{g\circ f} = L_{g} \circ L_{f}.
    \]
\end{theorem}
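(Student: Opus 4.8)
The plan is to verify that $L_g \circ L_f$, as characterized by the three conditions of Definition \ref{definitionoflayoutcomposition} (with $A = L_f$ and $B = L_g$), must equal $L_{g \circ f}$; this is the nested analogue of the proof of Corollary \ref{compatibilityofcompositioninC}. Write $f : S \to T$ and $g : T \to U$, so that $g \circ f : S \to U$. I would check the three defining properties in turn and then invoke the uniqueness clause of Definition \ref{definitionoflayoutcomposition}, which is justified by Proposition \ref{relativecoalescedlayoutscharacterizedbylayoutfunction}.

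First, since $\domain(g \circ f) = S = \domain(f)$ and $\shape(L_h) = \domain(h)$ for any nested tuple morphism $h$ (Construction \ref{layoutfromnestedtuplemorphism}), we have $\shape(L_{g \circ f}) = \shape(L_f)$, so in particular $\shape(L_{g \circ f})$ refines $\shape(L_f)$, which is condition (1). Second, if $f$ lies over $\alpha$ and $g$ lies over $\beta$, then $g \circ f$ lies over $\beta \circ \alpha$, and non-degeneracy of $f$ forces $g \circ f$ to be non-degenerate: if $\entry_i(S) = 1$ then $\alpha(i) = *$, hence $(\beta \circ \alpha)(i) = \beta(*) = *$. Thus $L_{g \circ f}$ is a non-degenerate layout, and since a layout is coalesced over its own shape exactly when it is non-degenerate (see the discussion following Definition \ref{definitionofcoalescedoverSbar}), $L_{g \circ f}$ is coalesced over $\shape(L_f) = S$, which is condition (2).

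For condition (3), I would use the realization functor $| \cdot | : \catstyle{Nest} \to \catstyle{FinSet}$. By functoriality $|g \circ f| = |g| \circ |f|$, while Lemma \ref{nestedlayoutfunctionlemma} identifies $|g \circ f| = \Phi_{L_{g \circ f}}^{\size(U)}$, $|g| = \Phi_{L_g}^{\size(U)}$, and $|f| = \Phi_{L_f}^{\size(T)}$. Since $\shape(L_g) = T$ we have $\size(T) = \size(L_g)$, so postcomposing with the inclusion $[0,\size(U)) \subset \mathbb{Z}$ yields $\Phi_{L_{g \circ f}} = \Phi_{L_g} \circ \Phi_{L_f}^{\size(L_g)}$. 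Combining (1)--(3) with the uniqueness of composites gives $L_{g \circ f} = L_g \circ L_f$.

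I do not expect a serious obstacle here; the only point demanding genuine care is condition (2), where one must not merely match layout functions but ensure the candidate is coalesced over the correct shape, and this is exactly where the non-degeneracy hypotheses are used. An alternative route would be to flatten everything via Observation \ref{flatlayoutcompatibility} and reduce to Corollary \ref{compatibilityofcompositioninC}, but that would still require relating nested and flat composition, so the realization-functor argument above is the cleaner path.
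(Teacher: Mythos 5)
Your proposal is correct and follows essentially the same route as the paper's proof: verify the three conditions of Definition \ref{definitionoflayoutcomposition}, using the realization functor and Lemma \ref{nestedlayoutfunctionlemma} for the layout-function identity and non-degeneracy of $g \circ f$ for the coalescing condition. The only addition is that you spell out why $g \circ f$ is non-degenerate (via $\beta(\ast) = \ast$), which the paper leaves implicit.
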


\begin{proof}
    Suppose $f:S \to T$ and $g: T \to U$ are non-degenerate nested tuple morphisms. We need to check that 
    \begin{enumerate}
        \item {\it $\shape(L_{g \circ f})$ refines $\shape(L_{f})$}: This holds since 
        \[\shape(L_f) = S = \shape(L_{g \circ f}).\]
        \item {\it $L_{g \circ f}$ is coalesced over $\shape(L_{f})$}: This holds since the nested tuple morphism $g \circ f$ is non-degenerate, hence so is the layout $L_{g \circ f}$.
        \item {\it $\Phi_{L_{g \circ f}} = \Phi_{L_{g}} \circ \Phi_{L_{f}}^{\size(L_{g})}$}: Using Lemma \ref{nestedlayoutfunctionlemma}, we have 
        \begin{align*}
        \Phi_{L_{g \circ f}}^{\size(U)} &  = |g \circ f| \\
        &  = |g| \circ |f| \\
        & = \Phi_{L_g}^{\size(U)} \circ \Phi_{L_f}^{\size(T)}\\
        \end{align*}
        and by postcomposing with the inclusion $[0,\size(U)) \subset \mathbb{Z}$, and observing that $\size(T) = \size(L_{g})$, the result follows.
    \end{enumerate}
\end{proof}

\subsubsection{Logical division}

Next, we introduce logical division of nested tuple morphisms. This construction is obtained from flat division by introducing nesting profiles, with no compatibility constraints.

\begin{definition}
     Suppose $f$ and $g$ are nested tuple morphisms. We say $g$ {\it divides} $f$ if $g$ and $f$ are composable. In other words,
     \[
     \codomain(g) = \domain(f).
     \]
\end{definition}



\begin{definition}\label{definitionoflogicaldivisionofnestedtuplemorphisms}
    Suppose $g:S \to T$ and $f:T \to U$ are nested tuple morphisms. We define the logical division of $f$ by $g$ to be the nested tuple morphism 
    \[
    f \oslash g = f \circ (g,g^c).
    \]
\end{definition}

\begin{example}
    The logical division of 
    \[\begin{tikzcd} ((2,2),2) \ar[rr,"f"] \ar[rr,swap,"(2{,}4{,}*)
"] & &  ((4,2),(4,2)) \end{tikzcd} \]
    by 
    \[ \begin{tikzcd} (2,2) \ar[rr,"g"] \ar[rr,swap,"(1{,}3)"] & &  ((2,2),2) \end{tikzcd} \]
    is  
    \[ \begin{tikzcd} ((2,2),2) \ar[rr,"f \oslash g"] \ar[rr,swap,"(2{,}*{,}4)"] & &  ((4,2),(4,2)). \end{tikzcd} \]
\end{example}

\begin{example}
    The logical division of 
    \[\begin{tikzcd} (8,8,512,512,512) \ar[rr,"f"] \ar[rr,swap,"(*{,}*{,}1{,}2{,}3)"] & &  (512,512,512) \end{tikzcd} \]
    by 
    \[ \begin{tikzcd} (8,512) \ar[rr,"g"] \ar[rr,swap,"(1{,}5)"] & &  (8,8,512,512,512) \end{tikzcd} \]
    is  
    \[ \begin{tikzcd} ((8,512),(8,512,512)) \ar[rr,"f \oslash g"] \ar[rr,swap,"(*{,}1{,}*{,}2{,}3)"] & &  ((4,2),(4,2)). \end{tikzcd} \]
\end{example}

\begin{proposition}\label{coalesceoflogicaldivision}
    If $g:S \to T$ and $f:T \to U$ are non-degenerate nested tuple morphisms, then 
    \[
    \coalesce(L_{f \oslash g}) = \coalesce(L_f \oslash L_g).
    \]
\end{proposition}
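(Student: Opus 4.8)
The plan is to reduce the nested statement to the flat statement already proven as Proposition~\ref{coalesceoflogicaldivision}'s flat counterpart (the proposition asserting $\coalesce^\flat(L_{f \oslash^\flat g}) = \coalesce^\flat(L_f \oslash^\flat L_g)$), using the compatibility of flattening with all the relevant operations. First I would unwind the definitions: by Definition~\ref{definitionoflogicaldivisionofnestedtuplemorphisms}, $f \oslash g = f \circ (g, g^c)$, and by Definition~\ref{definitionoflogicaldivide}, $L_f \oslash L_g = L_f \circ (L_g, L_g^c)$ where $L_g^c = \comp(L_g, \size(L_f))$. Since both sides of the desired equality are being coalesced, Proposition~\ref{nestedcoalesceproposition} tells us it suffices to show that $L_{f \oslash g}$ and $L_f \oslash L_g$ have the same layout function, and in turn (via $\Phi_L = \Phi_{L^\flat}$) that the flattened layouts $(L_{f \oslash g})^\flat$ and $(L_f \oslash L_g)^\flat$ have the same layout function.

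The key computational steps, carried out on flattenings: using Observation~\ref{flatlayoutcompatibility} ($(L_h)^\flat = L_{h^\flat}$) and the fact that flattening a composite of nested tuple morphisms gives the composite of the flattenings (which follows from the definition of $\catstyle{Nest}$, since $\Hom_{\catstyle{Nest}}(S,T) = \Hom_{\catstyle{Tuple}}(S^\flat, T^\flat)$ and composition is computed on the underlying tuple morphisms), we have
\[
(L_{f \oslash g})^\flat = (L_{f \circ (g,g^c)})^\flat = L_{(f \circ (g,g^c))^\flat} = L_{f^\flat \circ (g,g^c)^\flat}.
\]
Now $(g, g^c)^\flat = g^\flat \star (g^\flat)^c$ — here I would invoke the remark that $g^c$ is obtained from $(g^\flat)^c$ by post-composing with a reparenthesization isomorphism (Construction~\ref{complementofmorphisminD}), so $(g,g^c)^\flat = g^\flat \star (g^\flat)^c$ after accounting for the fact that reparenthesization does not alter the underlying tuple morphism. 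Hence $(L_{f \oslash g})^\flat = L_{f^\flat \circ (g^\flat \star (g^\flat)^c)} = L_{f^\flat \oslash^\flat g^\flat}$ by the definition of flat division of tuple morphisms. On the other side, using $\coalesce(L_{g^c}) = \comp(L_g, \size(T))$ from Proposition~\ref{coalescedlayoutoftuplemorphismcomplement} (where $T = \codomain(g)$, and $\size(T) = \size(L_f)$ since $\codomain(g) = \domain(f) = \shape(L_f)$), together with Corollary~\ref{postcomposewithcoalesce} (composition is insensitive to coalescing the inner layout), one identifies $\coalesce(L_f \oslash L_g)$ with $\coalesce$ of $L_f \circ (L_g, L_{g^c})^\flat$-type expressions, matching the flat-division computation already handled in the flat-morphism proposition. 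Chaining these identifications and applying Proposition~\ref{nestedcoalesceproposition} yields the result.

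I expect the main obstacle to be bookkeeping the interaction between the reparenthesization isomorphisms (which appear in $g^c$ and in the various flattening/unflattening comparisons) and the operations $\circ$, $\star$, and $(-,-)$, making sure that at the level of flattenings everything collapses to the plain tuple-morphism identities. In particular I need to be careful that $L_{f^c} = L_{(f^\flat)^c}$ (true because reparenthesization post-composition does not change the encoded layout, as noted in the proof of Proposition~\ref{coalescedlayoutoftuplemorphismcomplement}) and that the complement $L_g^c$ used on the layout side genuinely agrees up to layout function with $L_{g^c}$ on the morphism side — this is exactly the content of Proposition~\ref{coalescedlayoutoftuplemorphismcomplement} combined with the observation that logical division only depends on layout functions of the pieces via Corollary~\ref{postcomposewithcoalesce}. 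Once these compatibility lemmas are lined up, the proof is essentially the flat-case computation verbatim, wrapped in $\coalesce$.
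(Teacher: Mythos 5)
Your proposal is correct in substance but routes through the flat case, whereas the paper's proof is a direct computation at the nested level: it rewrites $\coalesce(L_f \oslash L_g) = \coalesce\bigl(L_f \circ (L_g, \comp(L_g,\size(L_f)))\bigr)$, replaces $\comp(L_g,\size(L_f))$ by $L_{g^c}$ using Proposition \ref{coalescedlayoutoftuplemorphismcomplement}, and then applies compatibility of concatenation (Proposition \ref{compatibilityofconcatenateinD}) and of composition (Theorem \ref{compatibilityofcompositioninD}) to collapse $L_f \circ (L_g, L_{g^c})$ into $L_{f\circ(g,g^c)} = L_{f \oslash g}$. Your reduction via $(f\oslash g)^\flat = f^\flat \oslash^\flat g^\flat$ and the flat-division proposition is legitimate --- the paper even records that flattening identity as a separate proposition --- and it buys you reuse of the already-proven flat statement, at the cost of the extra bookkeeping with reparenthesization isomorphisms that you correctly anticipate; the paper's version buys directness at the cost of re-running the flat argument's skeleton with the nested compatibility lemmas. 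One small correction: Corollary \ref{postcomposewithcoalesce} states $B \circ A = \coalesce(B)\circ A$, i.e., it licenses coalescing the \emph{outer} (post-composed) layout, not the inner one. The substitution of $\comp(L_g,\size(L_f))$ by $L_{g^c}$ inside the inner concatenation is instead justified because the composite's layout function depends only on the inner layout's layout function, and both sides of the identity are wrapped in $\coalesce$, so Proposition \ref{nestedcoalesceproposition} closes the gap --- the same implicit step the paper's own computation relies on.
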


\begin{proof}
By Proposition \ref{coalescedlayoutoftuplemorphismcomplement}, we have
\[
\coalesce(\comp(L_g,\size(L_f))) = \coalesce(L_{g^c})
\]
and we compute 
    \begin{align*}
        \coalesce(L_f \oslash L_g) & = \coalesce(L_f \circ \left(L_g , \comp(L_g,\size(L_f))\right))\\
        & = \coalesce( L_{f} \circ (L_g, L_{g^c}))\\
        & = \coalesce(L_{f} \circ L_{(g,g^c)})\\
        & = \coalesce(L_{f} \circ L_{(g,g^c)})\\
        & = \coalesce(L_{f \circ (g,g^c)})\\
        & = \coalesce(L_{f \oslash g}).
    \end{align*}
\end{proof}

\begin{proposition}
    If $f$ and $g$ are nested tuples and $g$ divides $f$, then 
    \[
    (f \oslash g)^\flat = f^\flat\oslash^\flat g^\flat.
    \]
\end{proposition}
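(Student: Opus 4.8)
The plan is to reduce the claim to the functoriality of flattening together with the compatibility statements already proved for concatenation and complementation. First I would invoke the fact that $(-)^\flat : \catstyle{Nest} \to \catstyle{Tuple}$ is a functor, so it preserves composition. Applying this to the definition $f \oslash g = f \circ (g, g^c)$ (Definition~\ref{definitionoflogicaldivisionofnestedtuplemorphisms}) gives immediately
\[
(f \oslash g)^\flat = \bigl(f \circ (g, g^c)\bigr)^\flat = f^\flat \circ (g, g^c)^\flat.
\]

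Next I would unwind $(g, g^c)^\flat$. By Construction~\ref{concatenationofmorphismsinD}, concatenation of nested tuple morphisms is defined precisely so that $\flatten\bigl((g, g^c)\bigr) = g^\flat \star (g^c)^\flat$, using that $g$ and $g^c$ have disjoint images. It then remains to identify $(g^c)^\flat$ with $(g^\flat)^c$. By Construction~\ref{complementofmorphisminD}, $g^c = \id_{T^\flat}^T \circ (g^\flat)^c$; flattening this composite and using that the reparenthesization isomorphism $\id_{T^\flat}^T$ lies over the identity map (so that $(\id_{T^\flat}^T)^\flat = \id_{T^\flat}$), and that $(g^\flat)^c$ is already a tuple morphism, yields $(g^c)^\flat = (g^\flat)^c$. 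Substituting back,
\[
(f \oslash g)^\flat = f^\flat \circ \bigl(g^\flat \star (g^\flat)^c\bigr),
\]
which is exactly the definition of the flat division $f^\flat \oslash^\flat g^\flat$.

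The only point requiring any care — and it is minor — is the observation that flattening a reparenthesization isomorphism produces an identity tuple morphism, so that post-composing $(g^\flat)^c$ with it leaves the underlying flat data untouched; this was already used in the proof of the nested version of Proposition~\ref{coalescedlayoutoftuplemorphismcomplement}. Beyond that, the whole argument is a short diagram chase through the definitions, with no new combinatorial content, so I do not anticipate a genuine obstacle.
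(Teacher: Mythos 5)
Your argument is correct and follows exactly the same route as the paper's proof: flatten the defining composite $f \circ (g,g^c)$ using functoriality, identify $(g,g^c)^\flat$ with $g^\flat \star (g^c)^\flat$, and then $(g^c)^\flat$ with $(g^\flat)^c$. The paper simply performs this as a five-line computation without spelling out the reparenthesization-isomorphism justification that you supply; your extra detail on that step is accurate and harmless.
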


\begin{proof}
    We compute 
    \begin{align*}
    (f \oslash g)^\flat & = (f \circ (g,g^c))^\flat\\
    & = f^\flat \circ (g,g^c)^\flat\\
    & = f^\flat \circ (g^\flat \star (g^c)^\flat)\\
    & = f^\flat \circ (g^\flat \star (g^\flat)^c)\\
    & = f^\flat \oslash^\flat g^\flat.
    \end{align*}
\end{proof}

\subsubsection{Logical products}
In this section, we define the logical product of nested tuple morphisms.
\begin{definition}\label{definitionoflogicalproductofnestedtuplemorphisms}
Suppose $f$ and $g$ are nested tuple morphisms. We say $f$ and $g$ are {\it product admissible} if $\codomain(g) = \domain(f^c)$. If $f$ and $g$ are product admissible we define the {\it logical product} of $f$ and $g$ to be the nested tuple morphism 
\[
f \otimes g = (f , f^c \circ g).
\]
\end{definition}

\begin{example}
    The nested tuple morphisms 
    \[\begin{tikzcd} 
    (8,8) \ar[rr,"f"] \ar[rr,swap,"(1{,}2)"] & &  (8,8,16,16)
    \end{tikzcd} \]
    and 
    \[\begin{tikzcd} 
    (16,16) \ar[rr,"g"] \ar[rr,swap,"(1{,}2)"] & &  (16,16)
    \end{tikzcd} \]
    are product admissible, and their logical product is
    \[\begin{tikzcd} 
    ((8,8),(16,16))  \ar[rr,"f\otimes g"] \ar[rr,swap,"(1{,}2{,}3{,}4)"] & &  (8,8,16,16).
    \end{tikzcd} \]
\end{example}
\begin{example}
    The nested tuple morphisms 
    \[\begin{tikzcd} 
    (128,128) \ar[rr,"f"] \ar[rr,swap,"(3{,}4)"] & &  (32,32,128,128)
    \end{tikzcd} \]
    and 
    \[\begin{tikzcd} 
    (32) \ar[rr,"g"] \ar[rr,swap,"(2)"] & &  (32,32)
    \end{tikzcd} \]
    are product admissible, and their logical product is
    \[\begin{tikzcd} 
    ((128,128),(32))  \ar[rr,"f\otimes g"] \ar[rr,swap,"(3{,}4{,}2)"] & &  (32,32,128,128).
    \end{tikzcd} \]
\end{example}

\begin{proposition}\label{logicalproductcompatibility}
    Suppose $f$ and $g$ are non-degenerate nested tuple morphisms and that $f$ and $g$ are product-admissible. Then 
    \[
    L_{f \otimes g} = L_f \otimes L_g.
    \]
\end{proposition}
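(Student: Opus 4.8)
The plan is to reduce the statement to the already-established flat case, just as is done for logical division in the preceding proposition. Recall that $f \otimes g$ is defined as the nested tuple morphism $(f, f^c \circ g)$, and $L_f \otimes L_g$ is the layout $(L_f, L_f^c \circ L_g)$ where $L_f^c = \comp(L_f, \size(L_f) \cdot \cosize(L_g))$. The key observation is that, unlike logical division, the logical product is \emph{not} preceded by a coalesce, so I must prove an equality of layouts on the nose rather than after coalescing. This means I cannot afford to pass through $\coalesce$ freely, and instead need the unflattening structure to match exactly.

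First I would unwind both sides using the compatibility results already proven for the constituent operations. On the morphism side, $L_{f \otimes g} = L_{(f, f^c \circ g)}$, and by Proposition~\ref{compatibilityofconcatenateinD} (compatibility of concatenate in $\catstyle{Nest}$) this equals $(L_f, L_{f^c \circ g})$, provided $f$ and $f^c \circ g$ have disjoint images; this disjointness holds since $f$ and $f^c$ have disjoint images by construction of the complement, and the image of $f^c \circ g$ is contained in the image of $f^c$. By Theorem~\ref{compatibilityofcompositioninD} (compatibility of composition in $\catstyle{Nest}$), $L_{f^c \circ g} = L_{f^c} \circ L_g$, using that $f^c$ and $g$ are non-degenerate and composable (composability is exactly product-admissibility). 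So $L_{f \otimes g} = (L_f, L_{f^c} \circ L_g)$. It then remains to identify $L_{f^c} \circ L_g$ with $L_f^c \circ L_g = \comp(L_f, \size(L_f)\cdot\cosize(L_g)) \circ L_g$.

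For this last identification I would invoke Proposition~\ref{coalescedlayoutoftuplemorphismcomplement} (in the nested setting), which gives $\coalesce(L_{f^c}) = \comp(L_f, \size(T))$ where $T = \codomain(f)$; here $\size(T) = \size(S)\cdot\size(U) = \size(L_f) \cdot \size(\domain(f^c))$. Since $f$ is injective, $\size(T) \geq \size(L_f) \cdot \cosize(L_g)$ (the codomain of $g$ is the domain of $f^c$, so $\size(\domain(f^c)) = \size(V) \geq \cosize(L_g)$ when $\codomain(g) = V$), so by Lemma~\ref{compatibilityofNcomplements} the layout functions $\Phi_{\comp(L_f, \size(T))}$ and $\Phi_{\comp(L_f, \size(L_f)\cdot\cosize(L_g))} = \Phi_{L_f^c}$ agree on $[0, \size(L_f)\cdot\cosize(L_g))$, hence in particular on $\Image(\Phi_{L_g})$. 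Combined with $\Phi_{L_{f^c}} = \Phi_{\coalesce(L_{f^c})} = \Phi_{\comp(L_f,\size(T))}$, this shows $\Phi_{L_{f^c}}$ and $\Phi_{L_f^c}$ agree on $\Image(\Phi_{L_g})$, so by Proposition~\ref{postcomposewithextension}, $L_{f^c} \circ L_g = L_f^c \circ L_g$. Assembling: $L_{f \otimes g} = (L_f, L_{f^c}\circ L_g) = (L_f, L_f^c \circ L_g) = L_f \otimes L_g$.

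The main obstacle I anticipate is the bookkeeping around sizes and the precise range on which the two complements agree: one must be careful that $\size(\codomain(f)) \geq \size(L_f)\cdot\cosize(L_g)$ (which relies on injectivity of $f$ and the product-admissibility hypothesis $\codomain(g) = \domain(f^c)$, forcing $\size(\domain(f^c)) \geq \cosize(L_g)$ — note $\cosize(L_g) \leq \size(\codomain(g))$ is automatic since coordinate values never exceed the codomain size minus one for a layout admitting a complement-style extension, but I should double-check this inequality and perhaps instead argue directly that $\Image(\Phi_{L_g}) \subseteq [0, \size(\domain(f^c)))$). Also, I should verify the disjoint-images hypothesis feeding into Proposition~\ref{compatibilityofconcatenateinD} is legitimately satisfied for $(f, f^c\circ g)$, and that non-degeneracy of $f^c \circ g$ holds — this follows since $f^c$ is injective hence non-degenerate, and composing a non-degenerate morphism with $g$ non-degenerate keeps non-degeneracy because the flat layout $L_{f^c \circ g}$ has no shape-one entry with nonzero stride. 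These are all routine but need to be stated cleanly; everything else is a direct substitution through the compatibility lemmas already in hand.
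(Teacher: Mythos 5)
Your proposal is correct and follows essentially the same route as the paper's proof: both reduce, via compatibility of concatenation and composition, to identifying $L_{f^c} \circ L_g$ with $\comp(L_f,\size(L_f)\cdot\cosize(L_g)) \circ L_g$, using the inequality $\size(L_f)\cdot\cosize(L_g) \leq \size(T)$ and the agreement of $\Phi_{L_{f^c}}$ with the $\size(T)$-complement's layout function. You are in fact more explicit than the paper about the supporting steps (disjointness of images, the appeal to Lemma~\ref{compatibilityofNcomplements} and Proposition~\ref{postcomposewithextension}), which the paper leaves implicit.
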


\begin{proof} Suppose $f:S \to T$ and $g:U \to V$ are product admissible, and set
    \[
    L_f^* = \comp(L_f,\size(L_f)\cdot \cosize(L_g))
    \]
    Since $f$ is injective and $\codomain(g) = \domain(f^c)$, it follows that 
    \[\size(L_f) \cdot \cosize(L_g) \leq \size(S) \cdot \size(V) = \size(T).\]
    Using this fact, and the fact that \[\Phi_{\comp(L_f,\size(T))} = \Phi_{L_{f^c}},\]
    we have
    \begin{align*}
    L_f^* \circ L_g & = \comp(L_f,\size(T)) \circ L_g\\
    & = L_{f^c} \circ L_g.
    \end{align*}
    Using this fact, we compute 
    \begin{align*}
        L_f \otimes L_g & = (L_f , L_f^* \circ L_g)\\
        & = (L_f , L_{f^c} \circ L_g)\\
        & = (L_f , L_{f^c \circ g})\\
        & = L_{(f , f^c \circ g)}\\
        & = L_{f \otimes g}
    \end{align*}
\end{proof}

\newpage

\chapter{Computations}\label{computationschapter}

The categories $\catstyle{Tuple}$ and $\catstyle{Nest}$ offer a powerful framework for computing with tractable layouts. It is frequently the case that in practice, however, one comes across tractable layouts $A$ and $B$ that are composable in the context of \verb|cute| but whose standard representations are neither composable in $\catstyle{Tuple}$ nor $\catstyle{Nest}$. This chapter is dedicated to the explication of how one may nevertheless use the categories $\catstyle{Tuple}$ and $\catstyle{Nest}$ to compute the {\it composition}, {\it logical division}, and {\it logical product} of tractable layouts, using the notion of {\bf mutual refinement}. We introduce this notion in \Cref{subsectionmutualrefinements}, present an algorithm for computing mutual refinements in Algorithm \ref{mutualrefinementalgorithm}, and work through many explicit examples.

\section{Composition of tractable layouts}
Suppose we want to compute the composition $B \circ A$ of the tractable layouts
\begin{align*}
    A & = (6,6):(6,1),\\
    B & = (12,3,6):(1,72,12).
\end{align*}
We might try to compute $B \circ A$ by computing the composite of the standard representations $f$ and $g$ of $A$ and $B$:
\[ \begin{tikzcd} [row sep = 1, column sep = 8]
         & & & & & & 6 \ar[drr,mapsto] & & 3\\
        6 \ar[drr,mapsto] & & 6 & & & & 3 \ar[urr,mapsto] & & 6\\
        6 \ar[urr,mapsto] & & 6 & & & & 12 \ar[rr,mapsto] & & 12\\
        & f & & & & & & g & \\
        \end{tikzcd} \]
However, these morphisms are not composable, since the codomain $(6,6)$ of $f$ is not equal to the domain $(12,3,6)$ of $g$. This means that we can not use the morphisms $f$ and $g$ to compute the composite $B \circ A$ directly. We can, however, proceed with our computation by finding a {\it mutual refinement} of $(6,6)$ and $(12,3,6)$, as depicted below

        \[ \begin{tikzcd} [row sep = 1, column sep = 8]
           & & 6 \ar[-,drr] & &  \\
          & & 3 \ar[-,drr] & & 6 \\
         6  \ar[-,urr] \ar[-,rr]  & & 2 \ar[-,drr] & & 3 \\
        6 \ar[rr,-] & & 6 \ar[-,swap,rr] & & 12 \\
           &  & &  &  \\
        \end{tikzcd} \]

\noindent This is a device which converts $f$ and $g$ into composable morphisms $f'$ and $g'$:

\[
\begin{tikzcd}[row sep = 1,column sep = 8] 
  & &   & & 6 & & & & & & & & 6\\
  & &   & & 3 & & & &  & & 6 \ar[ddrr,mapsto] & & 3 \\
6 \ar[drr,mapsto] & & 6 \ar[rr,-] \ar[rru,-] & & 2 & & \rightsquigarrow & & 6 \ar[urr,-]& & 3 \ar[urr,mapsto] & & 2\\
6 \ar[urr,mapsto] & & 6 \ar[rr,-]  & & 6 & & & & 6 \ar[rr,-] \ar[urr,-] & & 2 \ar[urr,mapsto] & & 6\\
 & f & & & & & & & & & & f' &  \\
 & & & & & & & & & & & & \\[2em]
6 \ar[drr,-] & &    & &    & & & & & & 6 \ar[drr,mapsto] & & 3\\
3 \ar[drr,-] & & 6  \ar[drr,mapsto] & & 3  & & & & & & 3 \ar[urr,mapsto] & & 6\\
2 \ar[drr,-] & & 3  \ar[urr,mapsto] & & 6  & & \rightsquigarrow & & & & 2 \ar[rr,mapsto] & & 2\\
6 \ar[rr,-] & & 12 \ar[rr,mapsto] & & 12 & & & & & & 6 \ar[rr,mapsto] & & 6\\
  & &    &g&    & & & & & &   &g'& \\
\end{tikzcd} 
\]
The morphisms $f'$ and $g'$ are composable, so we may form the composite 
\[ \begin{tikzcd} [row sep = 1, column sep = 8]
              & &        & &  6 \ar[drr,mapsto] & & 3 & & & & & &  & & 3\\
        & &  6 \ar[ddrr,mapsto] & &  3 \ar[urr, mapsto] & & 6 & & & & & & 6 \ar[ddrr,mapsto] & & 6\\
       6 \ar[rru,-] & & 3 \ar[urr,mapsto] & &   2 \ar[rr,mapsto] & & 2  & & \rightsquigarrow & & 6 \ar[urr,-] & & 3 \ar[uurr,mapsto] & & 2\\
       6 \ar[rr,-] \ar[urr,-] & &  2 \ar[urr,mapsto] & &  6 \ar[rr,mapsto] & & 6 & & & & 6 \ar[rr,-] \ar[urr,-] & & 2 \ar[urr,mapsto] & & 6\\
        & & & f' &  & g' & & & & & & & & g' \circ f'& \\
        \end{tikzcd} \]
and computing the encoded layout yields
        \begin{align*}
        B \circ A & = L_{g' \circ f'} = ((2,3),6):((6,72),1).
        \end{align*}

\noindent The goal of this section is to formalize this computational process into an {\it algorithm} for computing the composite of tractable layouts $A$ and $B$. As we saw in our example, the non-trivial steps in our computation were
\begin{enumerate}
    \item finding a mutual refinement of certain (nested) tuples, and 
    \item using the mutual refinement to convert $f$ and $g$ into composable morphisms $f'$ and $g'$.
\end{enumerate}
We dedicate the following two sections to the explication of these steps.

\subsection{Mutual refinements}
\label{subsectionmutualrefinements}

Before giving a precise definition of mutual refinements using the categorical framework of Chapter \ref{categorieschapter}, we give an informal overview. Consider the tuples $(6,6)$ and $(12,3,6)$ of our motivating example. We asserted that the diagram 
        \[ \begin{tikzcd} [row sep = 1, column sep = 8]
           & & 6 \ar[-,drr] & &  \\
          & & 3 \ar[-,drr] & & 6 \\
         6  \ar[-,urr] \ar[-,rr]  & & 2 \ar[-,drr] & & 3 \\
        6 \ar[rr,-] & & 6 \ar[-,swap,rr] & & 12 \\
           &  & &  &  \\
        \end{tikzcd} \]
is a mutual refinement of $(6,6)$ and $(12,3,6)$. We can give a more precise description of this mutual refinement as follows. The left half of the diagram represents the refinement $(6,6) \twoheadleftarrow (6,(2,3))$, and the right half of the diagram represents the refinement $((6,2),3,6) \twoheadrightarrow (12,3,6)$:
\[ \begin{tikzcd} [row sep = 1, column sep = 8]
 & & 3 & & & & & & & & \\
6 \ar[rr,-] \ar[urr,-] & & 2 & & & & \leftrightsquigarrow & & (6,6) & & \ar[ll, two heads] (6,(2,3)) \\
6 \ar[rr,-] & & 6 & & & & & & & & \\
         & &  & & & & & &  & & \\[2em]
 & & 6 \ar[drr,-] & & & & & & & & \\
 & & 3 \ar[drr,-]& & 6 & & & & & & \\
 & & 2 \ar[drr,-] & & 3 & & \leftrightsquigarrow & & ((6,2),3,6) \ar[rr,two heads] & & (12,3,6) \\
 & & 6 \ar[rr,-] & & 12 & & & & & & \\
\end{tikzcd} \]
The fact that the two halves of the diagram may be glued together corresponds to the fact that the nested tuple $(6,(2,3))$ {\it divides} $((6,2),3,6)$, which we denote 
\[
\begin{tikzcd} 
(6,(2,3)) \ar[r,rightarrowtail] & ((6,2),3,6).
\end{tikzcd}
\]
Putting these observations together, we may express our mutual refinement precisely as 
        \[ \begin{tikzcd} [row sep = 1, column sep = 8]
           & & 6 \ar[-,drr] & & & & & &  & & \\
          & & 3 \ar[-,drr] & & 6 & & & & (6,(2,3)) \ar[dd, two heads] \ar[rr,rightarrowtail] & & ((6,2),3,6) \ar[dd, two heads] \\
         6  \ar[-,urr] \ar[-,rr]  & & 2 \ar[-,drr] & & 3 & & \leftrightsquigarrow & & & & \\
        6 \ar[rr,-] & & 6 \ar[-,swap,rr] & & 12 & & & & (6,6) & & (12,3,6) \\
           &  & &  &  & & & & \\
        \end{tikzcd} \]
where we opt to depict the refinements $(6,6) \twoheadleftarrow (6,(2,3))$ and $((6,2),3,6) \twoheadrightarrow (12,3,6)$ vertically. We can now give a precise definition of mutual refinements.

\begin{definition}
Suppose $T$ and $U$ are nested tuples. A {\it mutual refinement} of $(T,U)$ is a diagram of the form
\[ \begin{tikzcd} 
T' \ar[r,rightarrowtail] \ar[d, two heads] & U' \ar[d,two heads] \\
T & U
\end{tikzcd} \]
Explicitly, this is a pair of nested tuples $(T',U')$ such that \begin{enumerate}
    \item $T'$ refines $T$,
    \item $U'$ refines $U$, and
    \item $T'$ divides $U'$. 
\end{enumerate}
\end{definition}

\begin{example}
    A mutual refinement of $T = (6,6)$ and $U = (2,6,3)$ is given by
    \[ \begin{tikzcd} 
((2,3),(2,3)) \ar[r,rightarrowtail] \ar[d, two heads] & (2,(3,2),3) \ar[d,two heads] \\
(6,6) & (2,6,3)
\end{tikzcd} \]
We depict this mutual refinement as follows.
\[ \begin{tikzcd} [row sep = 1, column sep = 8] 
  & & 3 & & \\
6 \ar[rr,-] \ar[rru,-] & & 2 & & \ar[ull,-] 3\\
 & & 3 & & \ar[ll,-] \ar[ull,-] 6\\
6 \ar[rr,-] \ar[rru,-] & & 2 & & \ar[ll,-] 2\\
\end{tikzcd} \]
\end{example}

\begin{example}
    A mutual refinement of $T = (8,8,8)$ and $U = (2,8,8,8)$ is given by
    \[ \begin{tikzcd} 
 ((2,4),(2,4),(2,4))\ar[r,rightarrowtail] \ar[d, two heads] & (2,(4,2),(4,2),(4,2)) \ar[d,two heads] \\
(8,8,8) & (2,8,8,8)
\end{tikzcd} \]
We depict this mutual refinement as follows.
\[ \begin{tikzcd} [row sep = 1, column sep = 8] 
  & & 2 & & \\
  & & 4 & &\ar[ll,-] \ar[llu,-]  8\\
8 \ar[rr,-] \ar[rru,-]& & 2 & & \\
  & & 4 & &\ar[ll,-] \ar[llu,-]  8\\
8 \ar[rr,-] \ar[rru,-]& & 2 & & \\
  & & 4 & &\ar[ll,-] \ar[llu,-]  8\\
8 \ar[rr,-] \ar[rru,-] & & 2 & & \ar[ll,-] 2\\
\end{tikzcd} \]
\end{example}

\begin{example}
A mutual refinement of $T = (4, 2, 2, 32)$ and $U = (32, 32)$ is given by
    \[ \begin{tikzcd} 
 (4,2,2,(2,16))\ar[r,rightarrowtail] \ar[d, two heads] & ((4,2,2,2),(16,2)) \ar[d,two heads] \\
(4,2,2,32) & (32,32)
\end{tikzcd} \]
We depict this mutual refinement as follows.

\[ \begin{tikzcd} [row sep = 1, column sep = 8]
 & & 2 & & \\
 & & 16 & & \ar[ll,-]\ar[ull,-]32\\
32\ar[rr,-] \ar[urr,-]  & & 2 & & \\
2\ar[rr,-]  & & 2 & & \\
2\ar[rr,-]  & & 2 & & \\
4 \ar[rr,-] & & 4 & & \ar[ll,-]\ar[ull,-]\ar[uull,-]\ar[uuull,-] 32\\
\end{tikzcd} \]
\end{example}

\begin{example}
    If $T = (8,8)$ and $U = (3,8,8)$, then there does not exist a mutual refinement of $T$ and $U$.
\end{example}

\begin{example}
    If $T$ and $U$ are tuples with $\size(T) = 2^k$ and $\size(L) = 2^\ell$ with $k \leq \ell$, then there exists a mutual refinement of $T$ and $U$.  More generally, if $T$ and $U$ are tuples where $\size(T) \leq \size(U)$ are powers of some fixed integer, then there exists a mutual refinement of $T$ and $U$.
\end{example}

\begin{observation} In each of the previous examples, we have considered mutual refinements of {\it flat} tuples $T$ and $U$. The definition of mutual refinement, however, allows $T$ and $U$ to be any {\it nested} tuples. In any case, restricting to the flat case is no loss of generality, because there is a one-to-one correspondence between mutual refinements of a pair of nested tuples $(T,U)$, and mutual refinements of their flattenings $(T^\flat,U^\flat)$ (see Lemma \ref{mutualrefinementsofflattenings}). In particular, there exists a mutual refinement of $(T,U)$ if and only if there exists a mutual refinement of $(T^\flat,U^\flat)$.  
\end{observation}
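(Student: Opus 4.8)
The statement has two parts, and the plan is to dispatch the first by citing Lemma~\ref{mutualrefinementsofflattenings} and the second as a one-line consequence. The first part --- that mutual refinements of $(T,U)$ are in one-to-one correspondence with mutual refinements of $(T^{\flat},U^{\flat})$ --- is verbatim the content of Lemma~\ref{mutualrefinementsofflattenings}, so there is nothing new to prove: I would simply invoke it. The ``in particular'' clause then follows because a bijection of sets has nonempty domain exactly when it has nonempty codomain, so a mutual refinement of $(T,U)$ exists if and only if a mutual refinement of $(T^{\flat},U^{\flat})$ does.

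For the record I would also recall why Lemma~\ref{mutualrefinementsofflattenings} holds, since that is where the argument actually lives. A mutual refinement of $(T,U)$ is a pair $(T',U')$ with $T'\twoheadrightarrow T$, $U'\twoheadrightarrow U$, and $T'\rightarrowtail U'$. Given such a pair one pulls $T'$ back along the unflattening isomorphism $\id_{T^{\flat}}^{T}$ and $U'$ back along $\id_{U^{\flat}}^{U}$. Because $\id_{T^{\flat}}^{T}$ is an isomorphism of nested tuples, the pullback functor is an isomorphism of categories $\catstyle{Ref}(T)\cong\catstyle{Ref}(T^{\flat})$, with inverse the pushforward along the flattening isomorphism; concretely these pullbacks merely reparenthesize $T'$ and $U'$ as their relative flattenings $\flatten(T',T)$ and $\flatten(U',U)$, leaving the underlying ordered lists of entries untouched. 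Hence the refinement conditions $\flatten(T',T)\twoheadrightarrow T^{\flat}$ and $\flatten(U',U)\twoheadrightarrow U^{\flat}$ are automatic, and running the same recipe with the flattening isomorphisms in place of the unflattening ones produces the inverse assignment.

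The one step requiring genuine verification --- and the one I expect to be the main obstacle --- is that the divisibility relation $T'\rightarrowtail U'$ is transported to $\flatten(T',T)\rightarrowtail\flatten(U',U)$, i.e.\ that the bijection really sends mutual refinements to mutual refinements and not merely refinements to refinements. This reduces to unwinding the definitions of nested-tuple divisibility and of the reparenthesization/pullback construction: since passing to the relative flattening alters only parenthesization and never the underlying sequence of entries, the way $T'$ sits as an initial block inside $U'$ is preserved, so $\flatten(T',T)$ divides $\flatten(U',U)$. This check is routine but a little fiddly, and it is the only point in the argument where one computes rather than invokes a prior result; everything else is formal, and no further work is needed.
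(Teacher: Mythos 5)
Your proposal is correct and follows the paper's route exactly: the observation is discharged by citing Lemma~\ref{mutualrefinementsofflattenings}, whose proof (as you recall) pulls back along the unflattening isomorphisms in one direction and the flattening isomorphisms in the other. If anything you are slightly more careful than the paper, which asserts without comment that the pulled-back pair is again a mutual refinement, whereas you explicitly flag and sketch the check that the divisibility relation $T'\rightarrowtail U'$ is preserved under reparenthesization.
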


Having made the appropriate definitions, we provide an algorithm for computing a mutual refinement of $(T,U)$. 

\begin{BreakableAlgorithm}{Mutual refinement algorithm}\label{mutualrefinementalgorithm}
\begin{algorithmic}[1]
\State \textbf{Input:} Nested tuples $T$ and $U$.
\State \textbf{Output:} A mutual refinement $(T',U')$ of $(T,U)$, if one exists, else $\mathbf{None}$.
\algsep

\State $X \gets T$; \ $Y \gets U$
\State $X',\, Y',\, X_{\mode},\, Y_{\mode} \gets ()$
\State $i \gets 1$; \ $j \gets 1$

\While{$i \le \len(X)$ \textbf{ and } $j \le \len(Y)$}
  \If{$\entry_i(X) = \entry_j(Y)$}
    \State append $\entry_i(X)$ to $X_{\mode}$; 
    append $X_{\mode}$ to $X'$; 
    $X_{\mode}\gets()$
    \State append $\entry_j(Y)$ to $Y_{\mode}$; 
    \State append $Y_{\mode}$ to $Y'$; 
    \State $Y_{\mode}\gets()$
    \State $i \gets i+1$;
    \State $j \gets j+1$
  \ElsIf{$\entry_i(X)$ divides $\entry_j(Y)$}
    \State append $\entry_i(X)$ to $X_{\mode}$; 
    \State append $X_{\mode}$ to $X'$; 
    \State $X_{\mode}\gets()$
    \State append $\entry_i(X)$ to $Y_{\mode}$
    \State $\entry_j(Y) \gets \entry_j(Y)/\entry_i(X)$; 
    \State $i \gets i+1$
  \ElsIf{$\entry_j(Y)$ divides $\entry_i(X)$}
    \State append $\entry_j(Y)$ to $X_{\mode}$; 
    \State append $\entry_j(Y)$ to $Y_{\mode}$; 
    \State append $Y_{\mode}$ to $Y'$; 
    \State $Y_{\mode}\gets()$
    \State $\entry_i(X) \gets \entry_i(X)/\entry_j(Y)$; 
    \State $j \gets j+1$
  \Else
    \State \Return $\mathbf{None}$
  \EndIf
\EndWhile

\If{$Y_{\mode} \neq ()$}
  \State append $\entry_j(Y)$ to $Y_{\mode}$; 
  \State append $Y_{\mode}$ to $Y'$; 
  \State $j \gets j+1$
\EndIf
\While{$j < \len(Y)$}
  \State append $\entry_j(Y)$ to $Y'$; 
  \State $j \gets j+1$
\EndWhile

\State $T' \gets (X')_{\profile(T)}$; 
\State $U' \gets (Y')_{\profile(U)}$
\State \Return $(T',U')$
\end{algorithmic}
\end{BreakableAlgorithm}

\subsection{From mutual refinements to composable morphisms}
Recall that in order to compute the composition $B \circ A$ of 
\begin{align*}
    A & = (6,6):(6,1) \text{ and }\\
    B & = (12,3,6):(1,72,12),
\end{align*}
we constructed tuple morphisms 
\[ \begin{tikzcd} [row sep = 1, column sep = 8]
         & & & & & & 6 \ar[drr,mapsto] & & 3\\
        6 \ar[drr,mapsto] & & 6 & & & & 3 \ar[urr,mapsto] & & 6\\
        6 \ar[urr,mapsto] & & 6 & & & & 12 \ar[rr,mapsto] & & 12\\
        & f & & & & & & g & \\
        \end{tikzcd} \]
and a mutual refinement.
        \[ \begin{tikzcd} [row sep = 1, column sep = 8]
           & & 6 \ar[-,drr] & &  \\
          & & 3 \ar[-,drr] & & 6 \\
         6  \ar[-,urr] \ar[-,rr]  & & 2 \ar[-,drr] & & 3 \\
        6 \ar[rr,-] & & 6 \ar[-,swap,rr] & & 12 \\
           &  & &  &  \\
        \end{tikzcd} \]

\noindent The next step in our computation is to use our mutual refinement to convert $f$ and $g$ into composable morphisms $f'$ and $g'$. Before giving a formal, categorical definition of this process, let's illustrate the process with an example.

We construct $f'$ from $f$ and the left half of our mutual refinement:
\[
\begin{tikzcd}[row sep = 1,column sep = 8] 
  & &   & &  6 & & & & & & & & 6 \\
  & &   & & 3 & & & &  & & 6 \ar[ddrr,mapsto] & & 3 \\
6 \ar[drr,mapsto] & & 6 \ar[rr,-] \ar[rru,-] & & 2 & & \rightsquigarrow & & 6 \ar[urr,-]& & 3 \ar[urr,mapsto] & & 2\\
6 \ar[urr,mapsto] & & 6 \ar[rr,-]  & & 6 & & & & 6 \ar[rr,-] \ar[urr,-] & & 2 \ar[urr,mapsto] & & 6\\
 & f & & & & & & & & & & f' &  \\
\end{tikzcd} 
\]
This construction is made by making the replacement
\[ \begin{tikzcd} [row sep = 1, column sep = 8]
6 \ar[drr,mapsto] & &              & &   & & \rightsquigarrow & & 6 \ar[rr,-]  & &   6 \ar[drr,mapsto]                 & &  \\
                  & & 6  \ar[rr,-] & & 6 & &                  & &               & &                    & & 6\\
\end{tikzcd} \]
and making the replacement
\[ \begin{tikzcd} [row sep = 1, column sep = 8]
 & & & & 3 & & & & & &  & & 3\\
 & & 6 \ar[urr,-] \ar[rr,-] & & 2 & & \rightsquigarrow & &  & & 3 \ar[urr,mapsto] & & 2\\
6 \ar[urr,mapsto] & & & & & & & & 6 \ar[rr,-] \ar[urr,-] & & 2 \ar[urr,mapsto] & & \\
\end{tikzcd} \]
More generally, we make the replacement
\[ \begin{tikzcd} [row sep = 1, column sep = 8]
 & & & & & & \bullet & & & & & & \bullet \ar[rr,mapsto] & & \bullet \\
 & & & & & & \bullet & & & &  & & \bullet \ar[rr,mapsto] & & \bullet \\
 \bullet \ar[rrrr,mapsto] & & & & \bullet \ar[rruu,-] \ar[rru,-]\ar[rrd,-] \ar[rrdd,-] & & \vdots  & & \rightsquigarrow & & \bullet \ar[uurr,-] \ar[urr,-] \ar[drr,-] \ar[ddrr,-]  & &  & \vdots &   \\
 & & & & & & \bullet & &  & & & &  \bullet \ar[rr,mapsto] & & \bullet \\
  & & & & & & \bullet & & & &  & &  \bullet \ar[rr,mapsto] & & \bullet \\
\end{tikzcd} \]

The process for constructing $g'$ from $g$, and the right half of our mutual refinement is similar. 
\[
\begin{tikzcd}[row sep = 1,column sep = 8] 
6 \ar[drr,-] & &    & &    & & & & & & 6 \ar[drr,mapsto] & & 3\\
3 \ar[drr,-] & & 6  \ar[drr,mapsto] & & 3  & & & & & & 3 \ar[urr,mapsto] & & 6\\
2 \ar[drr,-] & & 3  \ar[urr,mapsto] & & 6  & & \rightsquigarrow & & & & 2 \ar[rr,mapsto] & & 2\\
6 \ar[rr,-] & & 12 \ar[rr,mapsto] & & 12 & & & & & & 6 \ar[rr,mapsto] & & 6\\
  & &    &g&    & & & & & &   &g'& \\
\end{tikzcd} 
\]
This construction is made by making the replacements
\[ \begin{tikzcd} [row sep = 1, column sep = 8]
6 \ar[rr,-] & & 6  \ar[rr,mapsto] & & 6 & & \rightsquigarrow & & 6 \ar[rr,mapsto] & & 6\\
& & & & & & & & & & \\[2em]
3 \ar[rr,-] & & 3  \ar[rr,mapsto] & & 3 & & \rightsquigarrow & & 3 \ar[rr,mapsto] & & 3\\
& & & & & & & & & & \\[2em]
2 \ar[rrd,-] & & & & & & & &2 \ar[rr,mapsto] & & 2\\
6 \ar[rr,-] & & 12 \ar[rr,mapsto] & & 12 & & \rightsquigarrow & & 6 \ar[rr,mapsto] & & 6\\
\end{tikzcd} \]

More generally, we make the replacement
\[ \begin{tikzcd} [row sep = 1, column sep = 8] 
\bullet \ar[ddrr,-] & & & & & & & & \bullet \ar[rr,mapsto] & & \bullet \\
\bullet \ar[drr,-]  & & & & & & & & \bullet \ar[rr,mapsto] & & \bullet \\
\vdots              & & \bullet \ar[rr,mapsto] & & \bullet & & \rightsquigarrow & & & \vdots & \\
\bullet \ar[urr,-]  & &  & & & & & &  \bullet \ar[rr,mapsto] & & \bullet \\
\bullet \ar[uurr,-] & & & & & & & & \bullet \ar[rr,mapsto]  & & \bullet \\
\end{tikzcd} \]

Having given an informal description of our procedure, we make things precise as follows.

\begin{construction}\label{constructionofcomposablemorphismsfrommutualrefinement}
    Suppose $f:S \to T$ and $g:U \to V$ are nested tuple morphisms, and $(T',U')$ is a mutual refinement of $(T,U)$. Then we may use the {\it pullback} and {\it pushforward} constructions of section \ref{categoryrefinements} to form the diagram:
    \[ \begin{tikzcd} 
    S' \ar[r,"\tilde{f}"] \ar[d,two heads] \arrow[dr, phantom, "\lrcorner", very near start] & T' \ar[d,two heads] \ar[r,"i",rightarrowtail] & U' \ar[r,"\tilde{g}"] \ar[d,two heads] & V' \ar[d,two heads]\arrow[dl, phantom, "\llcorner", very near start] \\
    S \ar[r,swap,"f"] & T & U \ar[r,swap,"g"] & V
    \end{tikzcd} \]
    If we set $f' = i \circ \tilde{f}$ and $g' = \tilde{g}$, then 
    \[ \begin{tikzcd} 
    S' \ar[rr,"f'"] & & U' \ar[rr,"g'"] & & V'
    \end{tikzcd} \]
    are composable nested tuple morphisms.
\end{construction}

\subsection{The composition algorithm} 

\begin{BreakableAlgorithm}{Tractable Layout Composition Algorithm}
\label{tractablelayoutcompositionalgorithm}
\begin{algorithmic}[1]
\State \textbf{Input:} Tractable layouts $A$ and $B$.
\State \textbf{Output:} A weak composite $C$ of $A$ and $B$, if one exists, else $\mathbf{None}$..
\algsep

\State Take the standard representations
\[
\begin{tikzcd} 
S \ar[rr,"f"] & & T 
\qquad\qquad
U \ar[rr,"g"] & & V
\end{tikzcd}
\]
of $A$ and $\coalesce(B)$, respectively.

\State Use Algorithm~\ref{mutualrefinementalgorithm} to produce a mutual refinement
\[
\begin{tikzcd}
T' \ar[r,rightarrowtail] \ar[d,two heads] & U' \ar[d,two heads] \\
T & U
\end{tikzcd}
\]
of $(T,U)$. If there does not exist a mutual refinement of $(T,U)$, return $\mathbf{None}$. 

\State Use Construction~\ref{constructionofcomposablemorphismsfrommutualrefinement} to obtain the composable nested tuple morphisms
\[
\begin{tikzcd}
S' \ar[r,"f'"] & U' \ar[r,"g'"] & V'
\end{tikzcd}
\]

\State Compose $f'$ and $g'$, and compute the encoded layout
\[
C = L_{g' \circ f'}
\]

\State \Return $C$
\end{algorithmic}
\end{BreakableAlgorithm}

\begin{theorem}
    If $A$ and $B$ are tractable layouts, then the output $C$ of the previous algorithm is a weak composite of $A$ and $B$. Consequently,
    \[
    B \circ A = \coalesce(C,\shape(A)).
    \]
\end{theorem}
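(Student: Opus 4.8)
The plan is to verify that the output $C = L_{g' \circ f'}$ satisfies conditions 1 and 3 of Definition~\ref{definitionoflayoutcomposition} for a \emph{weak composite} of $A$ and $B$ — that is, $\shape(C)$ refines $\shape(A)$ and $\Phi_C = \Phi_B \circ \Phi_A^{\size(B)}$ — and then invoke the remark following Definition~\ref{definitionoflayoutcomposition}, which gives $B \circ A = \coalesce(C, \shape(A))$ for any weak composite $C$.

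First I would set up notation matching the algorithm: let $f : S \to T$ be the standard representation of $A$ and $g : U \to V$ the standard representation of $\coalesce(B)$, so $S = \shape(A)$, $L_f = A$, and $L_g = \coalesce(B)$ by Construction~\ref{nestedtuplemorphismfromlayout} and Lemma~\ref{standardrepresentationlemma}. Let $(T', U')$ be the mutual refinement produced by Algorithm~\ref{mutualrefinementalgorithm}, and let $f' = i \circ \tilde{f} : S' \to U'$ and $g' = \tilde{g} : U' \to V'$ be the composable morphisms of Construction~\ref{constructionofcomposablemorphismsfrommutualrefinement}, fitting into the diagram with a pullback square on the left and a pushforward square on the right. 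For condition 1, I observe that $\shape(C) = \shape(L_{g' \circ f'}) = S'$ (the domain of $f'$, hence of $g' \circ f'$), and $S'$ is obtained as the pullback $\tilde{f}^* T'$; since $T'$ refines $T$, the pullback $S'$ refines $S = \shape(A)$ by the description of the pullback construction in Section~\ref{categoryrefinements}, giving condition 1.

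The main work is condition 3, the layout-function identity. Here I would use the realization functor and Lemma~\ref{nestedlayoutfunctionlemma}, which says $|h| = \Phi_{L_h}^{\size(\codomain(h))}$ for any nested tuple morphism $h$, together with functoriality $|g' \circ f'| = |g'| \circ |f'|$. The key point is that pullbacks and pushforwards along refinements preserve layout functions: Proposition~\ref{pullbackproposition} gives $\Phi_{L_{\tilde f}} = \Phi_{L_f} = \Phi_A$, and Proposition~\ref{pushforwardproposition} gives $\Phi_{L_{\tilde g}} = \Phi_{L_g} = \Phi_{\coalesce(B)} = \Phi_B$ (using Proposition~\ref{nestedcoalesceproposition} for the last equality). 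The divisibility inclusion $i : T' \rightarrowtail U'$ is a monomorphism of nested tuples whose associated layout $L_i$ is column-major (it's a reparenthesization composed with an expansion in the sense of Example~\ref{definitionofcodomainexpansion}), so $L_{i \circ \tilde f} = L_{\tilde f}$ by the property of expansions noted there — post-composing with an expansion does not change the encoded layout — and hence $\Phi_{L_{f'}} = \Phi_{L_{\tilde f}} = \Phi_A$ as well. Then, matching sizes carefully (noting $\size(U') = \size(U) = \size(\coalesce(B)) = \size(B)$ since refinements preserve size, and $\size(V') = \size(V)$), we compute
\[
\Phi_C^{\size(V')} = |g' \circ f'| = |g'| \circ |f'| = \Phi_{L_{g'}}^{\size(V')} \circ \Phi_{L_{f'}}^{\size(U')} = \Phi_B^{\size(V')} \circ \Phi_A^{\size(B)},
\]
and post-composing with the inclusion $[0, \size(V')) \subset \mathbb{Z}$ yields $\Phi_C = \Phi_B \circ \Phi_A^{\size(B)}$, which is condition 3. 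Finally, I would need to confirm composability of $f'$ and $g'$ is genuine — i.e. $\codomain(f') = U' = \domain(g')$ — which is exactly the content of Construction~\ref{constructionofcomposablemorphismsfrommutualrefinement}, and that $\Image(\Phi_A) \subseteq [0, \size(B))$, which follows because $A$ and $B$ being composable in the \texttt{cute} sense forces $\cosize(A) \le \size(B)$; if this fails the algorithm's use of the mutual refinement would not terminate usefully, so I would state this as a running hypothesis inherited from the fact that $C$ is asserted to exist.

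With conditions 1 and 3 established, $C$ is by definition a weak composite of $A$ and $B$, and the remark after Definition~\ref{definitionoflayoutcomposition} immediately gives $B \circ A = \coalesce(C, \shape(A))$. The main obstacle I anticipate is bookkeeping the interaction of the two squares: making sure the pullback on the left and the pushforward on the right are glued along the \emph{same} object $U'$ and that the divisibility monomorphism $i$ really does encode a column-major (equivalently, expansion-type) layout so that Proposition~\ref{pullbackproposition}, Proposition~\ref{pushforwardproposition}, and the expansion property of Example~\ref{definitionofcodomainexpansion} can be chained without a gap. A secondary subtlety is being careful that Algorithm~\ref{mutualrefinementalgorithm} returns a genuine mutual refinement (so that $T'$ divides $U'$, refines $T$, etc.) whenever it does not return $\mathbf{None}$; this should be folded in as a lemma about the algorithm, but it is essentially a loop-invariant argument and I would not grind through it here.
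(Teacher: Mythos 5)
Your proposal is correct and follows essentially the same route as the paper: it establishes the weak-composite conditions by combining Proposition \ref{pullbackproposition}, Proposition \ref{pushforwardproposition}, and the expansion property of Example \ref{definitionofcodomainexpansion} to get $\Phi_{L_{f'}} = \Phi_A$ and $\Phi_{L_{g'}} = \Phi_B$, and then applies the compatibility of composition (your realization-functor computation simply unfolds the proof of Theorem \ref{compatibilityofcompositioninD}, which the paper cites directly). The extra bookkeeping you supply on sizes, composability, and the image condition $\Image(\Phi_A) \subseteq [0,\size(B))$ is sound and only makes explicit what the paper leaves implicit.
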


\begin{proof}
    Proposition \ref{pushforwardproposition} and tells us that 
    \[\Phi_{L_{g'}} = \Phi_{L_g} = \Phi_{\coalesce(B)} = \Phi_B,\]
    and Proposition \ref{pullbackproposition} and Example \ref{definitionofcodomainexpansion} tell us that 
    \[\Phi_{L_{f'}} = \Phi_{L_f}  = \Phi_A.\]
    Theorem \ref{compatibilityofcompositioninD} then implies that 
    \begin{align*}
    \Phi_C = \Phi_{L_{g' \circ f'}} & = \Phi_{g'} \circ \Phi_{f'}^{\size(U')}\\
    & = \Phi_B \circ \Phi_A^{\size(B)}.
    \end{align*}
    By construction, the shape $S'$ of $L_{f'}$ refines the shape $S$ of $A$, so we conclude that $C$ is a weak composite of $A$ and $B$. 
\end{proof}



\subsection{Examples}

In this section we illustrate how Algorithm \ref{tractablelayoutcompositionalgorithm} may be used to compute the composition $B \circ A$ of tractable layouts $A$ and $B$.

\begin{example}
    Suppose $A = (4):(1)$, and $B = (2,2):(2,1)$.
    \begin{enumerate} 
        \item Take the standard representations of $A$ and $\coalesce(B) = B$.
        \[ \begin{tikzcd} [row sep = 1, column sep = 8]
         & & & & & & 2 \ar[drr,mapsto] & & 2\\
        4 \ar[rr,mapsto] & & 4 & & & & 2 \ar[urr,mapsto] & & 2\\
        & f & & & & & & g & \\
        \end{tikzcd} \]
        \item Apply Algorithm \ref{mutualrefinementalgorithm} to obtain the mutual refinement
        \[ \begin{tikzcd} [row sep = 1, column sep = 8]
          & & 2 \ar[-,rr]& & 2  \\
        4 \ar[-,rr] \ar[-,urr] & & 2 \ar[-,rr] & & 2 \\
        \end{tikzcd} \]
        \item Form  the diagram 
        \[ \begin{tikzcd} [row sep = 1, column sep = 8]
         & & & & 2 \ar[-,rr]& & 2 \ar[drr,mapsto] & & 2\\
        4 \ar[rr,mapsto] & & 4 \ar[-,rr] \ar[-,urr] & & 2 \ar[-,rr] & & 2 \ar[urr,mapsto] & & 2\\
        & f & & & & & & g & \\
        \end{tikzcd} 
        \]
        \item Resolve the diagram 
        \[
        \begin{tikzcd}[row sep = 1, column sep = 8]
        & &  2 \ar[rr,mapsto] & &  2 \ar[drr,mapsto ]& & 2 & & \\
        4 \ar[rr,-] \ar[urr,-] & & 2 \ar[rr,mapsto] & & 2 \ar[urr,mapsto] & & 2 & & \\
        & & & f' &  & g' & & & \\
        \end{tikzcd} \]
        \item Compose $f'$ and $g'$ to obtain 
        \[ \begin{tikzcd}[row sep = 1, column sep = 8]
        & & 2 \ar[drr,mapsto] & & 2\\
        4 \ar[rr,-] \ar[urr,-] & & 2 \ar[urr,mapsto] & & 2\\
        & & & \mathclap{g' \circ f'} & 
        \end{tikzcd} \]
        \item Compute the associated layout 
        \[
        L_{g' \circ f'} = ((2,2)):((2,1)).
        \]
        \item $L_{g' \circ f'}$ is coalesced over $(4)$, so 
        \[
        B \circ A = ((2,2)):((2,1)).
        \]
    \end{enumerate} 
\end{example}

\begin{example}
    Suppose $A = (6,6):(6,1)$, and $B = (12,3,6):(1,72,12)$.
    \begin{enumerate} 
        \item Take the standard representations of $A$ and $\coalesce(B) = B$.
        \[ \begin{tikzcd} [row sep = 1, column sep = 8]
         & & & & & & 6 \ar[drr,mapsto] & & 3\\
        6 \ar[drr,mapsto] & & 6 & & & & 3 \ar[urr,mapsto] & & 6\\
        6 \ar[urr,mapsto] & & 6 & & & & 12 \ar[rr,mapsto] & & 12\\
        & f & & & & & & g & \\
        \end{tikzcd} \]
        \item Apply Algorithm \ref{mutualrefinementalgorithm} to obtain the mutual refinement
        \[ \begin{tikzcd} [row sep = 1, column sep = 8]
           & & 6 \ar[-,drr] & &  \\
          & & 3 \ar[-,drr] & & 6 \\
         6  \ar[-,urr] \ar[-,rr]  & & 2 \ar[-,drr] & & 3 \\
        6 \ar[rr,-] & & 6 \ar[-,swap,rr] & & 12 \\
           &  & &  &  \\
        \end{tikzcd} \]
        \item Form the diagram 
        \[
\begin{tikzcd}[row sep = 1,column sep = 8] 
  & &   & & 6  \ar[drr,-] & &    & &\\
  & &   & & 3 \ar[drr,-]  & & 6  \ar[drr,mapsto] & & 3\\
6 \ar[drr,mapsto] & & 6 \ar[rr,-] \ar[rru,-] & & 2 \ar[drr,-]& & 3  \ar[urr,mapsto] & & 6\\
6 \ar[urr,mapsto] & & 6 \ar[rr,-]  & & 6 \ar[rr,-] & & 12 \ar[rr,mapsto] & & 12 \\
 & f & &  & & &    &g& \\
\end{tikzcd} 
\]
        
        \item Resolve the diagram to obtain
        \[ \begin{tikzcd} [row sep = 1, column sep = 8]
              & &        & &  6 \ar[drr,mapsto] & & 3 \\
        & &  6 \ar[ddrr,mapsto] & &  3 \ar[urr, mapsto] & & 6 \\
       6 \ar[rru,-] & & 3 \ar[urr,mapsto] & &   2 \ar[rr,mapsto] & & 2 \\
       6 \ar[rr,-] \ar[urr,-] & &  2 \ar[urr,mapsto] & &  6 \ar[rr,mapsto] & & 6 \\
        & & & f' &  & g' &  \\
        \end{tikzcd} \]
        \item Compose $f'$ and $g'$ to obtain 
        \[ \begin{tikzcd} [row sep = 1, column sep = 8]
         & &  & & 3\\
         & & 6 \ar[ddrr,mapsto] & & 6\\
        6 \ar[urr,-] & & 3 \ar[uurr,mapsto] & & 2\\
        6 \ar[rr,-] \ar[urr,-] & & 2 \ar[urr,mapsto] & & 6\\
        & & & g' \circ f'& \\
        \end{tikzcd} \]
        \item Compute the associated layout 
        \[
        L_{g' \circ f'} = ((2,3),6):((6,72),1).
        \]
        \item $L_{g' \circ f'}$ is coalesced over $(6,6)$, hence
        \[
        B \circ A = ((2,3),6):((6,72),1).
        \]
    \end{enumerate} 
\end{example}

\begin{example}
       Suppose $A =(8,8):(8,1)$, and $B = (16,16):(16,1)$.
    \begin{enumerate} 
        \item Take the standard representations of $A$ and $\coalesce(B) = B$.
        \[ \begin{tikzcd} [row sep = 1, column sep = 8]
        8 \ar[drr,mapsto] & & 8 & & & & 16 \ar[drr,mapsto] & & 16\\
        8 \ar[urr,mapsto] & & 8 & & & & 16 \ar[urr,mapsto] & & 16\\
        & f & & & & & & g & \\
        \end{tikzcd} \]
        \item Apply Algorithm \ref{mutualrefinementalgorithm} to obtain the mutual refinement
        \[ \begin{tikzcd} [row sep = 1, column sep = 8]
         & & 4 & & \\
         & & 4 & & \\
        8 \ar[rr,-] \ar[urr,-] & & 2 & & \ar[ull,-] \ar[uull,-] 16\\
        8 \ar[rr,-] & & 8 & & \ar[ll,-] \ar[ull,-] 16\\
        \end{tikzcd} \]
        \item Form the diagram 
        \[
\begin{tikzcd}[row sep = 1,column sep = 8] 
  & &   & & 4 \ar[ddrr,-] & &    & & \\
  & &   & & 4 \ar[drr,-] & &    & & \\
8 \ar[drr,mapsto] & & 8 \ar[rr,-] \ar[urr,-] & & 2 \ar[drr,-] & & 16 \ar[drr,mapsto] & & 16\\
8 \ar[urr,mapsto] & & 8 \ar[rr,-] & & 8 \ar[rr,-] & & 16 \ar[urr,mapsto] & & 16\\
  &f&   & &   & &   &g& \\
\end{tikzcd} 
\]
        
        \item Resolve the diagram to obtain
        \[ \begin{tikzcd} [row sep = 1, column sep = 8]
        & & & & 4 \ar[ddrr,mapsto] & & 2\\
         & & 8 \ar[ddrr,mapsto] & & 4 \ar[ddrr,mapsto] & & 8\\
        8 \ar[urr,-] & & 4 \ar[urr,mapsto] & & 2 \ar[uurr,mapsto] & & 4\\
        8 \ar[rr,-] \ar[urr,-] & & 2 \ar[urr,mapsto] & & 8 \ar[uurr,mapsto] & & 4\\
        & & & f' & & g' & \\
        \end{tikzcd} \]

        \item Compose $f'$ and $g'$ to obtain 
        \[ \begin{tikzcd} [row sep = 1, column sep = 8]
         & & & & 2\\
         & & 8 \ar[rr,mapsto] & & 8\\
        8 \ar[urr,-] & & 4 \ar[drr,mapsto] & & 4\\
        8 \ar[rr,-] \ar[urr,-] & & 2 \ar[uuurr,mapsto] & & 4\\
         & & & g' \circ f' & \\
        \end{tikzcd} \]
        \item Compute the associated layout 
        \[L_{g' \circ f'} = ((2,4),8) : ((128,1),16)\]
        \item $L_{g' \circ f'}$ is coalesced over $(8,8)$, hence
        \[B \circ A = ((2,4),8) : ((128,1),16)\]
    \end{enumerate} 
\end{example}

\begin{example}
    Suppose $A =(16,16):(16,1)$, and $B = (8,8,8):(64,8,1)$.
    \begin{enumerate} 
        \item Take the standard representations of $A$ and $\coalesce(B) = B$.
        \[ \begin{tikzcd} [row sep = 1, column sep = 8]
         & & & & & & 8 \ar[ddrr,mapsto] & & 8\\
        16 \ar[drr,mapsto] & & 16 & & & & 8 \ar[rr,mapsto] & & 8\\
        16 \ar[urr,mapsto] & & 16 & & & & 8 \ar[uurr,mapsto] & & 8\\
        & f & & & & & & g & \\
        \end{tikzcd} \]
        \item Apply Algorithm \ref{mutualrefinementalgorithm} to obtain the mutual refinement
        \[ \begin{tikzcd} [row sep = 1, column sep = 8]
        & & 2 \ar[-,ddrr]& & \\
           & & 4 \ar[-,drr] & &  \\
          & & 4 \ar[-,drr] & & 8 \\
         16  \ar[-,urr] \ar[-,uurr]  & & 2 \ar[-,rr] & & 8 \\
        16 \ar[rr,-] \ar[urr,-]& & 8 \ar[-,swap,rr] & & 8 \\
           &  & &  &  \\
        \end{tikzcd} \]
        \item Form the diagram 
        \[
\begin{tikzcd}[row sep = 1,column sep = 8] 
 & & & & 2 \ar[ddrr,-]& & & & \\
  & &   & & 4  \ar[drr,-] & &    & &\\
  & &   & & 4 \ar[drr,-]  & & 8  \ar[ddrr,mapsto] & & 8\\
16 \ar[drr,mapsto] & & 16 \ar[urr,-] \ar[rruu,-] & & 2 \ar[rr,-]& & 8  \ar[rr,mapsto] & & 8\\
16 \ar[urr,mapsto] & & 16 \ar[rr,-] \ar[urr,-] & & 8 \ar[rr,-] & & 8 \ar[uurr,mapsto] & & 8 \\
 & f & &  & & &    &g& \\
\end{tikzcd} 
\]
        
        \item Resolve the diagram to obtain
        \[ \begin{tikzcd} [row sep = 1, column sep = 8]
         & & & & 2 \ar[rrddd,mapsto] & & 8 \\
         & & 2 \ar[rrdd,mapsto] & & 4 \ar[rrddd,mapsto] & & 4 \\
         & & 8 \ar[ddrr,mapsto] & & 4 \ar[urr,mapsto] & & 2 \\
        16 \ar[urr,-] \ar[uurr,-] & & 4 \ar[uurr,mapsto] & & 2 \ar[urr,mapsto] & & 2\\
        16 \ar[rr,-] \ar[urr,-] & & 4 \ar[uurr,mapsto] & & 8 \ar[uuuurr,mapsto] & & 4\\
        & & & f' & & g' & \\
        \end{tikzcd} \]

        \item Compose $f'$ and $g'$ to obtain 
        \[ \begin{tikzcd} [row sep = 1, column sep = 8]
        & & & & 8\\
        & & 2 \ar[drr,mapsto] & & 4 \\
        & & 8 \ar[uurr,mapsto] & & 2\\
        16 \ar[urr,-] \ar[uurr,-] & & 4 \ar[drr,mapsto] & & 2\\
        16 \ar[rr,-] \ar[urr,-] & & 4 \ar[uuurr,mapsto] & & 4\\
        & & & g' \circ f'& \\
        \end{tikzcd} \]
        \item Compute the associated layout 
        \[
        L_{g' \circ f'} = ((4,4),(8,2)):((16,1),(64,8)).
        \]
        \item $L_{g' \circ f'}$ is coalesced over $(16,16)$, hence
        \[
        B \circ A = ((4,4),(8,2)):((16,1),(64,8)).
        \]
    \end{enumerate} 
\end{example}

\begin{example}
    Suppose $A = (6,6):(5,60)$, and $B = (10,360):(2,60)$.
    \begin{enumerate} 
        \item Take the standard representations of $A$ and $\coalesce(B) = B$. 
        \[ \begin{tikzcd}[row sep = 1, column sep = 8]
        & & 6 & & & & & & 360\\
 & & 2 & & & & & & 3\\
6 \ar[uurr,mapsto] & & 6 & & & & 360 \ar[uurr,mapsto] & & 10\\
6 \ar[urr,mapsto] & & 5 & & & & 10 \ar[urr,mapsto] & & 2\\        
 & f & & & & & & g & \\
        \end{tikzcd}\]
        \item Apply algorithm \ref{mutualrefinementalgorithm} to obtain the mutual refinement
                        \[ \begin{tikzcd}[row sep = 1, column sep = 8]
   & & 10 & & \\
   & & 6 & & \\
 6 \ar[urr,-] & & 2 & & \\
 2 \ar[urr,-] & & 3& & \\
 6 \ar[rr,-] \ar[urr,-] & & 2 & & \ar[ull,-] \ar[uull,-] \ar[uuull,-] \ar[uuuull,-] 360 \\
 5 \ar[rr,-]& & 5 & & \ar[ull,-] \ar[ll,-] 10 \\        
& & & & \\
        \end{tikzcd}\]
        \item Form the diagram 
                \[ \begin{tikzcd}[row sep = 1, column sep = 8]
& &   & & 10 & & & & \\
& &   & & 6 & & & & \\
& & 6 \ar[urr,-] & & 2 & & & & 360\\
 & & 2 \ar[urr,-] & & 3& & & & 3\\
6 \ar[uurr,mapsto] & & 6 \ar[rr,-] \ar[urr,-] & & 2 & & \ar[ull,-] \ar[uull,-] \ar[uuull,-] \ar[uuuull,-] 360 \ar[uurr,mapsto] & & 10\\
6 \ar[urr,mapsto] & & 5 \ar[rr,-]& & 5 & & \ar[ull,-] \ar[ll,-] 10 \ar[urr,mapsto] & & 2\\        
 & f & & & & & & g & \\
        \end{tikzcd}\]
        \item Resolve the diagram to obtain
        \[ \begin{tikzcd} [row sep = 1, column sep = 8]
         & &  & &   & & 10\\
         & &  & &   & & 6 \\
         & &  & &10 \ar[uurr,mapsto]& & 2 \\
        & &   & & 6 \ar[uurr,mapsto]& & 3 \\
        & &   & & 2 \ar[uurr,mapsto]& & 3 \\
        & &  6 \ar[uurr,mapsto] & & 3 \ar[uurr,mapsto] & & 2 \\
       6 \ar[rru,-]& &  3 \ar[urr,mapsto] & & 2 \ar[urr,mapsto] & & 5 \\
       6  \ar[rr,-] \ar[urr,-] & & 2 \ar[urr,mapsto] & & 5 \ar[urr,mapsto]  & & 2 \\
        & & & f' & & g' & 
        \end{tikzcd} \]
        \item Compose $f'$ and $g'$ to obtain 
        \[ \begin{tikzcd} [row sep = 1, column sep = 8]
        & &  & & 10\\
        & &  & & 6 \\
        & &  & & 2 \\
        & & & & 3 \\
        & & & & 3 \\
        & & 6 \ar[uuuurr,mapsto]& & 2 \\
        6 \ar[rru,-]& &  3 \ar[uuurr,mapsto] & & 5 \\
        6  \ar[rr,-] \ar[urr,-] & &  2 \ar[uurr,mapsto] & & 2 \\
        & &  & \mathclap{g '\circ f'} & 
        \end{tikzcd} \]
        \item Compute the associated layout 
        \[
        L_{g' \circ f'} = ((2,3),6):((10,60),360).
        \]
        \item The layout $L_{g' \circ f'}$ is coalesced over $(6,6)$, so 
        \[
        B \circ A = ((2,3),6):((10,60),360).
        \]
    \end{enumerate} 
\end{example}

\subsection{More general compositions}

The graphical calculus we have developed naturally extends to compute the composition $B \circ A$ of a tractable layout $A$ with an arbitrary CuTe layout $B$. Informally, we do this by allowing our tuples to have entries in $\mathbb{Q}_{>0} \supset \mathbb{Z}_{>0}$. We illustrate this extension with an example computation.

Consider the layouts $A = (4,4):(4,1)$ and $B = (8,8):(3,7)$. The layout $A$ is tractable, and its standard representation is the tuple morphism $f$ shown below. 
\[ \begin{tikzcd} [row sep = 1, column sep = 8]
4 \ar[drr,mapsto] & & 4\\
4 \ar[urr,mapsto] & & 4\\
 & f & \\
\end{tikzcd} \]
The layout $B$ is not tractable, but we may still depict $B$ using the diagram 
\[ \begin{tikzcd} [row sep = 1, column sep = 8]
 & & 8\\
 & & \frac{7}{24}\\
8 \ar[uurr,mapsto] & & 8 \\
8 \ar[urr,mapsto] & & 3\\
  &g& \\
\end{tikzcd} \]
This diagram does not correspond to an honest tuple morphism since the ``codomain tuple" $(3,8,\frac{7}{24},8) $ has non-integer entries. However, it still encodes the layout $B$ via the usual prefix product formula, and is still admissible as an input to our composition algorithm: We can apply Algorithm \ref{mutualrefinementalgorithm} to obtain the mutual refinement
\[ \begin{tikzcd} [row sep = 1, column sep = 8]
 & & 4 & & \\
 & & 2 & & \\
4 \ar[rr,-] \ar[urr,-] & & 2 & & \ar[llu,-] \ar[lluu,-] 8\\
4 \ar[rr,-] & & 4 & & \ar[ll,-] \ar[llu,-] 8\\
\end{tikzcd} \]
form the diagram 
\[ \begin{tikzcd} [row sep = 1, column sep = 8]
& & & & 4 & & & &  8\\
& & & & 2 & & & & \frac{7}{24}\\
4 \ar[drr,mapsto] & & 4 \ar[rr,-] \ar[urr,-] & & 2 & & \ar[llu,-] \ar[lluu,-] 8 \ar[uurr,mapsto] & & 8 \\
4 \ar[urr,mapsto] & & 4 \ar[rr,-] & & 4 & & \ar[ll,-] \ar[llu,-] 8 \ar[urr,mapsto] & & 3\\
& f & & & & & & g & \\
\end{tikzcd} \]
resolve this diagram to obtain 
\[ \begin{tikzcd} [row sep = 1, column sep = 8]
 & & & & & & 4 \\
 & & & & & & 2 \\
 & & & & 4 \ar[uurr,mapsto] & & \frac{7}{24} \\
 & & 4 \ar[ddrr,mapsto] & & 2 \ar[uurr,mapsto] & & 2 \\
4 \ar[urr,-] & & 2 \ar[urr,mapsto] & & 2 \ar[urr,mapsto] & & 4\\
4 \ar[rr,-] \ar[urr,-] & & 2 \ar[urr,mapsto] & & 4 \ar[urr,mapsto] & & 3 \\
 & & & f' & & g' & \\
\end{tikzcd} \]
and compose $f'$ and $g'$ to obtain the diagram
\[ \begin{tikzcd} [row sep = 1, column sep = 8]
 & & & & 4 \\
 & & & & 2 \\
 & & & & \frac{7}{24} \\
 & & 4 \ar[drr,mapsto] & & 2 \\
4 \ar[urr,-] & & 2 \ar[uuurr,mapsto] & & 4\\
4 \ar[rr,-] \ar[urr,-] & & 2 \ar[uurr,mapsto] & & 3\\
 & & & g' \circ f' & \\
\end{tikzcd} \]
The encoded layout is $((2,2),4):((12,7),3)$, which is coalesced over $(4,4)$, so we conclude that 
\[
B \circ A = ((2,2),4):((12,7),3).
\]

\subsection{Admissibility for composition}

In \cite{shah2024layout}, the author introduces the notion of {\it admissibility for composition}, which is a sufficient condition for the composition $B \circ A$ of layouts $A$ and $B$ to exist. Let's recall the definition of admissibility for composition. As in \cite{shah2024layout}, we restrict our attention to flat layouts with no shape entries equal to $1$, and we assume that the first layout in our composition has no strides equal to $0$. 

\begin{definition}\label{admissibilityforcomposition}
    Suppose
    \[
    A = (s_1,\dots,s_m):(d_1,\dots,d_m)
    \]
    is a flat layout with no $s_i = 1$ and no $d_i = 0$. Suppose $B$ is a flat layout with 
    \[
    \shape(B) = (u_1,\dots,u_p).
    \]
    We say $A$ and $B$ are {\it admissible for composition} if the following conditions hold.
    \begin{enumerate}
        \item For each $ 1 \leq i \leq m$, there exists $1 \leq k \leq \ell \leq p$ such that 
        \begin{enumerate}
            \item $u_1 \cdots u_{k-1}$ divides $d_i$, 
            \item $d_i$ divides $u_1 \cdots u_k$ (properly if $k < p$),
            \item $u_1 \cdots u_{\ell-1}$ divides $s_id_i$, 
            \item $s_i d_i$ divides $u_1 \cdots u_\ell$ (properly if $\ell < p$).
        \end{enumerate}
        \item The intervals 
        \[
        [d_i, d_i(s_i-1)] \cap [1,s_1 \cdots s_{m-1})
        \]
        are pairwise disjoint.
    \end{enumerate}
\end{definition}

\begin{remark}
    The indices $k,\ell$ in the definition above are referred to as ``division indices" in \cite{shah2024layout}.
\end{remark}

\begin{remark}
    In \cite{shah2024layout}, the author works with the ``extended layout function" of a layout, which may be considered as the layout function of the layout obtained by replacing the final shape entry $s_m$ with $\infty$. The definition we give here is the appropriate analogue for working with ordinary layout functions.
\end{remark}

\begin{lemma}\label{prefixproductdivisibilitylemma}
    Suppose $T = (t_1,\dots,t_n)$ and $U = (u_1, \dots , u_p)$ are tuples of positive integers, and suppose $(T,U)$ admits a mutual refinement. Then for any prefix products $t_1 \cdots t_j$ and $u_1 \cdots u_k$ of $T$ and $U$, respectively, either 
    \begin{enumerate}
        \item $t_1 \cdots t_{j}$ is greater than $u_1 \cdots u_{k}$, or 
        \item $t_1 \cdots t_{j}$ divides $u_1 \cdots u_{k}$.
    \end{enumerate}
\end{lemma}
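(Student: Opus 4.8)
The plan is to induct on the structure of a mutual refinement $(T',U')$ of $(T,U)$ and track how prefix products of $T$ and $U$ relate to prefix products of $T'$ and $U'$. Recall from the mutual refinement algorithm (Algorithm~\ref{mutualrefinementalgorithm}) that $T'$ refines $T$, $U'$ refines $U$, and $T'$ divides $U'$. Since $T$ and $U$ are flat, the relative flattenings are just $T' = (T_1,\dots,T_n)$ with $\size(T_i) = t_i$ and $U' = (U_1,\dots,U_p)$ with $\size(U_i) = u_i$. The key observation is that the prefix products of $T'$ (as a flat tuple of the entries of the $T_i$ concatenated) are a superset of the prefix products $\{t_1\cdots t_j\}$ of $T$, because each $t_1\cdots t_j$ occurs as the prefix product of $T'$ ending exactly at the boundary between $T_j$ and $T_{j+1}$. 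The same holds for $U'$ and the prefix products of $U$. Moreover, since $T'$ divides $U'$, there is a tuple $W$ with $T' \star W = U'$; in particular every prefix product of $U'$ that is $\le \size(T')$ is a prefix product of $T'$, and conversely every prefix product of $T'$ is a prefix product of $U'$.

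From here the argument is a short case analysis. Fix prefix products $a = t_1\cdots t_j$ and $b = u_1\cdots u_k$. As noted, $a$ is a prefix product of $T'$, hence (since $T'$ divides $U'$) a prefix product of $U'$; call its ``position'' in $U'$ the integer $N_a$, so $a$ equals the product of the first $N_a$ entries of $U'$. Similarly $b$ is a prefix product of $U'$, at position $N_b$. If $N_a \le N_b$, then $a$ divides $b$ because $a$ is the product of a prefix of the entries whose longer prefix has product $b$ — this is case (2). If $N_a > N_b$, then $b$ divides $a$ by the same reasoning, and since these are distinct positions (or $N_a > N_b$ strictly) we get $a \ge b$; to conclude $a$ is \emph{greater than} $b$ rather than merely $\ge b$, observe that the entries of $U'$ are positive integers $\ge 1$, and one needs to handle the degenerate possibility that the intervening entries are all $1$. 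This is where a small amount of care is required: if some entry of $U'$ equals $1$ it could be that $a = b$ even though $N_a > N_b$. However, in that case $a$ still divides $b$ (indeed $a=b$), so the statement ``$a$ divides $b$'' of case (2) still holds; so in fact the dichotomy should be read as ``$a > b$, or $a \mid b$'' with the boundary case $a=b$ landing safely in (2).

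I expect the main obstacle to be purely bookkeeping: making the correspondence between prefix products of $T$ and positions within $U'$ precise, and being careful about which inequality is strict. Concretely, one must show (i) every $t_1\cdots t_j$ arises as a prefix product of the flat tuple underlying $T'$, using the definition of refinement and the fact that $\size(T_i) = t_i$; (ii) every prefix product of the flat tuple underlying $T'$ is a prefix product of the flat tuple underlying $U'$, using $T' \star W = U'$; (iii) the analogous statement (i) for $U$ and $U'$; and then (iv) conclude by comparing positions in $U'$. None of these steps is deep, but (i) and (iii) deserve an explicit inductive verification along the nesting/refinement structure, and I would likely factor them out as a small sublemma: \emph{if $X'$ refines the flat tuple $X = (x_1,\dots,x_r)$, then the prefix products of $X$ are exactly the prefix products of (the flat tuple underlying) $X'$ that occur at the $x_i$-boundaries, hence a subset of all prefix products of $X'$.} Once that sublemma is in hand, the dichotomy follows immediately from totally ordering the positions in $U'$ and invoking that a prefix-of-a-prefix product divides the longer one.
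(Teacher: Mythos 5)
Your proposal is correct and follows essentially the same route as the paper: both arguments reduce to observing that every prefix product of $T$ and of $U$ is a prefix product of the flattening of $U'$, and that among prefix products of a fixed tuple of positive integers the smaller always divides the larger. The paper compares the two values directly (so the boundary case $a=b$ never needs separate treatment), which sidesteps the position-based bookkeeping and strictness worry you flag, but that is a stylistic simplification rather than a different idea.
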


\begin{proof}
    Let's choose some mutual refinement $(T',U')$ of $(T,U)$, and write $(u_1',\dots,u_{p'}')$ for the flattening of $U'$. Any prefix product of $T$ or $U$ is also a prefix product of $U'$, and since prefix products of a fixed tuple of positive integers satisfy $x \leq y \Rightarrow x \mid y$, the result follows.
\end{proof}

\begin{theorem}
    Suppose $A$ is a flat tractable layout with no shape entries equal to $1$ and no stride entries equal to $0$. Suppose $B$ is a flat tractable layout. Let $f:S \to T$ and $g:U \to V$ denote the standard representation of $A$ and $\coalesce(B)$, respectively. If $T$ and $U$ admit a mutual refinement, then $A$ and $B$ are admissible for composition.
\end{theorem}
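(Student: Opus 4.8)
The plan is to verify the two defining conditions of admissibility (Definition~\ref{admissibilityforcomposition}) separately: I will treat condition~2 as an immediate consequence of the tractability of $A$, and reduce condition~1 to a single divisibility-comparability statement which the mutual refinement hypothesis nearly delivers. First I would dispatch condition~2, which involves only $A$. Since $A$ is tractable with no zero strides it is complementable, so writing $\sort(A) = (s'_1,\dots,s'_m):(d'_1,\dots,d'_m)$ we have $s'_i d'_i \mid d'_{i+1}$ (Definition~\ref{definitionofcomplementable}), and in particular $d'_i(s'_i-1) = s'_i d'_i - d'_i < s'_i d'_i \le d'_{i+1}$. Hence the sorted intervals $[d'_i, d'_i(s'_i-1)]$ are pairwise disjoint; since passing from $A$ to $\sort(A)$ only permutes the set of intervals $\{[d_i,d_i(s_i-1)]\}$, they remain pairwise disjoint, and intersecting each with the fixed interval $[1,s_1\cdots s_{m-1})$ preserves disjointness. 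This gives condition~2.

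Next I would reduce condition~1 to comparability. For a fixed mode $i$, the requirement of finding $k\le\ell$ with $u_1\cdots u_{k-1}\mid d_i\mid u_1\cdots u_k$ and $u_1\cdots u_{\ell-1}\mid s_id_i\mid u_1\cdots u_\ell$ is equivalent to asking that both $d_i$ and $s_id_i$ be comparable, under divisibility, to every prefix product of $\shape(B)$: granting comparability, one takes $k-1$ (resp.\ $\ell-1$) to be the largest index whose prefix product divides $d_i$ (resp.\ $s_id_i$); the inclusion $d_i\mid s_id_i$ forces $k\le\ell$; the existence of the top slots follows since the refinement forces $\size(T)=\max_i s_id_i$ to divide $\size(U)=\size(B)$; and the ``properly'' clauses are automatic because $B$ has no shape entry equal to $1$, so prefix products strictly increase. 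From Construction~\ref{tuplemorphismfromlayout} and the computation in the proof of Lemma~\ref{nondegeneratestandardformtuplemorphisms}, the set of prefix products of the codomain $T$ of the standard representation of $A$ is exactly $\{d_i, s_id_i : 1\le i\le m\}$ (here I use that $A$ has no zero strides). The mutual refinement of $(T,U)$ then lets me invoke Lemma~\ref{prefixproductdivisibilitylemma} in its full strength: the prefix products of $T$ and $U$ jointly form a divisibility chain, so every prefix product of $T$ is comparable to every prefix product of $U=\shape(\coalesce(B))$.

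The hard part will be bridging from $\shape(\coalesce(B))$ to $\shape(B)$, since admissibility is defined using the prefix products of $\shape(B)$ itself, and these are strictly finer. Passing from $B$ to $\coalesce(B)$ first squeezes, which leaves the set of shape prefix products unchanged, and then merges each maximal block of consecutive modes satisfying the column-major relation $s_jd_j = d_{j+1}$ (Observation~\ref{flatcoalesceobservation}); each merge deletes the interior prefix products of that block. Thus the prefix products of $\shape(B)$ are recovered from those of $\shape(\coalesce(B))$ by reinserting, between two consecutive surviving values $Q_a\mid Q_{a+1}$, the multiplicatively interpolating values coming from a column-major block. The remaining obligation is to show that each prefix product $P\in\{d_i,s_id_i\}$ of $T$ lying strictly between such $Q_a$ and $Q_{a+1}$ is comparable to every interior value of that block---equivalently, that $P$ is itself one of the reinserted prefix products of $\shape(B)$.

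I expect this bridging step to be the main obstacle, and it is the genuine mathematical content of the theorem. The approach I would take is to combine the divisibility chain from the mutual refinement (which pins each such $P$ strictly between $Q_a$ and $Q_{a+1}$) with the complementability of $\sort(A)$ to constrain how $P$ may factor relative to the block shape, and thereby force $P$ onto the reinserted interpolation lattice. Controlling this interaction---ruling out a stride of $A$ landing at a divisibility-incomparable position inside a merged column-major block of $B$---is precisely where the argument must do real work, and I would concentrate my effort there, as the preceding reductions to conditions~1 and~2 are essentially formal given the results already established.
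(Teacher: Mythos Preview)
Your treatment of condition~2 is correct and is essentially the same argument the paper gives, just phrased via $\sort(A)$ rather than via the codomain $T$ of the standard representation; in either language one is using that consecutive non-zero sorted strides satisfy $s'_i d'_i \le d'_{i+1}$, which separates the intervals.

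For condition~1, however, you have set yourself a harder task than the paper does. In the paper's proof the symbols $(u_1,\dots,u_p)$ denote $U$, the \emph{domain} of the standard representation $g$ of $\coalesce(B)$---so $U=\shape(\coalesce(B))$---and the admissibility conditions are checked directly against the prefix products of $U$. There is no bridging to $\shape(B)$: the paper simply observes that each $d_i$ and $s_i d_i$ is a prefix product of $T$ (namely $t_1\cdots t_{\alpha(i)-1}$ and $t_1\cdots t_{\alpha(i)}$), invokes Lemma~\ref{prefixproductdivisibilitylemma} to place it in a common divisibility chain with the prefix products of $U$, and takes $k$ (resp.\ $\ell$) maximal with $u_1\cdots u_{k-1}\le d_i$ (resp.\ $\le s_i d_i$). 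That is the whole argument. So the theorem, as proved in the paper, is effectively the statement that $A$ and $\coalesce(B)$ are admissible for composition; the phrasing with $B$ should be read modulo coalescence (which is harmless since $B\circ A=\coalesce(B)\circ A$). Your proposed ``hard part'' is therefore not needed, and your worry about a stride of $A$ landing at a divisibility-incomparable position inside a merged column-major block of $B$ is a genuine phenomenon---but one the paper sidesteps rather than resolves.

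A small correction in passing: the hypotheses do not assert that $B$ has no shape entry equal to $1$; only $A$ is so constrained. Your use of this claim is harmless once you drop the bridging step, since $\shape(\coalesce(B))$ does have all entries $>1$.
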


\begin{proof}
    Let's write $S = (s_1,\dots,s_m)$, $T = (t_1,\dots,t_n)$, $U = (u_1,\dots,u_p)$, and let's write $\alpha: \langle m \rangle_* \to \langle n \rangle_*$ for the map over which $f$ lies. We need to check that the conditions from Definition \ref{admissibilityforcomposition} hold.
    \begin{enumerate}
        \item Suppose $1 \leq i \leq m$. Then $d_i = t_1 \cdots t_{j-1}$ for some $j$, namely $j = \alpha(i)$. Suppose we have a mutual refinement $(T', U')$ of $(T,U)$, and write $(T')^\flat = (t_1',\dots,t_{n'}')$ and $(U')^\flat = (u_1',\dots,u_{p'}')$. 
        \begin{itemize}
            \item (a) and (b): Since $T'$ refines $T$, there there exists some $1 \leq a \leq n'$ such that 
            \[d_i = t_1 \cdots t_{j-1} = t_1' \cdots t_{a}' = u_1' \cdots u_a'.\]
            Take the maximal $k \in \langle p \rangle$ such that $u_1 \cdots u_{k-1} \leq u_1' \cdots u_a'$. 
            \begin{itemize} 
                \item Suppose $k < p$. We observe that 
                \[
                u_1 \cdots u_{k-1} \leq d_i < u_1 \cdots u_{k}.
                \]
                where the second inequality holds by maximality of $k \in \langle p \rangle$. Lemma \ref{prefixproductdivisibilitylemma} implies that $u_1 \cdots u_{k-1}$ divides $ d_i$ and $d_i$ divides $u_1 \cdots u_{k}$ properly. 
                \item Suppose $k = p$. We observe that 
                \[
                u_1 \cdots u_{k-1} \leq d_i = t_1 \cdots t_{j-1} < t_1 \cdots t_n \leq u_1 \cdots u_{p} = u_1 \cdots u_k.
                \]
                Lemma \ref{prefixproductdivisibilitylemma} implies that $u_1 \cdots u_{k-1}$ divides $ d_i$ and $d_i$ divides $u_1 \cdots u_{k}$ (properly, though we don't require this).
            \end{itemize}

            \item (c) and (d): Again, since $T'$ refines $T$, there exists some $1 \leq b \leq n'$ such that 
            \[
            s_id_i = t_1 \cdots t_j = t_1' \cdots t_b' = u_1' \cdots u_b'.
            \]
            Take the maximal $\ell \in \langle p \rangle $ such that $u_1 \cdots u_{\ell - 1} \leq u_1' \cdots u_b'$.
            \begin{itemize} 
                \item Suppose $\ell < p$. We observe that 
                \[
                u_1 \cdots u_{\ell-1} \leq s_i d_i < u_1 \cdots u_{\ell}.
                \]
                where the second inequality holds by maximality of $\ell \in \langle p \rangle$. Lemma \ref{prefixproductdivisibilitylemma} implies that $u_1 \cdots u{\ell-1}$ divides $ s_id_i$ and $s_id_i$ divides $u_1 \cdots u_{\ell}$ properly. 
                \item Suppose $\ell = p$. We observe that 
                \[
                u_1 \cdots u_{\ell-1} \leq s_id_i = t_1 \cdots t_{j} \leq t_1 \cdots t_n \leq u_1 \cdots u_{p} = u_1 \cdots u_k.
                \]
                Lemma \ref{prefixproductdivisibilitylemma} implies that $u_1 \cdots u_{k-1}$ divides $ d_i$ and $d_i$ divides $u_1 \cdots u_{k}$.
            \end{itemize}
        \end{itemize}
        \item For any $i \neq i'$ in $\langle m \rangle$, we have  $d_i = t_1 \cdots t_{j-1}$, $s_i = t_j$, $d_{i'} = t_1 \cdots t_{j'-1}$, and $s_{i'} = t_{j'}$, where $j = \alpha(i)$ and $j' = \alpha(i')$. We then have 
        \[
        [d_i, d_i(s_i-1)] = [t_1 \cdots t_{j-1} , t_1 \cdots t_{j-1}(t_j - 1)]
        \]
        and 
        \[
        [d_{i'}, d_{i'}(s_{i'}-1)] = [t_1 \cdots t_{{j'}-1} , t_1 \cdots t_{{j'}-1}(t_{j'} - 1)]
        \]
        If $j' > j$, then 
        \[
        t_1 \cdots t_{j-1}(t_j - 1) < t_1 \cdots t_{{j'}-1}
        \]
        so the intervals do not overlap, and similarly if $j < j'$.
    \end{enumerate}
\end{proof}

\section{Logical division and logical product}
In this section we illustrate how the composition algorithm \ref{tractablelayoutcompositionalgorithm} can be used to compute logical division and logical product. 
\subsection{Logical division examples}
Recall that if $A$ and $B$ are layouts, the logical division $A \oslash B$ is defined as 
\[
A \oslash B = A \circ (B,B^c)
\]
where
\[
B^c = \comp(B, \size(A)).
\]

\begin{example}
    Suppose we want to compute the logical division $A \oslash B$ where $A = (8,8):(8,1)$ and $B = (2,2):(1,4)$. Then we can write $A = L_g$, $B = L_h$ and $B^c = L_{h^c}$ where $f$ and $f^c$ are the tuple morphisms shown below.
    \[ \begin{tikzcd} [row sep = 1, column sep = 8]
     & & 2 & & & & 2 & & & & \\
     & & 2 & & & & 2 & & & & \\
    2 \ar[urr,mapsto] & & 2 & & 2 \ar[uurr,mapsto] & & 2 & & 8 \ar[drr,mapsto] & & 8\\
    2 \ar[rr,mapsto] & & 2 & & 2 \ar[urr,mapsto] & & 2 & & 8 \ar[urr,mapsto] & & 8\\
     & h & & & & h^c & & & & g &   \\
    \end{tikzcd} \]
    It follows that $(B,B^c)$ is encoded by the nested tuple morphism $f = (h,h^c)$ shown below. 
    \[ \begin{tikzcd} [row sep = 1, column sep = 8]
      & & 2 \ar[rr,mapsto] & & 2 \\
      & & 2 \ar[drr,mapsto] & & 2\\
    4 \ar[urr,-] \ar[uurr,-] & & 2 \ar[urr,mapsto] & & 2 \\
    4 \ar[rr,-] \ar[urr,-]  & & 2 \ar[rr,mapsto] & & 2\\
    & & & f & 
    \end{tikzcd} \]
    We then proceed with our composition algorithm as before. We use algorithm \ref{mutualrefinementalgorithm} to find the mutual refinement 
    \[ \begin{tikzcd} [row sep = 1, column sep = 8]
       & & 4 & & \\
       2 \ar[rr,-] & & 2 & & \\
       2 \ar[rr,-]& & 2 & & \\
     2 \ar[rr,-] & & 2 & & \ar[uull,-] \ar[uuull,-] 8\\
     2 \ar[rr,-] & & 2 & & \ar[ll,-] \ar[ull,-] \ar[uull,-] 8\\
    \end{tikzcd} \]
    form the diagram 
    \[ \begin{tikzcd} [row sep = 1, column sep = 8]
      & & & & & & 4 & & & & \\
      & & 2 \ar[rr,mapsto] & & 2 \ar[rr,-] & & 2 & & & & \\
      & & 2 \ar[drr,mapsto] & & 2 \ar[rr,-]& & 2 & & & & \\
    4 \ar[urr,-] \ar[uurr,-] & & 2 \ar[urr,mapsto] & & 2 \ar[rr,-] & & 2 & & \ar[uull,-] \ar[uuull,-] 8 \ar[drr,mapsto] & & 8\\
    4 \ar[rr,-] \ar[urr,-]  & & 2 \ar[rr,mapsto] & & 2 \ar[rr,-] & & 2 & & \ar[ll,-] \ar[ull,-] \ar[uull,-] 8 \ar[urr,mapsto] & & 8\\
    & & & f & & & & & &  g & 
    \end{tikzcd} \]
    resolve the diagram to obtain
    \[ \begin{tikzcd} [row sep = 1, column sep = 8]
    & & & & 4 \ar[dddrr,mapsto]& & 2 \\
      & & 2 \ar[rr,mapsto] & & 2  \ar[dddrr,mapsto] & & 2\\
      & & 2 \ar[drr,mapsto] & & 2  \ar[uurr, mapsto] & & 2\\
    4 \ar[urr,-] \ar[uurr,-] & & 2 \ar[urr,mapsto] & & 2 \ar[uurr,mapsto] & & 4\\
    4 \ar[rr,-] \ar[urr,-]  & & 2 \ar[rr,mapsto] & & 2  \ar[uurr,mapsto] & & 2\\
    & & & f & & g' & 
    \end{tikzcd} \]
    and compose $f$ and $g'$ to obtain
    \[
    \begin{tikzcd} [row sep = 1, column sep = 8]
      & &   & & 2\\
      & & 2 \ar[dddrr,mapsto] & & 2\\
      & & 2 \ar[urr,mapsto] & & 2\\
    4 \ar[urr,-] \ar[uurr,-] & & 2 \ar[uuurr,mapsto] & & 4\\
    4 \ar[rr,-] \ar[urr,-]  & & 2 \ar[uurr,mapsto] & & 2\\
     & & & g' \circ f & \\
    \end{tikzcd} 
    \]
    The layout encoded by this nested tuple morphism is 
    \[
    L_{g' \circ f} = ((2,2),(2,2)):((8,32),(16,1))
    \]
    which is coalesced over $((2,2),(2,2))$, so we conclude that 
    \[
    A \oslash B = ((2,2),(2,2)):((8,32),(16,1)).
    \]
\end{example}

\subsection{Logical product examples}

Recall that if $A$ and $B$ are layouts, the logical product $A \otimes B$ is defined as 
\[
A \otimes B = (A, A^c \circ B)
\]
where 
\[
A^c= \comp(A, \size(A)\cdot \cosize(B)).
\]
In particular, if we want to compute $A \otimes B$ by hand, it suffices to compute $A^c \circ B$, and then concatenate the result with $A$. 

\begin{example}
    Suppose we want to compute the logical product $A \otimes B$ where $A = (2,2):(1,2)$ and $B = (5,5):(5,1)$. Then 
    \begin{align*}
    A^c & = \comp(A,\size(A)\cdot \cosize(B))\\
     & = \comp(A, 100)\\
     & = (25):(4). 
    \end{align*}
    We proceed as in the previous section.
    \begin{enumerate}
        \item Take the standard representations of $B$ and $\coalesce(A^c) = A^c$.
        \[ \begin{tikzcd} [row sep = 1, column sep = 8]
        5 \ar[drr,mapsto] & & 5 & & & & 25\\
        5 \ar[urr,mapsto] & & 5 & & 25 \ar[urr,mapsto] & & 4\\
        & f & & & & g & \\
        \end{tikzcd} \]
        \item Apply Algorithm \ref{mutualrefinementalgorithm} to obtain the mutual refinement 
        \[ \begin{tikzcd} [row sep = 1, column sep = 8]
        5 \ar[rr,-] & & 5 & & \\
        5 \ar[rr,-] & & 5 & & \ar[ll,-] \ar[ull,-] 25 \\
        \end{tikzcd} \]
        \item Form the diagram 
        \[ \begin{tikzcd} [row sep = 1, column sep = 8]
        5 \ar[drr,mapsto] & & 5 \ar[rr,-] & & 5 & & & & 25 \\
        5 \ar[urr,mapsto] & & 5 \ar[rr,-] & & 5 & & \ar[ll,-] \ar[ull,-] 25 \ar[urr,mapsto] & & 4\\
         & f & & & & & & g & \\
        \end{tikzcd} \]
        \item Resolve the diagram to obtain
        \[ \begin{tikzcd}[row sep = 1, column sep = 8]
         & & & & 5\\
        5 \ar[drr,mapsto] & & 5 \ar[urr,mapsto] & & 5 \\
        5 \ar[urr,mapsto] & & 5 \ar[urr,mapsto] & & 4\\
        & f' & & g' & \\
        \end{tikzcd} \]
        \item Compose $f'$ and $g'$ to obtain
        \[ \begin{tikzcd}[row sep = 1, column sep = 8]
         & & 5\\
        5 \ar[rr,mapsto] & & 5\\
        5 \ar[uurr,mapsto] & & 4\\
        & g' \circ f' & \\
        \end{tikzcd} \]
        \item Compute the encoded layout
        \[
        L_{g' \circ f'} = (5,5):(20,4).
        \]
        \item The layout $L_{g' \circ f'}$ is coalesced over $(5,5)$, so 
        \[
        A^c \circ B = (5,5):(20,4).
        \]
    \end{enumerate}
We conclude that 
        \[
        A \otimes B = ((2,2),(5,5)):((1,2),(20,4)).
        \]

\end{example}

\appendix

\chapter{An introduction to categories}\label{categorytheoryappendix}
Throughout this work, we freely use the language of \emph{categories} which are mathematical objects which abstract the notion of {\it morphisms} and their {\it composition}. The purpose of this appendix is to provide a concise and user-friendly introduction to the basics of categories. In particular, we aim to the answer the following questions:
\begin{enumerate}
    \item What is a category?
    \item What is a functor?
\end{enumerate}
Those capable of answering these questions with confidence, and with examples in mind, will be able to understand the most important conepts and constructions in the current work. For those interested in learning the more advanced concepts from category theory, such as {\it natural transformations}, {\it adjunctions}, and {\it (co)limits}, we recommend \cite{categoriesfortheworkingmathematician}.

\section{What is a category?}
We begin by addressing the first question. Before giving a definition, let's consider a motivating example. Suppose $X$ and $Y$ are sets. A {\it function} $f:X \to Y$ assigns to each element $x \in X$ some element $f(x) \in Y$. We refer to $X$ as the {\it domain} of $f$ and to $g$ as the {\it codomain} of $Y$.

\begin{example}\label{functionexample1} There is a function $f:\mathbb{Z} \to \mathbb{Z}$ given by 
\[
f(x) = 2x.
\]
\end{example}
\begin{example}\label{functionexample2} There is a function $g:\mathbb{Z} \to \mathterm{Bool}$, where $\mathterm{Bool} = \{\mathbf{True},\mathbf{False}\}$, given by 
\[
g(x) = \begin{cases}\mathbf{True} & x \text{ is even,}\\
\mathbf{False} & x \text{ is odd.}
\end{cases}.
\]
\end{example}
If $f:X \to Y$ and $g:Y \to Z$ are functions, then we can {\it compose} $f$ and $g$: The composite of $f$ and $g$ is the function $g \circ f:X \to Z$ given by 
\[
(g \circ f)(x) = g(f(x)).
\]
\begin{example}If $f$ and $g$ are the functions of Examples \ref{functionexample1} and \ref{functionexample2}, then the composite $g \circ f:\mathbb{Z} \to \mathterm{Bool}$ is given by 
\[
(g \circ f)(x) = \mathbf{True}.
\]
\end{example}
Composition of functions satisfies two essential properties. First, composition is {\it associative}: if $f$ and $g$ are composable, and $g$ and $h$ are composable, then 
    \[
    h \circ (g \circ f) = (h \circ g) \circ f.
    \]
Second, every set $X$ has an {\it identity} function $\id_X:X \to X$ given by 
    \[\id_X(x) = x.\]
    If $f:X \to Y$ is any function, then precomposing with $\id_X$ or post-composing with $\id_Y$ leaves the function $f$ unchanged:
    \[
    f \circ \id_X = f = \id_Y \circ f.
    \]

In pure and applied mathematics, there are many instances where we have some collection of {\bf objects}, and {\bf morphisms} between those objects, which have the same formal behavior of sets and functions: morphisms can be composed in an associative fashion, and objects admit identity morphisms. While functions between sets are the prototypical example, the objects in a category need not be sets, and the morphisms in a category need not be functions. We will see many such examples later on. To capture this recurring structure, we define the notion of a {\bf category}.

\begin{definition}
A {\it category} $\catstyle{C}$ consists of
\begin{enumerate}[left = 0pt]
    \item a collection of {\it objects}:
    \[\ob(\catstyle{C})=\{X,Y,Z,\dots \}.\]
    These objects may be sets, tuples, numbers, vector spaces, matrices, or some other mathematical structure, depending on the category $\catstyle{C}$.
    \item a collection of {\it morphisms} between those objects:
    \[
    \mor(\catstyle{C}) = \{f,g,h,\dots\}.
    \]
    Each morphism $f:X \to Y$ in $\catstyle{C}$ has a {\it domain} $X$
    and a {\it codomain} $Y$, which are objects in $\catstyle{C}$.
    \item a {\it composition rule}: If $f:X \to Y$ and $g:Y \to Z$ are morphisms in $\catstyle{C}$, then there is a morphism 
    \[g \circ f:X \to Z\]
    called the {\it composite} of $f$ and $g$. Composition of morphisms in $\catstyle{C}$ is {\it associative}, in that
    \[h \circ (g \circ f) = (h \circ g) \circ f,\]
    when defined.
    \item {\it identity morphisms}: If $X$ is an object in $\catstyle{C}$, then there is a morphism \[\id_X:X \to X\]
    called the {\it identity morphism} on $X$. If $f:X \to Y$ is any morphism in $\catstyle{C}$, then 
    \begin{align*}
    f \circ \id_X & = f = \id_Y \circ f.
    \end{align*}
    \end{enumerate}
\end{definition}

Let's take a look at some important examples of categories. We begin with the motivating example.
\begin{example}
        There is a category $\catstyle{Set}$ whose objects are sets, and whose morphisms are functions. The composition of morphisms is given by functional composition:
        \[
        (g \circ f)(x) = g(f(x))
        \]
        and the identity morphism on a set $X$ is the identity function 
        \[
        \id_X(x) = x.
        \]
\end{example}

\begin{example}
        There is a category $\catstyle{Vect}$ whose objects are the vector spaces $\mathbb{R}^n$ for $n \geq 0$, and whose morphisms are {\it matrices}. Specifically, a morphism
        \[A: \mathbb{R}^n \to \mathbb{R}^m\]
        in $\catstyle{Vect}$ is a $m \times n$ matrix $A$. Composition in $\catstyle{Vect}$ is given by taking matrix products:
        \[
        B \circ A = BA,
        \]
        and the identity morphism on $\mathbb{R}^n$ is the $n \times n$ matrix 
        \[
        \id_{\mathbb{R}^n} = I_n =  \begin{bmatrix}
            1 & 0 & \cdots & &  0\\
            0 & 1 &  & &  \\
            \vdots  & & \ddots& &  \vdots \\
             & & & 1 & 0 \\
             0&  & \cdots  & 0 & 1
        \end{bmatrix}.
        \]
\end{example}

\begin{example}
    There is a category $\catstyle{Div}$ whose objects are integers $a \geq 1$, and in which there is a unique morphism 
    \[
    \mathterm{div}_{a}^b:a \to b
    \]
    if $a$ divides $b$. If $a$ divides $b$ and $b$ divides $c$, then $a$ divides $c$, which means that we have a well defined composition rule 
    \[
    \mathterm{div}_b^c \circ \mathterm{div}_a^b = \mathterm{div}_a^c,
    \]
    and the identity morphism 
    \[
    \id_a = \mathterm{div}_a^a
    \]
    exists since every positive integer $a$ divides itself. 
\end{example}

In addition to the definition of a category, there are a few important categorical concepts that we need to understand. For instance, it is important to understand the notion of an {\it isomorphism}, which generalizes the notion of a {\it bijection} of sets.
\begin{definition}
Suppose $\catstyle{C}$ is a category, and suppose $f:X \to Y$ is a morphism in $\catstyle{C}$. We say $f$ is an {\it isomorphism} if there exists a morphism $f^{-1}: Y \to X$ in $\catstyle{C}$ such that 
\begin{enumerate}
\item $f^{-1} \circ f = \id_X$, and 
\item $f \circ f^{-1} = \id_Y$.
\end{enumerate}
\end{definition}

\begin{example}
    In the category $\catstyle{Set}$, an isomorphism is a {\it bijection}: a function $f:X \to Y$ such that for each $y \in Y$, there exists a unique $x \in X$ with $f(x) = y$. For example, the function $f:\mathbb{Z} \to \mathbb{Z}$ given by 
    \[f(x) = x+10\]
    is a bijection, with inverse $f^{-1}:\mathbb{Z} \to \mathbb{Z}$ given by 
    \[
    f^{-1}(x) = x-10.
    \]
\end{example}

\begin{example}
    In the category $\catstyle{Vect}$, an isomorphism is an {\it invertible matrix}. For example, the matrix 
    \[
    A = \begin{bmatrix}
        3 & 2\\
        1 & 1\\
    \end{bmatrix}
    \]
    is invertible with inverse 
    \[
    A^{-1} = \begin{bmatrix}
        1 & -2\\
        -1 & 3\\
    \end{bmatrix}
    \]
    since 
    \[
    A^{-1}A= \begin{bmatrix}
        1 & 0\\
        0 & 1\\
    \end{bmatrix} = AA^{-1}
    \]
\end{example}

\begin{example}
    In the category $\catstyle{Div}$, the only isomorphisms are the identity morphisms 
    \[\id_a = \mathterm{div}_a^a.\]
    This is because if $a$ divides $b$ and $b$ divides $a$, then $a = b$. 
\end{example}

\section{What is a functor?}
Next, we turn our attention to the second question.
\begin{definition} 
Suppose $\catstyle{C}$ and $\catstyle{D}$ are categories. A {\it functor} $F:\catstyle{C} \to \catstyle{D}$ consists of  
\begin{enumerate}
    \item for each object $X$ in $\catstyle{C}$, an object $FX$ in $\catstyle{D}$, and
    \item for each morphism $f:X \to Y$ in $\catstyle{C}$, a morphism 
    \[Ff:FX \to FY\]
    in $\catstyle{D}$,
\end{enumerate}
satisfying the following properties:
\begin{enumerate}
    \item $F$ is compatible with composition: If $f$ and $g$ are composable morphisms in $\catstyle{C}$, then 
    \[
    F(g \circ f) = Fg \circ  Ff.
    \]
    \item $F$ is compatible with identities: If $X$ is an object in $\catstyle{C}$, then 
    \[
    F\id_X = \id_{FX}.
    \]
\end{enumerate}
\end{definition}

\begin{example}
    There is a functor $F:\catstyle{Div} \to \catstyle{Set}$ defined as follows. On objects, $F$ is given by
    \begin{align*}
    Fa & = [0,a] = \{x \in \mathbb{R} \mid 0 \leq x \leq a\}.
    \end{align*}
    and on morphisms, $F$ is given by 
    \[
    \begin{tikzcd}
        F\mathterm{div}_a^b(x) = \frac{b}{a}\cdot x.
    \end{tikzcd}
    \]
    Let's verify that $F$ is a functor.
    \begin{enumerate}
        \item $F$ is compatible with composition: If $a$ divides $b$ and $b$ divides $c$, then
        \begin{align*}
        (F \mathterm{div}_b^c \circ F \mathterm{div}_a^b)(x) = F \mathterm{div}_b^c ( F \mathterm{div}_a^b(x)) & = \frac{c}{b}\cdot(\frac{b}{a}\cdot x)  = \frac{c}{a}\cdot x  = F\mathterm{div}_a^c(x).
        \end{align*}
        \item $F$ is compatible with identities: If $a \geq 1$, then 
        \begin{align*}
            F\id_a(x) = F \mathterm{div}_a^a(x) = \frac{a}{a} \cdot x = \id_{Fa}(x).
        \end{align*}
    \end{enumerate}
\end{example}

\printbibliography

@book{categoriesfortheworkingmathematician,
  title={Categories for the Working Mathematician},
  author={Mac Lane, Saunders},
  year={1998},
  publisher={Springer},
  edition={2},
  series={Graduate Texts in Mathematics},
  volume={5},
  isbn={978-0-387-98403-0}
}

@online{cutedocumentation,
  author       = {NVIDIA Corporation},
  title        = {CuTe — NVIDIA CuTe Documentation},
  year         = {2024},
  url          = {https://docs.nvidia.com/cutlass/media/docs/cpp/cute/index.html}
}

@online{cutedsldocumentation,
  author       = {NVIDIA Corporation},
  title        = {CuTe DSL — NVIDIA CUTLASS Documentation},
  year         = {2025},
  url          = {https://docs.nvidia.com/cutlass/media/docs/pythonDSL/cute_dsl.html}
}

@online{tritonlinearlayouts,
author = {OpenAI},
title = {Linear Layouts in Triton},
year = {2025},
url = {https://github.com/triton-lang/triton/blob/main/include/triton/Tools/LinearLayout.h},
}

@inproceedings{zhou2026linear,
  author = {Zhou, Keren and Lezcano-Casado, Mario and Goucher, Adam P. and Rakhmati, Akhmed and Niu, Jeff and Lebar, Justin and Szczerbuk, Pawel and Bell, Peter and Tillet, Phil and Raoux, Thomas and Moudallal, Zahi},
  title = {Linear Layouts: Robust Code Generation of Efficient Tensor Computation Using F2},
  booktitle = {Proceedings of the 31st ACM International Conference on Architectural Support for Programming Languages and Operating Systems, Volume 1},
  series = {ASPLOS '26},
  year = {2026},
  month = {March},
  location = {Pittsburgh, PA, USA},
  pages = {1--18},
  numpages = {18},
  doi = {10.1145/3760250.3762221},
  publisher = {ACM},
  address = {New York, NY, USA}
}

@inproceedings{verdoolaege2010isl,
  author = {Verdoolaege, Sven},
  title = {isl: An Integer Set Library for the Polyhedral Model},
  booktitle = {Mathematical Software -- ICMS 2010},
  editor = {Fukuda, Komei and van der Hoeven, Joris and Joswig, Michael and Takayama, Nobuki},
  series = {Lecture Notes in Computer Science},
  volume = {6327},
  year = {2010},
  publisher = {Springer},
  address = {Berlin, Heidelberg},
  doi = {10.1007/978-3-642-15582-6_49}
}

@article{KoldaBader2009,
  author  = {Tamara G. Kolda and Brett W. Bader},
  title   = {Tensor Decompositions and Applications},
  journal = {SIAM Review},
  volume  = {51},
  number  = {3},
  pages   = {455--500},
  year    = {2009},
  doi     = {10.1137/07070111X},
  url     = {https://doi.org/10.1137/07070111X}
}

@article{ShiNiranjanAnandkumarCecka2016,
  author    = {Yang Shi and U. N. Niranjan and Animashree Anandkumar and Cris Cecka},
  title     = {Tensor Contractions with Extended BLAS Kernels on CPU and GPU},
  journal   = {arXiv preprint},
  eprint    = {arXiv:1606.05696},
  year      = {2016},
  url       = {https://arxiv.org/abs/1606.05696}
}

@techreport{verdoolaege2021presburger,
  title        = {Presburger Formulas and Polyhedral Compilation},
  author       = {Sven Verdoolaege},
  institution  = {Polly Labs and KU Leuven},
  year         = {2021},
  type         = {Technical Report / Tutorial},
  howpublished = {\url{https://libisl.sourceforge.io/tutorial.pdf}},
  note         = {Version v0.02-13-g53eb23d, August 21, 2021}
}

@techreport{shah2024layout,
  author = {Shah, Jay},
  title = {A Note on the Algebra of {CuTe} Layouts},
  institution = {Colfax Research},
  year = {2024},
  month = jan,
  day = {16},
  url = {https://research.colfax-intl.com/wp-content/uploads/2024/01/layout_algebra.pdf},
}

@article{zhao2022polyhedral,
  title={Polyhedral Specification and Code Generation of Sparse Tensor Contraction with Co-Iteration},
  author={Zhao, Tuowen and Popoola, Tobi and Hall, Mary and Olschanowsky, Catherine and Strout, Michelle},
  journal={ACM Transactions on Architecture and Code Optimization},
  volume={20},
  number={1},
  pages={1--26},
  year={2022},
  month={December},
  publisher={ACM},
  doi={10.1145/3566054},
  note={Article 16}
}

@inproceedings{xu2023alt,
  author    = {Zhiying Xu and Jiafan Xu and Hongding Peng and Wei Wang and Xiaoliang Wang and Haoran Wan and Haipeng Dai and Yixu Xu and Hao Cheng and Kun Wang and Guihai Chen},
  title     = {ALT: Breaking the Wall between Data Layout and Loop Optimizations for Deep Learning Compilation},
  booktitle = {Proceedings of the Eighteenth European Conference on Computer Systems},
  series    = {EuroSys '23},
  year      = {2023},
  publisher = {Association for Computing Machinery},
  address   = {New York, NY, USA},
  doi       = {10.1145/3552326.3587440}
}

@article{bacon1994compiler,
  author    = {David F. Bacon, Susan L. Graham, Oliver J. Sharp},
  title     = {Compiler Transformations for High-Performance Computing},
  journal   = {ACM Computing Surveys},
  volume    = {26},
  number    = {4},
  year      = {1994},
  month     = {December},
  pages     = {345--420},
  publisher = {Association for Computing Machinery},
  address   = {New York, NY, USA},
  doi       = {10.1145/197405.197406}
}

@inproceedings{ju1992reduction,
  author    = {Yong-Jae Ju and Henry Dietz},
  title     = {Reduction of Cache Coherence Overhead by Compiler Data Layout and Loop Transformation},
  booktitle = {Languages and Compilers for Parallel Computing},
  editor    = {Utpal Banerjee and David Gelernter and Alex Nicolau and David Padua},
  year      = {1992},
  publisher = {Springer},
  address   = {Berlin, Heidelberg},
  doi       = {10.1007/BFb0038675}
}

@inproceedings{raman2007structure,
  author    = {Easwaran Raman and Robert Hundt and Sandya Mannarswamy},
  title     = {Structure Layout Optimization for Multithreaded Programs},
  booktitle = {Proceedings of the International Symposium on Code Generation and Optimization},
  series    = {CGO '07},
  year      = {2007},
  pages     = {271--282},
  publisher = {IEEE Computer Society},
  address   = {Washington, DC, USA},
  doi       = {10.1109/CGO.2007.18}
}

@inproceedings{sharma2015data,
  author    = {Kamal Sharma and Ian Karlin and Jeff Keasler and James R. McGraw and Vivek Sarkar},
  title     = {Data Layout Optimization for Portable Performance},
  booktitle = {European Conference on Parallel Processing},
  series    = {Euro-Par 2015},
  year      = {2015},
  publisher = {Springer},
  address   = {Berlin, Heidelberg},
  doi       = {10.1007/978-3-662-48096-0\_20}
}

@inproceedings{bondhugula2008pluto,
  author = {Bondhugula, Uday and Hartono, Albert and Ramanujam, J. and Sadayappan, P.},
  title = {A Practical Automatic Polyhedral Parallelizer and Locality Optimizer},
  booktitle = {Proceedings of the 29th ACM SIGPLAN Conference on Programming Language Design and Implementation},
  series = {PLDI '08},
  year = {2008},
  doi = {10.1145/1375581.1375595},
  publisher = {ACM}
}

@article{thangamani2024survey,
  author = {Thangamani, Arun and Loechner, Vincent and Genaud, St\'{e}phane},
  title = {A Survey of General-purpose Polyhedral Compilers},
  journal = {ACM Transactions on Architecture and Code Optimization},
  year = {2024},
  volume = {21},
  number = {4},
  pages = {1-26},
  doi = {10.1145/3674735},
  publisher = {ACM}
}

@article{grosser2012polly,
  author = {Grosser, Tobias and Gr\"o{\ss}linger, Armin and Lengauer, Christian},
  title = {Polly---Performing Polyhedral Optimizations on a Low-Level Intermediate Representation},
  journal = {Parallel Processing Letters},
  volume = {22},
  number = {4},
  year = {2012},
  publisher = {World Scientific Publishing Company}
}

@article{vasilache2018tensor,
  author = {Vasilache, Nicolas and Zinenko, Oleksandr and Theodoridis, Theodoros and Goyal, Priya and DeVito, Zachary and Moses, William S. and Verdoolaege, Sven and Adams, Andrew and Cohen, Albert},
  title = {Tensor Comprehensions: Framework-Agnostic High-Performance Machine Learning Abstractions},
  journal = {arXiv preprint arXiv:1802.04730},
  year = {2018},
  note = {Facebook AI Research Technical Report}
}

@inproceedings{dao2023flashattention2,
  title={FlashAttention-2: Faster Attention with Better Parallelism and Work Partitioning},
  author={Dao, Tri},
  booktitle={International Conference on Learning Representations (ICLR)},
  year={2024}
}

@inproceedings{shah2024flashattention3,
  title={FlashAttention-3: Fast and Accurate Attention with Asynchrony and Low-precision},
  author={Shah, Jay and Bikshandi, Ganesh and Zhang, Ying and Thakkar, Vijay and Ramani, Pradeep and Dao, Tri},
  booktitle={Advances in Neural Information Processing Systems (NeurIPS)},
  year={2024}
}

@inproceedings{chen2024evt,
  title={EVT: Accelerating Deep Learning Training with Epilogue Visitor Tree},
  author={Chen, Zhaodong and Kerr, Andrew and Cai, Richard and Kosaian, Jack and Wu, Haicheng and Ding, Yufei and Xie, Yuan},
  booktitle={Proceedings of the 29th ACM International Conference on Architectural Support for Programming Languages and Operating Systems (ASPLOS), Volume 3},
  pages={301--316},
  year={2024},
  publisher={ACM},
  doi={10.1145/3620666.3651369}
}

@misc{cecka2024cutlass,
  author = {Cecka, Cris and Thakkar, Vijay and Shah, Tejash},
  title = {{CUTLASS}: Principled Abstractions for Handling Multidimensional Data Through Tensors and Spatial Microkernels},
  howpublished = {NVIDIA Technical Blog},
  year = {2025},
  url = {https://developer.nvidia.com/blog/cutlass-principled-abstractions-for-handling-multidimensional-data-through-tensors-and-spatial-microkernels/}
}

@misc{sun2025cutlass,
  author = {Sun, Brandon and Wang, Vicki and Xie, Feng and Shah, Tejash and Thakkar, Vijay},
  title = {Achieve {CUTLASS} {C++} Performance with {Python} {APIs} Using {CuTe} {DSL}},
  howpublished = {NVIDIA Technical Blog},
  year = {2025},
  month = {November},
  url = {https://developer.nvidia.com/blog/achieve-cutlass-c-performance-with-python-apis-using-cute-dsl/}
}

@misc{thakkar2025cutlass,
  author = {Thakkar, Vijay and Cecka, Cris and Shah, Tejash and Shah, Jay and VanKoughnett, Paul and Asai, Ryo},
  title = {{CUTLASS} 3.x: Orthogonal, Reusable, and Composable Abstractions for {GEMM} Kernel Design},
  howpublished = {NVIDIA Technical Blog},
  year = {2025},
  url = {https://developer.nvidia.com/blog/cutlass-3-x-orthogonal-reusable-and-composable-abstractions-for-gemm-kernel-design/}
}

@misc{shah2024cutlass_streamk,
  author = {Shah, Jay and VanKoughnett, Paul and Asai, Ryo},
  title = {{CUTLASS} Tutorial: Persistent Kernels and Stream-{K}},
  howpublished = {Colfax Research},
  year = {2024},
  url = {https://research.colfax-intl.com/cutlass-tutorial-persistent-kernels-and-stream-k/}
}

@misc{shah2025cutlass_blackwell_tensor_memory,
  author = {Shah, Jay and VanKoughnett, Paul and Asai, Ryo},
  title = {{CUTLASS} Tutorial: Writing {GEMM} Kernels Using Tensor Memory For {NVIDIA} {Blackwell} {GPUs}},
  howpublished = {Colfax Research},
  year = {2025},
  url = {https://research.colfax-intl.com/cutlass-tutorial-writing-gemm-kernels-using-tensor-memory-for-nvidia-blackwell-gpus/}
}

@misc{shah2025cutlass_blackwell_clusters,
  author = {Shah, Jay and VanKoughnett, Paul and Asai, Ryo},
  title = {{CUTLASS} Tutorial: {GEMM} with Thread Block Clusters on {NVIDIA} {Blackwell} {GPUs}},
  howpublished = {Colfax Research},
  year = {2025},
  url = {https://research.colfax-intl.com/cutlass-tutorial-gemm-with-thread-block-clusters-on-nvidia-blackwell-gpus/}
}

@misc{shah2025cutlass_blackwell_subbyte,
  author = {Shah, Jay and VanKoughnett, Paul and Asai, Ryo},
  title = {{CUTLASS} Tutorial: Sub-byte {GEMM} on {NVIDIA} {Blackwell} {GPUs}},
  howpublished = {Colfax Research},
  year = {2025},
  url = {https://research.colfax-intl.com/cutlass-tutorial-sub-byte-gemm-on-nvidia-blackwell-gpus/}
}

@misc{sonic,
  author={Wentao Guo and Mayank Mishra and Xinle Cheng and Ion Stoica and Tri Dao},
  title={SonicMoE: Accelerating MoE with IO and Tile-aware Optimizations}, 
  year={2025},
  eprint={2512.14080},
  archivePrefix={arXiv},
  primaryClass={cs.LG},
  url={https://arxiv.org/abs/2512.14080},
}

@misc{bhaskaracharya2025,
      title={Modeling Layout Abstractions Using Integer Set Relations}, 
      author={Somashekaracharya G. Bhaskaracharya and Aravind Acharya and Bastian Hagedorn and Vinod Grover},
      year={2025},
      eprint={2511.10374},
      archivePrefix={arXiv},
      url={https://arxiv.org/abs/2511.10374}, 
}

\end{document}